\documentclass[11pt]{article}

\usepackage{indentfirst}
\usepackage{times}
\usepackage{amsmath}
\usepackage{amsthm}
\usepackage{color}
\usepackage{eufrak}
\usepackage{verbatim}
\usepackage[makeroom]{cancel}
\usepackage{rotating}
\usepackage{float}
\usepackage{fullpage}
\usepackage{amssymb}
\usepackage{amsthm}
\usepackage{mathrsfs}
\usepackage{epsfig}
\usepackage{graphicx}
\usepackage{indentfirst}
\usepackage{framed}
\usepackage[numbers]{natbib}
\usepackage{color}
\usepackage{hyperref}
\usepackage{graphicx}
\usepackage{float}
\usepackage{bm}
\usepackage[capitalize,noabbrev,nameinlink]{cleveref}
\usepackage{enumitem}
\usepackage{mathtools}
\usepackage{mleftright}
\usepackage{thmtools} 
\usepackage{relsize}\usepackage{scalerel}
\usepackage{tikz}
\usepackage{multicol}
\usetikzlibrary{positioning,shapes,shadows,arrows}
\usepackage[labelfont=bf]{caption}
\usepackage{algorithm}
\usepackage[noend]{algpseudocode}
\usepackage{tablefootnote}

\newcommand{\numberthis}{\addtocounter{equation}{1}\tag{\theequation}}

\definecolor{corlinks}{RGB}{0,0,200}
\definecolor{cormenu}{RGB}{0,0,200}
\definecolor{corurl}{RGB}{0,0,200}

\hypersetup{
	colorlinks=true,
	urlcolor=corlinks,
	linkcolor=corlinks,
	menucolor=cormenu,
	citecolor=corlinks,
	pdfborder= 0 0 0
}

\definecolor{blue-violet}{rgb}{0.54, 0.17, 0.89}
\definecolor{deep-green}{rgb}{0,0.5,0}

\newtheorem{theorem}{Theorem}
\newtheorem{lemma}[theorem]{Lemma}

\newtheorem{claim}[theorem]{Claim}

\newtheorem{corollary}[theorem]{Corollary}
\newtheorem{proposition}[theorem]{Proposition}

\theoremstyle{definition}

\newtheorem{definition}[theorem]{Definition}

\newtheorem{problem}[theorem]{Problem}

\RenewCommandCopy{\theHtheorem}{\thetheorem} %

\makeatletter
\newenvironment{breakablealgorithm}
  {%
   \begin{center}
     \refstepcounter{algorithm}%
     \hrule height.8pt depth0pt \kern2pt%
     \renewcommand{\caption}[2][\relax]{%
       {\raggedright\textbf{\ALG@name~\thealgorithm} ##2\par}%
       \ifx\relax##1\relax %
         \addcontentsline{loa}{algorithm}{\protect\numberline{\thealgorithm}##2}%
       \else %
         \addcontentsline{loa}{algorithm}{\protect\numberline{\thealgorithm}##1}%
       \fi
       \kern2pt\hrule\kern2pt
     }
  }{%
     \kern2pt\hrule\relax%
   \end{center}
  }
\makeatother

\newcommand{\D}{\mathcal{D}}

\let\phi\varphi

\newcommand{\binset}{\{0,1\}}
\newcommand{\bin}{\binset}
\newcommand{\poly}{\mathsf{poly}}

\newcommand{\SIZE}{\mathsf{SIZE}}

\DeclareMathOperator*\Exp{{\bf E}}
\DeclareMathOperator*\Prob{{\bf Pr}}
\newcommand{\SAMP}{\mathsf{SAMP}}
\renewcommand{\tt}{\mathsf{tt}}

\newcommand{\bool}{\mleft\{0,1\mright\}}
\newcommand{\Kt}{\mathsf{Kt}}

\newcommand{\N}{\mathbb{N}}

\newcommand{\nat}{\N}
\newcommand{\rKt}{\mathsf{rKt}}
\newcommand{\pKt}{\mathsf{pKt}}

\newcommand{\rK}{\mathsf{rK}}
\newcommand{\pK}{\mathsf{pK}}
\newcommand{\K}{\mathsf{K}}
\renewcommand{\P}{\mathsf{P}}
\newcommand{\NP}{\mathsf{NP}}
\newcommand{\coNP}{\mathsf{coNP}}
\newcommand{\PH}{\mathsf{PH}}
\newcommand{\BPP}{\mathsf{BPP}}
\newcommand{\PSPACE}{\mathsf{PSPACE}}

\newcommand{\NEXP}{\mathsf{NEXP}}
\newcommand{\ZPP}{\mathsf{ZPP}}
\newcommand{\BPE}{\mathsf{BPE}}

\newcommand{\BPEXP}{\mathsf{BPEXP}}
\newcommand{\EE}{\mathsf{EE}}

\newcommand{\polylog}{\mathsf{polylog}}
\newcommand{\co}{\mathsf{co}}

\newcommand{\nKt}{\mathsf{nKt}}
\newcommand{\nK}{\mathsf{nK}}
\newcommand{\rnKt}{\mathsf{rnKt}}
\newcommand{\rnK}{\mathsf{rnK}}
\newcommand{\pnKt}{\mathsf{pnKt}}
\newcommand{\pnK}{\mathsf{pnK}}

\newcommand{\eqdef}{\triangleq}

\newcommand{\MnKtP}{\mathsf{MnKtP}}

\newcommand{\NTIME}{\mathsf{NTIME}}
\newcommand{\coNTIME}{\mathsf{coNTIME}}
\newcommand{\coAM}{\mathsf{coAM}}
\newcommand{\AM}{\mathsf{AM}}
\newcommand{\coAMTIME}{\mathsf{coAMTIME}}
\newcommand{\AMTIME}{\mathsf{AMTIME}}
\newcommand{\AMEXP}{\mathsf{AMEXP}}
\newcommand{\coAMEXP}{\mathsf{coAMEXP}}

\newcommand{\MKtP}{\mathsf{MKtP}}

\newcommand{\gapmrnktp}{\mathsf{Gap}\textnormal{-}\mathsf{MrnKtP}}
\newcommand{\gapmnktp}{\mathsf{Gap}\textnormal{-}\mathsf{MnKtP}}

\newcommand{\EXP}{\mathsf{EXP}}
\renewcommand{\P}{\mathsf{P}}

\newcommand{\pr}{\mathsf{pr}\text{-}}

\newcommand{\SAT}{\mathsf{SAT}}

\newcommand{\BPTIME}{\mathsf{BPTIME}}
\newcommand{\U}{\mathcal{U}}
\newcommand{\E}{\mathsf{E}}
\newcommand{\Few}{\mathsf{Few}}

\begin{document}

\title{Asymmetry and Complexity of Nondeterministic Computations\vspace{0.4cm}}

\author{
Halley Goldberg\thanks{University of Warwick. \texttt{E-mail:~halley.goldberg@warwick.ac.uk}}
\and
Jinqiao Hu\thanks{University of Warwick. \texttt{E-mail:~jinqiao.hu@warwick.ac.uk}}
\and
Zhenjian Lu\thanks{University of Victoria. \texttt{E-mail:~zhenjianlu@uvic.ca}}
\and
Jingyi Lyu\thanks{University of Michigan. \texttt{E-mail:~jylyu@umich.edu}}
\and
Igor C. Oliveira\vspace{0.6cm}\thanks{University of Warwick. \texttt{E-mail:~igor.oliveira@warwick.ac.uk}}
}

\date{October 13, 2026}
\maketitle
	
\vspace{-0.7cm}

\begin{abstract}
Symmetry of Information (SoI) \citep{zvonkin1970complexity} is a fundamental theorem of classical Kolmogorov complexity  whose status for time-bounded computation is closely related to central open problems in complexity theory and cryptography. In a recent development, \citep{HMO26} showed that SoI fails for $\rKt$ complexity, namely in the setting of randomized computation. By contrast, many basic properties of Kolmogorov complexity are known to remain valid in the presence of nondeterminism \citep{HLO25}, suggesting that SoI may provide a fruitful structural principle for exploring the relative power of randomness and nondeterminism in computation.

Our results reveal that the validity of SoI for nondeterministic computation, namely for $\nKt$ complexity, is a subtle question connected to several frontiers. We show that symmetry of information holds on average for randomized extensions of $\nKt$, while its worst case validity would imply breakthrough complexity separations. Although the standard cryptographic argument for asymmetry does not extend to the nondeterministic setting, we provide evidence that information remains asymmetric in nondeterministic computations. Among other results, we show that SoI would imply a collapse of exponential time classes to low levels of the polynomial hierarchy.

Then, building on the work of \citep{DBLP:conf/coco/Hirahara22, GK22, DBLP:journals/eccc/KK25, HMO26}, we develop a general framework connecting symmetry of information, meta-complexity, and explicit constructions in the nondeterministic setting. More precisely, we prove that suitable formulations of these three notions are quantitatively equivalent. We apply different aspects of this correspondence to make progress on complexity lower bounds and on the failure of SoI for nondeterministic computation. These consequences are unconditional and provide some of the strongest known lower bounds in meta-complexity.
\end{abstract}

\newpage

\normalsize

\newpage

\small

\tableofcontents

\normalsize

\newpage

\section{Introduction}

\subsection{Overview}

\subsubsection{Context and Motivation}\label{ss:context}

An overarching goal in theoretical computer science is to understand the relative power of deterministic, randomized, and nondeterministic computation. In addition to comparing the  problems that these resources can solve, one can seek structural principles that hold in one computational setting but fail in another. Such principles offer a different lens on the separation and collapse questions at the heart of complexity theory.

The structural principle at the center of this work is \emph{Symmetry of Information} (SoI), a fundamental theorem of algorithmic information theory \citep{zvonkin1970complexity}. This result states that for all $n$-bit strings $x$ and $y$,
\vspace{0.1cm}
$$
    \K(x,y)
    \;\;=\;\; \K(x) + \K(y \mid x) \pm O(\log n)
    \;\;=\;\; \K(y) + \K(x \mid y) \pm O(\log n),
$$

\vspace{0.2cm}
\noindent where $\K(z)$ denotes the Kolmogorov complexity of $z$. Equivalently, an optimal description of $(x,y)$ can be decomposed into an optimal description of one string and an optimal description of the other conditioned on the first. In this sense, SoI is a powerful structural principle: describing two objects separately, while allowing one to serve as side information for the other, is essentially as efficient as describing them jointly. Another fundamental consequence of these equations is that the information that $x$ contains about $y$, namely $\K(y) - \K(y \mid x)$, is roughly the same as the information that $y$ contains about $x$, namely $\K(x) - \K(x \mid y)$.\footnote{This is analogous to the corresponding phenomenon in Shannon's information theory, where the mutual information between two random variables $X$ and $Y$ satisfies $I(X;Y) = H(X) - H(X \mid Y) = H(Y) - H(Y \mid X)$.} In this sense, information is symmetric when considering \emph{time-unbounded} computations.

The classical proof of SoI relies on exhaustive search and does not extend to time-bounded computations. Indeed, the status of SoI for time-bounded Kolmogorov complexity is intertwined with fundamental questions about the power of efficient computation. Its validity for polynomial-time measures is connected to the $\P$ versus $\NP$ problem \citep{DBLP:journals/iandc/LongpreW95, DBLP:conf/coco/Hirahara22}, the existence of one-way functions \citep{DBLP:conf/stoc/HiraharaILNO23, DBLP:conf/focs/HiraharaLN24}, and the feasibility of learning \citep{DBLP:conf/focs/HiraharaN23, DBLP:journals/eccc/HN25}. Collectively, these connections provide strong evidence that SoI fails for polynomial-time measures such as $\K^t$, $\rK^t$, and $\pK^t$, although proving this unconditionally would resolve longstanding open problems.

The situation is better understood for Levin-style measures, which charge only logarithmically for decoding time and therefore permit exponentially long computations.\footnote{Levin's $\Kt$ complexity of a string $x$ is defined as $\Kt(x) = \min_{\Pi,t}{|\Pi|+\lceil\log t\rceil \colon \Pi \text{ outputs } x \text{ within } t \text{ steps}}$,with respect to a fixed universal machine. Thus, $\Kt$ accounts for both description length and decoding time.} Ronneburger \citep{RonneburgerThesis} proved unconditionally that SoI fails for Levin's $\Kt$ complexity, and more recently \citep{HMO26} established the same conclusion for its randomized analogue $\rKt$. Thus, in both the deterministic and randomized settings, efficient descriptions violate a central structural principle of classical Kolmogorov complexity. We refer the reader to \citep{HMO26} for applications of this phenomenon to complexity lower bounds and cryptography.

In this work, we investigate what happens when descriptions are allowed to use \emph{nondeterminism}. Our main focus in on the nondeterministic measure $\nKt$ \citep{DBLP:conf/coco/AllenderKR03} and its probabilistic  extensions $\rnKt$ and $\pnKt$.\footnote{Under plausible derandomization assumptions, it is not hard to show that $\pnKt \approx \rnKt \approx \nKt$, just as $\pKt \approx \rKt \approx \Kt$. We refer to the Appendix of \citep{GKLO22} for similar results.} Informally, $\nKt$ is the nondeterministic analogue of Levin's $\Kt$. A short nondeterministic program for $x$ must uniquely specify $x$: every computation path outputs either $x$ or a failure symbol, and at least one path outputs $x$. Nondeterminism may therefore yield more succinct descriptions, even though deterministically recovering the encoded string can require searching over the program's choices.

We say that SoI holds for $\nKt$ with error $e(n)$ if, for all $x,y \in \{0,1\}^n$,
$$
    \nKt(x,y) \geq \nKt(x) + \nKt(y \mid x) - e(n).
$$
\noindent If this inequality holds with $e(n)=O(\log n)$, we simply say that SoI holds for $\nKt$.\footnote{The reverse inequality holds for all the complexity measures considered here under natural assumptions on the computational model.} We use the analogous terminology for $\rnKt$ and $\pnKt$.

There is a basic reason why this question is particularly subtle for nondeterministic computation. For deterministic and randomized time-bounded measures, the standard evidence against SoI comes from the asymmetry created by cryptographic hardness. If $f$ is a one-way function, then $y = f(x)$ is easy to compute from $x$, whereas recovering $x$ from $f(x)$ is hard on average. A joint description can exploit the easy forward computation, while a decomposition in the opposite order encounters the hard inverse problem. This intuition does not transfer directly to nondeterministic descriptions: even if $\nKt(x \mid f(x))=O(\log n)$, such a short description gives no evident way to find a successful computation path and hence no efficient procedure for inverting $f$.\footnote{We mention that it is also unclear whether the super-bit conjecture of \citep{DBLP:conf/random/Rudich97} implies the failure of SoI for $\nKt$.}

At the same time, nondeterministic Kolmogorov complexity retains several features of the classical, time-unbounded theory that are conjectured to fail for deterministic and randomized time-bounded measures, including forms of language compression and conditional coding \citep{HLO25}. This suggests that nondeterministic Kolmogorov complexity may be structurally closer to its time-unbounded counterpart. If SoI were also preserved, it would provide an intrinsic distinction between randomized and nondeterministic computation and could point toward a new route for separating these resources. If it fails, then its failure must be explained by a source of asymmetry beyond the standard cryptographic argument.

These considerations lead to the following basic question:
\vspace{0.3cm}
\begin{center}
\emph{Does Symmetry of Information hold for nondeterministic computation?}
\end{center}
\vspace{0.3cm}

Despite substantial progress on nondeterministic Kolmogorov complexity and its applications \citep{DBLP:conf/coco/AllenderKR03, DBLP:journals/cc/BuhrmanLM05, DBLP:conf/cie/Allender10, DBLP:journals/jcss/AllenderKRR11, HLO25}, this question has remained elusive. More broadly, studying SoI in the nondeterministic setting offers a possible route to new complexity lower bounds against nondeterministic computation, continuing a line of work that uses information symmetry as a lens on complexity lower bounds \citep{DBLP:conf/coco/Hirahara22, HMO26}.

\subsubsection{Summary of Contributions}

We develop a general theory for symmetry of information for nondeterministic computation, connecting this problem to questions in complexity lower bounds and explicit constructions. We then explore different aspects of this theory to obtain new unconditional results concerning meta-complexity and asymmetry of information.
 Our main contributions can be summarized as follows:\footnote{A preliminary version of a subset of these results appeared as part of the unpublished ECCC preprint~\citep{ECCC2025TR215}.}

\begin{enumerate}
    \item First, we highlight a significant distinction  between nondeterministic and randomized computations: while the existence of cryptographic one-way functions implies that SoI fails \emph{on average} for $\rKt$, we observe that SoI holds \emph{on average} for $\rnKt$. In particular, it follows that the standard cryptographic argument does not yield failure of SoI for nondeterministic computation.
    
    \item Formalizing the intuition that SoI might be useful for separating complexity classes, we prove that if SoI holds for $\rnKt$ in the \emph{worst case} then $\coNP \nsubseteq \AMTIME[n^{\poly(\log n)}]$ (and in particular $\NP \nsubseteq \BPP$). We complement this result by establishing that any proof of worst-case SoI for nondeterministic computation must use a non-relativizing technique.

    \item Bypassing the cryptographic approach, we provide evidence that nondeterministic computation is \emph{asymmetric}. We show that the validity of SoI for either $\nKt$ or $\rnKt$ would imply the unlikely inclusion of nondeterministic exponential-time classes in $\mathsf{BPP}^{\NP}$ and $\NP/\poly \cap \coNP/\poly$. In particular, if $\PH$ does not collapse then SoI fails for both $\nKt$ and $\rnKt$.

    \item Building on recent work in the simpler setting of randomized computation, we develop a  framework connecting \emph{symmetry of information}, \emph{meta-complexity}, and \emph{explicit constructions} in the nondeterministic setting. We show that suitable formulations of these  notions are quantitatively equivalent. Thus progress on explicit constructions is formally necessary to  show the failure of SoI. 
    \item We apply different facets of this correspondence to derive new \emph{unconditional} results. 
 We consider the task of estimating $\nKt$ and $\rnKt$, and derive lower bounds against quasi-polynomial size Boolean circuits and exponential-time Arthur-Merlin classes. Finally, we combine one of these complexity lower bounds with additional ideas to establish that SoI fails for nondeterministic computation in the presence of side information.
\end{enumerate}

 Our techniques  build on the early literature on nondeterministic Kolmogorov complexity \citep{DBLP:conf/coco/AllenderKR03, DBLP:journals/cc/BuhrmanLM05, DBLP:conf/cie/Allender10, DBLP:journals/jcss/AllenderKRR11}  and on recent advances in the study of meta-complexity \citep{DBLP:conf/coco/Hirahara22, GK22, HLO25}, explicit constructions \citep{DBLP:conf/stoc/0001LL25}, and time-bounded Kolmogorov complexity \citep{DBLP:conf/tcc/HiraharaLO24, DBLP:journals/eccc/KK25, HMO26}, among other results from the theory of computational pseudorandomness.

\paragraph{Broader Perspective.}  Over the last decade, several fundamental problems from algorithms and complexity theory have been investigated through the lens of  Kolmogorov complexity and meta-complexity. Influential advances have occurred in areas such as learning theory \citep{DBLP:conf/coco/CarmosinoIKK16}, cryptography \citep{DBLP:conf/focs/LiuP20}, and average-case complexity \citep{DBLP:conf/stoc/Hirahara21}. In some cases, such as \citep{DBLP:conf/stoc/Hirahara21}, although the result itself is not about Kolmogorov complexity, the only known proof technique uses Kolmogorov complexity in an essential way. We refer to the surveys \citep{DBLP:journals/eatcs/LuO22, DBLP:journals/eatcs/Hirahara22, 10.1145/3744715.3744722} for more context and background.

These and other developments have led to a renewed interest in time-bounded Kolmogorov complexity, meta-complexity, and their applications in theoretical computer science (see, e.g.,~\citep{SimonsMetaComplexity2023}). An important perspective explored in many results, specially in the last few years, has been the use of extensions of the classical theory of time-bounded Kolmogorov complexity to more powerful computational settings, such as \emph{probabilistic Kolmogorov complexity} \citep{GKLO22, DBLP:journals/eatcs/LuO22}. Indeed, many if not most results in the area from the last few years employ probabilistic variants of Kolmogorov complexity (see, e.g.,~\citep{DBLP:conf/focs/Ilango23, DBLP:conf/coco/Santhanam23,  DBLP:conf/stoc/HiraharaILNO23, DBLP:conf/focs/HiraharaN23, DBLP:conf/crypto/LiuP23,  DBLP:conf/stoc/Hirahara23, DBLP:conf/coco/GoldbergK23, DBLP:conf/tcc/HiraharaLO24, DBLP:conf/focs/HiraharaLN24, DBLP:conf/icalp/LuS24, DBLP:conf/coco/HiraharaKLO24, DBLP:conf/approx/GoldbergK24, DBLP:conf/focs/LuORS24, DBLP:journals/eccc/HN25, DBLP:conf/crypto/LiP25, DBLP:journals/eccc/KK25}).

In this work, we advance this program by further developing the theory of \emph{nondeterministic Kolmogorov complexity}. We connect this theory to some questions at the frontier of algorithms and complexity theory and use these connections to establish new lower bounds in meta-complexity.\\

We discuss our results in detail in \Cref{ss:results} below. In \Cref{s:techniques}, we present an overview of some of the underlying proofs and techniques.

\subsection{Results}\label{ss:results}

First, we need to define the main time-bounded Kolmogorov complexity measures appearing in our results. As alluded to above, we consider a natural extension of Levin's $\Kt$ complexity \citep{DBLP:journals/iandc/Levin84} to the setting of nondeterministic computation \cite{DBLP:journals/jcss/AllenderKRR11}. As usual, in order to define time-bounded notions of Kolmogorov complexity, we fix a time-efficient universal machine $U$, and consider programs $\Pi$ encoded with respect to $U$ (see \Cref{sec:preliminaries}). 

 \paragraph{Notions from Time-Bounded Kolmogorov Complexity.} For a string $x\in\bool^*$, the \emph{nondeterministic time-bounded Kolmogorov complexity of $x$} is defined as
	\[
	\mathsf{nKt}(x) \eqdef \min_{\Pi \in \{0,1\}^*,\,t \in \mathbb{N}} \mleft\{|\Pi| + \lceil \log t \rceil \,\,\middle\vert\,\,   
	\begin{aligned}
		&\forall w\in\bool^t, \Pi(w) \textrm{ outputs $x$ or $\bot$ in $t$ steps}; \\
        &\exists w \in \{0, 1\}^t, \Pi(w) \textrm{ outputs $x$ in $t$ steps}
	\end{aligned} 
	\mright\}.
	\]
In other words, by viewing $w$ as a nondeterministic string, in at least one computational path, the program $\Pi$ outputs $x$, while in every computational path, $\Pi$ outputs either $x$ or the failure symbol $\bot$. For this reason, the program $\Pi$ uniquely specifies $x$. In addition,  $x$ can be recovered from $\Pi$ in time at most $t$ using nondeterminism.	

We will also consider probabilistic extensions of $\nKt$ complexity in the sense of \citep{DBLP:conf/icalp/Oliveira19} and \citep{GKLO22}. For a string $x\in\bool^*$, the \emph{randomized nondeterministic time-bounded Kolmogorov complexity of $x$} is defined as
    \begin{equation*}
        \rnKt(x) \eqdef  \min_{\Pi \in \{0,1\}^*,\,t \in \mathbb{N}} \left\{ |\Pi| + \lceil \log t \rceil 
        \ \bigg|\ 
        \Prob_{r \sim \{0, 1\}^t}\left[
        \begin{aligned}
        & \forall w \in \{0, 1\}^t, \Pi(w; r) \textrm{ outputs $x$ or $\bot$ in $t$ steps}; \\
        & \exists w \in \{0, 1\}^t, \Pi(w; r) \textrm{ outputs $x$ in $t$ steps}
        \end{aligned}
        \right] \ge \frac{2}{3}
        \right\}.
    \end{equation*}
Equivalently, there exists a nondeterministic program $\Pi(r)$ such that, with probability at least $2/3$ over $r$, $\Pi(r)$ uniquely specifies $x$ and enables its recovery in nondeterministic time at most $t$.

Recall that $\pKt$ is simply $\Kt$ in the presence of public randomness, i.e., $x$ has ``small'' $\pKt$ complexity if with probability at least $2/3$ over the randomness $r$, $x$ has ``small'' $\Kt$ complexity given access to $r$. Formally, for $x \in \{0,1\}^*$, its \emph{probabilistic time-bounded Kolmogorov complexity} \citep{DBLP:conf/tcc/HiraharaLO24} is given by
\[
		\mathsf{pKt}(x) \eqdef \min\mleft\{k\in\mathbb{N}  \;\bigg|\;  \Prob_{r\sim\bool^{2^k}}\bigg [\Kt(x\mid r)\leq k\bigg ]\geq \frac{2}{3}\mright\},
\]
 where $\Kt(x\mid r)$ denotes the conditional $\Kt$ complexity of $x$ given $r$, i.e., the minimum value $|\Pi|$ + $\lceil \log t \rceil$ over all pairs $(\Pi, t)$ such that $\Pi$ is a program that outputs $x$ when given $r$ and runs in at most $t$ steps.

Analogously, for a string $x \in \{0,1\}^*$, its \emph{probabilistic nondeterministic time-bounded Kolmogorov complexity} (see also \citep{HLO25})  is defined as 
\[
		\mathsf{pnKt}(x) \eqdef \min\mleft\{k\in\mathbb{N}  \;\bigg|\;  \Prob_{r\sim\bool^{2^k}}\bigg [\nKt(x\mid r)\leq k\bigg ]\geq \frac{2}{3}\mright\}.
\]
In other words, bounded $\pnKt$ complexity means that in the presence of a typical  random string $r$, $x$ has bounded $\nKt$ complexity.

\subsubsection{The Intricate Setting of Nondeterministic Computation}

As explained in \Cref{ss:context}, it is unclear how to rule out SoI for nondeterminsitic measures assuming the existence of one-way functions. Note that, in the randomized setting, this cryptographic assumption provides an even stronger guarantee: \citep{DBLP:conf/tcc/HiraharaLO24} showed that one-way functions exist if and only if SoI fails on average for $\pKt$, i.e., there is a polynomial-time samplable distribution family $\{\mathcal{D}_n\}$, where each $\D_n$ is supported over $\{0,1\}^n\times \{0,1\}^n$, and a polynomial $q$ such that for every constant $c>0$ and for every large enough $n$, 
  			\[
  			\Prob_{(x,y)\sim \mathcal{D}_n} \mleft[\;\mathsf{pKt}(x,y)< \mathsf{pKt}(x) + \mathsf{pKt}(y\mid x) - c\cdot \log n \;\mright] \geq \frac{1}{q(n)}.
  			\]
This means that, under the widely-believed assumption that one-way functions exist, SoI \emph{fails on average} for $\pKt$ (and for related notions such as $\Kt$ and $\rKt$).

In contrast, we show (unconditionally) that SoI \emph{holds on average} for the probabilistic extensions of $\nKt$.

    \begin{theorem}[Average-Case Symmetry of Information for Nondeterministic Kolmogorov Complexity]\label{t:soi-pKt-avg}
        For every polynomial-time samplable distribution family $\{\mathcal{D}_n\}_{n \in\mathbb{N}}$, where each $\mathcal{D}_n$ is  supported over $\{0,1\}^n\times \{0,1\}^n$, there exists a constant $c>0$ such that for all large enough $n, k\in\mathbb{N}$,
        \[ \Prob_{(x,y)\sim \mathcal{D}_n} \left[\; \mathsf{pnKt}(x,y)\geq \mathsf{pnKt}(x) + \mathsf{pnKt}(y\mid x) - c\cdot \log n - \log k \; \right] \geq 1-\frac{1}{k}. \]
        Similarly, 
        \[ \Prob_{(x,y)\sim \mathcal{D}_n} \left[\; \mathsf{rnKt}(x,y)\geq \mathsf{rnKt}(x) + \mathsf{rnKt}(y\mid x) - c\cdot \log^3 n - \log k \; \right] \geq 1-\frac{1}{k}. \]
    \end{theorem}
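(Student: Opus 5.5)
A natural plan is to combine two ingredients: an optimal coding theorem for samplable distributions in the $\pnKt$ setting, and a conditional version of the same. For a polynomial-time samplable family $\{\D_n\}$, I expect to use (from \citep{HLO25}, or reproved here) a \emph{nondeterministic coding theorem}. It says that for every $z$ in the support of a samplable distribution $\mathcal{E}$,
\[
\pnKt(z) \le \log\frac{1}{\mathcal{E}(z)} + O(\log n).
\]
The intended proof is to hash $z$ down to about $\log(1/\mathcal{E}(z))$ bits using the public random string $r$. The decoder then nondeterministically guesses sampler randomness $s$ with $\mathrm{Samp}(s)=z'$ and $h_r(z')=h_r(z)$. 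Uniqueness is enforced by having the program output $\bot$ unless the guessed $z'$ equals the unique preimage. This is the step where nondeterminism replaces the inversion that fails in the randomized/deterministic setting. Concretely, we need an approximate-counting/isolation argument, for example a Valiant–Vazirani or Sipser-style hashing with $O(\log n)$ extra bits, so that with probability $2/3$ over $r$ the hash value isolates $z$ among heavy elements. I would apply this coding theorem three times:
\begin{itemize}
\item to the marginal $\D_n^{(1)}$, bounding $\pnKt(x)$ by $\log 1/\D_n^{(1)}(x)$;
\item to the conditional distribution $\D_n(\cdot\mid x)$, bounding $\pnKt(y\mid x)$ by $\log 1/\D_n(y\mid x)$;
\item implicitly to the joint distribution.
\end{itemize}
The conditional case is not samplable in general, but the nondeterministic decoder can guess a sample $s$ of the joint sampler whose output is $(x,y')$ with $h_r(y')=h_r(y)$. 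The count of such $s$ is governed exactly by $\D_n(x,y')$, so the same hashing argument goes through with $x$ given as side information. Summing these gives
\[
\pnKt(x)+\pnKt(y\mid x)\le \log\frac{1}{\D_n(x,y)}+O(\log n).
\]

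The second ingredient is the trivial lower-bound side: by a counting (incompressibility) argument for $\pnKt$, for any distribution $\D_n$,
\[
\Pr_{(x,y)\sim\D_n}\Bigl[\pnKt(x,y)<\log\tfrac{1}{\D_n(x,y)}-\log k-O(\log n)\Bigr]\le \tfrac1k .
\]
The counting must be done carefully for probabilistic complexity. Fix a typical $r$ and note that few strings have $\nKt(\cdot\mid r)\le m$. Averaging over $r$ then loses only a constant factor, because each string with small $\pnKt$ has short descriptions for a $2/3$ fraction of $r$. The length $2^k$ of $r$ must also be handled by treating each level $m$ separately, with a union bound over $m\le O(n)$ costing $O(\log n)$. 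Combining both ingredients yields the first inequality with probability at least $1-1/k$.

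For $\rnKt$, I would relate it to $\pnKt$ by replacing the public random string with private randomness. The decoder would need to recover the hash function, so I would use a hash family derived from a seeded extractor or pseudorandom construction. Alternatively I would invoke a known relation of the form $\rnKt(x)\le\pnKt(x)+O(\log^3 n)$ via the Nisan–Wigderson-style or extractor-based argument used for $\rKt$ vs.\ $\pKt$, which is where the $\log^3 n$ loss appears. The lower-bound counting for $\rnKt$ is easier, since $\rnKt\ge\pnKt - O(1)$ essentially by definition. The main obstacle I anticipate is the conditional coding step. The hashing must isolate $y$ within the fiber of strings $y'$ with weight comparable to $\D_n(y\mid x)$, while every nondeterministic path must output $y$ or $\bot$. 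This forces the decoder to certify that no heavier colliding $y'$ exists, and certifying that nonexistence is a $\coNP$-type condition. I would try to sidestep it by choosing the hash length with a slack of $O(\log n)$ bits. Then, with high probability over $r$, $y$ is the \emph{only} element in the entire support that collides, weighted by the sampler preimage count. The decoder then just accepts any guessed $y'$ with matching hash, since uniqueness holds for a typical $r$ rather than being certified on each path.
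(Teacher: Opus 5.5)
\textbf{The $\pnKt$ half.} Your skeleton here is fine and is more direct than the paper's. You bound $\pnKt(x)+\pnKt(y\mid x)\le\log(1/\D_n(x,y))+O(\log n)$ using coding theorems for $\D_n$, and lower-bound $\pnKt(x,y)$ by your level-by-level counting, which is correct. The paper instead first proves the worst-case semi-SoI of \Cref{thm:semi-soi-pk-poly}. It applies \citep[Theorem~6]{HLO25} to a samplable distribution dominating the universal $\pK^t$ distribution, and only then averages over $\D_n$.

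What fails is your proposed proof of the coding steps. Suppose $\D_n(\cdot\mid x)$ puts mass $1/2$ on a single string $y$ and spreads the rest over $\{0,1\}^n$. Your hash then has only $O(\log n)$ output bits, so for a typical such $y$ and $r$, the hash class of $y$ contains $2^n/\poly(n)$ other support elements. A decoder that accepts any guessed $y'$ with matching hash is therefore not single-valued for such $r$. That the colliding \emph{mass} is small does not help, since single-valuedness is a zero-error condition on every path.

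The danger is the many \emph{lighter} colliders, not a heavier one. The cure is a set \emph{lower} bound, not a $\coNP$ nonexistence certificate. The decoder must also certify that the number of sampler seeds mapping to $(x,y')$ is at least a stored threshold near $2^{\ell}\D_n(x,y)$, where $\ell$ is the sampler's seed length. Do this with a Goldwasser--Sipser protocol whose coins come from the public $r$ and whose Merlin message is the nondeterministic guess. Amplify its soundness to $2^{-3n}$, so that a union bound shows that for most $r$ no light $y'$ passes. Then $h_r$ only needs to isolate $y$ among $O(1/\D_n(y\mid x))$ heavy candidates. Alternatively, cite \citep[Theorem~6]{HLO25}, as the paper does. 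For the marginal you need none of this: use \Cref{t:coding_pkt} together with $\pnKt\le\pKt+O(1)$.

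\textbf{The $\rnKt$ half.} Here there is a genuine gap. The relation $\rnKt(x)\le\pnKt(x)+O(\log^3 n)$ is not an available result, and you would need its conditional version. The naive conversion breaks exactly where the two definitions differ: a fixed program using private randomness cannot store $h_r(y)$ for a hash drawn from that randomness.

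The missing ingredient is the semi-SoI of \Cref{thm:semi-soi-rk-poly}:
$\pK^t(x,y)\ge\rnK^{p(t)}(x)+\rnK^{p(t)}(y\mid x)-\log p(t)-\log^3 p(n)$.
It is proved by the Hirahara/Goldberg--Kabanets hybrid argument, as follows.
\begin{enumerate}
\item $\mathsf{MINKT}\in\NP$ serves as a distinguisher against the reconstructive generator of \Cref{l:black-box-PRG}. That generator's $O(\log^3 n)$ seed length is the actual source of the $\log^3 n$ term.
\item Because reconstruction queries the distinguisher non-adaptively, it yields $\rK^{\poly(t),\parallel\mathsf{MINKT}}_{1/\poly(n)}$ bounds for $x$ and for $y$ given $x$.
\item The $\NP$ queries are absorbed into the nondeterminism of $\rnK$ by guessing witnesses and randomly guessing the number of positive answers (\Cref{l:rKt--parallel-NP}).
\item Success probability is restored via \Cref{l:success-amp-rnK}.
\end{enumerate}
Combining this with $\pK^t(x,y)\le\log(1/\D_n(x,y))+O(\log n)$ from \Cref{t:coding_pkt} gives exactly the $\rnKt$ coding bound your framework needs. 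Your counting argument, or alternatively $\rnKt(x,y)\ge\K(x,y)-O(\log n)$ with \Cref{f:incompressible}, then finishes the proof.
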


It is unclear how to extend the proof of \Cref{t:soi-pKt-avg} to the \emph{worst case}, i.e., to prove that the relevant inequality holds for all strings $x, y \in \{0,1\}^n$. Interestingly, we are able to show that this would separate $\mathsf{coNP}$ from $\mathsf{NP}$ in a strong sense, as stated below. Let $\mathsf{TAUT} \subseteq \{0,1\}^*$ denote the set of tautologies, i.e., Boolean formulas $F$ that always evaluate to $1$.

    \begin{theorem}[Complexity Lower Bounds from  Worst-Case SoI]\label{t:soi-NP-coAM}
Item 1 implies Item 2 in the following.
\begin{enumerate}
    \item\label{i:soi-pnkt-1} \emph{(Worst-Case Symmetry of Information for Nondeterministic Kolmogorov Complexity).} There exists a constant $c>0$ such that for all large enough $n\in\mathbb{N}$ and $x,y\in\bool^{n}$,
		\[
		    \pnKt(x,y) \geq \pnKt(x) + \pnKt(y\mid x) - e, 
            \]
            where $e = \log^c n$.
    \item \label{i:soi-pnkt-2} \emph{(Hardness of Proving Tautologies).} $\mathsf{TAUT} \notin \mathsf{AMTIME}[t(n)]$, where $t(n)= 2^{\log^d n},~\text{for any}~d \geq 1$. 
\end{enumerate}
The above implication remains valid if $\pnKt$ is replaced with $\rnKt$ in the first item.
\end{theorem}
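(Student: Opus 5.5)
We argue by contraposition: assume $\mathsf{TAUT} \in \AMTIME[2^{\log^d n}]$ for some $d \ge 1$, and exhibit $x,y \in \bool^n$ that violate the inequality in Item~1 with $e = \log^c n$, for every constant $c$.

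The witnessing pair will be a hard truth table together with a short seed. Let $m = \log^{c'} n$ for a constant $c'$ chosen large relative to $c$ and $d$. Let $x \in \bool^n$ be the lexicographically first string with $\pnKt(x) \ge n - O(1)$; such strings exist by counting. Let $y$ be a description of $x$ that is cheap only when $x$ is available as side information. Concretely, $y$ is (padded to length $n$) the index of $x$ inside an easily enumerable set of size about $2^{n-m}$. Equivalently, $y$ is a short hash of $x$ that lets one certify $x$ nondeterministically once the candidates are restricted.

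The structure should be that $\nKt(y \mid x)$ is large, while jointly $(x,y)$ is no more complex than $x$ alone. The cleanest instantiation I would try first exploits the fact that "$z$ is incompressible" is a $\coNP$-type statement.
\begin{itemize}
\item The set $\{z : \pnKt(z) \ge n - O(1)\}$ becomes, under the assumption, certifiable in AM time $2^{\mathrm{polylog}(n)}$. This uses a reduction from $\coNP$-style statements about $\nKt$ at quasi-polynomial scale to $\mathsf{TAUT}$, with padding: $\nKt(z) \ge s$ for $\nKt$ budget $\log t$ is a $\coNP$ statement over programs of length $< s$ and running time $2^{s}$.
\item Padding alone is too expensive here. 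So I would instead take $x$ to be an incompressible string only at a much smaller scale $m$, embedded into length $n$: $x = x' 0^{n-m}$ with $x' \in \bool^m$ random-looking. Then $\nKt$ of $x'$ with time bound $2^m$ is a $\coNP$ statement of size $2^{O(m)} = 2^{\log^{c'} n}$.
\item With the AM protocol for $\mathsf{TAUT}$ running in time $2^{\log^d(2^{O(m)})} = 2^{O(m^d)}$, the lexicographically first incompressible $x'$ has a short $\pnKt$-description. Specifically, $\pnKt(x') \le O(\log m) + O(m^d)$: guess candidates and Merlin's messages nondeterministically, with Arthur's coins taken from the public randomness $r$, and verify that all smaller candidates are compressible, which is an $\NP$ check. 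Taking $d$ small relative to the scale where incompressibility is measured, e.g.\ incompressibility against time $2^{m^{2d}}$, gives $\pnKt(x') \ll m^{2d}$, a contradiction with diagonalization.
\end{itemize}

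That last observation is really a direct lower bound rather than a use of SoI. The role of SoI is to make the time-bounded incompressibility condition, which depends on $t$, collapse to a time-independent one. SoI in the worst case lets us compare $\pnKt(x,y)$ with $\pnKt(x)+\pnKt(y\mid x)$ and, following the Hirahara / Lee--Ren style argument, yields a nondeterministic "search-to-decision" procedure: from SoI one derives that $\pnKt$ is robust under changes of time bound. Concretely, a string with $\pnKt$ complexity $k$ relative to time $T$ also has complexity $k+e$ relative to time $\mathrm{poly}(2^{k})$, for strings of length $\mathrm{polylog}(n)$.

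I would prove this robustness lemma by taking $y$ to be the computation transcript or witness and applying SoI in both orders. Combined with the AM algorithm for $\mathsf{TAUT}$ and the diagonal construction above, it gives the contradiction. Because Arthur's coins appear as public randomness, the same argument works for $\rnKt$, using $\rnKt \le \pnKt + O(\log)$ in the needed direction.

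The main obstacle is the robustness lemma: turning worst-case SoI with $\mathrm{polylog}$ error into a bound that defeats the $2^{\log^d}$ blow-up of the AM simulation. For this I would adapt the pair $(x, \text{transcript})$ trick from the $\rKt$ setting of \citep{HMO26}, carefully tracking the $\log^c n$ error across scales.
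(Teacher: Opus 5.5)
Your proposal has a genuine gap. The step meant to produce the contradiction (``the diagonal construction above'') does not work. The robustness lemma you call the main obstacle is not the missing idea, and in the form you state it, it is also too weak.

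\textbf{The diagonal construction.} It treats ``$z$ is incompressible'' as a $\coNP$ statement and ``$z$ is compressible'' as an $\NP$ check, and neither holds. A valid nondeterministic description must output $z$ or $\bot$ on \emph{every} witness, so already $\nKt(z)\le k$ is $\Sigma_2$-type. For $\pnKt$ or $\rnKt$ there is in addition a sharp $2/3$ threshold over the random string, which a bounded-error $\AM$ protocol cannot be expected to decide.

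Consequently, ``the lexicographically first $x'$ with $\pnKt(x')\ge k$'' is not the canonical output of your guess-Merlin's-messages program. Different coin sequences $r$ can lead to different strings, so the program is not single-valued with probability $2/3$ and gives no $\rnKt$ or $\pnKt$ upper bound. This canonicity problem is the crux of the theorem.

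The paper solves it by derandomizing the $\SAT$-oracle tester of \Cref{l:gapMpKtP-in-prBPE}, which under the assumption becomes an $(\AM\cap\coAM)$-oracle circuit. The derandomization uses the pseudorandom multisets $T_s$ of \Cref{t:prg-circuit}, built from Chen--Li--Liang's maximum circuit lower bounds, with $\rnKt(T_s)\le 2^{\log^{\varepsilon}s}$. Nothing in your proposal plays this role.

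There is a second problem even if the predicate were canonical. Explicitly enumerating candidates and programs costs about $m$ in the $\log t$ term, which is exactly the complexity you are trying to exceed. So a single-string diagonalization cannot close this way, unless the whole search is pushed inside an oracle that can be absorbed (see below).

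In the paper, SoI enters not through time-robustness but through Ronneburger-style chaining (\Cref{l:SoI-core-lemma}):
\begin{itemize}
\item it constructs $\ell=2^{m^{1/d}}$ strings, each with $\pnKt(x_i\mid x_{<i})\ge m/2$;
\item jointly they have $\rnKt\le 2^{m^{1/d}}$;
\item $\ell$ applications of SoI then force $\pnKt(x_1,\dots,x_\ell)\ge \ell m/4$, a contradiction.
\end{itemize}

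\textbf{The robustness lemma.} With time $\poly(2^k)$ and $k\approx m/2$, the relevant predicate lives at scale $2^{\Theta(m)}$. The $2^{\log^d}$ overhead of the $\AM$ simulation then makes $\log t$ at least of order $m$ (in fact $m^{\Theta(d)}$), so nothing is gained. The usable form is \Cref{lem:rnk-t-rnkt}: $\rnK^T(x)\le\rnKt(x)+O(e)$ with $T=2^{O(e)}$. It is proved by splitting $x$ into a prefix and a suffix and comparing $\log(t_1+t_2)$ with $\log t_1+\log t_2$, not via transcripts. The paper proves it only for $\rnKt$ and $\nKt$, not for $\pnKt$.

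With that lemma your outline can be repaired for the $\rnKt$ variant, but only with three steps you do not mention:
\begin{enumerate}
\item Use that $\mathsf{TAUT}\in\AMTIME[2^{\log^d n}]$ collapses all of $\PH$, not just $\coNP$, into $\AM\cap\coAM$ at quasi-polynomial scale.
\item Make the target canonical with a fixed deterministic approximate counter in $\mathsf{FP}^{\Sigma_3}$. Take the lexicographically first $z\in\bool^m$ for which no program of length $<m/2$ has estimated success $\ge 1/2$; then $\rnK^T(z)\ge m/2$.
\item Locate $z$ by prefix search with a $\PH$ oracle on queries of length $\poly(T)$, and absorb the oracle via \Cref{l:rK-AM-coAM-superscript}.
\end{enumerate}
This gives $\rnKt(z)\le\polylog(m)$, contradicting $\rnKt(z)\ge m/2-O(e)$.

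\textbf{The transfer to $\rnKt$.} The inequality $\rnKt\le\pnKt+O(\log n)$ is not known. Only $\pnKt\le\rnKt$ holds, and that is the direction the paper uses to handle the $\rnKt$ case.
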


Note that the consequence $\mathsf{TAUT} \notin \AM$ implies  that $\NP \nsubseteq \BPP$. In particular, \Cref{t:soi-NP-coAM} formalizes the intuition discussed in \Cref{ss:context} that the validity of SoI in a nondeterminsitic setting could point toward a new route for separating the computational power of randomness and nondeterminism.

 While several results establish that lower bounds follow from the \emph{failure} of SoI (see, e.g.,~\citep{DBLP:journals/iandc/LongpreW95, DBLP:conf/coco/Hirahara22, DBLP:conf/stoc/HiraharaILNO23, DBLP:conf/tcc/HiraharaLO24, DBLP:conf/focs/HiraharaLN24}), an interesting aspect of \Cref{t:soi-NP-coAM} is that here a lower bound follows from the \emph{validity} of SoI in a time-bounded setting. In a sense, \Cref{t:soi-NP-coAM} reduces the \emph{lower bound} problem of separating $\mathsf{coNP}$ from $\mathsf{AM}$ to an \emph{upper bound} task: showing that for every large enough $n$ and $x, y \in \{0,1\}^n$, we have 
\begin{equation}\label{eq:SoI_Ineq}
\pnKt(y \mid x) \leq \pnKt(x,y) - \pnKt(x) + \poly(\log n). \tag{$\ast$}
\end{equation}
It is worth mentioning that SoI has been successfully employed as an approach to obtain unconditional complexity lower bounds \citep{DBLP:conf/coco/Hirahara22}.

 Next, we show a general result implying that any proof of \Cref{eq:SoI_Ineq} must employ non-relativizing techniques.

 \begin{theorem}[Strong Failure of SoI in Some Oracle World]\label{thm:neg-o-soi}

    There exist an oracle $\mathcal{O}$ and a constant $\varepsilon > 0$ such that, for every complexity measure $\mu \in \{\rKt, \pKt, \nKt, \rnKt, \pnKt\}$, the following holds. For all sufficiently large $n \in \N$, there exist strings $x, y \in \bool^n$ such that
    \[\mu^{\mathcal{O}}(x, y) < \mu^{\mathcal{O}}(x) + \mu^{\mathcal{O}}(y \mid x) - \varepsilon \cdot n.\]
    
\end{theorem}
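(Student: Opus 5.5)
We build an oracle $\mathcal{O}$ that contains a random ``one-way permutation'' whose inverse stays hard even for nondeterministic, randomized, exponential-time programs of sublinear description length. Concretely, for each length $n$ pick a uniformly random function $f_n\colon\bool^n\to\bool^n$ (a random permutation also works). The oracle answers forward queries $f_n(x)$ and also encodes a second component, so that $y=f_n(x)$ is trivially computable from $x$. The witness pair is $(x,y)=(x,f_n(x))$ with $x$ Kolmogorov-random relative to the rest of the oracle.

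\textbf{Upper bound on the joint complexity.} For every $\mu$ in the list, $\mu^{\mathcal{O}}(x,y)\le \mu^{\mathcal{O}}(x)+O(\log n)\le n+O(\log n)$, since a program prints $x$ and queries $\mathcal{O}$ to get $y$.

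\textbf{Lower bound on the right-hand side.} Since $y$ is determined by $x$, $\mu^{\mathcal{O}}(y\mid x)=O(\log n)$. The asymmetry therefore cannot come from the conditional term in this order. We use the other order: we want $(x,y)$ with small joint complexity but $\mu(x)+\mu(y\mid x)$ large. Swap roles and take $(x',y')=(f_n(x),x)$. Then $\mu^{\mathcal{O}}(x',y')\le n+O(\log n)$ by printing $x$ and querying. Also $\mu^{\mathcal{O}}(x')\ge (1-o(1))n$ for most $x$, because $f_n(x)$ for random $x$ is incompressible. The crux is to show $\mu^{\mathcal{O}}(x\mid f_n(x))\ge \varepsilon n$, giving total $\ge(1+\varepsilon)n$ against $n+O(\log n)$.

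\textbf{The main obstacle: nondeterminism with exponential time.} A $\nKt$ program of length $k$ with time $2^{k}$ could guess $x$ and verify $f_n(x)=x'$, so inverting with $\log t\approx n$ is free. We must instead ensure that uniqueness fails, and that any program running in time $2^{\varepsilon n}$ can touch only a tiny part of $f_n$. First, make $f_n$ many-to-one, e.g. random $f_n\colon\bool^n\to\bool^{n/2}$, so that a ``guess-and-verify'' program does not uniquely specify $x$. Second, restrict the oracle so that length-$n$ entries are only accessible via queries of length $\gg n$, for instance by padding queries to length $2^{n}$ or by hiding $f_n$ at positions indexed by long strings. Then any program with time $2^{\varepsilon n}$ cannot query them, while the joint description still can if we let it run in time $2^{O(n)}$. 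This costs $O(n)$ in $\log t$, and that cost must be balanced against the gain; choosing constants (padding level $\alpha n$, compression $\beta n$) is exactly where $\varepsilon$ comes from.

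\textbf{The core counting argument.} Fix $f$ outside length $n$ and all randomness, and condition on the pair. A program $\Pi$ of length $\le\varepsilon n$ and time $2^{\varepsilon n}$ whose accepting paths all output the same string determines $x$ from $x'$ using at most $2^{2\varepsilon n}$ oracle positions per path. The key step is a compression argument in the style of Gennaro--Trevisan. If many $x$ admit such short conditional descriptions, then $f_n$ restricted to them can be encoded succinctly, and a random $f_n$ avoids this except with probability $2^{-\Omega(n)}$.

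The delicate points are nondeterministic paths, which may query different positions; we handle this by fixing one accepting path as part of the encoding. For the $\rnKt$ and $\pnKt$ cases, we average over $r$ and fix a good $r$ for a $2/3$ fraction. A Borel--Cantelli argument over $n$ then yields one oracle working for all large $n$ and all five measures simultaneously. The deterministic and randomized measures follow because they are upper bounded by $\nKt$ and $\rnKt$ respectively, so the lower bounds transfer.
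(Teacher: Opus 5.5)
Your argument breaks at its crux, the lower bound on $\mu^{\mathcal{O}}(x\mid f_n(x))$, for the three nondeterministic measures, and none of your repairs fixes it. Whatever the merits of a random one-way oracle for $\rKt$ and $\pKt$, nondeterministic inversion is cheap: guess $w\in\bool^n$, query $f_n(w)$, compare with $x'$. This runs in \emph{polynomial} time, not time $2^k$.
\begin{itemize}
\item For a random permutation this program is already single-valued. For a random function, all preimages have size at most $n$ with overwhelming probability, so the string isolation lemma costs only $O(\log n)$ extra bits. Hence $\nKt^{\mathcal{O}}(x\mid f_n(x))=O(\log n)$. This is exactly the paper's remark that the one-way-function argument does not transfer to nondeterminism.
\item Making $f_n$ many-to-one is neutralized by nondeterministic language compression (Buhrman--Lee--van Melkebeek). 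The introduction lists this among the ``classical'' features that $\nKt$ retains. It uses only membership queries to $f_n^{-1}(x')$, so it relativizes and gives $\nK^{\poly(n),\mathcal{O}}(x\mid x')\le\log|f_n^{-1}(x')|+o(n)$.
\item Hiding $f_n$ behind queries of length $2^{\alpha n}$ slows the nondeterministic inverter by exactly as much as the forward computation in your joint description. Levin-style measures also do not confine the inverter to time $2^{\varepsilon n}$.
\end{itemize}
Altogether $\mu^{\mathcal{O}}(x')+\mu^{\mathcal{O}}(x\mid x')\le|x'|+\min\{n,\log|f_n^{-1}(x')|+\alpha n\}+o(n)$. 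Since $|x'|+\log|f_n^{-1}(x')|\approx n$ for a random $f_n$, this matches, up to $o(n)$, the only upper bound $n+\min\{|x'|,\alpha n\}$ you have on the joint complexity. So no choice of constants yields a linear gap.

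The Gennaro--Trevisan compression you invoke is about deterministic or randomized inverters. For a single-valued nondeterministic program, ``fixing one accepting path in the encoding'' means writing down the guessed preimage, so nothing is compressed. This is consistent with the paper's average-case SoI theorem, which (relativized) says SoI holds on average for $\rnKt^{\mathcal{O}}$ and $\pnKt^{\mathcal{O}}$ on pairs such as $(f_n(x),x)$ that are efficiently samplable with the oracle.

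What is missing is a source of asymmetry that survives nondeterminism. The paper takes it from Ronneburger's exhaustive-search argument rather than from one-wayness. It makes this work for all five measures by choosing $\mathcal{O}$ with $\PH^{\mathcal{O}}\subseteq\P^{\mathcal{O}}$, so that nondeterminism and $\NP^{\mathcal{O}}$ oracles collapse into deterministic polynomial time. The steps are:
\begin{enumerate}
\item Relative to such $\mathcal{O}$, find in polynomial time a truth table of $\NP^{\mathcal{O}}$-oracle circuit complexity at least $N^{1/2}$.
\item Use it in the Impagliazzo--Wigderson generator to derandomize a $2^{O(n)}$-time test for whether $\pKt^{\NP^{\mathcal{O}}}(x\mid y)$ exceeds $n/2$.
\item Output, in deterministic time $2^{O(n)}$ relative to $\mathcal{O}$, strings $x_1,\dots,x_\ell$, each the first string of high complexity given the previous ones.
\end{enumerate}
This gives $\sum_i\pnKt^{\mathcal{O}}(x_i\mid x_1,\dots,x_{i-1})\ge\ell n/(2k)$, via $\pKt^{\NP^{\mathcal{O}}}\le k\cdot\pnKt^{\mathcal{O}}$, together with $\Kt^{\mathcal{O}}(x_1,\dots,x_\ell)=O(n)$. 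Since $\pnKt$ is the smallest and $\Kt$ the largest of the measures, both bounds hold for every $\mu$ at once. With $\ell$ a large constant, the chain rule forces an $\Omega(n)$ violation at some split, and padding then covers every large $n$.

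Also, the comparison in your last sentence is reversed: lower bounds transfer because $\rnKt\le\rKt$ and $\pnKt\le\pKt$.
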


The relativized failure of SoI was known for deterministic and non-deterministic polynomial-time (distinguishing) complexity (see \citep{BuhrmanFortnow1995, DBLP:journals/tcs/LeeR05, TroyLeeThesis}). In contrast, \Cref{thm:neg-o-soi} considers Levin-style measures and their randomized extensions,  providing an  oracle world where SoI fails in a strong sense for all these measures.

 We observe that, in order to show the weaker separation  $\NP \nsubseteq \mathsf{BPP}$, it is enough to establish \Cref{eq:SoI_Ineq} under the assumption that $\NP \subseteq \BPP$ (since as observed above  $\mathsf{TAUT} \notin \mathsf{AM}$ implies in particular that $\NP \nsubseteq \BPP$). By a similar argument, to show $\NEXP \nsubseteq \BPP$, it suffices to prove that $\NEXP \subseteq \BPP \Longrightarrow$~\Cref{eq:SoI_Ineq}. Consequently, as an approach to lower bounds, symmetry of information might be useful even if \Cref{eq:SoI_Ineq} turns out to be false.\footnote{Interestingly, it is known that $\mathsf{NP} \subseteq \BPP$ implies that symmetry of information holds for other time-bounded Kolmogorov complexity measures \citep{DBLP:conf/coco/Hirahara22}.}

\subsubsection{Nondeterministic Computation is Likely Asymmetric}\label{ss:likely_asymmetric}

As discussed in \Cref{ss:context} and explained in the preceding section, any failure of SoI for $\nKt$ and its extensions must arise from a source of asymmetry not captured by the standard cryptographic argument. We pursue a different approach and obtain strong evidence that SoI indeed fails in the nondeterministic setting.

Recall that $\mathsf{FewEXP}$ is the set of languages in $\NEXP$ where each positive instance admits at most $2^{\poly(n)}$ witnesses. Note that $\EXP \subseteq \mathsf{FewEXP} \subseteq \NEXP$. We use $||\NP$ to denote parallel access to an $\NP$-oracle. 

\begin{theorem}[Strong Complexity Collapses from SoI] \label{thm:collapses_from_SoI}
    The following results hold:
    \begin{itemize}
        \item[\emph{(}i\emph{)}] \emph{[Uniform Collapses]}~\\
        -- If SoI holds for $\nKt$ or $\rnKt$ with error $e(n) = O(\log n)$, then  $\mathsf{FewEXP} \subseteq \mathsf{BPP}^{\NP}$.\\ 
        -- If SoI holds for $\nKt$ or $\rnKt$ with error $e(n) = \poly(\log n)$, then $\mathsf{FewEXP} \subseteq \mathsf{BPTIME}^{\NP}[n^{\poly(\log n)}]$.
         \item[\emph{(}ii\emph{)}]  \emph{[Nonuniform Collapses]}~\\ 
         -- If SoI holds for $\rnKt$ with error $e(n) = O(\log n)$, then $\mathsf{BPEXP}^{||\NP}\subseteq\NP/\poly \cap \coNP/\poly$.\\ 
         -- If SoI holds for $\nKt$ with error $e(n) = O(\log n)$, then $\mathsf{EXP}^{||\NP}\subseteq\NP/\poly \cap \coNP/\poly$.
    \end{itemize}
\end{theorem}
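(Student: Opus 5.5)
We will derive all four collapses from a single template: SoI yields an efficient ``search-to-decision'' or ``compression'' procedure for nondeterministic descriptions. That procedure then lets a low-level machine recover the hard object that an exponential-time class would otherwise need. The key object is, for a language $L$ in the relevant exponential class and an input $z$ of length $m$, a pair $(x,y)$ of strings of length $n=2^{\poly(m)}$. Take $x$ to be the truth table of a suitable ``hard'' function, and $y$ to encode the answer bits together with the unique-ish witness information (for $\mathsf{FewEXP}$, the concatenation of all witnesses, whose number is at most $2^{\poly(m)}$).

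\textbf{Step 1 (upper bound on joint complexity).} Observe that $\nKt(x,y)\le \nKt(x)+O(\log n)$ whenever $y$ is determined by $x$ via a nondeterministic exponential-time verification. Guessing all the few witnesses and checking them, while outputting $\bot$ unless the guessed list is complete, is a valid nondeterministic description. Completeness can be certified because the count of witnesses is itself encoded in $O(\poly(m))=O(\log n)$ bits. Applying SoI with error $e(n)=O(\log n)$ then gives $\nKt(y\mid x)\le O(\log n)$. The same argument works for $\rnKt$ with randomized descriptions.

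\textbf{Step 2 (exploiting small conditional complexity).} Given that $\nKt(y\mid x)=O(\log n)=O(\poly(m))$ and $\nKt(y\mid x)$ is tiny, the relevant bits of $y$ are computable from $x$ by a nondeterministic program of length $\poly(m)$ running in time $2^{\poly(m)}$. To turn this into a $\mathsf{BPP}^{\NP}$ algorithm, we choose $x$ cleverly. We take $x$ to be a string of \emph{low} $\Kt$ complexity, e.g. $x$ empty or a padding, so that SoI forces $\nKt(y)$ itself to be small. We then use the known machinery of \citep{HLO25} and \citep{DBLP:conf/coco/AllenderKR03}: strings of small $\nKt$ complexity admit short descriptions whose individual bits are decidable in $\NP$-relative polynomial time, given random hashing. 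Concretely, we guess-and-verify the short program with an $\NP$ oracle via Valiant--Vazirani/Goldwasser--Sipser-style isolation of the $O(\log n)$-bit description. This places the answer bit $L(z)$ in $\mathsf{BPP}^{\NP}$. With error $\poly(\log n)$, the description length becomes $\poly(m)$-polylog, and brute force over it yields $\mathsf{BPTIME}^{\NP}[n^{\poly(\log n)}]$.

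\textbf{Step 3 (nonuniform collapses).} For (ii), we apply SoI to $x=$ the truth table of an $\mathsf{EXP}^{||\NP}$ (resp.\ $\mathsf{BPEXP}^{||\NP}$) language at length $m$, of length $n=2^m$, and $y=$ the $\NP$-oracle answers list. Since parallel queries' answers can be counted, $\nKt(y\mid x)$ is small in one direction. The joint string is produced by a short nondeterministic program (guess the count of YES answers, verify), so $\nKt(x,y)\le O(m)$. SoI then gives $\nKt(x)\le O(m)$, i.e. the truth table has a nondeterministic description of length $O(m)$ with $\poly$-time per bit after fixing the advice. That advice is the $O(m)$-bit program plus the count, which is polynomial, so membership and non-membership are both certified in $\NP/\poly$. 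For $\rnKt$ we fix good randomness nonuniformly by Adleman's trick, which is where $\mathsf{BPEXP}$ enters.

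\textbf{Main obstacle.} The main obstacle is Step 2's conversion from a short \emph{nondeterministic} description, which is only existentially verifiable, into a procedure that outputs bits with an $\NP$ oracle in polynomial rather than exponential time. The description's runtime $t$ is up to $2^{O(\log n)}$, which is polynomial in $n$ but exponential in $m$. So the bit-by-bit evaluation must exploit that only one bit index of $y$ is needed. I would try to handle this by encoding $y$ with a locally decodable or error-correcting code and applying SoI at the level of small sub-blocks. This gives a per-bit description of size $O(\log n)$ and time $\poly(m)$.
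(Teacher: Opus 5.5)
Your proposal never makes substantive use of the SoI hypothesis. Each inequality you extract from it already holds unconditionally:
\begin{itemize}
\item In Step~1, if $y$ is obtained from $x$ by a single-valued nondeterministic procedure running in time $2^{\poly(m)}=\poly(n)$ with the witness count hardwired, then $\nKt(y\mid x)=O(\log n)$ directly.
\item In Step~2, taking $x$ empty makes SoI vacuous, and $\nKt(y)=O(\log n)$ is again immediate.
\item In Step~3, $\nKt(x)\le\nKt(x,y)+O(\log n)$ is trivial.
\end{itemize}
An argument that never uses the hypothesis would give $\mathsf{FewEXP}\subseteq\BPP^{\NP}$ and $\EXP^{\parallel\NP}\subseteq\NP/\poly\cap\coNP/\poly$ unconditionally, so some step must fail. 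It is the passage from a short $\nKt$ description to a polynomial-time procedure. A description witnessing $\nKt(y)=O(\log n)$ with $|y|=n=2^{\poly(m)}$ may run for $2^{O(\log n)}=2^{\poly(m)}$ steps, exponential in the input length $m$. Isolating the program with an $\NP$ oracle does not help, because the bottleneck is executing it, not finding it. The same conflation occurs in Step~3, where ``length $O(m)$'' silently becomes ``$\poly$-time per bit''. The locally-decodable-code idea in your last paragraph does not address this: nothing in it explains how SoI would remove the $2^{\poly(m)}$ cost of checking $\mathsf{FewEXP}$ witnesses or of simulating the $\EXP^{\parallel\NP}$ machine.

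The nontrivial consequence of SoI that the paper exploits is a meta-complexity upper bound.
\begin{itemize}
\item Splitting $x$ at the first prefix whose optimal description is slow, SoI forces $\rnK^{t}(x)\le\rnKt(x)+O(e(n))$ with $t=2^{O(e(n))}$ (\Cref{lem:rnk-t-rnkt}).
\item For such a time-bounded description $m$ of $x$, SoI yields $\rnKt(m\mid x)\le O(e(n))$. This is a genuine compression, since $m$ cannot be computed from $x$ without search.
\item Enumerating the $2^{O(e(n))}$ candidate conditional descriptions and checking them with carefully arranged non-adaptive $\SAT$ queries puts $\gapmnktp[n-O(e(n)),n-1]$ in $\pr\P^{\parallel\NP}$ and $\gapmrnktp[n-O(e(n)),n-1]$ in $\pr\BPP^{\parallel\NP}$ (\Cref{l:SoI-rnKt,l:SoI-nKt}).
\end{itemize}
The collapses then come from hardness versus randomness, not from decoding witnesses.

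For (i), assume $\mathsf{FewEXP}\not\subseteq\BPP^{\NP}$ and run a win-win. If $\EXP\not\subseteq\BPP^{\NP}$, use Impagliazzo--Wigderson/Trevisan--Vadhan plus BFNW. Otherwise apply Kabanets' easy-witness method to a language in $\mathsf{FewEXP}\setminus\EXP$, outputting its lexicographically first witness single-valuedly with advice $(x,|W_x|)$. Either way you get a generator that fools $\NP$-oracle circuits and whose outputs have small $\nKt$, so the gap problem cannot be in $\BPP^{\NP}$.

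For (ii), BFNW reconstruction from the truth table of $L\in\E^{\parallel\NP}$ gives $\E^{\parallel\NP}\subseteq\P^{\parallel\NP}/\poly$; the generator outputs are $\nKt$-easy by hardwiring the number of positive oracle answers. The Shaltiel--Umans downward collapse then upgrades this to $\NP/\poly\cap\coNP/\poly$, and padding finishes. For $\rnKt$ one additionally needs $\BPE^{\parallel\NP}\subseteq\AM\E/O(n)$ to make the truth table $\rnKt$-easy.
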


In particular, it follows from \Cref{thm:collapses_from_SoI} Part (\emph{ii}) and  a standard result \citep{DBLP:journals/tcs/Yap83} that if SoI holds for either $\rnKt$ or $\nKt$, the polynomial-time hierarchy collapses to the third level. Thus, under a  widely believed computational assumption, SoI does not hold for nondeterministic computation.\footnote{In a concurrent and independent work \citep{Khrushchev-June-2026}, Viacheslav Khrushchev  also established that the validity of SoI for $\nKt$ implies the collapse of the polynomial hierarchy. We thank Bruno Bauwens for bringing this to our attention.}

Before stating our next result, we need the following definitions. For a complexity measure $\kappa$ (such as $\Kt$, $\nKt$, or $\rnKt$), we consider the standard formulation of the corresponding meta-computational problem, where given a string $x \in \{0,1\}^n$ and a threshold parameter $s$, the goal is to decide if $\kappa(x) \leq s$. We will also consider the gap version of this problem, where for fixed functions $s_1(n)$ and $s_2(n)$, we are given the promise that either $\kappa(x) \leq s_1(n)$, or $\kappa(x) \geq s_2(n)$, and the goal is to distinguish the two cases. As usual, these computational problems will be denoted by $\mathsf{M}\kappa\mathsf{P}$ and $\mathsf{Gap}\text{-}\mathsf{M}\kappa\mathsf{P}[s_1,s_2]$, respectively. When $s_1(n) = n^\varepsilon$ for a fixed but arbitrarily small $\varepsilon > 0$  and $s_2(n) = n-1$, we simply write $\mathsf{Gap}\text{-}\mathsf{M}\kappa\mathsf{P}$. For a promise problem $\Pi = (\mathsf{YES}, \mathsf{NO})$ and a complexity class $\mathcal{C}$, we say that $\Pi \in \mathcal{C}$ if there is a language $L \in \mathcal{C}$ such that $\mathsf{YES} \subseteq L$ and $\mathsf{NO} \subseteq \{0,1\}^* \setminus L$. 

Recall that showing that $\MKtP \notin \P$ is a longstanding open problem in meta-complexity that dates back at least to \citep{DBLP:journals/siamcomp/AllenderBKMR06}. In the next result, we show that the validity of SoI in the nondeterministic setting implies both \emph{upper bounds} and \emph{lower bounds} for meta-computational problems.

\begin{theorem}[Refined Meta-Complexity Landscape Under  SoI]\label{thm:consequences_SoI}
The following results hold:
\begin{itemize}
\item[\emph{(}i\emph{)}]
   \emph{[Nonuniform Upper Bound]} If SoI with error $e(n) = O(\log n)$ holds for $\rnKt$, then 
   $$\gapmrnktp[n - O(\log n), n-1] \in \NP/\poly \cap \coNP/\poly.$$
\item[\emph{(}ii\emph{)}]
   \emph{[Uniform Lower Bound]} If SoI with error $e(n) = \poly(\log n)$ holds for either $\nKt$ or $\rnKt$, then $$\MKtP \notin \NTIME[n^{\poly(\log n)}] \cap \coNTIME[n^{\poly(\log n)}].$$
    \end{itemize}
\end{theorem}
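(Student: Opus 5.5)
Both items rest on one consequence of SoI: it yields short nondeterministic (or randomized-nondeterministic) descriptions of strings that certify high complexity. The template is the \citep{DBLP:conf/coco/Hirahara22} and \citep{HMO26} style of argument, adapted to nondeterminism. The basic observation is as follows. Let $x\in\{0,1\}^n$ with $\mu(x)\ge n-1$, where $\mu$ is the measure in question, and let $y$ be the lexicographically first (or otherwise canonical) string that is hard. Applying SoI to the pair $(x,y)$, or to $(\langle n\rangle, x)$ and variants, shows that a string of low conditional complexity given its hardness certificate can be decoded nondeterministically. The standard move is to apply SoI to $(x, \tt)$, where $\tt$ is the truth table of a witness that $x$ is incompressible. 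Then $\mu(\text{hardness certificate}\mid x)$ is small. This converts the coNP-type statement ``$x$ is hard'' into an NP-type statement with advice.

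\textbf{Item (i).} Fix $n$ and let $N_s$ be the number of $x\in\{0,1\}^n$ with $\rnKt(x)\le s$. I would give the nonuniform $\NP$ and $\coNP$ algorithms the count $N$ of strings with $\rnKt(x) < n-O(\log n)$ as advice, together with good random strings fixed nonuniformly (Adleman-style amplification).

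The YES side, small complexity, is in $\NP/\poly$ directly: guess the program and the nondeterministic witness. The randomness is handled by nonuniformly fixing $r$'s for which the uniqueness condition holds. This requires amplifying the $2/3$ probability by majority over many $r$, with the verifier checking that a majority of the guessed computations output $x$ and none output a different string.

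For the NO side, high complexity, there are two options. The first is the counting trick: guess $N$ distinct strings all having small $\rnKt$, with witnesses, and check that $x$ is not among them. This is exactly where SoI should enter, because a nondeterministic verifier cannot verify the ``$\bot$ on all paths'' uniqueness condition. I would instead use SoI to argue that the gap version only needs one-sided verification. Applying SoI to $x$ and the string $z$ encoding the list of all low-complexity strings (of complexity $n-O(\log n)$ given $N$), one gets $\rnKt(z\mid x)\le \rnKt(x,z)-\rnKt(x)+O(\log n)$. When $x$ is high complexity and $z$ is compressible given $N$, this bounds the conditional cost. The result is an $\NP/\poly$ procedure for the complement, and hence membership in $\coNP/\poly$.

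\textbf{Item (ii).} Suppose $\MKtP\in \NTIME\cap\coNTIME[n^{\poly\log n}]$. Then strings of high $\Kt$ complexity have short nondeterministic certificates of hardness. Consequently, the lexicographically first string $y$ of length $m$ with $\Kt(y)\ge m-1$ has $\nKt(y)\le \poly\log m$: the program guesses $y$ together with a coNTIME certificate for every lexicographically smaller string being easy, which again comes from the NTIME algorithm.

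Now apply SoI with polylog error to the pair $(y, \text{any } x)$, where $x$ has $\Kt$ large but $\nKt(x\mid y)$ small. Known results (the bound $\nKt$ vs $\Kt$ in \citep{HLO25}) give an $x$ of length $m$ with $\nKt(x)\ge m/2$ yet $\nKt(x\mid y)$ small, which contradicts SoI. Alternatively, one could derive that $\mathsf{FewEXP}$ collapses, contradicting a hierarchy theorem together with \Cref{thm:collapses_from_SoI}(i): $\MKtP$ in nondeterministic quasi-polynomial time combined with $\mathsf{FewEXP}\subseteq \BPTIME^{\NP}[n^{\poly\log n}]$ would give $\EXP$ in $\Sigma_2$ quasi-poly time and a diagonalization contradiction. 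I would pursue this route first: use \Cref{thm:collapses_from_SoI}(i) to collapse $\mathsf{FewEXP}$, use the assumed $\MKtP$ algorithm with the known $\EXP$-hardness under $\NP$ reductions to place $\EXP$ in quasi-polynomial $\NTIME\cap\coNTIME$, and contradict the nondeterministic time hierarchy.

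The main obstacle is the nonuniform certification in item (i): guaranteeing that the randomized-nondeterministic uniqueness condition can be checked in $\NP/\poly$ on both sides.
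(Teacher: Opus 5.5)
\textbf{Item (i).} Your direct counting approach breaks at its first step. The YES side is not ``directly'' in $\NP/\poly$. A program witnessing $\rnKt(x)\le n-O(\log n)$ may run for $2^{n-O(\log n)}$ steps, so a polynomial-time verifier cannot simulate it. Moreover, the requirement that \emph{every} path outputs $x$ or $\bot$ is a universal condition: your check that ``none output a different string'' is a $\coNP$ check. On the NO side, guessing all $N\approx 2^{n-O(\log n)}$ easy strings takes exponential time. Applying SoI to an exponentially long list $z$ never yields a polynomial-size certificate, and you do not say what the resulting inequality is used for.

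Two ingredients are missing. First, SoI must be used to collapse running times. With error $O(\log n)$ one gets $\rnK^{\poly(n)}(x)\le\rnKt(x)+O(\log n)$ (\Cref{lem:rnk-t-rnkt}), and a near-optimal description $m$ of $x$ satisfies $\rnKt(m\mid x)=O(\log n)$. A non-adaptive $\SAT$-oracle search over all short $v$ then puts $\gapmrnktp[n-O(\log n),n-1]$ in $\pr\BPP^{||\NP}$ (\Cref{l:SoI-rnKt}); the queries must be restructured so that determining the output of $m$ does not require knowing $m$. Second, this only gives $\P^{||\NP}/\poly$, and reaching $\NP/\poly\cap\coNP/\poly$ needs a downward collapse. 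The paper notes that the $\pr\BPP^{||\NP}$ algorithm has a consistent extension in $\EXP$. SoI also gives $\BPEXP^{||\NP}\subseteq\NP/\poly\cap\coNP/\poly$ (\Cref{thm:soi-to-bpexpnp-collapse}). That collapse uses the BFNW reconstruction, which reduces $\BPE^{||\NP}\subseteq\AM\E/O(n)$ non-adaptively to $\gapmrnktp$, followed by the Shaltiel--Umans theorem for classes admitting low-degree extension.

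\textbf{Item (ii).} Neither of your routes reaches a contradiction.

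In the first route, the lexicographically first $y$ with $\Kt(y)\ge m-1$ has only $O(\log m)$ leading zeros. So there are $2^{m-O(\log m)}$ smaller strings to certify, and your program gives only $\nKt(y)\le m+\poly(\log m)$, not $\poly(\log m)$. Furthermore, if $\nKt(y)$ and $\nKt(x\mid y)$ were both small, the trivial bound $\nKt(x)\le\nKt(y)+\nKt(x\mid y)+O(\log m)$ would already exclude $\nKt(x)\ge m/2$. Hence no such $x$ exists, and SoI is never actually used.

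In the second route, you correctly derive $\EXP\subseteq\NTIME[n^{\poly(\log n)}]\cap\coNTIME[n^{\poly(\log n)}]$ from $\EXP\subseteq\NP^{\MKtP}$. However, this inclusion follows from the open statement $\NP=\EXP$, so by itself it refutes nothing. Adding $\mathsf{FewEXP}\subseteq\BPTIME^{\NP}[n^{\poly(\log n)}]$ and removing the oracle via your assumption yields at most $\mathsf{FewEXP}\subseteq\NTIME[n^{\poly(\log n)}]$ (or your ``$\EXP$ in quasi-polynomial $\Sigma_2$''). Padding yields only $\EE\subseteq\NEXP$.

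What is needed is a collapse of a class containing $\NEXP$, so that the padded inclusion $\EE\subseteq\NEXP$ lands in something $\EE$ can diagonalize against. $\mathsf{FewEXP}$ is not known to be such a class. The paper proceeds as follows:
\begin{enumerate}
\item SoI gives $\gapmrnktp[n-\poly(\log n),n-1]\in\pr\BPTIME[n^{\poly(\log n)}]^{||\NP}$.
\item By \Cref{lem:bpe-to-gapmnrktp-nonunif} and padding, $\NEXP\subseteq\BPEXP^{||\NP}\subseteq\mathsf{DTIME}[n^{\poly(\log n)}]^{||\NP}/n^{\poly(\log n)}$. For $\nKt$, the same holds with $\EXP^{||\NP}$ and \Cref{lem:e-to-gapmnktp-nonunif}.
\item Together with $\EE\subseteq\NEXP$, this contradicts \Cref{lem:diag_ee}. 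There, $\EE$ tabulates $\SAT$ on all queries of length $2^{n^{2/3}}$ and diagonalizes against every quasi-polynomial-time non-adaptive $\NP$-oracle machine with quasi-polynomial advice.
\end{enumerate}
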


 It seems unlikely that an approximation to $\Kt(x)$ can be certified in $\NP/\poly \cap \coNP/\poly$, let alone for the more expressive measure $\rnKt$, as in the conclusion of \Cref{thm:consequences_SoI} Item (\emph{i}). Therefore, we interpret \Cref{thm:consequences_SoI} as further evidence that information is likely asymmetric in nondeterministic computation.

\subsubsection{A Correspondence Between Asymmetry, Hardness, and Explicit Constructions}

Next, we establish a connection among symmetry of information for $\rnKt$, the meta complexity of estimating $\rnKt$, and $\pr\BPTIME[\cdot]^{|| \NP}$-explicit constructions. This correspondence extends recent work \citep{DBLP:conf/coco/Hirahara22, GK22, DBLP:journals/eccc/KK25, HMO26} relating SoI to meta-complexity in the simpler setting of randomized computation. We prove a more general result for nondeterministic computation, where the analysis is more subtle, as explained below and in \Cref{s:techniques}.

\begin{theorem}[Correspondence Between Asymmetry of Information, Hardness, and Explicit Constructions] \label{thm:equivalence_SoI} The following statements are equivalent:
    \begin{enumerate}[label=(\roman*)]
        \item \emph{(SoI holds for $\rnKt$)} \;Symmetry of information holds for $\rnKt$ with error $e(n)=\poly(\log n)$.\label{i:SoI-rnKt}
        \item \emph{(Upper bound for $\gapmrnktp$)} \;$\gapmrnktp[n-\poly(\log n),n-1]\in\pr\BPTIME[n^{\poly(\log n)}]^{||\NP}$.\label{i:easiness-Gap-MrnKtP}
        \item \emph{(Natural property against $\rnKt$)} \;There is $P=(\mathcal{YES},\mathcal{NO})\in\pr\BPTIME[n^{\poly(\log n)}]^{||\NP}$ such that
        \begin{itemize}
            \item $|\mathcal{YES}\cap\{0,1\}^n|\ge2^{n-1}$, and
            \item For every $x\in\{0,1\}^n$ where $\rnKt(x)<n-\poly(\log n)$, $x\in\mathcal{NO}$.
        \end{itemize}\label{i:natural-property-rnKt}
        \item \emph{(Hardness of explicit constructions for dense $\pr \BPTIME[\cdot]^{||\NP}$ properties)} \;The following is false:
        \begin{quote}
            For any promise problem $P=(\mathcal{YES},\mathcal{NO})\in\pr\BPTIME[n^{\poly(\log n)}]^{||\NP}$ such that $|\mathcal{YES}\cap\{0,1\}^n|\ge2^{n-1}$, there exists $t(\cdot),\ell(\cdot):\N\to\N$ where $\ell(n)+\log t(n)<n-\poly(\log(n))$ for each $n$, and an algorithm $A(1^n, \cdot)$ such that, for infinitely many $n$, there exists advice $\alpha_n\in\{0,1\}^{\ell(n)}$ and a string $x_n\in\{0,1\}^n\setminus\mathcal{NO}$, such that
            \[
            \Prob_{r\sim\bool^{t(n)}}\mleft[
	\begin{aligned}
		&\forall w\in\bool^{t(n)}, A(1^n,\alpha_n,w,r) \textrm{ outputs $x$ or $\bot$ in $t(n)$ steps}; \\
        &\exists w \in \{0, 1\}^{t(n)}, A(1^n,\alpha_n,w,r) \textrm{ outputs $x$ in $t(n)$ steps}
	\end{aligned} \mright] \geq\frac23.
            \]
        \end{quote}\label{i:hardness-explicit-construction}
    \end{enumerate}
\end{theorem}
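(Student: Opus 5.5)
We prove the cycle of implications \ref{i:SoI-rnKt}$\Rightarrow$\ref{i:easiness-Gap-MrnKtP}$\Rightarrow$\ref{i:natural-property-rnKt}$\Rightarrow$\ref{i:hardness-explicit-construction}$\Rightarrow$\ref{i:SoI-rnKt}, following the template of \citep{DBLP:conf/coco/Hirahara22, HMO26}. In the nondeterministic setting we use two extra ingredients. First, a $\pr\BPTIME^{||\NP}$ machine can verify a guessed nondeterministic computation path and decide its validity via parallel $\NP$/$\coNP$ queries. Second, an $\rnKt$ program is checkable: with randomness $r$, the statement ``every path outputs $x$ or $\bot$'' is a $\coNP$ predicate, and ``some path outputs $x$'' is an $\NP$ predicate.

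\textbf{\ref{i:SoI-rnKt}$\Rightarrow$\ref{i:easiness-Gap-MrnKtP}.} Given $x\in\{0,1\}^n$, pick a random string $y$ of length $m=\poly(n)$, or a suitable hash/extractor output. The aim is to encode a hard-to-describe string that is easy relative to $x$ when $x$ has low $\rnKt$. The intended approach mirrors Hirahara's: SoI lets us transfer a compression of $x$ into a short conditional description of a related string. That string is checkable in $\BPTIME^{||\NP}$ because its $\rnKt$ complexity can be upper bounded by exhibiting a program (guessed inside $\NP$ queries) and lower bounded by counting.

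Concretely, I would reduce $\gapmrnktp$ to approximating $\rnKt(y\mid x)$ for a $y$ that is a small-time function of $x$. For such $y$, $\rnKt(x,y)\approx\rnKt(x)$. Applying SoI in the reversed order then gives $\rnKt(x)\approx \rnKt(y)+\rnKt(x\mid y)\pm\poly\log n$. The term $\rnKt(x\mid y)$ is tiny when $y$ determines $x$ nondeterministically; I would choose $y$ to be, e.g., the lexicographically first witness/encoding related to $x$. With this choice, $\rnKt(y)$ becomes estimable with parallel $\NP$ queries over candidate programs, since only the length is needed and the threshold is near $n$. I expect this step to be the main obstacle: choosing the auxiliary string so that one side is verifiable in $\BPTIME^{||\NP}$ despite the $\forall w$ quantifier in $\rnKt$. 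I would first try the hash-based construction of \citep{HMO26}, replacing $\Kt$ search by $\NP$ queries that ask whether a program of length $\le s$ with prescribed randomness outputs a string with given hash.

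\textbf{\ref{i:easiness-Gap-MrnKtP}$\Rightarrow$\ref{i:natural-property-rnKt}.} This step is immediate: take $\mathcal{YES}=\{x:\rnKt(x)\ge n-1\}$ restricted by the algorithm's acceptance. Counting shows that at least $2^{n-1}$ strings of length $n$ have $\rnKt\ge n-1$.

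\textbf{\ref{i:natural-property-rnKt}$\Rightarrow$\ref{i:hardness-explicit-construction}.} Suppose \ref{i:hardness-explicit-construction} fails. Apply the explicit construction to the property $P$ from \ref{i:natural-property-rnKt}. This yields $x_n\notin\mathcal{NO}$ with an $\rnKt$ program of length $\ell(n)+|A|+\log t(n)<n-\poly\log n$, so $x_n\in\mathcal{NO}$, a contradiction.

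\textbf{\ref{i:hardness-explicit-construction}$\Rightarrow$\ref{i:SoI-rnKt}.} We argue the contrapositive. Assume SoI fails for some $x,y$ with gap $\poly\log n$. Given any dense $P\in\pr\BPTIME^{||\NP}$, we build a construction by an argument in the style of Kolmogorov--Levin. Use $x$ as advice for a nondeterministic-randomized procedure that finds some $y'$ in the set $\{y' : \rnKt(x,y')\le s\}$ intersected with $P$. Membership in $P$ is decided in $\BPTIME^{||\NP}$, and $y'$ is pinned down uniquely as the lexicographically first element (uniqueness being checked by $\coNP$ queries, which keeps the output well-defined on all paths). The failure of SoI makes the advice length plus $\log t$ below $n-\poly\log n$, thereby contradicting \ref{i:hardness-explicit-construction}. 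We will need to check that simulating the $||\NP$ oracle inside an $\rnKt$ program is legal. This holds because nondeterministic programs can guess the oracle answers together with certificates for the YES answers, and verify the NO answers by requiring that no path contradicts them. That last requirement is exactly the form of the $\forall w$ clause, with randomness supplied by $r$.
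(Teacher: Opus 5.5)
Your links (ii)$\Rightarrow$(iii) and (iii)$\Rightarrow$(iv) are fine and agree with the paper, which notes that (iii) and (iv) are literally the same statement. The two links that carry all the content, however, have genuine gaps.

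\textbf{(i)$\Rightarrow$(ii).} Reducing to an estimate of $\rnKt(y)$ for an auxiliary $y$ with $\rnKt(x\mid y)$ small is circular. Estimating $\rnKt(y)$ is the same problem, and ``only the length is needed and the threshold is near $n$'' supplies no algorithm. You are using SoI in the unhelpful direction. The paper's argument needs three ingredients you do not have.
\begin{itemize}
\item First, SoI itself collapses the time bound: $\rnK^{t'}(x)\le\rnKt(x)+O(e(n))$ with $t'=2^{O(e(n))}$ (\Cref{lem:rnk-t-rnkt}). You split $x$ at the first prefix whose optimal program is slow, then compare SoI with the concatenation upper bound to force the remaining conditional program to be fast.
\item Second, for an optimal $\rnK^{t'}$-description $m$ of $x$, SoI gives
\[\rnKt(m\mid x)\le\rnKt(m,x)-\rnKt(x)+e(n)\le |m|+\log t'-\rnKt(x)+O(e(n))=O(e(n)).\]
So $m$ is produced from $x$ by one of only $2^{O(e(n))}$ short programs $v$ running in time $2^{O(e(n))}$. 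This is what makes brute force feasible, and $|m|$ is within $O(e(n))$ of $\rnKt(x)$.
\item Third, checking that $v$ on $x$ single-valuedly outputs some $m$, and that $m$ single-valuedly outputs $x$, is naturally adaptive: you compute $m$ with $\NP$ queries, then query about $m$. The paper removes this by folding the second layer into single $\NP$ queries that quantify over the intermediate string (queries $Q_4$--$Q_6$ in \Cref{l:SoI-rnKt}), with majority votes over independent randomness at both layers.
\end{itemize}

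\textbf{(iv)$\Rightarrow$(i).} Your direct construction does not work, for four reasons.
\begin{itemize}
\item The advice $x$ alone has length $n$, already over the budget $n-\poly(\log n)$.
\item Nothing forces $\{y':\rnKt(x,y')\le s\}$ to meet $\mathcal{YES}$.
\item A promise problem decided with two-sided error by a $\pr\BPTIME^{||\NP}$ machine has no well-defined ``lexicographically first element''. Strings outside $\mathcal{YES}\cup\mathcal{NO}$ may be accepted with intermediate probability, so the output varies with $r$.
\item A single SoI-violating pair gives no leverage over an arbitrary dense $P$.
\end{itemize}
The missing idea is the reconstructive generator of \Cref{l:black-box-PRG}. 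A dense property that puts all strings with $\rnKt<n-\poly(\log n)$ into $\mathcal{NO}$ distinguishes $G(x;\cdot)$ from uniform whenever the output length is somewhat below $\rnKt(x)$. Reconstruction (\Cref{cor:rnkt-recon}) then yields an $\rK^{\parallel\NP}$ bound, which is turned into an $\rnK$ bound by \Cref{l:rKt--parallel-NP} and \Cref{l:success-amp-rnK}. The paper uses this twice:
\begin{itemize}
\item for (iii)$\Rightarrow$(ii), via the worst-case-to-average-case reduction ``accept $x$ iff $A(G(x;z))$ rejects'' (\Cref{lem:worst-avg-rnKt});
\item for (ii)$\Rightarrow$(i), via a hybrid argument on $G_k(x;z)\circ G_{k'}(y;z')$ with $k\approx\rnKt(x)$ and $k'\approx\rnKt(y\mid x)$ (\Cref{lem:equivalence-ii-i}).
\end{itemize}
The contrapositive of your (iv)$\Rightarrow$(i) is precisely ``natural property $\Rightarrow$ SoI'', i.e.\ this hybrid argument, so without it the cycle is not closed.

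\textbf{Oracle removal.} Your justification for eliminating the $\parallel\NP$ oracle is also off. The $\forall w$ clause is a requirement the program must satisfy, not a test it can run. The correct device hardwires the number of positive $\NP$ answers. In the randomized setting this number depends on $r$, so it must be randomly guessed, losing a $1/t$ factor in success probability that is later recovered by amplification, as in \Cref{l:rKt--parallel-NP}.
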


General explicit construction results are known for dense properties in $\pr\BPP$ \citep{DBLP:conf/stoc/OliveiraS17} and in $\coAM$ \citep{DBLP:conf/stoc/0001LL25}. In contrast, \Cref{thm:equivalence_SoI} shows that SoI fails for $\rnKt$ if a suitable positive result can be established for dense properties in  $\pr\BPP^{|| \NP}$. Moreover, the design of such an algorithm is formally necessary to establish the asymmetry of nondeterministic computation. 

Interestingly, the \emph{nonadaptive access} to the $\NP$ oracle and the precise formulation of the nondeterministic complexity of the explicit construction algorithm $A$ are essential for the equivalence result to hold. In fact, the construction from \citep{DBLP:conf/stoc/OliveiraS17} relativizes to an $\NP$-oracle, but the access to the $\NP$-oracle by the corresponding algorithm $A$ is \emph{adaptive}, which is insufficient to rule out SoI. This is another subtle aspect of the nondeterministic setting, and we discuss it in more detail in \Cref{sec:nonadaptive}.

\paragraph{More on Adaptive vs Nonadaptive Queries.} 
For every string $x$, $\nKt(x) = \Theta(\Kt^{||\SAT}(x))$ (see \Cref{lem:ktA-leq-nkt} and \Cref{lem:nkt-leq-ktsat}). In other words, in settings where a constant multiplicative overhead is inessential, we can think of $\nKt$ as $\Kt$ in the presence of \emph{nonadaptive} queries to a $\SAT$ oracle. (Note that this is not necessarily the case for SoI, where a large overhead when relating different measures can impact SoI.) Interestingly, since Ronneburger's proof  \citep{RonneburgerThesis} of the failure of SoI for $\Kt$ relativizes (see \Cref{s:nKt_SoI} for a related discussion), we know that SoI fails for both $\Kt$ and $\Kt^{\SAT}$. However, similarly to the situation for $\nKt$ and related measures, we currently do not know if SoI fails for $\Kt^{|| \SAT}$. Thus  the validity of SoI in the nondeterministic setting is closely related to the role of adaptive queries in computation. For another example of the relevance of (non)adaptive queries for SoI, see \Cref{sec:nonadaptive}.\\

In the next section, we apply \Cref{thm:equivalence_SoI} and its underlying correspondences to derive new unconditional lower bounds in meta-complexity and to establish the failure of SoI in the presence of side information.

\subsubsection{Unconditional Consequences: Complexity Lower Bounds and Asymmetry of Information}

As highlighted in \Cref{ss:likely_asymmetric}, showing that $\Kt$ cannot be computed in polynomial time, i.e., that $\MKtP \notin \P$, is a longstanding open problem in meta-complexity \citep{DBLP:journals/siamcomp/AllenderBKMR06}. By considering whether SoI holds for $\rnKt$ with error $\poly(\log n)$,  \Cref{thm:consequences_SoI} (Item 2) and \Cref{thm:equivalence_SoI} imply the following \emph{unconditional} lower bound result.

\begin{theorem}[A Quick Lower Bound Consequence] \label{thm:application_correspondence} The following result holds:
    $$\MKtP \notin \NTIME[n^{\poly(\log n)}] \cap \coNTIME[n^{\poly(\log n)}] ~\;\text{or}\;~ \gapmrnktp[s_1,s_2] \notin \pr \mathsf{BPTIME}[n^{\poly(\log n)}]^{||\NP},$$
    where $s_1(n) = n - \poly(\log n)$ and $s_2(n) = n-1$.
\end{theorem}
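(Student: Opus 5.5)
The plan is a case analysis on whether symmetry of information holds for $\rnKt$ with error $e(n)=\poly(\log n)$, in the precise sense of item \ref{i:SoI-rnKt} of \Cref{thm:equivalence_SoI}. Exactly one of the two cases occurs, and in each case one disjunct of the conclusion follows directly from a result already stated.

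\emph{Case 1: SoI holds for $\rnKt$ with error $\poly(\log n)$.} Item (\emph{ii}) of \Cref{thm:consequences_SoI} applies to $\rnKt$ with error $e(n)=\poly(\log n)$. It yields
\[
\MKtP \notin \NTIME[n^{\poly(\log n)}] \cap \coNTIME[n^{\poly(\log n)}],
\]
which is the first disjunct.

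\emph{Case 2: SoI fails for $\rnKt$ with every error $\poly(\log n)$.} By the equivalence of items \ref{i:SoI-rnKt} and \ref{i:easiness-Gap-MrnKtP} in \Cref{thm:equivalence_SoI}, item \ref{i:easiness-Gap-MrnKtP} must also fail. That is,
\[
\gapmrnktp[n-\poly(\log n),n-1]\notin \pr\BPTIME[n^{\poly(\log n)}]^{||\NP},
\]
which is the second disjunct with $s_1(n)=n-\poly(\log n)$ and $s_2(n)=n-1$.

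The only step needing care is matching the quantifiers on ``$\poly(\log n)$'' between the two theorems, so that the negation of item \ref{i:SoI-rnKt} lines up with the hypothesis of the second disjunct. Both theorems quantify the error in the same way: there exists a polylogarithmic function for which SoI holds. Likewise, item \ref{i:easiness-Gap-MrnKtP} asserts that for some polylog gap parameter the gap problem is easy. Hence ``SoI fails'' means that for every polylog error there are counterexample pairs, and this is what \Cref{thm:equivalence_SoI} converts into hardness for every polylog gap. The conclusion of the theorem should be read with the same quantifier convention.

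Apart from this bookkeeping there is no real obstacle: all substantive work lies in the two cited theorems.
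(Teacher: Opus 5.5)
Your proposal is correct and follows the paper's proof. The paper makes the same case split on whether SoI holds for $\rnKt$ with error $(\log n)^c$ for some constant $c$. In the first case it applies Item (\emph{ii}) of \Cref{thm:consequences_SoI}, and in the second it uses the contrapositive of the (ii)$\Rightarrow$(i) direction of \Cref{thm:equivalence_SoI}, i.e.\ \Cref{lem:equivalence-ii-i}, which is the only direction your argument actually needs.
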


This result provides a quick example of how SoI can offer a fruitful viewpoint for establishing complexity lower bounds. In our next results, we apply the correspondences behind \Cref{thm:equivalence_SoI} in a more sophisticated, non-black-box way to establish new  results on the asymmetry and complexity of nondeterministic computations.

\paragraph*{Lower Bounds for $\nKt$ via Symmetry of Information.} Let $\mathsf{MnKtP}$ be the following problem: Given $(x,1^s)$, where $x\in \bool^*$ and $s \in \mathbb{N}$, decide whether $\nKt(x) \leq s$ (``yes'' case) or $\nKt(x) > s$ (``no'' case). Moreover, for  $s\colon\mathbb{N}\to\mathbb{N}$, we let $\mathcal{R}_{\nKt[s]} = \{x \in \{0,1\}^* \mid \nKt(x) \geq s(|x|)\}$. We say that a language $L$ is \emph{dense} if there is a positive constant $b$ such that, for every large enough $n$, we have $|L \cap \{0,1\}^n| \geq 2^n/n^b$.

	\begin{theorem}[Complexity Lower Bounds for Estimating \texorpdfstring{$\nKt$}{}]\label{thm:intro_nKt_lb}
		There is a constant $\varepsilon>0$ such that
		\[
		\mathsf{MnKtP}\not\in\mathsf{NTIME} \mleft[2^{\varepsilon n}\mright] \cap \mathsf{coNTIME}\mleft[2^{\varepsilon n}\mright].
		\]
		Moreover, given a constant $1/2 < \gamma < 1$, there exists $\varepsilon = \varepsilon(\gamma) > 0$ such that, for every dense language $L \subseteq \mathcal{R}_{\nKt[\gamma\cdot n]}$, we have $L \not\in\mathsf{NTIME} \mleft[2^{\varepsilon n}\mright] \cap \mathsf{coNTIME}\mleft[2^{\varepsilon n}\mright]$. 
	\end{theorem}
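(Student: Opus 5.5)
The plan is a contrapositive argument. From an efficient nondeterministic recognizer for a dense set of $\nKt$-random strings, I would build a short nondeterministic description of some element of that set, which is a contradiction. The first claim reduces to the second. Fix $\gamma\in(1/2,1)$ and suppose $\MnKtP\in\NTIME[2^{\varepsilon n}]\cap\coNTIME[2^{\varepsilon n}]$. Then $L=\{x : \nKt(x)\ge\gamma|x|\}$ is in the same class: on input $x$, run the procedure for $\MnKtP$ on $(x,1^{\lceil\gamma|x|\rceil})$, whose length is at most $2|x|$. At most $2^{\gamma n+1}$ programs have length below $\gamma n$, so this $L$ is dense. It therefore suffices to prove the statement about dense $L\subseteq\mathcal{R}_{\nKt[\gamma n]}$, with $\varepsilon$ rescaled by a factor of $2$.

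For the main argument, fix a dense $L\subseteq\mathcal{R}_{\nKt[\gamma n]}$ that has both an $\NTIME[2^{\varepsilon n}]$ verifier and a $\coNTIME[2^{\varepsilon n}]$ verifier. The naive idea is to let $x$ be the lexicographically first element of $L_n=L\cap\bool^n$. A nondeterministic program can guess, for every string up to $x$, either a membership certificate (for $x$) or a non-membership certificate (for the smaller strings), and output $\bot$ if any check fails. This uniquely specifies $x$, but it costs $\log t\approx(1+\varepsilon)n$, which is too much. To reduce the cost, I would apply hardness-versus-randomness against nondeterministic (SVN) circuits, i.e.\ a Miltersen--Vinodchandran or Shaltiel--Umans style generator $G$, and use $L$ itself as the source of hardness. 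Pick a smaller length $m=\beta n$ with $\beta<1/2$ chosen so that $(1+\varepsilon)m+O(\log n)<\gamma n$; this is where $\gamma>1/2$ leaves room. Let $y$ be the lexicographically first string in $L_m$. The description of a target $x\in L_n$ is then: guess $y$ together with certificates for $y\in L_m$ and $y'\notin L_m$ for all $y'<y$, costing time $2^{(1+\varepsilon)m}$; enumerate the seeds $s$ of $G_y$ in order; output the first $G_y(s)$ that comes with a guessed membership certificate for $L_n$ together with non-membership certificates for all earlier outputs. Every path outputs this unique $x$ or $\bot$, so $\nKt(x)\le(1+\varepsilon)m+O(\log n)+O(\varepsilon n)<\gamma n$ once $\varepsilon$ is small, contradicting $x\in L$.

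It remains to guarantee that some $G_y(s)$ lands in $L_n$. If none does, then $L_n$, which is dense and decidable by a nondeterministic and co-nondeterministic $2^{\varepsilon n}$-time procedure, is a test that $G_y$ fails to hit. The reconstruction property of the generator then yields an SVN oracle circuit, with oracle $L_n$ and small advice, that computes $y$. Simulating the oracle with the guessed certificates turns this into a nondeterministic program for $y$, giving $\nKt(y)\le \text{advice}+O(\varepsilon n)+O(\log n)$. If this is below $\gamma m$, it contradicts $y\in L_m\subseteq\mathcal{R}_{\nKt[\gamma m]}$.

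The main obstacle is the parameter regime of this reconstruction step. The advice length, which is governed by the size of the distinguisher and by $\log$ of the number of seeds, must be below $\gamma m=\gamma\beta n$, while the oracle time $2^{\varepsilon n}$ contributes only $\varepsilon n$. Standard generators have advice polynomial in the distinguisher size, which is fatal here. I would therefore first try a hitting-set generator for dense sets whose reconstruction advice is about $m^{c}$ for a small constant $c$, i.e.\ linear-in-$m$ entropy as in the Impagliazzo--Wigderson or Umans setting. If this fails, I would iterate the construction over geometrically decreasing lengths $m_1>m_2>\dots$, using the hardness at each level to build the next generator, and choose $\varepsilon=\varepsilon(\gamma)$ small enough that every level closes.
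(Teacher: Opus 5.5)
You reduce the first claim to the ``moreover'' part correctly. The main argument, however, breaks at the reconstruction step, and the problem is not one of parameter tuning. $G_y(s)$ is efficiently computable from $y\in\bool^m$ and a short seed of length $d$. Hence every output satisfies $\nKt(G_y(s))\le m+d+O(\log n)<\gamma n$ under your choice $(1+\varepsilon)m+O(\log n)<\gamma n$. So no output of $G_y$ ever lies in $L_n\subseteq\mathcal{R}_{\nKt[\gamma n]}$, and this holds for every $y\in\bool^m$, not only the lexicographically first element of $L_m$.

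The hitting case therefore never occurs. Your whole contradiction must come from the implication ``$G_y$ misses $L_n$ $\Rightarrow$ $\nKt(y)<\gamma m$'', and no black-box reconstruction can deliver it. Let $D^\ast$ be the complement of $\bigcup_{y\in\bool^m}G_y(\bool^d)$. Then $|D^\ast|\ge 2^n-2^{m+d}\ge 2^{n-1}$, and every $G_y$ misses $D^\ast$. A black-box reconstruction relative to $D^\ast$ would recover all $2^m$ strings $y$ from the fixed oracle using fewer than $\gamma m$ bits of advice, which counting rules out. The same obstruction arises whenever the hard string is shorter than the output it must stretch to. Neither a different generator nor an iteration over shrinking lengths $m_1>m_2>\cdots$ can rescue the plan. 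You also identify no special property of the lexicographically first $y$ that would support a non-black-box argument.

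The paper avoids both a canonical hard string and hard strings shorter than what they must fool. Since \emph{every} element of $L$ is $\nKt$-hard, a nondeterministic program can guess any $s\in L$ of length $N'=O(n/\gamma)$, paying only about $\varepsilon N'$ for a membership certificate. It then uses $s$ as the hard truth table to derandomize a \emph{fixed} randomized procedure: the $\rK^{L}$ reconstruction, which uses $O(n)$ random bits thanks to pairwise-independent amplification. The majority output of that procedure does not depend on which $s$ was guessed, and this gives single-valuedness (\Cref{l:derand-rK}). Combined with reconstruction against the dense $L$, this yields the SoI-like \Cref{l:nKt-SoI}.

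The contradiction then runs in the opposite direction from yours. Rather than pushing an element of $L$ below $\gamma n$, the paper pushes the complexity of a constructed string above $n$. A Ronneburger-style brute-force iteration with $L_{\le 5n}$-oracle access, running in time $2^{n+O(\log^3 n)}$, builds strings $s_i=G_1(x;z_1)\circ G_2(s_{i-1};z_2)\in L$ of growing length. It stops once $\nKt(s_{i_0})\ge\gamma|s_{i_0}|>n/(2-2\gamma')$, where $\gamma'=(\gamma+1/2)/2$. Collapsing the oracle gives $\nKt(s_{i_0})\le(1+5\varepsilon)n+O(\log^3 n)$, a contradiction. The hypothesis $\gamma>1/2$ is genuinely needed in this geometric growth, since $\gamma+\gamma^2+\cdots>1$ exactly when $\gamma>1/2$. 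It is not needed in the choice of a shorter length $m$, as your proposal suggests.
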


It was previously known that $\mathsf{MnKtP} \notin \NP \cap \mathsf{coNP}$ \citep{DBLP:conf/cie/Allender10, DBLP:journals/jcss/AllenderKRR11}.\footnote{We remark that the proof of \cite{DBLP:journals/jcss/AllenderKRR11} can be extended to a larger time bound, but the approach fundamentally faces a \emph{half-exponential barrier}. In particular, within the framework of \cite{DBLP:conf/cie/Allender10, DBLP:journals/jcss/AllenderKRR11}, the strongest lower bound one can hope to prove is $\MnKtP\not\in\NTIME[T(n)]\cap\co\NTIME[T(n)]$ for a time bound $T(n)$ satisfying $T(T(n))=o(2^n)$.} In contrast, \Cref{thm:intro_nKt_lb} provides an exponential lower bound on the complexity of $\mathsf{MnKtP}$. More generally, it shows that distinguishing strings of bounded complexity (``structure'') from random strings (``randomness'') is hard.

We observe that, in order to obtain a lower bound of the form \( 2^{\Omega(n)} \) for \( \mathcal{R}_{\nKt[\gamma \cdot n]} \), our techniques require assuming \( \gamma > 1/2 \); see \Cref{ss:limitations_gamma} for details. Establishing lower bounds of magnitude $2^{\Omega(n)}$ for \( \mathcal{R}_{\nKt[\gamma \cdot n]} \) when \( 0 < \gamma \leq 1/2  \) remains an open problem.

Since $\mathsf{MnKtP} \in \mathsf{P}^{\mathsf{NEXP}} \subseteq \mathsf{NEXP}/\poly$ and every language in $\mathsf{NEXP}$ non-uniformly reduces to  $\mathsf{MnKtP}$  \citep{DBLP:journals/jcss/AllenderKRR11}, it follows that $\mathsf{MnKtP}\not\in \mathsf{SIZE}[\poly]$ if and only if $\NEXP \nsubseteq \SIZE[\poly]$. In particular, a breakthrough complexity separation would follow if we could show that $\mathsf{MnKtP}\not\in \mathsf{SIZE}[\poly]$. While we currently don't know how to prove the required lower bound for $\mathsf{MnKtP}$, we are able to establish such a lower bound for the related problem of estimating $\rnKt$, the \emph{randomized extension} of $\nKt$.

\paragraph*{Lower Bounds for $\rnKt$ via Explicit Constructions.} Fix two functions $s_1,s_2\colon\mathbb{N}\to\mathbb{N}$. We let $\mathsf{Gap}$-$\mathsf{MrnKtP}[s_1,s_2]$ be the following promise problem: Given $x\in \bool^*$, decide whether $\rnKt(x) \leq s_1(|x|)$ (``yes'' case) or $\rnKt(x) > s_2(|x|)$ (``no'' case). Recall that for a promise problem $\Pi = (\mathsf{YES}, \mathsf{NO})$ and a complexity class $\mathcal{C}$, we say that $\Pi \notin \mathcal{C}$ if there is no language $L \in \mathcal{C}$ such that $\mathsf{YES} \subseteq L$ and $\mathsf{NO} \subseteq \{0,1\}^* \setminus L$. For a language $\mathcal{O}$ and a function $s \colon \mathbb{N} \to \mathbb{N}$, we let $\mathsf{SIZE}^\mathcal{O}[s]$ denote the set of languages computed by Boolean circuits with oracle access to $\mathcal{O}$ and of size at most $s(n)$. Similarly to $\mathcal{R}_{\nKt[s]}$, we can also consider the set $\mathcal{R}_{\rnKt[s]}$, which is defined in the natural way using $\rnKt$ instead of $\nKt$.

	\begin{theorem}[Complexity Lower Bounds for Estimating \texorpdfstring{$\rnKt$}{}]\label{thm:intro_rnKt_lbs}
		The following statements hold\emph{:}
		\begin{enumerate}
			\item \label{i:rnkt_lb_2} For every constant $1/2 < \gamma < 1$, there exists a constant $\varepsilon>0$ such that \[\gapmrnktp\mleft[\gamma\cdot n,n-1\mright]\not\in \mathsf{AMTIME}\mleft[2^{\varepsilon n}\mright] \cap \mathsf{coAMTIME}\mleft[2^{\varepsilon n}\mright].
			\]
            Moreover, given a constant $1/2 < \gamma < 1$, there exists $\varepsilon(\gamma)> 0$ such that, for every dense language $L \subseteq \mathcal{R}_{\rnKt[\gamma\cdot n]}$, we have $L \not\in\mathsf{AMTIME} \mleft[2^{\varepsilon n}\mright] \cap \mathsf{coAMTIME}\mleft[2^{\varepsilon n}\mright]$. 

            		\item \label{i:rnkt_lb_1} For every constant $\gamma>0$,
			\[\gapmrnktp\mleft[2^{\log^{\gamma} n},n-1\mright]\not\in \mathsf{AMTIME}\mleft[n^{\mathsf{polylog}(n)}\mright].
			\]  
            Moreover, for every dense language $L \subseteq \mathcal{R}_{\rnKt[2^{\log^{\gamma} n}]}$, we have $L \not\in\mathsf{coAMTIME}\mleft[n^{\mathsf{polylog}(n)}\mright]$.

			\item \label{i:rnkt_lb_3} For every constant $\gamma>0$ and $\mathcal{O} \in \AM \cap \coAM$,
			\[\gapmrnktp\mleft[2^{\log^{\gamma} n},n-1\mright]\not\in  \mathsf{SIZE}^{\mathcal{O}}\mleft[n^{\mathsf{polylog}(n)}\mright].
			\]
            Moreover, for every dense language $L \subseteq \mathcal{R}_{\rnKt[2^{\log^{\gamma} n}]}$, we have $L \not\in  \mathsf{SIZE}^\mathcal{O}\mleft[n^{\mathsf{polylog}(n)}\mright].$
		\end{enumerate}
	\end{theorem}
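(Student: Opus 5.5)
The plan is to argue by contradiction through the ``explicit constructions'' side of the correspondence behind \Cref{thm:equivalence_SoI}. The principle is that a single-valued Arthur--Merlin construction procedure is exactly an $\rnKt$ description. Suppose an algorithm $A$, with Arthur's coins $r$ and Merlin's message $w$, outputs either a fixed string $x_n$ or $\bot$ on every $w$ for most $r$, and outputs $x_n$ for some $w$. Then $\rnKt(x_n)\le |A|+|\alpha_n|+\log(\text{running time})+O(\log n)$, where $\alpha_n$ is any advice $A$ uses. So it suffices to show the following: if a dense $L\subseteq\mathcal{R}_{\rnKt[s]}$ were decidable in the stated class, then some element of $L\cap\{0,1\}^n$ could be constructed in this single-valued AM sense within time $T$ and with advice $\ell$, where $\ell+\log T<s(n)$. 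Such an element would have $\rnKt<s(n)$, contradicting $L\subseteq \mathcal{R}_{\rnKt[s]}$. The $\gapmrnktp$ statements follow from the ``moreover'' parts. Any language separating the YES and NO instances of the gap problem is, after complementation, a dense subset of the relevant $\mathcal{R}_{\rnKt}$, because strings of complexity at least $n-1$ form a constant fraction of $\{0,1\}^n$.

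For Item 2, let $L$ be dense and in $\coAMTIME[n^{\polylog n}]$. I would invoke the explicit-construction theorem of \citep{DBLP:conf/stoc/0001LL25} for dense $\coAM$ properties. I would use it in the strong form in which the constructor is itself a single-valued (pseudodeterministic) Arthur--Merlin procedure: Merlin supplies certificates, and Arthur's public coins fix the protocol. I would then check that its output has the single-valuedness above, with running time $2^{\polylog n}$ only infinitely often. This gives $\rnKt(x_n)\le\polylog n<2^{\log^{\gamma}n}$ for infinitely many $n$, which is the desired contradiction. Item 3 follows the same route with two changes. The $n^{\polylog n}$-size circuit becomes the advice $\alpha_n$. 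Each oracle query to $\mathcal{O}\in\AM\cap\coAM$ is answered inside the $\rnKt$ program by running the AM protocol for $\mathcal{O}$ or for its complement, as appropriate: Merlin's messages are part of $w$, and Arthur's coins are part of $r$ after amplification and a union bound over all queries. A path outputs $\bot$ whenever a certificate fails, which preserves single-valuedness. The advice costs $n^{\polylog n}$ bits, far more than the $2^{\log^\gamma n}$ bound allows. So in Item 3 the construction must be run at input length $m$ and the circuit used at a polynomially smaller scale $n=2^{\log^{\gamma'}m}$, chosen so that the advice length plus $\log T$ remains below $2^{\log^{\gamma}m}$. This rescaling is routine but needs care.

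Item 1 is the exponential regime. Here I need both $\AM$ and $\coAM$ membership of $L$, since the description must certify that its chosen candidate lies in $L$ and, in a ``lexicographically first'' search, that earlier candidates do not. My concrete plan is as follows. Fix a candidate set $\{G(z)\}$ of strings with $|z|\approx n/2+\delta n$, for example the strings whose second half is determined by the first half via a fixed efficient map. A counting argument with density guarantees that this set hits $L$. The $\rnKt$ program is given the index of the first hit, costing about $(1/2+\delta)n$ bits, or less if the construction enumerates seeds. It certifies membership with the $\AM$ protocol and non-membership with the $\coAM$ protocol, in time $2^{O(\varepsilon n)}$. The resulting bound is $\rnKt\le(1/2+\delta+O(\varepsilon))n<\gamma n$, which is exactly where $\gamma>1/2$ enters.

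The main obstacle is constructing a candidate family that provably hits every dense set while being describable with only about $n/2$ bits. A naive family costs nearly $n$ bits. I would first try the SoI-style decomposition $x=(x_1,x_2)$ with $|x_i|=n/2$. By density, for a noticeable fraction of prefixes $x_1$, the fibre $\{x_2:(x_1,x_2)\in L\}$ is dense. I would then apply the quasi-polynomial construction from Item 2 to the fibre: it has conditional $\coAM$-type complexity, so a good $x_2$ can be found from $x_1$ with only $O(\varepsilon n)$ additional description. That leaves $x_1$ itself, which costs $n/2$ bits. Making this decomposition quantitatively tight, while controlling the error probabilities and verifying that the procedure is single-valued, is the step I expect to be hardest.
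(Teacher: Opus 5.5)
Your reduction of the $\gapmrnktp$ statements to the ``moreover'' parts is fine, and so is your reading of single-valued AM constructions as $\rnKt$ descriptions. Item~2 follows the paper's route, with three fixes noted at the end. Item~1, however, has a genuine gap.

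No efficiently generated candidate family indexed by about $(1/2+\delta)n$ bits can be shown to hit $L$ by density. Each member has $\rnKt\le(1/2+\delta+O(\varepsilon))n<\gamma n$, so the whole family is disjoint from $\mathcal{R}_{\rnKt[\gamma n]}$, which is itself dense. Any hit must therefore come from the assumed easiness of $L$.

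The fibre decomposition does not supply this either. The fibre lives at length $n/2$ but is only decidable in time $2^{\varepsilon n}$, which is outside the quasi-polynomial regime of \citep{DBLP:conf/stoc/0001LL25}. Worse, every member $x_2$ of the fibre satisfies $\rnKt(x_2\mid x_1)\ge(2\gamma-1)\cdot n/2-O(\log n)$. So finding one below $(\gamma-1/2)n=(2\gamma-1)\cdot n/2$ is just a conditional version of the original problem with $\gamma$ replaced by $2\gamma-1$. The step makes no progress, and for $\gamma\le 3/4$ it lands in the open case $\gamma\le1/2$.

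The missing idea is reconstruction. Because $L\in\AMTIME[2^{\varepsilon n}]\cap\coAMTIME[2^{\varepsilon n}]$, $\rK^{L}$ collapses into $\rnK$ with only $O(\varepsilon n)$ loss (\Cref{l:rK-AM-coAM-superscript}). Using $L$ as a distinguisher for the generator of \Cref{l:black-box-PRG} then gives \Cref{l:MrnKtP-reconstruction}: some seeds satisfy $G_1(x;z_1)\circ G_2(y;z_2)\in L$ whenever the output lengths are $O(\log^3 n)$ below $\rK^{L}(x\mid y)$ and $\rK^{L}(y)$. The paper iterates this, searching by brute force over $x\in\{0,1\}^n$ and the seeds, with an $L$-oracle on lengths $O(n)$. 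In time $2^{n+O(\log^3 n)}$ it reaches $s_{i^*}\in L$ with $\rnKt(s_{i^*})\ge\theta_{i^*}>n/(2-2\gamma')$, where $\gamma'=(\gamma+1/2)/2$. Yet the search itself, after collapsing the oracle, gives $\rnKt(s_{i^*})\le n+O(\varepsilon n)$. The hypothesis $\gamma>1/2$ enters through the recursion $\theta_i\approx\gamma'(n+\theta_{i-1})$, whose limit exceeds $n$ only when $\gamma'>1/2$.

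Item~3 also has a genuine gap. Once the circuit $C_n$ is advice, the description of every string you construct contains $C_n$. The bound you obtain is then at least $|C_n|$, which may be $n^{\polylog(n)}\gg 2^{\log^{\gamma}n}$. Rescaling cannot help, because the string that actually lies in $L$ has some length $n$ and must satisfy $\rnKt<2^{\log^{\gamma}n}$. A length-$m$ string containing a block in $L_n$ still carries $C_n$, so $2^{\log^{\gamma}m}$ is the wrong threshold. Landing in $L_m$ itself would need $C_m$, which reproduces the problem at length $m$.

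You need a hitting object that is independent of the circuit. The paper takes one from the near-maximum lower bound $(\AMEXP\cap\coAMEXP)/2^{n^{\varepsilon}}\not\subseteq\SIZE^{\mathcal{O}}[2^n/n]$ (\Cref{l:maximum-circuit}). Those truth tables have $\rnKt\le 2^{\log^{\varepsilon}N}$ (\Cref{l:AM-truth-table}). Feeding them to the generator of \Cref{t:IW97-PRG} gives multisets $T_s$ with $\rnKt(T_s)\le 2^{\log^{\varepsilon}s}$ that fool every size-$s$ $\mathcal{O}$-oracle circuit for infinitely many $s$ (\Cref{t:prg-circuit}). Now choose $m$ so that an OR of copies of $C_m$ over length-$m$ blocks has size at most $s$, and so that $2^{\log^{\varepsilon}s}<2^{\log^{\gamma}m}$. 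That OR rejects all of $T_s$ but accepts a uniform string with probability above $1/s$, contradicting pseudorandomness. The circuit size constrains only the distinguisher, never the complexity of the hitting strings.

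Finally, the three fixes to Item~2:
\begin{itemize}
\item The construction of \citep{DBLP:conf/stoc/0001LL25} uses advice $2^{\log^{1/k}n}$ (\Cref{l:explicit-construction}), so you get $\rnKt(x_n)\le O(2^{\log^{1/k}n})$, not $\polylog(n)$.
\item It applies to $\coAM$, so you must pad $L\in\coAMTIME[2^{\log^{d}n}]$ and take $k=2d/\gamma$.
\item The constant-round protocol must be reduced to two rounds (\Cref{l:round-reduction}) before it reads as an $\rnKt$ description.
\end{itemize}
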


    While we were able to establish that $\mathsf{MrnKtP} \notin \mathsf{AM}$, it remains an interesting open problem to show that $\mathsf{MnKtP} \notin \NP$.

    To the best of our knowledge, no previous lower bounds against $\mathsf{AM}$, $\SIZE[\poly]$, and exponential time nondeterministic classes were known for natural meta-computational problems. These results thus provide the strongest unconditional lower bounds in meta-complexity to date.

\paragraph*{An Unconditional Failure of SoI via Computational Hardness.} While we are currently unable to determine, unconditionally, if symmetry of information fails in a nondeterministic setting, we obtain an unconditional result in the presence of a conditional string, which one can think of as an advice string or side information.      
    In more detail, we say that conditional SoI holds for $\nKt$  with error $e(n)$ if for any strings $x,y,w\in\{0,1\}^*$, we have that
\[
    \nKt(x,y\mid w)\ge\nKt(x\mid y,w)+\nKt(y\mid w)-e(|x|+|y|).
\]
We note that we do not assume random-access to the conditional string in the setting of conditional symmetry of information.  
Then, unconditionally, it is possible to show that  conditional SoI for $\nKt$ does not hold with error $e(n) = \poly(\log n)$.

\begin{theorem}[Failure of Conditional SoI for $\nKt$]\label{thm:failure-cond-SoI}
    For any $e(n)\le\polylog(n)$, conditional SoI for $\nKt$ does not hold with error $e(n)$.
\end{theorem}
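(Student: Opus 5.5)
The plan is to argue by contradiction: assume conditional SoI for $\nKt$ holds with error $e(n)\le\polylog(n)$, and derive an algorithm for estimating $\nKt$ that contradicts the unconditional lower bound of \Cref{thm:intro_nKt_lb}. In the unconditional setting we cannot decide whether SoI fails. The role of the conditional string $w$ is to supply exactly the resource that the correspondence behind \Cref{thm:equivalence_SoI} lacks. We will choose $w$ to be a long but \emph{uniformly computable} string, for instance a padding string or an explicitly constructible object of length $2^{\Theta(\varepsilon n)}$. Conditioning on it then changes the time scale: descriptions may run for about $|w|$ steps at no description-length cost, since reading $w$ sequentially already takes that long, and $w$ itself carries no nonuniform information. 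This also explains why the statement insists that $w$ is not random-access: sequential access is what forces, and licenses, the time budget.

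The key steps are as follows.

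\textbf{Step 1 (instantiating the pair).} Instantiate the SoI inequality with a pair $(x,y)$ where $x\in\bool^n$ is the candidate input and $y$ is derived from $x$ by a fixed efficient map. As in the SoI-to-meta-complexity direction of \Cref{thm:equivalence_SoI}, following \citep{DBLP:conf/coco/Hirahara22,HMO26}, a natural choice is $y=\mathrm{Ext}(x,z)$ or an encoding of $x$ under a seed $z$. Then $\nKt(x,y\mid w)$ is at most roughly $\nKt(x\mid w)$ plus $|z|$.

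\textbf{Step 2 (a short certificate).} Applying conditional SoI in the order $\nKt(y\mid w)+\nKt(x\mid y,w)$ gives the following. When $\nKt(x\mid w)$ is large, $\nKt(x\mid y,w)$ is forced to be at most $\polylog(n)$ plus a gap term. This yields a short nondeterministic description of $x$ from $y$, and guessing and verifying that description acts as a certificate. Conversely, when $\nKt(x\mid w)$ is small, the counting bound on $\nKt(y\mid w)$ shows that no such certificate exists for most seeds. This gives a distinguisher for high versus low conditional complexity. It runs in nondeterministic time about $2^{O(\text{description length})}\cdot|w|$, and its complement runs in the same time.

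\textbf{Step 3 (removing $w$).} Choose $|w|=2^{\varepsilon' n}$ with $w$ computable in time $\poly(|w|)$. Then $\nKt(x\mid w)$ and $\nKt(x)$ differ by at most $O(\log n)$ in the regime $\nKt(x)\ge\varepsilon' n$, since the description can regenerate $w$ within its time bound. The distinguisher therefore decides a dense subset of $\mathcal{R}_{\nKt[\gamma n]}$ for some $\gamma>1/2$, in $\NTIME[2^{\varepsilon n}]\cap\coNTIME[2^{\varepsilon n}]$. For $\varepsilon'$ small enough relative to the $\varepsilon(\gamma)$ of \Cref{thm:intro_nKt_lb}, this contradicts that theorem.

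The main obstacle is Step 2: making the soundness direction work nondeterministically. A short $\nKt$ description of $x$ given $y$ must uniquely specify $x$, so every path outputs $x$ or $\bot$. Refuting low complexity therefore requires a $\coNTIME$ certificate, which is where the density and counting arguments, together with the gap $\gamma>1/2$, enter; this is also why \Cref{thm:intro_nKt_lb} needs $\gamma>1/2$. The approach I would try first is to mirror the proof of the $\nKt$ lower bound itself and use the SoI-derived certificate only on the \emph{yes} side. The \emph{no} side would be handled by enumerating programs of length below $\gamma n$ conditioned on $w$, using the fact that the $\polylog$ error keeps the enumeration within $2^{\varepsilon n}$.
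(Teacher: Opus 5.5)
There is a genuine gap. The proposal never yields an algorithm running in time below $2^{\varepsilon n}$, and the conditional string as you use it cannot supply one.

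\textbf{Where the argument breaks.} A long, uniformly computable $w$ gives no ``free'' time. $\nKt(\cdot\mid w)$ still charges $\lceil\log t\rceil$ for every step, including steps spent reading $w$. A string with no information content also cannot replace anything your algorithm would otherwise have to search for. So $w$ does no work in your argument, and the whole burden falls on Step 2, which is not an argument:
\begin{itemize}
\item Forcing $\nKt(x\mid y,w)$ to be small requires already knowing that $\nKt(y\mid w)\approx\nKt(x\mid w)$. That is the meta-complexity question itself.
\item A short description of $x$ from $y$ certifies nothing about $\nKt(x)$ being large. Moreover, verifying that such a description is single-valued is a universal statement over witnesses, not an NP certificate.
\item Your fallback for the other side fails on time. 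There are $2^{\gamma n}\gg 2^{\varepsilon n}$ programs of length below $\gamma n$, and checking whether one of them single-valuedly outputs $x$ quantifies existentially and universally over witnesses of length up to $2^{\gamma n}$. The $\polylog$ error of SoI does nothing to shrink either quantity.
\end{itemize}

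\textbf{The missing idea.} Take $w$ to be the \emph{prefix of $x$ itself}, and exploit the fact that the error $e(|x|+|y|)$ is independent of $|w|$. The paper's proof has three steps.
\begin{itemize}
\item \Cref{lem:cond-SoI-imply-short-time}: split $x$ at the first prefix whose optimal program is slow; conditional SoI then forces the suffix's program to be fast. This gives $\nK^{O(2^{e(|x|)})}(x\mid w)\le\nKt(x\mid w)+O(e(|x|))$, with a time bound depending only on $|x|$.
\item Cut the input $x\in\{0,1\}^n$ into $k=\lceil n/(100e(n))\rceil$ blocks $x_1,\dots,x_k$ of length about $100e(n)$. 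The chain rule (SoI applied $k$ times in one direction, concatenating programs in the other) shows that $\sum_i\nKt(x_i\mid x_1\circ\cdots\circ x_{i-1})$ approximates $\nKt(x)$ within $O(k\cdot e(n))$.
\item The lemma replaces each term by $\nK^{t}(x_i\mid x_1\circ\cdots\circ x_{i-1})$ with $t=O(2^{e(100e(n))})=2^{\poly(\log\log n)}$, at extra cost $O(e(100e(n)))$ per block. Each such term is computable deterministically by brute force over programs of length $O(e(n))$ and all witnesses of length $t$.
\end{itemize}
The whole estimate, with total error at most $n/10$, therefore takes time $2^{2^{\poly(\log\log n)}}=2^{o(n)}$. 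This places $\gapmnktp[0.6n,0.9n]$ in $\mathsf{DTIME}[2^{o(n)}]$, contradicting \Cref{thm:lb_nkt}, i.e., the moreover part of \Cref{thm:intro_nKt_lb} with $\gamma=0.6$.

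Your endpoint is right. What is missing is the block decomposition together with the time-compression lemma, which is what keeps every enumeration subexponential.
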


Our setting is related to the failure of SoI (without a conditional string) in some oracle world (\Cref{thm:neg-o-soi}). There are however a couple of differences. While the oracle can be interpreted as a conditional string, in the oracle setting one assumes query access to the oracle, while here we do not need to assume query access to the conditional string. In addition, the oracle construction requires techniques from computational pseudorandomness, while the proof given here relies on a reduction to a computational lower bound, in the spirit of \Cref{thm:equivalence_SoI}.

\vspace{0.6cm}

\noindent \textbf{Use of AI Tools.} All ideas, arguments, and proofs presented in this paper were developed by the authors. The proofs were checked for correctness using GPT 5.6 and GPT 6.0.\\

\noindent \textbf{Acknowledgements.} 
We thank Jiatu Li, Hanlin Ren, Ninad Rajgopal, Rahul Santhanam, Rohan Acharya, and Yahel Manor for related discussions. This work was supported in part by the UKRI Frontier Research Guarantee Grant EP/Y007999/1 and the Centre for Discrete Mathematics and its Applications (DIMAP) at the University of Warwick. The authors would also like to thank the Isaac Newton Institute for Mathematical Sciences, Cambridge, for support and hospitality during the programme Logical Foundations of Complexity Theory (LFC), where part of the work on this paper was undertaken. The LFC programme was supported by EPSRC grant EP/Z000580/1.

\section{Techniques}\label{s:techniques}

We now describe the key ideas and techniques employed in the proof of our main results.\\

\Cref{t:soi-pKt-avg} states that, unconditionally, SoI holds for $\pnKt$ with high probability over random strings sampled efficiently. The theorem may proved by an argument closely resembling one from \cite{HLO25}, so we will not give a further sketch here.  We now move on to describe the proof of \Cref{t:soi-NP-coAM}.

\paragraph*{Complexity separations from symmetry of information (Theorem \ref{t:soi-NP-coAM}).}

The argument builds on techniques from \citep{RonneburgerThesis, DBLP:conf/stoc/0001LL25, DBLP:conf/tcc/HiraharaLO24}. We would like to argue in the contrapositive: namely, assuming $\NP \subseteq \coAMTIME[n^{\polylog(n)}]$ and demonstrating \emph{asymmetry} of information for $\pnKt$. 

Our starting point is the approach of \cite{RonneburgerThesis} for excluding SoI for $\Kt$. Unfortunately, the same argument does not yield asymmetry in the case of $\nKt$, let alone $\pnKt$. The problem in the case of $\nKt$ is that, during the exhaustive search over programs of length $n$, one must determine whether a given program has a unique nondeterministic output. This requires searching over witnesses of length up to $2^n$, meaning that exhaustive search would take time doubly-exponential in $n$. The problem in cases of randomized variants of $\Kt$ complexity is that known exponential-time procedures for determining complexity are probabilistic, and they cannot be relied upon to produce canonical high-complexity strings. In our case, of course, we must overcome both of these issues.

To that end, a first observation is that there exists a randomized $\SAT$-oracle algorithm (denoted $A^{\SAT}$ here) for approximating $\pnKt$ complexity. More specifically, given a pair of strings $u \in \bool^n$ and $v \in \bool^{\leq 2^n}$, $A^{\SAT}$ runs in time $2^{O(n)}$, accepts with high probability if $\pnKt(u \mid v) \geq 2n/3$, and rejects with high probability if $\pnKt(u \mid v) < n/2$.\footnote{An algorithm with these parameters can be obtained from \Cref{l:gapMpKtP-in-prBPE} combined with success amplification for $\pnKt$ following \cite{GKLO22}.} At a high level, our approach entails derandomizing $A^{\SAT}$ appropriately, yielding an efficient enough procedure to produce a canonical string $x$ with $\pnKt(x) \geq n/2$ (and, likewise, a canonical string $y$ with $\pnKt(y \mid x) \geq n/2$). 

We now describe how to derandomize $A^{\SAT}$. Here, we apply a recent \textit{explicit construction} result due to Chen et al. \cite{DBLP:conf/stoc/0001LL25} (discussed further below), which we adapt to obtain  a family of multisets $\{T_s\}_{s \in \N}$ pseudorandom infinitely often against ($\AM \cap \co\AM$)-oracle circuits of size $s$. For any desired constant $\delta > 0$, each $T_s$ has a representation of $\rnKt$ complexity at most $2^{(\log s)^{\delta}}$, and each $T_s$ contains at most $\poly(s)$ elements. Making use of our assumption that $\NP \subseteq \co\AMTIME[n^{\polylog(n)}]$, on input $(u,v)$, $A^{\SAT}$ can be simulated by an $(\AM \cap \co\AM)$-oracle circuit $C_{(u, v)}$ of size at most $2^{\poly(n)}$ that treats its input as internal randomness for $A^{\SAT}$. Thus, for infinitely many choices of $s = 2^{\poly(n)}$, the pseudorandom set $T_s$ fools such circuits $C$.

Combining the above ideas, for infinitely many input lengths $n \in \N$ and any desired constant $\delta' > 0$, we may obtain a pair of strings $x,y \in \bool^n$ with $\pnKt(x) \geq n/2$ and $\pnKt(y\mid x) \geq n/2$, but $\pnKt(x, y) \leq \rnKt(x, y) \leq 2^{n^{\delta'}}$. In particular, let $x$ be the lexicographically first string of length $n$ such that $C_{(x, \epsilon)}$ accepts with probability greater than $1/2$ over $T_s$, where $\epsilon$ denotes the empty string. On one hand, the lower bound $\pnKt(x) \geq n/2$ follows from the definition of $C_{(x, \epsilon)}$ and the pseudorandom property of $T_s$. (One may define $y$ such that $\pnKt(y \mid x) \geq n/2$ similarly.) On the other hand, given $T_s$, an exhaustive procedure to \emph{construct} $x$ and $y$ runs in time $2^{\poly(n)}$ provided an $(\AM \cap \co\AM)$-oracle. This, together with the upper bound on $\rnKt(T_s)$, implies $\rnKt^{\AM \cap \co\AM}(x,y) \leq 2^{n^{\delta'/2}}$. The desired bound $\rnKt(x,y) \leq 2^{n^{\delta'}}$ follows by ``collapsing'' the $(\AM \cap \co\AM)$-oracle into the nondeterminism in the definition of $\rnKt$.

Of course, the pair $x, y$ obtained above does not yet witness asymmetry, since $n/2 + n/2 \ll 2^{n^{\delta'}}$. To obtain a suitable pair of strings, we apply the above approach iteratively, as in \cite{DBLP:conf/tcc/HiraharaLO24}; in particular, we obtain a sequence of $\ell := 2^{n^{\delta'}}$ strings $x_1, ..., x_{\ell} \in \bool^n$ such that 
\[\sum_{i \in [\ell]}\pnKt(x_i \mid x_1, ..., x_{i - 1}) \geq \ell \cdot n/2\]
but 
\[\pnKt(x_1, ..., x_{\ell}) \leq \ell.\]
The above can then be used to show a violation of the standard formulation of SoI for $\pnKt$; see \Cref{sec:worst-soi-pnkt} for further details.\\

The proof of \Cref{thm:neg-o-soi} is basically a relativizing version of the argument used in \Cref{t:soi-NP-coAM}, adapted so that it works for all measures $\mu \in \{\rKt, \pKt, \nKt, \rnKt, \pnKt\}$ and
gives a linear gap $\varepsilon \cdot n$ instead of just polylogarithmic.\footnote{To the best of our knowledge, our approach to show failure of SoI in some relativized world is completely different than the one in \citep{DBLP:journals/tcs/LeeR05, TroyLeeThesis}.}

\paragraph*{Complexity collapses from symmetry of information (Theorem \ref{thm:collapses_from_SoI}).}

\Cref{thm:collapses_from_SoI} shows that surprising uniform and non-uniform complexity collapses would follow from SoI for $\nKt$ or $\rnKt$. We start by discussing the proof of the item stating that $\Few\EXP \subseteq \BPP^{\NP}$ if SoI holds for $\nKt$ with error $e(n) = O(\log n)$. The proof employs the PRG construction of \cite{DBLP:journals/jcss/ImpagliazzoW01, DBLP:journals/cc/TrevisanV07} as well as a version of the ``easy witness method'' due to \cite{DBLP:conf/coco/Kabanets00}.

Later in this section, we outline the proof of a correspondence theorem (\Cref{thm:equivalence_SoI}), which shows, in part, that SoI for $\rnKt$ implies the easiness of $\gapmrnktp$ for randomized algorithms with non-adaptive access to an $\NP$ oracle. We may rely on very similar techniques here to show that SoI for $\nKt$ with error $O(\log n)$ implies $\gapmnktp \in \pr\P^{\parallel \NP}$. See below for more detail on this piece of the argument. The rest of the proof of our stated uniform collapse consists in contradicting this easiness of $\gapmnktp$, assuming $\Few\EXP \nsubseteq \BPP^{\NP}$. 

We follow a win-win approach to obtain a PRG computable single-valued non-deterministically with small advice and secure against $\NP$-oracle circuits. On one hand, if $\EXP \nsubseteq \BPP^{\NP}$, then the desired PRG follows from \cite{DBLP:journals/jcss/ImpagliazzoW01, DBLP:journals/cc/TrevisanV07}; this step employs an intermediate win-win argument (see \Cref{t:io-PRG-EXP}). On the other hand, if $\EXP \subseteq \BPP^{\NP}$, then $\Few\EXP \neq \EXP$ by the assumption that $\Few\EXP \nsubseteq \BPP^{\NP}$. In this case, for any fixed $\Few\EXP \backslash \EXP$ language $L$, there are infinitely many strings $x \in L$ that do not have any witnesses of $\NP$-oracle circuit complexity $\poly(|x|)$. The idea is then to instantiate the PRG of \cite{DBLP:journals/cc/BabaiFNW93} with the lexicographically first witness for $x \in L$. The advice required is $(x, |W_x|) \in \bool^{\poly(|x|)}$, where $W_x$ is the witness set of $x$. Using this advice, one may non-deterministically guess all of the $2^{\poly(|x|)}$-many witnesses for $x$ and output the lexicographically first as a canonical high-complexity truth-table, taking time at most $2^{\poly(|x|)}$. 

The resulting PRG is secure against polynomial-size $\NP$-oracle circuits. Moreover, the outputs of this PRG can be shown to have bounded $\nKt$ complexity, so any algorithm for an appropriate gap version of $\MnKtP$ could be used to distinguish PRG outputs from random. Therefore, $\gapmnktp \notin \pr\P^{\NP}$, yielding the desired contradiction.

We omit proof sketches for the remaining uniform collapses, since all the proofs are quite similar to the one described above. Next, we briefly describe the proof of the result stating that, if SoI holds for $\nKt$ with error $e(n) = O(\log n)$, then $\EXP^{\parallel \NP} \subseteq \NP/\poly \cap \coNP/\poly$. Again, we will make use of the fact that SoI for $\nKt$ implies $\gapmnktp \in \pr\P^{\parallel \NP}$. Using techniques similar to those of \cite{DBLP:journals/siamcomp/AllenderBKMR06}, one may show that any fixed language $L \in \E^{\parallel \NP}$ has a non-uniform non-adaptive poly-time Turing reduction to $\gapmnktp$. Namely, using $L$ to instantiate the PRG of \cite{DBLP:journals/cc/BabaiFNW93}, outputs of the PRG have bounded $\nKt$ complexity (by hard-wiring the number of YES queries of the $\NP$ oracle), while uniformly random strings almost always have near-maximum $\nKt$ complexity. The reconstruction property of the PRG then yields a reduction from $L$ to $\gapmnktp$, and therefore we have $\E^{\parallel \NP} \subseteq \P^{\parallel \gapmnktp}/\poly \subseteq \P^{\parallel \NP}/\poly$. The stated inclusion $\EXP^{\parallel \NP} \subseteq \NP/\poly \cap \coNP/\poly$ follows from a downward collapse result of \cite{DBLP:journals/jacm/ShaltielU05}, given that $\E^{\parallel \NP}$ allows \emph{low-degree extension}, together with a padding argument.

The result stating that SoI for $\rnKt$ implies $\BPEXP^{\parallel \NP} \subseteq \NP/\poly \cap \coNP/\poly$ follows a similar approach together with the unconditional inclusion $\BPE^{\parallel \NP} \subseteq \AM\E/O(n)$.\\

\Cref{thm:consequences_SoI} gives a non-uniform upper-bound and a uniform lower-bound for meta-complexity that hold under SoI for $\nKt$ or $\rnKt$. The proofs rely on the complexity collapse results described above, the correspondence theorem described next, a result from \cite{DBLP:journals/siamcomp/AllenderBKMR06}, and a straightforward diagonalization argument. We refer to \Cref{sec:meta_SoI} for more detail.

\paragraph*{Correspondence Theorem (Theorem \ref{thm:equivalence_SoI}).} 

We now describe the techniques used to prove \Cref{thm:equivalence_SoI}, which exhibits a broad equivalence between asymmetry of information for $\rnKt$, hardness of approximating $\rnKt$, natural properties against $\rnKt$, and explicit constructions in the context of $\pr\BPTIME[\cdot]^{\parallel \NP}$. The equivalence between \Cref{i:natural-property-rnKt} and \Cref{i:hardness-explicit-construction} is straightforward, and likewise the implication \Cref{i:easiness-Gap-MrnKtP} $\implies$ \Cref{i:natural-property-rnKt}, so we will not discuss the proofs further. In order to prove the \Cref{i:natural-property-rnKt} $\implies$ \Cref{i:easiness-Gap-MrnKtP}, obtaining a worst-case  algorithm for $\gapmrnktp$ from a natural property against $\rnKt$, we rely on the worst-case to average-case reduction framework of \cite{DBLP:conf/focs/Hirahara18}. A couple of differences in our setting compared to \cite{DBLP:conf/focs/Hirahara18} are as follows. We make use of a PRG based on a construction of Raz et al. \cite{RazRV02_jcss_journals} as in \cite{DBLP:conf/focs/Hirahara20} to obtain $\rK^{\poly(n, t_D), \parallel D}$ upper bounds in the case that $D$ is a $t_D$-time distinguisher. Moreover, the distinguisher algorithm runs in time ${\sf quasi}\text{-}\poly(n)$ and makes non-adaptive $\NP$-oracle queries, resulting in an upper bound on $\rK^{{\sf quasi}\text{-}\poly(n), \parallel \NP}$ and thus $\rnK^{{\sf quasi}\text{-}\poly(n)}$ (see \Cref{l:rKt--parallel-NP}). 

The proof of \Cref{i:easiness-Gap-MrnKtP} $\implies$ \Cref{i:SoI-rnKt}, stating that easiness of $\gapmrnktp$ for $\pr\BPTIME[n^{\poly(\log n)}]^{||\NP}$ implies SoI for $\rnKt$ with error $e(n) = \poly\log n$, follows along the framework of \cite{DBLP:conf/coco/Hirahara22, GK22} for proving SoI via hybrid argument, assuming easiness of meta-complexity. This framework is applied in a setting similar to that described above: the PRG of \cite{RazRV02_jcss_journals, DBLP:conf/focs/Hirahara20}, distinguishers in $\BPTIME[n^{\poly(\log n)}]^{\parallel \NP}$, and bounds on $\rK^{{\sf quasi}\text{-}\poly(n), \parallel \NP}$ collapsing to bounds on $\rnK^{{\sf quasi}\text{-}\poly(n)}$. 

To prove that Symmetry of Information for $\rnKt$ implies a worst-case $\pr\BPTIME[n^{\poly(\log n)}]^{||\NP}$ algorithm for $\gapmrnktp[n-\poly(\log n),n-1]$ (\Cref{i:SoI-rnKt} $\implies$ \Cref{i:easiness-Gap-MrnKtP}), our starting point is a recent paper from \cite{HMO26}, which proved a similar statement for the case of $\rKt$. Though we make use of some key ideas from that work, the setting of $\rnKt$ introduces additional challenges. 

One concept from \cite{HMO26} that we are able to borrow quite directly yields that, assuming SoI for $\rnKt$ with error $e(n)$, for some $t \leq 2^{O(e(n))}$, for any $x \in \bool^n$, $\rnK^t(x) \leq \rnKt(x) + O(e(n))$.\footnote{Actually, \cite{HMO26} obtained the statement with $O(e(n) \cdot \log n)$ as the leftmost term; we use an alternative proof that slightly improves this parameter. See \Cref{lem:rnk-t-rnkt}.} Moreover, arguing again along the same lines as \cite{HMO26}, we can show that if $m$ is a witness for $\rnK^t(x) \leq \rnKt(x) + O(e(n))$ (namely, $|m| \leq \rnKt(x) + O(e(n))$ and $m$ is an $\rnK$-style description of $x$ with running time $t$), then $\rnKt(m \mid x) \leq O(e(n))$. At this point, with this statement for the setting of $\rKt$, the task of \cite{HMO26} is nearly complete: omitting some technical details concerning randomness in the complexity measure, one may essentially brute-force over programs of length $O(e(n))$ to obtain a list containing $m$, run each program on the list to see if it outputs $x$, and output the length of the shortest one that does.

In our setting, of course, the same idea does not work directly. Even with a candidate program $m$ in hand, we cannot simply run it to see if it outputs $x$. The same is true in the first part of the process, where we would brute-force over all $O(e(n))$-length strings and collect a list of their outputs. We are also not guaranteed that any such string represents a valid single-valued nondeterministic procedure. Luckily, these issues may be handled quite easily by the availability of an $\NP$ oracle. Roughly, with a string $v$ in hand, we can ask queries of the following form, where $r$ is chosen randomly and $U$ denotes a universal TM:
\begin{itemize}
    \item Does there exist a witness $w$ such that $U(v,w, r) = z$ with $z \neq \bot$ in time $t$?
    \item Do there exist two witnesses $w$ and $w'$ such that $U(v, w, r) = z$ and $U(v, w', r) = z'$ in time $t$, with $z \neq z'$?
    \item Does there exist a witness $w$ such that $U(v,w, r) = z \neq \bot$ in time $t$, and $z_i = 1$?
\end{itemize}
Clearly, if the first query is true and the second is false, then non-adaptive queries of the third type can be used to determine the unique nondeterministic output of $v$ on randomness $r$.

However, a more serious obstacle remains: if we use non-adaptive $\NP$ queries to produce a list of strings containing $m$, and then we ask another round of queries to determine the outputs of the strings on this list, then we have introduced a layer of adaptivity. At a high level, we overcome this with a careful restructuring of the procedure so that the second layer of queries does not depend on the strings collected in the first. For every string $v \in O(e(n))$, we really only need to know the length of the smallest $m$ such that $(v, x)$ outputs $m$ and $m$ outputs $x$. The key observation is that this kind of question can be collapsed into a single $\NP$ query, which does not require $m$ as input. Several auxiliary queries will be still be required: checking that $(v,x)$ and $m$ are single-valued nondeterministic, for example, and determining the value of their outputs on several independent choices of randomness to pinpoint the majority output in the sense of $\rnK$. Fortunately, this can all be handled non-adaptively. See \Cref{l:SoI-rnKt} for more detail.\\

As mentioned above, the proof of \Cref{thm:application_correspondence} follows from \Cref{thm:consequences_SoI} (Item 2) and \Cref{thm:equivalence_SoI} via a win-win argument. We refer to \Cref{sec:quick_lower_bound}.

\paragraph*{Lower bounds for computing $\mathsf{nKt}$ via symmetry of information (Theorem \ref{thm:intro_nKt_lb}).} 

We now discuss the ingredients in the proof that $\MnKtP \notin \NTIME[2^{\varepsilon n}] \cap \coNTIME[2^{\varepsilon n}]$: i.e., the first part of \Cref{thm:intro_nKt_lb}. We will build on an approach to lower bounds due to \cite{DBLP:conf/coco/Hirahara22}, which was originally used to prove that $\mathsf{MrKtP}\not\in\mathsf{BPTIME}[2^{\varepsilon n}]$ for some constant $\varepsilon>0$. The proof consists of two steps. In the first step, the author shows that $\mathsf{MrKtP}\in\mathsf{BPTIME}[2^{\varepsilon n}]$ implies SoI for $\rKt$, employing the ``meta-complexity $\implies$ SoI'' hybrid argument approach discussed above. The second step of the proof follows the technique of \cite{RonneburgerThesis} (along with the assumption that $\mathsf{MrKtP} \in \mathsf{BPTIME}[2^{\varepsilon n}]$) for showing that SoI does not hold. Combining these steps, \cite{DBLP:conf/coco/Hirahara22} reaches a contradiction and thus may conclude that $\mathsf{MrKtP}$ is hard.

A limitation of this argument, which blocks its direct translation to $\mathsf{MnKtP}$, is the randomness required in the reconstruction procedure of the PRG of \cite{RazRV02_jcss_journals, DBLP:conf/focs/Hirahara20}, which is employed in the first step. Randomized complexity measures such as $\rKt$ naturally incorporate the presence of randomness \emph{by definition}, but randomness is not allowed in the definition of $\nKt$. Thus, it is not straightforward to give an upper bound on $\nKt$ via reconstruction. (For the same reason, the argument does not generalize to prove $\MKtP$ lower bounds, and in fact it remains open whether $\MKtP \notin \P$ unconditionally.)

We get around this limitation in the case of $\nKt$ by making use of nondeterminism to derandomize reconstruction. As a first attempt, we notice that if $\MnKtP\in\NTIME[2^{\varepsilon n}]\cap\co\NTIME[2^{\varepsilon n}]$ for a small constant $\varepsilon > 0$, then for any $m \in \N$, any string $s$ such that $\nKt(s)\ge m$ will have $\MnKtP$-oracle circuit complexity at least $m/\polylog(m)$. In particular, arguing in the contrapositive, assume that $s$ is the truth-table of an $\MnKtP$-oracle circuit $C$ of size less than $m/(\log m)^2$. Then, we may encode the circuit $C$ with fewer than $m/\log m$ bits into an $\nKt$-description for $s$, and moreover, we may collapse the non-determinism required to simulate the assumed $\NTIME[2^{\varepsilon n}]\cap\co\NTIME[2^{\varepsilon n}]$ oracle gates into the non-determinism in the definition of $\nKt$. Overall, this implies
\begin{align*}
    \nKt(s) &< \frac{m}{\log m} + \log(m\cdot 2^{\varepsilon m})
    \\&< m. 
\end{align*}
To obtain such $s$ efficiently, again applying the assumption that $\MnKtP$ is easy, we can non-deterministically guess $s$ and refute when $\nKt(s)$ is not large enough. Then, we can plug $s$ into the (relativized) pseudo-random generator from \cite{IW97} to derandomize the reconstruction procedure of \cite{RazRV02_jcss_journals, DBLP:conf/focs/Hirahara20}, resulting in an $\nKt^{\MnKtP}$ upper bound rather than an $\rKt^{\MnKtP}$ upper bound. Next, we may again use the assumed easiness of $\MnKtP$ to collapse the oracle into the non-determinism in the definition of $\nKt$, resulting in an $\nKt$ upper bound without any oracle.

The technique in the previous paragraph can be used to obtain the lower bound $\MnKtP \notin \NTIME[T(n)]\cap\co\NTIME[T(n)]$ for $T(n)=2^{n^\varepsilon}$, but it does not suffice for $T(n)=2^{\varepsilon n}$. This is because, in order to derandomize the reconstruction procedure of \cite{RazRV02_jcss_journals, DBLP:conf/focs/Hirahara20} via \cite{IW97}, we would need a string of $\MnKtP$-oracle complexity some large polynomial in $n$, so we would need to query the $\MnKtP$-oracle at length $n^c$ for some constant $c \in \N$. Hence, the derandomized reconstruction would take time at least $T(n^c)$, and when $T(n)>2^{n^{1/c}}$, it holds that $T(n^c)>2^n$, yielding a trivial upper bound on the relevant $\nKt$ complexity. To obtain a lower bound with $T(n)=2^{\varepsilon n}$, a key observation is that the reconstruction procedure in question only requires $O(n)$ bits of randomness. Then, instead of \cite{IW97}, we apply the PRG of \cite{RazRV02_jcss_journals, DBLP:conf/focs/Hirahara20} \emph{for a second time}, instantiated with a different set of parameters, to derandomize the reconstruction procedure. The upshot is that a truth-table with $\nKt$ complexity $O(n)$ suffices to generate the $O(n)$ pseudo-random bits to fool reconstruction. We note that the analysis here is quite delicate. For example, one must keep careful track of the randomness length in the definition of $\rK$ as well as the input lengths of oracle queries. See Section \Cref{ss:derand-rK} for more details.

One problem still remains: the reconstruction procedure of the PRG only recovers $x$ with $1/\poly(n)$ probability, so we need some additional advice to identify $x$ after derandomization. But the advice depends on our non-deterministic choice of hard truth table $s$, so we cannot store it in the description.
Nevertheless, using ideas from pair-wise independent sampling (see e.g. \cite{DBLP:journals/eccc/ECCC-TR97-020}), we can amplify the success probability of reconstruction from $1/\poly(n)$ to $2/3$, but still only using $O(n)$ bits of randomness (see \Cref{ss:rK-rand-amp}). This eliminates the need for additional advice, as we can just output the string that appears the most often.

We have given a brief overview of the proof of the main part of \Cref{thm:intro_nKt_lb}. Next, we discuss the ``moreover'' part. We observe that the first step of the proof essentially extends to any dense language $L\subseteq\mathcal R_{\nKt[n]}$. On the other hand, for the second step of the proof, letting $y \in \bool^n$ be such that $\nKt(y) \geq n$ and $x \in \bool^n$ such that $\nKt(x \mid y) \geq n$, SoI guarantees that $\nKt(x,y) \gtrsim 2n$, but it does not guarantee that $(x,y)\in L$, so it is possible that we cannot find $x$ by brute-force with access to $L$. Fortunately, the first step of the proof actually gives us something stronger than SoI: for any $x, y \in \{0, 1\}^*$, letting $G_1$ and $G_2$ be PRGs with appropriate output lengths approximately $\nKt(x \mid y) \geq n$ and $\nKt(y)$, there exist $z,z'$ such that $G_1(x,z)\circ G_2(y,z')\in L$. (See \Cref{l:nKt-SoI} for more detail.) Then, given $y \in L$, we enumerate not only over $x$, but also over $z,z'$ to find $G_1(x, z) \circ G_2(y, z') \in L$.

This works for any dense language  $L\in\mathcal{R}_{\nKt[(1 - o(1)) \cdot n]}$, but not for $L\in\mathcal{R}_{\nKt[\gamma\cdot n]}$ for $\gamma\in(0,1)$ in general. In particular, assuming easiness of $L$, we can only find $y \in L$ such that $\nKt(y)\ge\gamma \cdot n$, and then we can find $x \in \{0, 1\}^n$ and $y' = G_1(x, z)\circ G_2(y_1, z') \in L$ such that
\[
    \nKt(y')\gtrsim\gamma\cdot(n + \nKt(y)) \geq  (\gamma+\gamma^2)\cdot n.
\]
We can see the problem here: when $\gamma\le(\sqrt5-1)/2 \approx 0.618$, we are not guaranteed that $\nKt(y')\ge(1+\Omega(1))\cdot n$, so we cannot derive a contradiction. To get around this obstacle, the key idea is to \emph{iterate the construction $k$ times} for a large enough constant $k$, applying our extended SoI statement again to $y'$, and so on. The detailed proof can be found in \Cref{ss:nKt-iter-cons}. On the other hand, \Cref{ss:limitations_gamma} discusses the limits of our techniques for $\gamma \leq 1/2$.

\paragraph*{Lower bounds for computing $\bm{\mathsf{rnKt}}$ via explicit constructions (Theorem \ref{thm:intro_rnKt_lbs}).}

We now move on to discuss the techniques behind \Cref{thm:intro_rnKt_lbs}. The proof of \Cref{i:rnkt_lb_2} naturally extends the lower bound proof for $\rKt$ in \cite{DBLP:conf/coco/Hirahara22}. Conceptually, this setting is simpler than the proof of \Cref{thm:intro_nKt_lb} since derandomization is not required. Still, the combination of randomness and nondeterminism makes some parts of the argument slightly more involved, and we include a complete proof in \Cref{ss:rnKt_lb_2} to keep the presentation self-contained. 

The proofs of \Cref{i:rnkt_lb_1} and \Cref{i:rnkt_lb_3} exploit new results from \cite{DBLP:conf/stoc/0001LL25}: namely, an \textit{explicit construction algorithm} for dense $\coAM$ properties in the form of Arthur-Merlin protocols with non-binary output. At the start of the protocol, prover $P$ (which is computationally unbounded) and verifier $V$ both receive an advice string $\alpha$. After $k = O(1)$ rounds of interaction, $V$ outputs a string $x$ based on $\alpha$, the public randomness $\{r_j\}_{j \in [k]}$, and the prover's responses $\{w_j\}_{j \in [k]}$. Such a protocol is said to be \emph{single-valued} if it satisfies certain \emph{conformity} and \emph{resiliency} properties; in particular, these properties imply that $(P, V)(\alpha)$ always outputs a canonical string $x$, and that, for any prover $P'$, $(P', V)(\alpha)$ belongs to $\{x, \bot\}$ with high probability. See \Cref{ss:rnKt_lb_1} for detail.

The main technical result of \cite{DBLP:conf/stoc/0001LL25} proves that for every dense enough $L \in \coAM$, there exists a sequence of strings $\{x_n\}$ and a non-binary public-coin protocol $(P, V)$ satisfying
\begin{itemize}
    \item $(P, V)$ is single-valued Arthur-Merlin protocol computing $\{x_n\}$;
    \item the advice complexity of $(P, V)$ is small, and $V$ can be computed efficiently;
    \item for infinitely many $n$, $x_n \in L \cap \{0, 1\}^n$. 
\end{itemize} 

A sketch of the proof of \Cref{i:rnkt_lb_1} is as follows. We first observe that, after applying round reduction to the protocol above, the sequence $\{x_n\}$ admits small $\rnK$ complexity: say, $\rnKt(x_n) \le \theta(n)$ for some $\theta(n) < n - 1$. Then, we consider any dense language $L \subseteq \mathcal{R}_{\rnKt[\theta(n) + 1]}$ and assume $L \in \coAM$. By the existence of an Arthur-Merlin protocol as above, we know that for infinitely many $n$, $L \cap \{0, 1\}^n$ contains a string with $\rnKt$ complexity smaller than $\theta(n) + 1$, contradicting the definition of $L$. In our formal proof of \Cref{i:rnkt_lb_1}, we extend this idea to quasi-polynomial time regime by a simple padding argument.

One important corollary of the explicit construction protocol in \cite{DBLP:conf/stoc/0001LL25} states that $(\AMEXP \cap \coAMEXP)/2^{n^\varepsilon}\nsubseteq \SIZE^{\mathcal{O}}[2^n/n]$ for any oracle $\mathcal{O}\in\AM\cap\coAM$ (see \cite[Theorem 1.2]{DBLP:conf/stoc/0001LL25}). This follows from the close connection between explicit constructions and circuit lower bounds: roughly speaking, if $\Pi_{\mathsf{hard}}$ is defined as the set of strings in $\{0,1\}^{2^n}$ that are not truth-tables of $\mathcal{O}$-oracle circuits of size at most $2^n/n$, then $\Pi_{\mathsf{hard}}$ is dense and $\Pi_{\mathsf{hard}}\in\coAM$. By the explicit construction protocol from above, one obtains a sequence of truth-tables defining a language $L_{\mathsf{hard}}$ such that  $L_{\mathsf{hard}}\in(\AMEXP \cap \coAMEXP)/2^{n^\varepsilon}$ but $L_{\mathsf{hard}} \notin\SIZE^\mathcal{O}[2^n/n]$.

Our idea for proving \Cref{thm:intro_rnKt_lbs}, \Cref{i:rnkt_lb_3} is to instantiate the PRG of \cite{IW97} with the truth-table of $L_{\mathsf{hard}}$, obtaining a family of multi-sets $\{T_s\}_{s \in \mathbb{N}}$ which are pseudorandom against $(\AM \cap \coAM)$\textnormal{-}oracle circuits of size $s$ on infinitely many input lengths $s$. 

Note that, because $L_{\mathsf{hard}}$ is computable in $(\AMEXP \cap \coAMEXP)/{2^{n^\varepsilon}}$ and the \cite{IW97} generator is polynomial-time computable, every element inside $T_s$ has $\rnKt$ complexity bounded by $2^{O(\log^\varepsilon s)}$. Therefore, any dense language $L$ consisting only of strings with large $\rnKt$ complexity distinguishes $T_s$ from uniformly random strings, for every sufficiently large $s$. The pseudorandom condition then gives $L \notin \SIZE^{\AM \cap \coAM}[\poly(n)]$. Similarly to \Cref{i:rnkt_lb_1}, the lower bound is later amplified to quasi-polynomial by padding argument.

\paragraph{Unconditional failure of conditional SoI (Theorem \ref{thm:failure-cond-SoI})}

Lastly, we describe the proof of \Cref{thm:failure-cond-SoI}, which states that conditional SoI does not hold: for any error term $e(n) = \poly\log n$, there exist strings $x, y$, and $w$ such that 
\[\nKt(x, y \mid w) < \nKt(x \mid y, w) + \nKt(y \mid w) - e(|x| + |y|).\]
The proof of this statement builds on the unconditional lower bound for computing $\gapmnktp$ as in \Cref{thm:intro_nKt_lb}. That is, assuming conditional SoI, we exhibit an algorithm for $\gapmnktp$ contradicting this lower bound.

As a first attempt, assume conditional SoI, and consider the following deterministic algorithm for $\gapmnktp[0.6n, 0.9n]$.
\begin{quote}

    On input $x \in \bool^n$, partition $x$ into $k = n/(100e(n))$ blocks $(x_1, ..., x_k)$, with each $x_i$ being of length roughly $100e(n)$. For each $i \in [k]$, compute $v_i := \nKt(x_i \mid x_1, ..., x_{i - 1})$ by brute force, and accept iff $\sum_{i \in [k]}v_i \leq 0.75 n$.
    
\end{quote}
On one hand, it holds trivially that
\[\nKt(x) \leq \sum_{i \in [k]}\nKt(x_i \mid x_1, ..., x_{i - 1}) + O(\log n) = \sum v_i.\]
On the other hand, by our assumption of conditional SoI applied $k$ times,\footnote{In fact, standard (non-conditional) SoI would suffice for this step.} we have that 
\[\nKt(x) \geq \sum\nKt(x_i \mid x_1, ..., x_{i - 1}) - k\cdot e(n) = \sum v_i - \frac{n}{100}.\]
Since $\sum v_i$ closely approximates $\nKt(x)$, the algorithm above clearly solves $\gapmnktp[0.6n, 0.9n]$.

The problem with this approach is that our algorithm is highly inefficient. Specifically, to determine $\nKt(x_i \mid x_1, ..., x_{i - 1})$ by brute force, one must consider non-deterministic programs running in time up to $2^{100e(n)} = 2^{\poly\log n}$ and therefore must exhaustively search over witnesses of that length, taking time $2^{2^{\poly \log n}}$. The lower bound of \Cref{thm:intro_nKt_lb} does not rule out an algorithm with this time complexity.

To improve the running time of the algorithm, we argue again as in \cite{HMO26} and the proof of \Cref{thm:equivalence_SoI}; namely, we argue that $\rnKt$ approximately upper-bounds $\rnK^t$. Crucially, by assuming conditional SoI, we get a stronger version of the statement that allows us to condition on an auxiliary string $w$. In particular, for any strings $x, w$, 
\[\nK^{O(2^{e(|x|)})}(x \mid w) \leq \nKt(x \mid w) + O(e(|x|)).\]

Returning to our algorithm for $\gapmnktp[0.6n, 0.9n]$, instead of computing $\nKt(x_i \mid x_1, ..., x_{i - 1})$ by brute force, we may instead compute $\nK^{t}(x_i \mid x_1, ..., x_{i - 1})$ by brute force, where $t = O(2^{e(|x_i|)}) = O(2^{e(100e(n))})$. This only incurs an additional additive error term of $n/(\poly \log n)$. Moreover, the algorithm now takes time only $2^{100e(n)}\cdot 2^{O(2^{e(100e(n))})} \leq 2^{2^{\poly(\log \log n)}} = 2^{o(n)}$, contradicting the lower bound of \Cref{thm:intro_nKt_lb}.

\section{Preliminaries}\label{sec:preliminaries}

\subsection{Notation}

We denote the empty string by $\epsilon$, and the length of a string $x$ by $|x|$.  
For strings $x \in \{0,1\}^n$ and $y \in \{0,1\}^m$, their concatenation is written as $x \circ y \in \{0,1\}^{n+m}$.  

If $x \sim \mathcal{D}$, then $x$ is sampled from the distribution $\mathcal{D}$.  
For any $a$ in the support of $\mathcal{D}$, $\mathcal{D}(a)$ denotes its probability.  
We write $\mathcal{U}_\ell$ for the uniform distribution over $\{0,1\}^\ell$.

\subsection{Computational Model}

We consider multi-tape Turing machines over the alphabet $\Sigma = \{0,1,\bot\}$. In addition to the usual input tape and work tapes, in some contexts we allow a machine to access three additional tapes: one holding an auxiliary string, one holding a non-deterministic guess, and one holding the randomness. We assume that the machine has \emph{sequential access} (i.e., no random access) to the randomness tape and the non-deterministic tape.\footnote{To avoid confusion, sequential access means that the tape head can remain at a given tape cell or move left or right at each step of the computation. In particular, we do not require read-only or one-way access to the tape.}

For a machine $M$, time bound $t$, and string $x$, we let $\mathsf{K}^t_M(x)$ be the minimum length of a string $p \in \{0,1\}^*$ such that $M$ outputs $x$ on input string $p$ when running for at most $t$ steps. We say that a (universal) machine $U$ is \emph{time-efficient} if for every machine $U'$ there is a constant $C$ such that, for every string $x$ and time bound $t$, we have $\mathsf{K}_U^{C \cdot t \log t}(x) \leq \mathsf{K}_{U'}^{t}(x) + C$. It is known that time-efficient machines exist (see, e.g.,~\citep{DBLP:series/txcs/LiV19}). We fix such a machine $U$ when specifying Kolmogorov complexity.

We also assume an encoding function $\langle \cdot \rangle$ for machines (which we omit for notational simplicity) such that $U$, when given (the encoding of) a machine $M$, an input $\alpha$, an auxiliary string $x$, a non-deterministic guess $w$, and a random string $r$, runs in time $\widetilde{O}(|M|, t)$ and outputs the string produced by $M(\alpha; x; w; r)$, where $|M|$ denotes the encoding length of $M$ and $t$ is the number of steps $M$ runs on $(\alpha; x; w; r)$.

A \emph{program} $\Pi$ is defined as a pair $\Pi \vcentcolon= (M, \alpha)$, where $M$ is a Turing machine and $\alpha \in \bool^*$ is an ``advice'' string. The output of $\Pi$ on $(x; w; r)$ (which corresponds to the auxiliary input, the non-deterministic guess, and the randomness), denoted $\Pi(x; w; r)$, is given by $M(\alpha; x; w; r)$.

The \emph{running time} of the program $\Pi=(M,\alpha)$ on input $(x, w, r)$ is defined to be the number of steps $U$ runs on input $\mleft(\langle M\rangle;\alpha; x; w; r\mright)$, where $\langle M \rangle$ is an encoding of the machine $M$.\footnote{As noted above, we assume a fixed standard encoding for Turing machines.}

We also define the \emph{size} of the program $\Pi=(M,\alpha)$. To encode a machine-advice pair $(M, \alpha)$, we use the following format:
\[
s_\alpha \,01\, \langle M \rangle\, \alpha,
\]
where $s_\alpha$ is the binary representation of the integer specifying the length of $\alpha$, with each bit duplicated. The size of $\Pi$, denoted $|\Pi|$, is the total length of this description.

We say that a program $\Pi$ \emph{non-deterministically} outputs $y$ on input $(x; r)$ in time $t$ if:
\begin{itemize}
    \item $\forall w \in \{0,1\}^t$, $\Pi(x; w; r)$ outputs either $y$ or $\bot$ in time $t$, and
    \item $\exists w \in \{0,1\}^t$, $\Pi(x; w; r)$ outputs $y$ in time $t$.
\end{itemize}

\subsection{Time-Bounded Kolmogorov Complexity}\label{ss:time_bounded_definitions}

We fix a time-efficient universal machine $U$, and define time-bounded Kolmogorov complexity with respect to $U$ (see \citep{DBLP:series/txcs/LiV19} for the required background in Kolmogorov complexity). In other words, whenever we say that a program $\Pi$ outputs $x$ in $t$ steps, formally, this means that $U(\Pi) = x$ when $U$ runs for at most $t$ steps on input $\Pi$. Our results are robust with respect to encoding choices.

Next, we provide formal definitions for various notions of time-bounded Kolmogorov complexity. We begin with deterministic versions of time-bounded Kolmogorov complexity.

	\begin{definition}[$\mathsf{K}^t$]
	For $x,y,r\in\bool^*$, $t\in\mathbb{N}$, and a oracle $\mathcal{O}$, the \emph{$t$-time-bounded Kolmogorov complexity of $x$ conditioning on $(y;r)$ and given oracle $\mathcal{O}$} is defined as 
		\[
		\mathsf{K}^{t,\mathcal{O}}(x\mid y; r) \eqdef \min_{\Pi \in \{0,1\}^*} \mleft\{|\Pi|  \,\,\middle\vert\,\,  \Pi^{\mathcal{O}}(y;\epsilon;r) \textnormal{ outputs $x$ within $t$ steps}\mright\}.
		\]
        We define $\K^{t,\mathcal{O}}(x\mid y)$ as $\K^{t,\mathcal{O}}(x\mid y;\epsilon)$, and $\K^{t,\mathcal{O}}(x)$ as $\K^{t,\mathcal{O}}(x\mid \epsilon)$.
	\end{definition}

        \begin{definition}[$\mathsf{Kt}$]
	For $x,y,r\in\bool^*$ and a oracle $\mathcal{O}$, the \emph{time-bounded Levin--Kolmogorov complexity of $x$ conditioning on $(y;r)$ and given oracle $\mathcal{O}$} is defined as 
		\[
		\mathsf{Kt}^{\mathcal{O}}(x\mid y;r) \eqdef \min_{\substack{\Pi \in \{0,1\}^* \\ t\in\mathbb{N}}} \mleft\{|\Pi| + \lceil \log t \rceil\,\,\middle\vert\,\,  \Pi^{\mathcal{O}}(y;\epsilon;r) \textnormal{ outputs $x$ within $t$ steps}\mright\}.
		\]
        We define $\Kt^{\mathcal{O}}(x\mid y)$ as $\Kt^{\mathcal{O}}(x\mid y;\epsilon)$, and $\Kt^{\mathcal{O}}(x)$ as $\K^{\mathcal{O}}(x\mid \epsilon)$.
	\end{definition}
    
    We can consider the case where the oracle calls are \emph{non-adaptive}. In this case, we will write $\mathsf{K}^{t,\parallel\mathcal{O}}(x)$ or $\mathsf{Kt}^{\parallel\mathcal{O}}(x)$.

Next, we will define other variants of the time-bounded Kolmogorov complexity measure.
For simplicity, we will define only the basic versions without oracles. These can easily be generalized to settings where an oracle is present.

\paragraph{Time-Bounded Kolmogorov Complexity with Randomness.}
We now define probabilistic variants of time-bounded Kolmogorov complexity. These definitions include a parameter $\lambda$ in the subscript to denote the ``success probability''. We will often omit the subscript when $\lambda = 2/3$.

    \begin{definition}[$\mathsf{rK}^t$]
    For $x,y\in\bool^*$, $\lambda\in [0,1]$, $t\in\mathbb{N}$, the \emph{randomized $t$-time-bounded Kolmogorov complexity of $x$ conditioning on $y$} is defined as
   \[
        \rK^{t,\mathcal{O}}_\lambda(x \mid y) = \min_{\Pi\in\bool^*}\mleft\{|\Pi|
  \,\,\middle\vert\,\,  \Prob_{r\sim\bool^{t}}\mleft[\Pi(y;\epsilon;r) \textnormal{ outputs $x$ within $t$ steps}\mright]\geq \lambda\mright\}.
    \]
	\end{definition}

        \begin{definition}[$\mathsf{rKt}$]
    For $x,y\in\bool^*$, $\lambda\in [0,1]$, the \emph{randomized time-bounded Levin--Kolmogorov complexity of $x$ conditioning on $y$} is defined as
   \[
        \rKt^{\mathcal{O}}_\lambda(x \mid y) = \min_{\substack{\Pi \in \{0,1\}^* \\ t\in\mathbb{N}}}\mleft\{|\Pi|  + \lceil \log t \rceil 
  \,\,\middle\vert\,\,  \Prob_{r\sim\bool^{t}}\mleft[\Pi(y;\epsilon;r) \textnormal{ outputs $x$ within $t$ steps}\mright]\geq \lambda\mright\}.
    \]
	\end{definition}

    	\begin{definition}[$\mathsf{pK}^t$]
    For $x,y\in\bool^*$, $\lambda\in [0,1]$, $t\in\mathbb{N}$, the \emph{probabilistic $t$-time-bounded Kolmogorov complexity of $x$ conditioning on $y$} is defined as
		\[
		\mathsf{pK}^t_{\lambda}(x\mid y) \eqdef \min \mleft\{s\in\mathbb{N}  \,\,\middle\vert\,\,  \Prob_{r\sim\bool^{t}}\mleft[\exists\Pi\in\bool^s\text{ s.t. }\Pi(y;\epsilon;r) \textnormal{ outputs $x$ within $t$ steps}\mright]\geq \lambda\mright\}.
        \]
    Equivalently,
    \[
        \pK^t_\lambda(x\mid y)\eqdef\min\mleft\{s\in\mathbb{N}\;\middle\vert\;\Prob_{r\sim\{0,1\}^t}\mleft[\K^t(x\mid y;r)\le s\mright]\ge\lambda\mright\}.
    \]
	\end{definition}

    \begin{definition}[$\mathsf{pKt}$]
    For $x,y\in\bool^*$, $\lambda\in [0,1]$, $t\in\mathbb{N}$, the \emph{probabilistic time-bounded Levin--Kolmogorov complexity of $x$ conditioning on $y$} is defined as
    \[
        \pKt_\lambda(x\mid y)\eqdef\min\mleft\{s\in\mathbb{N}\;\middle\vert\;\Prob_{r\sim\{0,1\}^{2^s}}\mleft[\Kt(x\mid y;r)\le s\mright]\ge\lambda\mright\}.
    \]
	\end{definition}

    \paragraph{Time-Bounded Kolmogorov Complexity with Nondeterminism.}
Next, we define nondeterministic variants of time-bounded Kolmogorov complexity.

	\begin{definition}[$\mathsf{nK}^t$]
		For $x,y,r\in\bool^*$ and $t\in\mathbb{N}$, the \emph{nondeterministic $t$-time-bounded Kolmogorov complexity of $x$ conditioning on $(y;r)$} is defined as
		\[
		\mathsf{nK}^t(x\mid y;r) \eqdef \min_{\Pi \in \{0,1\}^*} \mleft\{|\Pi|  \,\,\middle\vert\,\,  
	\begin{aligned}
		&\forall w\in\bool^t, \Pi(y;w;r) \textrm{ outputs $x$ or $\bot$ in $t$ steps}; \\
        &\exists w \in \{0, 1\}^t, \Pi(y;w;r) \textrm{ outputs $x$ in $t$ steps}
	\end{aligned} 
		\mright\}.
		\]
        We define $\nK^t(x\mid y)$ as $\nK^t(x\mid y;\epsilon)$, and $\nK^t(x)$ as $\nK^t(x\mid \epsilon)$.
	\end{definition}

    	\begin{definition}[$\mathsf{nKt}$]
		For $x,y,r\in\bool^*$, the \emph{nondeterministic time-bounded Levin--Kolmogorov complexity of $x$ conditioning on $(y;r)$} is defined as
		\[
		\mathsf{nKt}(x\mid y;r) \eqdef \min_{\substack{\Pi \in \{0,1\}^* \\ t\in\mathbb{N}}} \mleft\{|\Pi|   + \lceil \log t \rceil \,\,\middle\vert\,\,  
	\begin{aligned}
		&\forall w\in\bool^t, \Pi(y;w;r) \textrm{ outputs $x$ or $\bot$ in $t$ steps}; \\
        &\exists w \in \{0, 1\}^t, \Pi(y;w;r) \textrm{ outputs $x$ in $t$ steps}
	\end{aligned} 
		\mright\}.
		\]
        We define $\nKt(x\mid y)$ as $\nKt(x\mid y;\epsilon)$, and $\nKt(x)$ as $\nKt(x\mid \epsilon)$.
	\end{definition}

        \paragraph{Time-Bounded Kolmogorov Complexity with Randomness and Nondeterminism.}
Finally, we define time-bounded Kolmogorov complexity notions that incorporate both randomness and nondeterminism.
    \begin{definition}[$\mathsf{rnK}^t$]
	For $x,y\in\bool^*$, $\lambda\in[0,1]$ and $t\in\mathbb{N}$, the \emph{randomized nondeterministic $t$-time-bounded time-bounded Kolmogorov complexity of $x$ conditioning on $y$} is defined as
	\[
	\mathsf{rnK}_\lambda^t(x\mid y) \eqdef \min_{\Pi \in \{0,1\}^*} \mleft\{|\Pi|  \,\,\middle\vert\,\,   \Prob_{r\sim\bool^t}\mleft[
	\begin{aligned}
		&\forall w\in\bool^t, \Pi(y;w;r) \textrm{ outputs $x$ or $\bot$ in $t$ steps}; \\
        &\exists w \in \{0, 1\}^t, \Pi(y;w;r) \textrm{ outputs $x$ in $t$ steps}
	\end{aligned} \mright] \geq\lambda
	\mright\}.
	\]
	\end{definition}

        \begin{definition}[$\mathsf{rnKt}$]
	For $x,y\in\bool^*$, $\lambda\in[0,1]$, the \emph{randomized nondeterministic time-bounded Levin--Kolmogorov complexity of $x$ conditioning on $y$} is defined as
	\[
	\mathsf{rnKt}_\lambda(x\mid y) \eqdef \min_{\substack{\Pi \in \{0,1\}^* \\ t\in\mathbb{N}}} \mleft\{|\Pi|  + \lceil \log t \rceil\,\,\middle\vert\,\,   \Prob_{r\sim\bool^t}\mleft[
	\begin{aligned}
		&\forall w\in\bool^t, \Pi(y;w;r) \textrm{ outputs $x$ or $\bot$ in $t$ steps}; \\
        &\exists w \in \{0, 1\}^t, \Pi(y;w;r) \textrm{ outputs $x$ in $t$ steps}
	\end{aligned} \mright] \geq\lambda
	\mright\}.
	\]\end{definition}

\begin{definition}[$\mathsf{pnK}^t$]\label{def:pnK-t}
    For $x,y\in\bool^*$, $\lambda\in [0,1]$ and $t\in\mathbb{N}$, the \emph{probabilistic nondeterministic $t$-time-bounded time-bounded Kolmogorov complexity of $x$ conditioning on $y$} is defined as
		\[
		\mathsf{pnK}_{\lambda}^t(x\mid y) \eqdef \min\mleft\{s\in\mathbb{N} \,\,\middle\vert\,\,   \Prob_{r\sim\bool^{t}}\mleft[ 
        	\begin{aligned}
        &\exists \Pi\in\bool^{\leq s} \textrm{ such that the following conditions hold:}\\
		&\forall w\in\bool^t, \Pi(y;w;r) \textrm{ outputs $x$ or $\bot$ in $t$ steps}; \\
        &\exists w \in \{0, 1\}^t, \Pi(y;w;r) \textrm{ outputs $x$ in $t$ steps}
	\end{aligned}  \mright] \geq \lambda 
		\mright\}.
        \]
    Equivalently,
    \[
        \pnK^t_\lambda(x\mid y)\eqdef\min\mleft\{s\in\mathbb{N}\;\middle\vert\;\Prob_{r\sim\{0,1\}^t}\mleft[\nK^t(x\mid y;r)\le s\mright]\ge\lambda\mright\}.
    \]
    \end{definition}

        \begin{definition}[$\mathsf{pnKt}$]
    For $x,y\in\bool^*$, $\lambda\in [0,1]$, $t\in\mathbb{N}$, the \emph{probabilistic nondeterministic time-bounded Levin--Kolmogorov complexity of $x$ conditioning on $y$} is defined as
    \[
        \pnKt_\lambda(x\mid y)\eqdef\min\mleft\{s\in\mathbb{N}\;\middle\vert\;\Prob_{r\sim\{0,1\}^{2^s}}\mleft[\nKt(x\mid y;r)\le s\mright]\ge\lambda\mright\}.
    \]
	\end{definition}

\subsection{Meta-Computational Problems}

We let $\MKtP$ denote the set of input pairs $(x,1^s)$ such that $x \in \{0,1\}^*$ and $\Kt(x) \leq s$.

We also consider the promise problem $\mathsf{GapMnKtP}[\alpha, \beta]$ on inputs $x\in\{0,1\}^*$ defined by
\[
\textsc{YES}\eqdef\{x:\nKt(x)\le \alpha(|x|)\},\qquad \textsc{NO}\eqdef\{x:\nKt(x)\ge \beta(|x|)\}.
\]

Similarly, we can consider meta-computational problems associated with  other Kolmogorov complexity measures, e.g., $\mathsf{GapMKtP}[\alpha,\beta]$ captures the problem of estimating $\Kt$ complexity.

\subsection{Basic Tools in Meta-Complexity}

Here, we state some useful tools about Kolmogorov complexity.

\begin{lemma}[Standard Counting Argument]\label{l:incomp-strings}
    For any $\lambda \in (1/2, 1)$, $t\in\mathbb{N}$, and $\kappa \in \{\K, \K^t, \rK^t_\lambda, \nK^t, \rnK^t_\lambda\}$, and for all $n, \gamma \in \mathbb{N}$, we have
    \[
        \Prob_{x \sim \{0,1\}^n} \left[ \kappa(x) < n - \gamma \right] < \frac{1}{2^\gamma}.
    \]
    Also, for any $\lambda \in (0, 1)$ and $\kappa \in \{\pK_{\lambda}^t, \pnK_{\lambda}^t\}$, and for all $n, \gamma \in \mathbb{N}$, we have
    \[
        \Prob_{x \sim \{0,1\}^n} \left[ \kappa(x) < n - \gamma \right] < \frac{1}{\lambda \cdot 2^\gamma}.
    \]
    Moreover, the inequalities above continue to hold in the presence of an oracle $\mathcal{O}$ and when conditioning on a fixed string $y$.
\end{lemma}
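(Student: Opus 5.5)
The plan is a direct counting argument. Fix a measure $\kappa$, a length $n$ and $\gamma$, and bound the number of strings $x\in\{0,1\}^n$ with $\kappa(x)<n-\gamma$ by the number of programs of length less than $n-\gamma$. There are fewer than $2^{n-\gamma}$ such programs, so it suffices to show that each program is charged to at most one string (or, for the probabilistic measures, to at most $1/\lambda$ strings in expectation). Dividing by $2^n$ then gives the bounds. Conditioning on a fixed $y$ or adding an oracle $\mathcal{O}$ only changes what a program computes, not how many programs there are, so the same count applies verbatim.

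\textbf{Deterministic and nondeterministic measures.} For $\K$ and $\K^t$, each program outputs at most one string, so the injection is immediate. For $\nK^t$, fix a program $\Pi$ and suppose it witnesses $\nK^t(x)<n-\gamma$ and $\nK^t(x')<n-\gamma$. By the ``for all $w$'' condition, every path of $\Pi$ outputs either $x$ or $\bot$, and by the ``exists $w$'' condition applied to $x'$, some path outputs $x'$. Hence $x'=x$, so each $\Pi$ determines at most one $x$.

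\textbf{Randomized measures $\rK^t_\lambda$ and $\rnK^t_\lambda$ with $\lambda>1/2$.} Fix $\Pi$ and, for each $r$, consider the event that $\Pi$ (nondeterministically) outputs $x$ on $r$. For a fixed $r$, this event holds for at most one $x$, by the argument above. Therefore if $\Pi$ witnesses two distinct strings $x\neq x'$, the two events are disjoint, each has probability at least $\lambda$, and their total is at least $2\lambda>1$, which is impossible. So $\Pi$ again determines at most one string.

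\textbf{Probabilistic measures $\pK^t_\lambda$ and $\pnK^t_\lambda$.} Here the program may depend on $r$, so I will average over $r$ instead. Let $S=\{x:\kappa(x)<n-\gamma\}$ and $s=n-\gamma-1$. For each $x\in S$,
\[\Prob_r\mleft[\nK^t(x\mid r)\le s\mright]\ge\lambda.\]
Summing over $x\in S$ gives
\[\lambda|S|\le \Exp_r\mleft[\,\bigl|\{x:\nK^t(x\mid r)\le s\}\bigr|\,\mright].\]
For every fixed $r$, the set on the right has size less than $2^{s+1}=2^{n-\gamma}$ by the injection from the deterministic/nondeterministic case. Hence $|S|<2^{n-\gamma}/\lambda$, which is the claimed bound. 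The $\pK^t$ case is identical with $\K^t$ in place of $\nK^t$.

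The only care point is the boundary when $n-\gamma\le 0$. There the bound exceeds $1$ and the statement is trivial.
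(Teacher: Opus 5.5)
Your proof is correct and follows the paper's argument: an injection from programs of length less than $n-\gamma$ to strings for $\K,\K^t,\nK^t$ (and for $\rK^t_\lambda,\rnK^t_\lambda$ via disjointness of the success events when $\lambda>1/2$), and for $\pK^t_\lambda,\pnK^t_\lambda$ your averaging over $r$ is exactly the paper's double count of edges between pairs $(r,\Pi)$ and strings $x$. One small slip in your closing remark: when $n-\gamma\le 0$ the bound $2^{-\gamma}$ does not exceed $1$ in general; that case is trivial instead because no string has negative complexity, and your main count already covers it, since there are no programs of negative length and so $|S|=0$.
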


\begin{proof}

    For simplicity, we consider the oracle-free case and $y = \epsilon$. (The general case is analogous.) The case of $\K$ directly follows from counting. Indeed, the number of programs with length at most $n - \gamma$ is less than $2^{n - \gamma}$, therefore
    \[
        \Prob_{x \sim \{0, 1\}^n}\left[\K(x) < n - \gamma\right] < \frac{2^{n - \gamma}}{2^n} = \frac{1}{2^{\gamma}}.
    \]

    The remaining cases (except $\kappa = \rK^t_\lambda$ and $\kappa = \rnK^t_\lambda$) follows similarly, since each randomized programs can correspond to at most one string under the setting that $\lambda > 1/2$.

    For $\kappa = \pK^t_\lambda, \pnK^t_\lambda$, consider a bipartite graph where the left vertices represent randomness-program pairs  $\{0,1\}^t \times \{0,1\}^{<n - \gamma}$, and the right vertices represent strings in $\{0,1\}^n$. There is an edge between a left vertex $(r, M)$ and a right vertex $y$ if $M(x)$ outputs $x$. For a right vertex $x$ to have $\pK_{\lambda}^t$-complexity at most $n - \gamma$, it must have degree at least $2^t \cdot \lambda$. 
    
    However, there are less than $2^t \cdot 2^{n - \gamma}$ total edges in the graph. Therefore, the number of right vertices with degree at least $2^t \cdot \lambda$ is less than $2^{n-\gamma}/\lambda$.
\end{proof}

	\begin{lemma}[See {\cite[Lemma 9]{DBLP:conf/stoc/HiraharaILNO23}}]\label{f:incompressible}
	There exists a universal constant $b>0$ such that for any distribution family $\{\D_n\}_{n\in\mathbb{N}}$, where each $\D_n$ is over $\{0,1\}^{n}$, and $\gamma\in\mathbb{N}$,
	\[
	\Prob_{x\sim \D_n}\mleft[\mathsf{K}(x)< \log\frac{1}{\D_n(x)}-\gamma\mright]< \frac{n^b}{2^{\gamma}}.
	\]
\end{lemma}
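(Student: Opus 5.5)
The plan is a direct counting argument that never uses any structure of $\D_n$: the bad event is defined pointwise, so it suffices to bound $\sum_x \D_n(x)$ over bad strings $x$ by a sum that depends only on Kolmogorov complexity. Concretely, I will call $x \in \{0,1\}^n$ \emph{bad} if $\K(x) < \log\frac{1}{\D_n(x)} - \gamma$. Rearranging this inequality gives $\D_n(x) < 2^{-\K(x)-\gamma}$. Hence
\[
\Prob_{x\sim\D_n}\mleft[x \text{ bad}\mright] \;=\; \sum_{x \text{ bad}} \D_n(x) \;<\; 2^{-\gamma}\sum_{x\in\{0,1\}^n} 2^{-\K(x)},
\]
where the inequality is strict as long as there is at least one bad $x$; if there is none, the probability is $0$ and the claim is trivial.

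The remaining task is to show that $S_n := \sum_{x\in\{0,1\}^n} 2^{-\K(x)} \le n^b$ for a universal constant $b$. Since $\K$ here is plain (not prefix-free) complexity, Kraft's inequality does not apply directly, so I will group strings by their complexity instead. The number of strings $x$ with $\K(x)=k$ is at most the number of programs of length $k$, namely $2^k$. Moreover, every $x\in\{0,1\}^n$ has $\K(x)\le n+c$ for a constant $c$ depending only on $U$, via the program that prints $x$ literally. Therefore
\[
S_n \;\le\; \sum_{k=0}^{n+c} 2^k\cdot 2^{-k} \;=\; n+c+1 \;\le\; n^b
\]
for a suitable universal $b$ and all $n\ge 2$. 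The case $n=1$, where $n^b = 1$, is handled by enlarging the constant, or by noting that the statement is trivial once $2^{\gamma}\le$ constant. Combining the two displays yields the lemma.

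I expect no real obstacle. The only point needing care is the $n=1$ edge case, since $n^b$ cannot absorb an additive constant when $n=1$. If that is a concern, I would restate the bound as $O(n)/2^\gamma$, or simply note that for $n=1$ there are only two strings, and handle it by the same count with an adjusted constant.
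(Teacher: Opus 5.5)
Your proof is correct. The paper gives no argument of its own here; it only cites \cite[Lemma 9]{DBLP:conf/stoc/HiraharaILNO23}. Your level-by-level count is the standard counting proof of this fact: at most $2^k$ strings satisfy $\K(x)=k$, so the bad strings at each level carry less than $2^{-\gamma}$ mass. It even gives the sharper factor $O(n)$, and it still works if one only knows $\K(x)\le n+O(\log n)$, which is what the paper's $s_\alpha 01\langle M\rangle\alpha$ program format gives for the literal-print program.

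One correction to your edge-case remark: enlarging $b$ cannot help at $n=1$, since $1^b=1$ for every $b$. That case has to be checked by hand, and with $\gamma=0$ it can even fail for some universal machines. Suppose the empty program prints $0$ and a one-bit program prints $1$. Then $\D_1(0)=0.6$, $\D_1(1)=0.4$ makes both strings bad, so the probability is $1=n^b/2^\gamma$. Under the paper's encoding every program has length at least $2$, so $\sum_{x\in\{0,1\}}2^{-\K(x)}\le 1/2$ and $n=1$ is fine. In general, read the lemma for $n\ge 2$, where your bound $n+c+1$ is at most $n^b$ once $b\ge\log_2(c+3)$. This is cosmetic, since the lemma is only applied for large $n$.
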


\begin{theorem}[Efficient Coding Theorem~\cite{DBLP:conf/icalp/LuOZ22}]\label{t:coding_pkt}
	For every polynomial-time samplable distribution family $\{\mathcal{D}_n\}_{n\in\mathbb{N}}$, where each $\mathcal{D}_n$ is supported over $\bool^n$, there exists a polynomial $p$ such that for every $x\in\mathsf{Support}(\mathcal{D}_n)$,
	\[
	\mathsf{pK}^{p(n)}(x) \,\leq\,  \log\frac{1}{\mathcal{D}_n(x)} +  \log p(n).
	\]
\end{theorem}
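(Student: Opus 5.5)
This is the Efficient Coding Theorem of \cite{DBLP:conf/icalp/LuOZ22}, which the paper imports rather than proves. If I were to reprove it, I would follow the standard route for probabilistic coding theorems: use the public random string $r$ to supply a hash function, and use the hash value together with the sampler to single out $x$.

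\emph{Setup.} Let $S$ be the polynomial-time sampler for $\mathcal{D}_n$, using $m = m(n) = \poly(n)$ random bits. For $x \in \mathsf{Support}(\mathcal{D}_n)$, let $k = \lceil \log(1/\mathcal{D}_n(x)) \rceil$. Then the set
\[
A_x = \{ s \in \{0,1\}^m : S(s) = x \}
\]
has density at least $2^{-k}$.

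\emph{Step 1: isolating a preimage.} Interpret $r$ as the description of a hash function $h$ from a pairwise-independent family $\{0,1\}^m \to \{0,1\}^{m-k+c}$, for a suitable constant $c$. Fix an arbitrary value $v$, say $0^{m-k+c}$.
\begin{itemize}
\item The expected number of $s \in A_x$ with $h(s) = v$ is $|A_x| \cdot 2^{-(m-k+c)} \ge 2^{-c}$.
\item The expected number of $s \in \{0,1\}^m$ with $h(s) = v$ is $2^{k-c}$.
\end{itemize}
This suggests a description of $x$ consisting of $n$, the sampler's code, and the index of a good $s$ inside the bucket $h^{-1}(v)$, which takes about $k - c$ bits. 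The difficulty is that enumerating the bucket $h^{-1}(v)$ is not efficient.

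\emph{Step 2: making decoding efficient.} I would replace the enumeration by a description of the seed itself.
\begin{itemize}
\item Use a hash $h \colon \{0,1\}^m \to \{0,1\}^{k+O(\log n)}$ drawn from $r$, and encode $s$ via $h(s)$ plus a short index.
\item Equivalently, take the seed $s$ to be $G(r, z)$ for a linear (affine) map $G$ specified by $r$, with $z \in \{0,1\}^{k + O(\log n)}$.
\item Because $G$ is pairwise independent, a random affine subspace of dimension $k + O(\log n)$ hits $A_x$ with probability at least $1 - 1/\poly(n)$. This follows from Chebyshev's inequality applied to the number of hits.
\item Given $r$ and $z$, the decoder computes $S(G(r,z))$ in polynomial time.
\end{itemize}
This yields
\[
\K^{p(n)}(x \mid r) \le k + O(\log n)
\]
with probability at least $2/3$ over $r$, i.e., a $\pK^{p(n)}$ bound. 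Adjusting the constants absorbs everything into $\log p(n)$.

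\emph{Main obstacle.} The hard step is the parameter accounting. The description length must be $k + O(\log n)$, not $k + O(\log m)$ plus the length of $r$. Since $r$ is public, it costs nothing; so what needs checking is that an affine family of polynomial-size descriptions has the hitting property for sets of density $2^{-k}$ when $k$ is arbitrary.
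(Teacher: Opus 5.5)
The paper gives no proof of this statement; it imports it from \cite{DBLP:conf/icalp/LuOZ22}. Your core argument in Step 2 is correct and is the standard proof of this fact. The public string $r$ specifies a pairwise-independent map $G_r(z)=Mz+b$ from $\{0,1\}^{\ell}$ into the sampler's seed space $\{0,1\}^m$. Chebyshev on the number of hits gives $\Pr_r[\text{no } z \text{ with } S(G_r(z))=x]\le 1/(2^{\ell}\mathcal{D}_n(x))\le 2^{k-\ell}$. Decoding is then just computing $S(G_r(z))$ in polynomial time.

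Some cleanups:
\begin{itemize}
\item Drop Step~1 and the first bullet of Step~2. Describing the seed $s$ by a hash value $h(s)$ is not ``equivalent'' to describing a short seed $z$ that is expanded into $s$: decoding from $h(s)$ would again require inverting $h$, which is exactly the inefficiency you identified.
\item Your ``main obstacle'' is already settled by your own Chebyshev bound, which holds for every $k$. Taking $\ell=k+2$ gives success probability at least $3/4$. Since $\log m=O(\log n)$, the $O(\log n)$ overhead comes only from self-delimiting encodings of $n$ and $\ell$.
\item Since $\mathsf{pK}^{p(n)}$ draws $r$ from $\{0,1\}^{p(n)}$, you need $p(n)\ge m(\ell+1)$ so that $(M,b)$ fits in $r$. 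This is polynomial because $k\le m$.
\end{itemize}
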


\begin{lemma}[Following {\cite[Lemma 18]{GKLO22}}]\label{l:K-pK}
	For every computable language $L$, there exists a constant $c>0$ such that for any $x\in\bool^*$ and time bound $t\in\mathbb{N}$, $\K(x\mid t) \leq \pK^{t,L}(x) + c\cdot \log |x|$.
\end{lemma}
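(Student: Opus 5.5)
The plan is a direct counting-plus-enumeration argument. It uses the computability of $L$ to brute-force the randomized, oracle-aided definition of $\pK^{t,L}$ when $t$ is given as side information. Let $n=|x|$ and $s=\pK^{t,L}(x)$.

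First, dispose of the trivial case. If $s\ge n$, then $\K(x\mid t)\le n+O(\log n)\le s+O(\log n)$ by the literal description of $x$ (with its length encoded self-delimitingly). So assume $s<n$. In particular, $s$ can then be encoded self-delimitingly with $O(\log n)$ bits.

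Next, define for given $t$ and $s$ the set
\[
S_{t,s}=\mleft\{y\in\bool^*\;\middle\vert\;\Prob_{r\sim\bool^t}\mleft[\exists\Pi\in\bool^{\le s}:\Pi^L(\epsilon;\epsilon;r)\text{ outputs }y\text{ within }t\text{ steps}\mright]\ge 2/3\mright\},
\]
which contains $x$ by definition of $s$.

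Two facts about $S_{t,s}$ drive the proof.

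\emph{Size bound.} Count edges in the bipartite graph between pairs $(r,\Pi)\in\bool^t\times\bool^{\le s}$ and the strings they output, as in the proof of \Cref{l:incomp-strings}. Each pair $(r,\Pi)$ outputs at most one string, so there are fewer than $2^t\cdot 2^{s+1}$ edges. Each $y\in S_{t,s}$ has degree at least $\tfrac23 2^t$. Hence $|S_{t,s}|\le 3\cdot 2^{s}$.

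\emph{Computability.} Fix a machine deciding $L$. Given $t$ and $s$, a program can enumerate all $r\in\bool^t$ and all $\Pi\in\bool^{\le s}$. It simulates $U$ for $t$ steps on each pair, answering every oracle query (of length at most $t$) by running the decider for $L$. It records the outputs, tallies the probability for each output string, and lists $S_{t,s}$ in a fixed order. This takes finite (if huge) time, which is irrelevant for the time-unbounded measure $\K(\cdot\mid t)$.

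The description of $x$ given $t$ then consists of three parts: a constant-size part (the enumeration procedure together with the decider for $L$), a self-delimiting encoding of $s$ using $O(\log n)$ bits, and the index of $x$ in the list $S_{t,s}$, using $s+2$ bits. This gives $\K(x\mid t)\le s+c\log n$, where $c$ depends only on $L$ and the universal machine.

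There is no real obstacle. The only care needed is the self-delimiting encoding of $s$ together with the case split ensuring $\log s=O(\log n)$, and noting that the constant absorbs the description of the decider for $L$. If the conditional version of the lemma is needed, the same argument goes through with the condition appended as extra input to the enumeration.
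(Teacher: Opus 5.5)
Your proof is correct and is essentially the standard argument behind \cite[Lemma 18]{GKLO22}, which the paper cites without reproving. Given $t$ and $s$, you brute-force the set of strings output with probability at least $2/3$ by programs of length at most $s$, bound its size by $O(2^s)$ through edge counting, and specify $x$ by its index; decidability of $L$ is exactly what keeps the enumeration computable, and your case split is the right bookkeeping to make encoding $s$ cost only $O(\log |x|)$ bits.
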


\subsection{Pseudorandomness}

For a finite set $A$, we write $x \sim A$ to denote that $x$ is uniformly distributed over $A$.

Let $\mathcal{D}_n$ be a distribution supported over $\{0,1\}^n$. Let $\varepsilon \in [0,1]$. Finally, let $f \colon \{0,1\}^n \to \{0,1\}$. We say that $\mathcal{D}_n$ $\varepsilon$\emph{-fools} $f$ if
\[
\Big | \Prob_{x \sim \mathcal{D}_n}[f(x) = 1] - \Prob_{x \sim \{0,1\}^n}[f(x) = 1]  \Big | \leq \varepsilon.
\]

\begin{theorem}[\cite{DBLP:journals/cc/BabaiFNW93, DBLP:journals/siamcomp/KlivansM02}; see also Theorem 17 of \cite{DBLP:journals/siamcomp/AllenderBKMR06}]\label{t:BFNW93}
	For every $\varepsilon>0$ and $b\in\mathbb{N}$, there exist $\delta<\varepsilon$, a constant $a \in \nat$, and a function
	\[
	F\colon\{0,1\}^{2^{n^\delta}}\times\{0,1\}^{n^\varepsilon}\to\{0,1\}^{n^b},
	\]
    computable in time $\poly(2^{n^{\delta}})$ and satisfying the following for all sufficiently large $n \in \nat$.

	For any function $f: \bool^{n^{\delta}} \to \bool$, let $G_f(\text{-})$ denote $F(\tt(f), \text{-})$. Then for any such $f$ and $D: \bool^{n^b} \to \bool$ such that 
    \[\left | \Pr_{z \sim \mathcal{U}_{n^{\varepsilon}}}[D(G_f(z)) = 1] - \Pr_{w \sim \mathcal{U}_{n^b}}[D(w) = 1]\right | \geq \frac{1}{n^b},\]
    there exists a non-adaptive oracle circuit $R^{(\text{-})}$ of size $n^a$ such that $R^D(x) = f(x)$ for all $x \in \bool^{n^{\delta}}$.
    
    As a consequence, there exists a constant $c \in \nat$ (chosen so that $n^{c\delta}$ exceeds $n^a\cdot n^b$) such that for any oracle $O$, if $f: \bool^{n^{\delta}} \to \bool$ is a function with $O$-oracle Boolean circuit complexity at least $n^{c\delta}$, then $G_f: \bool^{n^{\varepsilon}} \to \bool^{n^b}$ is a pseudorandom generator that $(n^{-b})$-fools every $O$-oracle Boolean circuit $D: \{0, 1\}^{n^b} \to \{0, 1\}$ of size at most $n^b$.

\end{theorem}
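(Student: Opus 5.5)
The plan is the standard hardness-versus-randomness route of \cite{DBLP:journals/cc/BabaiFNW93, DBLP:journals/siamcomp/KlivansM02}: first apply worst-case-to-average-case amplification through a locally list-decodable code, then apply the Nisan--Wigderson generator, and finally compose the two reconstruction procedures. Throughout I will check that the reconstruction queries $D$ non-adaptively.

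\emph{Step 1 (encoding).} Given $\tt(f)\in\{0,1\}^{2^{n^\delta}}$, I encode it by a locally list-decodable code: a Reed--Muller low-degree extension over a field of size $\poly(n)$, concatenated with a Hadamard code, in the style of Sudan--Trevisan--Vadhan. This gives $\hat f\colon\{0,1\}^{m}\to\{0,1\}$ with $m=O(n^{\delta})$, and $\tt(\hat f)$ is computable from $\tt(f)$ in time $\poly(2^{n^\delta})$. The property I need is the following. Suppose an oracle $P$ satisfies $\Pr_{u}[P(u)=\hat f(u)]\ge 1/2+1/n^{2b}$. Then some $\poly(n)$-size circuit with a $\poly(n)$-bit advice string, which selects the correct element of the list, computes $f$ exactly on all of $\{0,1\}^{n^\delta}$ using $P$. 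Moreover, all of its queries are fixed in advance, since they lie on random lines or curves plus Hadamard positions and the answers are combined by interpolation and majority votes. Hence the decoder is non-adaptive in $P$.

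\emph{Step 2 (NW generator).} I fix a combinatorial design $S_1,\dots,S_{n^b}\subseteq[n^\varepsilon]$ with $|S_i|=m$ and $|S_i\cap S_j|\le O(\log n)$. Such designs exist and are constructible because $m=O(n^\delta)$ and $\delta<\varepsilon$ leave enough room, and choosing $\delta$ small relative to $\varepsilon$ guarantees this. I then define $F(\tt(f),z)=\hat f(z|_{S_1})\cdots\hat f(z|_{S_{n^b}})$, which is computable in $\poly(2^{n^\delta})$ time. Given a distinguisher $D$ with advantage $1/n^b$, the hybrid argument yields an index $i$ and a next-bit predictor with advantage $1/n^{2b}$. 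Averaging then lets me fix the seed bits outside $S_i$ and a random guess bit. The previous output bits $\hat f(z|_{S_j})$ for $j<i$ depend only on $|S_i\cap S_j|=O(\log n)$ bits of the input, so each is a hardwired table of size $\poly(n)$. The resulting predictor $P(u)$ therefore makes exactly one query to $D$, namely on a string computed from $u$ by $\poly(n)$-size lookup tables. Composing $P$ with the non-adaptive decoder of Step 1 gives $R^D$ of size $n^a$ that computes $f$ everywhere. Its queries to $D$ remain non-adaptive, since each decoder query becomes one $D$-query and none depends on earlier answers.

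\emph{Step 3 (consequence).} I argue by contraposition. Suppose some $O$-oracle circuit $D$ of size $n^b$ distinguishes $G_f$ with advantage at least $n^{-b}$. Replacing each $D$-gate in $R^D$ by the circuit $D$ gives an $O$-oracle circuit for $f$ of size at most $n^a\cdot n^b<n^{c\delta}$, which contradicts the hardness assumption on $f$.

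I expect the main obstacle to be Step 1: obtaining an \emph{exact}, \emph{non-adaptive} local decoder from a predictor with only $1/\poly$ advantage. The usual two-stage pipeline of list decoding followed by Reed--Muller self-correction has to be arranged so that no query depends on previous answers. To handle this I would use the STV decoder directly. It queries points on random low-degree curves through the target point, with the curve randomness and the list-selecting advice hardwired, and then runs a fixed Reed--Solomon decoding computation on the answers. All of its queries are determined before any answer is seen, so non-adaptivity holds by construction.
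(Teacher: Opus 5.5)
Your proposal is correct and is the standard argument behind the results the paper cites; the paper gives no proof of this theorem itself. You amplify worst-case hardness with a locally list-decodable Reed--Muller/Hadamard code, feed $\hat f$ into the Nisan--Wigderson generator, and compose the two reconstructions; the only cosmetic difference from \cite{DBLP:journals/cc/BabaiFNW93} is that they amplify via random self-reducibility plus an XOR lemma rather than STV-style list decoding. The non-adaptivity you flag as the main obstacle is in fact automatic: once the decoder's randomness and advice are hardwired, each stage's queries depend only on its input and not on earlier answers, so composing the decoder with the one-query predictor fixes every query to $D$ in advance.
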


\begin{theorem}[\cite{IW97}]{\label{t:IW97-PRG}}
     For every $\varepsilon > 0$, there exist a polynomial-time computable function $F$ and $c, d \in \mathbb{N}$ such that such that the following holds for any oracle $O$.
    \begin{equation*}
        F: \{0, 1\}^{n^c} \times \{0, 1\}^{d \log n} \to \{0, 1\}^n,
    \end{equation*}
    and if $f: \{0, 1\}^{c\log n} \to \{0, 1\}$ is a function with $O$-oracle Boolean circuit complexity at least $n^{\varepsilon c}$, then the function $G_f(\text{-}) := F(\tt(f), \text{-})$ is a pseudorandom generator mapping $\{0, 1\}^{d\log n}$ to $\{0, 1\}^n$ that $(1/n)$-fooles every $O$-oracle Boolean circuit $D: \{0, 1\}^n \to \{0, 1\}$ of size at most $n$.
\end{theorem}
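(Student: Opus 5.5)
This is the Impagliazzo--Wigderson generator, cited from \cite{IW97}, so I would not reprove it from scratch. Instead I would check that the standard construction relativizes and has the stated parameters. The generator is built in two stages.

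\textbf{Hardness amplification.} From the truth table $\tt(f)$ of length $n^c$, compute in polynomial time the truth table of a function $f'$ on $O(c\log n)$ input bits. This $f'$ is obtained by encoding $\tt(f)$ with a locally list-decodable code, for example concatenated Reed--Muller and Hadamard. It satisfies a strong average-case guarantee: if some $O$-oracle circuit of size $n^{\varepsilon' c}$ computes $f'$ on a $1/2+1/n^{2}$ fraction of inputs, then an $O$-oracle circuit of size $n^{\varepsilon c}$ computes $f$ exactly. The local list-decoding reduction uses the approximating circuit as a black box and adds only nonuniform advice and polynomial overhead, so it relativizes to any oracle $O$.

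\textbf{Nisan--Wigderson generator.} Fix a combinatorial design of $n$ subsets $S_1,\dots,S_n\subseteq[d\log n]$, each of size $\ell=O(c\log n)$, with pairwise intersections at most $\gamma\log n$. Such a design can be built greedily in $\poly(n)$ time once $d$ is a large enough constant relative to $c$ and $\gamma$. Set
\[G_f(z)_i=f'(z|_{S_i}).\]
Computing $G_f$ involves the encoding step, the design, and $n$ table lookups, all polynomial in $n^c$. Hence $F$ is polynomial-time computable.

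\textbf{Security via the hybrid argument.} Suppose an $O$-oracle circuit $D$ of size $n$ distinguishes $G_f$ from uniform with advantage $1/n$. A hybrid argument yields a next-bit predictor with advantage $1/n^2$. Now fix every seed bit outside $S_i$. For each $j<i$, the bit $f'(z|_{S_j})$ then depends on at most $\gamma\log n$ free bits, so it can be hardwired as a lookup table of size $n^{\gamma}$. This gives an $O$-oracle circuit of size $n+n\cdot n^{\gamma}$ that predicts $f'$ with advantage $1/n^2$. Choosing $\gamma$ small and $c$ large relative to $\varepsilon$, this size is below $n^{\varepsilon' c}$, which contradicts the hardness of $f'$ and hence the assumed hardness of $f$.

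\textbf{Main obstacle.} The delicate step is the parameter bookkeeping in the amplification stage. For the given $\varepsilon$, one must choose $c$ and $d$ so that the list-decoding reconstruction (size $\poly(n)\cdot n^{\varepsilon' c}$) stays below the assumed hardness $n^{\varepsilon c}$, while the design still fits in a seed of length $d\log n$. I would fix $\gamma=\varepsilon/10$, pick $c$ large so that the polynomial overheads are absorbed into $n^{\varepsilon c/2}$, and then pick $d=O(c^2/\gamma)$. Relativization requires no new idea, since every reduction uses $D$ only as a black box and passes oracle gates through unchanged.
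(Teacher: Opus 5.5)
Your proposal is correct. The paper gives no proof of this statement: it is quoted from \cite{IW97}, and its validity relative to an arbitrary oracle $O$ is taken for granted because the reductions are black-box. The paper relies on this relativized form later, with $\NP^{\mathcal{O}}$-oracle and $(\AM\cap\coAM)$-oracle circuits. Your sketch supplies exactly that justification.

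Your amplification step, however, follows the Sudan--Trevisan--Vadhan route rather than the original Impagliazzo--Wigderson argument. You use one-step local list-decoding of a Reed--Muller/Hadamard encoding, going directly from worst-case hardness to hardness against $1/2+1/n^2$ agreement. Impagliazzo--Wigderson instead pass through a low-degree extension to get mild average-case hardness, and then apply a derandomized XOR lemma built on the hard-core set lemma. The STV route is shorter and gives directly the $O(c\log n)$ input length and the $1/n^2$ advantage you need. Both routes use the distinguisher only as an oracle and fix randomness and list indices by nonuniform advice, so either one relativizes.

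One minor bookkeeping point concerns $d$. The standard greedy NW design with $n$ sets of size $\ell=O(c\log n)$ and pairwise intersections at most $\gamma\log n$ needs a universe of size roughly $e\,2^{1/\gamma}\ell^2/(\gamma\log n)$. So your $d=O(c^2/\gamma)$ hides a $2^{O(1/\gamma)}$ factor. This is harmless, for two reasons. Nothing forces $\gamma$ to be small once $c$ is large: taking $\gamma=1$ only makes the predictor circuit of size $O(n^2\log n)$. And $d$ only needs to be a constant depending on $\varepsilon$.
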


\begin{theorem}[\cite{DBLP:journals/jcss/ImpagliazzoW01, DBLP:journals/cc/TrevisanV07}]\label{t:io-PRG}
	Let $O$ be any oracle. Assume $\mathsf{PSPACE}\not\subseteq\BPP^{O}$. Then for every $\varepsilon>0$ and $b \in\mathbb{N}$, there is a sequence $\{G_n\}_{n \in\mathbb{N}}$, where $G_{n}\colon \bool^{n^{\varepsilon}} \to \bool^{n^{b}}$ is computable in time $2^{O(n^{\varepsilon})}$, such that the following holds. For every distribution family $\{\mathcal{C}_{n}\}_{n\in\mathbb{N}}$ of Boolean $O$-oracle circuits samplable in time $n^b$, there are infinitely many $n\in\mathbb{N}$ such that with probability at least $1-n^{-b}$ over $C$ sampled from $\mathcal{C}_{n}$, $G_n$ $(n^{-b})$-fools $C$.
\end{theorem}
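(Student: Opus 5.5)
The plan is to prove the contrapositive via the uniform hardness-versus-randomness paradigm of \cite{DBLP:journals/jcss/ImpagliazzoW01}, in the refined form of \cite{DBLP:journals/cc/TrevisanV07}. I will show that if for some $\varepsilon,b$ every candidate generator fails almost everywhere, then $\mathsf{PSPACE}\subseteq\BPP^{O}$.

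\textbf{Choice of hard function and generator.} Fix the $\mathsf{PSPACE}$-complete language $L^\star$ from \cite{DBLP:journals/cc/TrevisanV07}. It has three properties I will use:
\begin{itemize}
\item it is \emph{downward self-reducible}, so $L^\star_m$ is computable in polynomial time with oracle access to $L^\star_{m-1}$;
\item it is \emph{random self-reducible}, so correct answers on most inputs of length $m$ can be self-corrected to all inputs;
\item its truth table on length $m$ is itself a low-degree encoding.
\end{itemize}
Set $m=n^{\delta}$ for a small $\delta<\varepsilon$ and let $G_n:=F(\tt(L^\star_m),\cdot)$, where $F$ is the generator of \Cref{t:BFNW93} (the NW generator on a locally list-decodable encoding). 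Computing $\tt(L^\star_m)$ takes space $\poly(m)$, so $G_n$ runs in time $2^{O(n^{\varepsilon})}$, as the statement requires.

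\textbf{Reconstruction from a samplable distinguisher.} Suppose the conclusion fails for some samplable family $\{\mathcal{C}_n\}$. Then for all large $n$, with probability at least $n^{-b}$ over $C\sim\mathcal{C}_n$, the circuit $C$ is an $O$-oracle circuit that $n^{-b}$-distinguishes $G_n$ from uniform. I take the following steps.
\begin{enumerate}
\item Sample $\poly(n)$ independent circuits $C$ in time $n^b$ each; with high probability one of them is a distinguisher.
\item For each sampled $C$, run the uniform reconstruction procedure of \Cref{t:BFNW93}. A hybrid argument plus local list decoding turns a distinguisher into a $\poly(n)$-size list of candidate $C$-oracle (hence $O$-oracle) circuits. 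If $C$ really distinguishes, some candidate computes $L^\star_m$.
\end{enumerate}
Because failure holds at \emph{every} large $n$, and $m=n^{\delta}$ ranges over all large lengths (up to rounding, handled by padding $L^\star$), this yields candidate lists for every input length $m$.

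\textbf{Selecting the right candidate, the main obstacle.} A uniform algorithm cannot be told which candidate is correct. I plan to resolve this as in \cite{DBLP:journals/jcss/ImpagliazzoW01, DBLP:journals/cc/TrevisanV07} by learning $L^\star$ inductively on the length.
\begin{enumerate}
\item Assume a correct $O$-oracle circuit $A_{m-1}$ for $L^\star_{m-1}$ is already in hand.
\item For each candidate $D$ at length $m$, first self-correct $D$ using random self-reducibility.
\item Test the self-corrected candidate on random inputs against the downward self-reduction evaluated with $A_{m-1}$.
\item Keep the first candidate that agrees on a $1-1/\poly$ fraction of tested inputs. Self-correction then makes it correct everywhere with high probability.
\end{enumerate}
The delicate points are two:
\begin{itemize}
\item Circuit sizes must not blow up along the induction. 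To prevent this, each new circuit is expressed directly as the self-corrected $D$, not as a composition with earlier circuits.
\item The union bound over $\poly$ many lengths and samples must be controlled.
\end{itemize}

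Given these, on input $x$ of length $m$ the procedure runs for $m$ stages in $\poly(m)$ time with access to $O$. This gives $L^\star\in\BPP^{O}$, hence $\mathsf{PSPACE}\subseteq\BPP^{O}$, contradicting the assumption. Every step treats $O$ only as a black box inside the sampled circuits, so the whole argument relativizes to $O$.
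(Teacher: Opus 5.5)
The paper gives no proof of this statement; it imports it from \cite{DBLP:journals/jcss/ImpagliazzoW01, DBLP:journals/cc/TrevisanV07}. Your outline, Trevisan--Vadhan's $L^\star$ inside the Impagliazzo--Wigderson inductive learning argument, is the route of those works. However, your reconstruction step fails as stated, and fixing it changes where downward self-reducibility is used.

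You claim that a distinguisher alone yields, uniformly, a $\poly(n)$-size list of candidates. For an NW-type generator this is false. After the hybrid argument fixes an index $i$ and the seed bits outside $S_i$, the next-bit predictor must produce the earlier output bits $f(z|_{S_j})$ for $j<i$. These are hardwired as tables of $f$ on up to $2^{|S_i\cap S_j|}$ points for each $j$. That is $\poly(n)$ bits of $f$-dependent information, which can be neither guessed with noticeable probability nor enumerated in polynomial time. The reconstruction behind \Cref{t:BFNW93} can be made uniform only relative to oracle access to the hard function.

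In the cited works this is exactly where downward self-reducibility enters. At stage $m$, these membership queries to $L^\star_m$ are answered by running the downward self-reduction with the already-learned $A_{m-1}$, and the answers are hardwired as constants. So $A_{m-1}$ is needed to \emph{generate} the candidate list, not only to select from it. It is also this hardwiring that keeps circuit sizes from compounding across stages.

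This also rules out using $F$ from \Cref{t:BFNW93} as a black box. Its encoding of $\tt(L^\star_m)$ involves a low-degree extension, which is not locally computable. A single codeword symbol depends on all $2^m$ table entries, so one membership query to the function fed into NW would cost about $2^m$ evaluations of $A_m$. The point of taking $L^\star$ whose truth table is already (essentially) a low-degree encoding is that only a locally computable binary step is applied on top, such as a Goldreich--Levin/Hadamard concatenation or an XOR lemma. Then each query to the NW function reduces to $\poly(m)$ queries to $L^\star_m$, and hence to $\poly(m)$ runs of the downward self-reduction with $A_{m-1}$.

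With these two changes, the rest of your plan goes through:
almost-everywhere failure gives distinguishers at every length;
candidates are tested against $A_{m-1}$ and self-corrected;
and a union bound handles the stages.

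A minor point: ``space $\poly(m)$, so time $2^{O(n^{\varepsilon})}$'' needs $\delta$ chosen relative to the space bound of $L^\star$ and the NW seed length $O(m^2)$. For example, you need $k\delta\le\varepsilon$ if $L^\star\in\mathsf{DSPACE}[m^k]$, and $2\delta<\varepsilon$; $\delta<\varepsilon$ alone does not suffice.
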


\begin{corollary}\label{t:io-PRG-EXP}
		Let $O\in\PSPACE$. Assume $\mathsf{EXP}\not\subseteq\BPP^{O}$. Then for every $\varepsilon>0$ and $b \in\mathbb{N}$, there is a sequence $\{G_n\}_{n \in\mathbb{N}}$, where $G_{n}\colon \bool^{n^{\varepsilon}} \to \bool^{n^{b}}$ is computable in time $2^{O(n^{\varepsilon})}$, such that the following holds. For every distribution family $\{\mathcal{C}_{n}\}_{n\in\mathbb{N}}$ of Boolean $O$-oracle circuits samplable in time $n^b$, there are infinitely many $n\in\mathbb{N}$ such that with probability at least $1-n^{-b}$ over $C$ sampled from $\mathcal{C}_{n}$, $G_n$ $(n^{-b})$-fools $C$.
\end{corollary}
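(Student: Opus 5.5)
The plan is to derive \Cref{t:io-PRG-EXP} directly from \Cref{t:io-PRG}. The only thing to check is that its hypothesis $\PSPACE \not\subseteq \BPP^{O}$ follows from $\EXP \not\subseteq \BPP^{O}$ when $O \in \PSPACE$. I will argue the contrapositive: assume $\PSPACE \subseteq \BPP^{O}$ and conclude $\EXP \subseteq \BPP^{O}$. Once the hypothesis is in place, the generator $\{G_n\}$, its seed length $n^{\varepsilon}$, its running time $2^{O(n^{\varepsilon})}$, and the infinitely-often fooling guarantee against samplable distributions of $O$-oracle circuits all carry over verbatim.

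\textbf{Step 1: $\BPP^{O} = \PSPACE$.} Since $O \in \PSPACE$, a polynomial-space machine can simulate a $\BPP^{O}$ machine. It enumerates all random strings, answers each oracle query of polynomial length by running the $\PSPACE$ algorithm for $O$, and counts accepting runs. Hence $\BPP^{O} \subseteq \PSPACE$. Combined with the assumption $\PSPACE \subseteq \BPP^{O}$, this gives $\BPP^{O} = \PSPACE$.

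\textbf{Step 2: from $\PSPACE \subseteq \BPP^{O}$ to $\EXP \subseteq \PSPACE$.} This is the step I expect to be the main obstacle, since there is no direct padding argument from $\PSPACE$ up to $\EXP$. I would use the standard Babai--Fortnow--Lund route, which relativizes to oracle circuits with an oracle of this kind.

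First, $\BPP^{O} \subseteq \P^{O}/\poly$ by Adleman's argument. So $\PSPACE \subseteq \P^{O}/\poly \subseteq \PSPACE/\poly$, and in particular $\PSPACE \subseteq \SIZE[\poly]$ via circuits that evaluate $O$ in polynomial space. It is cleaner, though, to go through $\EXP$ directly.

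Let $L \in \EXP$ and suppose, for contradiction, that $\EXP \not\subseteq \PSPACE$. The easy case is $\EXP \subseteq \P^{O}/\poly$. Then $\EXP \subseteq \PSPACE/\poly$, and since $\PSPACE$ has circuits, $\EXP$ has an MIP, i.e.\ $\EXP = \mathsf{MIP}$. In the MIP for an $\EXP$-complete problem, the optimal provers are computable in $\EXP$, hence in $\P^{O}/\poly$. The verifier can therefore be simulated in $\PSPACE$ by iterating over the polynomial-size advice strings. This is the usual ``$\EXP \subseteq \P/\poly \Rightarrow \EXP = \mathsf{MA}$'' argument, adapted so that the final simulation is in $\PSPACE$, giving $\EXP \subseteq \PSPACE$.

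The remaining case is $\EXP \not\subseteq \P^{O}/\poly$. I am least sure how to close it, but the resolution should not need it. \Cref{t:io-PRG} is ultimately proved by a win--win on whether a $\PSPACE$-complete downward-self-reducible, random-self-reducible language is easy. One can instead run that argument with an $\EXP$-complete language having the same properties; the low-degree extension of an $\EXP$-complete problem is instance-checkable. Re-running the \cite{DBLP:journals/cc/TrevisanV07} proof with this language yields the conclusion directly from $\EXP \not\subseteq \BPP^{O}$. The assumption $O \in \PSPACE$ is used so that the reconstruction, which needs $O$-oracle circuits, can itself be run inside $\EXP$ and checked.

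Accordingly, my concrete plan is to state the corollary as following from \Cref{t:io-PRG} applied with an $\EXP$-complete, downward- and random-self-reducible language in place of the $\PSPACE$-complete one. The only verification needed is the instance-checkability of that language relative to an oracle in $\PSPACE \subseteq \EXP$.
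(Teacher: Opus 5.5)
There is a genuine gap, and it sits in your overall strategy. You try to reduce everything to \Cref{t:io-PRG} by proving that, for $O\in\PSPACE$, $\PSPACE\subseteq\BPP^{O}$ implies $\EXP\subseteq\BPP^{O}$. Take $O$ to be a $\PSPACE$-complete language. Then $\PSPACE\subseteq\BPP^{O}$ holds trivially and $\BPP^{O}=\PSPACE$, so your implication would prove $\EXP=\PSPACE$ unconditionally. For such $O$ the corollary's hypothesis is simply $\EXP\neq\PSPACE$, while the hypothesis of \Cref{t:io-PRG} is false. So no argument that only invokes \Cref{t:io-PRG} can succeed, and the case $\PSPACE\subseteq\BPP^{O}$ needs a different generator.

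Your fallback does not supply one. The uniform reconstruction of \cite{DBLP:journals/jcss/ImpagliazzoW01, DBLP:journals/cc/TrevisanV07} needs downward self-reducibility, not just instance checkability. Every downward self-reducible language lies in $\PSPACE$, since you can evaluate the recursion depth-first. Hence an $\EXP$-complete downward self-reducible language exists only if $\EXP=\PSPACE$.

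The missing idea is to use your ``easy case'' argument as a positive resource rather than only to reach a contradiction. This is what the paper does, by splitting on whether $\PSPACE\subseteq\BPP^{O}$. If $\PSPACE\not\subseteq\BPP^{O}$, apply \Cref{t:io-PRG} directly. If $\PSPACE\subseteq\BPP^{O}$, then $\EXP\not\subseteq\BPP^{O}$ gives $\EXP\neq\PSPACE$. Your argument ($\EXP\subseteq\SIZE^{O}[\poly]\Rightarrow\EXP\subseteq\mathsf{MA}^{O}\subseteq\PSPACE$, using $O\in\PSPACE$) then shows $\EXP\not\subseteq\SIZE^{O}[\poly]$.

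So the subcase you could not close, $\EXP\not\subseteq\P^{O}/\poly$, is exactly the one you are in, and there is nothing to refute. Plug the suitably padded truth table of a hard $\EXP$ language into the generator of \Cref{t:BFNW93}. That generator is computable in time $2^{O(n^{\varepsilon})}$. On the infinitely many lengths where the truth table is hard, it fools every $O$-oracle circuit of size $n^{b}$, hence with probability $1$ over any samplable family $\mathcal{C}_n$. Your Step 1 and the $\mathsf{MA}^{O}\subseteq\PSPACE$ simulation are fine; only the way you assemble the cases needs to change.
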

\begin{proof}[Proof Sketch]
Assume $\mathsf{EXP} \ne \BPP^{O}$. We consider two cases.

If $\mathsf{PSPACE} \not\subseteq \BPP^{O}$, then the desired PRG follows directly from \Cref{t:io-PRG}.

Otherwise, suppose $\mathsf{PSPACE} \subseteq \BPP^{O}$. Combined with the assumption that $\mathsf{EXP} \not\subseteq \BPP^{O}$, this implies that $\mathsf{EXP} \ne \mathsf{PSPACE}$. We claim that $\mathsf{EXP} \nsubseteq \SIZE^{O}[\poly]$. Indeed, if $\mathsf{EXP} \subseteq \SIZE^{O}[\poly]$, then
\begin{align*}
	\mathsf{EXP} 
	&\subseteq \mathsf{MA}^{O} \tag{by \cite{DBLP:journals/cc/BabaiFNW93}} \\
	&\subseteq \mathsf{PSPACE},
\end{align*}
where the last inclusion follows from the fact that $O \in \mathsf{PSPACE}$, so both the $\mathsf{MA}$ computation and the oracle queries can be simulated within $\mathsf{PSPACE}$. This yields $\mathsf{EXP} \subseteq \mathsf{PSPACE}$, a contradiction.

Therefore, $\mathsf{EXP} \nsubseteq \SIZE^{O}[\poly]$. Applying \Cref{t:BFNW93}, we obtain an infinitely-often secure PRG with sub-polynomial seed length based on the hard truth table of a language in $\mathsf{EXP}$.
\end{proof}

We will also need the following black-box pseudorandom generator. We closely follow the construction in \cite[Section~4.2]{DBLP:conf/focs/Hirahara20}, which uses the weak design from \cite{RazRV02_jcss_journals} and the list-decodable error-correcting code from \cite{DBLP:journals/jc/Sudan97}. But our construction is different in two ways. First, in \cite{DBLP:conf/focs/Hirahara20}, ECC list-decodable for distance $1/2-1/2m^2$ is used. Here we have an additional parameter $\varepsilon$ (which is the advantage of the distinguisher), and we use the same family of ECC, but the list-decodable distance is set as $1/2-\varepsilon/2m$. Second, in \cite{DBLP:conf/focs/Hirahara20}, the reconstruction algorithm outputs a list of strings. Here, after we obtain the list of strings, we just take a uniformly random sample from the list.
\begin{lemma}\label{l:black-box-PRG}
    For all sufficiently large $n,m\in\mathbb{N}$ such that $m\le 2n$ and any $\varepsilon>0$, there exists a triple $(G,A,R^{(\text{-})})$ such that
    \begin{align*}
        G&:\{0,1\}^n\times\{0,1\}^d\to\{0,1\}^m,\\
        A&:\{0,1\}^n\times\{0,1\}^d\to\{0,1\}^m,\\
        R^{(\text{-})}&:\{0,1\}^m\times\{0,1\}^d\times\{0,1\}^r\to\{0,1\}^n,
    \end{align*}
    and for every $x\in\{0,1\}^n$ and any $D:\{0,1\}^m\to\{0,1\}$ such that
    \[
        \Prob_{z\sim\mathcal{U}_d}\left[D(G(x;z))=1\right]-\Prob_{w\sim\mathcal{U}_m}\left[D(w)=1\right]\ge\varepsilon,
    \]
    it holds that
    \[
        \Prob_{\substack{w\sim\mathcal{U}_r\\z\sim\mathcal{U}_d}}\left[ R^D(A(x,z),z,w)=x\right]\ge\frac1{\poly(m/\varepsilon)}.
    \]
    Here, we have $d=O(\log ^3(n/\varepsilon))$ and $r=O(m)$. Moreover, $G$ and $A$ can be computed in time $\poly(n/\varepsilon)$, and $R^D$ can be computed in time $\poly(n/\varepsilon)$ with \emph{non-adaptive} oracle access to $D$.
\end{lemma}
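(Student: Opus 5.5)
We would build the triple from the Nisan--Wigderson generator with the weak designs of Raz--Reingold--Vadhan, following \cite[Section~4.2]{DBLP:conf/focs/Hirahara20}. First, encode $x\in\{0,1\}^n$ with a list-decodable code $\mathsf{Enc}\colon\{0,1\}^n\to\{0,1\}^{\bar n}$ (Sudan's Reed--Solomon concatenated with Hadamard). Here $\bar n=\poly(n/\varepsilon)$, and the code is list-decodable from agreement $1/2+\varepsilon/2m$ with list size $L=\poly(m/\varepsilon)$ in time $\poly(n/\varepsilon)$. Write $\ell=\log\bar n=O(\log(n/\varepsilon))$. Take a weak $(\ell,\rho)$-design $S_1,\dots,S_m\subseteq[d]$ with $\rho=O(1)$ and $d=O(\ell^2)$, or $O(\log^3)$ to have slack. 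Define $G(x;z)_i=\mathsf{Enc}(x)(z|_{S_i})$. The string $A(x,z)$ will be the advice the reconstruction needs. Since $A$ must also have output length $m$, we will encode in it the $\le m$ hybrid-prediction bits described below, padding with zeros.

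The reconstruction follows the standard hybrid argument. Since $D$ distinguishes with advantage $\varepsilon$, a random hybrid index $i\in[m]$ gives a next-bit predictor with advantage $\varepsilon/m$ on average. We fix the seed bits outside $S_i$ to the corresponding part of $z$, and fill positions $i+1..m$ with uniformly random bits. Because the design is weak, each earlier output bit $j<i$, viewed as a function of $z|_{S_i}$, depends on only $|S_i\cap S_j|$ free bits. The prefix values $G(x;z)_j$ restricted to those variables are the advice. Here is where we use $A(x,z)$ and $z$: $R$ receives $z$ itself, so it knows the fixed bits. We make $A(x,z)$ consist of the bits $G(x;z)_{<i}$ for the relevant index, and $R$ obtains the needed tables from $A$ together with the nonadaptive queries.

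Concretely, the randomness $w\in\{0,1\}^r$ with $r=O(m)$ supplies the index $i$, the random suffix bits, and a guess bit. For each point $a\in\{0,1\}^\ell$ we query $D$ on the hybrid string. All $\bar n$ queries are independent of the answers, which gives nonadaptivity. We thereby obtain a word agreeing with $\mathsf{Enc}(x)$ on a $1/2+\varepsilon/2m$ fraction with probability $\ge \varepsilon/2m$ over $z,w$, by an averaging (Markov) argument. We then list-decode, obtaining a list of size $L$, and output a uniformly random member, using $O(\log L)$ further bits of $w$. The resulting success probability is $(\varepsilon/2m)\cdot(1/L)=1/\poly(m/\varepsilon)$, and the running time is $\poly(n/\varepsilon)$.

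\textbf{Main obstacle.} We expect the hardest step to be making the advice fit the format $A\colon\{0,1\}^n\times\{0,1\}^d\to\{0,1\}^m$ while keeping $r=O(m)$. Under weak designs, the total table size $\sum_{j<i}2^{|S_i\cap S_j|}\le\rho m$ is $O(m)$ bits, but it is not literally the prefix. We would therefore let $A(x,z)$ output the relevant table entries compressed into $O(m)$ bits, rescaling $m$ by a constant if needed. The alternative is to let $R$ guess them, but that costs $2^{-O(m)}$ probability and is unacceptable. Checking the counting, namely that averaging over $z$ preserves advantage $\varepsilon/m$ after these fixings, is routine.
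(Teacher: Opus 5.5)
Your architecture is the construction the paper defers to: NW over a Raz--Reingold--Vadhan weak design, Sudan's code list-decodable from agreement $1/2+\varepsilon/2m$, and a hybrid predictor evaluated at all $a\in\{0,1\}^\ell$ with one shared suffix and guess bit, so the $2^\ell$ queries are nonadaptive. Markov then gives agreement $1/2+\varepsilon/2m$ with probability $\varepsilon/2m$, and you output a random list element. However, the step you flag as the main obstacle is a genuine gap, and your patch does not close it.

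With a weak $(\ell,\rho)$-design for constant $\rho>1$ and $d=O(\ell^2)$, the advice $\sum_{j<i}2^{|S_i\cap S_j|}$ can be as large as $\rho(m-1)$, which exceeds $m$. The statement requires $A$ to output exactly $m$ bits. The table entries are arbitrary bits of $\mathsf{Enc}(x)$, so there is nothing to compress. ``Rescaling $m$'' is also not available, because the statement ties the advice length to the output length of $G$.

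This tightness is not cosmetic. It is what yields $\rK^{t,\parallel D}(x\mid v)\le m+O(\log^3(n/\varepsilon))$ in \Cref{cor:rnkt-recon}, and the SoI and meta-complexity arguments built on it tolerate only an additive polylogarithmic loss, not a factor $\rho$. The missing ingredient is the RRV weak $(\ell,1)$-design. It guarantees $\sum_{j<i}2^{|S_i\cap S_j|}\le m-1$ for every $i$, at the cost of $d=O(\ell^2\log m)$. With $\ell=O(\log(n/\varepsilon))$ and $m\le 2n$, this is precisely the $d=O(\log^3(n/\varepsilon))$ in the statement. You mention $O(\log^3)$ only as slack, whereas it is exactly the price of $\rho=1$.

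There is a second, easily repaired problem. You draw the hybrid index $i$ from $w$, but $A(x,z)$ sees only $(x,z)$. The tables it must output depend on $i$, both through the sets $S_i\cap S_j$ and through the fixing $z|_{[d]\setminus S_i}$. Put $i$ into $O(\log m)$ extra seed bits that $G$ ignores, so that $A$ and $R$ both read $i$ from $z$. The success probability is then over $(z,w)$, exactly as the statement requires.

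Finally, your initial suggestion that $A(x,z)$ be the prefix $G(x;z)_{<i}$, with the rest of the tables obtained ``together with the nonadaptive queries,'' cannot work. The answers of $D$ reveal nothing about $\mathsf{Enc}(x)$. You essentially retract this in your last paragraph, and the correct fix is the $\rho=1$ design above.
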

\begin{proof}
    Similar to \cite[Section~4.2]{DBLP:conf/focs/Hirahara20}.
\end{proof}

\begin{corollary}\label{cor:rnkt-recon}

    For any $n,m\in\N$ and $\varepsilon>0$, let $G: \bool^n \times \bool^d \to \bool^m$ be as in \Cref{l:black-box-PRG}. Consider any $x \in \bool^n$, $\ell \leq \poly(n/\varepsilon)$, and $v \in \bool^{\leq \ell}$, and let $D: \bool^m \times \bool^\ell \times \bool^{\le\ell} \to \bool$ be any function such that 
    \[
        \Prob_{\substack{z\sim\mathcal{U}_d\\u \sim \mathcal{U}_{\ell}}}\left[D(G(x;z), u, v)=1\right]-\Prob_{\substack{w\sim\mathcal{U}_m\\ u \sim \mathcal{U}_{\ell}}}\left[D(w, u, v)=1\right]\ge2\varepsilon.
    \]
    Then, for some $t = \poly(n/\varepsilon)$, it holds that 
    \[\rK^{t, \,\parallel D}(x \mid v) \leq m + O(\log^3(n/\varepsilon)).\]
    
\end{corollary}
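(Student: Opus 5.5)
The plan is to reduce directly to \Cref{l:black-box-PRG}, using an averaging argument over $u$ to fix it, and then to absorb the fixed $u$ and the reconstruction procedure into a short randomized description of $x$ given $v$.

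First, write the assumed advantage as an expectation over $u\sim\mathcal{U}_\ell$ of the advantage of the function $D_u(\cdot) := D(\cdot,u,v)$. Since the average advantage is at least $2\varepsilon$ and each advantage is at most $1$, a Markov-type argument shows that at least an $\varepsilon$ fraction of $u$ satisfy
\[
\Pr_z[D_u(G(x;z))=1]-\Pr_w[D_u(w)=1]\ge\varepsilon .
\]
For every such $u$, \Cref{l:black-box-PRG} gives
\[
\Pr_{w,z}\bigl[R^{D_u}(A(x,z),z,w)=x\bigr]\ge 1/\poly(m/\varepsilon).
\]
Each query $R$ makes to $D_u$ translates into a query $(q,u,v)$ to $D$. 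Non-adaptivity is preserved because $u$ and $v$ are fixed in advance.

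Second, the description. A fixed description cannot simply hardwire $u$, because $|u|=\ell$ may be large. Instead, the program samples $u$, $z$ and $w$ from its own random tape. What it hardwires is the advice $A(x,z)$, but this depends on $z$. To handle this, hardwire the $m$-bit string $a=A(x,z^*)$ for a good fixed $z^*$, chosen together with a fixed good $u^*$? That again needs $u^*$.

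A cleaner route is to keep everything random except the $m$-bit advice. By averaging, there is a fixed pair $(z^*,a^*=A(x,z^*))$ such that
\[
\Pr_{u,w}\bigl[R^{D(\cdot,u,v)}(a^*,z^*,w)=x\bigr]\ge \varepsilon/\poly(m/\varepsilon)=1/\poly(m/\varepsilon).
\]
Hardwiring $a^*$ and $z^*$ costs $m+O(\log^3(n/\varepsilon))$ bits. The program then samples $u$ and $w$ and runs $R$ on $v$, which it reads from the conditional tape, in time $\poly(n/\varepsilon)$.

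Third, success must be amplified from $1/\poly$ to $2/3$. This is the main obstacle, since plain repetition yields a list of candidates rather than a single answer. I would first hardwire an index among $k=\poly(m/\varepsilon)$ repetitions, but the correct index is itself random. Better, one of two options should work. Option (a) is to note that we may choose the success parameter $\lambda$ small: the default $\rK$ takes $\lambda=2/3$, but if the intended statement allows $\lambda=1/\poly$, no amplification is needed. Option (b) is to exploit that $R$'s list-decoding structure outputs a uniformly random element of a list of size $\poly(m/\varepsilon)$ that contains $x$. One can instead enumerate the list, which is possible with non-adaptive queries over several independent $w$, and hardwire the index of $x$ in the canonical (lexicographically sorted) list. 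This costs $O(\log(m/\varepsilon))$ extra bits. Making the list canonical with probability $2/3$ over the shared $u$ is delicate. I would try option (b), keeping the list-decoding step's canonical list from \cite{DBLP:conf/focs/Hirahara20} before the final random sampling. The extra $O(\log(n/\varepsilon))$ bits are absorbed into the $O(\log^3(n/\varepsilon))$ term, and the total running time remains $t=\poly(n/\varepsilon)$.
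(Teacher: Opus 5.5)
Your first two steps match the paper's proof exactly. Markov over $u$ gives an $\varepsilon$ fraction of $u$ for which $D(\cdot,u,v)$ is an $\varepsilon$-distinguisher. Averaging then fixes a seed $z^*$. A program that hardwires $z^*$ and $A(x,z^*)$ and samples $u,w$ itself yields $\rK^{\poly(n/\varepsilon),\parallel D}_{1/\poly(m/\varepsilon)}(x\mid v)\le m+O(\log^3(n/\varepsilon))$.

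The gap is the final amplification to success probability $2/3$. That is what the statement asserts, since the subscript of $\rK$ is omitted and so $\lambda=2/3$. Your option (a) only proves the weaker $1/\poly$ statement. Option (b) is not carried out, and it does not work as described. The list produced by list-decoding inside $R$ is determined by the predictor built from $D(\cdot,u,v)$ and from $w$, so it varies with the very randomness the program samples. For bad $u$ it need not contain $x$ at all. There is no canonical list to index into unless you hardwire $u$ and $w$, which costs $\ell+O(m)$ bits.

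The missing idea is that no structure of $R$ is needed, because success amplification for $\rK^t$ is generic and black-box \cite{DBLP:conf/icalp/LuO21}. For $t\ge n$ it gives $\rK^{\poly(t/\delta)}(x\mid v)\le\rK^t_\delta(x\mid v)+O(\log(n/\delta))$, and the paper simply applies this with $\delta=1/\poly(m/\varepsilon)$. Your concern that repetition only produces a list of candidates is handled by hashing against a \emph{fixed} set rather than a sample-dependent one:
\begin{itemize}
\item Let $S_\gamma$ be the set of strings that your step-two program (with $v$ and $D$ fixed) outputs with probability at least $\gamma$ over its own randomness. Then $|S_{\delta/8}|\le 8/\delta$, and $S_{\delta/8}$ does not depend on any samples.
\item Run the program $k=\poly(n/\delta)$ times independently and keep the strings occurring in at least a $\delta/2$ fraction of runs. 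With probability at least $2/3$ this set contains $x$ and is contained in $S_{\delta/8}$.
\item Hardwire a seed for a hash function $H$ that is injective on $S_{\delta/8}$ (\Cref{l:string-isolation}), together with $H(x)$. This costs $O(\log(n/\delta))$ bits and lets the program output the unique surviving candidate with hash $H(x)$.
\end{itemize}
This is the same argument used in \Cref{l:rK-randomness-amplify} and \Cref{l:success-amp-rnK}. The queries of the $k$ runs depend only on hardwired data and their own random bits, so they can be issued in one parallel round and access to $D$ stays non-adaptive. The extra $O(\log(m/\varepsilon))$ bits are absorbed into the $O(\log^3(n/\varepsilon))$ term.
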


\begin{proof}[Proof Sketch]
    By Markov's inequality, we have that
    \[
        \Prob_{u\sim\U_\ell}\left[\Prob_{z\sim\U_d}\left[D(G(x;z),u,v)=1\right]-\Prob_{w\sim\U_m}\left[D(w,u,v)=1\right]\ge\varepsilon\right]\ge\varepsilon.
    \]
    In other words, with probability at least $\varepsilon$ over $u\sim\U_\ell$, $D(\cdot,u,v)$ is an $\varepsilon$-distinguisher for $G(x;\cdot)$. By \Cref{l:black-box-PRG}, for such $u$, the reconstruction succeeds with probability $1/\poly(m/\varepsilon)$. Hence we have
    \[
        \Prob_{\substack{w\sim\U_r\\z\sim\U_d\\u\sim\U_\ell}}\left[R^{D(\cdot,u,v)}(A(x,z),z,w)=x\right]\ge\varepsilon\cdot\frac{1}{\poly(m/\varepsilon)}=\frac{1}{\poly(m/\varepsilon)}.
    \]
    By an averaging argument, there exists some $z$ such that
    \[
        \Prob_{\substack{w\sim\U_r\\u\sim\U_\ell}}\left[R^{D(\cdot,u,v)}(A(x,z),z,w)=x\right]\ge\frac{1}{\poly(m/\varepsilon)}.
    \]
    Therefore we can design a program that saves $z$, $A(x,z)$, and the description of $R$. When given access to $v$, it samples $w,u$ uniformly, then runs for $\poly(n/\varepsilon)$ steps, queries $D$ non-adaptively, and finally prints $x$ with probability at least $1/\poly(n/\varepsilon)$. By definition of $\rK$, this gives
    \[
        \rK^{\poly(n/\varepsilon),||D}_{1/\poly(m/\varepsilon)}(x\mid v)\le |z|+|A(x,z)|+|R|\le m+O(\log^3 (n/\varepsilon)).
    \]
    By success amplification of $\rK^t$ \cite{DBLP:conf/icalp/LuO21}, we have
    \[
        \rK^{\poly(n/\varepsilon),||D}(x\mid v)\le \rK^{\poly(n/\varepsilon),||D}_{1/\poly(m/\varepsilon)}(x\mid v)+O(\log(m/\varepsilon))\le m+O(\log^3 (n/\varepsilon)).
    \]
\end{proof}

\paragraph{Direct Product Generator .}
For $x,y\in\{0,1\}^n$, we denote their inner product by $x\cdot y=\bigoplus_{i=1}^nx_iy_i$.
\begin{definition}[Direct Product Generator (DPG) \cite{DBLP:conf/stoc/Hirahara21}]\label{def:DPG}
    For $n, k \in \mathbb{N}$, the $k$-wise direct product generator is the function
    \[
        \mathsf{DP}_k \colon \{0,1\}^n \times \{0,1\}^{nk} \to \{0,1\}^{nk + k},
    \]
    defined by
    \[
        \mathsf{DP}_k(x, z_1 \circ \cdots \circ z_k) = (z_1 \circ \cdots \circ z_k \circ x \cdot z_1 \circ \cdots \circ x \cdot z_k).
    \]
\end{definition}
\begin{lemma}[$\pK^t$ Reconstruction Lemma \cite{GKLO22}]\label{lem:recon}
		There is a polynomial $p_{_\mathsf{DP}}$ such that the following holds. For every $n,m,k\in\mathbb{N}^+$, $x \in \bool^n$, let $D$ be a function that $(1/m)$-distinguishes $\mathsf{DP}_k(x; \mathcal{U}_{nk})$ from $\mathcal{U}_{nk + k}$. Then  
		\[\pK^{p_{_\mathsf{DP}}(n+m+k), \parallel D}(x) \leq k +\log p_{_\mathsf{DP}}(n+m+k).\]
	\end{lemma}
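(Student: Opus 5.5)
The statement is the $\pK^t$ reconstruction lemma for the direct product generator. I would prove it by running the standard Goldreich--Levin list-decoding argument for $\mathsf{DP}_k$, in the form used by \cite{DBLP:conf/stoc/Hirahara21}, with public randomness supplying all the random choices. The resulting list of candidates for $x$ has polynomial size, and a short index into it is paid for with $O(\log)$ bits.

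\textbf{Step 1 (hybrid argument).} Suppose $D$ $(1/m)$-distinguishes $\mathsf{DP}_k(x;\mathcal{U}_{nk})$ from $\mathcal{U}_{nk+k}$. Form the hybrids $H_i$, in which the first $i$ inner-product bits are real and the remaining $k-i$ bits are uniform. Some index $i$ then gives advantage at least $1/(mk)$. By Yao's next-bit argument this yields a randomized predictor $P$. Given $z_i$, together with $z_j$ for $j\neq i$, the true bits $x\cdot z_j$ for $j<i$, and uniform bits for $j>i$, the predictor guesses $x\cdot z_i$ with advantage $1/(mk)$. It makes one query to $D$.

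The side information is the issue: $P$ needs the $i-1$ bits $x\cdot z_j$. Since the $z_j$ for $j\neq i$ come from the public random string, I fix them and write the $i-1\le k$ bits $x\cdot z_j$ into the description. This costs at most $k$ bits and is exactly the $k$ term in the bound. By averaging, with probability $\Omega(1/(mk))$ over the public random choices of the $z_j$ and the filler bits, the predictor still has advantage $\Omega(1/(mk))$ over $z_i$ alone.

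\textbf{Step 2 (Goldreich--Levin).} On such a good random string, apply the Goldreich--Levin list decoder to the map $z\mapsto P(z)$. I would use the version with pairwise-independent samples, taking those samples from the public random string as well. It runs in time $\poly(n,m,k)$ and makes non-adaptive queries to $P$, hence to $D$. It outputs a list of $\poly(mk)$ candidates that contains $x$ with constant probability. The non-adaptivity is the point of this choice: the pairwise-independent sample points, and the guessed bits on a logarithmic-size basis, can all be fixed in advance. A program of length $k+O(\log(nmk))$ therefore stores $i$, the $k$ hardcoded bits, the guesses for the basis bits (about $\log(mk)$ of them), and the index of $x$ in the list. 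It outputs $x$ with probability $1/\poly(n+m+k)$ over the public randomness.

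\textbf{Step 3 (amplification).} To reach success probability $2/3$ as the definition of $\pK^t$ requires, I would use the standard amplification for $\pK^t$ \cite{GKLO22}. Split the public randomness into $\poly(n+m+k)$ independent blocks. With high probability some block is good, and the description pays $O(\log(n+m+k))$ extra bits to name it. This is allowed because $\pK$ permits the program to depend on $r$.

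\textbf{Main obstacle.} The main difficulty is the per-$r$ accounting. The bits $x\cdot z_j$ depend on $r$, but that is legitimate because $\pK$ allows a different program for each $r$. All queries to $D$ must remain non-adaptive, which means using the Goldreich--Levin variant whose oracle queries are determined before any answer is seen. I would choose $p_{\mathsf{DP}}$ large enough to absorb all the polynomial overheads in time and in the logarithmic terms.
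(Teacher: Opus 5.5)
Your proposal is correct and follows the argument behind this lemma. The paper gives no proof of its own and defers to \cite{GKLO22}, which adapts the direct-product reconstruction of \cite{DBLP:conf/stoc/Hirahara21}: a hybrid argument, at most $k$ hardcoded prefix bits $x\cdot z_j$ depending on the public random string, non-adaptive Goldreich--Levin decoding, and amplification by naming a good block with $O(\log(n+m+k))$ extra bits. One small redundancy is that once the correct basis bits are stored the pairwise-independent decoder outputs $x$ itself, so no list index is needed, and there are $O(\log(nmk))$ such bits rather than $\log(mk)$; neither point affects the bound.
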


\subsection{Technical Results}

\begin{lemma}[\citep{DBLP:journals/siamcomp/AllenderBKMR06}]\label{lem:MKtP_collapse}
   If $\MKtP \in \P$ then $\PSPACE \subseteq \ZPP$.
\end{lemma}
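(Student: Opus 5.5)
The plan follows the argument of Allender, Buhrman, Kouck\'y, van Melkebeek and Ronneburger: if $\MKtP \in \P$, then $\MKtP$ is a polynomial-time test that separates strings of low $\Kt$ complexity from random strings. This test then serves as a distinguisher for a hardness-based generator built from a $\PSPACE$-complete function. Fix a $\PSPACE$-complete language $L$ that is downward self-reducible and random self-reducible; for example, the Trevisan--Vadhan language used in \Cref{t:io-PRG}. Instantiate the generator of \Cref{t:BFNW93} with $f = L_{n^\delta}$, the restriction of $L$ to inputs of length $n^\delta$. Since $\tt(f)$ can be computed in space $\poly(n^\delta)$, and hence in time $2^{O(n^\delta)}$, every output $G_f(z)$ with seed $z \in \bool^{n^\varepsilon}$ has
\[
\Kt(G_f(z)) \le n^\varepsilon + O(n^{\delta}) + O(\log n) \le 2n^{\varepsilon}.
\]
On the other hand, by \Cref{l:incomp-strings} a uniformly random $w \in \bool^{n^b}$ has $\Kt(w) \ge n^b - 1$ with probability at least $1/2$.

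Define $D(w) = 1$ iff $(w, 1^{2n^\varepsilon}) \in \MKtP$. This accepts every generator output but at most half of all uniform strings, so it distinguishes $G_f$ with advantage at least $1/2$. If $\MKtP \in \P$, then $D$ is a polynomial-time test. By the reconstruction guarantee in \Cref{t:BFNW93}, there is then an oracle circuit $R^D$ of size $n^a$ computing $f$. Replacing the oracle $D$ by the polynomial-time algorithm for $\MKtP$ gives $L \in \SIZE[\poly]$, i.e.\ $\PSPACE \subseteq \P/\poly$.

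The main obstacle is turning this nonuniform statement into $\PSPACE \subseteq \ZPP$. My plan is to exploit the fact that $D$ is a uniform object and to learn the reconstruction circuits inductively, following the standard self-reduction approach (as in Impagliazzo--Wigderson and Trevisan--Vadhan).
\begin{enumerate}
\item Proceed over lengths $m = 1, 2, \dots, n$, maintaining a circuit for $L_m$.
\item Given a circuit for $L_{m-1}$, use downward self-reducibility to evaluate $L_m$ on chosen points.
\item Run the reconstruction procedure with random advice strings. The advice in \Cref{t:BFNW93} can be guessed by sampling and then verified.
\item Check each candidate circuit against the self-reduction on random inputs.
\item Correct residual errors with random self-reducibility, so that the final circuit for $L_n$ is correct on every input with high probability.
\end{enumerate}
This yields a $\BPP$ algorithm for $L$.

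To upgrade $\BPP$ to $\ZPP$, verify the answer: $\PSPACE$ has an interactive proof whose prover is itself a $\PSPACE$ function, namely the Trevisan--Vadhan language is instance-checkable. Running the instance checker against the learned circuit lets us either certify the answer or output ``fail''. This establishes $\PSPACE \subseteq \ZPP$.
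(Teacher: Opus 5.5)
Your first half is the standard route. With $\MKtP\in\P$ the test $D$ breaks the generator, which gives $\PSPACE\subseteq\P/\poly$. The inductive learning over lengths, following Impagliazzo--Wigderson and Trevisan--Vadhan, can then upgrade this to $\PSPACE\subseteq\BPP$.

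\textbf{The gap is your final step to $\ZPP$.} An instance checker has bounded, not zero, soundness error. When the learned circuit is wrong on the instance, the checker still outputs a wrong answer with small but positive probability. So ``certify or output fail'' is not a certification, and your algorithm remains a bounded-error one: you have only shown $\PSPACE\subseteq\BPP$. No better checker can fix this. A checker with zero soundness error for a $\PSPACE$-complete $L$ would put $L$ in $\NP\cap\coNP$: a coin sequence together with a consistent list of oracle answers that makes the checker output $b$ would be a certificate that $L(x)=b$.

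\textbf{The missing idea is to use the hypothesis a second time, to get $\BPP\subseteq\ZPP$ directly.}
\begin{enumerate}
\item Sample $x\in\bool^N$ with $N=\poly(n)$. Decide exactly, with the polynomial-time algorithm, whether $(x,1^{\lfloor N/2\rfloor})\notin\MKtP$, i.e.\ whether $\Kt(x)>\lfloor N/2\rfloor$. By counting, this holds with probability at least $1/2$; otherwise output ``fail''.
\item Any $x$ that passes, viewed as a truth table, has circuit complexity $\Omega(N/\log N)$. This is because a size-$s$ circuit for it gives $\Kt(x)\le O(s\log s)+O(\log N)$.
\item Feed $x$ into the generator of \Cref{t:IW97-PRG}. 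Take the majority vote of the $\BPP$ algorithm for $L$ over all $\poly(n)$ seeds.
\end{enumerate}
Because the hardness test is exact, the answer is correct whenever one is output. Combined with your learning argument, this gives $\PSPACE\subseteq\BPP\subseteq\ZPP$, and the instance checker becomes unnecessary.

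\textbf{Two smaller points.}
\begin{itemize}
\item Space $\poly(n^\delta)$ gives time $2^{\poly(n^\delta)}$, not $2^{O(n^\delta)}$. So with the $\delta$ supplied by \Cref{t:BFNW93}, the bound $\Kt(G_f(z))\le 2n^\varepsilon$ does not follow. Apply the generator to $L$ at a shorter length and pad $f$ to $n^\delta$ input bits, as in the proof of \Cref{lem:prg-rel}. Do this at every length $m$ of the induction, with output length polynomially related to $m$.
\item \Cref{t:BFNW93} as stated only asserts that a reconstruction circuit exists. Your uniform step needs the Impagliazzo--Wigderson/Trevisan--Vadhan uniform reconstruction. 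There, part of the advice consists of values of $L_m$ at points fixed by the random choices, and these must be computed via the downward self-reduction from the circuit for $L_{m-1}$, not guessed.
\end{itemize}
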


\begin{lemma}[\citep{DBLP:journals/jcss/AllenderKRR11}] \label{lem:MnKtP_lb}
For every $0 < \varepsilon < 1$, we have $\mathsf{GapMnKtP}[n^\varepsilon, n-1] \notin \NP \cap \coNP$.\footnote{In other words, there is no language $L \in \NP \cap \coNP$ that agrees with $\mathsf{GapMnKtP}[n^\varepsilon, n-1]$.}
\end{lemma}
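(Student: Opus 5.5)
The plan is to argue by contradiction. Suppose some language $L \in \NP \cap \coNP$ satisfies $\{x : \nKt(x)\le |x|^\varepsilon\} \subseteq L$ and $\{x:\nKt(x)\ge |x|-1\}\cap L=\emptyset$. Fix $k$ with $L, \overline{L} \in \NTIME[n^k]$. We will use $L$ in two opposite roles: as a \emph{test} separating pseudorandom from random strings, and as a \emph{constructor} of hard strings. The contradiction will come from showing that $L$ must reject some string with small $\nKt$ complexity.

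\textbf{Step 1 (collapsing the oracle).} First show that oracle access to $L$ can be absorbed into nondeterminism. If $C$ is an $L$-oracle program of size $s$ running in time $T$, then its output $y$ satisfies $\nKt(y) \le O(s) + O(\log T) + O(\log(\text{query length}))$. The single-valued simulation guesses, for every query $q$, either an $L$-certificate or an $\overline{L}$-certificate. Since exactly one kind of certificate exists for each $q$, every accepting path computes the same answers, and paths with a bad certificate output $\bot$. Non-adaptivity is not needed here. The same idea gives a single-valued nondeterministic procedure, taking as input only a length parameter $N$, that outputs the lexicographically first $x \in \{0,1\}^N$ with $x\notin L$. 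It guesses $x$ together with an $\overline{L}$-certificate for $x$ and $L$-certificates for all lexicographically smaller strings. By \Cref{l:incomp-strings}, most strings of length $N$ lie in the NO set, so such an $x$ exists. This gives $\nKt(x_N)\le N+O(\log N)$, while $x_N\notin L \supseteq$ YES gives $\nKt(x_N) > N^\varepsilon$.

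\textbf{Step 2 (hardness and PRG).} Next, read $x_N$ as a truth table. By Step 1, any $L$-oracle circuit of size $s$ computing $x_N$ yields $\nKt(x_N)\le \widetilde{O}(s) + O(\log N)$. Hence the $L$-oracle circuit complexity of $x_N$ is at least $N^{\varepsilon}/\polylog N$. Plug $x_N$ into \Cref{t:IW97-PRG} (or \Cref{t:BFNW93}) to obtain a generator $G$ secure against $L$-oracle circuits of size $N^{\Omega(\varepsilon)}$. The distinguisher is the single-gate $L$-oracle circuit on length $n$, i.e. $L$ itself. By \Cref{l:incomp-strings}, a uniformly random $w\in\{0,1\}^n$ is rejected by $L$ with high probability. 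On the other hand, each output $G(z)$ is produced by composing the Step 1 procedure with the seed, so $\nKt(G(z)) \le N + |z| + O(\log n)$. If this is at most $n^{\varepsilon}$, then every output lies in YES $\subseteq L$. Then $L$ distinguishes $G$ from uniform, and the reconstruction property of the generator contradicts the hardness of $x_N$.

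\textbf{Main obstacle: the parameters.} Getting the parameters to close is the main obstacle. We need $N \lesssim n^{\varepsilon}$ so that outputs land in YES, yet the truth table must defeat a distinguisher of size $\approx n$, which demands $N^{\Omega(\varepsilon)} \ge n$. Taken naively these conflict, and this is exactly the half-exponential barrier noted in the footnote after \Cref{thm:intro_nKt_lb}. I would first try to resolve it with the gap between polynomial and exponential time. The search for $x_N$ costs only $\log T = N$ in $\nKt$. The reconstruction in \Cref{t:BFNW93}, by contrast, needs hardness only against circuits of size $\poly(\log N)$ when the output length is $n=\poly(\log N)$ and the seed is $N^{o(1)}$. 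So I would take $n=2^{N^{\delta}}$-style parameters in the opposite direction: choose the truth table at a length where $N$ is tiny relative to the target hardness, namely a function on $\log N$ inputs, and let the distinguisher act at the length $n$ with $n^{\varepsilon} \gg N$. This requires using the polynomial running time $n^k$ of $L$ inside Step 1 so that $\log T$ stays $O(\log n)$ per query. If this fails, the fallback is to iterate Step 1 a constant number of times, bootstrapping hardness from $N^{\varepsilon}$ upward as in the iterated construction sketched for \Cref{thm:intro_nKt_lb}.
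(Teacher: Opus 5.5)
The paper gives no proof of this lemma (it only cites \citep{DBLP:journals/jcss/AllenderKRR11}), so I am judging your plan on its own terms. It has a genuine gap in Step 2, and the parameter conflict you flag cannot be removed by tuning.

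\textbf{Why Step 2 cannot close.} For the reconstruction to contradict the hardness of $x_N$, you need $N\gtrsim n$. The reconstructed $L$-oracle circuit contains the size-$n$ distinguisher $L_{=n}$, while every $N$-bit truth table already has circuits of size $O(N)$. (Accordingly, \Cref{t:IW97-PRG} uses tables of length $n^{c}$, and \Cref{t:BFNW93} uses tables of length $2^{n^{\delta}}$.) On the other hand, the outputs $G(z)$ are certified simple only through a description of $x_N$. So the best available bound is $\nKt(G(z))\le\nKt(x_N)+O(|z|+\log n)$. But $x_N\notin L\supseteq\mathsf{YES}$ forces $\nKt(x_N)>N^{\varepsilon}\ge n^{\varepsilon}$. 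The very fact that makes $x_N$ hard, namely that it lies outside $L$, therefore prevents you from ever certifying $G(z)\in\mathsf{YES}$. Your explicit bound $N+|z|+O(\log n)$ is even further off.

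\textbf{Why the proposed fixes do not help.}
\begin{itemize}
\item The $N$ in $\nKt(x_N)\le N+O(\log N)$ is the logarithm of the $2^N$-step enumeration, not a per-query cost. The $n^k$ running time of $L$ is therefore irrelevant.
\item Choosing $n^{\varepsilon}\gg N$ does make the outputs YES instances. But then $n>N$, and a reconstructed circuit of size at least $n$ for a function with trivial circuits of size $O(N)$ contradicts nothing.
\item The iteration behind \Cref{thm:intro_nKt_lb} requires $\overline{L}\subseteq\mathcal{R}_{\nKt[\gamma n]}$ with $\gamma>1/2$ (see \Cref{ss:limitations_gamma}). Here you only know $\overline{L}\subseteq\mathcal{R}_{\nKt[n^{\varepsilon}]}$.
\end{itemize}

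\textbf{What is missing.} You need a source of $\nKt$-simple strings whose simplicity does not depend on $L$, plus a uniform collapse that can be played against diagonalization. This requires an ingredient beyond oracle collapsing, counting and black-box reconstruction, all of which relativize. One way to finish, keeping your Step 1:
\begin{enumerate}
\item \Cref{lem:e-to-gapmnktp-nonunif} holds for any language consistent with the promise. Together with padding it gives $\EXP\subseteq\NEXP\subseteq\P^{\parallel L}/\poly$.
\item $\EXP$ has an oracle interactive proof whose honest oracle is itself an $\EXP$ function. Merlin can therefore send a polynomial-size $L$-oracle circuit for it, so $\EXP\subseteq\mathsf{MA}^{L}$. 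This is the non-relativizing step.
\item To derandomize Arthur, Merlin also sends some $y\in\overline{L}$ of suitable polynomial length with its $\NP$ certificate. By the oracle collapse of your Step 1, $y$ has $L$-oracle circuit complexity at least $|y|^{\varepsilon}/\polylog(|y|)$. So \Cref{t:IW97-PRG}, relativized to $L$, fools Arthur. Since $L\in\NP\cap\coNP$, all oracle answers can be guessed together with certificates. Hence $\EXP\subseteq\NP$.
\item Padding then gives $\mathsf{EEXP}\subseteq\NEXP\subseteq\P^{\parallel L}/\poly\subseteq\SIZE^{\SAT}[\poly]$. This contradicts a direct diagonalization that places in $\mathsf{EEXP}$ a language of super-polynomial $\SAT$-oracle circuit complexity.
\end{enumerate}
In short, strings outside $L$ should serve as a hardness test inside a derandomization, not as seeds of a generator whose outputs you must then show are simple.
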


We will use the downward collapse theorem of Shaltiel--Umans.

\begin{definition}[Low-degree extension {\cite[Def.~3.1]{SU06}}]
    We say a complexity class $\mathcal{C}$ allows low-degree extension if $\E^{\mathcal{C}_{\le O(n)}}\subseteq\mathcal{C}$, where $\mathcal{C}_{\le O(n)}$ means that the machine only makes linear-length queries to $\mathcal{C}$.
\end{definition}
\begin{lemma}[Shaltiel--Umans downward collapse {\citep[Thm.~3.2]{SU06}}]
    \label{lem:su-collapse-low-deg-ext}
    For any complexity class $\mathcal{C}$ that allows low-degree extension, we have
    \[
        \mathcal{C}\subseteq\P^{||\NP}/\poly\implies\mathcal{C}\subseteq\NP/\poly\cap\coNP/\poly.
    \]
\end{lemma}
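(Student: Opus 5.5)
The plan is to use the low-degree extension to make a complete language of $\mathcal{C}$ random self-reducible. A $\P^{||\NP}/\poly$ computation evaluated on random points can then be checked by counting YES answers, which places $\mathcal{C}$ in $\NP/\poly$. Closure under complement then gives the $\coNP/\poly$ part.

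\textbf{Step 1: arithmetize.} Fix $L\in\mathcal{C}$, and let $\hat f_n$ be the low-degree extension of $L\cap\bool^n$ over a field $\mathbb{F}$ of size $\poly(n)$, in $m=O(n/\log n)$ variables of individual degree $O(\log n)$. Each bit of $\hat f_n(a)$ can be computed in $\E$ with linear-length queries to $L$. Since $\mathcal{C}$ allows low-degree extension, the bit-language $\hat L$ of $\hat f$ lies in $\E^{\mathcal{C}_{\le O(n)}}\subseteq\mathcal{C}$. By hypothesis, $\hat L$ is decided by a polynomial-time machine $M$ with advice that makes one round of $q=\poly(n)$ nonadaptive $\NP$ queries.

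\textbf{Step 2: random self-reduction.} To compute $L(x)$, view $x$ as a point of $\mathbb{F}^m$. Pick a random line $\ell$ through $x$ and evaluate $\hat f$ at $N=\poly(n)$ points of $\ell$ other than $x$. These points are pairwise independent and each is uniform. Then interpolate with Reed--Solomon decoding, which tolerates a constant fraction of wrong values.

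\textbf{Step 3: remove the $\NP$ oracle by counting.} Let $\mu$ be the expected number of YES answers among the $q$ queries $M$ makes on a uniformly random point of $\mathbb{F}^m$; $\mu$ goes into the advice. Over the $N$ points of the line, the total $Y$ of YES answers has mean $N\mu$, and by Chebyshev $|Y-N\mu|\le\Delta$ with high probability, for a suitable $\Delta$. The nondeterministic procedure guesses a set of queries of size at least $N\mu-\Delta$ together with $\NP$ witnesses for them, declares all other queries NO, and simulates $M$. Every query declared YES is truly YES, so at most $2\Delta$ queries, and hence at most $2\Delta$ points, are answered incorrectly. Choosing $N\gg\Delta$, for example $N$ large relative to $q^2$ so that $\Delta=O(q\sqrt N)$ is a small constant fraction of $N$, Reed--Solomon decoding recovers $\hat f|_\ell$ and hence $L(x)$. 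This gives a nondeterministic single-valued procedure, with $L(x)$ output on every accepting path, that uses randomness to choose $\ell$.

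\textbf{Step 4: fix the randomness.} Amplify by independent repetitions and a majority vote. Then apply Adleman's argument to hardwire one random string that works for all $x\in\bool^n$ into the advice. This yields $L\in\NP/\poly$. Moreover, $\E^{\mathcal{C}_{\le O(n)}}\subseteq\mathcal{C}$ contains the complement of every language in $\mathcal{C}$, so $\overline{L}\in\mathcal{C}$ as well. The same argument then gives $\overline{L}\in\NP/\poly$, that is, $L\in\coNP/\poly$.

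\textbf{Main obstacle.} The delicate step is Step 3, the quantitative concentration. The variance of the YES-count on a single point can be as large as $q^2$, so the slack $\Delta$ must be small compared to $N$ times the decoding radius. This forces $N$ to be polynomially large in $q$, which in turn requires $|\mathbb{F}|$ large enough for a line to contain $N$ distinct points while keeping the degree $dm$ below the decoding threshold. I would first try $|\mathbb{F}|=\poly(q,n)$ with pairwise-independent points on the line. If the error accounting becomes too tight, I would fall back to a low-degree curve through $x$ together with Sudan's list decoding, using nonuniform advice to select the correct list element.
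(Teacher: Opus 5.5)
The paper does not prove this lemma; it imports it from \citep{SU06}, so I am judging your sketch on its own. Your architecture is the right one: random self-reducibility of the low-degree extension, then replacing the parallel $\NP$ queries by guessing witnesses for an advised expected number of YES answers, then Adleman. But Step~3 rests on a false claim made in Step~2. The points $x+t_iv$ (with $t_i\neq 0$ distinct and $v$ uniform) on a random line through the \emph{fixed} point $x$ are each uniform, but they are not pairwise independent. Any one of them determines $v=(p_i-x)/t_i$, and hence all the others. So there is no $O(Nq^2)$ bound on the variance of $Y$, and Chebyshev gives nothing.

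This is a real failure, not just a presentational gap. Fix $x_0\in\bool^n$, and let $M'$ be $M$ padded with $q'\gg q$ dummy queries: copies of a fixed satisfiable formula or of a fixed unsatisfiable formula, chosen by a balanced polynomial-time predicate of the line through $x_0$ and the input point. $M'$ still decides $\hat L$. Along every line through $x_0$, $M'$ has the same dummy YES-count at all $N$ points. So $Y$ is either at most $Nq$ or at least $Nq'$, each with probability about $1/2$, while $N\mu'\approx N(\mu+q'/2)$. On input $x_0$, about half of the lines then admit no accepting path. On the other half, an accepting path may keep only dummy witnesses and drop every genuine YES answer of $M$, leaving the values at all $N$ points uncontrolled. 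A per-line failure rate near $1/2$ of this kind is not cured by the repetition in Step~4.

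Either of two standard repairs works.

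(a) Replace lines by random degree-$2$ curves $c(t)=x+tv+t^2w$ with $v,w$ uniform. For distinct nonzero $t_i,t_j$, the map $(v,w)\mapsto(c(t_i),c(t_j))$ is a bijection, since its determinant is $t_it_j(t_j-t_i)\neq 0$. So the points are uniform and pairwise independent, and $\hat f\circ c$ has degree at most $2D$, where $D$ is the total degree of $\hat f$. Your Chebyshev and Reed--Solomon accounting then goes through with $N=\Theta(q^2+D)$. One caveat: $q$ is only determined after $\hat L$, and hence $\mathbb F$, has been fixed, so the choice $|\mathbb F|=\poly(q,n)$ is circular. Take $|\mathbb F|$ superpolynomial instead, e.g.\ $2^{\lceil\log^2 n\rceil}$; this still keeps $\hat L\in\E^{\mathcal{C}_{\le O(n)}}$.

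(b) Keep lines but count globally, exactly as the paper does in \Cref{lem:bpe-np-ame-n}. Take $R$ independent lines through $x$ with $D+1$ points each, put a rounding of $R(D+1)\mu$ into the advice, and apply Hoeffding to the $R$ independent per-line counts, each lying in $[0,(D+1)q]$. Except with probability $2^{-2n}$, every accepting path then drops at most $O(Dq\sqrt{Rn})$ genuine YES answers. So for $R\gg D^2q^2n$, fewer than $R/2$ lines contain a corrupted point, and plain interpolation yields $L(x)$ on a majority of lines. Here neither $N$ nor $\mathbb F$ depends on $q$, and the error is already small enough for Adleman's union bound.

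With either repair, the rest of your outline is sound. A few minor points:
\begin{itemize}
\item The list-decoding fallback, with advice selecting the list element, cannot work, because the advice cannot depend on $x$.
\item You do not need closure under complement: a single-valued nondeterministic circuit for $L$ already gives $L\in\NP/\poly\cap\coNP/\poly$.
\item Individual degree $O(\log n)$ in $m=O(n/\log n)$ variables does not give $|H|^m\ge 2^n$; use, e.g., $|H|=n$ and $m=n/\log n$.
\end{itemize}
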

It is not hard to verify that both $\E^{||\NP}$ and $\BPE^{||\NP}$ allows low-degree extension. Hence we get the following corollaries:
\begin{corollary}\label{cor:su-collapse-E}
    $\E^{||\NP}\subseteq\P^{||\NP}/\poly\implies\E^{||\NP}\subseteq\NP/\poly\cap\coNP/\poly$
\end{corollary}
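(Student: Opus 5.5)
This corollary follows by instantiating \Cref{lem:su-collapse-low-deg-ext} with $\mathcal{C} = \E^{||\NP}$. The only hypothesis to check is that $\E^{||\NP}$ allows low-degree extension, i.e., that $\E^{(\E^{||\NP})_{\le O(n)}} \subseteq \E^{||\NP}$.

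For that inclusion, I would argue in three steps.

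\emph{Step 1 (setup).} Let $M$ be a $2^{O(n)}$-time machine that makes queries of length $O(n)$ to a language $L \in \E^{||\NP}$. Let $N$ be an $\E$-machine deciding $L$ with one parallel round of $\NP$ queries. On an input of length $m = O(n)$, $N$ runs in time $2^{O(m)} = 2^{O(n)}$. It produces at most $2^{O(n)}$ queries to $\SAT$, each of length $2^{O(n)}$, and these queries are determined by the input alone.

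\emph{Step 2 (removing adaptivity).} The obstacle is that $M$ may query $L$ adaptively, so the overall computation is adaptive in the $\NP$ oracle. I would remove this adaptivity with the standard exponential-time trick. The simulating machine does not follow $M$'s query path. Instead it considers all $2^{O(n)}$ strings $q$ of length at most $c n$ that $M$ could possibly query. For each such $q$ it computes $N$'s parallel query list, which takes total time $2^{O(n)}$. It then submits the union of all these lists to $\SAT$ in one nonadaptive round; the union has size $2^{O(n)}$.

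\emph{Step 3 (finishing the simulation).} From the $\SAT$ answers, the machine evaluates $N(q)$, and hence $L(q)$, for every $q$. It then simulates $M$ deterministically, answering each oracle query by table lookup. The total time is $2^{O(n)}$ and only one parallel round of $\NP$ queries is made, so $L(M) \in \E^{||\NP}$.

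This establishes low-degree extension for $\E^{||\NP}$. \Cref{lem:su-collapse-low-deg-ext} then gives the implication $\E^{||\NP}\subseteq\P^{||\NP}/\poly \Rightarrow \E^{||\NP}\subseteq\NP/\poly\cap\coNP/\poly$.

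Step 2 is the only place where anything beyond bookkeeping happens, and it is routine because exponential time can afford to enumerate every linear-length query.
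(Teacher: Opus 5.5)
Your proposal is correct and follows the paper's route exactly. The paper obtains the corollary by instantiating \Cref{lem:su-collapse-low-deg-ext} with $\mathcal{C}=\E^{||\NP}$, only asserting that $\E^{||\NP}$ allows low-degree extension, and your Steps 1--3 are a correct version of that omitted check: you answer all $2^{O(n)}$ possible length-$O(n)$ queries in one parallel $\NP$ round, then simulate $M$'s adaptive queries by table lookup.
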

\begin{corollary}\label{cor:su-collapse-BPE}
    $\BPE^{||\NP}\subseteq\P^{||\NP}/\poly\implies\BPE^{||\NP}\subseteq\NP/\poly\cap\coNP/\poly$
\end{corollary}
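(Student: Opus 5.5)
The plan is to check the hypothesis of \Cref{lem:su-collapse-low-deg-ext} for $\mathcal{C}=\BPE^{||\NP}$ and then apply that lemma directly. So the task is to show that $\E^{(\BPE^{||\NP})_{\le O(n)}}\subseteq\BPE^{||\NP}$, i.e., that $\BPE^{||\NP}$ allows low-degree extension.

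Fix a language $L\in\BPE^{||\NP}$ and a deterministic $2^{O(n)}$-time oracle machine $M$ whose queries to $L$ all have length at most $cn$. These queries may be \emph{adaptive}. The main obstacle is that we cannot simply compose the algorithms. Answering each adaptive query to $L$ by running the $\BPE^{||\NP}$ algorithm for $L$ would make the resulting $\NP$ queries depend on earlier $\NP$ answers, which destroys non-adaptivity.

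The way around this uses the length bound. There are only $2^{O(n)}$ strings of length at most $cn$, so we precompute the whole table $\{L(q)\}_{|q|\le cn}$ in a single parallel round, and only afterwards run $M$.

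First, I amplify the success probability of the algorithm $B$ for $L$. On inputs of length $m\le cn$, I take the majority of $O(n)$ independent runs. This is still a $2^{O(m)}$-time algorithm with only non-adaptive $\NP$ queries, and its error is at most $2^{-cn-3}$.

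Next, the combined algorithm works in three stages.
\begin{enumerate}
\item It samples independent randomness $r_q$ for every $q$ with $|q|\le cn$.
\item It computes, for all $q$ simultaneously, the list of $\NP$ queries that the amplified $B$ would make on $(q,r_q)$. These lists do not depend on any oracle answers, so all $2^{O(n)}\cdot 2^{O(n)}$ queries, each of length $2^{O(n)}$, can be issued as one non-adaptive batch. This fits in time $2^{O(n)}$.
\item From the answers it evaluates $B(q;r_q)$ for every $q$, stores the table, and finally simulates $M$, answering each oracle call by table lookup.
\end{enumerate}

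A union bound over the at most $2^{cn+1}$ strings shows that the whole table is correct with probability at least $3/4$. In that case the output equals $M^L(x)$. The total running time is $2^{O(n)}$, and the oracle access is non-adaptive, so $M^L\in\BPE^{||\NP}$.

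With low-degree extension established, \Cref{lem:su-collapse-low-deg-ext} immediately gives the corollary. The same argument, with the randomness removed, verifies \Cref{cor:su-collapse-E} as well.
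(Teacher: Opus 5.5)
Your proposal is correct and follows the paper's route: the paper only asserts that $\BPE^{||\NP}$ allows low-degree extension and then applies \Cref{lem:su-collapse-low-deg-ext}. Your argument (amplify to error $2^{-cn-3}$, then precompute the whole table of $L$ on all strings of length at most $cn$ in one non-adaptive round before simulating $M$) is a valid way to fill in that omitted verification; the amplified runs on short inputs cost $O(n)\cdot 2^{O(m)}$ rather than $2^{O(m)}$, but only the total $2^{O(n)}$ bound matters.
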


\section{Useful Results About Nondeterministic Kolmogorov Complexity}
In this section, we present several useful facts about various nondeterministic and relativized notions of Kolmogorov complexity. We note that all the results presented here also hold for the corresponding notions in the presence of any oracle. For simplicity, in below, we consider only the cases without oracles. It is straightforward to adapt all the proofs to handle the more general setting.

\subsection{\texorpdfstring{$\nK$}{}}\label{ss:nK_prelim}

\begin{lemma}\label{lem:kA-leq-nk-super-script}
    There exists a constant $c > 0$ such that for any strings $x, y \in \bool^*$ and $t \in \mathbb{N}$,
    \[
        \K^{t', \parallel \SAT}(x \mid y) \leq \nK^t(x \mid y) + c\cdot \log t,
    \]
    where $t' = (t + |y|)^c$.
\end{lemma}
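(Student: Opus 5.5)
Let $\Pi$ be an optimal program witnessing $\nK^t(x\mid y)$, so $|\Pi| = \nK^t(x\mid y)$. On every nondeterministic guess $w\in\bool^t$, $\Pi(y;w)$ halts within $t$ steps with output $x$ or $\bot$, and some $w$ yields $x$. I will build a deterministic $\SAT$-oracle program $\Pi'$ that, given $y$, outputs $x$ using one round of nonadaptive queries. Its description is $\Pi$ plus the number $t$, which costs $O(\log t)$ extra bits; this is where the additive $c\log t$ comes from.

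The first difficulty is that the queries must not depend on the unknown $x$, and $\Pi'$ does not know $|x|$ either. Since $\Pi$ halts within $t$ steps, we have $|x|\le t$, and every bit of $x$ can be specified by a query that does not mention $x$. For each $i\in[t]$, I form the formula $\phi_i$ expressing: there exists $w\in\bool^t$ such that $\Pi(y;w)$ halts within $t$ steps with output $z\neq\bot$, $|z|\ge i$, and $z_i=1$. I also form $\psi_i$ expressing: there exists $w$ such that the output $z\neq\bot$ satisfies $|z|\ge i$. By Cook--Levin, each such NP statement becomes a $\SAT$ instance of size $\poly(t+|y|+|\Pi|)$, and $|\Pi|\le t$ may be assumed.

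Because every non-$\bot$ output equals $x$, each $\psi_i$ is true exactly when $i\le|x|$, and each $\phi_i$ is true exactly when $x_i=1$ (for $i\le |x|$). So $\Pi'$ writes down all $2t$ formulas (they depend only on $\Pi$, $y$ and $t$), queries them in parallel, reads off $|x|$ from the $\psi_i$ answers and $x$ from the $\phi_i$ answers, and prints $x$. The total running time is $\poly(t+|y|)$, so there is a constant $c$ with $t'=(t+|y|)^c$ bounding it.

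The remaining work is bookkeeping. I will fix $\Pi'$ as a machine $M'$ whose advice is $(\Pi,t)$, encoded self-delimitingly so that $|\Pi'|\le|\Pi|+O(\log t)+O(1)$. Constant overheads in running time, including the one for simulating $\Pi$ through the time-efficient universal machine $U$, are absorbed into $c$. The only subtle point, and the step I expect to need the most care, is making the query set genuinely nonadaptive and independent of $x$, including its unknown length. The $\psi_i$ length queries handle this. Note also that correctness relies only on the promise that all non-$\bot$ outputs agree; no uniqueness of the witness $w$ is needed.
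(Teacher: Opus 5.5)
Your proposal is correct and follows essentially the same route as the paper: hardwire $\Pi$ and $t$, and recover each bit of $x$ with one round of parallel $\SAT$ queries asking whether some witness $w$ makes $\Pi(y;w)$ output a non-$\bot$ string whose $i$-th bit is $1$. The only difference is cosmetic. The paper simply stores $|x|$ in the description, which costs another $O(\log t)$ bits since $|x|\le t$, instead of recovering it with your extra length queries $\psi_i$; so the ``first difficulty'' you flag is not really an obstacle.
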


\begin{proof}
    Let $x, y \in \bool^*$ and $t \in \mathbb{N}$. We assume without loss of generality that $t \geq |x|$. Define $k = \nK^t(x \mid y)$, and let $\Pi \in \bool^k$ be a program that nondeterministically outputs $x$ on input $y$ in time $t$. 

    Given $\Pi$, $|x|$, $t$, and $y$, we reconstruct $x$ as follows: For each $i \in [|x|]$, we use the $\SAT$ oracle to answer the following query: 
    \begin{quote}
        Does there exist a $w \in \bool^t$ such that $\Pi^{\mathcal{O}}(y; w)$ outputs a string $z \in \bool^{|x|}$ in time $t$ and the $i$-th bit of $z$ is $1$?
    \end{quote}
    We then set the $i$-th bit of $x$ to be $1$ if and only if the answer to this query is yes.    

    It is easy to verify that the above procedure runs in time $\poly(|x| + |y| + t)$. Also, by the fact that $\Pi$ nondeterministically outputs $x$ on input $y$ in time $t$, it can correctly recover $x$. Thus, for $t' = \poly(|y| + t)$, we obtain
    \begin{align*}
        \K^{t', \parallel \SAT}(x \mid y) &\leq |\Pi| + O(\log |t|) \\
        &\leq k + O(\log t),
    \end{align*}
    as desired.
\end{proof}

\begin{corollary}\label{lem:ktA-leq-nkt}
     There exists a constant $c > 0$ such that for any strings $x, y \in \bool^*$,
    \[
        \Kt^{\parallel \SAT}(x \mid y) \leq c\cdot \nKt(x \mid y) + O(\log |y|).
    \]
\end{corollary}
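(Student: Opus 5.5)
The corollary should follow directly from \Cref{lem:kA-leq-nk-super-script} by converting the $\nK^t$ bound into a Levin-style bound. Set $k = \nKt(x \mid y)$. By definition, there are a program $\Pi$ and a time bound $t$ with $|\Pi| + \lceil \log t \rceil = k$ such that $\Pi$ nondeterministically outputs $x$ on input $y$ in time $t$. In particular $\nK^t(x \mid y) \leq |\Pi| \leq k$ and $\log t \leq k$.

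Applying \Cref{lem:kA-leq-nk-super-script} to this $t$ gives
\[
\K^{t', \parallel \SAT}(x \mid y) \leq k + c \log t \leq (c+1)k, \qquad t' = (t+|y|)^c .
\]
A $\K^{t',\parallel\SAT}$-description that runs within $t'$ steps is also a $\Kt^{\parallel \SAT}$-description, at cost $\lceil \log t' \rceil$. Therefore
\[
\Kt^{\parallel \SAT}(x \mid y) \leq (c+1)k + \lceil c \log(t + |y|) \rceil + O(1).
\]
Using $\log(t+|y|) \leq \log t + \log|y| + 1 \leq k + \log|y| + 1$, this is at most $(2c+1)k + c\log|y| + O(1)$. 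This has the required form $c' \cdot \nKt(x \mid y) + O(\log|y|)$ for the constant $c' = 2c+1$. The additive $O(1)$ can be absorbed into the multiplicative term whenever $k \geq 1$, which always holds since every program has positive length.

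No step is a real obstacle. The only point to check is that the $\lceil\log t\rceil$ term in $\nKt$ already pays for the polynomial blow-up in running time and for the $O(\log t)$ description overhead coming from \Cref{lem:kA-leq-nk-super-script}. The dependence on $|y|$ in $t'$ cannot be charged to $\nKt(x\mid y)$; that is exactly why the $O(\log|y|)$ term appears. This works because the reconstruction in \Cref{lem:kA-leq-nk-super-script} must read $y$ and so takes time polynomial in $t+|y|$ rather than in $t$ alone.
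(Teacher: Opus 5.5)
Your proof is correct and takes the same route as the paper. You take an optimal $\nKt$ witness $(\Pi,t)$, apply Lemma~\ref{lem:kA-leq-nk-super-script} at time $t$, pay $\lceil\log t'\rceil$ for $t'=\poly(t+|y|)$, absorb the $O(\log t)$ overhead into the multiplicative constant, and put the $|y|$-dependence into the additive $O(\log|y|)$ term.
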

\begin{proof}
    Let $x, y \in \bool^*$ and $k = \nKt(x \mid y)$. Then there exists a nondeterministic program $\Pi$ that outputs $x$ given $y$ in time $t$, where $|\Pi| + \lceil \log t \rceil \leq k$. This implies $\nK^t(x \mid y) \leq |\Pi|$.

    By \Cref{lem:kA-leq-nk-super-script}, we have
    \[
        \K^{t', \parallel \SAT}(x \mid y) \leq |\Pi| + O(\log t),
    \]
    where $t' = \poly(|y| + t)$. It follows that there exists a nondeterministic program of size at most $|\Pi| + O(\log t)$ that outputs $x$ given $y$ in time $t'$, which implies
    \[
        \nKt(x \mid y) \leq |\Pi| + O(\log t) + \log t' \leq c\cdot\nKt(x \mid y) + O(\log |y|),
    \]
    as desired.
\end{proof}

\begin{lemma}\label{lem:nk-leq-ksat-superscript}
    There exists a constant $c' \in \nat$ as follows. For every language $A \in \NP$, there exists a constant $c \in \mathbb{N}$ such that for any strings $x, y \in \bool^*$ and $t \in \mathbb{N}$,
    \[
        \nK^{t^c}(x \mid y) \leq \K^{t,\parallel A}(x \mid y) + c \cdot \log t.
    \]
\end{lemma}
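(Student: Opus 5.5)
Let $\Pi$ be a program of size $k=\K^{t,\parallel A}(x\mid y)$ witnessing the right-hand side. On input $y$, $\Pi$ runs in time $t$, computes a list of at most $t$ nonadaptive queries $q_1,\dots,q_m$ to $A$ (each of length at most $t$), receives the answer vector $b\in\bool^m$, and continues deterministically to output $x$. Fix an $\NP$ verifier $V_A$ for $A$ with running time $|q|^{a}$. The constant $a$ depends only on $A$; this is why $c$ depends on $A$, while $c'$ only absorbs the universal-machine overhead. I will build a nondeterministic program $\Pi'$ that stores $\Pi$ together with the single number $N=|\{i: q_i\in A\}|$. It has size $k+O(\log t)$, since $N\le t$ and a self-delimiting encoding of $N$ costs $O(\log t)$ bits.

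$\Pi'$ does the following on $y$ with nondeterministic string $w$. It simulates $\Pi$ up to the query phase to obtain $q_1,\dots,q_m$; this is deterministic, so every path produces the same list. It then reads from $w$ a subset $S\subseteq[m]$ together with strings $w_i$ for $i\in S$. If $|S|\neq N$, or if some $V_A(q_i,w_i)$ rejects for $i\in S$, it outputs $\bot$. Otherwise it sets $b_i=1$ exactly for $i\in S$, resumes $\Pi$ with answers $b$, and outputs the result.

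For correctness, let $S^*=\{i:q_i\in A\}$. Any accepting path certifies $S\subseteq S^*$, and since $|S|=N=|S^*|$ we get $S=S^*$. So every non-$\bot$ path uses the true answer vector and outputs $x$. Conversely, the path that guesses $S^*$ with valid witnesses exists and outputs $x$. This census trick is the key step: it is where nonadaptivity matters, because all queries are fixed before any answers are known, so a single count pins down the whole answer vector.

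For the time bound, the query phase costs $t$ steps. Verifying up to $t$ queries of length at most $t$ costs $t\cdot t^{a}$ steps. Universal simulation overhead under $U$ adds a polynomial factor. Both the total time and the witness length are therefore at most $t^{c}$ for a suitable $c=O(a)$.

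The main obstacle is bookkeeping rather than substance. First, the witness string has length exactly $t^c$ and is read sequentially, so it must be parsed into $S$ and the $w_i$ with padding; this can be done with $O(t^{a+1})$ bits. Second, the universal-machine overhead has to be absorbed into the exponent. Third, $N$ and the description of the modified machine must be encoded within the $c\log t$ additive slack. The conditional string $y$ is accessed exactly as in $\Pi$, so it introduces no extra dependence. This gives $\nK^{t^c}(x\mid y)\le k+c\log t$.
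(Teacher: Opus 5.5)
Your proposal is correct and follows essentially the same argument as the paper. Both store the number of positive answers among the nonadaptive queries ($O(\log t)$ bits), nondeterministically guess and verify witnesses, and output $\bot$ unless the count of verified queries matches the stored number, which forces the true answer vector. The only cosmetic difference is that you guess the subset $S$ explicitly, whereas the paper guesses a witness for every query and counts those that verify.
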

\begin{proof}
    Let $x, y \in \bool^*$ and $t \in \mathbb{N}$, and define $k = \K^{t,\parallel A}(x \mid y)$. Then there exists a program $\Pi \in \bool^k$ that on input $y$, makes only non-adaptive queries to $A$ and outputs $x$ within time $t$.

As in the proof of \cite[Lemma 31]{HLO25}, we will eliminate the oracle access to $A$ by guessing the answers to the (non-adaptive) queries to $A$ and verifying them with the help of an additional small amount of advice. We provide the details for completeness.

    Suppose $A$ is the following oracle:
\[
A(q) = 1 \iff \exists w \in \bool^{\poly(|q|)} \text{ such that } V(q, w) = 1,
\]
where $V$ is some polynomial-time verifier. Let $q_1, q_2, \dots, q_{\ell}$, where $\ell \leq t$, be the (non-adaptive) queries made by $d$ to the oracle $A$. Also, let $p$ be the number of positive answers to these queries, i.e., $p \vcentcolon= \sum_{i=1}^{\ell} A(q_i)$. We then non-deterministically guess $\ell$ witnesses $(w_1, w_2, \dots, w_\ell)=\vcentcolon w$ and count the number of indices $i$ for which $V(q_i, w_i) = 1$. Denote this number by $u_w$. If $u_w$ is not equal to $p$, we output $\bot$. Otherwise, for each $w_i$ that satisfies the verifier $V$, we set the answer to the query $q_i$ to be $1$, and we let the rest of the answers be $0$. We then run $d$ while simulating the oracle $A$ using the answers obtained as described above.

Note that, given the number $p$, which can be specified using $\lceil \log \ell \rceil \leq \lceil \log t \rceil$ bits, the above procedure can be implemented to run in time $\poly(t)$. Also, it is easy to observe that there exists some guess of $w=(w_1, w_2, \dots, w_\ell)$ that satisfies $u_w= p$. Moreover, for any guess of $w$ that satisfies $u_w= p$, the answers obtained are the correct answers to the queries $(q_1, q_2, \dots, q_{\ell})$ for $A$. It follows that there exists a program of size at most $|\Pi| + O(\log t)$ that, given $y$, runs in time $\poly(t)$ and outputs $x$ non-deterministically.
Thus, for some constant $c \in \mathbb{N}$,
    \begin{align*}
        \nK^{t^c}(x \mid y) &\leq |\Pi| + c \cdot \log t
        \\&\leq k + c \cdot \log t,
    \end{align*}
    as desired.
\end{proof}

As a corollary of \Cref{lem:nk-leq-ksat-superscript}, we get the following.
\begin{corollary}\label{lem:nkt-leq-ktsat}
    For every language $A \in \NP$, there exists a constant $c \in \mathbb{N}$ such that for any strings $x, y \in \bool^*$,
    \[
        \nKt(x \mid y) \leq c \cdot \Kt^{\parallel A}(x \mid y).
    \]
\end{corollary}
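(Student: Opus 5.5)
The plan is to pass from $\Kt^{\parallel A}$ to $\nKt$ by applying \Cref{lem:nk-leq-ksat-superscript} to an optimal $\Kt^{\parallel A}$-description, exactly as \Cref{lem:ktA-leq-nkt} was derived from \Cref{lem:kA-leq-nk-super-script}. Fix $A \in \NP$ and let $c_0$ be the constant that \Cref{lem:nk-leq-ksat-superscript} provides for $A$. Given $x, y$, let $k = \Kt^{\parallel A}(x \mid y)$. Choose a program $\Pi$ and a time bound $t$ such that $\Pi$, on input $y$ and with non-adaptive queries to $A$, outputs $x$ within $t$ steps, and $|\Pi| + \lceil \log t \rceil = k$. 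By definition, $\K^{t,\parallel A}(x \mid y) \le |\Pi|$.

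Applying \Cref{lem:nk-leq-ksat-superscript} gives
\[
\nK^{t^{c_0}}(x \mid y) \le |\Pi| + c_0 \log t .
\]
So there is a nondeterministic program of length at most $|\Pi| + c_0\log t$ that outputs $x$ from $y$ within $t^{c_0}$ steps. Charging $\lceil \log t^{c_0} \rceil \le c_0\lceil\log t\rceil$ for the running time, we get
\[
\nKt(x \mid y) \le |\Pi| + 2c_0\lceil \log t\rceil \le 2c_0\,(|\Pi| + \lceil\log t\rceil) = 2c_0 k .
\]
Setting $c = 2c_0$ (assuming $c_0 \ge 1$) finishes the argument.

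The only step needing care is a degenerate case. If $t$ is tiny, for instance $t = 1$ so that $\log t = 0$, the $\poly(t)$ overhead hidden in \Cref{lem:nk-leq-ksat-superscript} (such as the $\lceil \log \ell\rceil$ bits encoding the number $p$ of positive answers) is not literally bounded by $t^{c_0}$. The fix is to assume without loss of generality that $t \ge 2$, or more generally that $t$ is at least a fixed constant, by padding. This changes $k$ by at most an additive constant, which is absorbed into the multiplicative $c$ because $k \ge 1$ for every nontrivial description.

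Unlike \Cref{lem:ktA-leq-nkt}, no additive $O(\log|y|)$ term is needed. This is because \Cref{lem:nk-leq-ksat-superscript} has time bound $t^{c}$ independent of $|y|$, which holds since the original $\Pi$ already reads only what it can within $t$ steps.
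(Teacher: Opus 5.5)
Your proposal is correct and follows the route the paper intends: the paper derives this corollary directly from \Cref{lem:nk-leq-ksat-superscript}, exactly as \Cref{lem:ktA-leq-nkt} is derived from \Cref{lem:kA-leq-nk-super-script}, by applying the lemma to an optimal $\Kt^{\parallel A}$ program and paying $\lceil \log t^{c}\rceil = O(\log t)$ for the running time. Your handling of the degenerate case $t=1$ and your remark that no $O(\log|y|)$ term arises are sound additions that the paper leaves implicit.
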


\begin{lemma}\label{l:K-NP-coNP-superscript} 
There is a constant $c>0$ such that the following holds.
    For any time-constructible, non-decreasing $T: \mathbb{N} \to \mathbb{N}$, $T(n) \ge n$ and every $L \in \NTIME[T(n)] \cap \coNTIME[T(n)]$, $x, y \in \bool^*$, and sufficiently large $t \in \mathbb{N}$,
    \[
        \nK^{t \cdot T(t)\cdot \log^c(T(t))}(x \mid y) \leq \nK^{t,L}(x \mid y) + c\cdot \log \log t.
    \]
    Specifically, restricting query length to at most $\ell \in \mathbb{N}$, for every $x, y \in \bool^*$ and sufficiently large $t \in \mathbb{N}$,
    \[
        \nK^{t\cdot T(\ell) \cdot \log^c(T(\ell) \cdot t)}(x \mid y) \leq \nK^{t,L_{\le \ell}}(x \mid y) + c \cdot \log \log t.
    \]
\end{lemma}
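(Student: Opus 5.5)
We follow the strategy of \Cref{lem:nk-leq-ksat-superscript}: replace each oracle answer by a nondeterministically guessed certificate. The twist is that $L \in \NTIME[T]\cap\coNTIME[T]$ lets us certify \emph{both} answers directly. So, unlike the non-adaptive $\NP$ case, we need no advice such as the number $p$ of positive answers, and adaptivity causes no trouble.

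Fix nondeterministic machines $N_1$ and $N_0$, running in time $O(T(n))$, that decide $L$ and $\overline{L}$ respectively. Let $\Pi$ be an optimal program witnessing $\nK^{t,L}(x\mid y)=k$. So $\Pi$ has its own nondeterministic tape, makes (possibly adaptive) oracle queries to $L$, and runs for at most $t$ steps. Every query has length at most $t$, and at most $t$ queries are made. We build a new oracle-free nondeterministic program $\Pi'$ consisting of $\Pi$ together with constant-size code. Its witness is parsed as $(w, \pi_1,\pi_2,\dots)$. $\Pi'$ simulates $\Pi(y;w)$ step by step. Whenever $\Pi$ issues a query $q_j$, $\Pi'$ reads the next block of the witness, which it interprets as a claimed bit $b_j$ together with an accepting computation of $N_{b_j}$ on $q_j$. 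It verifies this in time $O(T(|q_j|))$ and outputs $\bot$ if verification fails; otherwise it feeds $b_j$ to $\Pi$ as the answer.

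\textbf{Correctness.} Any verified bit $b_j$ equals $L(q_j)$, since $N_1$ and $N_0$ have accepting paths only on $L$ and $\overline{L}$ respectively. Hence every non-$\bot$ path of $\Pi'$ faithfully simulates some path of $\Pi^L(y;w)$ and outputs $x$ or $\bot$. Conversely, choosing the correct certificates together with a good $w$ makes $\Pi'$ output $x$.

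\textbf{Running time.} $\Pi'$ takes $t$ simulation steps plus, for each of at most $t$ queries of length at most $t$ (or at most $\ell$ in the restricted version), a verification of cost $T(t)$ (resp.\ $T(\ell)$). Monotonicity of $T$ bounds this by $t\cdot T(t)$ (resp.\ $t\cdot T(\ell)$). The universal-machine simulation overhead of \Cref{sec:preliminaries} contributes the $\log^c$ factor. The witness length is also within this time bound.

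\textbf{Description length.} One subtlety remains: $\Pi'$ must know when to stop and how to parse the witness. Its time bound is measured externally, but the simulation must cap $\Pi$ at $t$ steps, so $\Pi'$ needs the value $t$, or at least enough to compute the bound. I expect this to be the main obstacle, because it is what produces the $c\log\log t$ term rather than the $O(\log t)$ term in \Cref{lem:nk-leq-ksat-superscript}.

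\textbf{Handling the obstacle.} The plan is to avoid storing $t$ exactly. We store only $\lceil\log t\rceil$, which costs $O(\log\log t)$ bits, and run $\Pi$ for $2^{\lceil \log t\rceil}\le 2t$ steps, treating any longer run as $\bot$. Since $\Pi$ halts within $t$ steps on all relevant paths, this changes nothing. It also rescales the time bound by a constant, which is absorbed by $\log^c$ and monotonicity of $T$ (assuming $T(2t)\le \poly(T(t))$ is not needed, since we may instead round in the statement's favour by noting that all queries still have length at most $t$).

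Alternatively, one could note that if $\Pi$ is run inside the universal machine with external clock $t'$, then no explicit bound is needed at all. In that case the only overhead is the constant-size wrapper, which is better than claimed. I would check carefully which convention the model in \Cref{sec:preliminaries} enforces, and settle on the $O(\log\log t)$ encoding as the safe choice. The total size is then $k+O(1)+O(\log\log t)$, giving both claimed inequalities.
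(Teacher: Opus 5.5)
Your proposal is correct and follows the paper's proof: guess each oracle answer together with an $\NTIME$ or $\coNTIME$ certificate, verify it in time $\widetilde{O}(T(\ell))$, output $\bot$ on failure, and pay $O(\log\log t)$ bits by rounding $t$ to a power of $2$ and storing $\log t$. One point is worth making explicit: $c$ is quantified before $L$, so the $L$-dependent constants (the code of $N_0,N_1$ and their running-time overhead) are absorbed into $c\log\log t$ and the $\log^c$ factor only for sufficiently large $t$, as the paper's footnote notes.
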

\begin{proof}
    Let $L \in \NTIME[T(n)] \cap \coNTIME[T(n)]$, $x, y \in \bool^*$, and $t \in \mathbb{N}$. Without loss of generality, assume $t$ is a power of $2$; if not, replace $t$ with $2^{\lceil \log t \rceil}$ in the argument below at the cost of at most a factor of $2$ in the resulting running time. Define $k = \nK^{t,L_{\leq \ell}}(x \mid y)$, and let $\Pi \in \bool^k$ be a program that non-deterministically outputs $x$ in time $t$ given $y$ and \emph{oracle access to $L_{\leq \ell}$}.

    We now simulate the program $\Pi$ \emph{without} oracle access to $L$. To do this, whenever $\Pi$ makes a query $q \in \{0, 1\}^{\leq \ell}$ to $L$, we non-deterministically guess an answer (``yes'' or ``no'') along with a witness $w_q \in \{0, 1\}^{\leq T(\ell)}$ certifying that answer. Since $L \in \NTIME[T(n)] \cap \coNTIME[T(n)]$, each such guess can be verified in time $\widetilde{O}(T(\ell))$. Whenever the verification fails, halt with $\bot$ being the output. Observe that the resulting program is single-valued, by definition of $\Pi$ and since all oracle queries are verified.
     
    This yields a nondeterministic program of size at most $|\Pi| + O(\log \log t)$ that outputs $x$ in time $t\cdot T(\ell) \cdot \poly\log(T(\ell) \cdot t)$ given $y$ (without oracle access to $L$). Therefore,
    \[
        \nK^{t\cdot T(\ell) \cdot \log^c(T(\ell) \cdot t)}(x \mid y) \leq \nK^{t,L_{\leq \ell}}(x \mid y) + c \cdot \log\log t
    \]
    for some constant $c > 0$.\footnote{Note that the $L$-verifier descriptions and their constant-factor simulation overhead depend on $L$ but not on
    $t$, and are therefore absorbed into the stated bounds once $t$ is sufficiently large; the exponent
    $c$ comes only from the universal machine's simulation overhead and is independent of $L$.}
    
    The first bound is the case
    $\ell = t$. This concludes the proof.
    
\end{proof}

    Again, as a corollary of \Cref{l:K-NP-coNP-superscript}, we get the following.

\begin{corollary}\label{l:Kt-NP-coNP}
    For every $L \in \NP \cap \mathsf{coNP}$, there exists a constant $c > 0$ such that for every $x, y \in \bool^*$,
    \[
        \nKt(x \mid y) \leq c \cdot \nKt^L(x \mid y).
    \]
\end{corollary}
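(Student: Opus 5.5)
The plan is to derive \Cref{l:Kt-NP-coNP} directly from \Cref{l:K-NP-coNP-superscript}, applied with a polynomial time bound, in the same way \Cref{lem:nkt-leq-ktsat} follows from \Cref{lem:nk-leq-ksat-superscript}. Since $L \in \NP \cap \coNP$, there is a constant $a$ with $L \in \NTIME[n^a] \cap \coNTIME[n^a]$. We therefore invoke \Cref{l:K-NP-coNP-superscript} with $T(n) = n^a$, which is time-constructible, non-decreasing and satisfies $T(n) \ge n$.

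Fix $x, y$ and let $k = \nKt^L(x \mid y)$. Take a witnessing pair $(\Pi, t)$ with $|\Pi| + \lceil \log t\rceil = k$, so that $\nK^{t,L}(x \mid y) \le |\Pi|$. The first bound of \Cref{l:K-NP-coNP-superscript}, i.e.\ the case $\ell = t$, then gives
\[
\nK^{t'}(x\mid y) \le |\Pi| + c_0 \log\log t, \qquad t' = t\cdot t^{a}\cdot \log^{c_0}(t^a) \le t^{a+2}
\]
for sufficiently large $t$. Consequently
\[
\nKt(x\mid y) \le |\Pi| + c_0\log\log t + \lceil (a+2)\log t\rceil \le (a+3)\,(|\Pi| + \lceil\log t\rceil) + O(1) \le c\cdot k
\]
for a suitable constant $c$, using $k \ge 1$ to absorb the additive constant.

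The only point needing care is the phrase ``sufficiently large $t$'' in \Cref{l:K-NP-coNP-superscript}. We handle it by padding: replace $t$ by $\max(t, t_0)$, where $t_0$ is a constant depending only on $L$. A program that runs within $t$ steps also runs within $t_0 \ge t$ steps, since the witness-length convention only enlarges the set of quantified $w$, and extra witness bits are irrelevant when the machine halts earlier. This costs only an additive $O(1)$ in the exponent of the description cost, which the multiplicative constant $c$ absorbs because $k\ge1$.

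Otherwise the argument is routine. The main thing to check is that the oracle-simulating nondeterministic program produced by \Cref{l:K-NP-coNP-superscript} is single-valued. That lemma already guarantees this, since every guessed oracle answer is verified.
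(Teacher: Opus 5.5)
Your proposal is correct and takes the same route as the paper, which states this result as an immediate corollary of \Cref{l:K-NP-coNP-superscript} (with $T(n)=n^a$ and $\ell=t$). Your bound $\log t'\le (a+2)\log t$ and your padding of small $t$ up to a constant $t_0$, which handles the ``sufficiently large $t$'' proviso, supply the routine details the paper leaves implicit.
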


\subsection{\texorpdfstring{$\rnK$}{}}

We prove that the success probability of $\rnK^t_\delta$ can be amplified, a task that is non-trivial if we start with a probability of success $\delta$ that is not noticeably larger than $1/2$. For randomized notions of Kolmogorov complexity such as $\rK^t$, success amplification in this regime can be achieved  using standard hashing-based techniques. However, in $\rnK^t$ we have both randomness and nondeterminism, and must ensure that a good choice of the randomness leads to a nondeterministic encoding that never outputs the wrong string. This creates an additional difficulty and requires a modified construction. 

To achieve success amplification for $\rnK^t$, we use an idea from \cite{DBLP:journals/cc/BuhrmanLM05}. Roughly speaking, on top of the hashing-based framework for success amplification, we also consider the average number of positive $\NP$ queries. Then, using a frequency analysis and concentration bounds, one can prove that when we take enough samples, with high probability the number of wrong $\NP$-oracle answers is small. The technique essentially allows one to reduce the  argument to that of randomized time-bounded Kolmogorov complexity, i.e., when nondeterminism is not present.

\begin{lemma}[Success Amplification for $\rnK^t$]\label{l:success-amp-rnK}
    For any oracle $\mathcal{O}$, any $n \in \nat$, any time bound $t\in\mathbb{N}$, any $\varepsilon,\delta\in(0,1)$ and any strings $x,y\in\{0,1\}^*$ such that $|x| = n$ and $|x|\le t$, we have
    \[
        \rnK^{t\cdot p(n\log(t/\varepsilon)/\delta), \mathcal{O}}_{1-\varepsilon}(x\mid y)\le\rnK^{t, \mathcal{O}}_\delta(x\mid y)+\log p(n\log(t/\varepsilon)/\delta).
    \]
\end{lemma}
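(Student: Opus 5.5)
Let $\Pi$ be a program of length $k=\rnK^{t,\mathcal{O}}_\delta(x\mid y)$. Call $r\in\bool^t$ \emph{good} if $\Pi(y;\cdot;r)$ nondeterministically outputs $x$ in time $t$; good strings have density at least $\delta$. For any $r$, $\Pi(y;\cdot;r)$ is either single-valued with some output $z_r$, has no accepting path, or is multi-valued. I will build a new program $\Pi'$ that draws $m=O(n\log(t/\varepsilon)/\delta^2)$ independent strings $r_1,\dots,r_m$ and a hash seed $h$ from its randomness, and uses nondeterminism to guess the outputs of $\Pi$ on each $r_j$. The target is that, for a $1-\varepsilon$ fraction of the randomness, every accepting path of $\Pi'$ outputs exactly $x$. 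The only advice added to $\Pi$ will be $O(\log(nm\log t/\delta))$ bits: a count $p$, a hash value, and the parameters.

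Following \cite{DBLP:journals/cc/BuhrmanLM05}, the first ingredient is the positive-answer count. For each $j$, the predicate ``$\Pi(y;\cdot;r_j)$ has an accepting path'' is an NP-type query. Let $q$ be its acceptance probability over a random $r$. A Chernoff bound shows that the number of $j$ with an accepting path lies within $\pm\delta m/8$ of $qm$ except with probability $\varepsilon/3$. Hardwire $p\approx qm$, rounded to a grid of granularity $\delta m/8$; this costs $O(\log(1/\delta))$ bits. $\Pi'$ then guesses a set $S\subseteq[m]$ with $|S|\ge p-\delta m/8$ together with accepting witnesses $w_j$ for each $j\in S$, and records the outputs $z_j=\Pi(y;w_j;r_j)$. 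Under the lower bound on $|S|$, any valid guess can omit at most about $\delta m/4$ of the truly accepting indices. Since good indices number at least $\delta m/2$ with high probability, the guessed set always retains at least $\delta m/4$ indices $j$ with $z_j=x$.

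The main obstacle is multi-valued or wrong $r_j$. These can contribute arbitrary strings $z_j\neq x$, and because nondeterminism is adversarial, a plurality vote is unsafe. I would use hashing here, as in $\rK^t$ amplification. Hardwire $g=h(x)$ for a pairwise-independent $h$ with $O(\log(nm/\varepsilon))$ output bits, chosen from the random tape. $\Pi'$ keeps only values $z$ that occur at least $\delta m/4$ times among the $z_j$ and satisfy $h(z)=g$; it outputs such a $z$ if it is unique, and outputs $\bot$ otherwise.

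To rule out a wrong output, I fix the samples $r_1,\dots,r_m$ and bound the number of values $z$ that could ever be supported by $\delta m/4$ indices. For each $j$, the set of possible outputs $\{z:\exists w\ \Pi(y;w;r_j)=z\}$ can be large when $r_j$ is bad, so this is the delicate point. I will handle it by counting per $r$: let $B$ be the set of bad $r$ whose output set is not $\{x\}$. I would also require $m$ large enough that a pigeonhole or hashing argument conditioned on $\langle r_1,\dots,r_m\rangle$ works. Concretely, I take $h$ from a family with output length $\log(2^{n}\cdot 3/\varepsilon)$, i.e.\ $n+O(\log(1/\varepsilon))$ bits, which is too costly. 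So instead I would let $h$ be a random seed for which $h(z)=h(x)$ for some $z\ne x$ among strings of length $n$ happens with probability at most $\varepsilon/3$. This forces $O(n)$-bit hash outputs, which contradicts the advice budget.

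I therefore expect to need the frequency idea to also restrict the candidate set. Any $z\neq x$ that survives must be the output of a $\delta/4$ fraction of the samples. By a union bound over the at most $4/\delta$ such ``heavy'' strings in the underlying distribution (heavy with respect to the true single-valued output map), together with Chernoff, non-heavy strings do not reach the threshold, with failure probability $2^{-\Omega(\delta^2 m)}$ per string. This must be made uniform over all $2^n$ strings; that is exactly why $m$ contains the factor $n/\delta$. Once only $O(1/\delta)$ candidates remain, a hash with $O(\log(1/(\delta\varepsilon)))$ bits isolates $x$. The running time is $t\cdot m\cdot\poly$, matching $t\cdot p(n\log(t/\varepsilon)/\delta)$.
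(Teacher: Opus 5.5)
\textbf{Gap.} The step meant to handle what you call the main obstacle fails. In your last paragraph you confine the strings that can clear the threshold to the at most $4/\delta$ strings that are heavy ``with respect to the true single-valued output map''. A multi-valued $r$ has no such map, and your accepting-path count does not remove these samples: they have accepting paths, so all but a few of them are forced into $S$, with outputs chosen by the witness.

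For instance, take $\delta=1/2$, let half of the $r$ be good, and on every other $r$ let the program print the first $n$ bits of its witness. Then on suitable paths any prescribed string clears the threshold $\delta m/4$. So the potential survivors lie in no fixed set of size $O(1/\delta)$, and no short hash fixed in advance separates $x$ from all of them. Your union bound therefore says nothing about adversarial paths. You also never show that \emph{some} path outputs $x$; this needs a path on which $x$ is the unique survivor with hash value $g$.

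What is missing is a canonical quantity attached to every $r$, multi-valued or not. The paper takes $s_r$, the bitwise OR of all non-$\bot$ outputs on randomness $r$. The new program guesses $n$ witnesses per sample, one per bit, obtaining $\hat s_{r_i}$ bitwise below $s_{r_i}$, with equality on some path. A hardwired lower bound on the total Hamming weight of the $\hat s_{r_i}$ then forces $\hat s_{r_i}=s_{r_i}$ for all but $\delta k/16$ indices on every surviving path. This bound is the positive-answer count for the $nk$ queries ``some output on $r_i$ has bit $j$ equal to $1$''. Everything then reduces to the deterministic map $r\mapsto s_r$, whose heavy set $A_{\delta/8}$ has at most $8/\delta$ elements. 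One fixed isolating hash then gives both soundness and completeness.

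\textbf{How your scheme could be rescued.} Soundness is in fact automatic. On every valid guess $x$ occurs at least threshold-many times and $h(x)=g$, so any unique survivor is $x$. With your slacks, though, only about $\delta m/8$ good indices are guaranteed to remain, so the constants must be tightened. For completeness, fix for each accepting $r$ a canonical output $c(r)$, say the lexicographically first non-$\bot$ one. Take $S$ to be the set of all accepting indices, with these outputs, and run your heavy-string, Chernoff and union-bound argument for the deterministic map $r\mapsto c(r)$.

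\textbf{Hashing budget.} $h$ cannot come from the random tape if $g=h(x)$ is hardwired. Also, $O(\log(1/(\delta\varepsilon)))$ hash bits exceed the allowed overhead $O(\log n+\log\log(t/\varepsilon)+\log(1/\delta))$ when $\varepsilon$ is small. The set to be isolated is a fixed set of size $O(1/\delta)$, so a fixed seed as in \Cref{l:string-isolation} suffices. This costs $O(\log(n/\delta))$ bits, and $\varepsilon$ then enters the description only through $\log m$.
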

\begin{proof}
    Let $M$ be the program for $\rnK^{t, \mathcal{O}}_\delta(x\mid y)$. For simplicity, assume $t$ is a power of $2$; if not, the argument can be adapted by replacing $t$ with $2^{\lceil \log t\rceil } \leq 2t$. By definition, we have
    \[
        \Prob_{r\sim\mathcal{U}_t}\left[\begin{aligned}&\exists w\in\{0,1\}^t,M^{\mathcal{O}}(y,r,w)\text{ outputs $x$ in $t$ steps,}\\
        &\forall w\in\{0,1\}^t,M^{\mathcal{O}}(y,r,w)\text{ outputs $x$ or $\bot$ in $t$ steps}.\end{aligned}\right]\ge\delta.
    \]
    For any string $r\in\{0,1\}^t$, define a string $s_r\in\{0,1\}^n$, such that the $i$'th bit of $s_r$ is $1$ iff there exists some $w\in\{0,1\}^t$ such that the $i$'th bit of $M^{\mathcal{O}}(y,r,w)$ is $1$. In other words, $s_r$ is the bit-wise OR of outputs on all computation paths when the random string is $r$. We also define $d_r\in[0,1]$ as
    \[
        d_r:=\frac{w(s_r)}{n},
    \]
    where $w(\cdot)$ is the Hamming weight of a string. We define $\overline{d}$ as the average $d_r$ over $r\sim\mathcal{U}_t$, i.e.
    \[
        \overline{d}=\Exp_{r\sim\mathcal{U}_t}[d_r].
    \]
    For any $\lambda\in(0,1)$, define $A_\lambda$ as
    \[
        A_\lambda:=\left\{s\in\{0,1\}^n\ \middle\vert\ \Prob_{r\sim\mathcal{U}_t}[s_r=s]\ge\lambda\right\}.
    \]
    By definition, $x\in A_{\delta}$. We take $k$ independent samples $r_1,\ldots,r_k\sim\mathcal{U}_t$. The following claim is crucial to our proof:
    \begin{claim}\label{c:rnK-amplify-concentration}
    For $k=\left\lceil\frac{1024n^2\log(6/\varepsilon)}{\delta^2}\right\rceil$, with probability at least $1-\varepsilon$ over $r_1,\ldots,r_k\sim\mathcal{U}_t$, the following three conditions hold\emph{:}
    \begin{enumerate}[label=(\roman*)]
        \item $|\{i\in[k]\mid s_{r_{i}}=x\}|\ge\frac{\delta k}{2}$.\label{i:rnK-amplify-1}
        \item For any string $x'\notin A_{\delta/8}$, $|\{i\in[k]\mid s_{r_i}=x'\}|\le\frac{\delta k}{4}$.\label{i:rnK-amplify-2}
        \item $|\frac{d_{r_1}+\cdots+d_{r_k}}{k}-\overline{d}|\le\frac{\delta}{32n}$.\label{i:rnK-amplify-3}
    \end{enumerate}
    \end{claim}
    \begin{proof}[Proof of \Cref{c:rnK-amplify-concentration}]\renewcommand{\qedsymbol}{$\diamond$}
        Since $\Prob_{r\sim\mathcal{U}_t}[s_r=x]\ge\delta$, by Chernoff bound, the probability of \Cref{i:rnK-amplify-1} being violated is at most
        \[
            \Prob_{r_1,\ldots,r_k\sim\mathcal{U}_t}\left[\left|\{i\in[k]\mid s_{r_i}=x\}\right|\le\frac{\delta}{2}\cdot k\right]\le\exp\left(-2k\cdot\frac{\delta^2}{4}\right)\le\frac{\varepsilon}{3}.
        \]
        Similarly, the probability of \Cref{i:rnK-amplify-3} being violated is at most
        \[
            \Prob_{r_1,\ldots,r_k\sim\mathcal{U}_t}\left[\left|\frac{d_{r_1}+\cdots+d_{r_k}}{k}-\overline{d}\right|\ge\frac{\delta}{32n}\right]\le2\exp\left(-2k\cdot\frac{\delta^2}{1024n^2}\right)\le\frac{\varepsilon}{3}.
        \]
        For \Cref{i:rnK-amplify-2}, let $x'$ be any string in $\{0,1\}^n\setminus A_{\delta/8}$. Then we have $\Prob_{r\sim\mathcal{U}_t}[s_r=x']\le\delta/8$. Then by Chernoff bound, we get
        \[
            \Prob_{r_1,\ldots,r_k\sim\mathcal{U}_t}\left[|\{i\in[k]\mid s_{r_i}=x'\}|\ge\frac{\delta}{4}\cdot k\right]\le\exp\left(-2k\cdot\frac{\delta^2}{64}\right)\le\frac{\varepsilon}{3}\cdot 2^{-n}.
        \]
        Applying union bound over all $x'\in\{0,1\}^n\setminus A_{\delta/8}$, the probability of \Cref{i:rnK-amplify-2} being violated is at most $\varepsilon/3$. Applying union bound over \Cref{i:rnK-amplify-1,i:rnK-amplify-2,i:rnK-amplify-3}, these three conditions hold together with probability at least $1-\varepsilon$.
    \end{proof}
    Notice that $|A_{\delta/8}|\le8/\delta$. Applying \Cref{l:string-isolation} to $A_{\delta/8}$, there exists a string $u\in\{0,1\}^{O(\log(n/\delta))}$, such that the algorithm $B(1^n,u)$ runs in $\mathsf{poly}(n)$ time and outputs a circuit that computes a function $H:\{0,1\}^n\to\{0,1\}^{O(\log(1/\delta))}$, satisfying $H(x)\ne H(x')$ for every distinct pair $x,x'\in A_{\delta/8}$. Then our algorithm for recovering $x$ works as follows:
    \begin{enumerate}[label=(\alph*)]
        \item Set $k=\left\lceil\frac{1024n^2\log(6/\varepsilon)}{\delta^2}\right\rceil$, and take $k$ independent samples $r_1,\ldots,r_k\sim\mathcal{U}_{t}$.\label{i:rnK-alg-1}
        \item For each $i\in[k]$, non-deterministically guess $n$ strings $w_{i,1},\ldots w_{i,n}\in\{0,1\}^{t}$. By simulating $M$ for at most $t$ steps $nk$ times (and treating non-halting runs as outputting $\bot$), we then compute $\hat{s}_{r_i}\in\{0,1\}^n$, where the $j$'th bit of $\hat{s}_{r_i}$ is $1$ iff the $j$'th bit of $M^{\mathcal{O}}(y,r_i,w_{i,j})$ is $1$. We also compute $\hat{d}_{r_i}:=w(\hat{s}_{r_i})/n$, where $w(\cdot)$ is the Hamming weight of a string.\label{i:rnK-alg-2}
        \item Set the threshold $a=(1/n) \cdot \lceil nk(\overline{d}-\frac{\delta}{32n})\rceil$. If $\hat d_{r_1}+\cdots+\hat d_{r_k}< a$, then outputs $\bot$ and halts.\label{i:rnK-alg-3}
        \item Let $\hat{A}$ be the set of strings appearing at least $\lfloor \frac{3\delta}{8}\cdot k\rfloor$ times in $\hat s_{r_1},\ldots,\hat s_{r_k}$. If there exists a unique string $\hat{x}\in\hat{A}$ such that $H(\hat x)$ matches the stored hash value $H(x)$, outputs $\hat{x}$; otherwise outputs $\bot$.\label{i:rnK-alg-4}
    \end{enumerate}
    Notice that we only need to store $(M,n, \log t, k, \delta, a,u,H(x))$, which takes $|M|+O(\log \frac{n\log(t/\varepsilon)}{\delta})$ bits. (Here, $a$ is stored as the integer $\lceil nk(\overline{d}-\frac{\delta}{32n})\rceil$.) The running time of the above algorithm is also bounded by $t\cdot\poly(\frac{n\log(t/\varepsilon)}{\delta})$. It remains to estimate the success probability.

    \begin{claim}\label{c:good-randomness}
        If the random strings $r_1,\ldots,r_k$ sampled by our algorithm satisfy \Cref{i:rnK-amplify-1,i:rnK-amplify-2,i:rnK-amplify-3} of \Cref{c:rnK-amplify-concentration}, then
        \begin{enumerate}
            \item For all nondeterministic guesses $\{w_{i,j}\}_{i\in[k],j\in[n]}$, our algorithm either outputs $x$ or $\bot$; and\label{i:rnK-claim-1}
            \item There exists some nondeterministic guess $\{w_{i,j}\}_{i\in[k],j\in[n]}$ such that our algorithm outputs $x$.\label{i:rnK-claim-2}
        \end{enumerate}
    \end{claim}
    \begin{proof}[Proof of \Cref{c:good-randomness}]\renewcommand{\qedsymbol}{$\diamond$}
        We first prove \Cref{i:rnK-claim-1}. If the algorithm reaches step \ref{i:rnK-alg-4} of the algorithm, then we have
        \[
            \hat d_{r_1}+\cdots+\hat d_{r_k}\ge k\left(\overline d-\frac{\delta}{32n}\right).
        \]
        But since $r_1,\ldots,r_k$ also satisfies \Cref{i:rnK-amplify-3} of \Cref{c:rnK-amplify-concentration}, we have
        \[
            d_{r_1}+\cdots+d_{r_k}\le k\left(\overline d+\frac{\delta}{32n}\right).
        \]
        Subtracting the two inequalities, we get
        \[
            (d_{r_1}-\hat d_{r_1})+\cdots+(d_{r_k}-\hat d_{r_k})\le\frac{\delta k}{16n}.
        \]
        By averaging argument, we get
        \[
            \left|\left\{i\in[k]\ \middle|\  d_{r_i}-\hat d_{r_i}\ge\frac 1n\right\}\right|\le \frac{\delta k}{16}.
        \]
        Since $d_{r_i}$ counts the number of 1 in the bitwise OR of all computation paths, $d_{r_i}\ge\hat d_{r_i}$; and if $d_{r_i}=\hat d_{r_i}$, then $s_{r_i}=\hat s_{r_i}$ must hold. Also because $d_{r_i}, \hat d_{r_i}$ are both multiples of $1/n$, $d_{r_i}\ne\hat d_{r_i}$ implies $d_{r_i}-\hat d_{r_i}\ge1/n$. Therefore $s_{r_i}\ne \hat s_{r_i}$ implies $d_{r_i}-\hat d_{r_i}\ge1/n$, and we get
        \begin{equation}
            |\{i\in[k]\mid s_{r_i}\ne\hat s_{r_i}\}|\le\frac{\delta k}{16}.\label{eq:rnK-amplify-1}
        \end{equation}
        By \Cref{i:rnK-amplify-1} of \Cref{c:rnK-amplify-concentration}, we get
        \begin{equation}
            |\{i\in[k]\mid s_{r_i}=x\}|\ge\frac{\delta k}{2}.\label{eq:rnK-amplify-2}
        \end{equation}
        Combining \Cref{eq:rnK-amplify-1,eq:rnK-amplify-2}, we get
        \[
            |\{i\in[k]\mid \hat s_{r_i}=x\}|\ge\frac{\delta k}{2}-\frac{\delta k}{16}=\frac{7\delta k}{16}>\frac{3\delta k}{8}.
        \]
        Therefore, by step \ref{i:rnK-alg-4} of the algorithm, $x\in\hat A$ holds. Also, by \Cref{i:rnK-amplify-2} of \Cref{c:rnK-amplify-concentration}, for any $x'\not\in A_{\delta/8}$, we have
        \begin{equation}
            |\{i\in[k]\mid s_{r_i}=x'\}|\le\frac{\delta k}{4}.\label{eq:rnK-amplify-3}
        \end{equation}
        Combining \Cref{eq:rnK-amplify-1,eq:rnK-amplify-3}, we get
        \[
            |\{i\in[k]\mid \hat s_{r_i}=x'\}|\le\frac{\delta k}{4}+\frac{\delta k}{16}=\frac{5\delta k}{16}<\frac{3\delta k}{8}.
        \]
        Therefore, by step \ref{i:rnK-alg-4} of the algorithm, $x'\notin\hat A$ holds. Thus we get $\hat A\subseteq A_{\delta/8}$. By \Cref{l:string-isolation}, we can always find the correct $x$ in step \ref{i:rnK-alg-4}.

        We have shown that whenever the algorithm reaches step \ref{i:rnK-alg-4}, it always outputs $x$. Therefore to prove \Cref{i:rnK-claim-2} of the claim, it suffices to show that there exists some $\{w_{i,j}\}_{i\in[k],j\in[n]}$ such that the algorithm reaches step \ref{i:rnK-alg-4}. In fact, since $s_{r_i}$ is the bitwise OR of all computation paths, there exists some $\{w_{i,j}\}_{i\in[k],j\in[n]}$, such that for all $i\in[k]$ we have $s_{r_i}=\hat s_{r_i}$, which implies $d_{r_i}=\hat d_{r_i}$. In this case, \Cref{i:rnK-amplify-3} gives $\hat d_{r_1}+\cdots+\hat d_{r_k}=d_{r_1}+\cdots+d_{r_k}\ge k\left(\overline d-\frac{\delta}{32n}\right)$. Since $n\cdot(\hat d_{r_1}+\cdots+\hat d_{r_k})$ is an integer, it follows that $\hat d_{r_1}+\cdots+\hat d_{r_k}\ge a$, so our algorithm reaches step \ref{i:rnK-alg-4}.
    \end{proof}
    Therefore the success probability is at least $1-\varepsilon$ by \Cref{c:rnK-amplify-concentration}, which finishes our proof.
\end{proof}

    \begin{lemma}\label{l:rKt--parallel-NP}
        For every $L\in\NP$, there is a a constant $c>0$ such that for every $t\in\mathbb{N}$, $\lambda\in (0,1)$, and $x,y\in\bool^*$,
        \[
            \rnK^{t^c}_{\lambda/t}(x\mid y)\leq \rK_{\lambda}^{t, \parallel L}(x\mid y) + c\cdot \log t.
        \]
    \end{lemma}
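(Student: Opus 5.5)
The plan is to follow the proof of \Cref{lem:nk-leq-ksat-superscript}, with the oracle answers replaced by nondeterministic guesses. The new difficulty is that the program now uses randomness $r$. Consequently, the query set $q_1(r),\dots,q_\ell(r)$ and the number $p(r)$ of positive answers both depend on $r$, so $p(r)$ can no longer be hardwired into the program. The fix is to hardwire a guessed value of $p$ and accept the loss in success probability. This explains why the lemma degrades $\lambda$ to $\lambda/t$.

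Let $\Pi$ be a program witnessing $k=\rK^{t,\parallel L}_\lambda(x\mid y)$. Then
\[
\Pr_{r}\bigl[\Pi^{L}(y;r)=x \text{ within } t \text{ steps}\bigr]\ge\lambda .
\]
Note that $\Pi$ makes at most $\ell\le t$ non-adaptive queries, so $p(r)\in\{0,\dots,t\}$. By averaging, some fixed value $p^*$ satisfies
\[
\Pr_r\bigl[\Pi^L(y;r)=x \,\wedge\, p(r)=p^*\bigr]\ge \lambda/(t+1).
\]
Rounding this is a technicality: we can take $t$ to be a power of two, or absorb the factor $(t+1)/t$ into the constant $c$ via the slack $t^c$. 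We then define a new randomized nondeterministic program $\Pi'$ that stores $\Pi$ and $p^*$ (using $O(\log t)$ extra bits) and, on randomness $r$, proceeds as follows.

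\begin{enumerate}
\item Compute the queries $q_1,\dots,q_\ell$ of $\Pi$ on $(y;r)$. This is possible because the queries are non-adaptive and depend only on $y$ and $r$.
\item Guess witnesses $w_1,\dots,w_\ell$ for the polynomial-time verifier $V$ of $L$.
\item Count how many pairs satisfy $V(q_i,w_i)=1$. If the count differs from $p^*$, output $\bot$.
\item Otherwise, answer each query $q_i$ with $1$ if $w_i$ was accepted and $0$ if not, then continue simulating $\Pi$ on these answers.
\end{enumerate}

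Take any $r$ with $p(r)=p^*$ and $\Pi^L(y;r)=x$. On every computation path that passes the count check, exactly the true positive queries are verified. Every accepted $w_i$ certifies a true yes-instance, and the count equals $p^*=p(r)$, so every true positive must be certified. The simulated answers are therefore correct, and the output is $x$. Some guess (the true witnesses) passes the check, so $x$ is output on at least one path. Hence $\Pi'$ nondeterministically outputs $x$ on every such $r$, and these $r$ have probability at least $\lambda/t$ after the rounding above.

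For the resources, the running time is $\poly(t)$, since $\ell\le t$ and each query has length at most $t$. Any $r$ with $p(r)\neq p^*$ can only hurt the success probability; it cannot violate the definition, because $\rnK$ asks only for the probability of good $r$. We also need to reconcile randomness lengths, since $\Pi'$ draws $r$ of length $t^c$. $\Pi'$ uses only its first $t$ random bits, so the probability is unchanged. The bound $\rnK^{t^c}_{\lambda/t}(x\mid y)\le k+c\log t$ then follows.

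The main obstacle is the step that handles the dependence of $p$ on $r$. Averaging over $p^*$ is exactly what costs the $1/t$ factor. The rest is routine, adapting the proof of \Cref{lem:nk-leq-ksat-superscript}.
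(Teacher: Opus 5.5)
Your proposal is correct and follows essentially the paper's argument: you remove the non-adaptive oracle by hardwiring the number of positive answers and guessing witnesses, and you pay a factor of about $1/t$ because that number depends on $r$. The only difference is that you fix the most likely count $p^*$ by averaging, whereas the paper draws the count uniformly from $\{0,1\}^{\log t}$ as extra randomness, which amounts to the same thing.

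One minor remark: neither of your suggested rounding fixes turns $\lambda/(t+1)$ into $\lambda/t$, since slack in the time bound does not change the probability threshold and $t$ being a power of two does not reduce the $t+1$ possible counts. The off-by-one disappears under the harmless convention that a time-$t$ computation issues fewer than $t$ queries; the paper's proof also leaves this point implicit.
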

\begin{proof}
Let $x, y \in \bool^*$ and $t \in \mathbb{N}$, and define $k = \rK^{t,\parallel A}(x \mid y)$. Then there exists a program $\Pi \in \bool^k$ such that for a $\lambda$ fraction of $r \in \bool^t$, $\Pi$, when run on $(y; r)$ with non-adaptive oracle access to $L$, can non-deterministically output $x$ in time $t$.

Fix any good $r$ such that the above holds. Then, as in the proof of \Cref{lem:nk-leq-ksat-superscript}, one can eliminate the oracle access to $L$ by hard-coding the number $p$ of \emph{positive} answers to the oracle queries made by $\Pi$, and non-deterministically guessing witnesses (of length at most $\poly(t)$) for each query. Simulate $\Pi$, attempting to answer the queries using these witnesses: if a witness verifies, answer "yes"; otherwise, answer "no." At the end, verify that exactly $p$ witnesses were accepted.

Note that in the above, the additional advice $p$ depends on the randomness $r$, and therefore we cannot have a single advice string that works for all good choices of $r$. However, such advice can be encoded using at most $\log t$ bits. Then for any good $r$, if we uniformly pick an advice string from $\bool^{\log t}$, the program can successfully output $x$ in a nondeterministic manner as described above. It follows that there exists a program, which includes the description of $\Pi$ and has size at most $|\Pi| + O(\log t)$, such that with probability at least
\[
\lambda \cdot \frac{1}{2^{\log t}} = \lambda / t,
\]
over $r \sim \bool^t$ and $p \sim \bool^{\log t}$, it outputs $x$ non-deterministically in time $\poly(t)$. This concludes the proof.
\end{proof}

\begin{lemma}\label{l:rK-AM-coAM-superscript} 
    There exists a constant $c'$ such that the following holds. For any time-constructible, non-decreasing $T: \mathbb{N} \to \mathbb{N}$, $T(n) \ge n$, for every $L \in \AMTIME[T(n)] \cap \co\AMTIME[T(n)]$, all $x, y \in \bool^*$, and sufficiently large $t \in \mathbb{N}$ with $t > |x|$,
    \[
        \rnK^{T(t)^{c'}}(x \mid y) \leq \rnK^{t,L}(x \mid y) + O(\log t).
    \]
    Moreover, restricting query length to at most $\ell$ for an integer $\ell \geq 1$, 
    \[
        \rnK^{t\cdot (T(\ell) \cdot |x| \cdot \log t)^{c'}}(x \mid y) \leq \rnK^{t,L_{\le \ell}}(x \mid y) + O(\log(\ell \cdot |x| \cdot \log t)).
    \]
\end{lemma}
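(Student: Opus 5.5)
Given an $\rnK^{t,L_{\le \ell}}$-program $\Pi$ for $x$, I will simulate each oracle query $q$ to $L$ by running, inside the nondeterministic computation, the Arthur--Merlin protocols for both $L$ and $\overline{L}$. Arthur's coins come from the randomness tape and Merlin's messages come from the witness tape. This follows the template of \Cref{l:K-NP-coNP-superscript}, with $\NTIME\cap\coNTIME$ replaced by $\AMTIME\cap\coAMTIME$.

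First, I amplify. By standard parallel repetition with majority, I assume $L$ and $\overline{L}$ have public-coin two-round protocols running in time $\poly(T(\ell))\cdot O(\log(t\cdot|x|\cdot\log t))$. These have perfect completeness (via the standard AM transformation) and soundness error at most $1/(100t)$ per query. Since $\Pi$ makes at most $t$ queries, a union bound over queries shows the following. With probability at least $1-1/100$ over Arthur's coins (fresh blocks per query position), for every query $q$ asked along any nondeterministic path, the false answer has no accepting Merlin message, while the true answer has one.

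One subtlety is that queries may depend on the witness $w$ of $\Pi$. So the union bound must be over all possible queries at a given position, not just the actual queries. I will handle this by having the new program draw, for each query position $i\le t$, coins $\rho_i$. I then require soundness error $2^{-\ell-\log t-10}$, so the union bound over all $q\in\{0,1\}^{\le\ell}$ and all $i$ still works. Alternatively, I union bound over queries reachable via the (fixed) $\Pi$ and $r$. The first option costs repetition $O(\ell+\log t)$, which is polynomial and fits within $(T(\ell)\cdot|x|\cdot\log t)^{c'}$ since $T(\ell)\ge\ell$. I expect this to be the main obstacle, and I will settle it with the first option.

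Second, I build the new program $\Pi'$. On randomness $(r,\rho)$ and witness $(w,\{\text{claimed answers},\ \text{Merlin messages}\})$, it runs $\Pi(y;w;r)$. At each query it reads the guessed answer bit $b$ and a Merlin message, then verifies with the $L$-protocol (if $b=1$) or the $\overline{L}$-protocol (if $b=0$) on coins $\rho_i$. It outputs $\bot$ if verification fails.

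Now condition on $r$ being good for $\Pi$ and on $\rho$ being good in the sense above. Then every non-$\bot$ path uses correct oracle answers, so it outputs $x$ or $\bot$. Moreover, some path gives correct answers with accepting messages, so it outputs $x$. This event has probability at least $\lambda-1/100$, which remains $\ge 2/3$ after first amplifying $\Pi$ via \Cref{l:success-amp-rnK}. That amplification costs $O(\log(|x|\log t))$ extra bits and a $\poly(|x|\log t)$ time factor, and I would do it before the simulation.

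The description of $\Pi'$ adds only $O(\log(\ell\cdot|x|\cdot\log t))$ bits, to encode $\ell$, the repetition count, and the amplification parameters. Its running time is $t\cdot(T(\ell)\cdot|x|\cdot\log t)^{c'}$. The first inequality follows by taking $\ell=t$, which uses $t>|x|$ and $T(t)\ge t$, to absorb all factors into $T(t)^{c'}$.
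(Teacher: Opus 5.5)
Your proposal is correct and follows the paper's argument: you guess each oracle answer together with an Arthur--Merlin certificate, amplify to perfect completeness and soundness error $2^{-\Omega(\ell)}$ so that a union bound over all $2^{\ell+1}$ possible queries handles witness-dependent queries, and use \Cref{l:success-amp-rnK} to restore success probability $2/3$. The differences are cosmetic: the paper reuses one block of coins for every query on every branch, so it needs no union bound over query positions and no $\log t$ term in the soundness exponent, and it amplifies after the simulation rather than before; if you amplify first, note that the number of query positions becomes $t\cdot\poly(|x|\log t)$ rather than $t$.
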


\begin{proof}[Proof Sketch]
The proof is similar to that of \Cref{l:K-NP-coNP-superscript}. It relies on the idea that, given a nondeterministic program with oracle access to a language $L \in \AMTIME[T(n)] \cap \coAMTIME[T(n)]$, one can eliminate the oracle by guessing an answer to a query along with a witness certifying that answer, and then verifying the answer efficiently in time $\poly(T(\ell))$. 

The main difference here is that $L$ lies in $\AMTIME[T(n)] \cap \coAMTIME[T(n)]$, so the verifiers are randomized. However, using standard error reduction techniques, we can ensure that the verifications have perfect completeness and soundness error at most $O(2^{-2\ell})$, within time overhead of $O(\ell)$. We use the same choice of randomness on every query on every nondeterministic branch. Then, by union bound over possible inputs to $L$ of length at most $\ell$, the overall success probability of the simulated program is at least $2/3 - O(2^{-\ell})$, which can be amplified back to $2/3$ using \Cref{l:success-amp-rnK} at the expense of $t \cdot \poly(T(\ell)\cdot |x| \cdot \log t)$ time overhead and $O(\log(\ell \cdot |x| \cdot \log t))$ description overhead.
    \end{proof}

     \begin{corollary}\label{l:rKt-AM-coAM}
                    For every $L\in\AM\cap\coAM$, there is a constant $c>0$ such that for every $x,y\in\bool^*$,
        \[
            \rnKt(x\mid y)\leq c\cdot \rnKt^{L}(x\mid y).
        \]
        \end{corollary}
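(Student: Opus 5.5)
The plan is to derive the corollary from the first bound in \Cref{l:rK-AM-coAM-superscript}, exactly as \Cref{l:Kt-NP-coNP} follows from \Cref{l:K-NP-coNP-superscript}. Since $L\in\AM\cap\coAM$, there is a constant $a$ with $L\in\AMTIME[n^a]\cap\coAMTIME[n^a]$, so we take $T(n)=n^a$, which is time-constructible, non-decreasing and at least $n$.

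Fix $x,y$ and set $k=\rnKt^{L}(x\mid y)$. Choose a witnessing pair $(\Pi,t)$ with $|\Pi|+\lceil\log t\rceil = k$. Then $\rnK^{t,L}(x\mid y)\le|\Pi|$.

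Before applying the lemma we normalize $t$. We may assume $t>|x|$, since a program must at least write $x$, so $t\ge|x|$ already; if needed we replace $t$ by $2t$, which costs one extra bit. We may also assume $t$ exceeds the ``sufficiently large'' threshold $t_0$ of the lemma: raising $t$ to $\max(t,t_0)$ increases $\lceil\log t\rceil$ by at most a constant depending only on $L$.

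Applying \Cref{l:rK-AM-coAM-superscript} then gives
\[
\rnK^{t^{ac'}}(x\mid y)\le|\Pi|+O(\log t).
\]
Converting back to the Levin-style measure, the program witnessing this bound together with the time bound $t^{ac'}$ yields
\[
\rnKt(x\mid y)\le|\Pi|+O(\log t)+ac'\lceil\log t\rceil+O(1).
\]
Because $|\Pi|\le k$ and $\log t\le k$, this is at most $c\cdot k$ for a constant $c$ depending only on $L$ (through $a$, $c'$ and $t_0$). This is exactly the claimed inequality.

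The only care needed is in the boundary cases, and these are where I expect the (minor) difficulty to lie: when $k$ is very small, so that $t<t_0$ or $t\le|x|$, the additive constants must still be dominated by $c\cdot k$. Since $k\ge1$ always holds, any additive $O(1)$ term is absorbed by enlarging $c$, so these cases cause no real problem.
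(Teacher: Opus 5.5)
Your proposal is correct and is exactly the derivation the paper has in mind: the corollary is stated as an immediate consequence of the first bound in \Cref{l:rK-AM-coAM-superscript} with $T(n)=n^{a}$, in the same way \Cref{l:Kt-NP-coNP} follows from \Cref{l:K-NP-coNP-superscript}. You also supply details the paper leaves implicit: normalizing $t$ to satisfy $t>|x|$ and $t\ge t_0$ at constant cost, and using $\log t\le k$ to absorb both the additive $O(\log t)$ term and the $\lceil ac'\log t\rceil$ time term into $O(k)$.
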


\subsection{\texorpdfstring{$\pnK$}{}}

    \begin{lemma}\label{l:pKt-SAT-superscript}
        For every oracle $\mathcal{O}$ and every language $L$ that is hard for $\NP^{\mathcal{O}}$ under deterministic polynomial-time reductions, there exists a constant $c  > 0$ such that, for all $x,y\in\bool^*$ and $t\in\mathbb{N}$,
        \[
            \pK^{t^c,L}(x\mid y)\leq\pnK^{t, \mathcal{O}}(x\mid y) + c\cdot \log t.
        \]
    \end{lemma}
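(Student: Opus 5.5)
The approach is to follow \Cref{lem:kA-leq-nk-super-script}: a nondeterministic description is turned into a deterministic one that uses the oracle $L$, for each fixed choice of the public randomness $r$. The difference from that lemma is that no nonadaptivity is needed, so we can use a simple bit-by-bit extraction.

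Let $s = \pnK^{t,\mathcal{O}}(x \mid y)$. By \Cref{def:pnK-t}, with probability at least $2/3$ over $r \sim \bool^t$ there is a program $\Pi_r \in \bool^{\le s}$ such that $\Pi_r^{\mathcal{O}}(y;\cdot;r)$ nondeterministically outputs $x$ in time $t$.

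Fix such a good $r$ and its program $\Pi_r$. We build a deterministic $L$-oracle program $\Pi'_r$ that stores $\Pi_r$, $|x| \le t$ (we may assume this, at a cost of $O(\log t)$ bits), and $t$. For each $i \in [|x|]$, $\Pi'_r$ forms the query
\begin{quote}
$q_i = (\Pi_r, y, r, t, i)$: ``does there exist $w \in \bool^t$ such that $\Pi_r^{\mathcal{O}}(y;w;r)$ outputs a string $z \neq \bot$ within $t$ steps with $z_i = 1$?''
\end{quote}
This is an $\NP^{\mathcal{O}}$ predicate. Its instances have length $\poly(t + |y|)$, and the computation reads $r$ and $y$ as inputs, so a fixed $\NP^{\mathcal{O}}$ language $B$ captures all such queries. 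Since $L$ is hard for $\NP^{\mathcal{O}}$, there is a deterministic polynomial-time reduction $f$ with $B \le_m L$. The program $\Pi'_r$ computes $f(q_i)$, queries $L$, and sets the $i$-th output bit to the answer.

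Correctness follows from single-valuedness: every accepting path outputs exactly $x$, so the answer to $q_i$ is ``yes'' iff $x_i = 1$. The description of $\Pi'_r$ consists of $\Pi_r$ plus $O(\log t)$ bits, with constant overhead for the reduction machine, and it runs in time $\poly(t + |y|)$.

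Since the same bound holds for every good $r$, we have $\K^{t^c,L}(x \mid y; r) \le s + c\log t$ with probability at least $2/3$. That is exactly the bound $\pK^{t^c,L}(x \mid y) \le \pnK^{t,\mathcal{O}}(x\mid y) + c\log t$.

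The main obstacle is bookkeeping on the randomness and the time bound. First, $\pK^{t^c}$ draws $r'$ from $\bool^{t^c}$ rather than $\bool^t$. I would handle this by having $\Pi'_r$ use only the first $t$ bits of $r'$, whose marginal is uniform, so the probability is preserved. Second, the running time $\poly(t+|y|)$ must be absorbed into $t^c$. If $|y|$ may exceed $t$, I would argue that $\Pi_r$ reads at most $t$ cells of $y$; the query can then include only the prefix of $y$ of length $t$, which keeps the time at $\poly(t)$. Finally, including $r$ (length $t$) in the query is acceptable because $\Pi'_r$ has it on its randomness tape and can copy it sequentially in time $O(t)$.
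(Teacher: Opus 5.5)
Your proposal is correct and follows the paper's proof essentially verbatim. Both fix a good $r$, extract $x$ bit by bit via the $\NP^{\mathcal{O}}$ queries ``is there a $w$ such that $\Pi^{\mathcal{O}}(y;w;r)$ outputs a string whose $i$-th bit is $1$'' (correct by single-valuedness), and then replace that $\NP^{\mathcal{O}}$ language by $L$ using the reduction. Your extra bookkeeping goes beyond what the paper writes: reading only the first $t$ bits of the $t^c$-length random string, and placing only the length-$t$ prefix of $y$ in the queries (valid as long as $y$ is accessed sequentially). These details are left implicit in the paper's proof when it simply asserts a $\poly(t)$ running time.
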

    \begin{proof}

        The proof follows the same idea as \Cref{lem:kA-leq-nk-super-script}. Let $k = \pnK^{t,\mathcal{O}}(x \mid y)$, and suppose that with probability at least $2/3$ over $r \sim \bool^t$, there exist $\Pi \in \bool^k$ such that, on input $(y;r)$ and with oracle access to $\mathcal{O}$, $\Pi$ non-deterministically outputs $x$ within time $t$.

        Let $L_0 \in \NP^{\mathcal{O}}$ be the following language:
        \[L_0 = \mleft\{(\Pi,y,r, i, 1^t, 1^n) \,\,\middle\vert\,\, 	
        	\begin{aligned}
		&\exists w\in\bool^t \textrm{ such that
		$\Pi^{\mathcal{O}}(y;w;r)$ outputs a string}\\
        &x \in \bool^n \textrm{ within time $t$ and the $i$-th bit of $x$ is $1$.}
	\end{aligned}
    \mright\}.\]

        Then, with probability at least $2/3$ over $r \sim \bool^t$, given $(\Pi, y, r, t, n)$ as above hard-coded, one can recover the string $x$ within time $\poly(t)$ given oracle access to $L_0$. In particular, one can simply query $(\Pi,y,r, i, 1^t, 1^n)$ to $L_0$ for every $i \in [n]$ and return the concatenation of the answers from the oracle. By definition of $\Pi$ and $t$, this process is guaranteed to output $x$. 
        
        Lastly, for any $\NP^{\mathcal{O}}$-complete language $L$, we may replace the $L_0$-oracle with an $L$-oracle by simulating the reduction from $L_0$ to $L$. This increases the time to produce $x$ by at most a factor of $\poly(t)$. 

        Overall, we have
        \begin{align*}
            \pK^{\poly(t), L}(x \mid y) &\leq |(\Pi, t, n)| + O(\log t)
            \\&\leq  |\Pi| + O(\log t)
            \\&= \pnK^{t,\mathcal{O}}(x \mid y)) + O(\log t),
        \end{align*}
        as desired.
    \end{proof}

    \begin{corollary}\label{l:pKt-SAT}
        For every oracle $\mathcal{O}$ and every language $L$ that is hard for $\NP^{\mathcal{O}}$ under deterministic c polynomial-time reductions, there exists a constant $c  > 0$ such that, for all $x,y\in\bool^*$,
        \[
            \pKt^{L}(x\mid y)\leq c \cdot \pnKt^{\mathcal{O}}(x\mid y).
        \]
    \end{corollary}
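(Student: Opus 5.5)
The plan is to derive \Cref{l:pKt-SAT} directly from \Cref{l:pKt-SAT-superscript} by converting between Levin-style and fixed-time-bound measures. Fix $x,y$ and let $k=\pnKt^{\mathcal{O}}(x\mid y)$. By definition, with probability at least $2/3$ over $r\sim\bool^{2^k}$, we have $\nKt^{\mathcal{O}}(x\mid y;r)\le k$.

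\textbf{Step 1: pass to a single time bound.} For each such good $r$ there is a program $\Pi_r$ and a time $t_r$ with $|\Pi_r|+\lceil\log t_r\rceil\le k$. In particular $t_r\le 2^k$ and $|\Pi_r|\le k$. Set $t=2^k$ and pad each $\Pi_r$ so it runs for exactly $t$ steps with nondeterministic and random strings of length $t$. Only the first $t_r\le t$ bits are read, and the definitions use sequential access, so this is harmless. The consequence is $\nK^{t,\mathcal{O}}(x\mid y;r)\le k$ for a $2/3$ fraction of $r\sim\bool^{t}$, that is, $\pnK^{t,\mathcal{O}}(x\mid y)\le k$.

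\textbf{Step 2: apply \Cref{l:pKt-SAT-superscript}.} This gives
\[
\pK^{t^c,L}(x\mid y)\le k+c\log t=(c+1)k .
\]
Concretely, with probability at least $2/3$ over $r\sim\bool^{t^c}$, there is an $L$-oracle program of size at most $(c+1)k$ that outputs $x$ from $(y;r)$ within $t^c=2^{ck}$ steps. Hence $\Kt^{L}(x\mid y;r)\le (c+1)k+ck=(2c+1)k$.

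\textbf{Step 3: match the randomness length.} The definition of $\pKt^L$ at level $s=(2c+1)k$ samples $r\sim\bool^{2^{s}}$, whose length $2^{(2c+1)k}$ is at least $2^{ck}$. A program reading only a prefix of $r$ therefore succeeds on the same fraction of longer random strings. This yields $\pKt^L(x\mid y)\le(2c+1)k$, so the constant $2c+1$ works.

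\textbf{Main obstacle.} The only delicate point is the edge case $k$ small, including $k=0$, where the additive $c\log t$ term and the encodings of $t$ and $n$ inside the $\pK$ program must still be absorbed multiplicatively. I would handle this by noting that $|x|\le t\le 2^k$ forces the overheads to be $O(k)$, and by taking $k\ge1$ with the constant enlarged to cover the trivial cases.
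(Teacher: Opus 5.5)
Your proposal is correct and follows the paper's intended route. The paper gives no separate proof; it states this as an immediate corollary of \Cref{l:pKt-SAT-superscript}, in the same way as \Cref{lem:ktA-leq-nkt}: set $t=2^{k}$ with $k=\pnKt^{\mathcal{O}}(x\mid y)$ so the randomness lengths coincide, apply the lemma, convert back to the Levin-style measure, and use the prefix/sequential-access observation for the longer random string in $\pKt^{L}$. Your treatment of small $k$ is also fine; in fact $k\ge 2$ always under the paper's program encoding, so the additive overheads are absorbed into the multiplicative constant.
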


\section{Average-Case SoI for Nondeterministic Computation: Proof of \Cref{t:soi-pKt-avg}}

\subsection{Average-Case SoI for \texorpdfstring{$\pnKt$}{}}\label{sec:avg-soi-pnKt}

In this subsection, we show that Symmetry of Information holds for $\pnKt$ \emph{on average} over efficiently samplable distributions.  Toward proving this result, we observe that Symmetry of Information holds for $\pK^{\poly}$ if one replaces the term $\pK^{\poly}(y \mid x)$ with its nondeterministic counterpart $\pnK^{\poly}(y \mid x)$. 

\subsubsection{A Semi-Symmetry of Information Theorem for \texorpdfstring{$\pnK^t$}{}}\label{sec:semi-soi}

    We obtain a version of semi-SoI for $\pnK^t$ via the worst-case conditional coding theorem of \cite{HLO25}. The proof is similar to an argument from \cite{DBLP:conf/stoc/HiraharaILNO23} used to conditionally show average-case SoI for $\pK^t$. The statement slightly improves on the worst-case semi-SoI theorem of \cite{HLO25}, which would include a term for $\pnK^{\poly(t)}(y)$ rather than $\pK^{\poly(t)}(y)$.

    \begin{theorem}\label{thm:semi-soi-pk-poly}
    
        There exists a polynomial $p$ such that, for all sufficiently large $n, t \in \N$ with $t \geq 4n$, the following holds. For all $x, y \in \bool^n$,
        \[\pK^{t}(x, y) \geq \pK^{p(t)}(x) + \pnK^{p(t)}(y \mid x) - \log p(t).\]
        
    \end{theorem}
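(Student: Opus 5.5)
The plan is to use the worst-case conditional coding theorem for $\pnK^t$ from \cite{HLO25}, in the same way \cite{DBLP:conf/stoc/HiraharaILNO23} derived average-case SoI for $\pK^t$ from an efficient coding theorem. I recall the conditional coding theorem in roughly this form: for every set $S \subseteq \bool^n \times \bool^n$ recognizable relative to suitable side information, and every $(x,y) \in S$ with $S_x = \{y' : (x,y') \in S\}$,
\[\pnK^{\poly(t)}(y \mid x) \leq \log |S_x| + O(\log t),\]
where the side information includes a $\pK^t$-description of membership in $S$. Set $k = \pK^t(x,y)$. Fix a ``good'' random string $r$; with probability at least $2/3$ there is a program of length $k$ that prints $(x,y)$ from $r$ in time $t$. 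Let
\[S_r = \{(x',y') : \K^t(x',y' \mid r) \leq k\},\]
so $|S_r| \leq 2^{k+1}$. Split pairs by their $x'$-coordinate. Let $m$ be the integer with $2^{m} \leq |(S_r)_x| < 2^{m+1}$. The strings $x'$ whose fiber has size at least $2^m$ number at most $2^{k+1-m}$.

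\emph{Bound on $\pK(x)$.} Given $r$, $k$, $m$ and $t$, the set $X_{r,m} = \{x' : |(S_r)_{x'}| \geq 2^m\}$ is well defined and has size at most $2^{k+1-m}$. Enumerating it deterministically costs exponential time, however. So I would not enumerate it. Instead I would invoke the $\pK$ efficient coding (\Cref{t:coding_pkt}) relative to the distribution that samples a random program of length $k$, runs it on $r$, and outputs the first component. This distribution is samplable in time $\poly(t)$ given $r$. Each $x' \in X_{r,m}$ has probability at least $2^{m-k-1}$ under it, which gives
\[\pK^{\poly(t)}(x) \leq k - m + O(\log t).\]
The same random string $r$ serves as the public randomness, and the fraction of good $r$ (at least $2/3$) is handled by adjusting the success probability, possibly via standard amplification for $\pK$.

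\emph{Bound on $\pnK(y \mid x)$.} Given $x$, $r$, $k$, $t$, the fiber $(S_r)_x$ has size less than $2^{m+1}$. Membership in $(S_r)_x$ is an $\NP$-type predicate: guess a length-$k$ program and check that it outputs $(x,y')$ in time $t$. The conditional coding theorem of \cite{HLO25} for $\pnK$ (isolation via hashing, with nondeterminism verifying membership and uniqueness) then gives
\[\pnK^{\poly(t)}(y \mid x) \leq m + O(\log t).\]
Summing the two bounds yields $\pK^{p(t)}(x) + \pnK^{p(t)}(y \mid x) \leq k + \log p(t)$, which is the claimed inequality.

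The main obstacle is the $x$-part: producing a \emph{deterministic} $\pK$ bound for $x$ of about $k - m$ bits, since nondeterminism is available only for $y$. The sampling and coding-theorem route above addresses this, but the coding theorem is stated for samplable families indexed by $n$, not for distributions parameterized by $r$ and $k$. I would therefore need its conditional or relativized version, with $r$ as auxiliary input. I would also need to check that the public randomness used by coding and the $r$ shared with the definition of $\pK^t(x,y)$ can be combined by splitting one random string. The $t \geq 4n$ assumption ensures $r$ is long enough for this split.
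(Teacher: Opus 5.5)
\textbf{Gap: the split point $m$ depends on $r$.} The step that breaks is the final summation. You choose the split point only after fixing the good string $r$, so it is really $m(r)=\lfloor\log|(S_r)_x|\rfloor$. What your two parts give is this: for each good $r$, with high probability over fresh bits $w,w'$,
$\K^{\poly(t)}(x\mid r,w)\le k-m(r)+O(\log t)$ and $\nK^{\poly(t)}(y\mid x;r,w')\le m(r)+O(\log t)$.
But $\pK^{p(t)}(x)$ and $\pnK^{p(t)}(y\mid x)$ are each a single threshold that must be met on a $2/3$ fraction of the randomness. Per-$r$ bounds with an $r$-dependent split point therefore need not add up to $k$. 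Nothing in your argument prevents $m(r)$ from being spread evenly over $\{0,\dots,k\}$ among the good $r$. In that case the $2/3$ thresholds only certify about $2k/3$ for each term, roughly $4k/3$ in total. Adjusting or amplifying the success probability does not help until you fix which value of $m$ the two descriptions share.

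The missing idea is a common split point, for example a median $m^*$ of $m(r)$ over the good $r$, plus the observation that your two bounds are monotone in opposite directions. The $x$-part only needs $|(S_r)_x|\ge 2^{m^*}$, which holds whenever $m(r)\ge m^*$. The $y$-part only needs $|(S_r)_x|<2^{m^*+1}$, which holds whenever $m(r)\le m^*$. Each event has probability at least $\frac12\cdot\frac23=\frac13$ over $r$, so each description works with constant probability over $(r,w)$. Now split a $\poly(t)$-long random string into $O(1)$ independent blocks and hardwire the index of a good block. This is allowed, since $\pK$ and $\pnK$ descriptions may depend on the randomness, and it brings both success probabilities to $2/3$. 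Then $k-m^*+O(\log t)$ and $m^*+O(\log t)$ do sum to $k+O(\log t)$.

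\textbf{The points you flagged are fine.} The conditional coding bound for $x$ follows from the usual proof of the $\pK$ coding theorem, with $r$ given to the sampler as auxiliary input. Fresh public bits generate $O(2^{k-m^*}k)$ pairwise-independent seeds; with probability at least $3/4$ some seed among them is mapped to $x$, and you hardwire its index. Let the sampler also choose the program length, so that programs shorter than $k$ are counted.

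For $y$ you do not need \cite{HLO25} at all. For fixed $r$ the fiber $(S_r)_x$ is an $\NP$-type set of size below $2^{m^*+1}$. A random pairwise-independent hash to $m^*+3$ bits isolates $y$ with probability $3/4$. A single-valued nondeterministic program then guesses $y'$ together with a length-$\le k$ program printing $(x,y')$ from $r$, and checks the hash.

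\textbf{Comparison with the paper.} Your route is genuinely different from the paper's, which never fixes $r$. The paper uses the $\P\SAMP$ family $\mathcal{D}_{n,t}$ of \cite[Lemma 36]{DBLP:conf/icalp/LuOZ22} with $\mathcal{D}_{n,t}(z)\ge 2^{-\pK^t(z)}n^{-c}$. This family already averages over the randomness, so $\log\frac{1}{\mathcal{D}(x,y)}=\log\frac{1}{\mathcal{D}(x)}+\log\frac{1}{\mathcal{D}(y\mid x)}$ gives one split point for free. The paper then bounds the conditional term by the worst-case conditional coding theorem for $\pnK$ \cite[Theorem 6]{HLO25}, and the marginal term by the $\pK$ coding theorem. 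It needs that heavier theorem because, after averaging over $r$, the conditional $\mathcal{D}(\cdot\mid x)$ is no longer a flat set of bounded size. Your fixed-$r$ route, once patched, trades it for elementary hashing plus the split-point bookkeeping.
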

    \begin{proof}
        
        Let $\mathcal{D} \in \P\SAMP$ be a distribution family as in Lemma 36 of \cite{DBLP:conf/icalp/LuOZ22} that dominates the universal distribution for $\pK^{t}$; that is, for some constant $c \in \N$, for any $n, t \in \N$ and $z \in \bool^{2n}$, 
        \begin{align}\label{eq:domination}
            \mathcal{D}_{n,t}(z) \geq \frac{1}{2^{\pK^{t}(z)}\cdot n^c}.
        \end{align}
        For $n, t \in \N$, let $\mathcal{D}^{(2)}_{n,t}$ denote the marginal distribution over the second half of the output of $\mathcal{D}_{n,t}$, and note that $\mathcal{D}^{(2)} = \{\mathcal{D}^{(2)}_{n,t}\}_{n,t \in \N} \in \P\SAMP$. 
        
        Now fix $n, t \in \N$ such that $t \geq 4n$ along with strings $x, y \in \bin^n$. By the worst-case conditional coding theorem for $\pnK^t$ with respect to $\mathcal{D}$ (\cite[Theorem 6]{HLO25}), for some polynomial $q$,
        \begin{align*}
            \pnK^{q(t)}(y \mid x) &\leq \log\frac{1}{\mathcal{D}_{n,t}(y \mid x)} + \log q(t)
            \\&= \log\frac{\mathcal{D}_{n,t}^{(2)}(x)}{\mathcal{D}_{n,t}(x,y)} + \log q(t)
            \\&= \log \frac{1}{\mathcal{D}_{n,t}(x,y)} - \log\frac{1}{\mathcal{D}^{(2)}_{n,t}(x)} + \log q(t)
            \\&\leq  \pK^t(x, y) - \log\frac{1}{\mathcal{D}^{(2)}_{n,t}(x)} + \log q(t) + c\log n. \tag{by \Cref{eq:domination}}    
        \end{align*}
        Moreover, by the worst-case coding theorem for $\pK^t$ complexity \cite{DBLP:conf/icalp/LuOZ22}, for some polynomial $q'$, 
        \[\pK^{q'(t)}(x) \leq \log\frac{1}{\mathcal{D}^{(2)}_{n,t}(x)} + \log q'(t).\]
        Combining this with the above,
        \[\pnK^{q(t)}(y \mid x) + \pK^{q'(t)}(x) \leq \pK^t(x, y) + \log q(t) + c\log n + \log q'(t).\]
        Let the polynomial $p$ be such that $p(m) \geq 4 \cdot \max\{q(m), m^c, q'(m)\}$ for all $m \in \N$. It follows that 
        \[\pnK^{p(t)}(y \mid x) + \pK^{p(t)}(x) \leq \pK^t(x, y) + \log p(t).\]
        This completes the proof.
    \end{proof}

\subsubsection{From Semi-SoI for \texorpdfstring{$\pnK^t$}{} to Average-Case SoI for \texorpdfstring{$\pnKt$}{}}

    Making use of the statement from the previous subsection, we prove that average-case Symmetry of Information holds for $\pnKt$. The proof is a simple adaptation of an argument from \cite{HLO25}; we include the details for completeness.
    
    Below, we state Theorem~\ref{t:soi-pKt-avg} in a seemingly stronger form, with a term for $\pKt(x)$ on the right-hand side rather than $\pnKt(x)$. However, we remark that this statement is actually essentially equivalent to Theorem~\ref{t:soi-pKt-avg}, since $\pKt(x) \leq \pnKt(x) + O(\log n)$ with high probability over strings $x$ sampled efficiently.

    \begin{theorem}\label{thm:avg-soi-pnKt-strong}
        For every polynomial-time samplable distribution family $\{\mathcal{D}_n\}_{n \in\mathbb{N}}$ supported over $\{0,1\}^n\times \{0,1\}^n$, there exists a constant $c>0$ such that for all large enough $n, k\in\mathbb{N}$,
        \[ \Prob_{(x,y)\sim \mathcal{D}_n} \left[\; \mathsf{pnKt}(x,y)\geq \mathsf{pKt}(x) + \mathsf{pnKt}(y\mid x) - c\cdot \log n - \log k \; \right] \geq 1-\frac{1}{k}. \]
    \end{theorem}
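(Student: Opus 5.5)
Fix the polynomial-time samplable family $\{\mathcal{D}_n\}$ and let $t_0 = \poly(n)$ be large enough to run its sampler, with $t_0 \geq 4n$. The plan has three steps: an upper bound on $\pK$ of the pair in terms of $\pnKt(x,y)$, the semi-SoI of \Cref{thm:semi-soi-pk-poly} at the fixed time bound $t_0$, and a conversion of the resulting $\pK^{p(t_0)}$ and $\pnK^{p(t_0)}$ terms into the Levin-style $\pKt$ and $\pnKt$. Everything happens on a good event of probability at least $1-1/k$.

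\textbf{Step 1: upper bound on the pair.} By the efficient coding theorem (\Cref{t:coding_pkt}), every $(x,y)$ in the support satisfies
\[
\pK^{q(n)}(x,y) \le \log\frac{1}{\mathcal{D}_n(x,y)} + \log q(n)
\]
for some polynomial $q$.

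Next I need the reverse comparison: with probability at least $1-1/(2k)$ over $(x,y)\sim\mathcal{D}_n$,
\[
\log\frac{1}{\mathcal{D}_n(x,y)} \le \pnKt(x,y) + O(\log n) + \log k .
\]
I would argue this from the $\pnKt$ analogue of the counting behind \Cref{l:incomp-strings} and \Cref{f:incompressible}. Let $S_j$ be the set of pairs with $\pnKt(x,y)\le j$. For each random string $r$, the programs of length at most $j$ single-valuedly describe at most $2^{j+1}$ strings, so averaging over $r$ gives $|S_j| \le 2^{j+2}$. Hence the pairs with $\mathcal{D}_n(x,y) < 2^{-j-\log k - c\log n}$ inside $S_j$ have total mass below $4/(k n^c)$. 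A union bound over the $O(n)$ relevant values of $j$ then gives the claim.

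Combining the two displays, with probability at least $1-1/(2k)$,
\[
\pK^{q(n)}(x,y) \le \pnKt(x,y) + O(\log n) + \log k .
\]

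\textbf{Step 2: semi-SoI.} I apply \Cref{thm:semi-soi-pk-poly} with $t = \max(q(n), t_0)$. This gives, for all $x,y$,
\[
\pK^{t}(x,y) \ge \pK^{p(t)}(x) + \pnK^{p(t)}(y\mid x) - O(\log n).
\]
Here $\pK^t$ is monotone in $t$, so the bound from Step 1 still applies at this $t$.

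\textbf{Step 3: time-bounded to Levin-style.} Since $p(t) = \poly(n)$, the $\lambda=2/3$ definitions give
\[
\pKt(x) \le \pK^{p(t)}(x) + O(\log n), \qquad \pnKt(y\mid x) \le \pnK^{p(t)}(y\mid x) + O(\log n).
\]
This is because a program of $\pK^{p(t)}$ or $\pnK^{p(t)}$ type can be turned into one of the Levin-style kind with $\log p(t) = O(\log n)$ extra charge for running time.

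One mismatch needs care. $\pKt$ draws a random string of length $2^s$, whereas $\pK^{p(t)}$ draws one of length $p(t)$. The fix is to use the prefix of the longer string, which has the same distribution. If $2^s < p(t)$, I first pad $s$ up to $O(\log n)$, which is absorbed in the error term.

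Chaining Steps 1 to 3, with probability at least $1-1/k$,
\[
\pnKt(x,y) \ge \pKt(x) + \pnKt(y\mid x) - c\log n - \log k .
\]

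\textbf{Main obstacle.} I expect the hard part to be the lower bound on $\pnKt(x,y)$ in terms of $\log 1/\mathcal{D}_n(x,y)$ in Step 1. It needs a counting bound for $\pnKt$ that holds uniformly over the time parameter $2^s$. I would handle it by summing over $s \le 2n + O(1)$, which is valid because larger complexity values are trivially above $\log 1/\mathcal{D}_n(x,y)$ minus $O(\log n)$. I would take care that the $\lambda = 2/3$ program-counting gives at most $O(2^s)$ strings of complexity at most $s$.
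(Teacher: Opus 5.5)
Your proposal is correct and follows the same route as the paper. The paper also chains the efficient coding theorem for the pair, a bound $\log(1/\mathcal{D}_n(x,y)) \le \pnKt(x,y) + O(\log n) + \log k$ holding with probability at least $1-1/k$, the semi-SoI of \Cref{thm:semi-soi-pk-poly} at a fixed polynomial time bound, and the conversion of $\pK^{p(t)}$ and $\pnK^{p(t)}$ to $\pKt$ and $\pnKt$. The only difference is in the middle step: the paper passes through $\K(x,y)$, using \Cref{f:incompressible} together with $\K(x,y) \le \pnKt(x,y) + O(\log n)$, whereas you count strings of bounded $\pnKt$ complexity directly. That is equally valid, since any $\poly(n)$ slack in your bound on $|S_j|$ is absorbed into $c\log n$.
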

    \begin{proof}
        The proof is following {\cite[Theorem 8]{HLO25}}. Let $\{\mathcal{D}_n\}_{n \in\mathbb{N}}$ be as in the statement of this theorem, and let the polynomial $p$ be as in \Cref{thm:semi-soi-pk-poly}. Fix any $n \in \N$, and for a sufficiently large polynomial $q$, let $t = q(t_D(n))$, where $t_D(n)$ upper-bounds the time required to sample $\mathcal{D}_n$. Observe that, for any $x, y \in \bool^n$,
        \[\pnKt(y \mid x) \leq \pnK^{p(t)}(y \mid x) + O(\log p(t))\]
        and 
        \[\pKt(x) \leq \pK^{p(t)}(x) + O(\log p(t)).\]
        By the coding theorem for $\pK^{t}$ (\Cref{t:coding_pkt}) and \Cref{f:incompressible}, the following holds: for any $k \in \N$, with probability at least $1 - k^{-1}$ over $(x, y) \sim D_n$,
    	\begin{align*}
    		\pK^{t}(x, y) &\leq \log \frac{1}{D_n(x,y)} + O(\log n)\\
    		&\leq \K(x, y) + O(\log n) + \log k
    		\\&\leq \pnKt(x, y) + O(\log n) + \log k.
    	\end{align*}
        Combining the above with \Cref{thm:semi-soi-pk-poly} and letting $c \in \N$ be a sufficiently large constant, we have that with probability at least $1 - k^{-1}$ over $(x, y) \sim D_n$,
        \[\pnKt(x, y) \geq \pKt(x) + \pnKt(y \mid x) - c\log n - \log k,\]
        as desired.
    \end{proof}

\subsection{Average-Case SoI for \texorpdfstring{$\rnKt$}{}}\label{ss:soi-rnkt}

We first show a version of semi-symmetry of information for $\rnK^t$, which is analogous to \Cref{thm:semi-soi-pk-poly}.
\begin{theorem}\label{thm:semi-soi-rk-poly}
	There exists a polynomial $p$ such that, for all sufficiently large $n, t \in \N$ with $t \geq 4n$, the following holds. For all $x, y \in \bool^n$,
	\[
	\pK^{t}(x, y) \geq \rnK^{p(t)}(x) + \rnK^{p(t)}(y \mid x) - \log p(t) - \log^3 p(n).
	\]
\end{theorem}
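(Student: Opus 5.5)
We start from \Cref{thm:semi-soi-pk-poly}, which already gives, for a polynomial $p_0$ and all $x,y\in\bool^n$,
\[
\pK^{t}(x,y) \;\geq\; \pK^{p_0(t)}(x) + \pnK^{p_0(t)}(y\mid x) - \log p_0(t).
\]
So the task reduces to two conversion inequalities, one for each term on the right. The first is $\rnK^{p(t)}(x) \le \pK^{p_0(t)}(x) + O(\log^3 t)$. The second is $\rnK^{p(t)}(y\mid x) \le \pnK^{p_0(t)}(y\mid x) + O(\log^3 t)$. Summing them gives the claim. (We may have to absorb a $\log^3 t$ rather than $\log^3 n$ term into $\log p(t)$ and $\log^3 p(n)$; if the statement really needs $\log^3 p(n)$, we use $t\le \poly$ in the relevant range, or handle $\log t$ separately as an additive $\log p(t)$.) The first inequality is a special case of the second, since deterministic programs are nondeterministic ones and conditioning on the empty string is allowed. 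So the single key lemma is $\rnK^{\poly(t)}(z\mid x)\le \pnK^{t}(z\mid x)+O(\log^3 t)$.

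To prove the key lemma, the natural route is the standard ``$\pK\to\rK$'' conversion: store a short seed instead of the public random string. Suppose $\pnK^{t}(z\mid x)=s$. Then for at least a $2/3$ fraction of $r\in\bool^t$ there is a program $\Pi_r\in\bool^{\le s}$ that nondeterministically outputs $z$ on $(x;r)$ in time $t$. We cannot store $r$, and $\Pi_r$ depends on $r$. The plan is to replace $r$ by $G(\rho)$ for a short seed $\rho$. Consider the test $D(r)$ that accepts iff there exists $\Pi\in\bool^{\le s}$ nondeterministically outputting $z$ on $(x;r)$. This test can be computed by a small circuit with an oracle at the second level of the hierarchy (``exists $\Pi$, for all $w$ the output is $z$ or $\bot$, and exists $w$ outputting $z$''). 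Derandomizing such tests unconditionally is impossible, so instead we use the randomized measure directly. The $\rnK$ description samples fresh randomness $r$ (of length $t$) and a random candidate program index. Concretely, following the proof of \Cref{l:rKt--parallel-NP}, we guess $\Pi$ uniformly among strings of length $\le s$. That only succeeds with probability $2^{-s}$, which is far too small, so this naive approach fails.

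The better route is to go through the reconstruction machinery. By \Cref{l:pKt-SAT-superscript}, $\pK^{\poly(t),\SAT}(z\mid x)\le s+O(\log t)$. Next we convert $\pK^{\poly(t),\SAT}$ to $\rK^{\poly(t),\parallel\SAT}$ with an $O(\log^3 t)$ loss, using \Cref{l:black-box-PRG}/\Cref{cor:rnkt-recon}. Let $G$ be the \cite{RazRV02_jcss_journals, DBLP:conf/focs/Hirahara20} generator on $z$ with output length $m=s+O(\log t)$. Let $D(w,u)$ accept iff $\K^{\poly(t),\SAT}(w\mid x;u)$ is small. This is an $\NP^{\NP}$-type test, but with the advice trick of \Cref{lem:nk-leq-ksat-superscript} it is decidable with non-adaptive $\NP$ queries plus guessing. $D$ accepts $G(z;\cdot)$ with high probability, because $w$ is determined by $z$ and the seed, and $z$ is described via $u$. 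It rejects uniform $w$ by counting (\Cref{l:incomp-strings}). \Cref{cor:rnkt-recon} then yields $\rK^{\poly(t),\parallel D}(z\mid x)\le s+O(\log^3 t)$. Finally, \Cref{l:rKt--parallel-NP} and success amplification \Cref{l:success-amp-rnK} convert this to $\rnK^{\poly(t)}(z\mid x)\le s+O(\log^3 t)$, where the $1/t$ success probability costs only $O(\log t)$ bits.

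The main obstacle is making the distinguisher $D$ genuinely computable with \emph{non-adaptive} $\NP$ queries in $\poly(t)$ time, so that \Cref{l:rKt--parallel-NP} applies. Deciding ``$w$ has a short $\K^{\SAT}$ description given $(x;u)$'' naively requires adaptive $\SAT$ access inside an existential search. To get around this, we would fold the search into nondeterminism. Instead of using $D$ as a parallel oracle, we directly build an $\rnK$ description: the reconstruction makes polynomially many parallel queries to $D$, and each query ``exists a short $\SAT$-oracle program printing $w$'' is itself an $\NP^{\NP}$ statement. Answers of this form are verified by guessing the program plus witnesses, and the number of positive answers, including the inner $\SAT$ queries, is hard-wired as in \Cref{lem:nk-leq-ksat-superscript}. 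This costs $O(\log t)$ bits. Keeping this count consistent across the two levels is the delicate step.
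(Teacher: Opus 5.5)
Your reduction to \Cref{thm:semi-soi-pk-poly} plus two conversion inequalities breaks down at the second conversion, $\rnK^{\poly(t)}(y\mid x)\le \pnK^{t}(y\mid x)+\polylog$. The step you call ``delicate'' cannot be carried out with the tools available.

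The distinguisher you need tests whether $w$ has a short $\nK$ (equivalently $\K^{\parallel\SAT}$) description given $(x;u)$. Because of single-valuedness, this is a $\Sigma_2$ predicate, not an $\NP$ one: for \emph{all} witnesses the output must be $w$ or $\bot$ (in oracle form, the \emph{negative} $\SAT$ answers must be correct).

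The counting trick of \Cref{lem:nk-leq-ksat-superscript} and \Cref{l:rKt--parallel-NP} only removes non-adaptive $\NP$ queries drawn from a fixed query set enumerable in $\poly(t)$ time. You hard-wire the number of positive answers and guess witnesses for exactly that many. Inside your $D$, by contrast, the inner queries depend on the existentially guessed program $\Pi'$, one of $2^{s'}$ candidates. The universal clause for a guessed $\Pi'$ is a $\coNP$ condition that no $O(\log t)$-bit count certifies. A path that guesses a bad $\Pi'$ feeds a wrong answer vector to the reconstruction, so the resulting program is not single-valued.

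Dropping the universal clause does give an $\NP$ test, but that test accepts every string: an $O(\log m)$-size program that prints its witness outputs anything. So it is not a distinguisher. This is the same obstruction that blocks converting $\pK$ into plain $\rK$ by reconstruction: the distinguisher sits one level above what the target measure can absorb.

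The paper never passes through $\pnK(y\mid x)$. It runs the hybrid/reconstruction argument of \cite{GK22,DBLP:conf/coco/Hirahara22} directly on the pair, starting from $\pK^t(x,y)$. For a good public string $r$, every $G_1(x;z_1)\circ G_2(y;z_2)$ has small $\K^{t'}(\cdot\mid r)$, so the distinguisher is the $\NP$ predicate $\mathsf{MINKT}$. The hybrid argument then gives
\[
\pK^t(x,y)\ \ge\ \rK^{\poly(t),\parallel\mathsf{MINKT}}_{1/\poly(n)}(x)+\rK^{\poly(t),\parallel\mathsf{MINKT}}_{1/\poly(n)}(y\mid x)-O(\log^3 n)-O(\log t).
\]
\Cref{l:rKt--parallel-NP} and \Cref{l:success-amp-rnK} then convert each term to $\rnK^{\poly(t)}$ at $O(\log t)$ cost.

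For the $x$-term alone, your reconstruction would also work with the $\mathsf{MINKT}$ test ``$\K^{t'}(w\mid u)$ is small,'' since $\pK$ is deterministic given $r$. Invoking \Cref{thm:semi-soi-pk-poly} first throws away exactly this property for the conditional term, and that property is what keeps the distinguisher in $\NP$.

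One smaller point: the seed length in \Cref{l:black-box-PRG} is $O(\log^3(n/\varepsilon))$ with constant $\varepsilon$, so the loss is $O(\log^3 n)$. An $O(\log^3 t)$ loss, as in your hedge, would not match the statement for large $t$.
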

\begin{proof}[Proof Sketch]
	The proof follows a similar approach to that used in \cite{GK22,DBLP:conf/coco/Hirahara22} to establish time-bounded versions of SoI under the assumption that $\mathsf{MINKT}$ is easy (see also \cite[Appendix A]{DBLP:conf/coco/HiraharaKLO24}). This approach employs the pseudorandom generator construction stated in \Cref{l:black-box-PRG} and its reconstruction property, while using $\mathsf{MINKT}$ (which is in $\NP$) as a distinguisher. In fact, by adapting this approach and using the fact that the reconstruction procedure makes only non-adaptive queries, one can show that for any $x, y \in \bool^n$ and sufficiently large $t$ that is polynomial in $n$,
	\begin{equation}\label{eq:semi-sot-rK-inter}
		\pK^t(x, y) \geq \rK_{1/\poly(n)}^{\poly(t), \parallel\mathsf{MINKT}}(x) + \rK_{1/\poly(n)}^{\poly(t), \parallel\mathsf{MINKT}}(y \mid x) - O(\log^3 n) - O(\log t).
	\end{equation}
	We now upper-bound the right-hand side of the above. We have
	\begin{align*}
		\rK_{1/\poly(n)}^{\poly(t), \parallel\mathsf{MINKT}}(x)
		&\geq \rnK_{1/\poly(n \cdot t)}^{\poly(t)}(x) - O(\log t) \tag{by \Cref{l:rKt--parallel-NP}}\\
		&\geq \rnK^{\poly(t)}(x) - O(\log t) \tag{by \Cref{l:success-amp-rnK}}
	\end{align*}
	By a similar argument, we have
	\[
	\rK_{1/\poly(n)}^{\poly(t), \parallel\mathsf{MINKT}}(y \mid x) \geq \rnK^{\poly(t)}(y \mid x) - O(\log t).
	\]
	Substituting the above two inequalities into \Cref{eq:semi-sot-rK-inter} yields the desired semi-symmetry of information for $\rnK^t$.
\end{proof}

The following corresponds to \Cref{t:soi-pKt-avg} for the case of $\rnKt$.
        \begin{theorem}\label{thm:avg-soi-rnKt}
        For every polynomial-time samplable distribution family $\{\mathcal{D}_n\}_{n \in\mathbb{N}}$ supported over $\{0,1\}^n\times \{0,1\}^n$, there exists a constant $c>0$ such that for all large enough $n, k\in\mathbb{N}$,
        \[ \Prob_{(x,y)\sim \mathcal{D}_n} \left[\; \mathsf{rnKt}(x,y)\geq \mathsf{rnKt}(x) + \mathsf{rnKt}(y\mid x) - c\cdot \log^3 n - \log k \; \right] \geq 1-\frac{1}{k}. \]
    \end{theorem}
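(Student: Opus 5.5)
The plan is to mirror the proof of \Cref{thm:avg-soi-pnKt-strong}, with the semi-SoI statement \Cref{thm:semi-soi-rk-poly} playing the role of \Cref{thm:semi-soi-pk-poly}. Fix the samplable family $\{\mathcal{D}_n\}$ and let $t_D(n)$ bound its sampling time. Set $t = q(t_D(n))$ for a sufficiently large polynomial $q$, so that $t \geq 4n$ and \Cref{t:coding_pkt} applies with this time bound. Let $p$ be the polynomial from \Cref{thm:semi-soi-rk-poly}. Everything is then a chain of inequalities that hold either for all $(x,y)$ or with probability $1-1/k$ over $\mathcal{D}_n$.

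\textbf{Upper bound on $\pK^t(x,y)$.} First I show that, with probability at least $1-1/k$ over $(x,y)\sim\mathcal{D}_n$,
\[
\pK^{t}(x,y) \le \log\frac{1}{\mathcal{D}_n(x,y)} + O(\log n) \le \K(x,y) + O(\log n) + \log k \le \rnKt(x,y) + O(\log n) + \log k .
\]
The first inequality is the efficient coding theorem (\Cref{t:coding_pkt}) and holds for every $(x,y)$ in the support. The second is \Cref{f:incompressible} with $\gamma = \log k + b\log n$. The last uses $\K(z) \le \rnKt(z) + O(\log|z|)$. To justify it, I would note that any $\rnKt$ description of $z$ is a program plus a time bound $t'$. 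Brute force over all $r$ and $w$ of length $t'$ identifies the string that a $\ge 2/3$ fraction of $r$ single-valuedly output. This is a time-unbounded description of length $|\Pi| + O(\log\log t') \le \rnKt(z) + O(\log |z|)$, since $\log t' \le \rnKt(z)$.

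\textbf{Lower bound via semi-SoI.} Next I apply \Cref{thm:semi-soi-rk-poly} to get
\[
\pK^t(x,y) \ge \rnK^{p(t)}(x) + \rnK^{p(t)}(y\mid x) - \log p(t) - \log^3 p(n).
\]
Then I convert $\rnK^{p(t)}$ to $\rnKt$ using $\rnKt(x) \le \rnK^{p(t)}(x) + \lceil\log p(t)\rceil$, and similarly for the conditional term. Since $t = \poly(n)$, all these losses are $O(\log n)$ except $\log^3 p(n) = O(\log^3 n)$. Combining both bounds yields the claim with error $c\log^3 n + \log k$.

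\textbf{Main obstacle.} The crux is really \Cref{thm:semi-soi-rk-poly}, which I am allowed to assume. Within this proof, the only delicate point is the comparison $\K \le \rnKt + O(\log n)$, and specifically checking that the majority rule gives a well-defined output despite nondeterminism. It does: for good $r$, all accepting paths output the same $z$, and the good $r$ form a $\ge 2/3$ majority.
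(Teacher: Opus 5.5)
Your proposal is correct and follows the paper's route: the paper proves this theorem by adapting the proof of \Cref{thm:avg-soi-pnKt-strong} with \Cref{thm:semi-soi-rk-poly} in place of \Cref{thm:semi-soi-pk-poly}, omitting the details. Those details are exactly the chain you wrote out: the efficient coding theorem plus \Cref{f:incompressible} bound $\pK^t(x,y)$ by $\rnKt(x,y)+O(\log n)+\log k$, the semi-SoI gives the lower bound, and $\rnK^{p(t)}$ converts to $\rnKt$, with your brute-force justification of $\K(z)\le\rnKt(z)+O(\log|z|)$ also sound.
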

    \begin{proof}[Proof Sketch]
The proof can be easily adapted from that of \Cref{thm:avg-soi-pnKt-strong}, with \Cref{thm:semi-soi-pk-poly} replaced by \Cref{thm:semi-soi-rk-poly}. We omit the details.
    \end{proof}

\section{Complexity Separations from SoI: Proof of \Cref{t:soi-NP-coAM}}    

\subsection{Consequence of Worst-Case SoI for \texorpdfstring{$\pnKt$}{}}\label{sec:worst-soi-pnkt}

\Cref{t:soi-NP-coAM-pnKt}, stated and proved in \Cref{sec:worst-soi-pnkt}, shows that if worst-case symmetry of information holds for $\pnKt$, then $\mathsf{coNP}\not\subseteq\AM$. \Cref{t:soi-NP-coAM-rKt} establishes an analogous result for $\rnKt$ and is proved in \Cref{sec:worst-soi-rnkt}.

\begin{theorem}\label{t:soi-NP-coAM-pnKt}
Item 1 implies Item 2 in the following.
\begin{enumerate}
    \item \emph{\textbf{(Worst-Case Symmetry of Information for $\pnKt$).}} There exists a constant $c>0$ such that for all large enough $n\in\mathbb{N}$ and $x,y\in\bool^{n}$,
		\[
		\pnKt(x,y) \geq \pnKt(x) + \pnKt(y\mid x) - \log^c n.
		\]
    \item $\mathsf{UNSAT}\notin\mathsf{AMTIME}\mleft[2^{\log^d n}\mright]$ for any $d \geq 1$.
\end{enumerate}
\end{theorem}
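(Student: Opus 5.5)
We argue by contraposition. Assume $\mathsf{UNSAT}\in\AMTIME[2^{\log^d n}]$ for some $d\ge1$, i.e., $\NP\subseteq\coAMTIME[n^{\polylog n}]$, and construct strings violating SoI for $\pnKt$ with error $\log^c n$ for every constant $c$. The plan follows the sketch in \Cref{s:techniques}: build a derandomized procedure that, relative to an $(\AM\cap\coAM)$-type oracle, \emph{canonically} outputs strings of high conditional $\pnKt$. This makes the joint complexity of a long sequence of such strings tiny, while each string remains hard conditioned on its predecessors.

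\textbf{Step 1: a randomized $\SAT$-oracle estimator.} There is an algorithm $A^{\SAT}$ running in time $2^{O(n)}$ on inputs $(u,v)$ with $u\in\bool^n$ and $|v|\le 2^{n}$ (or longer, at the cost of $\poly(|v|)$ time). It accepts with high probability when $\pnKt(u\mid v)\ge 2n/3$ and rejects with high probability when $\pnKt(u\mid v)<n/2$. To obtain it, enumerate programs of length $<2n/3$ and time bounds $\le 2^{2n/3}$, sample the public randomness $r$, and use $\SAT$ queries to test single-valuedness and recover outputs, as in \Cref{l:pKt-SAT-superscript}. Success amplification for $\pnKt$ in the style of \cite{GKLO22} handles the probability thresholds.

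Under the hypothesis, each $\SAT$ query of length $2^{O(n)}$ can be answered by an $\AM\cap\coAM$ oracle at quasi-polynomial overhead. Hence $A^{\SAT}(u,v;\cdot)$ is computed by an oracle circuit $C_{(u,v)}$ of size $s=2^{\poly(n)}$ whose input is the internal randomness, and whose oracle lies in $\AMTIME\cap\coAMTIME$ at a padded length.

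\textbf{Step 2: derandomization via explicit constructions.} Use the Chen et al.\ single-valued Arthur--Merlin explicit construction for dense $\coAM$ properties \cite{DBLP:conf/stoc/0001LL25}, instantiated on the dense property of $\mathcal{O}$-hard truth tables with $\mathcal{O}\in\AM\cap\coAM$. Combine the resulting hard truth table with \Cref{t:IW97-PRG} to get multisets $T_s$ of $\poly(s)$ elements that, for infinitely many $s$, fool $(\AM\cap\coAM)$-oracle circuits of size $s$. After round reduction, the description of $T_s$ has $\rnKt$ complexity $2^{(\log s)^{\delta}}$ for an arbitrarily small constant $\delta$. With $s=2^{\poly(n)}$, this gives $2^{n^{\delta'}}$.

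Now define $x_1,\dots,x_\ell$ with $\ell=2^{n^{\delta'}}$ inductively: $x_i$ is the lexicographically first $u\in\bool^n$ such that $C_{(u,x_1\cdots x_{i-1})}$ accepts on more than half of $T_s$. Such a $u$ exists because most $u$ have high conditional complexity by \Cref{l:incomp-strings}. Pseudorandomness of $T_s$ then forces $\pnKt(x_i\mid x_{<i})\ge n/2$. On the other side, given $T_s$, exhaustive search with an $\AM\cap\coAM$ oracle constructs the whole tuple in time $\ell\cdot 2^{\poly(n)}$. Collapsing the oracle into nondeterminism via \Cref{l:rK-AM-coAM-superscript} yields $\rnKt(x_1,\dots,x_\ell)\le 2^{n^{\delta'}}\cdot O(1)$. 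Since $\pnKt\le\rnKt+O(\log)$ by a standard argument, the same bound holds for $\pnKt$.

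\textbf{Step 3: contradiction with SoI.} Apply Item 1 recursively, splitting the tuple in halves along a balanced binary tree of depth $\log\ell=n^{\delta'}$, to obtain
\[
\pnKt(x_1,\dots,x_\ell)\ge\sum_i\pnKt(x_i\mid x_{<i})-O(\ell)\cdot\log^c(\ell n).
\]
The binary tree bounds the number of SoI applications by $O(\ell)$. Each application is made to strings of length at most $\ell n$, so the error per application is $\log^c(\ell n)=n^{O(c\delta')}$. Two further points must be checked: the chain rule needs conditioning on tuples, which are padded to equal length, and each conditional term at an internal node must be lower-bounded by the sum of its leaves, which is again handled by recursion. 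Choosing $\delta'$ small relative to $c$ makes the total error $\ell\cdot n^{O(c\delta')}\ll\ell n/2$. This contradicts the upper bound $\ell\cdot O(1)$ from Step 2, so Item 1 fails.

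\textbf{Main obstacle.} I expect the hardest part to be Step 2, and specifically two points in it. The first is ensuring that the sequence $x_i$ is defined canonically, independently of the randomness, which requires a fixed pseudorandom multiset rather than random coins. The second is ensuring that the infinitely-often guarantee of $T_s$ applies at lengths compatible with the chosen $n$. I would handle the latter by choosing $n$ as a function of a good $s$: set $s=2^{n^{a}}$ and select $n$ accordingly. This is possible because the construction only needs infinitely many $n$ to refute Item 1, which quantifies over all large $n$.
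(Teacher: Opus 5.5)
\textbf{Steps 1 and 2 match the paper.} They are essentially the paper's argument: the $\SAT$-oracle estimator (\Cref{l:gapMpKtP-in-prBPE}), the padded $\AM\cap\coAM$-oracle circuits (\Cref{c:pad-circuit}), the pseudorandom multisets $T_s$ (\Cref{t:prg-circuit}), the lexicographic iterative construction with the oracle collapsed at the end (\Cref{l:SoI-core-lemma}), and the choice of $n$ as a function of a good $s$.

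\textbf{The gap is in Step 3.} Item 1 is an \emph{unconditional} statement. After the first balanced split you are left with the term $\pnKt(X_R\mid X_L)$, and ``recursing'' on it needs a conditional chain rule
\[
\pnKt(X_{RL},X_{RR}\mid X_L)\ \ge\ \pnKt(X_{RL}\mid X_L)+\pnKt(X_{RR}\mid X_L,X_{RL})-e',
\]
which Item 1 does not give you. The root inequality cannot supply the missing lower bound either. Combined with the easy direction, it only says $\pnKt(X_R\mid X_L)\approx\pnKt(X)-\pnKt(X_L)$, which is circular. So your bound $\pnKt(x_1,\dots,x_\ell)\ge\sum_i\pnKt(x_i\mid x_{<i})-O(\ell)\log^c(\ell n)$ is not justified by the argument you give.

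\textbf{The fix, which is what the paper does, is to peel off one string at a time.} Each step
\[
\pnKt(x_{\le i})\ge\pnKt(x_{<i})+\pnKt(x_i\mid x_{<i})-e
\]
is an unconditional instance with lengths $(i-1)n$ and $n$. Apply Item 1 to the pair $(x_{<i},\,x_i0^{(i-2)n})$, and absorb the padding at an additive $O(\log(\ell n))$ cost on each side. This is still only $\ell-1$ applications, each with error $\log^c(\ell n)+O(\log(\ell n))=n^{O(c\delta')}$. Your final calculation therefore goes through unchanged, and the balanced tree buys nothing in the count. The paper phrases this as \Cref{c:pre-padding}: it first obtains a violating pair of unequal lengths with error $m/4$, then pads at the end.

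Alternatively, you can derive the conditional rule from two unconditional applications plus the easy direction $\pnKt(A,B,C)\le\pnKt(A)+\pnKt(B,C\mid A)+O(\log)$. That direction needs its own short justification for $\pnKt$, for example indexing a good block of the public randomness to restore success probability $2/3$.
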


We will need the following lemmas.

\begin{lemma}\label{l:gapMpKtP-in-prBPE}
There exists a probabilistic oracle algorithm $B^{(-)}$ such that, given inputs $x \in \{0,1\}^m$ and $y \in \{0,1\}^{\leq 2^m \cdot m}$, the algorithm $B^{\SAT}(x, y)$ runs in time $2^{O(m)}$ with access to a $\SAT$-oracle, and
\begin{itemize}
    \item if $\pnKt_{2/3}(x \mid y) \leq m/2$, then $B^{\SAT}(x, y)$ rejects with probability at least $2/3$, and
    \item if $\pnKt_{1/3}(x \mid y) > m/2$, then $B^{\SAT}(x, y)$ accepts with probability at least $2/3$.
\end{itemize}
\end{lemma}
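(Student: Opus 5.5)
We would obtain $B$ by combining an estimate of the acceptance probability with a $\SAT$-based test for short nondeterministic descriptions. Set $s = m/2$. By definition, $\pnKt_\lambda(x\mid y)\le s$ iff with probability at least $\lambda$ over $r\sim\{0,1\}^{2^{s}}$ there exist a program $\Pi$ and a time bound $t$ with $|\Pi|+\lceil\log t\rceil\le s$ such that $\Pi(y;\cdot;r)$ nondeterministically outputs $x$ in $t\le 2^s$ steps. So $B$ repeatedly samples $r$ and decides the predicate $P(r)\equiv[\nKt(x\mid y;r)\le s]$ exactly using the $\SAT$ oracle. Concretely, it samples $K=O(1)$ independent strings $r_1,\dots,r_K$ of length $2^{s}\le 2^m$, computes $P(r_j)$ for each, and rejects iff the fraction of $j$ with $P(r_j)=1$ is at least $1/2$. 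The gap between probability $\ge 2/3$ (first case) and $<1/3$ (second case) gives, by a Chernoff bound with a large enough constant $K$, error at most $1/3$ in both cases. (Note that $\pnKt_{1/3}(x\mid y)>m/2$ means the probability of $P(r)$ at level $s$ is below $1/3$, and monotonicity in $s$ is not needed since we test exactly at $s=m/2$.)

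\textbf{Deciding $P(r)$.} There are at most $2^{s+1}$ pairs $(\Pi,t)$ with $|\Pi|+\lceil\log t\rceil\le s$, and each has $t\le 2^{s}$. For each pair we must check two things. The first is existence: some $w\in\{0,1\}^t$ makes $\Pi(y;w;r)$ output $x$ within $t$ steps. This is a single $\SAT$ query, obtained via Cook--Levin from an instance of size $\poly(t+|x|+|y|+|r|)=2^{O(m)}$; the query also contains $y$ (of length at most $2^m m$) and $r$. The second is single-valuedness: no $w$ makes $\Pi(y;w;r)$ output a string other than $x$ and $\bot$, which is the negation of a $\SAT$ query of the same size. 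Note that $\Pi$ has only sequential access to $y,r,w$, but its running time is at most $t$, so it reads only the first $t$ symbols of each. The circuit encoding the computation therefore has size $\poly(t)\cdot\widetilde{O}(1)$ plus the hardwired portions of $y$ and $r$. Hence every query has size $2^{O(m)}$, and since there are $2^{O(m)}$ pairs and $O(1)$ samples, the total running time is $2^{O(m)}$.

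\textbf{Main obstacle.} The step needing the most care is confirming that all query sizes and the simulation overhead of $U$ stay within $2^{O(m)}$. In particular, the conditional string $y$ can have length $2^m m$, and the randomness length $2^{s}$ is tied to the threshold in the definition of $\pnKt$. Both issues are handled because the universal machine runs in time $\widetilde{O}(|\Pi|,t)$ and only a length-$\min(t,|y|)$ prefix of $y$ matters. If the intended model instead lets $r$ have length $2^{s}$ with $s$ varying, we would fix $s=m/2$ exactly, since the two promise cases are stated in terms of that single threshold.
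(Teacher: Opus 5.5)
Your proposal is correct and is essentially the paper's proof: sample $r$, enumerate all $(\Pi,t)$ with $|\Pi|+\lceil\log t\rceil\le m/2$, use one $\SAT$ query for a witness outputting $x$ and one for a bad witness, and reject iff some pair passes. The paper takes a single sample, since the $\ge 2/3$ versus $<1/3$ gap already gives the required error, so your $K$-fold majority vote is harmless but unnecessary; also note that your bad-witness query should flag witnesses on which $\Pi$ fails to halt within $t$ steps, as the paper's does, and that the ``iff'' you attribute to the definition in the first case uses the easy fact that a program with $t\le 2^{s'}$ reads only the first $2^{s'}$ bits of $r$.
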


\begin{proof}
Consider the following algorithm: on input $(x, y)$, where $x \in \{0,1\}^m$ and $y \in \{0,1\}^{\leq 2^m \cdot m}$,
\begin{enumerate}
    \item Sample $r \sim \bool^{2^{m/2}}$.
    \item For all programs $\Pi \in \bool^*$ and running times $t \in \mathbb{N}$ such that $|\Pi| + \lceil \log t \rceil \leq m/2$, use the $\SAT$ oracle to answer the following two queries:
    \begin{enumerate}
        \item Does there exist a $w \in \bool^t$ such that $\Pi(y; w; r)$ outputs $x$ in $t$ steps?
        \item Does there exist a $w \in \bool^t$ such that $\Pi(y; w; r)$ either fails to terminate within $t$ steps or outputs some $x' \not\in \{x, \bot\}$ within $t$ steps?
    \end{enumerate}
    \item Reject if and only if there exists a pair $(\Pi, t)$ such that the answer to the first query is yes and the answer to the second query is no.
\end{enumerate}

It is clear that the running time of the above algorithm is $2^{O(m)}$, and correctness follows directly from the definition of $\pnKt$. Indeed, if $\pnKt_{2/3}(x \mid y) \leq m/2$, then the algorithm rejects with probability at least $2/3$ (over the randomness $r$), and if $\pnKt_{1/3}(x \mid y) > m/2$, the algorithm rejects with probability at most $1/3$.
\end{proof}

    \begin{lemma}\label{l:SoI-core-lemma}
    Suppose $\mathsf{SAT} \in \mathsf{coAMTIME}[n^{\polylog(n)}]$. Then for every constant $d > 0$, there are infinitely many integers $m$ such that for every $\ell \leq 2^m$, there exist strings $x_1, \dots, x_{\ell} \in \{0,1\}^m$ satisfying
    \[
        \sum_{i \in [\ell]} \pnKt(x_i \mid x_1, \dots, x_{i - 1}) \geq \ell \cdot \frac{m}{2}
    \]
    and
    \[
        \rnKt(x_1, \dots, x_{\ell}) \leq 2^{m^{1/d}}.
    \]
\end{lemma}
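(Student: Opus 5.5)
We follow the outline in \Cref{s:techniques}: derandomize the $\SAT$-oracle algorithm $B$ of \Cref{l:gapMpKtP-in-prBPE}, and use the derandomized version to construct canonical strings of high conditional $\pnKt$ complexity iteratively.

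\textbf{Step 1 (pseudorandom sets with low $\rnKt$ complexity).} Following \cite{DBLP:conf/stoc/0001LL25}, there is a language $L_{\mathsf{hard}}\in(\AMEXP\cap\coAMEXP)/2^{n^{\varepsilon}}$ that is not in $\SIZE^{\mathcal{O}}[2^n/n]$ for an oracle $\mathcal{O}\in\AM\cap\coAM$, where $\mathcal{O}$ is chosen to capture the oracle needed below. Plugging its truth table into the generator of \Cref{t:IW97-PRG}, we obtain multisets $T_s$ of size $\poly(s)$ that $(1/s)$-fool all $\mathcal{O}$-oracle circuits of size $s$, for infinitely many $s$. We choose the input length of $L_{\mathsf{hard}}$ to be $O(\log s)$ and the advice parameter $\varepsilon$ small. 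Then $T_s$ is produced by a single-valued Arthur--Merlin procedure with $2^{O(\log^{\varepsilon}s)}$ advice, after round reduction. This gives $\rnKt(T_s)\le 2^{(\log s)^{\delta}}$ for any desired constant $\delta>0$.

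\textbf{Step 2 (simulating $B^{\SAT}$ by circuits).} By hypothesis $\SAT\in\coAMTIME[n^{\polylog n}]$, and trivially $\SAT\in\NP\subseteq\AM$. After padding, $\SAT$ lies in an $\AM\cap\coAM$-type class at quasi-polynomial length. So for each input $(u,v)$ with $u\in\bool^m$ and $|v|\le 2^m m$, the computation of $B^{\SAT}(u,v)$ with its random string $r$ as input is an $\mathcal{O}$-oracle circuit $C_{(u,v)}$ of size $s=2^{\poly(m)}$. Since $\ell\le 2^m$, the conditioning strings $(x_1,\dots,x_{i-1})$ have length at most $2^m m$, which fits the length bound in \Cref{l:gapMpKtP-in-prBPE}. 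We pick $m$ so that $s(m)$ is one of the good lengths from Step 1.

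\textbf{Step 3 (iterative construction).} For $i=1,\dots,\ell$, let $x_i$ be the lexicographically first $u\in\bool^m$ such that $C_{(u,x_1\cdots x_{i-1})}$ accepts on more than half of $T_s$. Such a $u$ exists. By \Cref{l:incomp-strings}, a random $u$ has $\pnKt_{1/3}(u\mid\cdot)>m/2$ with high probability, so $B$ accepts it with probability at least $2/3$, and $T_s$ preserves this acceptance up to $1/s$.

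For the chosen $x_i$, acceptance on more than half of $T_s$ implies acceptance probability above $1/2-1/s$ under true randomness. This rules out $\pnKt_{2/3}(x_i\mid x_{<i})\le m/2$, which would force rejection with probability at least $2/3$. Summing over $i$ gives the first inequality.

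For the upper bound, given $T_s$ and $\ell$, the whole sequence is computed in time $\ell\cdot 2^m\cdot|T_s|\cdot\poly(s)=2^{\poly(m)}$ with an $\mathcal{O}$-oracle. Hence
\[
\rnKt^{\mathcal{O}}(x_1,\dots,x_\ell)\le 2^{(\log s)^{\delta}}+\poly(m).
\]
We then apply \Cref{l:rK-AM-coAM-superscript} (the $\AMTIME\cap\coAMTIME$ quasi-polynomial version) to remove the oracle. The query lengths are at most $2^{\poly(m)}$, so the time blow-up is $\exp(\polylog(2^{\poly(m)}))=2^{\poly(m)}$. This adds only $\poly(m)$ to the complexity. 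Choosing $\delta$ small relative to $d$ and the polynomial in $s$, the total is at most $2^{m^{1/d}}$.

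\textbf{Main obstacle.} The hardest step is Steps 1--2 together: making the oracle $\mathcal{O}$ fit the explicit construction of \cite{DBLP:conf/stoc/0001LL25}, which requires an oracle in $\AM\cap\coAM$, while the hypothesis only gives quasi-polynomial time. I would handle this by padding, scaling everything so that the quasi-polynomial overhead only affects the exponent polynomial in $m$. The remaining danger is the final oracle collapse. Its time overhead must stay $2^{\poly(m)}$, contributing only $\poly(m)$ to $\rnKt$, and the dominant term must stay the advice length $2^{(\log s)^\delta}$. Both need careful bookkeeping.
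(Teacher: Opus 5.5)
Your proposal is correct and follows essentially the same route as the paper. You pad $\SAT$ into a language in $\AM\cap\coAM$, simulate $B^{\SAT}$ by oracle circuits of size $2^{\poly(m)}$ that are fooled by the low-$\rnKt$ multisets $T_s$ of \Cref{t:prg-circuit}, iteratively pick lexicographically first strings passing the derandomized test, and remove the oracle at the end.

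The differences are cosmetic: you use threshold $1/2$ instead of $1/3+1/10$, and you remove the oracle via the query-length-restricted part of \Cref{l:rK-AM-coAM-superscript} rather than \Cref{l:rKt-AM-coAM}. One point needs tightening: you cannot assume $s(m)$ is exactly a good length, so for each good $s$ take $m=\max\{m : 2^{m^c}\le s\}$, as the paper does, and use that $T_s$ fools all oracle circuits of size at most $s$.
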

 \begin{proof}
Let $B$ be the (probabilistic) algorithm from \Cref{l:gapMpKtP-in-prBPE}. We first observe the following.

\begin{claim}\label{c:pad-circuit}
There exists a constant $c > 0$ and a language $L \in \AM \cap \mathsf{coAM}$ such that the following holds. Given any $x \in \bool^m$ and $y \in \bool^{\leq 2^m \cdot m}$, one can construct, in deterministic time $\poly\mleft(2^{m^c}\mright)$, an $L$-oracle circuit $C_{(x,y)}$ of size $2^{m^c}$ that treats its input as internal randomness used by the algorithm $B$ and computes $B^{\SAT}(x, y)$; that is, $C_{(x,y)}(r) = B^{\SAT}(x, y; r)$ for every $r$.
\end{claim}
\begin{proof}[Proof of \Cref{c:pad-circuit}]\renewcommand\qedsymbol{$\diamond$}
By \Cref{l:gapMpKtP-in-prBPE}, for every $x \in \bool^m$ and $y \in \bool^{\leq 2^m \cdot m}$, we can construct a $\SAT$-oracle circuit $C_0$ of size $s = 2^{O(m)}$ that treats its input as internal randomness and computes $B^{\SAT}(x, y)$. Note that $C_0$ makes queries to $\SAT$ of length at most~$s$.

By assumption, $\SAT \in \mathsf{coAMTIME}[t(n)]$, where $t(n) = n^{\polylog(n)}$. Applying a standard padding argument, we can define a language $L \in \AM \cap \mathsf{coAM}$ that is an appropriately padded version of $\SAT$, such that $C_0$ can be simulated by an $L$-oracle circuit of size at most $\poly(t(s)) \leq 2^{m^c}$ for a constant $c$ chosen large enough.

\end{proof}

Let $c > 0$ be the constant, and let $L \in \AM \cap \mathsf{coAM}$ be the language from
\Cref{c:pad-circuit}. Consider \Cref{t:prg-circuit} with $\varepsilon \vcentcolon= 1/(4cd)$, and let
$\mathcal{S} \subseteq \nat$ be the infinite set of lengths $s$ at which the multiset $T_s$ is
pseudorandom against $L$-oracle circuits of size $s$. For $s \in \mathcal{S}$, let
\[
    m(s) \ \vcentcolon= \ \max\{\,m \in \nat \ :\ 2^{m^{c}} \le s\,\},
\]
so that $2^{m(s)^{c}} \le s < 2^{(m(s)+1)^{c}}$. We say that an integer $m$ is \emph{good} if
$m = m(s)$ for some $s \in \mathcal{S}$. Note that there are infinitely many good values of $m$.

Fix any (sufficiently large) good $m$. Let $s$ be the integer that witnesses the goodness of $m$, and let $T_s$ be the corresponding pseudorandom set.

Consider the following deterministic procedure $A$, with oracle access to $L$, which takes as input $m$, $T_s$, and a string $y \in \bool^{\leq 2^m \cdot m}$, and proceeds as follows:
\begin{quote}
    For each $x \in \bool^m$, compute
    \[
        \mu_x \vcentcolon= \Prob_{z \sim T_s}\left[C_{(x,y)}(z) = 1\right],
    \]
    where $C_{(x,y)}$ is the $L$-oracle circuit obtained using \Cref{c:pad-circuit}.  
    Output the first $x$ such that $\mu_x > \frac{1}{3} + \frac{1}{10}$.
\end{quote}

\begin{claim}\label{c:pKt-construct}
The procedure $A$, on input $(m, T_s, y)$ with $y \in \bool^{\leq 2^m \cdot m}$, runs in deterministic time at most $2^{\poly(m)}$ when given oracle access to $L$ and outputs an $m$-bit string $x$ satisfying $\pnKt(x \mid y) \geq m/2$.
\end{claim}
\begin{proof}[Proof of \Cref{c:pKt-construct}]\renewcommand\qedsymbol{$\diamond$}
We first analyze the running time. Since $|T_s| \leq \poly(s)$ and $s \leq 2^{(m+1)^c}$, it follows that the algorithm runs in time at most $2^{\poly(m)}$.

We now show correctness. Because $T_s$ $(1/10)$-fools $C_{(x,y)}$ for every $x \in \bool^m$ and $y \in \bool^{\leq 2^m \cdot m}$, the value $\mu_x$ approximates $\Prob_B[B^{\SAT}(x,y) = 1]$ within additive error $1/10$. Thus, any output $x$ cannot satisfy $\pnKt_{2/3}(x \mid y) \leq m/2$, as that would imply $\Prob_B[B^{\SAT}(x,y) = 1] < 1/3$ and hence $\mu_x < 1/3 + 1/10$, contradicting the condition for output. Furthermore, since we enumerate all $x \in \bool^m$, at least one of them satisfies $\pnKt_{1/3}(x \mid y)>m/2$. It follows that $B^{\SAT}(-, y)$ accepts at least one such $x$. Therefore, we are guaranteed to find one with $\mu_x \geq 2/3 - 1/10 \geq 1/3 + 1/10$.
\end{proof}

Define $x_1, x_2, \dots, x_\ell \in \bool^m$ as follows:
\begin{itemize}
    \item $x_1$ is the output of $A(m, T_s, \epsilon)$.
    \item For $i = 2, 3, \dots, \ell$, let $x_i$ be the output of $A(m, T_s, x_1, \dots, x_{i-1})$.
\end{itemize}

By correctness of $A$, for each $i\in[\ell]$, we have
\[
    \pnKt(x_i \mid x_1, \dots, x_{i - 1})\geq \frac{m}{2}
\]
It follows that
\[
    \sum_{i \in [\ell]} \pnKt(x_i \mid x_1, \dots, x_{i - 1}) \geq \ell \cdot m/2,
\]
as desired.

To upper bound the $\rnKt$-complexity, observe that we can first compute $T_s$ and then run procedure $A$ repeatedly to obtain the sequence $x_1, \dots, x_\ell$.

From the efficiency guarantee in \Cref{t:prg-circuit}, we have:
\begin{align*}
    \rnKt(T_s) &\leq 2^{\log^\varepsilon s} \leq 2^{m^{1/3d}},
\end{align*}

Given $m$ and $T_s$, the sequence $x_1, \dots, x_\ell$ can be computed via $\ell$ invocations of $A$, each running in time $2^{\poly(m)}$ \emph{given oracle access to $L$}. This yields:
\begin{align*}\label{l:x-rnkt-upper}
    \rnKt^{L}(x_1, \dots, x_\ell) &\leq O\mleft (2^{m^{1/3d}}\mright) + \log\left(\ell \cdot 2^{\poly(m)}\right) \\
    &\leq 2^{m^{1/(2d)}}.\numberthis
\end{align*}
In the above we use the fact that $m$ is sufficiently large.

Finally, since $L\in \AM\cap\coAM$, we get 
\begin{align*}
    \rnKt(x_1, \dots, x_\ell) &\leq  O\mleft(\rnKt^{L}(x_1, \dots, x_\ell)\mright) \tag{by \Cref{l:rKt-AM-coAM}}\\
    &\leq O\mleft(2^{m^{1/(2d)}}\mright) \tag{by \Cref{l:x-rnkt-upper}}\\
    &\leq 2^{m^{1/d}}.
\end{align*}
This completes the proof.
\end{proof}

We now proceed to prove \Cref{t:soi-NP-coAM-pnKt}.
\begin{proof}[Proof of \Cref{t:soi-NP-coAM-pnKt}]
We show the contrapositive. Assume that $\mathsf{UNSAT} \in \mathsf{AMTIME}[n^{\polylog(n)}]$. Note that this implies $\mathsf{SAT} \in \mathsf{coAMTIME}[n^{\polylog(n)}]$. Below, we show the failure of worst-case symmetry of information for $\pnKt$. That is, for every constant $c>0$, there exist infinitely many $n\in\mathbb{N}$ and $x,y\in\bool^{n}$ such that
		\[
		\pnKt(x,y) < \pnKt(x) + \pnKt(y\mid x) - \log^c n.
		\]

Let $c > 1$ be any constant. We first establish the following claim.

\begin{claim}\label{c:pre-padding}
    The following holds for infinitely many values of $m$. There exist $v \in \bool^{m}$ and $u \in \bool^{n}$, where $m \leq n \leq 2^{m^{1/(2c)}}$, such that
    \[
        \pnKt(u,v) < \pnKt(u) + \pnKt(v \mid u) - \frac{m}{4},
    \]
\end{claim}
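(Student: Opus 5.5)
The plan is to derive the claim from \Cref{l:SoI-core-lemma} by a telescoping (chain-rule) argument along the sequence it produces. We are under the hypothesis $\mathsf{SAT}\in\mathsf{coAMTIME}[n^{\polylog(n)}]$. Apply \Cref{l:SoI-core-lemma} with the constant $d \vcentcolon= 4c$. For each of the infinitely many good $m$, fix
\[
\ell \vcentcolon= 2^{m^{1/(4c)}+2} \le 2^m .
\]
The lemma then gives strings $x_1,\dots,x_\ell\in\bool^m$ with
\[
\sum_{i\in[\ell]}\pnKt(x_i\mid x_1,\dots,x_{i-1})\ge \ell m/2
\quad\text{and}\quad
\rnKt(x_1,\dots,x_\ell)\le 2^{m^{1/(4c)}} .
\]
Write $X_i \vcentcolon= x_1\circ\cdots\circ x_i$, with $X_0=\epsilon$. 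Every candidate $u=X_{i-1}$ with $i\ge 2$ has length $n=(i-1)m$, so
\[
m\le n\le \ell m\le 2^{m^{1/(2c)}}
\]
for large $m$, as the claim requires.

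The first auxiliary step is to compare $\pnKt$ with $\rnKt$. I would show $\pnKt(z)\le \rnKt(z)+O(\log|z|)$. The reason is that an $\rnKt$ program that succeeds on a $2/3$ fraction of $r$ is, for each such $r$, an $\nKt$ program conditioned on $r$. One must check that the public random string in the definition of $\pnKt$ has length $2^s\ge t$, which holds since $s\ge\lceil\log t\rceil$. Combined with the lemma, this gives $\pnKt(X_\ell)\le 2^{m^{1/(4c)}}+O(\log(\ell m))$. A second routine bookkeeping step is that $\pnKt(u,v)$ and $\pnKt(u\circ v)$ differ by at most $O(\log n)$ when $|u|$ and $|v|$ are known; the same holds for the conditionals. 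These losses will be absorbed into the slack.

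Next I would argue by contradiction. Suppose that for every $i\in\{2,\dots,\ell\}$,
\[
\pnKt(X_{i-1},x_i)\ge \pnKt(X_{i-1})+\pnKt(x_i\mid X_{i-1})-m/4 .
\]
Converting pairs to concatenations at cost $O(\log(\ell m))$ per step and summing over the telescope, we get
\[
\pnKt(X_\ell)\ge \pnKt(X_1)+\sum_{i=2}^{\ell}\pnKt(x_i\mid X_{i-1})-(\ell-1)\bigl(m/4+O(\log(\ell m))\bigr).
\]
The first term of the lemma's sum is at most $m+O(1)$, so the remaining terms sum to at least $\ell m/2-m-O(1)$. Since $\log(\ell m)=O(m^{1/(4c)})=o(m)$, the right-hand side is at least $\ell m/5$ for large $m$. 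This exceeds $2^{m^{1/(4c)}+2}\cdot m/5$, which contradicts the upper bound on $\pnKt(X_\ell)$. Hence some $i\ge 2$ violates the inequality, and $u\vcentcolon= X_{i-1}$, $v\vcentcolon= x_i$ witness the claim for this $m$. Because there are infinitely many good $m$, the claim follows.

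I expect the main obstacle to be the $\rnKt\to\pnKt$ comparison. The subtleties there are the mismatch between the randomness lengths ($t$ versus $2^s$) and that the success probability must be at least $2/3$ in both definitions. If this direct argument is delicate, my fallback is the amplification of \Cref{l:success-amp-rnK}, followed by truncating or padding the random string. The encoding overheads between pairs and concatenations are the second, more minor, point to check. They only need to be $o(m)$ per step, and we have ample room: the gap is $m/4$, against a per-step deficit of $\Theta(m)$ on average.
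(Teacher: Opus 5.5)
Your proposal is correct and follows essentially the paper's argument. The paper likewise applies \Cref{l:SoI-core-lemma} (with $d=3c$ and $\ell=2^{m^{1/d}}$), telescopes the assumed SoI inequality along $u=(x_1,\dots,x_{i-1})$, $v=x_i$, and contradicts the upper bound via $\pnKt\le\rnKt$. Your version spells out the $\rnKt$-to-$\pnKt$ comparison and the pair/concatenation overheads that the paper leaves implicit, and your direct per-$m$ formulation is equivalent to the paper's contradiction over all but finitely many $m$.
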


\begin{proof}[Proof of \Cref{c:pre-padding}]\renewcommand{\qedsymbol}{$\diamond$}
    Suppose, for the sake of contradiction, that for all but finitely many $m$, the following holds: for all $v \in \bool^m$ and $u \in \bool^n$ satisfying $m \leq n \leq 2^{m^{1/(2c)}}$, we have
    \[
        \pnKt(u,v) \geq \pnKt(u) + \pnKt(v \mid u) - \frac{m}{4}.
    \]

    Consider \Cref{l:SoI-core-lemma} with $d \coloneqq 3c$. Let $m$ be a sufficiently large ``good'' integer such that the statement of \Cref{l:SoI-core-lemma} holds for $\ell \coloneqq 2^{m^{1/d}}$, and let $x_1, \dots, x_\ell$ be the associated sequence of strings.

    Observe that $\ell \cdot m \leq 2^{m^{1/(2c)}}$. Therefore, we can apply the assumption to $v = x_\ell$ and $u = (x_1, \dots, x_{\ell - 1})$, which yields: 
    \begin{align*}
        \pnKt(x_1, \dots, x_\ell) &\geq \pnKt(x_1, \dots, x_{\ell-1}) + \pnKt(x_\ell \mid x_1, \dots, x_{\ell-1}) - \frac{m}{4}
    \end{align*}
    Repeating this argument inductively gives:
    \begin{align*}
        \pnKt(x_1, \dots, x_\ell)
        &\geq \sum_{i \in [\ell]} \pnKt(x_i \mid x_1, \dots, x_{i-1}) - \ell \cdot \frac{m}{4} \\
        &\geq \ell \cdot \frac{m}{2} - \ell \cdot \frac{m}{4} \tag{by \Cref{l:SoI-core-lemma}} \\
        &= 2^{m^{1/d}} \cdot \frac{m}{4}.
    \end{align*}
    However, this contradicts the fact that
    \[
        \pnKt(x_1, \dots, x_\ell) \leq 2^{m^{1/d}}.
    \]
    This completes the proof of the claim.
\end{proof}

By \Cref{c:pre-padding}, for infinitely many values of $m$, there exist $v \in \bool^m$ and $u \in \bool^n$ such that
\begin{equation}\label{eq:n}
    m \leq n \leq 2^{m^{1/(2c)}},
\end{equation}
and
\begin{equation}\label{eq:uv}
    \pnKt(u, v) < \pnKt(u) + \pnKt(v \mid u) - \frac{m}{4}.
\end{equation}
Define
\begin{equation}\label{e:x-y}
    x \coloneqq u \quad \text{and} \quad y \coloneqq v0^{n - m}.
\end{equation}
Then we have:
\begin{align*}
    \pnKt(x, y) 
    &\leq \pnKt(u, v) + O(\log n) \tag{by \Cref{e:x-y}} \\
    &\leq \pnKt(u) + \pnKt(v \mid u) - \frac{m}{4} + O(\log n) \tag{by \Cref{eq:uv}} \\
    &\leq \pnKt(x) + \pnKt(y \mid x) + O(\log n) - \frac{m}{4} + O(\log n) \tag{by \Cref{e:x-y}} \\
    &\leq \pnKt(x) + \pnKt(y \mid x) + O(\log n) - \frac{\log^{2c} n}{4} \tag{by \Cref{eq:n}} \\
    &< \pnKt(x) + \pnKt(y \mid x) - \log^c n,
\end{align*}
as desired.
\end{proof}

\subsection{Consequence of Worst-Case SoI for \texorpdfstring{$\rnKt$}{}}\label{sec:worst-soi-rnkt}

The following corresponds to \Cref{t:soi-NP-coAM} for the case of $\rnKt$.

    \begin{theorem}\label{t:soi-NP-coAM-rKt}
Item 1 implies Item 2 in the following.
\begin{enumerate}
    \item \emph{\textbf{(Worst-Case Symmetry of Information for $\rnKt$).}} There exists a constant $c>0$ such that for all large enough $n\in\mathbb{N}$ and $x,y\in\bool^{n}$,
		\[
		\rnKt(x,y) \geq \rnKt(x) + \rnKt(y\mid x) - \log^c n.
		\]
    \item $\mathsf{UNSAT}\notin\mathsf{AMTIME}\mleft[2^{\log^d n}\mright]$ for any $d \geq 1$.
\end{enumerate}
\end{theorem}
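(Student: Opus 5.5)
We prove the contrapositive along the lines of the proof of \Cref{t:soi-NP-coAM-pnKt}. Assume $\mathsf{UNSAT}\in\mathsf{AMTIME}[n^{\polylog(n)}]$, so $\SAT\in\coAMTIME[n^{\polylog(n)}]$. We aim to show that for every constant $c>0$ there are infinitely many $n$ and $x,y\in\bool^n$ with $\rnKt(x,y)<\rnKt(x)+\rnKt(y\mid x)-\log^c n$. The only change from the $\pnKt$ case is the measure used for the conditional lower bounds. So the first step is an $\rnKt$ analogue of \Cref{l:SoI-core-lemma}: for infinitely many $m$ and every $\ell\le 2^m$, there should be $x_1,\dots,x_\ell\in\bool^m$ with $\sum_i\rnKt(x_i\mid x_1,\dots,x_{i-1})\ge \ell m/2$ and $\rnKt(x_1,\dots,x_\ell)\le 2^{m^{1/d}}$.

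To obtain this, we replace the estimator $B$ of \Cref{l:gapMpKtP-in-prBPE} by an $\rnKt$ estimator $B'$. On input $(x,y)$, $B'$ samples independent random strings $r_1,\dots,r_k$ with $k=O(m)$. For each candidate pair $(\Pi,t)$ with $|\Pi|+\lceil\log t\rceil\le m/2$ and each $j$, it makes the same two non-adaptive $\SAT$ queries as before ("some path outputs $x$" and "no path outputs something outside $\{x,\bot\}$"), using randomness $r_j$. It rejects if some $(\Pi,t)$ is valid for more than half of the $r_j$.

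The intended guarantees are as follows. If $\rnKt_{2/3}(x\mid y)\le m/2$, then the witnessing program is valid for a $2/3$ fraction of randomness, so by Chernoff $B'$ rejects with high probability. If $\rnKt_{1/3}(x\mid y)>m/2$, then every short program is valid for less than a $1/3$ fraction of randomness. By Chernoff and a union bound over the $2^{m/2}$ programs, $B'$ accepts with high probability.

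The remaining gap between the thresholds $\rnKt_{1/3}$ and $\rnKt_{2/3}$ is harmless. In the construction, an output $x$ of the procedure $A$ satisfies $\rnKt_{2/3}(x\mid y)>m/2$, which is what we want, since $\rnKt=\rnKt_{2/3}$. Existence of an accepted $x$ requires some $x\in\bool^m$ with $\rnKt_{1/3}(x\mid y)>m/2$, and this holds by the counting bound of \Cref{l:incomp-strings} for $\rnK_\lambda$ with $\lambda>1/4$. If the counting lemma causes trouble at $\lambda=1/3$, we would instead use thresholds $0.6$ and $0.9$ in the rejection rule, which changes only constants.

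With $B'$ in place, \Cref{c:pad-circuit}, the pseudorandom multisets $T_s$ from \Cref{t:prg-circuit}, and \Cref{c:pKt-construct} go through verbatim. The procedure $A$ is the same deterministic $L$-oracle search for $L\in\AM\cap\coAM$, and the upper bound $\rnKt(x_1,\dots,x_\ell)\le 2^{m^{1/d}}$ again follows from $\rnKt(T_s)\le 2^{\log^\varepsilon s}$ together with \Cref{l:rKt-AM-coAM}.

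The final step repeats the argument of \Cref{c:pre-padding}. We suppose SoI with error $m/4$ holds for all $v\in\bool^m$ and $u$ of length up to $2^{m^{1/(2c)}}$. Then we peel off $x_\ell,x_{\ell-1},\dots$ one at a time, obtaining $\rnKt(x_1,\dots,x_\ell)\ge \ell m/4$. This contradicts the upper bound $2^{m^{1/d}}$ when $\ell=2^{m^{1/d}}$. Finally we pad $v$ to length $n$ as in \Cref{e:x-y}, using that padding changes $\rnKt$ by only $O(\log n)$. This turns the gap $m/4\ge\log^{2c}n/4$ into a violation of Item 1.

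The main obstacle is checking that $B'$ really is a $2^{O(m)}$-time $\SAT$-oracle algorithm whose acceptance probability separates the two $\rnKt$ thresholds. The subtlety is that validity of a program for a given $r_j$ is a conjunction of an $\NP$ query and a $\coNP$ query, and we must make sure the majority vote over independent $r_j$ concentrates uniformly over all $2^{m/2}$ candidate programs. We also need $B'$ to remain simulable by a small $L$-oracle circuit as in \Cref{c:pad-circuit}, with $T_s$ fooling the circuit's $O(m\cdot 2^{m/2})$ random bits.
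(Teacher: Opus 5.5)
Your proof is correct, but at the one step where it departs from the paper you do considerably more work than needed. The paper builds no new estimator. It observes that $\pnKt(x\mid y)\le\rnKt(x\mid y)$ for all $x,y$. Hence the lower bound $\sum_i\pnKt(x_i\mid x_1,\dots,x_{i-1})\ge \ell m/2$ from \Cref{l:SoI-core-lemma} transfers verbatim to $\rnKt$, and the upper bound in that lemma is already stated for $\rnKt(x_1,\dots,x_\ell)$. \Cref{c:pre-padding} and the padding step then run unchanged with $\rnKt$ in place of $\pnKt$.

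Your route instead replaces the estimator of \Cref{l:gapMpKtP-in-prBPE} by an $\rnKt$-specific $B'$ that uses $k=O(m)$ independent random strings and a majority vote for each candidate program. This is sound: Chernoff for each fixed $(\Pi,t)$ plus a union bound over the $2^{m/2+O(\log m)}$ candidates needs only $k\ge Cm$, and the circuit of \Cref{c:pad-circuit} still has size $2^{O(m)}$. Its only virtue is that it never mentions $\pnKt$. For this theorem it buys nothing, and it obliges you to re-verify the estimator, the counting bound, and the circuit simulation.

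Two small points about your version.
\begin{itemize}
\item \Cref{l:incomp-strings} as stated covers $\rnK_\lambda$ only for $\lambda>1/2$. The bound you need at $\lambda=1/3$ still follows from the averaging argument used there for $\pnK$: for each fixed $r$, a program is valid for at most one string. So each pair $(\Pi,t)$ accounts for at most three strings with $\rnKt_{1/3}(\cdot\mid y)\le m/2$.
\item Your fallback with thresholds $0.6$ and $0.9$ points the wrong way. Rejecting only when some program is valid with probability at least $0.9$ certifies just $\rnKt_{0.9}(x\mid y)>m/2$, which does not imply $\rnKt_{2/3}(x\mid y)>m/2$. You would need to keep the upper threshold at $2/3$, or pay $O(\log m)$ via \Cref{l:success-amp-rnK}.
\end{itemize}
Neither point affects your main line of argument.
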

\begin{proof}[Proof Sketch]
    The proof is a easy adaption of that of \Cref{t:soi-NP-coAM-pnKt}, which employs \Cref{l:SoI-core-lemma}. It suffices to note that in \Cref{l:SoI-core-lemma}, we can replace $\pnKt$ with $\rnKt$, since for every $x,y\in\bool^n$, it holds that $\pnKt(x\mid y)\leq \rnKt(x\mid y)$.
\end{proof}

\section{Relativization Barrier for SoI: Proof of \Cref{thm:neg-o-soi}}\label{ss:rel_barrier_soi}

We will prove \Cref{thm:neg-o-soi}, restated here for convenience.

\begin{theorem}

    There exist an oracle $\mathcal{O}$ and a constant $\varepsilon > 0$ such that, for every complexity measure $\mu \in \{\rKt, \pKt, \nKt, \rnKt, \pnKt\}$, the following holds. For all sufficiently large $n \in \N$, there exist strings $x, y \in \bool^n$ such that
    \[\mu^{\mathcal{O}}(x, y) < \mu^{\mathcal{O}}(x) + \mu^{\mathcal{O}}(y \mid x) - \varepsilon \cdot n.\]
    
\end{theorem}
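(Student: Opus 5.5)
We construct the oracle so that the argument of \Cref{l:SoI-core-lemma} goes through in relativized form, with exponential rather than polylogarithmic savings. The plan is to take $\mathcal{O}$ to be a $\PSPACE$-complete language (e.g., $\mathsf{TQBF}$), possibly joined with a set of random strings placed at sparse lengths.

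With $\mathcal{O} = \mathsf{TQBF}$, nondeterminism, randomness and $\NP^{\mathcal{O}}$ queries all collapse into $\PSPACE = \P^{\mathcal{O}}$. Concretely, every nondeterministic program with an $\mathcal{O}$-oracle running in time $t$ can be simulated deterministically with an $\mathcal{O}$-oracle in time $\poly(t)$. This gives, for every $\mu$ in the list, the two-sided relation
\[
\Kt^{\mathcal{O}}(x\mid y) \ \ge\ \mu^{\mathcal{O}}(x\mid y) \ \ge\ \Omega\bigl(\Kt^{\mathcal{O}}(x\mid y)\bigr) - O(\log |xy|)
\]
for $\nKt$, and a similar relation for $\rKt,\pKt,\rnKt,\pnKt$. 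For the randomized measures we do not derandomize. Instead we note the following: whenever the program's running time $t$ is at most $2^{O(n)}$, an $\mathcal{O}$-oracle deterministic machine can, in time $\poly(t)$, find the unique majority output among programs of length $k$. Hence every counting argument of the form ``most strings of length $n$ have complexity $\ge n - O(1)$'' relativizes, by \Cref{l:incomp-strings}. What we need to avoid is a multiplicative loss in the lower bounds. So we only use lower bounds that come directly from counting and search (as in Ronneburger's argument), never from the multiplicative comparisons in \Cref{lem:ktA-leq-nkt}.

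The core construction is a relativized Ronneburger-style pair. Fix $n$ and let $k = \lfloor n/2\rfloor$.
\begin{itemize}
\item Let $x$ be the lexicographically first $n$-bit string such that no $\mathcal{O}$-oracle nondeterministic (respectively randomized) program of size $\le 2n/3$ and time $\le 2^{2n/3}$ outputs it.
\item Let $y$ be the lexicographically first $n$-bit string with the analogous property conditioned on $x$.
\end{itemize}
With oracle $\mathcal{O}$, deciding ``program $\Pi$ with time $t \le 2^{O(n)}$ single-valuedly (with probability $\ge 2/3$) outputs $z$'' is a $\PSPACE$ question on inputs of size $2^{O(n)}$. This is the main obstacle: $\mathcal{O}$ only answers polynomial-length queries, so the question is not directly one query. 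I would handle it by padding the oracle. Let $\mathcal{O}$ be an $\EXPSPACE$-complete (or double-exponential-space) language encoded so that a query of length $O(n)$ answers an exponential-size $\PSPACE$ instance described succinctly by $(\Pi,t,z)$. Then $x$ and $y$ can both be computed from $n$ by an $\mathcal{O}$-oracle machine in time $\poly(n)$, so $\mu^{\mathcal{O}}(x,y) = O(\log n)$.

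The lower bounds are as follows. First, $\mu^{\mathcal{O}}(x) \ge 2n/3 - O(1)$ by construction: any $\mu$-description of size $<2n/3$ with $\log t$ accounted for is among the excluded programs. For $\pnKt$ and $\pKt$ we need a counting step over random strings $r$ of length $2^{s}$; I would define the exclusion with respect to a fraction-$\ge 1/3$ criterion, which the oracle can also evaluate. Second, $\mu^{\mathcal{O}}(y\mid x) \ge 2n/3$, since $x$ is available as a conditional. The existence of such strings follows from the relativized counting \Cref{l:incomp-strings}. Together these give $\mu^{\mathcal{O}}(x,y) = O(\log n) < 4n/3 - \varepsilon n$ with $\varepsilon = 1$.

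I expect the delicate part to be the self-reference. The oracle must be able to evaluate $\mu^{\mathcal{O}}$ of the form just defined, yet $\mathcal{O}$ itself appears inside the measure. I would resolve this with a layered construction: the oracle answers queries of length $\ell$ about oracle programs whose own queries have length $<\ell$. Programs of time $\le 2^{2n/3}$ can make queries of length up to $2^{2n/3}$, which breaks the layering, so I would restrict the exclusion to programs whose queries the target machine can afford. A cleaner route is to let the queries about length-$n$ objects be answered at oracle length $2^{n}$, and to have the generating machine for $(x,y)$ use time $2^{O(n)}$. That still gives $\mu^{\mathcal{O}}(x,y) = O(n)$ only, which is no good. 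So the real fix is to make the oracle randomized and sparse: put random bits at length $2^{n}$ that encode $(x,y)$ directly, readable in $O(\log)$-ish description but with $\log t \approx n$. This forces tuning the constants, taking the exclusion threshold near $n$ and giving the generating cost about $n/2$, to leave a linear gap $\varepsilon n$.
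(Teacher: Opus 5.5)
Your first concrete claim cannot hold for any oracle. Projecting onto the first half gives $\mu^{\mathcal{O}}(x)\le\mu^{\mathcal{O}}(x,y)+O(\log n)$, so $\mu^{\mathcal{O}}(x,y)=O(\log n)$ is incompatible with $\mu^{\mathcal{O}}(x)\ge 2n/3$. Concretely, the $\poly(n)$-time $\mathcal{O}$-oracle machine that prints $x$ from $n$ would itself be one of the short, fast programs that $x$ was chosen to avoid. The self-reference you flag at the end is therefore not a technicality that a stronger oracle (padded, $\mathsf{EXPSPACE}$-complete, or random and sparse) can engineer away. Whatever $\mathcal{O}$ is, if $x$ is the lexicographically first string with $\mu^{\mathcal{O}}(x)>\theta$, then any $O(\log n)$-size procedure generating $x$ must run for time at least $2^{\theta-O(\log n)}$. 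In your construction, deciding whether a time-$t$ single-valued nondeterministic or randomized $\mathcal{O}$-oracle program outputs $z$ costs a $\mathsf{TQBF}$ query of length $\poly(t)$. So a single pair only yields $\mu^{\mathcal{O}}(x,y)\le c\,\theta+O(\log n)$ against $\mu^{\mathcal{O}}(x)+\mu^{\mathcal{O}}(y\mid x)>2\theta$, which gives nothing unless $c<2$. You were right that ``$O(n)$'' is no good for one pair. Your final fix (random bits at length $2^n$ encoding $(x,y)$) does not help either: any cheap oracle encoding of $(x,y)$ is equally a cheap encoding of $x$, and nothing in it makes $\mu^{\mathcal{O}}(y\mid x)$ large.

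The missing idea is the one in \Cref{l:SoI-core-lemma}, which you cite but never use: iterate the construction instead of stopping at a pair. For $i\le\ell$, let $x_i$ be the lexicographically first $n$-bit string with $\mu^{\mathcal{O}}(x_i\mid x_1,\ldots,x_{i-1})>n/2$. A single $O(\log n)$-size deterministic $\mathcal{O}$-oracle procedure prints the whole sequence in time $\ell\cdot 2^{O(n)}$. Hence $\mu^{\mathcal{O}}(x_1,\ldots,x_\ell)\le a\,n$ with $a$ independent of $\ell$, whereas $\sum_i\mu^{\mathcal{O}}(x_i\mid x_1,\ldots,x_{i-1})>\ell n/2$. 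Take $\ell$ a constant much larger than $a$ and apply the assumed SoI (with error $\varepsilon' n$) $\ell-1$ times along the chain; this gives a contradiction. So some split $u=(x_1,\ldots,x_{i-1})$, $v=x_i$ violates SoI by a linear amount, and padding $u,v$ to a common length finishes the proof. This is exactly \Cref{cor:multi-soi-mu} together with \Cref{claim:soi-o-contradiction}. Because only $\ell n/a$ versus $a n$ matters, constant-factor losses are harmless, so your effort to avoid multiplicative comparisons is unnecessary. The paper freely uses $\pnKt^{\mathcal{O}}\ge\pKt^{\NP^{\mathcal{O}}}/k$ and $\Kt^{\mathcal{O}}\le O(\Kt^{\NP^{\mathcal{O}}})$. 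Once the chain is added, your choice $\mathcal{O}=\mathsf{TQBF}$ with exact $\PSPACE$ evaluation of each measure should go through. It would be a mild simplification of the paper's route, which only assumes $\PH^{\mathcal{O}}\subseteq\P^{\mathcal{O}}$. For that reason the paper must derandomize its $\pKt^{\NP^{\mathcal{O}}}$-tester, using \Cref{t:IW97-PRG} instantiated on a hard truth table from \Cref{lem:high-np-o}.
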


The proof is basically a relativizing version of the argument used in \Cref{t:soi-NP-coAM-pnKt}, adapted so that it works for all measures $\mu \in \{\rKt, \pKt, \nKt, \rnKt, \pnKt\}$ and
   gives a linear gap $\varepsilon \cdot n$ instead of just polylogarithmic.

We will make use of the following lemmas.

\begin{lemma}[\cite{DBLP:journals/siamcomp/BakerGS75}]\label{lem:bgs}

    There exists an oracle $\mathcal{O}$ such that $\PH^{\mathcal{O}} \subseteq \P^{\mathcal{O}}$.
    
\end{lemma}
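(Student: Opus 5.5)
The plan is to take $\mathcal{O}$ to be any $\PSPACE$-complete language under polynomial-time many-one reductions; concretely, $\mathcal{O} = \mathsf{TQBF}$, the set of true quantified Boolean formulas. The argument is the classical one from \citep{DBLP:journals/siamcomp/BakerGS75}. It consists of the chain of inclusions
\[
\PH^{\mathcal{O}} \subseteq \PSPACE^{\mathcal{O}} \subseteq \PSPACE \subseteq \P^{\mathcal{O}},
\]
and I would establish the three inclusions in turn.

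\emph{Last inclusion.} The inclusion $\PSPACE \subseteq \P^{\mathcal{O}}$ is immediate from completeness. For $L \in \PSPACE$, a polynomial-time machine computes the reduction $f$ from $L$ to $\mathsf{TQBF}$. It then makes the single query $f(x)$ to $\mathcal{O}$ and returns the answer.

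\emph{Middle inclusion.} For $\PSPACE^{\mathcal{O}} \subseteq \PSPACE$, consider a polynomial-space oracle machine $M$. Every query it writes has polynomial length, since it must fit on the polynomially bounded oracle tape. Each such query can be answered by the standard recursive $\mathsf{TQBF}$ evaluator, which uses space polynomial in the query length. That space can be reused after each query. Hence the whole simulation runs in polynomial space without an oracle.

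\emph{First inclusion.} For $\PH^{\mathcal{O}} \subseteq \PSPACE^{\mathcal{O}}$, I would argue by induction on the level $k$ of the hierarchy. Every language in $\Sigma_k^{p,\mathcal{O}}$ is defined by $k$ alternating polynomially bounded quantifiers over a polynomial-time predicate with oracle access to $\mathcal{O}$. A polynomial-space machine can evaluate this by brute-force enumeration of each quantified block, reusing space across iterations. It calls $\mathcal{O}$ whenever the inner predicate does.

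Equivalently, one can shortcut the induction. Once $\P^{\mathcal{O}} = \PSPACE$ is known, note that $\NP^{\mathcal{O}} \subseteq \PSPACE^{\mathcal{O}} = \PSPACE = \P^{\mathcal{O}}$, and then each level collapses in turn: $\Sigma_{k+1}^{p,\mathcal{O}} = \NP^{\Sigma_k^{p,\mathcal{O}}} = \NP^{\P^{\mathcal{O}}} = \NP^{\mathcal{O}} \subseteq \P^{\mathcal{O}}$.

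I do not expect any real obstacle. The only point that needs care is the space accounting in the middle inclusion: one must note that oracle queries have only polynomial length, so answering them in place costs polynomial space. The rest is bookkeeping.
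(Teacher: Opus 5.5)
Your proposal is correct and is the classical Baker--Gill--Solovay argument: take $\mathcal{O}$ to be a $\PSPACE$-complete set such as $\mathsf{TQBF}$, so that $\PH^{\mathcal{O}} \subseteq \PSPACE \subseteq \P^{\mathcal{O}}$, which is exactly what the paper relies on, since it states this lemma only as a citation without giving a proof. The one convention-sensitive step, $\PSPACE^{\mathcal{O}} \subseteq \PSPACE$, can be bypassed altogether: each $\Sigma_k^{p,\mathcal{O}}$ language is given by alternating polynomially bounded quantifiers over a polynomial-time $\mathcal{O}$-machine, whose queries automatically have polynomial length, so a polynomial-space machine can enumerate the quantified strings and answer every query with a $\mathsf{TQBF}$ evaluator, giving $\PH^{\mathcal{O}} \subseteq \PSPACE$ directly.
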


\begin{lemma}\label{lem:ph-high-np-o}

    There is a deterministic polynomial-time oracle machine $M_0^{(-)}$ as follows. Let $\mathcal{O}$ be any oracle. There exists an oracle $\mathcal{X} \in \Sigma^{\P, \mathcal{O}}_3$ such that for any $n \in \N$, $M_0^{\mathcal{X}}(1^n)$ outputs a string $x \in \bool^n$ with $\NP^\mathcal{O}$-oracle circuit complexity at least $n^{1/2}$. 
    
\end{lemma}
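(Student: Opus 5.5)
We plan to construct $M_0$ as a bit-by-bit search for the lexicographically first string of high circuit complexity, using $\mathcal{X}$ as a prefix-search oracle. Define
\[
\mathcal{X} \eqdef \mleft\{(1^n, p) \,\middle\vert\, \exists z \in \bool^{n - |p|} \text{ such that } p\circ z \text{ has no } \NP^{\mathcal{O}}\text{-oracle circuit of size} < n^{1/2} \mright\},
\]
where ``$p\circ z$ has no small circuit'' means that for every $\SAT^{\mathcal{O}}$-oracle circuit $C$ of size less than $n^{1/2}$ (on $\lceil\log n\rceil$ inputs), $\tt(C) \neq p\circ z$. Here we fix some $\NP^{\mathcal{O}}$-complete language, e.g.\ $\SAT^{\mathcal{O}}$ (satisfiability of $\mathcal{O}$-oracle formulas), as the canonical gate type. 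On input $1^n$, the machine $M_0^{\mathcal{X}}$ builds $x$ one bit at a time: having fixed a prefix $p$, it queries $(1^n, p0)$ and appends $0$ if the answer is yes and $1$ otherwise. After $n$ steps it outputs $p$. This takes $n$ queries and polynomial time.

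\textbf{Step 1: a hard string exists.} By counting, the number of oracle circuits of size $s = n^{1/2}$ is $2^{O(s\log s)} = 2^{O(n^{1/2}\log n)} < 2^n$, so some $x \in \bool^n$ has no such circuit. Hence $(1^n, \epsilon) \in \mathcal{X}$. An induction then shows that the search keeps the invariant $(1^n,p)\in\mathcal{X}$: if $p0$ fails, the witness for $p$ must begin with $1$. The output is therefore the lexicographically first hard string, which has $\NP^{\mathcal{O}}$-oracle circuit complexity at least $n^{1/2}$.

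\textbf{Step 2: $\mathcal{X} \in \Sigma_3^{\P,\mathcal{O}}$.} Membership has the form $\exists z\, \forall C\, [\exists i: C(i) \neq (p\circ z)_i]$, where $C$ ranges over circuits of size $< n^{1/2}$. Evaluating $C$ on the input $i$ requires the values of its oracle gates, which are $\SAT^{\mathcal{O}}$ queries. We handle this by guessing the answers. Rewrite the inner predicate as $\exists i\, \exists (a_1,\dots,a_k)$ (the guessed answers to the gates, evaluated in topological order) together with certificates showing these answers are correct and showing $C(i)\ne (p\circ z)_i$ under those answers.

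Certificates for ``yes'' answers are existential witnesses. Certificates for ``no'' answers are $\coNP^{\mathcal{O}}$ statements, i.e.\ a $\forall$ layer. So the inner part is $\exists\forall$ with $\mathcal{O}$ queries, i.e.\ $\Sigma_2^{\P,\mathcal{O}}$, and with the outer $\exists z\,\forall C$ the whole formula is $\exists\forall\exists\forall$, i.e.\ $\Sigma_4$. That is one level too many.

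\textbf{Main obstacle: saving a quantifier.} We need to absorb one alternation. Note that ``$C(i) = b$'' for a $\SAT^{\mathcal{O}}$-oracle circuit is a $\P^{\NP^{\mathcal{O}}}$ predicate, hence in $\Delta_2^{\P,\mathcal{O}}$. The predicate ``$\forall i,\ C(i) = (p\circ z)_i$'' is thus a polynomial conjunction of $\Delta_2$ predicates and lies in $\Delta_2^{\P,\mathcal{O}}$ (in fact in $\P^{\NP^{\mathcal{O}}}$, since there are only $n$ inputs). Its negation is also in $\Delta_2$, so it can be written as a $\Pi_2$ predicate. Then $\forall C\,[\Pi_2]$ is $\Pi_2$, and the full expression $\exists z\,\Pi_2$ is $\Sigma_3^{\P,\mathcal{O}}$, as needed.

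Care is needed that the circuit size bound is measured relative to the input length $|(1^n,p)| \ge n$, so all quantified objects stay polynomial. The robustness of the result to the choice of $\NP^{\mathcal{O}}$-complete gate type (e.g.\ constant-factor blowup) is absorbed by taking the threshold slightly above $n^{1/2}$ in the counting step.
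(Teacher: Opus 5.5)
Your proposal is correct and is essentially the paper's proof: $M_0$ does the same bit-by-bit prefix search against a $\Sigma_3^{\P,\mathcal{O}}$ oracle that asks whether some extension of the current prefix avoids all small $\NP^{\mathcal{O}}$-oracle circuits, and counting guarantees that a hard string exists. Your Step 2 observation that the inner predicate lies in $\P^{\NP^{\mathcal{O}}}\subseteq\Pi_2^{\P,\mathcal{O}}$ (so the whole condition is $\exists\,\Pi_2=\Sigma_3$) supplies the membership argument the paper leaves implicit; only your closing aside is slightly off, because switching to another $\NP^{\mathcal{O}}$-complete gate type costs a polynomial rather than constant blowup, but the proof does not need that remark once $\SAT^{\mathcal{O}}$ gates are fixed.
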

\begin{proof}

   For a given oracle $\mathcal{O}$, let $\mathcal{X} \in \Sigma^{\P, \mathcal{O}}_3$ be as follows.
    \begin{align*}
        \mathcal{X} = \{ (1^n, y) \mid \; &y \in \bool^{<n} \; ; \; \exists x \in \bool^n \text{ and } w \in \bool^{n - |y| - 1}\text{ such that } x = y0w \text{ and} 
        \\&\forall \text{ oracle circuit } C^{(-)} \text{ of size at most  }n^{1/2}, x \text{ is not the truth-table of } C^{\NP^{\mathcal{O}}}\},    
    \end{align*}
    The machine $M_0^{(-)}$ is defined as follows.
    \begin{quote}
        On input $1^n$, start with $y$ being the empty string. Repeat the following $n$ times. Query $(1^n, y)$ to the oracle $\mathcal{X}$; if $\mathcal{X}$ accepts, let $y := y0$, and if $\mathcal{X}$ rejects, let $y := y1$. After this is done, return the string $x := y\in \bool^n$.
    \end{quote}
    By a standard counting argument, there exists a string $x \in \bool^n$ with $\NP^{\mathcal{O}}$-oracle circuit complexity greater than $n^{1/2}$. (In fact, most strings have this property.) Therefore, it is clear that $M_0^{\mathcal{X}}$ will find such a string in polynomial time, as desired. 
\end{proof}

\begin{lemma}\label{lem:high-np-o}

    There exist an oracle $\mathcal{O}$ and a deterministic polynomial-time oracle machine $M_1^{(-)}$ as follows. For any $n \in \N$, $M_1^{\mathcal{O}}(1^n)$ outputs a string $x \in \bool^n$ with $\NP^{\mathcal{O}}$-oracle circuit complexity at least $n^{1/2}$. Moreover, $\PH^{\mathcal{O}} \subseteq \P^{\mathcal{O}}$. 
    
\end{lemma}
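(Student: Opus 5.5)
The plan is to combine \Cref{lem:bgs} with \Cref{lem:ph-high-np-o}. Take $\mathcal{O}$ to be the oracle from \Cref{lem:bgs}, so that $\PH^{\mathcal{O}} \subseteq \P^{\mathcal{O}}$. Applying \Cref{lem:ph-high-np-o} to this $\mathcal{O}$ gives a language $\mathcal{X} \in \Sigma_3^{\P,\mathcal{O}}$ such that $M_0^{\mathcal{X}}(1^n)$ outputs, in polynomial time, a string $x \in \bool^n$ whose $\NP^{\mathcal{O}}$-oracle circuit complexity is at least $n^{1/2}$. Since $\Sigma_3^{\P,\mathcal{O}} \subseteq \PH^{\mathcal{O}} \subseteq \P^{\mathcal{O}}$, there is a deterministic polynomial-time oracle machine $N^{(-)}$ with $N^{\mathcal{O}}$ deciding $\mathcal{X}$.

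Define $M_1^{(-)}$ as follows: on input $1^n$, run $M_0$ and answer each query $q$ to $\mathcal{X}$ by running $N^{\mathcal{O}}(q)$. The queries of $M_0$ have the form $(1^n, y)$ with $|y| < n$, so each has length $\poly(n)$. Each call to $N^{\mathcal{O}}$ therefore costs $\poly(n)$ time, and $M_0$ makes $n$ calls. So $M_1^{\mathcal{O}}(1^n)$ runs in polynomial time and outputs exactly $M_0^{\mathcal{X}}(1^n)$, which has the required hardness. The clause $\PH^{\mathcal{O}} \subseteq \P^{\mathcal{O}}$ holds by the choice of $\mathcal{O}$.

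The only subtle point is that \Cref{lem:ph-high-np-o} is stated for an arbitrary oracle, with $\mathcal{X}$ depending on $\mathcal{O}$. I must therefore fix $\mathcal{O}$ first (from \Cref{lem:bgs}) and only then instantiate $\mathcal{X}$; the hardness is then measured against $\NP^{\mathcal{O}}$-oracle circuits for the same $\mathcal{O}$, as the statement requires. I also need the collapse to be uniform in the sense that $\mathcal{X}$ is decided by a fixed polynomial-time machine relative to $\mathcal{O}$. This holds because $\mathcal{X}$ is a single fixed language in $\Sigma_3^{\P,\mathcal{O}}$ (its definition has a fixed polynomial-time predicate with oracle $\NP^{\mathcal{O}}$), so the inclusion $\Sigma_3^{\P,\mathcal{O}} \subseteq \P^{\mathcal{O}}$ supplies a single polynomial-time machine $N$ for it.
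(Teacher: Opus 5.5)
Your proposal is correct and matches the paper's proof: fix $\mathcal{O}$ from \Cref{lem:bgs}, instantiate $\mathcal{X}\in\Sigma_3^{\P,\mathcal{O}}$ via \Cref{lem:ph-high-np-o} for that same $\mathcal{O}$, and answer $M_0$'s $\mathcal{X}$-queries with a polynomial-time machine $N^{\mathcal{O}}$ deciding $\mathcal{X}$. Your remarks on fixing $\mathcal{O}$ before $\mathcal{X}$ and on the polynomial query lengths are accurate; they spell out details the paper leaves implicit.
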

\begin{proof}

    Let $\mathcal{O}$ be the oracle of \Cref{lem:bgs}. Let $M_0^{(-)}$ be the machine and $\mathcal{X} \in \Sigma^{\P, \mathcal{O}}_3$ the oracle of \Cref{lem:ph-high-np-o} applied to $A$. Since $\PH^{\mathcal{O}} \subseteq \P^{\mathcal{O}}$, there exists a deterministic polynomial-time oracle machine $N^{(-)}$ such that $N^{\mathcal{O}}$ decides $\mathcal{X}$. Let $M_1^{\mathcal{O}}$ be the machine that simulates $M_0^{(-)}$ and answers its oracle queries with the machine $N^{\mathcal{O}}$. By \Cref{lem:ph-high-np-o}, for any $n \in \N$, $M_1^{\mathcal{O}}(1^n)$ produces a string $x \in \bool^n$ with $\NP^{\mathcal{O}}$-oracle circuit complexity at least $n^{1/2}$. The lemma follows.
\end{proof}

We will require the following lemma, which is similar to \Cref{l:gapMpKtP-in-prBPE}, but pertains to $\pKt^{\mathcal{O}}$ complexity for arbitrary oracles $\mathcal{O}$ rather than $\pnKt$ complexity.

\begin{lemma}\label{lem:B-pkt-O}

    There exists a probabilistic oracle algorithm $B^{(-)}$ such that, for any oracle $\mathcal{O}$, given inputs $x \in \{0,1\}^m$ and $y \in \{0,1\}^{\leq m^2}$, $B^{\mathcal{O}}(x, y)$ runs in time $2^{O(m)}$ with oracle access to $\mathcal{O}$, and 
    \begin{itemize}
        \item if $\pKt^{\mathcal{O}}_{2/3}(x \mid y) \leq m/2$, then $B^{\mathcal{O}}(x, y)$ rejects with probability at least $2/3$, and 
        \item if $\pKt^{\mathcal{O}}_{1/3}(x \mid y) > m/2$, then $B^{\mathcal{O}}(x, y)$ accepts with probability at least $2/3$.
    \end{itemize}

\end{lemma}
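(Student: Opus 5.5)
Mirroring \Cref{l:gapMpKtP-in-prBPE}, $B$ will sample a public random string and then exhaustively decide, over all short descriptions, whether $x$ has a short $\Kt^{\mathcal{O}}$ program relative to that string. On input $(x,y)$ with $x \in \bool^m$ and $y \in \bool^{\leq m^2}$, $B^{\mathcal{O}}$ first samples $r \sim \bool^{2^{m/2}}$. By the definition of $\pKt_\lambda$, $\pKt^{\mathcal{O}}_\lambda(x\mid y) \le m/2$ means that with probability at least $\lambda$ over $r \sim \bool^{2^{k}}$ for some $k \le m/2$ we have $\Kt^{\mathcal{O}}(x \mid y; r) \le k$. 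I will use the convention (implicit in the definition, since a program with $|\Pi| + \lceil\log t\rceil \le k$ runs at most $2^{k}$ steps and so reads only a prefix of $r$) that the event $\Kt^{\mathcal{O}}(x\mid y;r)\le k$ depends only on the first $2^k$ bits of $r$. Hence sampling a single $r$ of length $2^{m/2}$ and using its prefixes handles every $k \le m/2$ simultaneously. In particular, the event ``$\Kt^{\mathcal{O}}(x\mid y; r) \le m/2$'' for $r\sim\bool^{2^{m/2}}$ has probability at least the corresponding probability at any smaller threshold, by monotonicity of $\Kt$ in the bound.

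Next, for every pair $(\Pi,t)$ with $|\Pi| + \lceil \log t\rceil \le m/2$, $B$ runs $U^{\mathcal{O}}$ on $(\Pi; y; \epsilon; r)$ for $t \le 2^{m/2}$ steps, answering the oracle queries with its own oracle $\mathcal{O}$. It checks whether the output equals $x$, and rejects iff some pair succeeds. There are at most $2^{m/2+1}$ programs and $2^{m/2}$ time bounds, and each simulation costs $\widetilde{O}(2^{m/2} + m^2)$ steps, so the total running time is $2^{O(m)}$ with oracle access to $\mathcal{O}$. The algorithm thus exactly decides the event $E_r := [\Kt^{\mathcal{O}}(x\mid y;r)\le m/2]$. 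Unlike the $\pnKt$ case, no $\SAT$ oracle is needed, because the programs are deterministic given $\mathcal{O}$ and can be simulated directly.

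For correctness there are two cases. If $\pKt^{\mathcal{O}}_{2/3}(x\mid y)\le m/2$, then $\Pr[E_r]\ge 2/3$, so $B$ rejects with probability at least $2/3$. If $\pKt^{\mathcal{O}}_{1/3}(x\mid y) > m/2$, then in particular $s=\lfloor m/2\rfloor$ fails the definition, so $\Pr_{r\sim\bool^{2^{s}}}[\Kt^{\mathcal{O}}(x\mid y;r)\le s] < 1/3$. The main subtlety is the rounding and prefix issue, namely ensuring that the threshold and randomness length used by $B$ match the definition exactly. I will handle it by using threshold $s=\lfloor m/2\rfloor$ and randomness length $2^{s}$ throughout. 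With that choice, $\Pr[E_r]<1/3$ in this case, so $B$ accepts with probability greater than $2/3$, which completes the proof.
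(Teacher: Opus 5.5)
Your proposal is correct and is essentially the paper's proof: sample $r$ of length $2^{m/2}$, simulate every pair $(\Pi,t)$ with $|\Pi|+\lceil\log t\rceil\le m/2$ directly using the oracle $\mathcal{O}$, and reject iff some pair outputs $x$. Your use of the threshold $s=\lfloor m/2\rfloor$ and your prefix/monotonicity observation (a program running at most $2^k$ steps reads only the first $2^k$ bits of $r$) spell out details that the paper covers only with ``correctness follows easily from the definition.''
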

\begin{proof}

    The algorithm $B^{\mathcal{O}}$ is as follows. Given $(x, y)$,
    \begin{enumerate}
        \item Sample $r \sim \bool^{2^{m/2}}$.
        \item For all programs $\Pi$ and running times $t \in \N$ with $|\Pi| + \lceil \log t \rceil \leq m/2$, check if $\Pi^{\mathcal{O}}(y; r)$ outputs $x$ within $t$ steps. Reject if so.
        \item Accept if Step 2 did not reject. 
    \end{enumerate}

    It is easy to see that $B^{\mathcal{O}}$ runs in time $2^{O(m)}$ given oracle access to $\mathcal{O}$. Correctness follows easily from the definition of $\pKt^{\mathcal{O}}$. 
\end{proof}

\begin{lemma}\label{lem:multi-oracle-soi}

    There exist an oracle $\mathcal{O}$ and a constant $a \in \N$ such that, for every $\ell, n \in \N$ with $\ell \leq n$, there exist strings $x_1, ..., x_{\ell}\in\bool^{n}$ satisfying
    \[\sum_{i \in [\ell]}\pnKt^{\mathcal{O}}(x_i \mid x_1, ..., x_{i - 1}) > \ell \cdot \frac{n}{a}\]
    and        
    \[\Kt^{\mathcal{O}}(x_1, ..., x_{\ell}) < a \cdot n.\]
    
\end{lemma}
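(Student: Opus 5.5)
Take $\mathcal{O}$ to be the oracle of \Cref{lem:high-np-o}, so that $\PH^{\mathcal{O}} \subseteq \P^{\mathcal{O}}$. The plan mirrors the proof of \Cref{l:SoI-core-lemma}. The difference is that derandomization of the estimator $B$ is now free, because the collapse lets everything be done deterministically in polynomial time relative to $\mathcal{O}$.

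\emph{Step 1: an estimator.} Since $\pnKt^{\mathcal{O}} \ge \pKt^{\NP^{\mathcal{O}}}/c$ by \Cref{l:pKt-SAT} (relativized), a lower bound on $\pKt^{\NP^{\mathcal{O}}}(x_i\mid \cdot)$ of order $n$ yields the needed $\pnKt^{\mathcal{O}}$ lower bound, up to a constant that we absorb into $a$. We therefore work with \Cref{lem:B-pkt-O} applied to the oracle $\mathcal{O}' = \SAT^{\mathcal{O}}$.

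Rather than that exponential-time test, I would use a polynomial-time version. We only need that $x_i$ has $\pKt^{\NP^{\mathcal{O}}}(x_i\mid x_{<i}) \ge n/b$ for some constant $b$. Fixing the randomness $r$ to length $2^{n/b}$ is too long. Instead we search for a single $x$ such that, for \emph{all} $r$ of length $2^{n/b}$, no program of length $\le n/b$ (including the $\lceil\log t\rceil$ charge) outputs $x$. Programs with $\lceil \log t\rceil \le n/b$ read at most $2^{n/b}$ bits of $r$, so this is a nonuniform-style statement.

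A cleaner route is to pick $x_i$ to have high $\NP^{\mathcal{O}}$-oracle circuit complexity conditioned on $x_{<i}$. Any $\pKt^{\NP^{\mathcal{O}}}$ description of length $k$ whose time is at most $2^k$ gives, for a fixed good $r$, an $\NP^{\mathcal{O}}$-oracle circuit computing $x$ from $(x_{<i},r)$. Circuit lower bounds against all advice strings $r$ are obtained by counting, because descriptions are short. So the key step is to define $x_i$ as the lexicographically first $x\in\bool^n$ such that, for every program $\Pi$ of length $\le n/b$ and every $t\le 2^{n/b}$, the probability over $r$ that $\Pi^{\NP^{\mathcal{O}}}(x_{<i};r)$ outputs $x$ is less than $1/3$. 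Such an $x$ exists by the counting argument of \Cref{l:incomp-strings}.

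\emph{Step 2: computing $x_i$ in $\PH^{\mathcal{O}}$.} Because of the time bounds, this acceptance probability cannot be computed in polynomial time. I would therefore restrict to running time $t\le \poly(n)$ as well and argue separately about the large-$t$ case.

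This is where I expect the main obstacle to lie. $\pKt$ allows $t$ up to $2^{n/b}$, and the predicate involves exponentially long computations, so it is not in $\PH^{\mathcal{O}}$. I would handle this with padding. Choose the block length $n$ relative to an internal parameter $m = n^{1/2}$ (or $\log n$), and define $x_i$ via the $\Sigma_3^{\P,\mathcal{O}}$-style search of \Cref{lem:ph-high-np-o}, which runs in time $2^{O(m)} = \poly(n)$ only at parameter $m$. Then, as in \Cref{lem:high-np-o}, collapse it to $\P^{\mathcal{O}}$.

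Concretely, the search asks for $x$ whose estimator $B^{\SAT^{\mathcal{O}}}(x,x_{<i})$ accepts with probability $>1/2$. Approximate counting is in $\PH^{\mathcal{O}}$, and $\PH^{\mathcal{O}} \subseteq \P^{\mathcal{O}}$, but only when the estimator's runtime is polynomial in the input size. I would therefore pad so that the exponential-time estimator on $m$-bit core strings becomes polynomial-time in $n$. The resulting linear gap $\varepsilon n$ must then be recovered by making the test threshold a constant fraction of $n$. This can be done by applying \Cref{lem:B-pkt-O} to strings of length $n$ while noting that the relativized collapse makes exponential-time relativized computation with $2^{O(n)}$ steps describable in $O(n)$ bits. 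In other words, the exhaustive procedure with all $\ell\le n$ iterations costs time $2^{O(n)}$, contributing only $O(n)$ to $\Kt^{\mathcal{O}}$.

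\emph{Step 3: conclusion.} This observation is what gives $\Kt^{\mathcal{O}}(x_1,\dots,x_\ell) < a n$: we describe $n,\ell$ and the fixed machine in $O(\log n)$ bits, with time $2^{O(n)}$, where all $\PH^{\mathcal{O}}$ steps (exhaustive estimation, approximate counting over $r$) are replaced by $\P^{\mathcal{O}}$ simulations on inputs of size $2^{O(n)}$. Meanwhile, each $x_i$ satisfies $\pnKt^{\mathcal{O}}(x_i\mid x_{<i})>n/a$ by correctness of the estimator, exactly as in \Cref{c:pKt-construct}. The delicate point remains verifying that the estimator, a randomized $\SAT^{\mathcal{O}}$ algorithm, is made deterministic via $\BPP^{\NP^{\mathcal{O}}}\subseteq \PH^{\mathcal{O}}\subseteq\P^{\mathcal{O}}$ (relativized Sipser–Gács–Lautemann, padded) without losing canonicity. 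I would attempt this by having the deterministic $\P^{\mathcal{O}}$ machine decide the promise-free predicate ``acceptance probability $>1/2$'' exactly via a $\PP$-free threshold, using the output of the lexicographic search with a $1/3$ versus $2/3$ gap.
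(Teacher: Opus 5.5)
Your final route in Step~3 works, but it is not the paper's. Both proofs take $\mathcal{O}$ from \Cref{lem:high-np-o}, run the estimator of \Cref{lem:B-pkt-O} relative to $\SAT^{\mathcal{O}}$, and finish with \Cref{l:pKt-SAT}. The difference is how $B$ is derandomized.

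The paper uses the collapse $\PH^{\mathcal{O}}\subseteq\P^{\mathcal{O}}$ only to compute, in time $2^{O(n)}$, a truth table $z=M_1^{\mathcal{O}}(1^{2^{bcn}})$ of $\NP^{\mathcal{O}}$-oracle circuit complexity $2^{bcn/2}$. It plugs $z$ into the generator of \Cref{t:IW97-PRG} and computes $\rho_x$ exactly by enumerating all $2^{O(n)}$ seeds. The argument is thus literally the relativized analogue of \Cref{l:SoI-core-lemma}, where derandomization has to come from a hardness-based pseudorandom set.

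You instead apply the collapse to the estimator itself. Amplify $B^{\SAT^{\mathcal{O}}}$ to error $2^{-q}$ and pad inputs to length $N=2^{\Theta(n)}$. The Sipser--G\'acs--Lautemann formula $\exists s_1,\dots,s_k\,\forall r\,\bigvee_j B'(x,y;r\oplus s_j)$ then defines a language in $\Sigma_2^{\P,\SAT^{\mathcal{O}}}=\Sigma_3^{\P,\mathcal{O}}\subseteq\P^{\mathcal{O}}$. So it is decided by a deterministic $\mathcal{O}$-oracle machine $D$ in time $\poly(N)=2^{O(n)}$. This is more elementary, since it needs no PRG and no hard-truth-table lemmas, and it gives the same bound $\Kt^{\mathcal{O}}(x_1,\dots,x_\ell)\le O(\log n)+\log(\ell\cdot 2^{O(n)})=O(n)$.

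To turn the proposal into a proof, discard the detours.
\begin{itemize}
\item The per-program predicate in Step~1 (each fixed $\Pi$ outputs $x$ with probability $<1/3$) lower-bounds an $\rKt$-type quantity, not $\pKt$. In $\pKt$ the existential over programs sits inside the probability, so a union over $2^{n/b}$ programs can still succeed with probability $2/3$, and \Cref{l:pKt-SAT} would not apply.
\item Padding down to core length $m=n^{1/2}$ is unnecessary and would only give an $\Omega(\sqrt{n})$ bound.
\item The canonicity worry is a non-issue. Your suggested fix, deciding ``acceptance probability $>1/2$'' exactly, does not follow from $\PH^{\mathcal{O}}\subseteq\P^{\mathcal{O}}$, since that is a $\mathsf{PP}$-type predicate. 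You never need it. The SGL formula defines a total language that agrees with the promise problem of \Cref{lem:B-pkt-O} on promise instances, and $D$ is deterministic. Hence ``the lexicographically first $x\in\bool^n$ with $D(x,x_{<i},1^N)=1$'' is automatically canonical.
\end{itemize}
Such an $x$ exists because some $x$ has $\pKt^{\SAT^{\mathcal{O}}}_{1/3}(x\mid x_{<i})>n/2$. Any output satisfies $\pKt^{\SAT^{\mathcal{O}}}(x_i\mid x_{<i})>n/2$, exactly as in \Cref{c:pKt-construct}. Note that $|x_{<i}|\le n^2$ because $\ell\le n$, so \Cref{lem:B-pkt-O} applies.
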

\begin{proof}

    We follow the proof technique of \Cref{l:SoI-core-lemma}. Let $B^{(-)}$ be the randomized oracle algorithm of \Cref{lem:B-pkt-O}. Recall that, on inputs $x \in \bool^n$ and $y \in \bool^{\leq n^2}$, $B^{(-)}$ runs in time at most $2^{O(n)}$. For a constant $b \in \N$, $x \in \bool^n$, and $y \in \bool^{\leq n^2}$, let $C^{(-)}_{(x, y)}: \bool^{2^{bn}} \to \bool$ be the oracle circuit of size at most $2^{bn}$ that treats its input as internal randomness and simulates $B^{(-)}(x, y)$.

    Let $\mathcal{O}$ be the oracle and $M_1$ the machine of \Cref{lem:high-np-o}. Let $c, d \in \N$ be as in \Cref{t:IW97-PRG} applied with $\varepsilon = 1/2$, and let $z \in \bool^{2^{bc\cdot n}}$ be the output of $M_1^{\mathcal{O}}(1^{2^{bc\cdot n}})$. That is, $z$ has $\NP^{\mathcal{O}}$-oracle circuit complexity greater than $2^{bc\cdot n/2}$. Let $G_z: \bool^{bd\cdot n} \to \bool^{2^{bn}}$ be the PRG of \Cref{t:IW97-PRG}. In particular, $G_z$ is computable in time at most $\poly(2^{bc\cdot n}, bd \cdot n) = 2^{O(n)}$ and $(1/2^{bn})$-fools $C^{\NP^{\mathcal{O}}}_{(x,y)}$.
    
    Consider the following procedure $A$.
    \begin{quote}

        On input $n \in \N$, $y \in \bool^{\leq n^2}$, and $z \in \bool^{2^{bc\cdot n}}$, for each $x \in \bool^n$, compute
        \[\rho_x = \Prob_{\sigma \sim \bool^{bd\cdot n}}[C^{\NP^{\mathcal{O}}}_{(x, y)}(G_z(\sigma)) = 1].\]
        Output the first $x$ such that $\rho_x > \frac{1}{3} + \frac{1}{10}.$
        
    \end{quote}
    It is easy to see that, with $z$ as above, $A(n, y, z)$ runs in time at most $2^{O(n)}$, using an $\NP^{\mathcal{O}}$ oracle to simulate $C^{\NP^{\mathcal{O}}}$. Moreover, by the guarantee of \Cref{lem:B-pkt-O}, if $\pKt^{\NP^{\mathcal{O}}}(x \mid y) \leq n/2$, then $B^{\NP^{\mathcal{O}}}$ accepts with probability at most $1/3$. Therefore, $A$ must output a string $x \in \bool^n$ such that $\pKt^{\NP^{\mathcal{O}}}(x \mid y) > n/2$.

    Now, define
    \begin{itemize}
        \item  $x_1 = A(n, \epsilon, z)$, where $\varepsilon$ is the empty string;
        \item $x_i = A(n, (x_1, ..., x_{i - 1}), z)$ for $i \in \{2, ..., \ell\}$.
    \end{itemize}
    We claim that the strings $x_1, ..., x_{\ell} \in \bool^n$ have the properties stated in the lemma. Firstly, by the correctness of $A$ and \Cref{l:pKt-SAT}, for some constant $k \in \N$, for each $i \in [\ell]$, we have 
    \[\pnKt^{\mathcal{O}}(x_i \mid x_1, ..., x_{i-1}) \geq \frac{1}{k}\cdot \pKt^{\NP^{\mathcal{O}}}(x_i \mid x_1, ..., x_{i-1}) \geq \frac{n}{2k}.\]
    Thus,
    \[\sum_{i \in [\ell]}\pnKt^{\mathcal{O}}(x_i \mid x_1, ..., x_{i-1}) \geq \ell \cdot \frac{n}{2k}.\]

    Secondly, we observe that $x_1, ..., x_{\ell}$ can be produced uniformly by running $M_1^{\mathcal{O}}$ to produce a hard truth-table $z$ followed by $\ell \leq n$ invocations of $A$. This takes deterministic time at most $2^{O(n)}$ given an $\NP^{\mathcal{O}}$ oracle. Thus,
    \[\Kt^{\NP^{\mathcal{O}}}(x_1, ..., x_{\ell}) \leq O(n).\]
    Since $\PH^{\mathcal{O}} \subseteq \P^{\mathcal{O}}$, for any language $L \in \NP^{\mathcal{O}}$, there exists a deterministic polynomial-time oracle machine $N^{\mathcal{O}}$ that decides $L$. Therefore, replacing oracle calls to $L \in \NP^{\mathcal{O}}$ by simulating $N^{\mathcal{O}}$, we obtain
    \begin{align*}
        \Kt^{\mathcal{O}}(x_1, ..., x_{\ell}) &< 2a'\cdot (\Kt^{\NP^{\mathcal{O}}}(x_1, ..., x_{\ell})) 
        \\&\leq O(n),
    \end{align*}
    where $a' \in \N$ is such that $t^{a'}$ upper-bounds the running time of $N^{(-)}$ on inputs of length $t$.

    This completes the proof.
\end{proof}

The following is immediate from the definitions of the complexity measures in question.

\begin{corollary}\label{cor:multi-soi-mu}

    There exist an oracle $\mathcal{O}$ and a constant $d \in \N$ as follows. For every $\ell, n \in \N$ with $\ell \leq n$, there exist strings $x_1, ..., x_{\ell}\in\bool^{n}$ such that, for every complexity measure $\mu \in \{\rKt, \pKt, \nKt, \rnKt, \pnKt\}$, 
    \[\sum_{i \in [\ell]}\mu^{\mathcal{O}}(x_i \mid x_1, ..., x_{i - 1}) > \ell \cdot \frac{n}{d}\]
    and        
    \[\mu^{\mathcal{O}}(x_1, ..., x_{\ell}) < d \cdot n.\]
    
\end{corollary}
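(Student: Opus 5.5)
The plan is to take the oracle $\mathcal{O}$ and the strings $x_1,\dots,x_\ell$ from \Cref{lem:multi-oracle-soi} and compare each measure $\mu^{\mathcal{O}}$ with $\pnKt^{\mathcal{O}}$ from below and with $\Kt^{\mathcal{O}}$ from above. Set $d := 2a$ for the constant $a$ of that lemma, possibly enlarged later to absorb constant factors.

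\emph{Lower bound.} For every $u,v$ and every $\mu \in \{\rKt,\pKt,\nKt,\rnKt,\pnKt\}$ we aim to show
\[\pnKt^{\mathcal{O}}(u\mid v)\le \mu^{\mathcal{O}}(u\mid v)+O(1).\]
\begin{itemize}
\item A deterministic program is a nondeterministic one that ignores its witness, so $\nKt \le \Kt$ up to an additive constant.
\item Public randomness can be simulated by fixing $r$ inside the definition, which gives $\pnKt\le\rnKt$ and $\pnKt\le\pKt$.
\item A randomized program whose output is $u$ with probability $2/3$ is also a randomized nondeterministic program, so $\rnKt\le\rKt$.
\item Trivially $\pnKt\le\nKt$.
\end{itemize}
The one point to check is the length of the random string. $\pnKt$ uses $r\in\bool^{2^s}$, while $\rnKt$ uses $r\in\bool^t$ with $t\le 2^s$. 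A prefix of a uniform string is uniform, so the program can read only the first $t$ bits, and the inequality still holds. Summing over $i$ gives $\sum_i \mu^{\mathcal{O}}(x_i\mid x_{<i}) \ge \sum_i \pnKt^{\mathcal{O}}(x_i\mid x_{<i}) - O(\ell)$. By \Cref{lem:multi-oracle-soi} this is larger than $\ell n/a - O(\ell)$, which exceeds $\ell n/(2a)$ for large $n$.

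\emph{Upper bound.} Symmetrically, $\mu^{\mathcal{O}}(z)\le \Kt^{\mathcal{O}}(z)+O(1)$ for every $\mu$ in the list. A deterministic program is a valid witness in each definition, with success probability $1$ and the randomness ignored. Hence $\mu^{\mathcal{O}}(x_1,\dots,x_\ell) < a n + O(1) \le d n$.

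The main obstacle is that all of this is routine bookkeeping, and the only step needing care is the comparison between the different randomness-length conventions, $2^s$ versus $t$. A secondary point is that the additive constants must be uniform across $i$, which holds because they depend only on the universal machine. Small $n$ can be handled by enlarging $d$. The result is the corollary with $d=2a+O(1)$.
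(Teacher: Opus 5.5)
Your proposal is correct and is the argument the paper has in mind when it calls the corollary ``immediate from the definitions'': you take the oracle and strings of \Cref{lem:multi-oracle-soi} and sandwich each $\mu^{\mathcal{O}}$ between $\pnKt^{\mathcal{O}}$ (below) and $\Kt^{\mathcal{O}}$ (above) up to additive constants, which are absorbed into $d$. You also fill in details the paper leaves implicit, namely the randomness-length conventions and the small-$n$ case; for the latter it suffices to note that every measure in the list is at least $1$, so the sum is at least $\ell > \ell n/d$ whenever $d > n$.
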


Given the above, we are ready to prove \Cref{thm:neg-o-soi} as stated at the beginning of this section. The argument is quite similar to that of \Cref{t:soi-NP-coAM-pnKt}. We include a proof for completeness.

\begin{proof}[Proof of \Cref{thm:neg-o-soi}]

    Fix $\mu \in \{\rKt, \pKt, \nKt, \rnKt, \pnKt\}$. We start with the following claim.

    \begin{claim}\label{claim:soi-o-contradiction}

        Let $\mathcal{O}$ be the oracle of \Cref{cor:multi-soi-mu}. For some constant $c \in \N$, for all sufficiently large $m \in \N$, there exist $j \in \N$ with $m \leq j \leq c^2\cdot m$ and strings $u \in \bool^{j}$ and $v \in \bool^{m}$ such that
        \[\mu^{\mathcal{O}}(u, v) < \mu^{\mathcal{O}}(u) + \mu^{\mathcal{O}}(v \mid u) - \frac{m}{c}.\]
        
    \end{claim}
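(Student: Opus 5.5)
We argue by contradiction, following the proof of \Cref{c:pre-padding}. Let $d$ be the constant of \Cref{cor:multi-soi-mu} and set $c := 4d^2$ (any sufficiently large multiple of $d^2$ will do). Suppose that for some sufficiently large $m$ the claim fails. Then for every $j$ with $m \le j \le c^2 m$ and all $u \in \bool^j$, $v \in \bool^m$ we have
\[\mu^{\mathcal{O}}(u,v) \ge \mu^{\mathcal{O}}(u) + \mu^{\mathcal{O}}(v \mid u) - \frac{m}{c}.\]

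We apply \Cref{cor:multi-soi-mu} with $n := m$ and $\ell := \min\{m, c^2\}$, taking $\ell = c^2$ once $m \ge c^2$. This gives strings $x_1,\dots,x_\ell \in \bool^m$ with
\[\sum_i \mu^{\mathcal{O}}(x_i \mid x_1,\dots,x_{i-1}) > \ell m/d \quad\text{and}\quad \mu^{\mathcal{O}}(x_1,\dots,x_\ell) < dm.\]

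For each $i \ge 2$, set $u = (x_1,\dots,x_{i-1})$, which has length $(i-1)m \le c^2 m$, and $v = x_i$. The length condition $m \le j \le c^2 m$ then holds. Applying the assumed inequality successively for $i = \ell, \ell-1, \dots, 2$ telescopes to
\[\mu^{\mathcal{O}}(x_1,\dots,x_\ell) \ge \sum_{i\in[\ell]} \mu^{\mathcal{O}}(x_i \mid x_1,\dots,x_{i-1}) - \ell\cdot\frac{m}{c} - O(\ell \log(\ell m)).\]
The additive $O(\log)$ terms account for identifying the encoding of the tuple $(x_1,\dots,x_{i-1})$ with a single string $u$, and the term $\mu^{\mathcal{O}}(x_1 \mid \epsilon)$ with $\mu^{\mathcal{O}}(x_1)$. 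The right-hand side is therefore at least $\ell m/d - \ell m/c - O(\ell\log m)$.

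With $\ell = c^2 = 16d^4$ and $c = 4d^2$, this lower bound is at least $\ell m/(2d) = 8d^3 m$, which exceeds $dm$ for large $m$. This contradicts the upper bound $\mu^{\mathcal{O}}(x_1,\dots,x_\ell) < dm$, so the claim holds for every sufficiently large $m$.

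The main point of care is the bookkeeping, namely that the length window $m \le j \le c^2 m$ covers every prefix $u$ used in the telescoping. This holds because $\ell - 1 \le c^2$. One also needs the $O(\log)$ encoding overheads to remain negligible against $m/c$, which is fine because $\ell$ is a constant. Notably, the bound $\ell \le n$ in \Cref{cor:multi-soi-mu} forces $m \ge c^2$, and this is exactly the regime of ``sufficiently large $m$'' in the claim.
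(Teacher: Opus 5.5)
Your proof is correct and follows the same route as the paper: assume SoI holds with error $m/c$ on the whole length window, take the $\ell = c^2$ strings from \Cref{cor:multi-soi-mu}, telescope, and contradict the bound $\mu^{\mathcal{O}}(x_1,\dots,x_\ell) < dm$. The differences are cosmetic: the paper takes $c = 2d$, which already suffices since the telescoped lower bound is then $4dm - 2dm = 2dm$, and it silently ignores the $O(\log)$ tuple-encoding overheads that you account for explicitly.
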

    \begin{proof}[Proof of \Cref{claim:soi-o-contradiction}]

        Toward a contradiction, suppose that for any constant $c \in \N$, there exist infinitely many $m \in \N$ such that for all $j \in \N$ with $m \leq j \leq c^2 \cdot m$ and all $u \in \bool^{j}$ and $v \in \bool^{m}$, it holds that 
        \begin{align}\label{eq:for-cont-soi}
            \mu^{\mathcal{O}}(u, v) \geq \mu^{\mathcal{O}}(u) + \mu^{\mathcal{O}}(v \mid u) - \frac{m}{c}.
        \end{align}
        Let $d$ be the constant from \Cref{cor:multi-soi-mu}, let $c = 2d$, and let $m$ be as in \Cref{eq:for-cont-soi}. Let $\ell := c^2$, and let $x_1, ..., x_{\ell} \in \bool^m$ be the strings guaranteed by \Cref{cor:multi-soi-mu}. That is,
        \begin{align*}
            \sum_{i \in [\ell]}\mu^{\mathcal{O}}(x_i \mid x_1, ..., x_{i - 1}) &> \ell \cdot \frac{m}{d}
            \\&= \frac{(2d)^2 \cdot m}{d} = 4d \cdot m
        \end{align*}
        and
        \begin{align}\label{eq:mu-o-all-small}
            \mu^{\mathcal{O}}(x_1, ..., x_{\ell}) < d \cdot m.
        \end{align}
        Consider $u = (x_1, ..., x_{\ell - 1}) \in \bool^{(\ell - 1)\cdot m}$ and $v = x_{\ell} \in \bool^m$. By \Cref{eq:for-cont-soi}, we have
        \[\mu^{\mathcal{O}}(x_1, ..., x_{\ell}) \geq \mu^{\mathcal{O}}(x_1, ..., x_{\ell - 1}) + \mu^{\mathcal{O}}(x_{\ell} \mid x_1, ..., x_{\ell - 1}) - \frac{m}{c}.\]
        Repeating inductively, we obtain
        \begin{align*}
            \mu^{\mathcal{O}}(x_1, ..., x_{\ell}) &\geq \sum_{i \in [\ell]}\mu^{\mathcal{O}}(x_i \mid x_1, ..., x_{i - 1}) - \ell \cdot \frac{m}{c} 
            \\&\geq 4d \cdot m - 2d \cdot m
            \\&= 2d \cdot m,
        \end{align*}
        contradicting \Cref{eq:mu-o-all-small}.
    \end{proof}

    Now, let the constant $c$ be as in \Cref{claim:soi-o-contradiction}. For any sufficiently large $n \in \N$, let $m = \lfloor n/c^4\rfloor$. By \Cref{claim:soi-o-contradiction}, there exists $j \in \N$ with $m \leq j \leq c^2\cdot m < n$ and strings $u \in \bool^{j}$ and $v \in \bool^{m}$ violating SoI. Then, define $n$-length strings
    \[x = u1^{n - j} \; \text{    and    }\; y = v1^{n - m}.\]
    Observe that 
    \[\mu^{\mathcal{O}}(x, y) \leq \mu^{\mathcal{O}}(u, v) + O(\log n)\] 
    and similarly 
    \[\mu^{\mathcal{O}}(u) \leq \mu^{\mathcal{O}}(x) + O(\log n)\] 
    and 
    \[\mu^{\mathcal{O}}(v \mid u) \leq \mu^{\mathcal{O}}(y \mid x) + O(\log n).\]

    It follows that 
    \begin{align*}
        \mu^{\mathcal{O}}(x, y) &\leq \mu^{\mathcal{O}}(u, v) + O(\log n)
        \\&< \mu^{\mathcal{O}}(u) + \mu^{\mathcal{O}}(v \mid u) - \frac{m}{c} + O(\log n)
        \\&\leq \mu^{\mathcal{O}}(x) + \mu^{\mathcal{O}}(y \mid x)  - \frac{m}{c} + O(\log n)
        \\&\leq \mu^{\mathcal{O}}(x) + \mu^{\mathcal{O}}(y \mid x) - \varepsilon \cdot n. \tag{choosing $\varepsilon := 1/c^6$}
    \end{align*}

    This completes the proof of the theorem.
\end{proof}

\section{Complexity Collapses from SoI: Proof of \texorpdfstring{\Cref{thm:collapses_from_SoI}}{}}

\subsection{Uniform Collapses}\label{sec:uniform_collapse}

\newcommand{\Bad}{\mathsf{Bad}}
\newcommand{\Good}{\mathsf{Good}}

In this subsection, we prove the first item of \Cref{thm:collapses_from_SoI}. We present the proof for the case where the SoI error is $O(\log n)$; it can be easily adapted to the case of $\poly(\log n)$ error.

\begin{theorem}[SoI for $\rnKt$ $\Rightarrow$ $\mathsf{FewEXP} \subseteq \mathsf{BPP}^{\mathsf{NP}}$]\label{thm:SoI_uniform_collapse}
	If either $\rnKt$ or $\nKt$ satisfies \emph{symmetry of information} with logarithmic error, then $\mathsf{FewEXP} \subseteq \mathsf{BPP}^{\mathsf{NP}}$.
\end{theorem}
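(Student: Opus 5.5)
We argue by contradiction: assume SoI holds for $\nKt$ (or $\rnKt$) with error $O(\log n)$ and that $\mathsf{FewEXP}\nsubseteq\BPP^{\NP}$. The argument has two halves. First, SoI makes a gap version of the meta-problem easy. Second, $\mathsf{FewEXP}\nsubseteq\BPP^{\NP}$ yields a single-valued nondeterministic PRG that fools exactly such algorithms. Put together, these give the contradiction.

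\textbf{Step 1 (SoI $\Rightarrow$ easy meta-complexity).} We show that SoI with logarithmic error implies $\gapmnktp[n-O(\log n),n-1]\in\pr\P^{\parallel\NP}$, and for $\rnKt$ the analogue in $\pr\BPP^{\parallel\NP}$. This follows the route announced for the direction \ref{i:SoI-rnKt}$\Rightarrow$\ref{i:easiness-Gap-MrnKtP} of \Cref{thm:equivalence_SoI}, adapted from \cite{HMO26}.
\begin{itemize}
\item First, SoI gives $\nK^{t}(x)\le\nKt(x)+O(\log n)$ for some $t=\poly(n)$.
\item Next, an optimal $\nK^t$-witness $m$ satisfies $\nKt(m\mid x)=O(\log n)$.
\item Finally, an algorithm with nonadaptive $\NP$ queries enumerates all $O(\log n)$-bit programs $v$. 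For each $v$ it asks, in one batch of queries, whether $(v,x)$ is single-valued and whether it yields some $m$ of a given length that itself single-valuedly outputs $x$ within time $t$. It then outputs the minimum length found.
\end{itemize}
The key point is that these queries do not mention $m$ explicitly, which keeps them nonadaptive. For this theorem we only need the weaker conclusion $\gapmnktp\in\pr\P^{\NP}$ (respectively $\pr\BPP^{\NP}$).

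\textbf{Step 2 (a hitting PRG from $\mathsf{FewEXP}\nsubseteq\BPP^{\NP}$).} We split into two cases.
\begin{itemize}
\item \emph{Case $\EXP\nsubseteq\BPP^{\NP}$.} Apply \Cref{t:io-PRG-EXP} with $O=\SAT$, whose hypothesis $O\in\PSPACE$ holds. This gives generators $G_n:\bool^{n^\varepsilon}\to\bool^{n^b}$, computable in time $2^{O(n^\varepsilon)}$, that infinitely often fool samplable distributions over polynomial-size $\SAT$-oracle circuits. Every output $G_n(z)$ then has $\nKt(G_n(z))\le\Kt(G_n(z))\le O(n^\varepsilon)\ll n^b-1$.
\item \emph{Case $\EXP\subseteq\BPP^{\NP}$.} Then there is a language $L\in\mathsf{FewEXP}\setminus\EXP$. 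We claim that infinitely many $x\in L$ have no witness of $\SAT$-oracle circuit complexity $\poly(|x|)$. Otherwise, by easy witnesses in the style of \cite{DBLP:conf/coco/Kabanets00}, one could enumerate all small $\SAT$-oracle circuits in exponential time and decide $L$ in $\EXP$, using $\EXP\supseteq\P^{\NP}$-simulations of the oracle. For such $x$, fix the advice $(x,|W_x|)$, where $W_x$ is the witness set; it has size at most $2^{\poly}$. A nondeterministic machine guesses $|W_x|$ distinct witnesses, verifies them all, and outputs the lexicographically first. This produces a canonical truth table single-valuedly in time $2^{\poly(|x|)}$, and that table has high $\SAT$-oracle circuit complexity. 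Plug it into \Cref{t:BFNW93} with oracle $O=\SAT$. Each output then has $\nKt\le|x|^{O(1)}+\log 2^{\poly(|x|)}$, which is polynomially smaller than the output length $n^b$ once $b$ is chosen large.
\end{itemize}

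\textbf{Step 3 (contradiction).} In both cases we get infinitely many lengths $N$ and a generator $G$ with two properties: every output $G(z)$ has $\nKt(G(z))\le N-O(\log N)$, and $G$ fools the $\BPP^{\NP}$ algorithm $D$ for $\gapmnktp$ from Step 1, viewed as a $\SAT$-oracle circuit family sampled by fixing $D$'s internal randomness. But $D$ rejects every $G(z)$. By \Cref{l:incomp-strings}, $D$ accepts a uniformly random string with probability at least $1-o(1)$. This gap is a distinguishing advantage, contradicting pseudorandomness. The $\rnKt$ case is identical, because $\rnKt\le\nKt$ on the pseudorandom outputs and the counting bound still holds.

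\textbf{Main obstacle.} I expect the hardest part to be Step 1, namely keeping the $\NP$ queries nonadaptive, or at least polynomial, when SoI is only assumed for Levin-style measures. The $\mathsf{FewEXP}$ case of Step 2 also needs care: the witness-counting advice must make the output truly single-valued, and the parameters must be tuned so that the $\nKt$ bound sits inside the YES side of the gap.
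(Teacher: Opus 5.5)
Your proposal is correct and follows essentially the same route as the paper: SoI makes the gap problem easy (\Cref{l:SoI-nKt} and \Cref{l:SoI-rnKt}); the $\EXP$-versus-$\mathsf{FewEXP}$ win-win, using easy witnesses and the advice $(x,|W_x|)$, yields the single-valued nondeterministic PRG of \Cref{lem:prg-rel}; and the final distinguishing step is exactly \Cref{l:FewExp-hardness}. One small slip: since the NO threshold is $N-1$, \Cref{l:incomp-strings} only guarantees that a uniformly random string is a NO-instance with probability at least $1/2$ (not $1-o(1)$), but this still gives the constant distinguishing advantage you need.
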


To prove \Cref{thm:SoI_uniform_collapse}, the key ingredient is the following lemma, which states that if $\mathsf{FewEXP} \not\subseteq \mathsf{BPP}^{O}$ for some $O \in \PSPACE$, then there exists a pseudorandom generator with certain efficiency guarantees that fools randomized algorithms with oracle access to $O$.

	\begin{lemma}[Hardness-to-PRG]\label{lem:prg-rel}
		Let $O\in\PSPACE$.
		If $\mathsf{FewEXP} \not\subseteq \BPP^{O}$, then for every constant $\eta>0$ and every constant $b\in\mathbb{N}$,
		there exists a sequence of functions $G=\{G_n\}_{n\in\mathbb{N}}$ with
		\[
		G_n:\{0,1\}^{n^{\eta/2}}\to\{0,1\}^n
		\]
		such that:
		\begin{enumerate}
			\item Each $G_n$ is computable in single-valued nondeterministic time $2^{O(n^{\eta/2})}$ with $O(n^{\eta/2})$ bits of advice.
			\item For every distribution family $\{C_n\}$ of $O$-oracle Boolean circuits samplable in time $n^b$, there are infinitely many $n$ such that, with probability at least $1 - n^{-b}$ over $C \sim C_n$, $G_n$ $n^{-b}$-fools $C$.
		\end{enumerate}
	\end{lemma}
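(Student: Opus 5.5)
We argue by a win-win on whether $\EXP \subseteq \BPP^{O}$, following the outline in the techniques section.

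\emph{Case 1: $\EXP \not\subseteq \BPP^{O}$.} We invoke \Cref{t:io-PRG-EXP} with seed parameter $\eta/2$ and the same $b$. Rescaling input lengths, it gives generators $G_n\colon\{0,1\}^{n^{\eta/2}}\to\{0,1\}^n$ computable in deterministic time $2^{O(n^{\eta/2})}$. These satisfy the required fooling property against $O$-oracle circuit families samplable in time $n^b$, on infinitely many $n$. Deterministic computation is trivially single-valued nondeterministic, and no advice is needed, so both items hold.

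\emph{Case 2: $\EXP \subseteq \BPP^{O}$.} Then $\FewEXP \not\subseteq \EXP$, since otherwise $\FewEXP \subseteq \BPP^{O}$. Fix $L \in \FewEXP\setminus\EXP$, with verifier $V$ running in time $2^{|u|^k}$, witnesses of length $2^{|u|^k}$, and at most $2^{|u|^k}$ witnesses per positive instance $u$. For each $u\in L$, let $W_u$ be the lexicographically first witness, viewed as a truth table.

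We claim there is a constant $a$ such that, for infinitely many $u \in L$, every witness of $u$ has $O$-oracle circuit complexity larger than $|u|^a$ (meaning a circuit of that size on $|u|^k$ inputs). Otherwise, for all but finitely many $u$, membership $u \in L$ could be decided in exponential time. We would enumerate all $O$-oracle circuits of size $|u|^a$, evaluate each on all inputs using a $\PSPACE$ computation of $O$, and check whether the resulting truth table is accepted by $V$. This puts $L \in \EXP$, a contradiction. The enumeration takes $2^{\poly(|u|)}$ time, since $O\in\PSPACE$.

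For such a hard $u$ with $|u|$ polynomially related to $n^{\eta/2}$, we plug $W_u$ into \Cref{t:BFNW93}, relativized to $O$, with stretch parameters chosen so the seed is $n^{\eta/2}$ and the output is $n$. Since \Cref{t:BFNW93} is black-box with a non-adaptive oracle reconstruction, fooling fails only if $W_u$ has small $O$-oracle circuits. The randomness sampling the circuit $C\sim C_n$ is absorbed by averaging: if $G_n$ fails to fool $C$ with probability more than $n^{-b}$, we fix the sampling randomness nonuniformly, which only affects circuit size.

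The generator is computed single-valued nondeterministically with advice $(u, |W|_u)$, where $|W|_u$ is the exact number of witnesses of $u$. On this advice, it guesses $|W|_u$ distinct witnesses in sorted order, verifies each with $V$, and outputs $\bot$ if any check fails. Otherwise it takes the lexicographically first of them as $W_u$ and evaluates the BFNW generator. Because exactly $|W|_u$ witnesses exist, every accepting path sees the full witness set, so every accepting path yields the same $W_u$. The running time is $2^{\poly(|u|)}$ and the advice has length $O(|u|)$; we choose $|u|$ so that this becomes $2^{O(n^{\eta/2})}$ time and $O(n^{\eta/2})$ advice.

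The main obstacle is the input-length bookkeeping. The hard instances $u$ occur only at infinitely many lengths, and we must map each hard $u$ to an output length $n$ so that the bound $|u|^k \le n^{\eta/2}$ governing running time matches the circuit-size requirement. That requirement is that $W_u$ beat size $\poly(n^b)$ circuits, which forces $a$ large relative to $b$ and $k$. To handle this, we define $n$ from $|u|$ by padding: set $n=|u|^{K}$ for a large constant $K$, and use the fact that the hardness threshold $|u|^a$ in the claim can be taken with any fixed $a$. Then each hard $u$ gives one good $n$, producing infinitely many good $n$, which is all that item 2 requires.
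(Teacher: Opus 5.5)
Your proposal is correct and follows essentially the same route as the paper. Case 1 goes via \Cref{t:io-PRG-EXP}; Case 2 combines the easy-witness argument, which for every fixed polynomial bound yields infinitely many instances in $\mathsf{FewEXP}\setminus\EXP$ with only hard witnesses, single-valued nondeterministic recovery of the lexicographically first witness from the advice $(u,|W_u|)$, and padding of that truth table into the generator of \Cref{t:BFNW93}. Two harmless remarks: writing $|W_u|\le 2^{|u|^k}$ takes $|u|^k$ bits rather than $O(|u|)$, but your choice $n=|u|^K$ with $K$ large (needed anyway for the running time) absorbs this; and the averaging over $C\sim C_n$ is unnecessary, since at the good lengths the generator fools every $O$-oracle circuit of size $n^b$.
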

	\begin{proof}
Fix $O \in \PSPACE$, and assume $\mathsf{FewEXP} \not\subseteq \BPP^{O}$. We consider two cases.

\paragraph{Case 1: $\mathsf{EXP} \not\subseteq \BPP^{O}$.}  
The desired PRG follows directly from \Cref{t:io-PRG-EXP}.

\paragraph{Case 2: $\mathsf{EXP} \subseteq \BPP^{O}$.}  
Under the assumption that $\mathsf{FewEXP} \not\subseteq \BPP^{O}$, it follows that $\mathsf{FewEXP} \neq \mathsf{EXP}$. We show the following claim.

\begin{claim}\label{FewEXP-easy-witness}
	If $\mathsf{FewEXP} \neq \mathsf{EXP}$, then there exist a verifier $V$ and constants $\alpha,k>0$ such that for infinitely many $\ell$, there exist $x_\ell \in \{0,1\}^\ell$ and a set $W_\ell \subseteq \{0,1\}^{2^{\ell^\alpha}}$ satisfying:
	\begin{itemize}
		\item $V(x_\ell,w) = 1$ for every $w \in W_\ell$, where $V$ runs in time $2^{O(\ell^{\alpha})}$;
		\item $1 \leq |W_\ell| \leq 2^{\ell^k}$;
		\item for every $w \in W_\ell$, when $w$ is interpreted as the truth table of a Boolean function, it cannot be computed by any $O$-oracle circuit of size $\poly(\ell)$.
	\end{itemize}
\end{claim}

\begin{proof}[Proof of \Cref{FewEXP-easy-witness}]
	\renewcommand\qedsymbol{$\diamond$}
	
	Let $L \in \mathsf{FewEXP} \setminus \mathsf{EXP}$, and let $V$ be a verifier for $L$. For each $x \in \{0,1\}^\ell$, let $W_x$ denote the set of witnesses for $x$ with respect to $V$ (possibly empty). Without loss of generality, assume each witness has length exactly $2^{\ell^{\alpha}}$ for some constant $\alpha>0$. Also, by the definition of $\mathsf{FewEXP}$, there exists a constant $k>0$ such that $|W_x| \le 2^{\ell^k}$.
    
    Suppose, for the sake of contradiction, that for all but finitely many $x \in L$, some witness $w \in W_x$, when viewed as the truth table of a Boolean function on $\ell^{\alpha}$ input bits, has polynomial-size $O$-oracle circuit complexity.
	
	We claim that this assumption implies $L \in \mathsf{EXP}$. On input $x \in \{0,1\}^\ell$, enumerate all $O$-oracle circuits of size at most $\mathrm{poly}(\ell^{\alpha})=\poly(\ell)$. For each such circuit, evaluate it on all inputs to obtain a truth table $y \in \{0,1\}^{2^{\ell^\alpha}}$, and check whether $V(x,y) = 1$. Accept if and only if such a $y$ exists.
	
	Since $O \in \mathsf{PSPACE}$, each oracle query can be simulated in time $2^{\mathrm{poly}(\ell)}$, and the total number of circuits of size $\mathrm{poly}(\ell)$ is at most $2^{\mathrm{poly}(\ell)}$. Therefore, evaluating each circuit on all inputs and verifying $V(x,y) = 1$ can be done in time $2^{\mathrm{poly}(\ell)}$, giving an exponential-time algorithm for $L$.
	
	This contradicts the assumption that $L \notin \mathsf{EXP}$. Hence, for infinitely many yes-instances $x$, every witness in $W_x$ (which is nonempty) must have super-polynomial $O$-oracle circuit complexity.
\end{proof}

Next, we show that \Cref{FewEXP-easy-witness} allows us to efficiently obtain, in nondeterministic time with small advice, hard truth tables that have super-polynomial-size $O$-oracle circuit complexity.

\begin{claim}\label{c:hard-tt}
	If $\mathsf{FewEXP} \neq \mathsf{EXP}$, there exist a constant $\beta>0$ and a sequence $\{y_m\in\{0,1\}^{2^m}\}_m$ such that the following holds for infinitely many $m$:
	\begin{itemize}
		\item $y_m$, viewed as a truth table, is not computable by any $O$-oracle circuit of size $\poly(m)$.
		\item $y_m$ can be computed in single-valued nondeterministic time $2^{m^\beta}$ with $m^\beta$ bits of advice.
	\end{itemize}
\end{claim}

\begin{proof}[Proof of \Cref{c:hard-tt}]
	\renewcommand\qedsymbol{$\diamond$}
	
	Let $V$, $\alpha$, and $k$ be as in \Cref{FewEXP-easy-witness}. By that claim, for infinitely many lengths $\ell$, there exist an input $x_\ell \in \{0,1\}^\ell$ and a (nonempty) set of witnesses $W_\ell \subseteq \{0,1\}^{2^{\ell^{\alpha}}}$ that satisfies the conditions specified in \Cref{FewEXP-easy-witness}. Fix such a good $\ell$. We describe how to efficiently obtain one of these ``hard'' witnesses in single-valued nondeterministic time with small advice.
	
	Take the advice string to be $(x_\ell, |W_\ell|)$. We can now perform the following nondeterministic procedure to obtain the lexicographically first witness in $W_\ell$:
	\begin{enumerate}
		\item Nondeterministically guess $|W_\ell|$ distinct strings $y_1,\dots,y_{|W_\ell|} \in \{0,1\}^{2^{\ell^{\alpha}}}$.
		\item Verify for each $y_i$ that $V(x_\ell, y_i) = 1$.
		\item Output the lexicographically first $y_i$ that passes verification.
	\end{enumerate}
	
	This procedure is single-valued because, on any nondeterministic branch, the output is uniquely determined as the lex-first valid witness. The running time is at most $O(|W_\ell| \cdot 2^{\ell^{\alpha}}) = 2^{\mathrm{poly}(\ell)}$.
	
	Let $m = \ell^{\alpha}$. Then:
	\begin{itemize}
		\item The output of the above procedure is a witness in $W_\ell$ and has length $2^m$.
		\item The advice length is $O(\ell + \ell^k)$, which is at most $m^\beta$ for some constant $\beta>0$.
		\item The nondeterministic running time is $2^{\mathrm{poly}(\ell)} = 2^{m^\beta}$ for some $\beta>0$.
	\end{itemize}
	
	Finally, by \Cref{FewEXP-easy-witness}, this lex-first witness is hard against any $\poly(m)$ $O$-oracle circuits.
\end{proof}

Now, we can plug the hard truth tables obtained from \Cref{c:hard-tt} into the hardness-to-randomness framework to obtain a PRG.

Fix constants \(0 < \eta < 1\) and \(b \in \mathbb{N}\). Let \(\varepsilon \coloneqq \eta/2\), and let \(\delta \le \varepsilon\) and constant \(c \in \mathbb{N}\) be as guaranteed by \Cref{t:BFNW93}.

Let \(\{y_m \in \{0,1\}^{2^m}\}_m\) be the sequence from \Cref{c:hard-tt}. For infinitely many \(m\), the function \(y_m\), viewed as a Boolean function on \(m\) variables, does not admit \(O\)-oracle circuits of size \(\poly(m)\), and is computable in single-valued nondeterministic time \(2^{m^\beta}\) with \(m^\beta\) bits of advice, where $\beta>0$ is a constant.

Set
\[
n \coloneqq m^{\beta/\delta}.
\]
Note that we have \(m \le n^\delta\). By padding, we can extend \(y_m\) to a function
\[
z_{n^\delta} : \{0,1\}^{n^\delta} \to \{0,1\}
\]
by ignoring the first \(n^\delta - m\) input bits. Note that \(z_{n^\delta}\) inherits hardness: if it had an \(O\)-oracle circuit of size \(n^{c\delta}\), then by fixing the extra inputs we would obtain an \(O\)-oracle circuit of size \(\poly(m)\) computing \(y_m\), contradicting the hardness of \(y_m\).

Now define the generator
\[
G_n(s) \coloneqq F(z_{n^\delta}, s),
\]
where \(s \in \{0,1\}^{n^\varepsilon}\). By the guarantee of \Cref{t:BFNW93} and the hardness of \(z_{n^\delta}\), it follows that for infinitely many \(n\), the generator \(G_n\) \(n^{-b}\)-fools all \(O\)-oracle circuits of size at most \(n^b\).

It remains to verify efficiency. By \Cref{c:hard-tt}, the truth table \(y_m\) (and hence \(z_{n^\delta}\)) can be computed in single-valued nondeterministic time \(2^{m^\beta} = 2^{O(n^\varepsilon)}\) and with \(m^\beta = O(n^\varepsilon)\) bits of advice. Since \(F\) is computable in polynomial time, it follows that each \(G_n\) can be computed in single-valued nondeterministic time \(2^{O(n^\varepsilon)}\) with \(O(n^\varepsilon)\) bits of advice.
	\end{proof}

	Next, we use our PRG in \Cref{lem:prg-rel} to show that if $\mathsf{FewEXP}\not\subseteq \BPP^{O}$, where $O\in\PSPACE$, then the following meta-complexity problem must be hard. 

\begin{definition}
Let $s_1,s_2\colon\mathbb{N}\to\mathbb{N}$. Define the promise problem $\mathsf{Gap}\text{-}(\nKt \textnormal{ v.s. } \K)[s_1,s_2]$ as the promise problem $(\mathcal{YES},\mathcal{NO})$, where
\begin{align*}
    \mathcal{YES}
    &\eqdef
    \left\{ x \;\middle|\; \nKt(x)\le s_1(|x|) \right\},\\
    \mathcal{NO}
    &\eqdef
    \left\{ x \;\middle|\; \K(x) > s_2(|x|) \right\}.
\end{align*}
\end{definition}

\begin{lemma}\label{l:FewExp-hardness}
Let $O \in \PSPACE$. If $\mathsf{FewEXP} \not\subseteq \mathsf{BPP}^{O}$, then for every $\zeta>0$, $\mathsf{Gap}\text{-}(\nKt \textnormal{ v.s. } \K)[n^{\zeta},n-1]\not\in\BPP^{O}$.
\end{lemma}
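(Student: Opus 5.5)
We argue by contradiction, using the generator of \Cref{lem:prg-rel}. Fix $\zeta>0$ and suppose some language $L\in\BPP^{O}$ agrees with $\mathsf{Gap}\text{-}(\nKt \textnormal{ v.s. } \K)[n^{\zeta},n-1]$. We will show that $L$ distinguishes the generator's outputs from uniform, contradicting its pseudorandomness.

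Apply \Cref{lem:prg-rel} with $\eta:=\zeta$, and choose $b$ large enough that the $\BPP^{O}$ machine for $L$, after amplification, runs in time $n^{b}$. The first step is to view this machine as a distribution family $\{C_n\}$ of $O$-oracle circuits samplable in time $n^b$. Concretely, sample the machine's random coins $\rho$ and hardwire them to obtain an $O$-oracle circuit $C_\rho$ on $n$ input bits. After amplification, each fixed input $x$ satisfies $C_\rho(x)=L(x)$ with probability at least $1-2^{-n}$ over $\rho$.

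Next we bound both sides of the distinguishing gap.

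\emph{The generator's outputs are YES instances.} Every output $G_n(s)$ satisfies $\nKt(G_n(s))\le O(n^{\eta/2})+O(\log n)<n^{\zeta}$ for large $n$. Indeed, a description can hardwire the seed $s$ (of $n^{\eta/2}$ bits), the advice (of $O(n^{\eta/2})$ bits), and $n$. The single-valued nondeterministic computation of $G_n$ then runs in time $2^{O(n^{\eta/2})}$, which adds only $O(n^{\eta/2})$ to $\log t$. Single-valuedness is exactly the $\nKt$ requirement that every path output the string or $\bot$. So $L$ accepts every output of $G_n$.

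\emph{Uniform strings are mostly NO instances.} By \Cref{l:incomp-strings} for $\K$, a uniformly random $w\in\{0,1\}^n$ has $\K(w)>n-1$ with probability at least $1/2$. Such strings are NO instances, so $L$ rejects at least half of $\{0,1\}^n$.

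Now average over $\rho$. Since the per-input error is at most $2^{-n}$, we get $\Exp_\rho\big[\Pr_s[C_\rho(G_n(s))=1]\big]\ge 1-2^{-n}$ and $\Exp_\rho\big[\Pr_w[C_\rho(w)=1]\big]\le 1/2+2^{-n}$. By Markov's inequality applied to each quantity, with probability at least $1-O(2^{-n/2})$ over $\rho$, the circuit $C_\rho$ has distinguishing advantage at least $1/3$. This exceeds $n^{-b}$, and it holds with probability larger than $n^{-b}$ over $C\sim C_n$.

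This contradicts item 2 of \Cref{lem:prg-rel}, which requires that for infinitely many $n$, a $1-n^{-b}$ fraction of circuits drawn from $C_n$ are $n^{-b}$-fooled by $G_n$. Hence no such $L$ exists. Note that only strict promise instances are used: the argument never needs $L$'s value on strings outside the promise.

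The main point requiring care is the $\nKt$ upper bound on generator outputs. We must check that the "single-valued nondeterministic time with advice" guarantee from \Cref{lem:prg-rel} yields a genuine $\nKt$ program, where every branch outputs either $G_n(s)$ or $\bot$ within the time bound, and that the advice and seed lengths plus $\log t$ stay below $n^{\zeta}$. Since $\eta/2<\zeta$, this holds with room to spare. The probabilistic bookkeeping that turns the $\BPP^{O}$ algorithm into the circuit family is routine.
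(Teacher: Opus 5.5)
Your proposal is correct and follows essentially the same argument as the paper. Both view the amplified $\BPP^{O}$ algorithm as a samplable distribution over $O$-oracle circuits, note that outputs of the generator from \Cref{lem:prg-rel} are YES instances (low $\nKt$) while at least half of uniform strings are NO instances, and conclude that the circuits distinguish, contradicting item 2 of \Cref{lem:prg-rel}. The only difference is bookkeeping: you average per-input errors and apply Markov, while the paper amplifies and union-bounds so the circuit is correct on all length-$n$ inputs at once. When applying Markov, use it on nonnegative error terms, such as $1-\Pr_s[C_\rho(G_n(s))=1]$ and the fraction of NO instances that $C_\rho$ accepts, not on $\Pr_w[C_\rho(w)=1]$ itself.
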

\begin{proof}
Fix $O \in \PSPACE$, and assume $\mathsf{FewEXP} \not\subseteq \mathsf{BPP}^{O}$.

        Toward a contradiction, suppose $\mathsf{Gap}\text{-}(\nKt \textnormal{ v.s. } \K)[n^{\zeta},n-1]\in\BPP^{O}$ for some $\zeta>0$.
		Let $D$ be a probabilistic polynomial-time $O$-oracle algorithm such that, except with exponentially small probability over its internal randomness, $D$ rejects every string with $\nKt$-complexity $\leq n^{\zeta}$, and accepts every string with $\nKt$-complexity $>n$. Such a distinguisher can be obtained via standard error reduction followed by a union bound. Equivalently, we may view $D$ as a distribution over $O$-oracle circuits obtained by fixing its internal randomness.
		
Let $G$ be the PRG from \Cref{lem:prg-rel}. Note that since $G$ is computable in single-valued nondeterministic time $2^{O(n^{\eta/2})}$ with $O(n^{\eta/2})$ bits of advice, we have for all sufficiently large $n$ and every $r\in\bool^{n^{\zeta/2}}$,
\[
    \nKt(G_n(r))\leq O(n^{\zeta/2})\leq n^{\zeta}.
\]

By the correctness of the algorithm $D$, except with exponentially small probability over the choice of $D$, it holds that for all but finitely many $n$,
\[
D(G(r)) = 0
\]
for every $r \in \bool^{n^{\eta/2}}$.

On the other hand, with probability at least $1/2$ over $x\sim\bool^n$, we have $\K(x)>n-1$. It follows that
\[
\Pr_{x \sim \bool^n}[D(x)=1] \ge \tfrac12 .
\]
Hence, $D$ distinguishes the output distribution of the PRG from the uniform distribution, contradicting \Cref{lem:prg-rel}.
	\end{proof}
	
We can now complete the proof of \Cref{thm:SoI_uniform_collapse}
\begin{proof}[Proof of \Cref{thm:SoI_uniform_collapse}]
We prove the contrapositive. Suppose that $\mathsf{FewEXP} \not\subseteq \mathsf{BPP}^{\mathsf{NP}}$. Then, by \Cref{l:FewExp-hardness}, for every $\zeta>0$,
\[
\mathsf{Gap}\text{-}(\nKt \textnormal{ v.s. } \K)[n^{\zeta},n-1] \notin \BPP^{\NP}.
\]

Observe that if either $\gapmrnktp[n-O(\log n), n-1]$ or $\gapmnktp[n-O(\log n), n-1]$ were in $\pr\BPP^{||\NP}$, then $\mathsf{Gap}\text{-}(\nKt \textnormal{ v.s. } \K)[n^{\zeta},n-1]$ would also be in $\BPP^{\NP}$ for some $\zeta>0$. Indeed, every YES (resp. NO) instance of the latter problem is also a YES (NO) instance of the former.\footnote{Recall in particular that, for all strings $x$, $\K(x) \leq \rnKt(x) \leq \nKt(x)$.} This contradicts the previous paragraph. Hence, both $\gapmrnktp[n-O(\log n), n-1]$ and $\gapmnktp[n-O(\log n), n-1]$ are not in $\pr\BPP^{||\NP}$. It follows from \Cref{l:SoI-rnKt} and \Cref{l:SoI-nKt} that SoI does not hold for either $\rnKt$ or $\nKt$.
\end{proof}

\subsection{Nonuniform Inclusions}
In this subsection, we prove the second item of \Cref{thm:collapses_from_SoI}. Again, we present the proof for the case where the SoI error is $O(\log n)$, and it can be easily adapted to the case of $\poly(\log n)$ error.

\newcommand{\NoBad}{\mathsf{NoBad}}
\newcommand{\NoHang}{\mathsf{NoHang}}
\newcommand{\DTIME}{\mathsf{DTIME}}

\subsubsection{The Case of \texorpdfstring{$\nKt$}{}}\label{sec:nonuniform-collapse-nkt}
\begin{theorem}[SoI for $\nKt$ $\Rightarrow$ $\EXP^{\parallel\NP}\subseteq \NP/\poly\cap\coNP/\poly$]
	\label{thm:soi-to-expnp-collapse}
	Assume that $\nKt$ satisfies symmetry of information with logarithmic error, i.e., there exists a
	constant $c>0$ such that for all strings $x,y$,
	\begin{equation}\label{eq:soi-nkt-nonunif}
		\nKt(x,y)\ \ge\ \nKt(x)\ +\ \nKt(y\mid x)\ -\ c\log(|x|+|y|).
	\end{equation}
	Then
	\[
	\EXP^{\parallel\NP}\ \subseteq\ \NP/\poly\ \cap\ \coNP/\poly.
	\]
\end{theorem}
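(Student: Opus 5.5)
The plan follows the outline given in \Cref{s:techniques}. First, I invoke the SoI-to-meta-complexity direction for $\nKt$ (the analogue of \Cref{l:SoI-rnKt} for $\nKt$, namely \Cref{l:SoI-nKt}, which the proof of \Cref{thm:SoI_uniform_collapse} already uses). Under \Cref{eq:soi-nkt-nonunif}, this gives $\gapmnktp[n-O(\log n),n-1]\in\pr\P^{\parallel\NP}$, so there is a deterministic polynomial-time algorithm $D$ making nonadaptive $\NP$ queries that accepts every $x$ with $\nKt(x)\ge n-1$ and rejects every $x$ with $\nKt(x)\le n-O(\log n)$. Since the $O(\log n)$ error forces a gap of only logarithmic size, $D$ in particular rejects all strings of $\nKt$ complexity $\le n^{\varepsilon}$ and accepts a random string with probability at least $1/2$ (by \Cref{l:incomp-strings}).

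Next, I show $\E^{\parallel\NP}\subseteq\P^{\parallel\NP}/\poly$ via the reduction of \cite{DBLP:journals/siamcomp/AllenderBKMR06}. Fix $L\in\E^{\parallel\NP}$. For an input length $\ell$, let $f=L_{=\ell}$, with truth table $\tt(f)$ of length $2^\ell$, and set $n$ so that $n^{\delta}=\ell$. I instantiate the generator $G_f=F(\tt(f),\cdot)$ of \Cref{t:BFNW93}, with seed length $n^{\varepsilon}$ and output length $n^b$.

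The key upper bound is $\nKt(G_f(z))\le O(\ell+n^{\varepsilon})\ll n^b-O(\log n)$. To obtain it, a program stores $\ell$, the seed $z$, and the number $p$ of YES answers among the $2^{O(\ell)}$ nonadaptive $\NP$ queries made while computing $\tt(f)$, which costs only $O(\ell)$ bits. The program nondeterministically guesses witnesses, checks that exactly $p$ of them verify (as in \Cref{lem:nk-leq-ksat-superscript}), computes $\tt(f)$ in time $2^{O(\ell)}$, and applies $F$. The main care needed here is to ensure that all queries of the $\E^{\parallel\NP}$ machine at length $\ell$ are fixed in advance; this holds because access is nonadaptive, so the count $p$ suffices and the resulting program is single-valued.

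It follows that $D$ distinguishes $G_f$ from uniform with advantage $\ge 1/2$. The reconstruction property of \Cref{t:BFNW93} then gives a nonadaptive oracle circuit $R^D$ of size $\poly(n)=\poly(\ell)$ computing $f$. Composing the nonadaptive queries of $R$ to $D$ with the nonadaptive $\NP$ queries of $D$ keeps the overall $\NP$ access nonadaptive, since all of $D$'s queries are computable from $R$'s queries in parallel. Hence $L\in\P^{\parallel\NP}/\poly$.

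Finally, I apply \Cref{cor:su-collapse-E} to obtain $\E^{\parallel\NP}\subseteq\NP/\poly\cap\coNP/\poly$, and extend this to $\EXP^{\parallel\NP}$ by padding. Given $L\in\DTIME[2^{n^k}]^{\parallel\NP}$, the padded language $\{x01^{|x|^k}\}$ lies in $\E^{\parallel\NP}$. Its $\NP/\poly\cap\coNP/\poly$ machines, run on the padded input with advice for length $|x|+1+|x|^k$, yield polynomial-size machines for $L$. I expect the main obstacle to be the middle step: verifying that the $\nKt$ bound on $G_f(z)$ is genuinely single-valued, and that the reconstruction composes with $D$ while remaining nonadaptive. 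Everything else is quoted from earlier results.
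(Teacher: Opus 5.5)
Your proposal is correct and follows the paper's proof essentially step for step. You use \Cref{l:SoI-nKt} to place the gap problem in $\pr\P^{\parallel\NP}$, and you instantiate the \Cref{t:BFNW93} generator with $\tt(L_{=\ell})$. Its outputs are $\nKt$-easy because only the seed and the number of positive answers among the globally nonadaptive $\NP$ queries need to be stored. Composing reconstruction with the nonadaptive distinguisher gives $\E^{\parallel\NP}\subseteq\P^{\parallel\NP}/\poly$, and \Cref{cor:su-collapse-E} plus padding finishes the argument, exactly as in the paper.
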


\begin{proof}
	Assume~\eqref{eq:soi-nkt-nonunif}. We combine the consequences already established in \Cref{sec:uniform_collapse} with two ingredients from the meta-complexity/pseudorandomness literature. This is implemented in four steps, as explained next.
	
	\medskip\noindent
	\textbf{Step 1: A large-gap $\MnKtP$ promise problem is in $\pr\P^{\parallel\NP}$.}
	Since we assume SoI holds for $\nKt$ with logarithmic error, by \Cref{l:SoI-nKt}, we get that $\gapmnktp[n-O(\log n), n-1]\in \pr\P^{\parallel\NP}$. This implies that for some constant $\delta\in(0,1)$,
	\begin{equation}\label{eq:gapmnktp-in-pnp}
		\gapmnktp[n^\delta,n-1]\ \in\ \pr\P^{\parallel\NP}.
	\end{equation}
	
	\medskip\noindent
	\textbf{Step 2: $\E^{\parallel\NP}$ reduces to a large-gap $\MnKtP$.}
	We now invoke a PRG/reconstruction completeness phenomenon for $\MnKtP$.
	
	\begin{lemma}[Nonuniform truth-table reduction for $\nKt$]\label{lem:enp-tt}
		\label{lem:e-to-gapmnktp-nonunif}
		For any constant $\delta\in(0,1)$ and language
		$L\in \E^{\parallel\NP}$, $L$ has a nonuniform polynomial-time truth-table reduction to
		$\gapmnktp[n^{\delta},n-1]$. Equivalently,
		\[
		\E^{\parallel\NP}\ \subseteq\ \P^{\gapmnktp[n^{\delta},n-1]}/\poly,
		\]
		where all oracle queries to $\MnKtP[n^{\delta},n-1]$ are nonadaptive.
	\end{lemma}

	\begin{proof}[Proof Sketch.]
		Fix a language $L\in\E^{\parallel\NP}$. We sketch a standard Nisan--Wigderson (NW) generator and reconstruction argument showing that,
		for a constant $d\ge 1$ as specified below, the language $L$ has a nonuniform polynomial-time
	    truth-table reduction to the promise problem $\MnKtP[m^{\delta},m-1]$ on inputs of length $m=n^d$.
		
		\medskip\noindent
		\emph{Define an NW/BFNW-style generator from the truth table.}
        Fix a constant $\varepsilon\in(0,1)$ and let $b\in\nat$ be large enough that $b\delta>\varepsilon$, where $\delta$ is as in the statement of this lemma. Let $\delta'<\varepsilon$, $a$, and $F$ be as guaranteed by \Cref{t:BFNW93} applied to $\varepsilon$ and $b$. Set $N \eqdef \lceil n^{1/\delta'}\rceil$ so that $N^{\delta'}\ge n$, and let $f\colon\{0,1\}^{N^{\delta'}}\to\{0,1\}$ be the function $L_n$, padded to ignore all but its last $n$ input bits. Set $m \eqdef N^b$ (so that $m = n^d$ with $d \eqdef b/\delta'$) and define the generator
    	\[
    	G_f(z)\ \eqdef \ F(\tt(f),z)\in\{0,1\}^{m},
    	\qquad z\in\{0,1\}^{N^{\varepsilon}}.
    	\]

		\medskip\noindent
		\emph{Generator outputs are $\nKt$-easy.}
        Fix any seed $z$. We describe a single-valued nondeterministic program that outputs $G_f(z)$.
    	The program hardwires $z$ together with the number $p$ of positive answers among all of the $\NP$
    	queries made by the $\E^{\parallel\NP}$ machine for $L$ across all $2^n$ inputs of length $n$.
    	It first simulates that machine on every input of length $n$, collecting the (non-adaptive) list of queries;
    	this takes time $2^{O(n)}$, and there are at most $2^{O(n)}$ queries. It then nondeterministically guesses
    	witnesses for exactly $p$ of these queries, verifies them, and outputs $\bot$ unless the count matches. Observe that this yields a single-valued nondeterministic construction of $\tt(f)$. The program then computes
	    $F(\tt(f),z)$, taking time $\poly(2^{N^{\delta'}})=\poly(2^{n})$ by \Cref{t:BFNW93}.
    
        The total running time is $t\le 2^{O(n)}$, and the description length is $|z|+|p|+O(1)\le N^{\varepsilon}+O(n)$.
    	Hence,
    	\[
    	\nKt(G_f(z))\ \le\ N^{\varepsilon} + O(n) + O(\log t)\ =\ N^{\varepsilon}+O(N^{\delta'})\ =\ O(N^{\varepsilon}).
    	\]
    	By the choice of $b$, we get $\nKt(G_f(z))\le m^{\delta}$ for all sufficiently
    	large $n$, so $G_f(z)$ is a yes-instance of $\MnKtP[m^{\delta},m-1]$ for every $z$.

		\medskip\noindent
		\emph{Uniform strings are $\nKt$-hard with constant probability.}
    	By a counting argument, a uniformly random $y\sim\mathcal{U}_m$ satisfies $\nKt(y)\ge m-1$ with probability at least $1/2$.
    	Therefore any language $D$ consistent with the promise problem $\MnKtP[m^{\delta},m-1]$ accepts every
    	$G_f(z)$ and rejects all but a $1/2$ fraction of $\mathcal{U}_m$, and in particular
    	\[
    	\left|\Pr_{z\sim\mathcal{U}_{N^{\varepsilon}}}[D(G_f(z))=1]-\Pr_{w\sim\mathcal{U}_{m}}[D(w)=1]\right|\ \ge\ \frac{1}{2} \geq \frac{1}{N^b}.
    	\]
		
		\medskip\noindent
		\emph{Reconstruction yields a nonuniform reduction.}
		By \Cref{t:BFNW93}, there is an oracle circuit $R^{(\text{-})}$ of
		size $N^{a}=\poly(n)$, making nonadaptive queries to $D$, such that $R^{D}(x)=f(x)$ for
		every $x\in\{0,1\}^{N^{\delta'}}$. Hardwiring $R$ as nonuniform advice yields a nonuniform
		polynomial-time truth-table reduction from $L$ to $\MnKtP[m^{\delta},m-1]$, with queries of
		length $m=\poly(n)$ depending only on the input and the advice. Note that this holds for any
		language $D$ consistent with the promise.

	\end{proof}
	
	Combining Lemma~\ref{lem:e-to-gapmnktp-nonunif} with~\eqref{eq:gapmnktp-in-pnp} and substituting the
	$\pr\P^{\parallel\NP}$-procedure for the parallel queries to the oracle yields
	\begin{equation}\label{eq:e-in-pnp-poly}
		\E^{\parallel\NP}\ \subseteq\ \P^{\parallel\NP}/\poly.
	\end{equation}

	\medskip\noindent
	\textbf{Step 3: Downward collapse to $\NP/\poly\cap\coNP/\poly$.}

	Applying Corollary~\ref{cor:su-collapse-E} to~\eqref{eq:e-in-pnp-poly} gives
	\begin{equation}\label{eq:e-in-np-poly}
		\E^{\parallel\NP}\subseteq \NP/\poly\cap\coNP/\poly.
	\end{equation}
	
	\medskip\noindent
	\textbf{Step 4: Standard padding from $\E^{\parallel\NP}$ to $\EXP^{\parallel\NP}$.}
	Let $L\in \EXP^{\parallel\NP}$. Then there exists $k\ge 1$ such that
	$L\in \DTIME(2^{n^k})^{\parallel\NP}$.
	Define the padded language $L'$ by
	\[
	L'(x0^{|x|^k-|x|})\ \eqdef \ L(x).
	\]
	Then $L'\in \E^{\parallel\NP}$, and hence by~\eqref{eq:e-in-np-poly} we have
	$L'\in \NP/\poly\cap\coNP/\poly$.
	Using the advice for length $|x|^k$ and running the corresponding verifier on the padded input
	$y=x0^{|x|^k-|x|}$ yields $\NP/\poly$ and $\coNP/\poly$ verifiers for $L$ itself (with advice size
	still polynomial in $|x|$). Therefore,
	\[
	\EXP^{\parallel\NP}\subseteq \NP/\poly\cap\coNP/\poly,
	\]
	as claimed.
\end{proof}

\subsubsection{The Case of \texorpdfstring{$\rnKt$}{}}\label{sec:nonuniform-collapse-rnkt}
\begin{theorem}[SoI for $\rnKt$ $\Rightarrow$ $\BPEXP^{\parallel\NP}\subseteq \NP/\poly\cap\coNP/\poly$]
	\label{thm:soi-to-bpexpnp-collapse}
	Assume that $\rnKt$ satisfies symmetry of information with logarithmic error, i.e., there exists a
	constant $c>0$ such that for all strings $x,y$,
	\begin{equation}\label{eq:soi-rnkt-nonunif}
		\rnKt(x,y)\ \ge\ \rnKt(x)\ +\ \rnKt(y\mid x)\ -\ c\log(|x|+|y|).
	\end{equation}
	Then
	\[
	\BPEXP^{\parallel\NP}\ \subseteq\ \NP/\poly\ \cap\ \coNP/\poly.
	\]
\end{theorem}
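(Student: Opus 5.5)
I would mirror the four-step proof of \Cref{thm:soi-to-expnp-collapse}, replacing the deterministic ingredients by their randomized counterparts. The chain is: SoI for $\rnKt$ gives $\gapmrnktp \in \pr\BPP^{\parallel\NP}$; then $\BPE^{\parallel\NP}$ reduces nonuniformly to $\gapmrnktp$; then derandomize nonuniformly, apply Shaltiel--Umans, and pad.

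\textbf{Step 1: easiness of the gap problem.} By \Cref{l:SoI-rnKt}, SoI for $\rnKt$ with logarithmic error yields
\[
\gapmrnktp[n-O(\log n),n-1]\in\pr\BPP^{\parallel\NP}.
\]
Every yes-instance of $\gapmrnktp[n^\delta,n-1]$ is also a yes-instance of the former problem. Hence
\[
\gapmrnktp[n^\delta,n-1]\in\pr\BPP^{\parallel\NP}
\]
for some constant $\delta\in(0,1)$.

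\textbf{Step 2: $\BPE^{\parallel\NP}$ reduces to $\gapmrnktp$.} I would prove an analogue of \Cref{lem:e-to-gapmnktp-nonunif}: every $L\in\BPE^{\parallel\NP}$ has a nonuniform polynomial-time nonadaptive reduction to $\gapmrnktp[m^\delta,m-1]$. The reconstruction part of \Cref{t:BFNW93} carries over verbatim, because it only uses that every $G_f(z)$ is a yes-instance and that a uniform string is a no-instance with probability $\ge1/2$. The new point is showing $\rnKt(G_f(z))\le m^\delta$ when $f=L_n$ with $L\in\BPE^{\parallel\NP}$. I would do this as follows:
\begin{itemize}
\item First amplify the $\BPE^{\parallel\NP}$ machine so that a single random string of length $2^{O(n)}$ is good for all $2^n$ inputs of length $n$ simultaneously.
\item The $\rnKt$ program samples that random string. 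It then hardwires the number $p$ of positive $\NP$ answers, which depends on the randomness.
\item Handle this dependence exactly as in \Cref{l:rKt--parallel-NP}: guess $p$ at random, losing a factor $2^{-O(n)}$ in success probability. Then restore success probability $2/3$ with \Cref{l:success-amp-rnK}, at only $O(n)$ additive cost in description length and $\log$ time.
\item Alternatively, use the inclusion $\BPE^{\parallel\NP}\subseteq\AM\E/O(n)$ mentioned in the techniques section together with \Cref{l:rK-AM-coAM-superscript}: compute $\tt(L_n)$ via an Arthur--Merlin single-valued procedure with $O(n)$ advice, and fold the protocol into $\rnKt$ randomness and nondeterminism.
\end{itemize}
Either route gives $\rnKt(\tt(f))\le O(n)$, so $\rnKt(G_f(z))\le N^{\varepsilon}+O(n)\le m^\delta$.

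\textbf{Step 3: derandomize and collapse.} Steps 1 and 2 give
\[
\BPE^{\parallel\NP}\subseteq(\pr\BPP^{\parallel\NP})/\poly .
\]
By Adleman-style amplification and fixing good randomness as advice, this is contained in $\P^{\parallel\NP}/\poly$. The consistency issue on promise inputs is harmless: the reduction's queries are answered correctly by any language consistent with the promise, and the fixed random string makes the algorithm consistent on all queries of the relevant length. Then \Cref{cor:su-collapse-BPE} gives
\[
\BPE^{\parallel\NP}\subseteq\NP/\poly\cap\coNP/\poly .
\]
Finally, pad exactly as in Step 4 of the $\nKt$ case: for $L\in\BPTIME[2^{n^k}]^{\parallel\NP}$, the padded language lies in $\BPE^{\parallel\NP}$.

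\textbf{Main obstacle.} The hard step is the $\rnKt$ upper bound in Step 2. It requires that one random string is simultaneously good for all $2^n$ inputs, and that the advice $p$, the count of positive nonadaptive $\NP$ answers, is chosen without its dependence on randomness destroying single-valuedness. I would first attempt the random-guess-of-$p$ plus amplification route from \Cref{l:rKt--parallel-NP,l:success-amp-rnK}. If the $2^{-O(n)}$ loss is too costly, I would fall back on the $\AM\E/O(n)$ inclusion.
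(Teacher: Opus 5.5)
Your proposal is correct. Its skeleton matches the paper's: \Cref{l:SoI-rnKt}, then a BFNW-based nonuniform nonadaptive reduction, then fixing a good random string for the $\pr\BPP^{\parallel\NP}$ algorithm, then \Cref{cor:su-collapse-BPE}, then padding. The genuine difference is in the step you correctly flag as the crux, the bound on $\rnKt(G_f(z))$.

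The paper never describes the $\BPE^{\parallel\NP}$ machine directly. It first proves $\BPE^{\parallel\NP}\subseteq\AM\E/O(n)$ in \Cref{lem:bpe-np-ame-n}; the advice there is a scaled estimate of the expected number of positive oracle answers. Then, in \Cref{lem:bpe-to-gapmnrktp-nonunif}, it describes $\tt(L_n)$ as follows: fix verifier randomness that is good for all $2^n$ inputs, hardwire $k=|L_n|$, and guess $k$ yes-instances together with witnesses. Because $k$ does not depend on the randomness, this description succeeds with probability $2/3$ outright. It also uses only the AM side, with no coAM and no appeal to \Cref{l:rK-AM-coAM-superscript}. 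As a bonus, it yields the stronger intermediate fact $\AM\E/O(n)\subseteq\P^{\parallel\gapmrnktp}/\poly$.

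Your primary route keeps the randomness-dependent count $p$ of positive $\NP$ answers. Starting from $\rK^{2^{O(n)},\parallel\NP}(G_f(z))\le N^{\varepsilon}+O(\log n)$, you apply \Cref{l:rKt--parallel-NP} and then \Cref{l:success-amp-rnK}. This is the same pairing the paper uses in \Cref{thm:semi-soi-rk-poly} and \Cref{lem:equivalence-ii-i}. It is more economical: it reuses lemmas already proved and skips the Arthur--Merlin protocol, since the frequency-counting idea behind \Cref{lem:bpe-np-ame-n} is already packaged inside \Cref{l:success-amp-rnK}.

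Your worry about the $2^{-O(n)}$ loss is unfounded. Take $\delta=2^{-O(n)}$, $t=2^{O(n)}$ and $|G_f(z)|=\poly(n)$. Then amplification costs only $O(n)$ description bits and keeps the time at $2^{O(n)}$. Moreover $O(n)=O(N^{\delta'})=o(N^{\varepsilon})$ because $\delta'<\varepsilon$. So $\rnKt(G_f(z))\le m^{\delta}$ follows and the fallback is unnecessary.

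If you did use the fallback as you phrased it, going through \Cref{l:rK-AM-coAM-superscript} would need a nonuniform variant, as in \Cref{l:non-uniform-AM-coAM-zero-error}. It would also need closure of $\BPE^{\parallel\NP}$ under complement. The paper's hardwiring of $k$ avoids both.
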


The proof of \Cref{thm:soi-to-bpexpnp-collapse} is similar to that of \Cref{thm:soi-to-expnp-collapse}.

We will make use of the following inclusion, which follows from standard techniques. We include a proof for completeness.

\begin{lemma}\label{lem:bpe-np-ame-n}
    $\BPE^{\parallel \NP} \subseteq \AM\E/O(n)$.
\end{lemma}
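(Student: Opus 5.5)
The plan is to fix $L\in\BPE^{\parallel\NP}$, decided by a probabilistic machine $M$ running in time $2^{an}$ and making non-adaptive $\NP$ queries. We first amplify the success probability of $M$ to $1-2^{-2n}$ by majority over $O(n)$ independent runs. The runs still make their queries in parallel, so the amplified machine stays in $\BPE^{\parallel\NP}$. The advice for inputs of length $n$ will be a single number: the expected number of positive $\NP$-query answers, or more concretely the count of positive answers on a fixed sample of randomness. This is the Buhrman--Lee--van Melkebeek counting trick already used in \Cref{l:success-amp-rnK} and \Cref{lem:nk-leq-ksat-superscript}. The aim is to replace the $\NP$ oracle by Merlin's guessed witnesses and to keep Merlin honest using this count.

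Concretely, let $q(x,r)$ be the number of positive answers among the queries of $M(x;r)$. We want a quantity Arthur can check that does not depend on $x$. For this, consider the whole length-$n$ slice. Arthur picks random $r_1,\dots,r_k$ with $k=\poly(2^n)$; this is exponential but fine in $\E$ time. Take the advice to be $p_n$, the average of $q(z,r)$ over all $z\in\bool^n$ and random $r$, rounded to precision $2^{-n}$ scaled by the maximum query count $2^{an}$; this uses $O(n)$ bits. On input $x$, Arthur samples random inputs $z_1,\dots,z_k$ together with randomness $r_1,\dots,r_k$, and also samples randomness $r'_1,\dots,r'_{k'}$ for $x$ itself. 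Merlin supplies witnesses for the queries of all these runs. Arthur verifies the witnesses and requires that the total number of verified positive answers over the $z_i$-runs be at least $k(p_n-\gamma)$. Since verified answers never exceed the true ones, Chernoff bounds imply that, with high probability over Arthur's coins, Merlin can omit only a small fraction of true positives.

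The main obstacle is that this accounting controls the $z_i$-runs but not the runs on $x$. Merlin could cheat only on $x$'s runs, whose queries are independent of the sampled $z_i$. We can fix this in one of two ways. The first is to hide $x$'s runs among the sample: Arthur places the runs $(x,r'_j)$ at uniformly random positions among the $(z_i,r_i)$ runs, after the queries are generated. Since queries are all sent in the same round, Merlin does not know which positions belong to $x$ unless the queries reveal it, and they may. The second, which I prefer, is to make the advice per-input-independent by using the average of $q(x,r)$ over $r$ for each $x$. That is not $O(n)$ advice, so instead we condition the check on $x$ directly: include $\Exp_r q(x,r)$ as part of Merlin's claim. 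Merlin is then forced to justify this claim via a Goldwasser--Sipser lower-bound protocol, and upper bounds follow from counting. In the end, the number of runs with wrong answers is $o(k')$, so the majority over $x$'s runs equals $L(x)$ with high probability. Completeness is perfect up to sampling error, since an honest Merlin provides all witnesses.

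If the per-$x$ expectation route proves messy, I will fall back to the symmetric formulation. Arthur's bound then has to hold on average, and we use that an $\AM$ protocol only needs soundness for each fixed $x$. Merlin's cheating on $x$'s runs lowers the verified count below $k'(\Exp_r q(x,r)-\gamma)$. Arthur detects this provided the value $\Exp_r q(x,r)$ is certified; we certify it with a two-sided set-size protocol ($\AM$ lower bound for the positives, and the complement handled by the $\coNP$ side being absorbed since queries are $\NP$). Finally, the number of rounds is constant, and round collapse yields $\AM\E/O(n)$.
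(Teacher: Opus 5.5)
There is a genuine gap. You correctly identify the obstacle: the count over the $(z_i,r_i)$-runs says nothing about Merlin's answers on the separate runs $(x,r'_j)$. But that obstacle is never actually overcome. Your preferred fix points the Goldwasser--Sipser protocol in the wrong direction. Merlin's incentive is to \emph{under}-claim $v=\Exp_r q(x,r)$, because a smaller claim lets him omit more positive answers. A set lower-bound protocol only prevents over-claiming, which Merlin never wants. To stop under-claiming you would need to certify an \emph{upper} bound on $|\{(r,j): q_j(x,r)\in\SAT\}|$. Equivalently, you would need a lower bound on the $\coNP$-set of negative queries, and there is no public-coin protocol for that in general. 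The phrases ``upper bounds follow from counting'' and ``the complement handled by the $\coNP$ side being absorbed'' hide exactly this step, which amounts to certifying non-membership in $\SAT$. As a sanity check, your treatment of $x$'s runs never uses the $O(n)$-bit advice. So the same protocol at polynomial scale would give $\coNP\subseteq\P^{\parallel\NP}\subseteq\AM$ and collapse $\PH$. The advice has to be what controls Merlin on the runs of $x$ itself.

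The missing idea is to put $x$'s runs \emph{inside} the counted population while keeping that population's expected count independent of $x$. The paper does this by using each sampled $r_i$ simultaneously for every input of length $n$. Let $S_r=\{(z,j)\in\bool^n\times[k] : q_j(z,r)\in\SAT\}$ and $\mu_n=\Exp_r|S_r|\le k2^n$. The advice is $\lfloor \ell\mu_n\rfloor$ for $\ell=(k2^n)^2 n$ samples, which is $O(n)$ bits. Merlin sends sets $\hat S_i$ with witnesses, so $\hat S_i\subseteq S_{r_i}$, and Arthur rejects if $\sum_i|\hat S_i|<\lfloor\ell\mu_n\rfloor-\ell/6$. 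Hoeffding then bounds the total number of omissions by $\ell/3+1$ with high probability. Hence at most $\ell/3+1$ of the runs $(x,r_i)$ see a wrong oracle answer, and the majority vote over the $\ell$ runs on $x$ of the amplified machine is still correct. Arthur runs in time $2^{O(n)}$, since he simulates all $2^n$ inputs per sample.

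No hiding is needed: the omission budget limits Merlin wherever he places his cheating, so your worry about queries revealing $x$ is beside the point. Your random-$z_i$ sampling can be repaired in the same spirit. Count over pairs $(z_i,r_i)$ with $z_i$ uniform, and decide by majority over the indices with $z_i=x$. Then take $k=2^{O(n)}$ large enough that the Hoeffding slack $O(Q\sqrt{kn})$, with $Q$ the maximum number of queries, is well below the roughly $k/2^n$ such indices.
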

\begin{proof}

    Consider $L \in \BPE^{\|\NP}$, witnessed by a randomized machine $M^{\|\SAT}$ running in time
    $t = 2^{O(n)}$ on inputs of length $n$. Without loss of generality, assume $M^{\|\SAT}$ has failure probability $o(1)$. On input $x \in \bool^n$ with randomness $r \sim \mathcal{U}_t$, $M$ generates queries $q_1(x,r),\ldots,q_k(x,r)$ with $k \leq t$,
    receives oracle answers, and accepts or rejects after some $t$-time computation. For each
    $(x, r)$, the correct oracle answers are fixed values $b_1(x,r),\ldots,b_k(x,r) \in \bool$.

    For each $r \in \bool^t$, define
    \[S_{r} := \left\{(x, j) \in \bool^n \times [k] \mid b_j(x,r) = 1\right\}.\] 
    Let $k' \vcentcolon= k\cdot 2^n$ and
    \[
        \mu_n \;\vcentcolon= \; \underset{r \sim \mathcal{U}_t}{\mathbb{E}}\bigl[|S_r|\bigr] \in [0, k'],
    \]
    the expected number of ``yes'' queries on the whole truth-table over $r \in \bool^t$. Setting
    $\ell \vcentcolon= (k')^2 \cdot n$, we take the advice string to be the integer
    \[
        \alpha_n \;\vcentcolon= \; \lfloor \ell \cdot \mu_n \rfloor \ \in\ [0,\ \ell \cdot k'],
    \]
    so that $\ell\mu_n - 1 < \alpha_n \le \ell\mu_n$. Since $\ell \cdot k' \le 2^{O(n)}$, the advice
    $\alpha_n$ can be represented with $O(n)$ bits. 
    
    We now describe an Arthur-Merlin protocol with advice $\alpha_n$ to decide $L$ on inputs of length $n$. 
    \begin{quote}
        On input $x^*$, Arthur samples $\ell = 2^{O(n)}$ independent strings $r_1, \ldots, r_\ell \sim \mathcal{U}_t$ and sends them to Merlin.
        \\
        \\For each $i \in [\ell]$, Merlin sends:
        \begin{itemize}
            \item a set $\hat{S}_{i} \subseteq \bool^n \times [k]$, claimed to be the set of ``yes'' queries of $M^{\parallel \SAT}$ on randomness $r_i$ 
            (i.e., $S_{r_i}$ as defined above); 
            \item for each $(x, j) \in \hat{S}_{i}$, a witness $w_{x,i,j} \in \bool^{\leq t}$ for the $\SAT$ instance $q_j(x, r_i)$.
        \end{itemize}
        Arthur verifies Merlin's message in the following steps:
        \begin{itemize}
            \item for each $i \in [\ell]$, and $(x, j) \in \hat{S}_{i}$, verify that $w_{x, i,j}$ is a valid witness for $q_j(x, r_i)$, and reject if any is not;
            \item compute the total number of ``yes'' queries over all $r_i$,
            \[
                X := \sum_{i=1}^{\ell}|\hat{S}_{i}|,
            \]
            and reject unless
            \[
                \alpha_n - X \leq \ell/6;
            \]
            
            \item accept iff
            \[
                \frac{1}{\ell}\cdot\bigl|\bigl\{i \in [\ell] \;\big|\; M^{\hat{S}_{x^*,i}}(x^*, r_i)
                \text{ accepts } \bigr\}\bigr| \;\geq\; \frac{1}{2},
            \]
            where $\hat{S}_{x^*, i} := \{(x, j) \in \hat{S}_{i} \mid x = x^*\}$ and $M^{\hat{S}_{x^*,i}}$ denotes the computation of $M^{(-)}$ with oracle queries answered according to $\hat{S}_{x^*,i}$.
        \end{itemize}   
    \end{quote}
    Note that the size of Merlin's message is at most $2^{O(n)} \cdot \ell \cdot k \cdot t = 2^{O(n)}$. Also note that Arthur runs in time at most $2^{O(n)}$.

    Now we argue for the correctness of the protocol.

    \paragraph{Completeness.} Suppose $x^* \in L$ and Merlin is honest, i.e., $\hat{S}_i = S_{r_i}$ and all witnesses sent by Merlin are correct. Clearly, the first step of Arthur's verification passes. Further, since $\alpha_n \le \ell\mu_n$, we have $\alpha_n - X \le \ell\mu_n - X$, and hence by
    Hoeffding's Inequality,
    \begin{align*}
        \Pr_{r_1, ..., r_{\ell} \sim \mathcal{U}_t}\left[ \; \alpha_n - X > \ell/6 \; \right]
        &\leq \Pr_{r_1, ..., r_{\ell} \sim \mathcal{U}_t}\left[ \; \ell\mu_n - X > \ell/6 \; \right]
        \\&\leq \exp\left(-\frac{2\cdot(\ell/6)^2}{\ell\cdot(k')^2}\right)
        \\&\leq \exp(-n/18).
    \end{align*}
    So, with probability $1 - o(1)$, the second step of Arthur's verification passes.

    Since $x^* \in L$, the honest computation accepts $x^*$ on all but a $o(1)$-fraction of random $r$. Then, by a Chernoff bound, letting $A_i$ be an indicator random variable for the event that $M^{\parallel \SAT}(x^*,r_i)$ accepts, we have
    \begin{align}\label{eq:honest-accept}
        \Pr\!\left[\frac{1}{\ell}\sum_{i=1}^{\ell}
       A_i \leq \frac{9}{10}\right] \;\leq\; \exp(-\Omega(n)).
    \end{align}
    In the case that $\frac{1}{\ell}\sum_{i=1}^{\ell}
       A_i > \frac{9}{10}$, with all queries answered honestly, the third verification step passes. Overall, the protocol accepts with probability $1 - o(1)$.

    \paragraph{Soundness.}
    Suppose $x^* \notin L$. We show that for any Merlin strategy, Arthur rejects with
    probability $1 - o(1)$.
    
    Since Merlin must provide a valid witness for every $(x,j) \in \hat{S}_i$, the
    first verification step enforces $\hat{S}_i \subseteq S_{r_i}$ for all $i$, so
    Merlin can only answer such that 
    \[X \leq \sum_{i=1}^{\ell} |S_{r_i}| =: Y.\]
    By Hoeffding's inequality, with probability $1-\exp(-\Omega(n))$ we have
    $Y \leq \ell\mu_n + \ell/6 < \alpha_n + 1 + \ell/6$.
    In this event, if Merlin omits more than $\ell/3 + 1$ yes-queries (i.e.\ $Y - X > \ell/3 + 1$), then
    $\alpha_n - X > \ell/6$, and Arthur rejects in the second step.

    Therefore, supposing Merlin passes the second step, he must omit at most $\ell/3 + 1$ total ``yes'' queries, meaning in particular that $\hat{S}_{x^*,i} \neq S_{x^*,r_i}$ for at most $\ell/3 + 1$ indices $i$. On the remaining indices, $M^{\hat{S}_{x^*,i}}(x^*,r_i)$ agrees with the honest computation of $M^{\parallel \SAT}(x^*,r_i)$. With $x^*\notin L$, arguing as in \Cref{eq:honest-accept}, we have
    \[
        \Pr\!\left[\frac{1}{\ell}\sum_{i=1}^{\ell}
       A_i
        \;\geq\; \frac{1}{10}\right] \;\leq\; \exp(-\Omega(n)).
    \]
    Hence, with probability $1-\exp(-\Omega(n))$, the number of accepting indices in the third
    step is at most $\ell/3 + 1 + \ell/10 < \ell/2$, in which case Arthur rejects. 

    This concludes the proof.
        
\end{proof}

    We will need the following randomized analogue of \Cref{lem:e-to-gapmnktp-nonunif}.

\begin{lemma}[Nonuniform truth-table reduction for $\rnKt$]
    \label{lem:bpe-to-gapmnrktp-nonunif}
    For every constant $\delta\in(0,1)$ and language
    $L\in \BPE^{\parallel\NP}$, $L$ has a nonuniform polynomial-time truth-table reduction to
    $\gapmrnktp[n^\delta,n-1]$. Equivalently,
    \[
    \BPE^{\parallel\NP} \subseteq \P^{\gapmrnktp[n^\delta,n-1]}/\poly,
    \]
    where all oracle queries to $\gapmrnktp[n^\delta,n-1]$ are nonadaptive. If in addition we have
    $$
    \gapmrnktp[n^\delta,n-1] \in \pr \BPTIME[T]^{\parallel \NP},
    $$
    then 
    $$
    \BPE^{\parallel\NP} \subseteq \DTIME[\poly(T(\poly))]^{\parallel \NP}/\poly(T(\poly)).
    $$
\end{lemma}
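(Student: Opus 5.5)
The plan is to mirror the proof of \Cref{lem:e-to-gapmnktp-nonunif}, replacing the single-valued nondeterministic construction of the truth table by a randomized one. Fix $L\in\BPE^{\parallel\NP}$ and $\delta\in(0,1)$. Apply \Cref{t:BFNW93} with a constant $\varepsilon$ and a large $b$ with $b\delta>\varepsilon$, obtaining $\delta'$, $a$, $F$. Set $N=\lceil n^{1/\delta'}\rceil$ and $m=N^b$. Let $f$ be $L_n$ padded to $N^{\delta'}$ inputs, and put $G_f(z)=F(\tt(f),z)\in\bool^m$. The plan has three steps: show every output $G_f(z)$ has small $\rnKt$; show uniform strings have near-maximal $\rnKt$ with probability at least $1/2$; then use the nonadaptive reconstruction circuit $R$ from \Cref{t:BFNW93} as advice. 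The last two steps are exactly as in the $\nKt$ case, since the counting bound \Cref{l:incomp-strings} applies to $\rnK^t$ (hence to $\rnKt$). This gives $\BPE^{\parallel\NP}\subseteq\P^{\gapmrnktp[n^\delta,n-1]}/\poly$ with nonadaptive queries.

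The main obstacle is the first step: bounding $\rnKt(G_f(z))\le m^\delta$. The difficulty is that $\tt(f)$ must be produced correctly on \emph{all} $2^n$ inputs simultaneously, with probability $\ge 2/3$ over a single random string, while the $\NP$ queries are eliminated through nondeterminism. My first approach is as follows. Amplify the $\BPE^{\parallel\NP}$ machine $M$ by majority over $O(n)$ independent runs, so its error on each input is below $2^{-2n}$. A union bound then gives that, with probability $1-2^{-n}$ over one random string $r$, running $M$ on all inputs with randomness $r$ yields $\tt(f)$.

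The oracle answers are still the problem, because the number $p_r$ of positive queries depends on $r$, so it cannot simply be hardwired as in \Cref{lem:nk-leq-ksat-superscript}. I see two routes. The first follows \Cref{l:rKt--parallel-NP}: guess $p_r$ from $O(n)$ extra random bits, which gives success probability $2^{-O(n)}$, and then amplify back to $2/3$ with \Cref{l:success-amp-rnK}. That costs only $O(n+\log t)$ extra description bits and $2^{O(n)}$ time. The second follows the average-count trick of \Cref{lem:bpe-np-ame-n}: hardwire $\lfloor\ell\mu_n\rfloor$, sample $\ell$ copies of the randomness, require the guessed witnesses to meet the threshold, and argue that few queries are answered wrongly. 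Majority voting then tolerates those errors. I would try the first route, since both lemmas it needs are already available. This yields $\rnKt(G_f(z))\le |z|+O(n)+O(\log 2^{O(n)})=O(N^\varepsilon)\le m^\delta$.

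For the ``if in addition'' part, suppose $\gapmrnktp[n^\delta,n-1]$ is decided by a $\pr\BPTIME[T]^{\parallel\NP}$ algorithm. Amplify it so that the error on each of the $\poly(n)$ nonadaptive queries is at most $2^{-\poly(n)}$. Then substitute it into the reduction and fix a good random string for each length as nonuniform advice (Adleman's trick). The reduction's queries and the algorithm's $\NP$ queries are all issued in one parallel round, so the result is a deterministic $\poly(T(\poly))$-time machine with parallel $\NP$ access. Its advice consists of $R$ together with the fixed random string, of total length $\poly(T(\poly))$. This gives $\BPE^{\parallel\NP}\subseteq\DTIME[\poly(T(\poly))]^{\parallel\NP}/\poly(T(\poly))$.
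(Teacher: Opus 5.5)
Your proposal is correct. The reconstruction step and the ``in addition'' part (amplify, fix a good random string as advice, merge all $\NP$ queries into one parallel round) match the paper. You differ from the paper in the key step, the bound on $\rnKt(G_f(z))$.

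The paper never works with the $\BPE^{\parallel\NP}$ machine directly. It first proves $\BPE^{\parallel\NP}\subseteq\AM\E/O(n)$ (\Cref{lem:bpe-np-ame-n}), using the average-count/Hoeffding argument that you list as your second route. It then amplifies the Arthur--Merlin verifier to error $2^{-2n}$, so that a single random string is good for all $2^n$ inputs at once, and hardwires the number $|L_n|$ of yes-instances. Requiring exactly $|L_n|$ verified witnesses makes every non-$\bot$ path correct. This yields a single-valued construction of $\tt(f)$ with success probability above $2/3$ outright, so no amplification is needed.

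Your first route instead guesses the $r$-dependent count $p_r$ of positive oracle answers using $O(n)$ random bits, as in \Cref{l:rKt--parallel-NP}. A wrong guess only makes the success event fail, which is harmless. You then apply \Cref{l:success-amp-rnK} starting from success probability $2^{-O(n)}$. This is affordable only because the resulting overheads, $\poly(1/\delta)=2^{O(n)}$ in time and $O(\log(1/\delta))=O(n)$ in description, are absorbed by the $O(N^{\varepsilon})$ budget.

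Your route is more direct: it reuses lemmas already in the paper and avoids the separate $\AM\E/O(n)$ inclusion. The paper's route gives the more general statement $\AM\E/O(n)\subseteq\P^{\gapmrnktp[n^\delta,n-1]}/\poly$, with a cleaner single-valued simulation that needs no hashing-based amplification. Both routes ultimately rest on the same frequency-counting idea, since \Cref{l:success-amp-rnK} itself uses a Hamming-weight threshold.
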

\begin{proof}[Proof of \Cref{lem:bpe-to-gapmnrktp-nonunif}]

    We first argue that $\AM\E/O(n) \subseteq \P^{\gapmrnktp[n^{\delta},n-1]}/\poly$, following along the same lines as the proof of \Cref{lem:enp-tt}. Fix a language $L \in \AM\E/O(n)$.
    
    \medskip\noindent
    \emph{Define a generator from the truth table.}
    Fix a constant $\varepsilon\in(0,1)$ and let $b\in\nat$ be large enough that $b\delta>\varepsilon$, where $\delta$ is the constant of the statement. Let $\delta'<\varepsilon$, $a$, and $F$ be as guaranteed by \Cref{t:BFNW93} applied to $\varepsilon$ and $b$. Set $N \eqdef \lceil n^{1/\delta'}\rceil$, so that $N^{\delta'}\ge n$, and let $f\colon\{0,1\}^{N^{\delta'}}\to\{0,1\}$ be the function $L_n$, padded to ignore all but its last $n$ input bits. Set $m \eqdef N^b$, so that $m = n^{d}$ with $d \eqdef b/\delta'$, and define the generator
    \[
    G_f(z)\ \eqdef\ F(\tt(f),z)\in\{0,1\}^{m},
    \qquad z\in\{0,1\}^{N^{\varepsilon}}.
    \]
    
    \medskip\noindent
	\emph{Generator outputs are $\rnKt$-easy.}
    Fix any seed $z$. Without loss of generality, we may assume that there is a non-uniform verifier $V_L$ for $L$ with failure probability at most $2^{-2n}$. By a union bound, this implies that with probability greater than $2/3$ over $r \sim \mathcal{U}_{2^{O(n)}}$, for all $x \in \bool^n$,
    \[x \in L \iff \exists w \in \bool^{2^{O(n)}},\ V_L(x, r, w) = 1.\]
    We describe a randomized nondeterministic program that outputs $G_f(z)$. The program hardwires $z$ (of length $N^{\varepsilon}$), the non-uniformity of $V_L$, and the number $k \in [2^n]$ of inputs $x \in \bool^n$ with $L(x)=1$ (both of length $O(n)$). Given randomness $r$ as above, it nondeterministically guesses a set $Y \subseteq \bool^n$ of size $k$ together with witnesses $w_x \in \bool^{2^{O(n)}}$ for each $x \in Y$, and verifies that $V_L(x,r,w_x)=1$ for every $x \in Y$; if any verification fails, it outputs $\bot$. Observe that, since $|Y| = k = |L_n|$, every nondeterministic path not resulting in $\bot$ correctly classifies all yes- and no-instances of $L_n$; this yields a single-valued nondeterministic construction of $\tt(f)$. The program then computes $F(\tt(f),z)$, taking time $\poly(2^{N^{\delta'}})=\poly(2^{n})$ by \Cref{t:BFNW93}.

    The total running time is $t \le 2^{O(n)}$, and the description length is $|z| + O(n) + O(1) \le N^{\varepsilon} + O(n)$. Hence,
    \[
    \rnKt(G_f(z))\ \le\ N^{\varepsilon} + O(n) + O(\log t)\ =\ N^{\varepsilon} + O(N^{\delta'})\ =\ O(N^{\varepsilon}).
    \]
    By the choice of $b$, we get $\rnKt(G_f(z)) \le m^{\delta}$ for all sufficiently large $n$, so $G_f(z)$ is a yes-instance of $\gapmrnktp[m^{\delta},m-1]$ for every $z$.

    \medskip\noindent
    \emph{Uniform strings are $\rnKt$-hard with constant probability.}
    By a counting argument as in \Cref{l:incomp-strings}, a uniformly random $y \sim \mathcal{U}_m$ satisfies $\rnKt(y) \ge m-1$ with probability at least $1/2$. Therefore, any language $D$ consistent with the promise problem $\gapmrnktp[m^{\delta},m-1]$ breaks $G_f$ with advantage at least $1/2 \geq N^{-b}$, meeting the hypothesis of \Cref{t:BFNW93}.

    \medskip\noindent
    \emph{Reconstruction yields a non-uniform reduction.}
    By \Cref{t:BFNW93}, there is an oracle circuit $R^{(\text{-})}$ of size $N^{a}=\poly(n)$, making nonadaptive queries to $D$, such that $R^{D}(x)=f(x)$ for every $x \in \{0,1\}^{N^{\delta'}}$. Hardwiring $R$ as nonuniform advice and restricting $f$ to $L_n$ yields a nonuniform polynomial-time truth-table reduction from $L$ to $\gapmrnktp[m^{\delta},m-1]$, with queries of length $m=\poly(n)$ depending only on the input and the advice. In particular, this holds for any language $D$ consistent with $\gapmrnktp[m^{\delta},m-1]$; applying \Cref{lem:bpe-np-ame-n}, we obtain
    \[\BPE^{\parallel \NP} \subseteq \AM\E/O(n) \subseteq \P^{\parallel D}/\poly\]
    for any such $D$.

    For the ``in addition'' part of the lemma, suppose $\gapmrnktp[m^\delta,m-1] \in \pr \BPTIME[T]^{\parallel \NP}$, witnessed by a randomised oracle algorithm $A^{\parallel \NP}$ running in time $T$. Without loss of generality, we may assume that $A^{\parallel \NP}$ has failure probability at most $2^{-2m}$ on inputs of length $m$. This implies that with probability $1 - 2^{-\Omega(n)}$ over the internal randomness $r$ of $A^{\parallel \NP}$, $A^{\parallel \NP}(-; r)$ is consistent with $\gapmrnktp[m^{\delta}, m-1]$ on every input of length $m = \poly(n)$. Fix some good choice of randomness, $r^* \in \bool^{\poly(T(m))}$, and apply the reduction above to $D(-) \eqdef A^{\parallel \NP}(-; r^*)$. 
    
    We may then simulate any ${\sf TIME}[\poly(n)]^{\parallel \gapmrnktp[m^{\delta}, m-1]}/\poly(n)$ procedure by answering all queries via $A^{\parallel \NP}(-; r^*)$. Since all oracle queries are non-adaptive, we obtain
    $$
    \BPE^{\parallel\NP} \subseteq \DTIME[\poly(T(\poly))]^{|| \NP}/\poly(T(\poly))
    $$
    as desired.

\end{proof}

 We are now ready to describe the proof of \Cref{thm:soi-to-bpexpnp-collapse}.

\begin{proof}[Proof Sketch of \Cref{thm:soi-to-bpexpnp-collapse}]
The proof closely follows that of \Cref{thm:soi-to-expnp-collapse}.

Assuming SoI holds for $\rnKt$ with logarithmic error, by \Cref{l:SoI-rnKt} we obtain that $\gapmrnktp[n-O(\log n), n-1]\in \pr\BPP^{\parallel\NP}$, which further implies that for some constant $\delta>0$,
\[
\gapmrnktp[n^\delta,n-1]\ \in\ \pr\BPP^{\parallel\NP}.
\]
Next, by \Cref{lem:bpe-to-gapmnrktp-nonunif}, every language in $\BPE^{\parallel\NP}$ admits a nonuniform polynomial-time truth-table reduction to $\gapmrnktp[n^\delta,n-1]$. Therefore, we obtain
\[
\BPE^{\parallel\NP}\subseteq\P^{\parallel\NP}/\poly .
\]
Applying \Cref{cor:su-collapse-BPE}, we conclude that
\[
\BPE^{\parallel\NP}\ \subseteq\ \NP/\poly \cap \coNP/\poly.
\]
Finally, by a standard padding argument, the above yields
\[
\BPEXP^{\parallel\NP}\ \subseteq\ \NP/\poly \cap \coNP/\poly,
\]
as desired.
\end{proof}

\section{Refined Meta-Complexity Landscape Under  SoI: Proof of \texorpdfstring{\Cref{thm:consequences_SoI}}{}}\label{sec:meta_SoI}

\subsection{Nonuniform Upper Bound}

The proof of \Cref{thm:consequences_SoI} Item (\emph{i}) follows from a combination of results, as explained next.

\begin{proof}[Proof of \Cref{thm:consequences_SoI} Item \emph{(}i\emph{)}]
    Suppose that SoI with error $e(n) = O(\log n)$ holds for $\rnKt$. Then, by \Cref{thm:equivalence_SoI}, it follows that
    $$
\gapmrnktp[n- O(\log n), n-1] \in \pr \BPP^{||\NP}.
    $$
    Let $M(x)$ be a $\pr \BPP^{||\NP}$-machine that agrees with $\gapmrnktp[n- O(\log n), n-1]$. Since $M$ runs in polynomial time, every oracle query made by $M$ is of length $\poly(n)$, and the random string employed by $M$ is also of length $\poly(n)$. Consequently, we can deterministically compute the acceptance probability of $M$ on any given input in exponential time. In particular, this implies the existence of a language $L \in \EXP$ that agrees with the promise problem $\gapmrnktp[n- O(\log n), n-1]$.

    Under the initial SoI assumption, we can apply \Cref{thm:soi-to-bpexpnp-collapse} to $L$, which gives the upper bound $L \in \NP/\poly \cap \coNP/\poly$. In other words,
    $$
\gapmrnktp[n- O(\log n), n-1] \in \NP/\poly \cap \coNP/\poly.
    $$
    This completes the proof.
\end{proof}

\subsection{Uniform Lower Bound}

We will use the following result about $\MKtP$ by \citep{DBLP:journals/siamcomp/AllenderBKMR06}:
\begin{lemma}[\citep{DBLP:journals/siamcomp/AllenderBKMR06}]\label{lem:EXP-in-NP-MKtP}
    $\EXP\subseteq\NP^{\MKtP}$.
\end{lemma}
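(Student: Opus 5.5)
The plan is to follow the standard \citep{DBLP:journals/siamcomp/AllenderBKMR06} approach. First reduce $\EXP$ to the $\MKtP$-hardness of a PRG built from an $\EXP$-complete function, then use a nondeterministic guess of the oracle answers. Fix an $\EXP$-complete language $L$, computable in time $2^{n^k}$, and an input length $n$. Let $f=L_n$, padded as in \Cref{t:BFNW93}, and form the generator $G_f(z)=F(\tt(f),z)$ with seed length $N^{\varepsilon}$ and output length $m=N^b=\poly(n)$.

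The first step is to bound the $\Kt$ complexity of every output. It is given by $z$ together with the uniform procedure that computes $\tt(f)$ in time $2^{\poly(n)}$ and then applies $F$. Hence $\Kt(G_f(z))\le N^{\varepsilon}+O(\log n)+\poly(n)^{\delta'}$-type terms, which stays below $m^{\delta}$ for a suitable choice of $b$. On the other side, at least half of the strings $w\in\{0,1\}^m$ have $\Kt(w)\ge m-1$ by \Cref{l:incomp-strings}. Therefore the oracle $D(w)\eqdef[(w,1^{m^\delta})\notin\MKtP]$ distinguishes $G_f$ from uniform with advantage at least $1/2$. \Cref{t:BFNW93} then gives a reconstruction circuit $R^{D}$ of size $n^{a}$, making nonadaptive queries, that computes $f$. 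This is the only point where nonuniformity enters.

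The second step removes the nonuniform advice using the $\NP$ machine. On input $x$, it guesses a candidate circuit $R$ of size $n^{a}$ and queries $\MKtP$ (the oracle is allowed) to evaluate $R^{D}$ on all relevant inputs. The difficulty is verifying that $R^{D}$ really computes $L_n$. Following \citep{DBLP:journals/siamcomp/AllenderBKMR06}, I would instantiate $L$ with an $\EXP$-complete language that has a downward self-reducible or instance-checkable structure. Concretely, I would take the low-degree extension of a $\PSPACE$/$\EXP$-complete function, or use the $\EXP\subseteq\MIP$ verifier of \cite{DBLP:journals/cc/BabaiFNW93}, in which the oracle $R^{D}$ plays the role of the provers. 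The $\NP^{\MKtP}$ machine guesses $R$, checks consistency of $R^{D}$ on $\poly(n)$ deterministic points via the $\mathsf{MIP}$ verifier (derandomized by enumeration, since the verifier uses $O(\log)$-length randomness after padding, or by brute force over polynomially many checks), and accepts according to $R^{D}(x)$. Soundness of the checker ensures that any accepted $R$ yields the correct answer.

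The hardest part is this verification step: making the check of the guessed reconstruction circuit run in nondeterministic polynomial time with only $\MKtP$ queries. A randomized verifier would give only $\NP^{\MKtP}$ with randomness. My first attempt would be to exploit that $\MKtP$ is hard enough to derandomize the checker. The same PRG argument applied to an appropriate hard function yields a pseudorandom generator against the checker that is computable with $\MKtP$ queries. This replaces the checker's random coins by an enumeration over polynomially many seeds, and gives $\EXP\subseteq\NP^{\MKtP}$.
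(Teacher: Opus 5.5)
Your first step is fine. The generator built from an $\EXP$-complete function has outputs of small $\Kt$, so $D(w)=[(w,1^{m^\delta})\notin\MKtP]$ breaks it, and \Cref{t:BFNW93} gives $\EXP\subseteq\P^{\MKtP}/\poly$. Guessing the reconstruction circuit and running an instance checker for an instance-checkable $\EXP$-complete $L$ is also the right way to remove the advice. (If you use the $\mathsf{MIP}$ verifier instead, the first step must be applied to the honest provers' function, not to $L_n$.)

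The gap is the derandomization of the checker, which is the real content of the lemma and which you leave open. Your concrete suggestion fails. The instance-checker or $\mathsf{MIP}$ verifier for $\EXP$ uses $\poly(n)$ coins, and padding does not reduce this to $O(\log n)$. Testing a guessed $R^D$ on polynomially many fixed points gives no soundness against an adversarially guessed $R$.

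Your fallback, ``the same PRG argument applied to an appropriate hard function'', cannot use an $\EXP$ function either. With $x$ and $R$ fixed, the checker is a polynomial-size $\MKtP$-oracle circuit in its coins. Your own first step shows that every $\EXP$ language has polynomial-size $\MKtP$-oracle circuits, so no such function is hard against the tests you must fool. Hardness of $\MKtP$ itself is likewise useless, since $\MKtP$-oracle circuits compute it trivially.

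The missing idea is to use nondeterminism a second time to obtain the hard function, with the oracle only certifying it. The machine guesses $y\in\bool^{n^{c'}}$ and checks with one query that $\Kt(y)\ge |y|/2$. Such $y$ exist by \Cref{l:incomp-strings}, and every $y$ that passes is genuinely complex. Because $\MKtP$ on queries of length $q$ is decidable in time $2^{O(q)}$, an $\MKtP$-oracle circuit of size $s$ computing $y$ (as a truth table on $c'\log n$ bits) gives $\Kt(y)\le O(s\log s+\log|y|)$. Hence $y$ has $\MKtP$-oracle circuit complexity $\Omega(|y|/\log|y|)$.

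Plugging $y$ into the relativized generator of \Cref{t:IW97-PRG}, with oracle $\MKtP$, gives $O(\log n)$-bit seeds that fool every $\MKtP$-oracle circuit of size $n^{c}$. In particular they fool the checker for every guessed $R$ simultaneously. The machine enumerates all seeds and accepts iff the majority of checker runs output $1$. Completeness uses the correct $R$ from your first step together with any certified $y$; soundness holds for every $R$ and every certified $y$.

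So both the circuit $R$ and the pseudorandomness source $y$ are supplied by the nondeterminism. This guess-and-certify use of a random string is the ingredient the standard argument behind \citep{DBLP:journals/siamcomp/AllenderBKMR06} relies on, and your proposal does not identify it.
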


Additionally, we will use the following simple separation.

\begin{lemma}\label{lem:diag_ee}
    $\EE \nsubseteq \DTIME[n^{\poly(\log n)}]^{\parallel\NP}/n^{\poly(\log n)}$.
\end{lemma}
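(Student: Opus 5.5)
We plan a direct diagonalization, with a lexicographic "greedy" construction computed in double-exponential time. The key point is that a machine in $\DTIME[n^{\poly(\log n)}]^{\parallel\NP}/n^{\poly(\log n)}$ can be simulated deterministically, together with its $\NP$ oracle, in time $2^{n^{\poly(\log n)}}$. That is well within $2^{2^{O(n)}}$ for any fixed polylog exponent. The advice is also short compared to the number of input strings of length $n$.

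We will use a counting/truth-table argument rather than a classical diagonal language. Fix the time bound $T(n)=n^{\log^k n}$ and advice length $a(n)=n^{\log^k n}$ for a slowly growing $k$; to cover the whole union over $\poly(\log n)$ we let $k=k(n)\to\infty$, say $k=\log n$. Then $T(n)=2^{\log^{k+1} n}$ is still $2^{o(2^n)}$ and $a(n)\ll 2^n$. Consider the restriction of the class at length $n$. It consists of pairs (machine $M$ among the first $\log n$ machines, advice $\alpha\in\{0,1\}^{a(n)}$) with a $\SAT$ oracle, each computing a function $\{0,1\}^n\to\{0,1\}$. We only need to fool such functions on the first $m=a(n)+\log n+1\le 2^n$ inputs of length $n$. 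The number of candidate pairs is at most $\log n\cdot 2^{a(n)}<2^{m}$, so some pattern $b\in\{0,1\}^m$ differs from the restriction of every candidate. Define $L$ on length $n$ by $L(x_i)=b^*_i$ for $i\le m$, where $b^*$ is the lexicographically first such pattern, and $L(x)=0$ elsewhere. Every fixed machine from the class with time $n^{\poly(\log n)}$ and advice $n^{\poly(\log n)}$ is eventually among the enumerated ones and within the bounds. It therefore disagrees with $L$ at all sufficiently large lengths.

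Next we show $L\in\EE$, i.e., that $L$ is decidable in time $2^{2^{O(n)}}$. On input $x$ of length $n$, enumerate all $2^m$ patterns $b$ in lexicographic order and all candidate pairs $(M,\alpha)$. For each pair and each of the $m$ inputs $x_i$, simulate $M(x_i;\alpha)$ for $T(n)$ steps. Answer its parallel $\SAT$ queries (each of length $\le T(n)$) by brute force in time $2^{O(T(n))}$. The total cost is $2^m\cdot\log n\cdot 2^{a(n)}\cdot m\cdot 2^{O(T(n))}=2^{2^{O(n)}}$, since $m,a(n),T(n)\le 2^{n}$ for large $n$. Finally, output the appropriate bit of $b^*$.

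The main subtlety, and the step we expect to need the most care, is handling the whole union $n^{\poly(\log n)}$ with a single language. A fixed polylog exponent would only defeat one level. We resolve this by letting the exponent $k(n)$ and the number of enumerated machines grow slowly, and by checking that the resulting bounds remain $\le 2^{n}$ so that the simulation stays in $\EE$. We also need clocked simulation for time constructibility, and a convention that advice shorter than $a(n)$ is padded. Beyond these points the argument is a routine counting diagonalization.
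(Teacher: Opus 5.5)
Your strategy is essentially the paper's: a counting diagonalization against all (machine, advice) pairs, with the $\SAT$ oracle answered by brute force and the lexicographically first unmatched pattern computed in double-exponential time. The one concrete step that fails is your choice $k(n)=\log n$.

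With $k=\log n$ you get $\log T(n)=(\log n)^{\log n+1}=n^{\log\log n}\log n$, because $(\log n)^{\log n}=2^{\log n\cdot\log\log n}=n^{\log\log n}$. So $T(n)=a(n)\ge 2^{n^{\log\log n}}$, which is far above $2^n$. Your claims ``$a(n)\ll 2^n$'' and ``$m,a(n),T(n)\le 2^n$ for large $n$'' are therefore false.

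This is not just bookkeeping. With advice of length $a(n)\ge 2^n$ and time $T(n)\ge 2^{2n}$, a fixed lookup machine computes every function on $\bool^n$: you hardwire its truth table as advice. So no diagonalizing pattern exists at all. There are also not even $m=a(n)+\log n+1$ inputs of length $n$. Finally, brute-forcing $\SAT$ queries of length $T(n)$ costs $2^{\Omega(T(n))}$, far outside $\EE$.

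The fact you cite, $T(n)=2^{o(2^n)}$, is true but irrelevant. What you need is $(\log n)^{k(n)+1}=O(n)$, i.e.\ $k(n)\le \log n/\log\log n-O(1)$. So ``the number of candidate pairs is less than $2^m$'' and ``the total cost is $2^{2^{O(n)}}$'' both fail as written.

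The repair is easy. One option is $k(n)=\lfloor\sqrt{\log n}\rfloor$ or $k(n)=\log\log n$. Then $(\log n)^{k(n)+1}=n^{o(1)}$, so $T(n)=a(n)=2^{n^{o(1)}}$. Since $k(n)\to\infty$, these bounds still eventually dominate $\poly(n^{(\log n)^c})$ for every fixed $c$. Counting then works, and the total time is $2^{O(T(n))}\le 2^{2^{O(n)}}$.

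The other option is the paper's: use the single bound $B=2^{\lceil n^{2/3}\rceil}$ for both running time and advice length, which avoids the growing-exponent bookkeeping entirely. This $B$ eventually exceeds $\poly(n^{(\log n)^c})$ for every $c$. The list of at most $2n\cdot 2^{B+1}<2^{2^n}$ truth tables can be built in time $2^{O(B)}$, with $\SAT$ tabulated up to length $B$.

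You should also say explicitly that an arbitrary oracle $O\in\NP$ is replaced by $\SAT$ via a fixed polynomial-time reduction. The resulting polynomial slowdown is absorbed into the time bound, as the paper does.

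With these fixes your argument is the paper's proof. Diagonalizing only on the first $m$ inputs, rather than on the full $2^n$-bit truth table, is an inessential variant.
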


\begin{proof}
    For $n\in\nat$, let $B\vcentcolon= 2^{\lceil n^{2/3}\rceil}$. For every machine $M$ of description
    length at most $\log n$ and every advice string $\alpha\in\bool^{\le B}$, let
    $\tau_{M,\alpha}\in\bool^{2^n}$ be the table whose entry at $x\in\bool^n$ is the output of
    $M^{\SAT}(x,\alpha)$ after $B$ steps (taken to be $0$ if $M$ has not halted with output $1$).
    Let $\Lambda_n$ be the list of all such tables. Note that 
    \[
        |\Lambda_n|\ \le\ 2n\cdot 2^{B+1}\ <\ 2^{2^n}.
    \]
    Let $\tau_n$ be the lexicographically first string of $\bool^{2^n}$ not in $\Lambda_n$, and set
    $L_{\mathrm{diag}}(x)\vcentcolon=\tau_{|x|}(x)$.

    To compute $\tau_n$, first tabulate the answers to all $\SAT$ queries of length at most $B$ by
    exhaustive search, in time $2^{O(B)}$. Using this table, simulate each of the at most
    $2n\cdot2^{B+1}$ pairs $(M,\alpha)$ on all $2^n$ inputs for $B$ steps, and then check the first
    $|\Lambda_n|+1$ strings of $\bool^{2^n}$ against $\Lambda_n$. The total time is $2^{O(B)}$, so
    $L_{\mathrm{diag}}\in\DTIME[2^{O(2^n)}]\subseteq\EE$.

    Now let $L\in\DTIME[n^{(\log n)^c}]^{\parallel O}/n^{(\log n)^c}$ for some constant $c$ and some
    $O\in\NP$, witnessed by a machine $M$ and advice $(\alpha_n)_{n \in \nat}$. Let $M'$ simulate $M$, replacing each
    query to $O$ by its image under a fixed polynomial-time reduction to $\SAT$. On inputs of length
    $n$, $M'$ runs in time $\poly(n^{(\log n)^c}) \leq B$. Moreover, $|M'|\le\log n$ and $|\alpha_n|\le B$. For such $n$, $\tau_{M',\alpha_n}$ is the truth
    table of $L$ at length $n$, so it lies in $\Lambda_n$ and differs from $\tau_n$. Hence $L$ and
    $L_{\mathrm{diag}}$ differ at every sufficiently large input length.
\end{proof}

We are now ready to prove \Cref{thm:consequences_SoI} Item (\emph{ii}).

\begin{proof}[Proof of \Cref{thm:consequences_SoI} Item \emph{(}ii\emph{)}]
   We consider first the case of $\rnKt$. Assume that SoI holds for $\rnKt$ with error $e(n)=\poly(\log n)$. Then by \Cref{thm:equivalence_SoI}, we have
    \[
        \gapmrnktp[n-\poly(\log n),n-1]\in\pr\mathsf{BPTIME}[n^{\poly(\log n)}]^{||\NP}.
    \]
    Then by \Cref{lem:bpe-to-gapmnrktp-nonunif} and a standard padding argument,
    \begin{equation}
        \BPEXP^{||\NP}\subseteq\mathsf{DTIME}[n^{\poly(\log n)}]^{||\NP}/n^{\poly(\log n)}.\label{eq:nonuniform-BPEXP-collapse}
    \end{equation}
    Now assume that $\MKtP\in\NTIME[n^{\poly(\log n)}]\cap\coNTIME[n^{\poly(\log n)}]$. Then by \Cref{lem:EXP-in-NP-MKtP}, we get
    \begin{align*}
        \EXP\subseteq\NP^{\MKtP}\subseteq\NP^{\NTIME[n^{\poly(\log n)}]\cap\coNTIME[n^{\poly(\log n)}]}\subseteq\NTIME[n^{\poly(\log n)}],
    \end{align*}
    where the final inclusion follows from guessing the witnesses for both the $\NP$ and $\coNP$ machine for the oracle queries. Now by a padding argument, we get
    \[
        \EE=\DTIME[2^{2^{O(n)}}]\subseteq\NTIME[2^{\poly(n)}]=\NEXP.
    \]
    By \Cref{eq:nonuniform-BPEXP-collapse}, this gives
    \[
        \EE\subseteq\NEXP\subseteq\BPEXP^{||\NP}\subseteq\mathsf{DTIME}[n^{\poly(\log n)}]^{||\NP}/n^{\poly(\log n)}.
    \]
    However, this is in contradiction to \Cref{lem:diag_ee}.

    Similarly, assume that SoI holds for $\nKt$ with error $e(n)=\poly(\log n)$. Then analogously to the previous proof, but applying \Cref{lem:e-to-gapmnktp-nonunif} instead of \Cref{lem:bpe-to-gapmnrktp-nonunif}, it follows  that 
    $$
\EXP^{|| \NP} \subseteq \DTIME[n^{\poly(\log n)}]^{|| \NP}/\poly(n).
    $$
    Since $\NEXP \subseteq \EXP^{|| \NP}$, it is enough to note that this is sufficient to obtain a contradiction, by continuing as in the proof for $\rnKt$ described above.
\end{proof}

\section{Asymmetry, Hardness, and Explicit Constructions: Proof of  \texorpdfstring{\Cref{thm:equivalence_SoI}}{}}

\subsection{SoI to Meta-Complexity}
In this subsection, we show that if SoI holds for $\rnKt$ (resp. $\nKt$), then approximating $\rnKt$-complexity (resp. $\nKt$-complexity) becomes easy with parallel access to an $\NP$ oracle.

\subsubsection{The Case of \texorpdfstring{$\rnKt$}{}}\label{sec:soi-hardness-rnkt}
We first establish the following using ideas from \citep{HMO26}.

\begin{lemma}[Efficient Simulation Under SoI]\label{lem:rnk-t-rnkt}

    Assume SoI holds for $\rnKt$ with error $e(n) \geq \log n$. Then, there exists a time bound $t \leq 2^{O(e(n))}$ such that for every string $x \in \bool^n$, it holds that $\rnK^t(x) \leq \rnKt(x) + O(e(n))$.
    
\end{lemma}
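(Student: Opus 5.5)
The plan is a \emph{time-compression} argument. SoI is used to trade a slow optimal description of $x$ for a fast one while paying only $O(e(n))$ in length. Fix $x\in\bool^n$ with $k\eqdef\rnKt(x)$, witnessed by a program $\Pi$ and a time bound $T$ with $|\Pi|+\lceil\log T\rceil=k$. If $\log T\le C\cdot e(n)$ for a suitable constant $C$, we are done with $t=2^{C e(n)}$: the success probability can be kept at $2/3$ by \Cref{l:success-amp-rnK} at an extra cost of $O(\log n)\le O(e(n))$. So the work is in the case $\log T> C e(n)$.

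In that case, apply SoI (in the direction $\rnKt(u,v)\ge\rnKt(u)+\rnKt(v\mid u)-e$) to the pair consisting of $\Pi$ (padded to length $n$, or to length $\poly(n)$ and rescaled) and $x$. On one hand, the joint string $(\Pi,x)$ is produced by printing $\Pi$ literally and then running it, so
\[
\rnKt(\Pi,x)\le |\Pi|+\log T+O(\log n)=k+O(\log n).
\]
On the other hand, SoI gives
\[
\rnKt(\Pi)+\rnKt(x\mid\Pi)\le k+O(e(n)),
\]
and the trivial composition bound gives $\rnKt(x)\le \rnKt(\Pi)+\rnKt(x\mid\Pi)+O(\log n)$. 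Hence all three quantities are tight up to $O(e(n))$. Next, choose the decomposition more cleverly. Instead of $\Pi$ itself, take $u$ to be the intermediate configuration of the computation of $\Pi$ after roughly $\sqrt{T}$ (or $T/2$) steps, encoded compactly together with $\Pi$. Apply SoI to $(u,x)$: both $\rnKt(u)$ and $\rnKt(x\mid u)$ involve strictly smaller time exponents. The sum of their complexities is at most $k+O(e(n))$, and each piece can in turn be recursively compressed in time.

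The main obstacle is the accounting. A naive halving recursion on $\log T$ loses $O(e(n))$ per level and therefore $O(e(n)\log n)$ overall, which is exactly the bound obtained in \citep{HMO26}. To get a single $O(e(n))$ loss, I would first try a one-shot decomposition instead of recursion. Pick $u$ to be a near-optimal $\rnK^{t}$-style description $m$ of $x$ with $t=2^{Ce(n)}$, and apply SoI once to $(x,m)$. This should yield the chain
\[
|m|\lesssim\rnKt(x,m)-\rnKt(m\mid x)+e.
\]
Combined with $\rnKt(x,m)\le\rnKt(x)+\rnKt(m\mid x)+O(\log n)$, this gives $|m|\le \rnKt(x)+O(e(n))$ directly. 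The delicate point is showing that a short $m$ with fast decoding exists at all before applying SoI. Here I would use a counting/pigeonhole step: among programs of length $k+O(e)$ and time $2^{Ce}$, the set of strings they output must include $x$. The argument for this is by contradiction: otherwise $x$ lies in a set that is efficiently enumerable from $k$, which SoI converts into a description of $x$ shorter than $k$. Making this last contradiction rigorous is where I expect most of the effort. In particular, the randomness in $\rnK$ has to be handled via \Cref{l:success-amp-rnK}, and single-valuedness of the nondeterministic programs has to be preserved.
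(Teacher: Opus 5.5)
\paragraph{Where the proposal breaks down.} There is a genuine gap: none of your decompositions makes SoI constrain the running time. The split $(\Pi,x)$ gives no information. Indeed, $\rnKt(\Pi)\le|\Pi|+O(\log n)$ and $\rnKt(x\mid\Pi)\le\log T+O(\log n)$ are consistent with SoI for every $T$, so the fact that the three quantities are tight says nothing about $T$. The intermediate-configuration split fails more basically. A configuration after $T/2$ steps can have length about $T=2^{\Omega(n)}$, so the SoI error becomes $e(T)$, which is already $\Omega(n)$ when $e(n)=O(\log n)$. Moreover, with randomness and nondeterminism there is no canonical configuration to describe. The one-shot step is circular. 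A description $m$ with $|m|\le\rnKt(x)+O(e(n))$ and decoding time $2^{Ce(n)}$ is exactly what the lemma asserts. The inequality $|m|\lesssim\rnKt(x,m)-\rnKt(m\mid x)+e$ would need $m$ to be $\rnKt$-incompressible, and that fails precisely when the lemma fails. (\Cref{l:SoI-rnKt} runs this computation in the opposite direction, using the lemma.) Your pigeonhole contradiction also has no mechanism. The strings not output by short fast programs form a huge set, and it is not efficiently enumerable: listing $\{y:\rnKt(y)\le k\}$ means running slow programs and certifying single-valuedness.

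\paragraph{The missing idea.} Split the \emph{output} $x$, not the computation. Let $s_{[a,b]},t_{[a,b]}$ be the size and time of an optimal $\rnKt$ program for $x_{[a,b]}$ given $x_{[1,a-1]}$. Let $i$ be the least index with $t_{[1,i]}\ge 2^{e(n)+C\log n+1}$; if there is none, you are done.

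Running the two optimal programs in sequence, with \Cref{l:success-amp-rnK}, gives $\rnKt(x)\le s_{[1,i]}+s_{[i+1,n]}+\log(t_{[1,i]}+t_{[i+1,n]}+n)+C\log n$, because times \emph{add} under composition. SoI gives $\rnKt(x)\ge s_{[1,i]}+\log t_{[1,i]}+s_{[i+1,n]}+\log t_{[i+1,n]}-e(n)$, because $\rnKt$ charges $\log t$ for each piece separately. Hence $\log t_{[1,i]}+\log t_{[i+1,n]}\le\log(t_{[1,i]}+t_{[i+1,n]}+n)+e(n)+C\log n$. Since $t_{[1,i]}$ is large, this forces $t_{[i+1,n]}\le 2^{e(n)+C\log n+2}$.

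By minimality of $i$, $x_{[1,i-1]}$ also has a fast optimal program. Glue it to the fast suffix program and hardwire the bit $x_i$. This gives $\rnK^{2^{O(e(n))}}(x)\le s_{[1,i-1]}+s_{[i+1,n]}+O(\log n)\le\rnKt(x_{[1,i]})+\rnKt(x_{[i+1,n]}\mid x_{[1,i]})+O(\log n)\le\rnKt(x)+O(e(n))$, using SoI once more. There is a single split and no recursion, so the loss is one $O(e(n))$ term. All strings involved have length at most $n$, so the SoI error really is $e(n)$.
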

\begin{proof}We adapt the proof of a similar result from \citep{HMO26} stated in the context of $\rKt$ complexity.\footnote{The proof given here admits slightly better parameters, and relies on an alternative argument suggested to us by Hanlin Ren.}

Let $n=|x|$. Let the $n$ bits of $x$ be $x_1x_2\cdots x_n$. For any integers $1\le i\le j\le n$, we define $x_{[i,j]}=x_i\cdots x_j$. For each $i\le j$, we define $s_{[i,j]}$ and $t_{[i,j]}$ to be the program size and running time of an optimal $\rnKt$ program for $x_{[i,j]}$ conditioning on $x_{[1,i-1]}$. If there are multiple programs achieving the optimal bound $\rnKt(x_{[i,j]})$, we take one with the smallest running time.
    
    We let $1 \leq i \leq n$ be the smallest integer such that $t_{[1,i]}\ge2^{C\cdot \log n + 1}\cdot 2^{e(n)}$, where $C \geq 1$ is a suitable universal constant defined below. If no such $i$ exists, then the optimal $\rnKt$ program for $x$ already runs in time $2^{O(e(n))}$, and the proof is finished. 
    
    Note that we can assume without loss of generality that $1 < i < n$. Indeed, if $C$ is large enough, using that $e(n) \geq \log n$ it is easy to see that $i > 1$. The case $i = n$ can be handled by an easy modification of the argument above. 
    
    Now, by symmetry of information, we have
    \begin{align}
        \rnKt(x)&\ge\rnKt(x_{[1,i]})+\rnKt(x_{[i+1,n]}\mid x_{[1,i]})-e(n)\notag\\
        &=s_{[1,i]}+\log(t_{[1,i]})+s_{[i+1,n]}+\log(t_{i+1,n})-e(n).\label{eq:SoI-1}
    \end{align}
    But on the other hand, since we can print $x$ by simply running the programs in sequence, we get\footnote{Here we use the aforementioned properties of the encoding of programs in the definition of Kolmogorov complexity.}
    \begin{align}
        \rnKt_{(2/3)^2}(x) & \le  s_{[1,i]}+s_{[i+1,n]}+O(1) + \log(O(t_{[1,i]}+t_{[i+1,n]}+n)) \notag  \\
        & \leq  s_{[1,i]}+s_{[i+1,n]}+\log(t_{[1,i]}+t_{[i+1,n]}+n) + C,  \label{eq:SoI-2}
    \end{align}
    for some universal constant $C \geq 1$ that can depend on the computational model. Furthermore, by applying success amplification (\Cref{l:success-amp-rnK}) to \Cref{eq:SoI-2}, we get
    \begin{equation}\label{eq:SoI-3}
        \rnKt(x) \leq s_{[1,i]}+s_{[i+1,n]}+\log(t_{[1,i]}+t_{[i+1,n]}+n) + C\cdot \log n.
    \end{equation}
    
    Combining \Cref{eq:SoI-1,eq:SoI-3}, we get
    \[
        \log(t_{[1,i]}+t_{[i+1,n]}+n)\ge\log(t_{[1,i]})+\log(t_{[i+1,n]})-e(n) - C\cdot \log n.
    \]
    By taking the exponential of both sides, and using that $e(n) \geq \log n$ and $t_{[1,i]} \geq 2^{C\log n+1} \cdot 2^{e(n)}$,  we get
    \[
        t_{[i+1,n]}\le\frac{1+\frac{n}{t_{[1,i]}}}{\frac{1}{2^{e(n) + C\log n}}-\frac{1}{t_{[1,i]}}}\le\frac{2}{\frac{1}{2^{e(n) + C\log n}}-\frac{1}{t_{[1,i]}}}\le 2^{C\log n + 2}\cdot 2^{e(n)}.
    \]
    Hence by combining the programs for $x_{[1,i-1]}$ and $x_{[i+1,n]}$, and by storing the bit $x_i$ and the position $i$, we can print $x$ in (nondeterministic) time $2^{O(e(n))}$. Therefore, it follows that
    \begin{align*}
        \rnK^{2^{O(e(n))}}(x)&\le s_{[1,i-1]}+s_{[i+1,n]}+ O(\log n)\\
        &\le\rnKt(x_{[1,i-1]})+\rnKt(x_{[i+1,n]}\mid x_{[1,i]})+ O(\log n)\\
        &\le\rnKt(x_{[1,i]})+\rnKt(x_{[i+1,n]}\mid x_{[1,i]})+ O(\log n)\\
        &\le\rnKt(x)+O(e(n)),\tag{By SoI for $\rnKt$}
    \end{align*}
    as desired. 
\end{proof}

As a consequence of the above, we obtain the following lemma, which completes the proof of Item (i) $\implies$ Item (ii) in \Cref{thm:equivalence_SoI}.

\begin{lemma}\label{l:SoI-rnKt}

    Assume SoI holds for $\rnKt$ with error $e(n) \geq \log n$. Then, $\gapmrnktp[s_1, s_2]$ is contained in $\pr \mathsf{BPTIME}[2^{O(e(n))}]^{||\NP}$, where $s_1(n) = n - O(e(n))$ and $s_2(n) = n-1$.
    
\end{lemma}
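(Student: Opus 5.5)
The plan is to follow the strategy of \citep{HMO26}: reduce the estimation of $\rnKt(x)$ to a brute-force search over descriptions of length $O(e(n))$. Every step involving nondeterminism will be replaced by a batch of \emph{non-adaptive} $\NP$ queries. Throughout, fix $x\in\bool^n$ and let $t\le 2^{O(e(n))}$ be the time bound from \Cref{lem:rnk-t-rnkt}.

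\textbf{Step 1 (a short description of an optimal short-time program).} Let $m$ be a witness for $\rnK^t(x)\le\rnKt(x)+O(e(n))$. Concretely, $|m|\le \rnKt(x)+O(e(n))$ and, with probability $\ge 2/3$ over $r$, $m$ single-valued nondeterministically outputs $x$ in time $t$. I claim that $\rnKt(m\mid x)\le O(e(n))$.

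To see this, first note that the pair $(x,m)$ has
\[
\rnKt(x,m)\le |m|+\log t+O(\log n)\le \rnKt(x)+O(e(n)).
\]
Indeed, a program can store $m$, run it (with its own randomness and nondeterminism) to produce $x$, and then output the pair. Padding $m$ to a common length costs only $O(\log n)$. Applying SoI for $\rnKt$ to the pair $(x,m)$ gives
\[
\rnKt(x,m)\ge \rnKt(x)+\rnKt(m\mid x)-e(n),
\]
and combining the two inequalities yields $\rnKt(m\mid x)\le O(e(n))$. Consequently there is a program $v$ with $|v|\le O(e(n))$ and running time $2^{O(e(n))}$ that, given $x$, outputs $m$ with probability $\ge 2/3$ over its randomness, in the $\rnK$ sense. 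I will amplify its success probability via \Cref{l:success-amp-rnK}, paying only an additive $O(\log n)$ in length.

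\textbf{Step 2 (the algorithm).} The algorithm enumerates all $v$ of length $\le Ce(n)$ and all lengths $L\le n$. For each pair it samples independent random strings $r^{(1)}_j$ (for running $v$) and $r^{(2)}_j$ (for running the output of $v$), for $j\le \poly(n)$. In one non-adaptive round it then asks, for every tuple $(v,L,j)$, the following $\NP$ queries:
\begin{enumerate}
\item[(a)] whether $v(x;\cdot;r^{(1)}_j)$ has two paths with distinct non-$\bot$ outputs, which certifies failure of single-valuedness;
\item[(b)] whether some path of $v(x;\cdot;r^{(1)}_j)$ outputs a string $m'$ with $|m'|\le L$ such that some path of $m'(\cdot;r^{(2)}_j)$ outputs a string outside $\{x,\bot\}$ within time $t$;
\item[(c)] whether some path of $v(x;\cdot;r^{(1)}_j)$ outputs a string $m'$ with $|m'|\le L$ such that some path of $m'(\cdot;r^{(2)}_j)$ outputs $x$ within time $t$.
\end{enumerate}

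The key point is that none of these queries takes $m$ as input. The intermediate string $m'$ is existentially quantified inside a single $\NP$ statement, so every query is fixed before any answer is seen and no adaptivity is introduced. The pair $(v,L)$ counts as good if, for at least a $1/2$ fraction of the indices $j$, the answers are (a) no, (b) no and (c) yes. When (a) is no, $v$ has a unique output on $r^{(1)}_j$, so (b) and (c) refer to that unique $m'$. When (a) is no and (b) is no, that $m'$ never outputs anything other than $x$ or $\bot$ on $r^{(2)}_j$. If (c) is also yes, then $m'$ single-valued outputs $x$ on $r^{(2)}_j$. Let $L^*$ be the least $L$ for which some good $(v,L)$ exists. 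The algorithm accepts (declares ``high complexity'') iff $L^*\ge n-C'e(n)$. Its running time is $2^{O(e(n))}\cdot\poly(n)$, and all oracle access is parallel.

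\textbf{Step 3 (correctness).} For YES instances, suppose $\rnKt(x)\le n-C''e(n)$. Step 1 supplies a $v$ whose output, with high probability, is the $m$ from Step 1. Since $m$ succeeds with probability $\ge 2/3$ over $r^{(2)}$, a Chernoff bound over $j$ shows that $(v,|m|)$ is good. Hence $L^*\le |m|\le n-C'e(n)$ and the algorithm rejects.

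For NO instances, suppose $\rnKt(x)\ge n-1$, and suppose some $(v,L)$ is good. I will read off an $\rnKt$ description of $x$ of length $L+O(e(n))$. The description stores $v$ (and $L$), draws $r^{(1)},r^{(2)}$, computes $v$'s output nondeterministically, and runs that output. Because $(v,L)$ is good, this procedure single-valued outputs $x$ with probability roughly $1/2$, and \Cref{l:success-amp-rnK} pushes the success probability back to $2/3$. Therefore $n-1\le \rnKt(x)\le L+O(e(n))$, so $L^*\ge n-C'e(n)$ and the algorithm accepts.

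\textbf{Main obstacle.} The delicate part is the interaction between the probabilistic thresholds and single-valuedness in the NO case. Goodness is only estimated from samples, and its guarantee holds per random string. Turning it into a genuine $\rnK$ program requires that, on typical random strings, $v$'s output be single-valued \emph{and} that this output itself be single-valued. Queries (a) and (b) are meant to enforce exactly this, but the success fraction could sit just around $1/2$, and here I would rely on the success amplification lemma. I would first choose sample sizes and an intermediate threshold (e.g.\ $0.55$) so that, with high probability, every good $(v,L)$ genuinely has success probability $\ge 1/2$. I would then handle the remaining constant-factor gap by \Cref{l:success-amp-rnK}, absorbing its $O(\log n)$ overhead into $O(e(n))$.
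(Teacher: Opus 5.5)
There is a genuine gap in your NO-case (soundness) argument. The procedure you describe stores $v$ and $L$ and then \emph{runs $v$}. But $v$ comes from $\rnKt(m\mid x)\le O(e(n))$: it describes $m$ \emph{given $x$}, and queries (a)--(c) all run $v$ on input $x$. So what you wrote down is a description of $x$ conditioned on $x$, not an $\rnKt$-description of $x$. The length count shows the problem. The program stores only $v$ and $L$, i.e.\ $O(e(n))$ bits, so if it were valid it would give $\rnKt(x)\le O(e(n))$ for every NO instance, which is impossible. The bound $L+O(e(n))$ you state does not come from anything the program actually stores.

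The missing step is to decouple the intermediate string from $x$ and from $r^{(1)}$.
\begin{itemize}
\item By Chernoff and a union bound over all $2^{O(e(n))}\cdot n$ pairs (with $\poly(n)$ samples), with high probability every good $(v,L)$ has true joint success probability at least, say, $0.45$ over $(r^{(1)},r^{(2)})$.
\item By averaging, there is a fixed $r^{(1)*}$ whose conditional success probability over $r^{(2)}$ is at least $0.45$.
\item At $r^{(1)*}$, query (a) must be no, so $v(x;\cdot;r^{(1)*})$ has a unique non-$\bot$ output $m'$, and (c) forces $|m'|\le L$.
\item Then (b)-no and (c)-yes say that $m'$ by itself is an $\rnK^t_{0.45}$-description of $x$ that no longer needs $x$. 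Here $m'$ should be wrapped so that a run exceeding $t$ steps outputs $\bot$, or non-halting should be added to query (b), as in the paper's $Q_2$.
\item \Cref{l:success-amp-rnK} works for any constant $\delta$, so your worry about thresholds near $1/2$ is moot. It gives $\rnKt(x)\le L+O(e(n))$, and the rest of your argument goes through.
\end{itemize}

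The paper builds this decoupling into the algorithm itself. It fixes one sampled $r_j$ (the lexicographically first index in the majority set $S'$) and extracts the unique output $m'$ at $r_j$. It then tests that fixed $m'$ on fresh independent $r'_k$ via $Q_4$--$Q_6$, so its soundness claim compares directly against $\rnK^{t'}_{1/2}(x)$. Once you add the averaging step, your joint sampling of $(r^{(1)}_j,r^{(2)}_j)$ and your enumeration of $L$ with a length bound inside the $\NP$ statements form a valid and slightly simpler alternative to the paper's bit-extraction queries. Your Step 1 and YES case are fine as written.
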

\begin{proof}

    For $t \in \nat$, we say that a string $y$ is an \emph{$\rnK^t$-description} of a string $x$ if 
    \[\Prob_{r\sim\bool^t}\mleft[
	\begin{aligned}
		&\forall w\in\bool^t, \Pi(y;w;r) \textrm{ outputs $x$ or $\bot$ in $t$ steps}; \\
        &\exists w \in \{0, 1\}^t, \Pi(y;w;r) \textrm{ outputs $x$ in $t$ steps}
	\end{aligned} \mright] \geq 2/3.\]
    (Note that we do not require $y$ to be of minimum length among $\rnK^t$-descriptions here.)

    By \Cref{lem:rnk-t-rnkt}, for some constant $b \in \nat$ and $t' = 2^{b\cdot e(n)}$, for every string $x \in \bool^n$, it holds that $\rnK^{t'}(x) \leq \rnKt(x) + b\cdot e(n)$. In particular, for every $x \in \bool^n$, there exists a string $m$ of length at most $\rnKt(x) + b \cdot e(n)$ such that $m$ is an $\rnK^{t'}$-description for $x$ and $\rnK^{t'}(x) = |m|$. 
    
    By our assumption of SoI for $\rnKt$,
    \begin{align}\label{eq:compress-m}
        \rnKt(m \mid x) &\leq \rnKt(m,x) - \rnKt(x) + e(n) \notag
        \\&\leq |m| + \log t' + O(\log n) - \rnKt(x) + e(n) \notag
        \\&\leq \rnK^{t'}(x) - \rnKt(x) + O(e(n)) \notag
        \\&\leq c\cdot e(n),
    \end{align}
    for some constant $c \in \nat$. This implies that, for $t = 2^{c \cdot e(n)}$,
    \[\rnK^{t}(m \mid x) \leq c \cdot e(n).\]

    As a first step, we will define a randomised algorithm $B$ equipped with non-adaptive access to a $\SAT$ oracle. Ultimately, this algorithm will be used as a sub-procedure in our algorithm to estimate $\rnKt$. Roughly, $B^{\parallel \SAT}$ will decide, for a given pair $(v,x)$, whether $v$ is an $\rnK^t$ description of a string $m \in \bool^s$ that is an $\rnK^{t'}$ description of $x$; if so, it will output $s$.
    
    For $n, t, t', s, i \in \nat$ and strings $y, r, r'$, define the following $\SAT$ queries.
    \begin{itemize}
        \item $Q_1(y, r, t, s) \in \SAT \iff \exists w \in \bool^{\leq t} \text{ s.t. } U(y; w; r)$ outputs some string $z \in \bool^s\backslash\{\bot\}$ in time at most $t$;
        \item $Q_2(y, r, t) \in \SAT \iff \exists w, w' \in \bool^{\leq t}$ such that either $U(y; w; r)$ fails to halt within $t$ steps, or $U(y; w; r) = z$ and $U(y; w'; r) = z'$ in time at most $t$, where $z \neq z'$ and $z, z' \neq \bot$.
        \item $Q_3(y, r, t, i) \in \SAT \iff \exists w \in \bool^{\leq t} \text{ s.t. } U(y; w; r) \text{ outputs some string } z \neq \bot$ in time at most $t$, and $z_i = 1$.
        \item $Q_4(y, r, r', t, t', n) \in \SAT \iff \exists w \in \bool^{\leq t} \text{ s.t. } U(y; w; r)$ outputs some string $z \neq \bot$ in time at most $t$, and $Q_1(z, r', t', n) \in \SAT$.
        \item $Q_5(y, r, r', t, t') \in \SAT \iff \exists w \in \bool^{\leq t} \text{ s.t. } U(y; w; r)$ outputs some string $z \neq \bot$ in time at most $t$, and $Q_2(z, r', t') \in \SAT$.
        \item $Q_6(y, r, r', t, t', i) \in \SAT \iff \exists w \in \bool^{\leq t} \text{ s.t. } U(y; w; r) \text{ outputs some string } z \neq \bot$ in time at most $t$, and $Q_3(z, r', t', i) \in \SAT$.
    \end{itemize}

    The algorithm $B^{\parallel \SAT}$ is as follows.
    \begin{enumerate}
        \item On input $(v, x)$, sample randomness $r_1, ..., r_{n^2} \sim \mathcal{U}_{t}$ and $r'_1, ..., r'_{n^2} \sim \mathcal{U}_{t'}$, where $n = |x|$, and $t$ and $t'$ are as defined above. 
        
        \item For each $j \in [n^2]$, $s \in [2n]$, and $i \in [s]$, query $Q_1((v,x), r_j, t, s)$, $Q_2((v,x), r_j, t)$, and $Q_3((v,x), r_j, t, i)$ to the $\SAT$ oracle, and let $z^{(j)}$ be the concatenation of the values of $\SAT(Q_3((v,x), r_j, t, i))$ for $i = 1, ..., s$. 
        
        \item For each $i, j, k \in [n^2]$, query $Q_4((v, x), r_j, r'_k, t, t', n)$, $Q_5((v, x), r_j, r'_k, t, t')$, and $Q_6((v, x), r_j, r'_k, t, t', i)$ to the $\SAT$ oracle, and let $z^{(j,k)}$ be the concatenation of the values of $\SAT(Q_6((v, x), r_j, r'_k, t, t', i))$ for $i = 1, ..., n$. 
        
        \item Let $S \subseteq [n^2]$ be the set of rounds $j$ in which $Q_1((v, x), r_j, t, s)$ was accepted for some choice of $s$ and $Q_2((v, x), r_j, t)$ was rejected. Let $m'$ be the majority value of $(z^{(j)})_{j \in S}$, and let $s' = |m'|$. Let $S' = \{j \in S \mid z^{(j)} = m'\}$. If $|S'|/n^2 \leq (1/2) + (1/10)$, halt and output $\bot$.
        
        \item Fix the lexicographically first $j \in S'$. Let $T \subseteq [n^2]$ be the set of rounds $k$ in which $Q_4((v, x), r_j, r'_k, t, t', n)$ was accepted and  $Q_5((v, x), r_j, r'_k, t, t')$ was rejected. Let $x'$ be the majority value of $z^{(j,k)}$ over all $k \in T$. Let $T' = \{k \in T \mid z^{(j,k)} = x'\}$. If $|T'|/n^2 > (1/2) + (1/10)$ and $x' = x$, halt and output $s'$. Otherwise, output $\bot$.
    \end{enumerate}
    Observe that $B^{\parallel \SAT}$ queries its oracle non-adaptively and that it runs in time $2^{O(e(n))}$.
    
    \begin{claim}\label{cl:B-SAT-1}
        If $(v,x)$ is an $\rnK^t$-description for some string $m \in \bool^s$, where $m$ is an $\rnK^{t'}$-description for $x$, then $B^{\parallel \SAT}(v, x)$ outputs $s$ with probability $1 - 2^{-\Omega(n^2)}$.
    \end{claim}
    \begin{proof}
        Since $(v,x)$ is an $\rnK^t$-description for $m$, with probability at least $2/3$ over a random $r$, it holds that $Q_1((v,x), r, t, s) \in \SAT$, $Q_2((v,x), r, t) \notin \SAT$, and the concatenation of $\SAT(Q_3((v,x), r_j, t, i))$ for $i = 1, ..., s$ equals $m$. By a concentration bound, with probability $1 - 2^{-\Omega(n^2)}$, the fraction of $r_j$ on which the above items hold is greater than $(1/2) + (1/10)$, in which case $m$ is the majority value of $(z^{(j)})_{j \in S}$, and $|S'|/n^2 \geq (1/2) + (1/10)$.

        When the above holds, for any $j \in S'$ (including the lexicographically first), $m$ is the unique string not equal to $\bot$ such that there exists $w \in \bool^{\leq t}$ with $U((x,v), w, r_j) = m$ in time at most $t$. Thus, the first condition of Queries 4-6 with inputs $(v,x)$, $r_j$, and $t$ is always true, with $z = m$.
        
        Since $m$ is an $\rnK^{t'}$-description for $x$, with probability at least $2/3$ over a random choice of $r'$, it holds that $Q_4((v, x), r_j, r', t, t', n) \in \SAT$, $Q_5((v, x), r_j, r'_k, t, t') \notin \SAT$, and moreover the concatenation of $\SAT(Q_6((v, x), r_j, r'_k, t, t', i))$ for $i = 1, ..., n$ equals $x$. By a concentration bound, with probability $1 - 2^{-\Omega(n^2)}$, the fraction of $r'_k$ on which the above items hold is greater than $(1/2) + (1/10)$, in which case $x$ is the majority value of $(z^{(j, k)})_{k \in T}$, and $|T'|/n^2 \geq (1/2) + (1/10)$. In this case, $B^{\parallel \SAT}$ halts and outputs $s$.
    \end{proof}

    \begin{claim}\label{cl:B-SAT-2}
        If $\rnK^{t'}_{1/2}(x) > s$, then for any $v \in \bool^*$, with probability at least $1 - 2^{-\Omega(n^2)}$, $B^{\parallel \SAT}(v,x)$ does not output $s$.
    \end{claim}
    \begin{proof}
        Suppose $\rnK^{t'}_{1/2}(x) > s$. Suppose that $B^{\parallel \SAT}(v,x)$ does not output $\bot$ in Step (4); otherwise, we are done. Let $m' \in \bool^{s}$ be the majority value of $(z^{(j)})_{j \in S}$, and fix $j$ as in Step (5). In particular, $m'$ is the unique string not equal to $\bot$ such that there exists $w \in \bool^{\leq t}$ with $U((x, v), w, r_j) = m'$ in time at most $t$. Note that $|m'|$ must be equal to $s$; otherwise, $B^{\parallel \SAT}$ would not output $s$, and we are done. Further, note that the first condition of Queries 4-6 with inputs $(v,x)$, $r_j$, and $t$ is always true, with $z = m'$.

        By definition of $\rnK^{t'}_{1/2}$, with probability at least $1/2$ over a random choice of $r'$, either $Q_4((v,x),r_j,r'_k,t,t',n)\notin\SAT$, $Q_5((v, x), r_j, r'_k, t, t') \in \SAT$, or the concatenation of $\SAT(Q_6((v, x), r_j, r'_k, t, t', i))$ for $i = 1, ..., n$ is not equal to $x$. By a concentration bound, with probability at least $1 - 2^{-\Omega(n^2)}$ over the choice of $r'_1, ..., r'_{n^2}$, the fraction of $r'_k$ with this property is at least $(1/2) - (1/100)$. 

        Recall that, if $x$ is not the majority value $x'$ of $(z^{j, k})_{k \in T}$, then $B^{\parallel \SAT}$ outputs $\bot$. On the other hand, if $x$ is $x'$, then by the above, the fraction of $k \in [n^2]$ such that that $Q_5((v, x), r_j, r'_k, t, t') \notin \SAT$ \emph{and} the concatenation of $\SAT(Q_6((v, x), r_j, r'_k, t, t', i))$ equals $x'$ is at most $(1/2) + (1/100)$. It follows that $|T'|/n^2 < (1/2) + (1/10)$, and $B^{\parallel \SAT}$ outputs $\bot$. 
    \end{proof}

    Define an algorithm $A^{\parallel \SAT}$ as follows.
    \begin{quote}

        On input $x \in \bool^n$, enumerate all $v \in \bool^{c \cdot e(n)}$, and let $s_v = B^{\parallel \SAT}(v, x)$. Return the minimum $s_v$. If $s_v = \bot$ for all $v$, return $2n$.
        
    \end{quote}
    Observe that, by definition of $B^{\parallel \SAT}$, $A^{\parallel \SAT}$ queries its oracle non-adaptively and runs in time $2^{O(e(n))}$. 

    \begin{claim}\label{cl:A-SAT-1}

        If $\rnK^{t'}(x) \leq s$, then $A^{\parallel \SAT}(x) \leq s$ with probability $1 - 2^{-\Omega(n)}$.
        
    \end{claim}
    \begin{proof}

        If $\rnK^{t'}(x) \leq s$, then by \Cref{eq:compress-m}, $\rnK^{t}(m \mid x) \leq c\cdot e(n)$, where $m \in \bool^{\leq s}$ is an $\rnK^{t'}$-description for $x$ such that $\rnK^{t'}(x) = |m|$. In particular, for some $v \in \bool^{c \cdot e(n)}$, $(v, x)$ is an $\rnK^t$-description for $m$. By \Cref{cl:B-SAT-1}, with probability at least $1 - 2^{-\Omega(n^2)}\ge1-2^{\Omega(n)}$, $B^{\parallel \SAT}(v, x)$ outputs $s_v = |m|$. In this case, $A^{\parallel \SAT}(x) = \min_v\{s_v\} \leq s$. 
    \end{proof}

    \begin{claim}\label{cl:A-SAT-2}

        If $\rnK^{t'}_{1/2}(x) > s$, then $A^{\parallel \SAT}(x) > s$ with probability $1 - 2^{-\Omega(n)}$.
        
    \end{claim}
    \begin{proof}

        By \Cref{cl:B-SAT-2}, for any $v \in \bool^{c \cdot e(n)}$, for any $s' \leq s$, the probability that $B^{\parallel \SAT}(v, x)$ outputs $s'$ is at most $2^{-\Omega(n^2)}$. The claim follows by a union bound.
    \end{proof}

    Finally, we explain how to use $A^{\parallel \SAT}$ to decide $\gapmrnktp[s_1, s_2]$. Specifically, we will set $s_1(n) = n - 2d \cdot e(n)$ and $s_2(n) = n - 1$, for a constant $d > b$ to be specified below. Let $\theta = n - d \cdot e(n)$. To decide $\gapmrnktp[s_1, s_2]$ on input $x$, simulate $A^{\parallel \SAT}(x)$, and accept iff the output is at most $\theta$.

    We now argue for the correctness of this approach. First suppose $\rnKt(x) \leq s_1(n) = n - 2d \cdot e(n)$. Then, by \Cref{lem:rnk-t-rnkt}, 
    \[\rnK^{t'}(x) \leq n - 2d \cdot e(n) + b \cdot e(n) < \theta.\]
    By \Cref{cl:A-SAT-1}, $A^{\parallel \SAT}(x) \leq \theta$ with high probability, in which case $x$ is accepted.

    Now suppose $\rnKt(x) \geq s_2(n) = n - 1$. Applying success amplification for $\rnKt$ (\Cref{l:success-amp-rnK}), there exists a polynomial $p$ such that 
    \begin{align}\label{eq:t'-12-SA}
        \rnK^{p(t')}(x) \leq \rnK^{t'}_{1/2}(x) + \log p(t').
    \end{align}
    Also observe that $\rnKt(x) \leq \rnK^{p(t')}(x) + \log p(t')$. Thus, with high probability,
    \begin{align*}
        A^{\parallel \SAT}(x) &\geq \rnK^{t'}_{1/2}(x) \tag{\Cref{cl:A-SAT-2}}
        \\&\geq \rnK^{p(t')}(x) - \log p(t') \tag{\Cref{eq:t'-12-SA}}
        \\&\geq \rnKt(x) - 2\log p(t')
        \\&\geq n - 2\log p(t') - 1
        \\&\geq n - 3k \cdot e(n)
        \\&> n - d\cdot e(n) = \theta,
    \end{align*}
    where we let $k \in \nat$ be such that $p(m)^b \leq m^k$ for all $m \in \nat$, and we set $d := 4k$. In this case, $x$ is rejected.

    This completes the proof of the lemma.
\end{proof}

\subsubsection{The Case of \texorpdfstring{$\nKt$}{}}\label{sec:soi-hardness-nkt}
Here, we extend the above results to $\nKt$, which will be needed in the proof of \Cref{thm:collapses_from_SoI}.

\begin{lemma}\label{lem:SoI-imply-short-time}
    Assume SoI holds for $\nKt$ with error $e(n) \geq \log n$. Then, there exists a time bound $t \leq 2^{O(e(n))}$ such that for every string $x \in \bool^n$, it holds that $\nK^t(x) \leq \nKt(x) + O(e(n))$.
\end{lemma}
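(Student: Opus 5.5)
The plan is to mirror the proof of \Cref{lem:rnk-t-rnkt} verbatim, replacing $\rnKt$ with $\nKt$ throughout. The $\nKt$ case is in fact simpler, because it has no randomness and hence needs no success amplification.

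\textbf{Setup.} Fix $x\in\bool^n$. For $1\le i\le j\le n$, let $s_{[i,j]}$ and $t_{[i,j]}$ denote the size and running time of an optimal $\nKt$ program for $x_{[i,j]}$ given $x_{[1,i-1]}$. Among optimal programs, take one of least running time. Let $i$ be the least index with $t_{[1,i]}\ge 2^{C\log n+1}\cdot 2^{e(n)}$, where $C$ is a model-dependent constant. If no such $i$ exists, the optimal program for $x$ already runs in time $2^{O(e(n))}$ and we are done. Otherwise, using $e(n)\ge\log n$ and $C$ large, we get $i>1$. The boundary case $i=n$ is handled by the same minor modification as in the $\rnKt$ proof.

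\textbf{Two bounds on $\nKt(x)$.} SoI gives
\[
\nKt(x)\ge s_{[1,i]}+\log t_{[1,i]}+s_{[i+1,n]}+\log t_{[i+1,n]}-e(n).
\]
Composing the two single-valued nondeterministic programs sequentially gives a program for $x$. We guess witnesses for both, and output $\bot$ if either outputs $\bot$. This program is still single-valued, and it yields
\[
\nKt(x)\le s_{[1,i]}+s_{[i+1,n]}+\log\bigl(t_{[1,i]}+t_{[i+1,n]}+n\bigr)+C.
\]
Here the self-delimiting encoding of programs handles the concatenation. No amplification step is needed, since nondeterministic composition preserves the $\{x,\bot\}$ guarantee deterministically. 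I expect this to be the only point needing care, and it is immediate.

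\textbf{Bounding the second running time.} Combining the two bounds gives
\[
\log\bigl(t_{[1,i]}+t_{[i+1,n]}+n\bigr)\ge \log t_{[1,i]}+\log t_{[i+1,n]}-e(n)-C.
\]
Exponentiating and using the lower bound on $t_{[1,i]}$ shows $t_{[i+1,n]}\le 2^{C\log n+2}\cdot 2^{e(n)}$. By minimality of $i$, we also have $t_{[1,i-1]}<2^{C\log n+1}\cdot 2^{e(n)}$.

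\textbf{Assembling the fast program.} Build a program that runs the program for $x_{[1,i-1]}$, appends the hardcoded bit $x_i$ (with $i$ also stored), and then runs the program for $x_{[i+1,n]}$ conditioned on $x_{[1,i]}$. This runs in time $t=2^{O(e(n))}$. Its size is $s_{[1,i-1]}+s_{[i+1,n]}+O(\log n)$, which is at most
\[
\nKt(x_{[1,i-1]})+\nKt(x_{[i+1,n]}\mid x_{[1,i]})+O(\log n).
\]
Since $\nKt(x_{[1,i-1]})\le\nKt(x_{[1,i]})+O(\log n)$, the size is at most
\[
\nKt(x_{[1,i]})+\nKt(x_{[i+1,n]}\mid x_{[1,i]})+O(\log n).
\]
Applying SoI once more, this is at most $\nKt(x)+e(n)+O(\log n)=\nKt(x)+O(e(n))$. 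The time bound $t$ depends only on $n$, as required.
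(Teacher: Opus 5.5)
Your proposal is correct and follows the paper's route. The paper proves this lemma by deferring to the $\rnKt$ argument of \Cref{lem:rnk-t-rnkt} and its conditional $\nKt$ analogue \Cref{lem:cond-SoI-imply-short-time}, which is your argument with the amplification step removed; the paper's $\nKt$ version uses the slightly smaller threshold $2^{C+1}\cdot 2^{e(n)}$, but your $2^{C\log n+1}\cdot 2^{e(n)}$ is equally fine since $e(n)\ge\log n$.
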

\begin{proof}[Proof Sketch]
 The proof follows essentially the same argument as in \Cref{lem:rnk-t-rnkt}; see also the proof of \Cref{lem:cond-SoI-imply-short-time}.
\end{proof}

\begin{lemma}\label{l:SoI-nKt}
    Assume SoI holds for $\nKt$ with error $e(n) \geq \log n$. Then, $\gapmnktp[s_1, s_2]$ is contained in $\pr \mathsf{DTIME}[2^{O(e(n))}]^{||\NP}$, where $s_1(n) = n - O(e(n))$ and $s_2(n) = n-1$.
\end{lemma}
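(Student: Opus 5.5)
I will mirror the proof of \Cref{l:SoI-rnKt}, with all randomness removed, so that the resulting algorithm is deterministic with nonadaptive $\SAT$ queries. First, \Cref{lem:SoI-imply-short-time} gives a constant $b$ and $t' = 2^{b\cdot e(n)}$ such that $\nK^{t'}(x) \le \nKt(x) + b\cdot e(n)$ for every $x\in\bool^n$. Fix an optimal $\nK^{t'}$-description $m$ of $x$, so $|m| = \nK^{t'}(x)$. Applying SoI for $\nKt$ to the pair $(x,m)$ exactly as in \Cref{eq:compress-m} gives
\[\nKt(m\mid x)\le \nKt(m,x)-\nKt(x)+e(n)\le |m|+\log t'+O(\log n)-\nKt(x)+e(n)\le c\cdot e(n).\]
Here I use that $(m,x)$ can be printed from $m$ in time about $t'$, so $\nKt(m,x)\le|m|+\log t'+O(\log n)$. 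Hence $\nK^t(m\mid x)\le c\cdot e(n)$ for $t=2^{c\cdot e(n)}$.

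Next I define a deterministic subroutine $B^{\parallel\SAT}(v,x)$. It issues, in one parallel round, the queries $Q_1,\dots,Q_6$ from the proof of \Cref{l:SoI-rnKt} with the random strings set to empty, for all $s\in[2n]$ and $i\le\max(s,n)$:
\begin{itemize}
\item $Q_1,Q_2$ test whether $(v,x)$ single-valuedly outputs some $m'\in\bool^s$ in time $t$;
\item $Q_3$ reads off the bits of $m'$;
\item $Q_4,Q_5$ test whether the (unnamed) output $m'$ itself single-valuedly outputs an $n$-bit string in time $t'$;
\item $Q_6$ reads that string bitwise.
\end{itemize}
The key point carries over unchanged: $Q_4$--$Q_6$ existentially quantify over the intermediate output $z$, so they never need $m'$ as input and all queries stay nonadaptive. $B$ outputs $s=|m'|$ iff every consistency check passes and the reconstructed string equals $x$; otherwise it outputs $\bot$. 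Since there is no randomness, $B$ outputs $s$ \emph{iff} $(v,x)$ is an $\nK^t$-description of some $m'\in\bool^s$ that is in turn an $\nK^{t'}$-description of $x$. This is exact, with no majority votes and no concentration bounds. One caveat: $Q_2$ and $Q_5$ are coNP-type conditions whose negations we need, but we use only their answers, so this is fine.

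Then $A^{\parallel\SAT}(x)$ enumerates all $v\in\bool^{c\cdot e(n)}$ and returns $\min_v B(v,x)$, or $2n$ if every call returns $\bot$. This runs in time $2^{O(e(n))}\cdot\poly(t,t')=2^{O(e(n))}$ and makes only nonadaptive queries. The two bounds on $A$ are:
\begin{itemize}
\item By the compression bound above, $A(x)\le\nK^{t'}(x)$.
\item Any output $s$ certifies an $\nK^{t'}$-description of length $s$, so $A(x)\ge\nK^{t'}(x)$.
\end{itemize}
So $A(x)=\nK^{t'}(x)$ exactly.

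Finally I threshold at $\theta=n-d\cdot e(n)$ and set $s_1=n-2d\cdot e(n)$. If $\nKt(x)\le s_1$, then $\nK^{t'}(x)\le s_1+b\cdot e(n)<\theta$ and we accept. If $\nKt(x)\ge n-1$, then $\nK^{t'}(x)\ge \nKt(x)-\log t'\ge n-1-b\cdot e(n)>\theta$ for $d>b+1$, and we reject. This case is easier than the $\rnKt$ one because no success amplification is needed.

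The main obstacle is really \Cref{lem:SoI-imply-short-time}: running its splitting argument for $\nKt$, where the sequential composition step must preserve single-valuedness. Given that lemma, the remaining care is in checking that nested single-valuedness is captured correctly by $Q_4$--$Q_6$. In particular, $Q_5$ must quantify over the unique output $z$ of $(v,x)$, which is guaranteed by the $Q_1/Q_2$ checks.
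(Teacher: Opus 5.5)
Your proposal is correct and is exactly the adaptation the paper intends: the paper's proof of this lemma simply says to rerun the proof of \Cref{l:SoI-rnKt} with \Cref{lem:SoI-imply-short-time} in place of \Cref{lem:rnk-t-rnkt}, and that is what you do. Your observation that, once the randomness is removed, the nonadaptive queries $Q_1$--$Q_6$ make $B$ exact is a correct way to fill in the omitted details. It gives $A(x)=\nK^{t'}(x)$, so no majority votes or success amplification are needed. The only cosmetic fix is to enumerate $v\in\bool^{\le c\cdot e(n)}$ rather than $v\in\bool^{c\cdot e(n)}$.
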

\begin{proof}[Proof Sketch]
The proof is a straightforward adaptation of that of \Cref{l:SoI-rnKt}, using \Cref{lem:SoI-imply-short-time} in place of \Cref{lem:rnk-t-rnkt}. We omit the details.
\end{proof}

\subsection{Meta-Complexity to SoI}

The following lemma completes the proof of Item (ii) $\implies$ Item (i) in \Cref{thm:equivalence_SoI}.

\begin{lemma}\label{lem:equivalence-ii-i}

    If $\gapmrnktp[s_1, s_2]\in \pr \mathsf{BPTIME}[n^{\poly(\log n)}]^{||\NP}$, where $s_1(n) = n - \poly(\log n)$ and $s_2(n) = n-1$, then SoI holds for $\rnKt$ with error $e(n) = \poly(\log n)$.    
    
\end{lemma}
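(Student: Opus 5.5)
We will follow the hybrid-argument framework of \cite{DBLP:conf/coco/Hirahara22, GK22}, as already used in the proof sketch of \Cref{thm:semi-soi-rk-poly}. The difference is that the distinguisher is now the assumed $\pr\BPTIME[n^{\poly(\log n)}]^{||\NP}$ algorithm for $\gapmrnktp[n-\poly(\log n),n-1]$, rather than $\mathsf{MINKT}$. Fix $x,y\in\bool^n$, and let $k=\rnKt(x,y)$. We must show
\[
\rnKt(x)+\rnKt(y\mid x)\le k+\poly(\log n).
\]

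\textbf{Step 1 (two reconstructions).} Let $G$ be the generator of \Cref{l:black-box-PRG}. We set two output lengths: $m_1\approx\rnKt(x)$ for $G(x;\cdot)$, and $m_2\approx\rnKt(y\mid x)$ for $G(y;\cdot)$. Let $D$ be the randomized non-adaptive $\NP$-oracle algorithm for the gap problem, with its random bits treated as the auxiliary input $u$ of \Cref{cor:rnkt-recon}. Consider the test on strings of length $m_1+m_2$ that accepts iff $D$ declares the string to have $\rnKt<m_1+m_2-\poly(\log n)$.

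The hybrid runs from $G(x;z_1)\circ G(y;z_2)$ to $G(x;z_1)\circ w_2$ to $w_1\circ w_2$.
\begin{itemize}
\item On the first distribution the test accepts with high probability. Indeed, $\rnKt(G(x;z_1)\circ G(y;z_2))\le k+O(\log^3 n)$, since we can run an optimal $\rnKt$ program for $(x,y)$ and then apply $G$; this is below $m_1+m_2-\poly(\log n)$ as long as $m_1+m_2>k+\poly(\log n)$, which is exactly the situation we want to refute.
\item On $w_1\circ w_2$ the test rejects with probability at least $1/2$, by \Cref{l:incomp-strings}.
\end{itemize}
Hence one of the two hybrid steps has advantage $\ge 1/4$.

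\textbf{Step 2 (deriving the contradiction in each case).} In either case \Cref{cor:rnkt-recon} applies: for the second step it is used conditioned on $x$, with $G(x;z_1)$ computed inside the test. This gives one of
\[
\rK^{t,\parallel D}(y\mid x)\le m_2+O(\log^3 n)
\quad\text{or}\quad
\rK^{t,\parallel D}(x)\le m_1+O(\log^3 n),
\qquad t=n^{\poly(\log n)}.
\]
Since $D$ runs in quasi-polynomial time with non-adaptive $\NP$ queries, we remove the oracle. Merge $D$'s randomness into the outer randomness, then apply \Cref{l:rKt--parallel-NP} (success probability $\lambda/t$), and then \Cref{l:success-amp-rnK}. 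This yields an $\rnK^{t'}$ bound with $t'=n^{\poly(\log n)}$, and hence
\[
\rnKt(\cdot)\le m_i+\poly(\log n).
\]
Choosing $m_1=\rnKt(x)-\poly(\log n)$ and $m_2=\rnKt(y\mid x)-\poly(\log n)$, with a larger polylog than all accumulated losses, makes either case contradict the definition of $m_i$. Therefore $m_1+m_2\le k+\poly(\log n)$, which is SoI with polylogarithmic error.

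\textbf{Main obstacle.} The delicate point is that $D$ is only a promise-problem algorithm that is correct on its promise. The test therefore has to be analyzed only at the two extremes: strings of low $\rnKt$, which are certified YES, and uniformly random strings, which are NO with probability $\ge 1/2$.

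The intermediate hybrid is the real difficulty. We never need $D$'s behaviour on it to be correct; we only use that some adjacent pair differs by advantage $\ge 1/4$, and \Cref{cor:rnkt-recon} requires nothing about correctness. We must, however, check three things:
\begin{itemize}
\item the lengths $m_1+m_2$ fall in the regime where the gap parameters $n-\poly(\log n)$ versus $n-1$ apply at input length $m_1+m_2$, padding if the string is short;
\item the randomness length and query lengths stay quasi-polynomial, so that \Cref{l:rKt--parallel-NP} costs only $O(\log t)=\poly(\log n)$;
\item the conditioning on $x$ in the second case is handled by giving the reconstruction access to $x$, so that the resulting bound is on $\rnKt(y\mid x)$.
\end{itemize}
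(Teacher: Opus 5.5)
Your proposal is correct and is essentially the paper's own argument: the same hybrid $G(x;z_1)\circ G(y;z_2)\to G(x;z_1)\circ w_2\to w_1\circ w_2$ with the assumed $\pr\BPTIME[n^{\poly(\log n)}]^{||\NP}$ algorithm as distinguisher, reconstruction via \Cref{cor:rnkt-recon}, and oracle removal via \Cref{l:rKt--parallel-NP} and \Cref{l:success-amp-rnK}. The paper argues directly rather than by contradiction from $m_1+m_2>\rnKt(x,y)+\poly(\log n)$: it fixes $k=\rnK^{n^{(\log n)^c}}(x)-\log^c n$ and $k'$ analogously, bounds the acceptance probability on $G_k(x;z)\circ G_{k'}(y;z')$ by $3/4$, and compares with $\rnKt(x,y)+O(d)$; the only detail you leave implicit is the degenerate case where $\rnKt(x)$ or $\rnKt(y\mid x)$ is itself polylogarithmic (so some $m_i$ is too small for the generator), which, as in the paper's footnote, follows immediately from $\rnKt(x),\rnKt(y\mid x)\le\rnKt(x,y)+O(\log n)$.
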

\begin{proof}

    Let $D^{\parallel\NP}$ be a randomized quasi-polynomial time algorithm solving $\gapmrnktp[s_1, s_2]$, making non-adaptive queries to some oracle in $\NP$. Without loss of generality, we may assume that $D^{\parallel\NP}$ has failure probability $o(1)$.

    The proof of SoI will proceed by a hybrid argument. Fix $n \in \nat$. For $m \leq 2n$, let $G_m: \bool^{n} \times \bool^d \to \bool^m$ be the generator of \Cref{l:black-box-PRG}, where $d = O(\log^3n)$. Consider any strings $x, y \in \bool^n$, and let $k, k' \in \nat$  be parameters to be specified later satisfying $k, k' = \omega(1)$ and $k, k' \leq 2n$. By \Cref{l:incomp-strings}, for uniformly random strings $w \sim \mathcal{U}_k$ and $w\ \sim \mathcal{U}_{k'}$, with probability at least $1/2$,
    \[\rnKt(w, w') \geq k + k' - 1.\]
    Therefore,
    \[\Pr_{w \sim \mathcal{U}_k; \; w' \sim \mathcal{U}_{k'}}\left[D^{\parallel \NP}(w, w') = 0\right] \geq 1/2 - o(1),\]

    Toward a contradiction, suppose that 
    \[\Pr_{z \sim \mathcal{U}_d; \; w' \sim \mathcal{U}_{k'}; \; D}\left[D^{\parallel \NP}(G_{k}(x; z), w')  = 1\right] > \frac{2}{3}.\]
    Then, $D^{\parallel \NP}(-, \mathcal{U}_{k'})$ is a distinguisher for $G_k(x; -)$, in which case, by \Cref{cor:rnkt-recon}, 
    \[\rK^{\poly(n), \,\parallel D^{\parallel \NP}}(x) \leq k + O(\log^3 n).\]
    We can include a description of $D$ in the bound on $\rnK$, allowing non-adaptive queries to an $\NP$ oracle as needed. Since $D^{\parallel \NP}$ runs in time $n^{\poly(\log n)}$ on inputs of length $\poly(n)$, we have
    \[\rK^{n^{\poly(\log n)}, \,\parallel \NP}(x) \leq k + O(\log^3 n).\]
    By \Cref{l:rKt--parallel-NP}, using the definition of $\rnK$ to remove the non-adaptive $\NP$-oracle queries,
    \[\rnK_{n^{-\poly(\log n)}}^{n^{\poly(\log n)}}(x) \leq k + \poly(\log n).\]
    Applying success amplification (\Cref{l:success-amp-rnK}),
    \[\rnK^{n^{(\log n)^c}}(x) < k + \log^c n,\]
    for some constant $c \in \nat$. 
    
    Then, setting $k := \rnK^{n^{(\log n)^c}}(x) - \log^c n$ yields a contradiction.\footnote{We may assume without loss of generality that $k = \rnK^{n^{(\log n)^c}}(x) - \log^c n \geq \log n$; otherwise, $\rnKt(x) \leq \rnK^{n^{\poly(\log n)}}(x) + \poly(\log n) \leq \poly(\log n)$, and the lemma follows from the fact that $\rnKt(x, y) \geq \rnKt(y \mid x)$.}
    It follows that 
    \[\Pr_{z \sim \mathcal{U}_d; \; w' \sim \mathcal{U}_{k'}; \; D}\left[D^{\parallel \NP}(G_{k}(x; z), w') = 1\right] \leq \frac{2}{3}.\]

    By a very similar argument, assuming that the constant $c \in \nat$ is sufficiently large, setting $k' := \rnK^{n^{(\log n)^c}}(y \mid x) - \log^c n$ implies that 
    \[\Pr_{z, z' \sim \mathcal{U}_d; \; D}\left[D^{\parallel \NP}(G_{k}(x; z), G_{k'}(y; z')) = 1\right] \leq \frac{3}{4}.\]
    It follows that 
    \[\Pr_{z, z' \sim \mathcal{U}_d}\left[\; \rnKt(G_{k}(x; z), G_{k'}(y; z')) \leq k + k' - \poly(\log n)\; \right] \leq \frac{3}{4} + o(1).\]
    So, there exist $z, z' \in \bool^d$ such that 
    \begin{align*}
        \rnKt(G_{k}(x; z), G_{k'}(y; z') &> k + k' - \poly(\log n)
        \\&\geq \rnK^{n^{(\log n)^c}}(x) + \rnK^{n^{(\log n)^c}}(y \mid x) - \poly(\log n)
        \\&\geq \rnKt(x) + \rnKt(y \mid x) - \poly(\log n),
    \end{align*}
    where the last inequality follows from the fact that $\rnKt(u \mid v) \leq \rnK^t(u \mid v) + O(\log t)$ for any time-bound $t \in \nat$ and $u, v \in \bool^*$.

    Lastly, observe that 
    \[\rnKt(G_{k}(x; z), G_{k'}(y; z')) \leq \rnKt(x, y) + O(d) + O(\log n).\]
    Combined with the above, this yields
    \[\rnKt(x, y) \geq \rnKt(x) + \rnKt(y \mid x) - \poly(\log n),\]
    as desired.  
\end{proof}

\subsection{Natural Properties and Explicit Constructions}
In this section we show that \Cref{i:easiness-Gap-MrnKtP,i:natural-property-rnKt,i:hardness-explicit-construction} of \Cref{thm:equivalence_SoI} are equivalent. In fact, the equivalence between \Cref{i:natural-property-rnKt,i:hardness-explicit-construction} is straightforward: one can verify that they are in fact exactly the same statement. \Cref{i:easiness-Gap-MrnKtP} $\implies$ \Cref{i:natural-property-rnKt} is also trivial, by defining $\mathcal{YES}=\{x\mid\rnKt(x)\ge|x|-1\}$, and $\mathcal{NO}=\{x\mid \rnKt(x)\le |x|-\poly(\log |x|)\}$. So the only non-trivial implication we need to prove is \Cref{i:natural-property-rnKt} $\implies$ \Cref{i:easiness-Gap-MrnKtP}, which we restate as a lemma below:

\begin{lemma}[Worst-case to Average-case Reduction for $\gapmrnktp$]\label{lem:worst-avg-rnKt}
    Assume that there exist a constant $C$ and a promise problem $P=(\mathcal{YES},\mathcal{NO})\in\pr\BPTIME[n^{\log^C n}]^{||\NP}$ such that
    \begin{itemize}
        \item $|\mathcal{YES}\cap\{0,1\}^n|\ge2^{n-1}$, and
        \item For every $x\in\{0,1\}^n$ where $\rnKt(x)<n-\log^C n$, $x\in\mathcal{NO}$.
    \end{itemize}
    Then there exists a constant C', such that
    \[
        \gapmrnktp[n-\log^{C'}n,n-1]\in\pr\BPTIME[n^{\log^{C'}n}]^{||\NP}.
    \]
\end{lemma}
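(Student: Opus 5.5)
We follow Hirahara's worst-case to average-case reduction for meta-complexity, instantiated with the black-box generator of \Cref{l:black-box-PRG}, and use the natural property $P$ as the distinguisher. On input $x\in\{0,1\}^n$ we estimate $\rnKt(x)$ by a threshold test. For a parameter $k\le 2n$ we consider the generator $G_k(x;\cdot)\colon\{0,1\}^d\to\{0,1\}^k$ with $d=O(\log^3 n)$. The algorithm accepts (declares ``low complexity'') iff, with $k:=\theta$ for a suitable threshold $\theta=n-\log^{C'}n/2$, the algorithm $A_P$ deciding $P$ rejects $G_\theta(x;z)$ for a noticeable fraction of seeds $z$. We estimate this fraction by sampling $\poly(n)$ seeds and running $A_P^{\parallel\NP}$ on each. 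All these runs use independent randomness and their $\NP$ queries can be issued in one parallel round, so the algorithm stays in $\pr\BPTIME[n^{\log^{C'}n}]^{\parallel\NP}$.

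\emph{YES case} ($\rnKt(x)\le n-\log^{C'}n$). Every output $G_\theta(x;z)$ is computable from $x$ and $z$ in polynomial time. Hence $\rnKt(G_\theta(x;z))\le \rnKt(x)+d+O(\log n)<\theta-\log^C\theta$ for $C'$ large, so each output lies in $\mathcal{NO}$ and $A_P$ rejects it with probability at least $2/3$. Concretely, the whole table of outputs is nondeterministically produced by the optimal $\rnKt$ program for $x$ followed by the generator.

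\emph{NO case} ($\rnKt(x)\ge n-1$). Suppose instead that $A_P$ rejects $G_\theta(x;\mathcal U_d)$ with probability noticeably larger than it rejects uniform strings. Uniform strings of length $\theta$ fall in $\mathcal{YES}$ with probability at least $1/2$, so the rejection probability on uniform strings is at most $1/2+o(1)$ after amplification. Then $D:=A_P^{\parallel\NP}(\cdot;u)$ with random $u$ is a $2\varepsilon$-distinguisher in the sense of \Cref{cor:rnkt-recon}, with $\varepsilon=\Omega(1)$. Reconstruction therefore gives $\rK^{n^{\poly\log n},\parallel D}(x)\le\theta+O(\log^3 n)$. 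Next we simulate $D$ inside the program: its randomness is absorbed into the $\rK$ randomness and its queries go to an $\NP$ oracle non-adaptively. This yields $\rK^{n^{\poly\log n},\parallel\NP}(x)\le\theta+\poly\log n$. Applying \Cref{l:rKt--parallel-NP} and then success amplification \Cref{l:success-amp-rnK} converts this into $\rnK^{n^{\poly\log n}}(x)\le\theta+\poly\log n$, hence $\rnKt(x)\le\theta+\poly\log n<n-1$, a contradiction once $C'$ is large relative to the hidden polylogs. So in the NO case the rejection rate is at most about $1/2$, while in the YES case it is at least $2/3$, and the sampled estimate separates the two cases.

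\textbf{Main obstacle.} The delicate step is the composition in the NO case. The reconstruction oracle is itself a randomized algorithm with parallel $\NP$ access, and $R^D$ makes its oracle calls non-adaptively to $D$. We must check that the composed procedure still makes only a single non-adaptive layer of $\NP$ queries, so that \Cref{l:rKt--parallel-NP} applies; that holds because $D$'s queries depend only on its input and randomness. We must also track how the $1/\poly$ success probabilities compound. The probability $1/\poly(m/\varepsilon)$ from reconstruction, the $\lambda/t$ loss from \Cref{l:rKt--parallel-NP}, and the $\varepsilon$ fraction of good $u$ all multiply. They must be amplified back with only $O(\log t)$ description overhead, and $t$ is quasi-polynomial, so the additive loss stays $\poly\log n$. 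These parameters also fix the choice of $C'$.
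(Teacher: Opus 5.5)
Your proposal is correct and is essentially the paper's proof. The paper likewise sets the generator output length to $m=n-\log^{C'}n/2$ (with $C'=C+10$) and accepts iff the algorithm for $P$ rejects $G(x;z)$. It handles the YES case via $\rnKt(G(x;z))\le\rnKt(x)+O(\log^3 n)$. In the NO case it turns the constant-advantage distinguisher into $\rK^{n^{\log^{C+1}n},\parallel\NP}(x)\le m+O(\log^3 n)$ through \Cref{cor:rnkt-recon}, and then into an $\rnK$ bound via \Cref{l:rKt--parallel-NP} and \Cref{l:success-amp-rnK}, contradicting $\rnKt(x)\ge n-1$. The only cosmetic differences are that the paper samples a single seed and amplifies the final $9/10$ versus $7/10$ gap, and it uses $\varepsilon=1/n$ in the generator instead of a constant.
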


Using ideas similar to the worst-case to average-case reduction in \cite{DBLP:conf/focs/Hirahara18}, we are able to prove this lemma.

\begin{proof}
    Let $C'=C+10$. Let $A$ be the algorithm in $\pr\BPTIME[n^{\log^C n}]^{||\NP}$ which accepts any string in $\mathcal{YES}$ with probability at least $9/10$, and rejects any string in $\mathcal{NO}$ with probability at least $9/10$. Let $G$ be the pseudo-random generator of \Cref{l:black-box-PRG}, with $\varepsilon=1/n$ and $m=n-\frac{\log^{C'}n}{2}$. This would give $d=O(\log^3n)$. Our algorithm $B$ for deciding $\gapmrnktp[n-\log^{C'}n,n-1]$ works as follows:
    \begin{quote}
        On input $x$, $B$ uniformly samples a string $z\sim U_d$, and accepts iff $A(G(x,z))$ rejects.
    \end{quote}
    It is easy to see that $B$ makes non-adaptive queries to $\NP$ and runs in time $\poly(n)\cdot n^{\log^C n}$ , which is smaller than $n^{\log^{C'}(n)}$ for all large enough $n$. It remains to argue for the correctness of $B$.
    \begin{claim}\label{clm:worst-avg-rnKt-lower}
        For any $x$ such that $\rnKt(x)\le n-\log^{C'}n$, $B$ accepts $x$ with probability at least $9/10$.
    \end{claim}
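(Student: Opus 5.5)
We argue by contrapositive. Fix $x$ with $\rnKt(x)\le n-\log^{C'}n$, and suppose that $B$ accepts $x$ with probability less than $9/10$. Then $A(G(x,z))$ accepts with probability more than $1/10$ over $z\sim\mathcal{U}_d$ and the internal randomness of $A$. We will show this forces $\rnKt(x)$ to be larger than assumed.

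The first step is to pass the property to the outputs of $G$. Every string of length $m$ satisfying $\rnKt(w)<m-\log^C m$ lies in $\mathcal{NO}$, so $A$ rejects it with probability at least $9/10$. Hence, for at least a $1/10-1/10=\Omega(1)$ fraction of seeds $z$ (after a more careful accounting of the constants), the output $w=G(x,z)$ must satisfy $\rnKt(w)\ge m-\log^C m$; otherwise the acceptance probability could not exceed $1/10$. Rather than using a distinguisher argument directly, it is cleaner to argue that every output of $G(x,\cdot)$ is compressible. Computing $G(x,z)$ requires only $x$ and $z$. From an $\rnKt$ program for $x$ of length $s$ and time $T$, we obtain an $\rnKt$ program for $G(x,z)$ of size about $s-\log T+d+O(\log n)$, with time $T\cdot\poly(n)$, since $G$ is polynomial-time. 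This gives
\[\rnKt(G(x,z))\le \rnKt(x)+d+O(\log n)\le n-\log^{C'}n+O(\log^3 n).\]
With $m=n-\log^{C'}n/2$, this is below $m-\log^{C'}n/2+O(\log^3n)<m-\log^C m$, since $C'=C+10$. Consequently, every $G(x,z)$ lies in $\mathcal{NO}$, and $A$ rejects each one with probability at least $9/10$. Averaging over $z$, $B$ accepts with probability at least $9/10$. This contradicts the assumption, so the contrapositive is not really needed and the claim follows directly.

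The main technical point is the composition step. Running the $\rnKt$ program for $x$, which uses randomness and nondeterminism, and then applying the deterministic $G$ preserves single-valuedness and the success probability $2/3$: on each good random string, every nondeterministic branch either outputs $x$, or $\bot$, and is then mapped to $G(x,z)$ or $\bot$. The only other care needed is the program encoding, so that the overhead from hardcoding $z$ and $d$ is additive $O(d+\log n)$. I expect no real obstacle beyond checking that $\log^{C}m$ is at most $\log^{C'}n/2-O(\log^3 n)$ for large $n$, which holds because $C'$ exceeds $C$ by $10$.
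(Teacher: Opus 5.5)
Your final argument is the paper's proof: storing the seed $z$ gives $\rnKt(G(x,z))\le\rnKt(x)+d+O(\log n)$, which puts every output $G(x,z)$ in $\mathcal{NO}$, so $A$ rejects each one and $B$ accepts with probability at least $9/10$. You correctly compare against the threshold $m-\log^C m$ at the output length $m$, which the paper writes loosely as $n-\log^C n$; the opening contrapositive paragraph, including the incorrect ``$1/10-1/10=\Omega(1)$'' step, plays no role and should simply be deleted.
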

    \begin{proof}
        For any $z\in\{0,1\}^d$, by storing $z$ in the description of the program and noticing that $G$ is poly-time computable, for large enough $n$ we have that
        \[
            \rnKt(G(x,z))\le\rnKt(x)+|z|+O(\log n)\le n-\log^{C'}(n)+O(\log ^3 n)\le n-\log^C(n).
        \]
        Hence $G(x,z)$ is in $\mathcal{NO}$, and $A$ rejects with probability at least $9/10$. In other words, $B(x)$ accepts with probability at least $9/10$.
    \end{proof}
    \begin{claim}\label{clm:worst-avg-rnKt-upper}
        For any $x$ such that $\rnKt(x)\ge n-1$, $B$ accepts $x$ with probability at most $7/10$.
    \end{claim}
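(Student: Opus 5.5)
We argue by contraposition, using $A$ as a distinguisher for the generator $G(x;\cdot)$ and the reconstruction guarantee of \Cref{l:black-box-PRG} (in the form of \Cref{cor:rnkt-recon}). Suppose that for some $x\in\{0,1\}^n$ with $\rnKt(x)\ge n-1$, $B$ accepts with probability greater than $7/10$. Then $\Prob_{z}[A(G(x,z))\text{ rejects}]>7/10$. On the other hand, for uniform $w\sim\mathcal{U}_m$ we have $w\in\mathcal{YES}$ with probability at least $1/2$, so $\Prob_w[A(w)\text{ rejects}]\le 1/2+ 1/10\cdot 1/2 \le 11/20$, using that $A$ rejects each $\mathcal{YES}$ string with probability at most $1/10$. (Here we apply the density hypothesis at length $m$, not $n$; this is fine since $m$ grows with $n$.) Hence the function $D(w):=1-A(w)$, after fixing or averaging over $A$'s internal randomness, distinguishes $G(x;\mathcal{U}_d)$ from $\mathcal{U}_m$ with advantage at least $7/10-11/20>1/n$ for large $n$.

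The randomness of $A$ is handled exactly as in \Cref{cor:rnkt-recon}. We take $u$ to be $A$'s random string (of length $\ell=n^{\log^C n}$), with empty conditioning string $v$. We view $D(w,u)$ as a deterministic predicate with non-adaptive $\NP$ queries. The corollary needs $\ell\le\poly(n/\varepsilon)$, but here $\ell$ is quasi-polynomial. We handle this by re-running the proof of \Cref{cor:rnkt-recon} with $\ell$ quasi-polynomial: its argument only uses Markov's inequality and the reconstruction of \Cref{l:black-box-PRG}, so it still gives time bound $t=\poly(n)\cdot n^{\log^C n}$. The result is
\[\rK^{t,\parallel \NP}(x)\le m+O(\log^3 n),\]
where the reconstruction queries $D$ non-adaptively and each $D$-evaluation makes parallel $\NP$ queries, so the whole procedure still makes only non-adaptive $\NP$ queries.

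Next we convert to $\rnK$. By \Cref{l:rKt--parallel-NP}, $\rnK^{t^c}_{1/t}(x)\le m+O(\log^3 n)+O(\log t)$. By success amplification (\Cref{l:success-amp-rnK}) with $\delta=1/t$, we get $\rnK^{\poly(t)}(x)\le m+O(\log^{C+2} n)$, and hence
\[\rnKt(x)\le m+O(\log^{C+2}n)=n-\tfrac{\log^{C'}n}{2}+O(\log^{C+2}n)<n-1,\]
since $C'=C+10$. This contradicts $\rnKt(x)\ge n-1$.

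The main obstacle is the accounting in this last chain. The $\log t$ terms coming from \Cref{l:rKt--parallel-NP} and from the amplification overhead $\log p(n\log(t/\varepsilon)/\delta)$ are only $\poly(\log n)$ of order $\log^{C+1}n$, since $\log t=O(\log^{C+1}n)$. They must stay below $\log^{C'}n/2$, which the choice $C'=C+10$ guarantees. The other point to check is that \Cref{cor:rnkt-recon} extends to quasi-polynomial $\ell$ with all oracle queries still parallel.

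Combining this with \Cref{clm:worst-avg-rnKt-lower} gives acceptance probability at least $9/10$ on YES instances versus at most $7/10$ on NO instances. A constant gap amplifies in the standard way, so $B$ decides the gap problem.
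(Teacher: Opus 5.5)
Your proposal is correct and follows the paper's proof essentially step for step. You build the same $3/20$-advantage distinguisher from $A$, using density at length $m$, and reconstruct via \Cref{cor:rnkt-recon} to get $\rK^{n^{\log^{C+1}n},\parallel\NP}(x)\le m+O(\log^3 n)$. You then convert with \Cref{l:rKt--parallel-NP} and \Cref{l:success-amp-rnK} to $\rnKt(x)\le m+\poly(\log n)<n-1$. Your remark that \Cref{cor:rnkt-recon} has to be re-run with quasi-polynomial randomness length $\ell$, with all $\NP$ queries still parallel, makes explicit a point the paper's proof glosses over. One small slip: the corollary needs advantage $2\varepsilon=2/n$, not just $>1/n$, but $3/20$ exceeds $2/n$ for large $n$.
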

    \begin{proof}
        For the sake of contradiction, assume there exists $x$ such that $\rnKt(x)\ge n-1$, but $B$ accepts $x$ with probability larger than $7/10$. By definition of $B$, we get
        \[
            \Prob_{z\sim U_d,A}\left[A^{||\NP}(G(x,z))=1\right]\le \frac3{10}.
        \]
        Notice that for $s$ uniformly sampled from $\{0,1\}^m$, with probability at least $1/2$ $s$ is in $\mathcal{YES}$, so we get
        \[
            \Prob_{s\sim U_m,A}\left[A^{||\NP}(s)=1\right]\ge\frac{9}{20}.
        \]
        Hence we have that
        \[
            \Prob_{s\sim U_m,A}\left[A^{||\NP}(s)=1\right]-\Prob_{z\sim U_d,A}\left[A^{||\NP}(G(x,z))=1\right]\ge\frac{3}{20}\ge\frac2n.
        \]
        In other words, $A^{||\NP}(\cdot)$ is a $2/n$-distinguisher for $G(x,\cdot)$. Then by \Cref{cor:rnkt-recon},
        we would get
        \[
            \rK^{\poly(n),||A^{||\NP}}(x)\le m+O(\log^3 n).
        \]
        Since $A$ runs in time $n^{\log^Cn}$, we can store $A$ in the description, and let the program simulate $A$ when constructing $x$. Then the program runs in time $\poly(n)\cdot n^{\log^Cn}\le n^{\log^{C+1}n}$ for all large enough $n$. This would give
        \[
            \rK^{n^{\log^{C+1}n},||\NP}(x)\le m+O(\log^3n).
        \]
        By \Cref{l:rKt--parallel-NP,l:success-amp-rnK}, for all large enough $n$ we have
        \[
            \rnK^{n^{\log ^{C+2}n}}(x)\le\rK^{n^{\log^{C+1}n},||\NP}(x)+O(\log^{C+2}n)\le m+O(\log^{C+2}n).
        \]
        Hence for all large enough $n$ we get
        \[
            \rnKt(x)\le\rnK^{n^{\log^{C+2}n}}(x)+\log^{C+3}n\le m+O(\log^{C+3}n)<n-1.
        \]
        But this contradicts the assumption that $\rnKt(x)\ge n-1$.
    \end{proof}
    Combining \Cref{clm:worst-avg-rnKt-lower} and \Cref{clm:worst-avg-rnKt-upper}, there is a constant gap between the accepting probability of two cases for $B$. By a standard success amplification argument, we obtain an algorithm $B'$ that decides $\gapmrnktp[n-\log^{C'}n,n-1]$ with error at most $1/3$, while running in time $n^{\log^{C'}n}$ and making non-adaptive queries to $\NP$.
\end{proof}

\section{A Quick Lower Bound Result via the Correspondence Theorem: Proof of \Cref{thm:application_correspondence}}\label{sec:quick_lower_bound}

In this section, for completeness, we provide a proof of \Cref{thm:application_correspondence}, which follows from \Cref{thm:consequences_SoI} and \Cref{thm:equivalence_SoI}.

\begin{proof}[Proof of \Cref{thm:application_correspondence}]

    First consider the case that, for some constant $c \in \nat$, SoI holds for $\rnKt$ with error $e(n) = (\log n)^c$. Then, by Item (ii) of \Cref{thm:consequences_SoI},
    \[\MKtP \notin \NTIME[n^{\poly(\log n)}] \cap \coNTIME[n^{\poly(\log n)}].\]
    Otherwise, for every constant $c \in \nat$, SoI does not hold for $\rnKt$ with error $e(n) = (\log n)^c$. In this case, by the implication from Item (ii) to Item (i) in \Cref{thm:equivalence_SoI} (as shown in \Cref{lem:equivalence-ii-i}), 
    \[\gapmrnktp[n-\poly(\log n),n-1]\notin\pr\BPTIME[n^{\poly(\log n)}]^{||\NP}.\]
    Since one of the conditions above must hold, at least one of the consequences must hold. This completes the proof.
\end{proof}

\section{Unconditional Hardness of \texorpdfstring{$\mathsf{MnKtP}$}{MnKtP}: Proof of \Cref{thm:intro_nKt_lb}}

In this section we prove the moreover part of \Cref{thm:intro_nKt_lb}, from which the main part follows directly. 
We first define our technical tool: $\rK$, but with restricted length of random bits.
\begin{definition}\label{d:rK-randomness}
    For any oracle $\mathcal{O}$, any time bound $t\in\mathbb{N}$, any length $r\in\mathbb{N}$, any real number $\delta\in[0,1]$, and any strings $x,y\in\{0,1\}^*$, we define $\rK_\delta^{(t;r),\mathcal{O}}(x\mid y)$ as
    \[
        \rK_\delta^{(t;r),\mathcal{O}}(x\mid y)=\min_{M}\left\{|M|\middle\vert\Prob_{w\sim\{0,1\}^r}\left[M^{\mathcal{O}}(y,w)=x\text{ in $t$ steps}\right]\ge\delta\right\}.
    \]
    We use $\rK_\delta^{(t;r),\mathcal{O}_{\le l}}(x\mid y)$ when $M$ only make queries to $\mathcal{O}$ on strings of length at most $l$. We may omit the conditional string $y$ when it is empty, simplify $(t; r)$ to $t$ when $r \ge t$, and omit $\delta$ when $\delta=2/3$.
\end{definition}

\subsection{Randomness Efficient Success Amplification for \texorpdfstring{$\rK$}{rK}}\label{ss:rK-rand-amp}
In this section, we show how to amplify the success probability of $\rK$ with only 2 times randomness used.
\begin{lemma}\label{l:rK-randomness-amplify}
     For any oracle $\mathcal{O}$, any time bound $t\in\mathbb{N}$, any length $n,m,r,l\in\mathbb{N}$, any real number $\delta\in(0,1]$, and any strings $x\in\{0,1\}^n,y\in\{0,1\}^m$ satisfying $n < t$, we have
     \[
        \rK_{2/3}^{(\poly(t/\delta);2r),\mathcal{O}_{\le l}}(x\mid y)\le\rK_\delta^{(t;r),\mathcal{O}_{\le l}}(x\mid y)+O(\log n/\delta).
     \]
\end{lemma}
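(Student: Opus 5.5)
Let $M$ be an optimal program witnessing $\rK_\delta^{(t;r),\mathcal{O}_{\le l}}(x\mid y)=|M|$, so that $\Pr_{w\sim\{0,1\}^r}[M^{\mathcal{O}}(y,w)=x]\ge\delta$. The plan is to amplify by running $M$ on $k=O(1/\delta)$ pairwise-independent samples of its random string, all derived from a seed of $2r$ bits. We then pick out $x$ among the outputs using a short hash-based fingerprint, in the spirit of the isolation step in the proof of \Cref{l:success-amp-rnK}. Concretely, we identify $\{0,1\}^r$ with $\mathbb{F}_{2^r}$ and, from a seed $(a,b)\in\mathbb{F}_{2^r}^2$, generate $w_i=a+b\cdot\alpha_i$ for $k$ distinct fixed field elements $\alpha_1,\ldots,\alpha_k$. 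This needs $k\le 2^r$; if $r$ is too small for that, we can instead enumerate all $2^r\le O(1/\delta)$ strings deterministically. The $w_i$ are pairwise independent and each is uniform. The new program stores $M$, the integer $k$, a hash description, and the hash value of $x$. It computes $M^{\mathcal{O}}(y,w_i)$ for each $i$, uses the same oracle restriction $\mathcal{O}_{\le l}$ since it only simulates $M$, and outputs the candidate determined by the hashing. Field arithmetic in $\mathbb{F}_{2^r}$ costs $\poly(r)\le\poly(t)$ time per sample (we may assume $r\le t$, since $M$ reads at most $t$ random bits), so the running time is $\poly(t/\delta)$.

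The analysis is a Chebyshev argument. Let $X=\sum_i \mathbf{1}[M(y,w_i)=x]$. Then $\Exp[X]\ge\delta k$ and $\mathrm{Var}[X]\le k$, so $\Pr[X<\delta k/2]\le 4/(\delta^2k)$. With $k=\lceil 12/\delta^2\rceil$ this is at most $1/3$; the resulting $k$ is $\poly(1/\delta)$, which is acceptable for both time and description length. Let $A=\{s:\Pr_w[M(y,w)=s]\ge\delta/4\}$, so $|A|\le 4/\delta$ and $x\in A$. The selection rule is: output the unique string that appears at least $\delta k/2$ times among the outputs and whose hash matches $H(x)$.

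The main obstacle is that strings outside $A$ might also reach the threshold $\delta k/2$ through correlation, and pairwise independence does not give a union bound over exponentially many such strings. To get around this, I would not threshold on frequency at all. Instead I would fix a hash $H$, chosen via \Cref{l:string-isolation} and costing $O(\log(n/\delta))$ bits, that is injective on the at most $k$ distinct outputs. The catch is that the set of outputs depends on the seed. The remedy I would try first is to count the strings $s$ that ever appear with probability $\ge 1/k^3$ per sample; there are at most $k^3$ of them. By a union bound, with probability $\ge 1-k\cdot(\text{mass of rare outputs})$ every sample lands in this set $S$. The rare mass may not be small, however. So the alternative I would actually commit to is: output the hash-matching string that occurs most often, with $H$ isolating $x$ from all strings other than $x$ in $A$. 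Strings outside $A$ each have expected count $<\delta k/4$, and by Chebyshev any fixed one exceeds $\delta k/2$ only rarely. To control all of them simultaneously, we note that the number of hash collisions with $H(x)$ among non-$A$ strings is bounded in expectation if $H$ has range $\poly(k/\delta)$: by pairwise independence of $H$ over a random key, the expected number of outputs colliding with $H(x)$ is at most $k/\text{range}$. Averaging over the key, there is a fixed key (costing $O(\log(n/\delta))$ bits) for which the total failure probability, from Chebyshev plus collisions, is at most $1/3$.

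The program then outputs the most frequent output among those with hash equal to $H(x)$. With probability $\ge 2/3$, $x$ occurs at least $\delta k/2$ times and no other output shares its hash, so the output is $x$. The total description length is $|M|+O(\log(n/\delta))$, the seed length is $2r$, and the time is $\poly(t/\delta)$, as claimed.
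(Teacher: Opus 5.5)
Your argument is correct in substance, but it follows a different route from the paper's. The paper keeps the frequency threshold. With $k=256/\delta^2$ samples $w_i=a\cdot i+b$, it lets $S'$ be the set of outputs occurring at least $\delta k/2$ times. It then proves $S'\subseteq S_{\delta/8}$ by partitioning $\overline{S_{\delta/8}}$ into at most $16/\delta$ buckets, each of mass at most $\delta/4$, applying Chebyshev to each bucket, and taking a union bound over the buckets. This is how it gets around exactly the obstacle you identified. Afterwards it only needs a hash that is injective on the fixed set $S_{\delta/8}$, of size at most $8/\delta$, which \Cref{l:string-isolation} provides.

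You instead push all control of spurious candidates into the randomness of the hash key. For each $i$, $\Pr_{\mathrm{key},\mathrm{seed}}[x_i\neq x \wedge H(x_i)=H(x)]$ is at most the collision probability, because the collision bound holds for every fixed value of $x_i\neq x$. So the union bound over the $k$ samples uses no independence among the $w_i$ at all, and averaging then fixes a good key. This makes $A$ and the threshold unnecessary. The sample side only needs $x$ to appear at least once, and pairwise independence gives $\Pr[X=0]\le 1/(k\delta)$, so $k=O(1/\delta)$ already suffices. The price is that you need a hash family with a collision bound for \emph{every} pair of distinct strings, rather than injectivity on one fixed small set.

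One correction is needed. A genuinely pairwise-independent family on $\{0,1\}^n$ requires keys of at least $n$ bits, so your claimed key cost of $O(\log(n/\delta))$ is not available for that family. There are two easy fixes:
\begin{itemize}
\item Use an $\eta$-almost-universal family with $\eta\le 1/(10k)$ and seed length $O(\log(n/\eta))$, for example polynomial-evaluation hashing.
\item Stay within the paper's toolkit and apply the ``moreover'' clause of \Cref{l:string-isolation} to the seed-dependent set of $n$-bit strings among $x,x_1,\ldots,x_k$, which has size at most $k+1$. For every seed, a random $u$ isolates this set with probability $0.99$. Hence some fixed $u$ fails on at most a $0.01$ fraction of seeds. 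This resolves precisely the ``catch'' you raised about the output set depending on the seed.
\end{itemize}
Also fix the constants: with $k=\lceil 12/\delta^2\rceil$ your Chebyshev term alone is already $1/3$, leaving no room for the collision term.
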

Our proof is similar to \cite{DBLP:conf/icalp/LuO21}, but to save randomness, we use the pairwise independent sampling trick. We then use the following string isolation lemma (\cite[Lemma 15]{DBLP:conf/icalp/LuO21}), and save the correct hash value of $x$.
\begin{lemma}[String Isolation Lemma \cite{DBLP:conf/icalp/LuO21}]\label{l:string-isolation}
    There is a deterministic algorithm $A$ for which the following holds. For any set $W\subseteq\{0,1\}^n$ of size $l$, there exists a string $u\in\{0,1\}^{O(\log(n\cdot l))}$ such that $A(1^n,u)$ runs in $\poly(n)$ time and outputs a Boolean circuit that computes a function $H:\{0,1\}^n\to\{0,1\}^{O(\log l)}$ with the following property:
    \[
        H(w)\ne H(w')\text{ for every distinct pair }w,w'\in W.
    \]
    Moreover, the same guarantee is achieved by a random string $u$ of the same length with probability at least $0.99$.
\end{lemma}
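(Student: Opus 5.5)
The statement to prove is \Cref{l:string-isolation}. I would prove it by a probabilistic construction of a hash family with short seeds, followed by derandomization-by-existence.

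First, I fix a seed format. Let $k = \lceil c\log(l)\rceil$ with $c$ large enough that $2^{k} \ge 100\, l^2$. A seed consists of $u = (p, a)$, where $p$ is a prime in a range $[N, 2N]$ with $N = \poly(n\cdot l)$, and $a \in \mathbb{Z}_p$. Both parts can be described with $O(\log(n\cdot l))$ bits.

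The algorithm $A(1^n,u)$ outputs a circuit computing
\[
H(w) \;=\; \Bigl(\textstyle\sum_i w_i a^{i} \bmod p\Bigr) \bmod 2^{k}.
\]
It is cleaner to avoid the final reduction and output the full residue mod $p$, which is $O(\log(n l))$ bits. To get the required output length $O(\log l)$, I compose two stages.

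\textbf{Stage 1: polynomial hashing mod a random prime.} For $w \ne w'$ in $W$, the polynomial $\sum_i (w_i - w'_i) X^{i}$ is nonzero over $\mathbb{Z}$ with degree at most $n$. Since its coefficients are $0,\pm 1$, it is nonzero mod every $p$, so it has at most $n$ roots in $\mathbb{Z}_p$. Taking $p > 100\, n\, l^2$ and a uniformly random $a$, the collision probability for a fixed pair is at most $n/p$. A union bound over at most $l^2$ pairs gives injectivity on $W$ with probability at least $0.995$. This reduces $W$ injectively into $\{0,1\}^{O(\log(nl))}$.

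\textbf{Stage 2: pairwise-independent hash to $O(\log l)$ bits.} Apply a pairwise-independent family $h_{A,b}(v) = Av + b$ over $\mathbb{F}_2$, from $O(\log(nl))$ bits to $k$ bits. Any fixed pair collides with probability $2^{-k} \le 1/(100 l^2)$, so a union bound gives a failure probability of at most $0.005$. The seed for this stage has length $O(\log(nl)\cdot \log l)$, which is too long. I would fix this in one of two ways:
\begin{itemize}
\item use a modular hash $v \mapsto (\alpha v + \beta) \bmod q$ for a prime $q \in [100 l^2, 200 l^2]$, where $\alpha, \beta \in \mathbb{Z}_{q'}$ for a prime $q' > 2^{O(\log nl)}$; this is the Carter–Wegman style, with collision probability about $2/q$;
\item or, more simply, reuse Stage 1 with a second prime $q \approx 100\, l^2 \cdot (\text{degree})$, treating the Stage 1 output as a polynomial of degree $O(\log(nl))$. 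This gives output length $O(\log l + \log\log n)$.
\end{itemize}

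The main obstacle is exactly this final compression from $O(\log(nl))$ to $O(\log l)$ output bits while keeping the seed at $O(\log(nl))$ and the collision probability at $1/l^2$. My first choice is the Carter–Wegman hash mod primes, $((\alpha v + \beta) \bmod q') \bmod m$ with $m = \Theta(l^2)$. It is universal up to a factor of $2$, so each pair collides with probability $O(1/l^2)$. It uses $O(\log(nl))$ seed bits and outputs $O(\log l)$ bits, which resolves the issue cleanly.

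To make the primes part of the seed, I include them in $u$ together with $a$, $\alpha$, and $\beta$. The circuit checks primality deterministically in $\poly(\log)$ time; if a component is not prime, it outputs a constant function, and such seeds simply count as failures. By the prime number theorem, a random number in the range is prime with probability $\Omega(1/\log N)$, which is not $0.99$. So instead I let the seed contain indices of primes. Alternatively, I fix the primes deterministically by having $A$ find the least prime above each threshold by trial division in $\poly(n)$ time, so that only $a, \alpha, \beta$ are random. Then the total failure probability is at most $0.01$, giving the ``moreover'' part, and the existence of a good $u$ follows immediately. The circuit evaluates polynomials mod primes of $O(\log(nl))$ bits, so its size and $A$'s running time are $\poly(n)$, since $l \le 2^n$.
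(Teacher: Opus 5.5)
The paper does not prove this lemma (it imports it from \cite{DBLP:conf/icalp/LuO21}), so your argument has to stand on its own. Its skeleton is fine. Fingerprinting $w\mapsto\sum_i w_i a^i$ over a field of size at least $100nl^2$ separates a fixed pair except with probability $n/p<1/(100l^2)$. Carter--Wegman with $q'>p$ and $m=\Theta(l^2)$ then compresses to $O(\log l)$ bits, with per-pair collision probability at most $1/m+1/q'$. A union bound over fewer than $l^2/2$ pairs, plus averaging, gives both claims.

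The step that fails is how $A$ obtains its primes. Trial division for the least prime above a threshold $N=\poly(n\cdot l)$ costs on the order of $\sqrt N$ divisions, e.g.\ $\Theta(l\sqrt n)$ for $N\approx 100nl^2$. The lemma allows any $l\le 2^n$, and already for $l=n^{\omega(1)}$ this is superpolynomial in $n$. Your closing remark that $l\le 2^n$ only bounds the bit length of the arithmetic, not the cost of the search. A cleverer search will not fix this, since no deterministic $\poly(k)$-time algorithm that outputs a $k$-bit prime is known. Decoding an index into the primes of $[N,2N]$ again takes $\poly(nl)$ time. Writing the primes themselves into $u$ rescues the existence claim but, as you observed, destroys the \emph{moreover} clause, because a random $O(\log(nl))$-bit number is prime with probability only $\Theta(1/\log(nl))$.

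The clean repair is to drop primes and work in $\mathbb{F}_{2^k}$ with $2^k\ge 100nl^2$ and $k=\Theta(\log(nl))$. Here $A$ computes an irreducible polynomial of degree $k$ over $\mathbb{F}_2$ deterministically in $\poly(k)$ time, either via Shoup's algorithm or via the explicit $x^{2\cdot 3^j}+x^{3^j}+1$, which enlarges $k$ by at most a factor of $3$. Take $H_1(w)=\sum_i w_i a^i$; a difference polynomial has coefficients $w_i\oplus w'_i$, hence at most $n$ roots. Then output the first $k'=\lceil\log(200l^2)\rceil$ coordinates of $\alpha H_1(w)+\beta$, which is exactly pairwise independent.

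The seed $(a,\alpha,\beta)$ is then $3k$ uniform bits. This also removes an issue you left unaddressed: uniform bit strings do not induce uniform elements of $\mathbb{Z}_p$. The failure probability is at most $\binom{l}{2}\cdot 3/(200l^2)<0.0075$, and $A$ runs in $\poly(n)$ time. You should also make $k$ and $k'$ functions of $(n,|u|)$. $A$ sees only $1^n$ and $u$, and a length prefix stored inside $u$ would be scrambled by a random seed.

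Finally, your prime-based version does give a $\poly(n\cdot l)$-time $A$. That would already suffice for both places this paper uses the lemma, where $l=O(1/\delta)$ and the available time is $\poly(t/\delta)$. It does not, however, give the $\poly(n)$ bound as stated.
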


\begin{proof}[Proof of \Cref{l:rK-randomness-amplify}]
    Let $s=\rK_\delta^{(t;r),\mathcal{O}_{\le l}}(x\mid y)$, and let $M$ be the corresponding program. Let $A$ be the algorithm of \Cref{l:string-isolation}. For any set $T\subseteq\{0,1\}^n$, we define $p_T$ as
    \[
        p_T=\Prob_{w\sim\mathcal{U}_r}\left[M^{\mathcal{O}_{\le l}}(y,w)\text{ outputs some $z\in T$ in $t$ steps}\right].
    \]
    And for any $\gamma\in(0,1)$, we define $S_\gamma$ as
    \[
        S_\gamma=\left\{z\in\{0,1\}^n\ \middle\vert\ p_{\{z\}}\ge \gamma\right\}.
    \]
    One can easily see that $|S_p|\le1/p$. Now applying \Cref{l:string-isolation} to $S_{\delta/8}$, there exists a string $u\in\{0,1\}^{O(\log n/\delta)}$ such that $A(1^n,u)$ runs in time $\poly(n)$ and outputs a Boolean circuit computing a function $H:\{0,1\}^n\to\{0,1\}^{O(\log 1/\delta)}$, such that
    \[
        H(z)\ne H(z')\text{ for every distinct pair }z,z'\in S_{\delta/8}.
    \]
    Our description is $(M,A,u,H(x))$. For now, assume $2^{r} \geq 256/\delta^2$. Our algorithm $B$ that recovers $x$ from the description runs as follows:
    \begin{enumerate}
        \item Set $k=256/\delta^2$. Sample $a,b\sim\mathcal{U}_r$, and compute $w_i=a\cdot i+b$ for $i\in[k]$, where the addition and multiplication are done in $\mathsf{GF}(2^r)$.
        \item For each $i\in[k]$, we run $M^{\mathcal{O}_{\le l}}(y,w_i)$ for $t$ steps, and let $x_i$ be its output. (If it does not halt within $t$ steps or the output is not of length $n$, we set $x_i=\bot$.)
        \item Let $S'$ be the set of strings appearing at least $\frac\delta2\cdot k$ times (ignoring $\bot$) in $\{x_i\}_{i\in[k]}$.
        \item Run $A(1^n,u)$ to obtain the circuit $C$. Compute $C(x')$ for every string $x'\in S'$. Output the first string $x'$ such that $C(x')$ is equal to the stored $H(x)$. (If no such string exists, output $\bot$.)
    \end{enumerate}
    Clearly the above algorithm runs in time $\poly(t/\delta)$, and only uses $2r$ random bits when sampling $a,b$. It remains to argue about the probability of outputting $x$. We will use the following claim.
    \begin{claim}\label{c:PI-sampling}
        With probability at least $2/3$ over $a,b\sim\mathcal{U}_r$, we have
        \begin{enumerate}
            \item $x\in S'$, and\label{i:PI-sampling-1}
            \item $S'\subseteq S_{\delta/8}$.\label{i:PI-sampling-2}
        \end{enumerate}
    \end{claim}
    \begin{proof}[Proof of \Cref{c:PI-sampling}]\renewcommand{\qedsymbol}{$\diamond$}
        For every set of strings $T\subseteq\{0,1\}^n$ and every $i\in[k]$, we define a random variable $V_{i,T}\in\{0,1\}$ such that $V_{i,T}=1$ iff $x_i\in T$. By properties of finite fields, $\{w_i\}_{i\in[k]}$ (and hence $\{x_i\}_{i\in[k]}$) are identically distributed and pairwise independent. Therefore, for any fixed $T$, the random variables $\{V_{i,T}\}_{i\in[k]}$ are also identically distributed and pairwise independent. From our assumptions on $x$, we have $p_{\{x\}}\ge\delta$, so the expected value of $V_{i, \{x\}}$ is at least $\delta$ for every $i\in[k]$. By Chebyshev's inequality, we get
        \begin{equation}
            \Prob\left[\sum_{i=1}^kV_{i,\{x\}}\le\frac\delta2\cdot k\right]\le\frac{\text{Var}[\sum_{i=1}^k V_{i,\{x\}}]}{(kp_{\{x\}}-\delta k/2)^2}\le\frac{kp_{\{x\}}}{k^2p_{\{x\}}^2/4}\le\frac4{k\delta}\le\frac\delta{64}.\label{eq:PI-sampling-1}
        \end{equation}
        That is, the probability of $x\notin S'$ is at most $\delta/64$. Next, we need to argue that with constant probability, $S'$ is contained in $S_{\delta/8}$, i.e. $S'$ does not contain any string of $\overline{S_{\delta/8}}$. First, we divide $\overline{S_{\delta/8}}$ into $d$ disjoint buckets $B_1,\ldots,B_d$, such that $p_{B_i}\in[\delta/8,\delta/4]$ holds for every $i\in[d-1]$. This can be done by putting the strings of $\overline{S_{\delta/8}}$ into buckets one by one, and whenever the total probability of the current bucket exceeds $\delta/8$, we add a new empty bucket. Since $p_{\{x'\}}\le\delta/8$ for any $x'\in\overline{S_{\delta/8}}$, the total probability of each bucket is in $[\delta/8,\delta/4]$ except for the last one, which might be smaller than $\delta/8$. Now for each bucket $B_j$, by Chebyshev's inequality, we have
        \begin{equation}
            \Prob\left[\sum_{i=1}^kV_{i,B_j}\ge\frac\delta2\cdot k\right]\le\frac{\text{Var}[\sum_{i=1}^k V_{i,B_j}]}{(kp_{B_j}-\delta k/2)^2}\le\frac{kp_{B_j}}{k^2\delta^2/16}\le\frac4{k\delta}\le\frac\delta{64}.\label{eq:PI-sampling-2}
        \end{equation}
        Since for each $i\in[d-1]$, $p_{B_i}\ge\delta/8$ holds, we have $d\le 8/\delta+1\le 16/\delta$. Applying union bound to \Cref{eq:PI-sampling-2}, we get
        \[
            \Prob\left[\exists j\in[d]\text{ s.t. }\sum_{i=1}^kV_{i,B_j}\ge\frac\delta2\cdot k\right]\le\frac{\delta}{64}\cdot d\le \frac14.
        \]
        Since each string $x'$ in $\overline{S_{\delta/8}}$ belongs to some bucket $B_j$, if $x'$ appears more than $\delta k/2$ times, then $B_j$ appears more than $\delta k/2$ times. Hence we can estimate the probability of $S'\nsubseteq S_{\delta/8}$:
        \begin{align}\label{eq:PI-sampling-3}
            \Prob\left[S'\not\subseteq S_{\delta/8}\right] &=\Prob\left[\exists x'\in\overline{S_{\delta/8}}\text{ s.t. }\sum_{i=1}^kV_{i,\{x'\}}\ge\frac\delta2\cdot k\right]\notag
            \\&\le\Prob\left[\exists j\in[d]\text{ s.t. }\sum_{i=1}^kV_{i,B_j}\ge\frac\delta2\cdot k\right]\notag
            \\&\le\frac14.
        \end{align}
        Applying union bound to \Cref{eq:PI-sampling-1,eq:PI-sampling-3}, with probability at least $3/4-\delta/64\ge2/3$ over $a,b\sim\mathcal{U}_r$, both \Cref{i:PI-sampling-1} and \Cref{i:PI-sampling-2} are satisfied.
    \end{proof}
    Now with probability at least $2/3$ over $a,b\sim\mathcal{U}_r$, we have both $x\in S'$ and $S'\subseteq S_{\delta/8}$. Conditioning on this, by \Cref{l:string-isolation}, $H(x)\ne H(x')$ for any $x'\in S'\setminus\{x\}$. Therefore, with probability $2/3$, the algorithm $B$ can always find the correct $x$.

    It remains to handle the case $2^{r} < 256/\delta^2$. Here, the algorithm is deterministic: it enumerates all $2^{r} < k = O(1/\delta^2)$ strings $w \in \bool^{r}$, runs
    $M^{\mathcal{O}_{\le l}}(y,w)$ for $t$ steps on each, and lets $S'$ be the set of outputs
    occurring for at least a $\delta$ fraction of the strings $w$. By the definition of $\delta$, we have
    $x \in S'$. By construction, $S' \subseteq S_{\delta}\subseteq S_{\delta/8}$, so the stored hash value $H(x)$ identifies $x$ uniquely within $S'$ by
    \Cref{l:string-isolation}. The algorithm therefore outputs $x$ with probability $1$, runs in time
    $2^{r}\cdot t\le\poly(t/\delta)$, and has the same description length as above.
    
\end{proof}

\subsection{Reconstruction from Oracle Access to Dense Language}
\begin{lemma}\label{l:MnKtP-reconstruction}
    There exists some constant $C\in\mathbb{N}$ such that the following holds. For any $b\in\mathbb{N}$ and any language $L$ satisfying $|L\cap\{0,1\}^n|\ge 2^n/n^b$ and any strings $x,y\in\{0,1\}^*$, let $n_1=|x|$ and $n_2=|y|$, and $n=n_1+n_2$. Then for any $m_1,m_2\in\mathbb{N}$ satisfying 
    \begin{align*}
        m_1&\le\rK^{(n^{C\cdot b};C\cdot n),L_{\le 3n}}(x\mid y)-C\cdot\log^3(n^b),\\
        m_2&\le\rK^{(n^{C\cdot b};C\cdot n),L_{\le 3n}}(y)-C\cdot\log^3(n^b),
    \end{align*}
    we define $G_1,G_2$ to be the PRG of \Cref{l:black-box-PRG}, with parameter $\varepsilon=1/2n^b$\emph{:}
    \begin{align*}
        G_1&:\{0,1\}^{n_1}\times\{0,1\}^{d_1}\to\{0,1\}^{m_1},\\
        G_2&:\{0,1\}^{n_2}\times\{0,1\}^{d_2}\to\{0,1\}^{m_2}.
    \end{align*}
    Then there exists $z_1\in\{0,1\}^{d_1},z_2\in\{0,1\}^{d_2}$ such that $G_1(x,z_1)\circ G_2(y,z_2)\in L$.
\end{lemma}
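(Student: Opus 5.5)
We argue by contradiction through a two-step hybrid argument, following the "meta-complexity $\implies$ SoI" template of \Cref{lem:equivalence-ii-i}, but with $L$ itself (restricted to length $m_1+m_2\le 3n$) as the distinguisher and with careful bookkeeping of the randomness length. Suppose that for every $z_1,z_2$ the string $G_1(x,z_1)\circ G_2(y,z_2)$ lies outside $L$. Let $m=m_1+m_2$. We may assume $m\ge$ a large constant times $\log^3(n^b)$, since otherwise the statement is vacuous or handled trivially. By density, $\Pr_{w\sim\mathcal{U}_m}[L(w)=1]\ge 1/m^b\ge 1/n^b$, while $\Pr_{z_1,z_2}[L(G_1(x,z_1)\circ G_2(y,z_2))=1]=0$. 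We interpolate through the hybrid $H=\mathcal{U}_{m_1}\circ G_2(y,\mathcal{U}_{d_2})$.

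\textbf{Case 1.} Suppose $\Pr[L(H)=1]\le 1/(2n^b)$. Then $D(w_1,u,v)$, defined as $L(w_1\circ v)$ with $v=G_2(y,z_2)$ fixed by averaging over $z_2$, distinguishes $G_1(x,\cdot)$ from uniform with advantage at least $1/(2n^b)=\varepsilon$ (up to sign, which we fix by complementing). It is cleaner to absorb $v$ into the conditional string: we condition on $y$ and hardcode the seed $z_2$ ($d_2=O(\log^3(n^b))$ bits), so that $v$ is computable from $y$. \Cref{l:black-box-PRG} reconstruction then gives $x$ from $A(x,z)$ (of length $m_1$), the seed $z$, and $z_2$, with success probability $1/\poly(n^b)$. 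It runs in time $\poly(n^b)$, uses $r=O(m_1)\le O(n)$ random bits, and makes queries only to $L$ at length $m\le 3n$.

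\textbf{Case 2.} Suppose instead $\Pr[L(H)=1]>1/(2n^b)$. Then $L(u\circ\cdot)$, with $u\sim\mathcal{U}_{m_1}$ treated as auxiliary randomness, is a $1/(2n^b)$-distinguisher between $\mathcal{U}_{m_2}$ and $G_2(y,\cdot)$, matching the hypothesis of \Cref{cor:rnkt-recon}. Its proof (Markov over $u$, then averaging over $z$) yields a program for $y$ of length $m_2+O(\log^3(n^b))$ with success probability $1/\poly(n^b)$. This program uses $O(m_2)$ reconstruction random bits plus $m_1$ bits for $u$, so $O(n)$ random bits in total, and its queries go to $L_{\le 3n}$.

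In either case we obtain an $\rK_\delta^{(\poly(n^b);O(n)),L_{\le3n}}$ bound with $\delta=1/\poly(n^b)$. We then apply \Cref{l:rK-randomness-amplify}, which only doubles the randomness and adds $O(\log(n/\delta))=O(b\log n)$ to the length, to reach success probability $2/3$ with time $n^{Cb}$ and randomness $Cn$. This gives $\rK^{(n^{Cb};Cn),L_{\le 3n}}(x\mid y)\le m_1+O(\log^3(n^b))$ in Case 1, or the analogous bound for $y$ with $m_2$ in Case 2. Either bound contradicts the hypothesis on $m_1$ or $m_2$ once $C$ is large. We have to fix $C$ uniformly in $b$, using that the time is $\poly(n/\varepsilon)=n^{O(b)}$ and $d=O(\log^3(n^b))$.

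The main obstacle is the randomness accounting. We must verify that the reconstruction $R$ of \Cref{l:black-box-PRG} uses only $r=O(m)$ bits, that the averaging steps over $u$ and $z_2$ can be fixed or hardcoded rather than sampled, and that the Markov step in \Cref{cor:rnkt-recon} costs only $O(n)$ bits. Only then is the final randomness length $Cn$ independent of $b$. We also need the oracle query lengths to remain at most $3n$, which holds since every query has length $m\le n_1+n_2$.
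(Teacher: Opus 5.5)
Your overall route is the paper's. You assume no pair of seeds lands in $L$ and interpolate through the hybrid $H=\mathcal{U}_{m_1}\circ G_2(y;\mathcal{U}_{d_2})$. You then turn the larger of the two gaps into a distinguisher that queries $L$ only at length $m_1+m_2\le n\le 3n$, reconstruct with $O(n)$ random bits, and amplify with \Cref{l:rK-randomness-amplify}. (The paper packages the Markov, averaging and amplification steps as \Cref{c:rK-reconstruction}, which is what you re-derive inline.)

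However, your case split is inverted, so as written neither case follows from its hypothesis. We have $\Prob[L(\mathcal{U}_m)=1]\ge 1/n^b$ and $\Prob[L(G_1(x;\mathcal{U}_{d_1})\circ G_2(y;\mathcal{U}_{d_2}))=1]=0$. So the condition $\Prob[L(H)=1]\le 1/(2n^b)$ only bounds the gap between $H$ and $G_1\circ G_2$ from \emph{above}, and gives no distinguisher for $G_1$. What it does give is $\Prob[L(\mathcal{U}_m)=1]-\Prob[L(H)=1]\ge 1/(2n^b)$, i.e.\ a distinguisher for $G_2(y;\cdot)$ with $u\sim\mathcal{U}_{m_1}$ as auxiliary randomness. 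It is the complementary condition $\Prob[L(H)=1]>1/(2n^b)$ that yields the distinguisher for $G_1(x;\cdot)$. The bodies of your two cases are correct once you exchange the hypotheses. This is exactly how the paper, and its $\rnKt$ analogue \Cref{l:MrnKtP-reconstruction}, arrange the argument.

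With that repaired, the only real difference from the paper is in the $G_1$ case. You fix a good $z_2$ by averaging and hardcode it, costing $d_2=O(\log^3(n^b))$ description bits. This gives a deterministic distinguisher to which \Cref{l:black-box-PRG} applies directly. The paper instead lets the distinguisher sample $z_2$ and pays in randomness via \Cref{c:rK-reconstruction}. Either works, and yours avoids the Markov loss in that case. In the $G_2$ case you rightly keep $u$ random, since hardcoding it would cost $m_1$ bits.

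There is one constant-level point in the $G_2$ case, shared with the paper's proof. After the Markov step over $u$, an average advantage of $1/(2n^b)$ only guarantees advantage $1/(4n^b)$ on a $1/(4n^b)$ fraction of $u$. That is below the parameter $\varepsilon=1/(2n^b)$ at which $G_2$ is instantiated; equivalently, \Cref{c:rK-reconstruction} asks for advantage $2\varepsilon$. Instantiating both generators at $1/(4n^b)$ removes this slack at no cost.
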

To prove this lemma, we need the following corollary regarding $\rK$ upper bound, which comes from the reconstruction procedure of \Cref{l:black-box-PRG}.
\begin{corollary}\label{c:rK-reconstruction}
    For sufficiently large $n,m\in\mathbb{N}$ such that $m\le 2n$, and any $\varepsilon\in(0,1)$, let $G$ be the generator of \Cref{l:black-box-PRG}. Consider any string $x\in\{0,1\}^n$, oracle $\mathcal{O}$, parameters $l,n',t\in\mathbb{N}$, string $y\in\{0,1\}^{n'}$, and algorithm $B^{(\text{-})}:\{0,1\}^m\times\{0,1\}^{n'}\times\{0,1\}^{r'}\to\{0,1\}$ running in time $t$. If it holds that
    \[
        \Prob_{\substack{z\sim\mathcal{U}_d\\w'\sim\mathcal{U}_{r'}}}\left[B^{\mathcal{O}_{\le l}}(G(x;z),y,w')=1\right]-\Prob_{\substack{w\sim\mathcal{U}_m\\ w'\sim\mathcal{U}_{r'}}}\left[B^{\mathcal{O}_{\le l}}(w,y,w')=1\right]\ge2\varepsilon,
    \]
    then we have
    \[
        \rK^{(\poly(n\cdot t/\varepsilon);O(r'+m)),\mathcal{O}_{\le l}}(x\mid y)\le m+|B|+O(\log ^3(n/\varepsilon)),
    \]
    where $|B|$ is the description length of $B$.
\end{corollary}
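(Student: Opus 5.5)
The plan is to derive this directly from the reconstruction guarantee of \Cref{l:black-box-PRG}, exactly as in \Cref{cor:rnkt-recon}, while tracking two extra resources: the number of random bits and the restriction $\mathcal{O}_{\le l}$ on oracle queries. The distinguisher is $D_{w'}(\cdot) \eqdef B^{\mathcal{O}_{\le l}}(\cdot, y, w')$, which only depends on the fixed conditional string $y$ and on the auxiliary randomness $w'$.

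\textbf{Step 1 (fixing a good $w'$ is not allowed, so average).} I will apply Markov's inequality to the advantage. The hypothesis says the expected advantage of $D_{w'}$ over $w'\sim\mathcal{U}_{r'}$ is at least $2\varepsilon$. The advantage is bounded by $1$, so with probability at least $\varepsilon$ over $w'$, $D_{w'}$ is an $\varepsilon$-distinguisher for $G(x;\cdot)$.

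For each such $w'$, \Cref{l:black-box-PRG} gives
\[
\Pr_{w\sim\mathcal{U}_r,\,z\sim\mathcal{U}_d}\bigl[R^{D_{w'}}(A(x,z),z,w)=x\bigr]\ge 1/\poly(m/\varepsilon),
\]
with $r=O(m)$. Averaging over $w'$ as well, the success probability is at least $\varepsilon/\poly(m/\varepsilon)$.

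Averaging once more, there is a fixed seed $z$ for which
\[
\Pr_{w,w'}\bigl[R^{D_{w'}}(A(x,z),z,w)=x\bigr]\ge 1/\poly(m/\varepsilon).
\]

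\textbf{Step 2 (the program).} The program hardwires $z$ (which has $d=O(\log^3(n/\varepsilon))$ bits), $A(x,z)$ (which has $m$ bits), the code of $R$, and the code of $B$ (which has $|B|$ bits). Given $y$, it reads $w\in\bool^{O(m)}$ and $w'\in\bool^{r'}$ from the random tape, for a total of $O(r'+m)$ random bits. It then runs $R$, answering each nonadaptive query $q$ by running $B^{\mathcal{O}}(q,y,w')$.

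Every oracle call made this way is a call that $B$ itself makes, so all queries have length at most $l$. The running time is $\poly(n/\varepsilon)$ invocations of a $t$-time procedure, i.e.\ $\poly(n\cdot t/\varepsilon)$. This yields
\[
\rK^{(\poly(nt/\varepsilon);\,O(r'+m)),\,\mathcal{O}_{\le l}}_{1/\poly(m/\varepsilon)}(x\mid y)\le m+|B|+O(\log^3(n/\varepsilon)).
\]

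\textbf{Step 3 (amplify to $2/3$).} I will invoke \Cref{l:rK-randomness-amplify} with $\delta=1/\poly(m/\varepsilon)$. This doubles the randomness length, keeping it $O(r'+m)$, and preserves the query-length restriction. It costs $\poly(t/\delta)$ time and $O(\log(n/\delta))=O(\log(n/\varepsilon))$ extra description bits, which are absorbed into the stated bound.

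\textbf{Main obstacle.} The delicate point is the bookkeeping of randomness. Success amplification must not blow up the number of random bits beyond $O(r'+m)$, which is exactly why the pairwise-independent amplification of \Cref{l:rK-randomness-amplify} is used rather than the standard one. One also has to check that the reconstruction's internal randomness $r=O(m)$ from \Cref{l:black-box-PRG} really is $O(m)$ and not $O(n)$, and that fixing $z$ rather than sampling it keeps the count at $O(r'+m)$.
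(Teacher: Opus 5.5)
Your proposal is correct and follows essentially the same argument as the paper: Markov's inequality over $w'$ to get an $\varepsilon$-distinguisher $D_{w'}$ with probability at least $\varepsilon$, the reconstruction guarantee of \Cref{l:black-box-PRG}, averaging to fix a good seed $z$, hardwiring $z$, $A(x,z)$, $R$ and $B$ while sampling $w$ and $w'$ (for $O(r'+m)$ random bits), and finally the randomness-efficient amplification of \Cref{l:rK-randomness-amplify}. Your bookkeeping matches the paper's as well: queries stay of length at most $l$, the running time is $\poly(n\cdot t/\varepsilon)$, and amplification costs only $O(\log(n/\varepsilon))$ extra description bits.
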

\begin{proof}
    By \Cref{l:rK-randomness-amplify}, it suffices to show that $\rK_{1/\poly(n/\varepsilon)}^{(\poly(t/\varepsilon);O(r'+m)),\mathcal{O}_{\le l}}(x\mid y)\le m+|B|+O(\log^3(n/\varepsilon))$. By Markov's inequality, we have
    \[
        \Prob_{w'\sim\mathcal{U}_{r'}}\left[\Prob_{z\sim\mathcal{U}_d}\left[B^{\mathcal{O}_{\le l}}(G(x;z),y,w')=1\right]-\Prob_{w\sim\mathcal{U}_m}\left[B^{\mathcal{O}_{\le l}}(w,y,w')=1\right]\ge\varepsilon\right]\ge\varepsilon.
    \]
    That is, with probability at least $\varepsilon$ over $w'\sim\mathcal{U}_{r'}$, $D_{w'}(\text{-}):=B^{\mathcal{O}_{\le l}}(\text{-},y,w')$ is a $\varepsilon$-distinguisher for $G$. Therefore by \Cref{l:black-box-PRG}, we have
    \[
        \Prob_{\substack{w'\sim\mathcal{U}_{r'}\\w\sim\mathcal{U}_{r}\\z\sim\mathcal{U}_d}}\left[R^{D_{w'}}(A(x,z),z,w)=x\right]\ge\frac1{\poly(n/\varepsilon)}.
    \]
    By averaging, there exists some $z\in\{0,1\}^{d}$ such that
    \[
        \Prob_{\substack{w'\sim\mathcal{U}_{r'}\\w\sim\mathcal{U}_{r}}}\left[R^{D_{w'}}(A(x,z),z,w)=x\right]\ge\frac1{\poly(n/\varepsilon)}.
    \]
    Therefore, we can store $z$, $A(x,z)$, as well as the description of $R$ and $B$. Then given $y$ and oracle access to $\mathcal{O}_{\le l}$, by sampling $w'\sim\mathcal{U}_{r'}$ and $w\sim\mathcal{U}_r$, we can simulate $R^{D_{w'}}(A(x,z),z,w)$ in time $\poly(t/\varepsilon)$ and recover $x$ with probability $1/\poly(n\cdot t/\varepsilon)$. Since $r=O(m)$, this gives
    \[
        \rK_{1/\poly(n/\varepsilon)}^{(\poly(t/\varepsilon);O(r'+m)),\mathcal{O}_{\le l}}(x\mid y)\le m+|B|+O(\log^3(n/\varepsilon)).
    \]
    Applying \Cref{l:rK-randomness-amplify} completes the proof.
\end{proof}

We may now prove \Cref{l:MnKtP-reconstruction}.

\begin{proof}[Proof of \Cref{l:MnKtP-reconstruction}]
    For the sake of contradiction, assume the contrary. Then we have
    \[
        \Prob_{\substack{z_1\sim \mathcal{U}_{d_1}\\z_2\sim\mathcal{U}_{d_2}}}[G_1(x;z_1)\circ G_2(y;z_2)\in L]=0.
    \]
    Notice that when $C$ is a large enough constant, $m_1\le n_1$ and $m_2\le n_2$ holds. Hence by our assumption on $L$, we have
    \[
        \Prob_{\substack{w_1\sim\mathcal{U}_{m_1}\\w_2\sim\mathcal{U}_{m_2}}}\left[w_1\circ w_2\in L\right]\ge\frac1{(m_1+m_2)^b}\ge\frac1{(n_1+n_2)^b}=\frac{1}{n^b}.
    \]
    Then by a hybrid argument, we either have
    \[
        \Prob_{\substack{w_1\sim\mathcal{U}_{m_1}\\z_2\sim\mathcal{U}_{d_2}}}[w_1\circ G_2(y;z_2)\in L]-\Prob_{\substack{z_1\sim \mathcal{U}_{d_1}\\z_y\sim\mathcal{U}_{d_2}}}[G_1(x;z_1)\circ G_2(y;z_2)\in L]\ge\frac1{2n^b},
    \]
    or
    \[
        \Prob_{\substack{w_1\sim\mathcal{U}_{m_1}\\w_2\sim\mathcal{U}_{m_2}}}\left[w_1\circ w_2\in L\right]-\Prob_{\substack{w_1\sim\mathcal{U}_{m_1}\\z_2\sim\mathcal{U}_{d_2}}}[w_1\circ G_2(y;z_2)\in L]\ge\frac1{2n^b}.
    \]
    For the former case, given $y$ and oracle access to $L$, there exists a uniform algorithm that $1/2n^b$-distinguishes $G_1(x;\mathcal{U}_{d_1})$ from $\mathcal{U}_{m_1}$ in linear time using linear randomness: given a string $s\in\{0,1\}^{m_1}$, it samples $z_2\sim\mathcal{U}_{d_2}$, and accepts iff $s\circ G_2(y;z_2)\in L$, which can be checked with oracle access to $L$ on length $m_1+m_2\le 3n$. Hence by \Cref{c:rK-reconstruction}, we have
    \[
        \rK^{(\poly(n^b);O(n)),L_{\le 3n}}(x\mid y)\le m_1+O(\log^3(n^b)).
    \]
    But by definition, $m_1=\rK^{(n^{C\cdot b};C\cdot n),L_{\le 3n}}(x\mid y)-C\cdot\log ^3(n^b)$, hence when $C$ is a large enough constant, we have a contradiction.
    
    Similarly, for the latter case, given oracle access to $L$, there exists a uniform algorithm that $1/2n^b$-distinguishes $G_2(y;\mathcal{U}_{d_2})$ from $\mathcal{U}_{m_2}$ in linear time using linear randomness: given a string $s\in\{0,1\}^{m_2}$, it samples $w_1\sim\mathcal{U}_{m_1}$, and accepts iff $w_1\circ s\in L$, which can be checked with oracle access to $L$ on length $m_1+m_2\le 3n$. Hence by \Cref{c:rK-reconstruction}, we have
    \[
        \rK^{(\poly(n^b);O(n)),L_{\le 3n}}(y)\le m_2+O(\log^3(n^b)).
    \]
    But by definition, $m_2=\rK^{(n^{C\cdot b};C\cdot n),L_{\le 3n}}(y)-C\cdot\log ^3(n^b)$, hence when $C$ is a large enough constant, we have a contradiction.
\end{proof}

\subsection{Derandomizing \texorpdfstring{$\rK^{L}$}{rK\^L}}\label{ss:derand-rK}
\begin{lemma}\label{l:derand-rK}
    There exists some constant $C$ such that the following holds. If there exists $\varepsilon,\gamma\in(0,1)$ such that a language $L$ satisfies the following three properties\emph{:}
    \begin{enumerate}
        \item $L\in\NTIME[2^{\varepsilon n}]\cap\coNTIME[2^{\varepsilon n}]$,
        \item $\forall x\in L$, $\nKt(x)\ge\gamma\cdot |x|$,
        \item $\forall n\in\mathbb{N}$, $L\cap\{0,1\}^n\ne\varnothing$,
    \end{enumerate}
    then for all large enough $n,m\in\mathbb{N}$, any $l,t,r\in\mathbb{N}$ and any strings $x\in\{0,1\}^n,y\in\{0,1\}^m$, we have
    \[
        \nKt(x\mid y)\le\rK^{(t;r),L_{\le l}}(x\mid y)+C\cdot(\varepsilon\cdot (n+m+l+r+1/\gamma)/\gamma+\log^3(nmlr/\gamma)+\log t).
    \]
\end{lemma}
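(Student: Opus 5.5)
We derandomize the $\rK^{(t;r),L_{\le l}}$-description of $x$ given $y$ with a second copy of the generator of \Cref{l:black-box-PRG}, seeded by a hard truth table that is guessed nondeterministically and certified by $L$. Then we collapse the $L$-oracle into the nondeterminism of $\nKt$. Let $M$ be a program witnessing $k:=\rK^{(t;r),L_{\le l}}(x\mid y)$. Hence $\Prob_{w\sim\mathcal{U}_r}[M^{L_{\le l}}(y,w)=x]\ge 2/3$. Define the distinguisher $T_{x}(w)=1$ iff $M^{L_{\le l}}(y,w)$ outputs $x$ within $t$ steps. This is an $L$-oracle algorithm of description length $k+O(\log n + \log t)$ once $x$ is compared against the output. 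Since we cannot hardwire $x$, we instead aim to fool the statistic ``output equals the plurality value''. This is handled by the last step below.

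\textbf{Step 1 (hard string via $L$).} Set $N:=\lceil C'(r+l+\log t)/\gamma\rceil$ for a suitable constant $C'$. By property 3 some $s\in L\cap\{0,1\}^N$ exists. By property 2, $\nKt(s)\ge\gamma N$. A single-valued nondeterministic program can guess $s\in\{0,1\}^N$ together with an $\NTIME[2^{\varepsilon N}]$ certificate that $s\in L$, but different paths may output different $s$. To keep the description single-valued, we do not output $s$. Rather, $s$ is used only internally, and we will argue that every certified $s$ works. Concretely, take $G_s:\{0,1\}^{N}\times\{0,1\}^{d}\to\{0,1\}^{r}$ from \Cref{l:black-box-PRG} with error $1/10$ and seed length $d=O(\log^3 N)$.

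\textbf{Step 2 (every $s\in L$ fools $M$).} Suppose some $s\in L_N$ fails to fool the test ``$M^{L_{\le l}}(y,\cdot)$ outputs $x$''. Then \Cref{c:rK-reconstruction} gives
\[\rK^{(\poly(Nt);O(r)),L_{\le l}}(s\mid x,y)\le r+O(\log^3 N).\]
Now sum: describe $x,y$ first? That costs too much. Instead, the distinguisher should depend only on $M,y$ and a plurality rule, with $y$ supplied as the conditional. We therefore apply the argument to $\nKt(s\mid y)$, where $x$ is recomputed inside the distinguisher by running $M$ on all $2^r$ seeds. That is too slow, so instead we hardwire $x$ at cost $n$. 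This is why the error term contains $n$ and is scaled by $\varepsilon/\gamma$, via the choice $N\approx(n+m+l+r)/\gamma$. Next we convert the $\rK$ upper bound on $s$ to a contradiction with $\nKt(s)\ge\gamma N$. To do so we enumerate the $O(r)$ random bits, which costs time $2^{O(r)}$ and hence $O(r)$ in $\log t$; and we simulate $L_{\le l}$ queries by guessing $\NTIME\cap\coNTIME$ certificates as in \Cref{l:K-NP-coNP-superscript}, costing time $2^{\varepsilon l}$, i.e.\ $\varepsilon l$ in the log. The outcome is $\nKt(s)\le n+m+r+O(\varepsilon l+\log^3N+\log t)<\gamma N$ for $C'$ large, a contradiction. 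This parameter balancing is the main obstacle: everything charged in $\log$(time) must stay below $\gamma N$ while $N$ itself enters the final bound only through $\varepsilon N$.

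\textbf{Step 3 (final description).} The $\nKt(x\mid y)$ program stores $M$, $N$, $t$, $l$ and $r$. It nondeterministically guesses $s\in\{0,1\}^N$ together with a certificate for $s\in L$ (time $2^{\varepsilon N}$). It then enumerates all $2^{d}=2^{O(\log^3N)}$ seeds and runs $M(y,G_s(\cdot))$, answering $L_{\le l}$ queries by guessed certificates (time $2^{\varepsilon l}$ each). Finally it outputs the majority output, which must be $x$ by Step 2, and $\bot$ on any failed verification. The result is single-valued because every certified $s$ yields $x$. Its length is $k+O(\log(nmlrt))$, and its $\log$-time is $O(\varepsilon(N+l)+\log^3 N+\log t)$. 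Substituting $N=O((n+m+l+r+1/\gamma)/\gamma)$ gives the claimed bound.
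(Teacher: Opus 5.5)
Your proposal is correct and is essentially the paper's proof: you guess a certified $s\in L\cap\{0,1\}^N$, seed the generator of \Cref{l:black-box-PRG} with it, and argue by reconstruction that any $s$ failing to fool $M^{L_{\le l}}(y,\cdot)$ would have $\nKt(s)<\gamma N$ (hardwiring $x,y$, removing the $O(r)$ coins, and simulating the $L_{\le l}$ queries via \Cref{l:K-NP-coNP-superscript}), then output the majority over all $2^{O(\log^3 N)}$ seeds and collapse the oracle; enumerating the reconstruction coins rather than storing a good choice is an inessential difference. Just fix $N=\Theta((n+m+l+r+1/\gamma)/\gamma)$ consistently and keep the $|M|+O(\log t)$ term in the reconstruction bound for $s$: your Step~1 choice omits the $n+m$ that Step~2 needs, and dropping $\log t$ is fine only after noting that $\log t\ge n$ makes the claim trivial, since $\nKt(x\mid y)\le n+O(\log n)$.
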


\begin{proof}[Proof of \Cref{l:derand-rK}]
    Let $n'=C_1(n+m+l+r+\lceil \log t\rceil+\lceil1/\gamma\rceil)$, where $C_1$ is some constant whose value will be determined later, and let $N'=\lceil n'/\gamma\rceil$. Let $G:\{0,1\}^{N'}\times\{0,1\}^d\to\{0,1\}^r$ be the pseudorandom generator of \Cref{l:black-box-PRG}, with $\varepsilon=1/6$. Let $M^{(\text{-})}$ be the program of $\rK^{(t;r),L_{\le l}}(x\mid y)$, such that
    \[
        \Prob_{w\sim\mathcal{U}_r}\left[M^{L_{\le l}}(y,w)=x\text{ in $t$ steps}\right]\ge\frac23.
    \]
    Then we claim that given a string with high $\nKt$ complexity, we can derandomize $\rK^{L}$:
    \begin{claim}\label{c:derand-rK}
        If $C_1$ is some large enough constant, then for any $s\in\{0,1\}^{N'}$ satisfying $\nKt(s)\ge n'$, it holds that
        \[
            \Prob_{z\sim\mathcal{U}_d}\left[M^{L_{\le l}}(y,G(s,z))=x\text{ in $t$ steps}\right]>\frac12.
        \]
    \end{claim}
    \begin{proof}[Proof of \Cref{c:derand-rK}]\renewcommand{\qedsymbol}{$\diamond$}
        Suppose the contrary. Then it is not hard to see that the following algorithm $B^{(-)}$ with $L_{\le l}$ oracle $1/6$-distinguishes $G(s,\mathcal{U}_d)$ from $\mathcal{U}_r$, and runs in time $\poly(t)$:
        \begin{itemize}
            \item $B$ is given as input $u\in\{0,1\}^r$ along with $x$ and $y$ as above.
            \item $B$ runs $M^{L_{\le l}}(y,u)$ for $t$ steps, and accepts iff it halts and outputs $x$.
        \end{itemize}
        Since $B$ only need to store the description of $M^{(\text{-})}$, by \Cref{c:rK-reconstruction}, we have
        \begin{align*}
            \rK^{(\poly(t\cdot N');O(r)),L_{\le l}}(s\mid x,y)&\le r+|M^{(\text{-})}|+O(\log^3 N')+O(\log t)\\
            &=r+\rK^{(t;r),L_{\le l}}(x\mid y)+O(\log ^3 N')+O(\log t)\\
            &\le O(r+n+\log ^3 N'+\log t).
        \end{align*}
        Because we can store $x\in \bool^n,y\in\{0,1\}^m$, and a ``good'' choice of $O(r)$-length randomness for $\rK$, we get that
        \begin{equation}
            \K^{\poly(t\cdot N'),L_{\le l}}(s)\le O(r+n+m+\log ^3 N'+\log t).\label{eq:derand-rK-1}
        \end{equation}
        Since $L\in\NTIME[2^{\varepsilon n}]\cap\coNTIME[2^{\varepsilon n}]$, we get
        \begin{align*}
            \nKt(s)&\le \nK^{\poly(t\cdot N'\cdot 2^{\varepsilon l})}(s)+O(\log t+\log N'+\varepsilon l)\\
            &\le\nK^{\poly(t\cdot N'),L_{\le l}}(s)+O(\log t+\log N'+\varepsilon l)\tag{by \Cref{l:K-NP-coNP-superscript}}\\
            &\le\K^{\poly(t\cdot N'),L_{\le l}}(s)+O(\log t+\log N'+\varepsilon l)\\
            &\le O(n+m+l+r+\log t+\log^3 N').\tag{by \Cref{eq:derand-rK-1}}
        \end{align*}
        Notice that 
        \begin{align*}
            \log^3 N'&\le O(\log^3 C_1+\log^3(n+m+l+r+\log t+1/\gamma)+\log ^3(1/\gamma))\\
            &\le O(C_1+n+m+l+r+\log t+1/\gamma).
        \end{align*}
        Hence we have $\nKt(s)\le O(C_1+n+m+l+r+\log t+1/\gamma)$. But by our assumption, $\nKt(s)\ge n'=C_1\cdot(n+m+l+r+\lceil\log t\rceil+\lceil1/\gamma\rceil)$. Hence when $C_1$ is some large enough constant, we get a contradiction.
    \end{proof}
    Now let $C_1$ be a large enough constant such that \Cref{c:derand-rK} holds. We define a non-deterministic algorithm $A^{(\text{-})}$ such that, when given the description of $M^{(\text{-})}$ and $y$, as well as oracle access to $L_{\le N'}$, recovers $x$. $A^{L_{\le N'}}$ works as follows:
    \begin{enumerate}
        \item $A^{L_{\le N'}}$ non-deterministically guesses a string $s\in\{0,1\}^{N'}$.
        \item $A^{L_{\le N'}}$ checks if $s\in L$, using its oracle access to $L_{\le N'}$. If $s\notin L$, then it outputs $\bot$ and halts.
        \item $A^{L_{\le N'}}$ enumerates over every $z\in\{0,1\}^d$, and simulates $M^{L_{\le l}}(y,G(s,z))$ for $t$ steps. (Because $l\le N'$, $A^{L_{\le N'}}$ can answer the oracle queries by $M^{L_{\le l}}$.) Let $\{x_z\}_{z\in\{0,1\}^d}$ be the output of the simulations. If there exists some string $\hat x$ appearing more than $2^{d-1}$ times in $\{x_z\}_{z\in\{0,1\}^d}$, then $A^{L_{\le N'}}$ outputs $\hat x$; otherwise it outputs $\bot$.
    \end{enumerate}
    By \Cref{c:derand-rK}, whenever $\nKt(s)\ge n'$, the majority of $\{x_z\}_{z\in\{0,1\}^d}$ is $x$. Since for any $s\in L\cap\{0,1\}^{N'}$, $\nKt(s)\ge\gamma\cdot N'\ge n'$, for any non-deterministic guess of $s$, $A$ always outputs either $x$ or $\bot$. Because $L\cap\{0,1\}^{N'}\ne\varnothing$ for any $N'$, there always exists some non-deterministic guess of $s$ such that $A$ outputs $x$. Hence by the definition of $\nK$, we have
    \begin{equation}
        \nK^{\poly(2^d\cdot t\cdot N'),L_{\le N'}}(x\mid y)\le \rK^{L_{\le l},(t;r)}(x\mid y)+O(\log (t\cdot N')).\label{eq:derank-rK-2}
    \end{equation}
    Therefore we have
    \begin{align*}
        \nKt(x\mid y)&\le \nK^{\poly(2^{\varepsilon N'}\cdot 2^d\cdot t)}(x\mid y)+O(\varepsilon N'+d+\log t)\\
        &\le\nK^{\poly(2^d\cdot t\cdot N'),L_{\le N'}}(x\mid y)+O(\varepsilon N'+d+\log t)\tag{by \Cref{l:K-NP-coNP-superscript}}\\
        &\le\rK^{(t;r),L_{\le l}}(x\mid y)+O(\varepsilon N'+d+\log t)\tag{by \Cref{eq:derank-rK-2}}\\
        &\le\rK^{(t;r),L_{\le l}}(x\mid y)+O(\varepsilon N'+\log ^3 N'+\log t)\tag{Since $d=O(\log^3 N')$}
    \end{align*}
    Replacing $N'$ by $\lceil C_1(n+m+l+r+\lceil \log t\rceil+\lceil1/\gamma\rceil)/\gamma\rceil$ finishes the proof.
\end{proof}

\subsection{Iterative Construction of High-Complexity Strings}\label{ss:nKt-iter-cons}
Combining \Cref{l:MnKtP-reconstruction} and \Cref{l:derand-rK}, we have the following lemma, which we will use to prove \Cref{thm:intro_nKt_lb}.

\begin{lemma}\label{l:nKt-SoI}
    There exists some constant $C\in\mathbb{N}$ such that the following holds. Suppose there exists $\varepsilon,\gamma\in(0,1)$ and $b\in\mathbb{N}$ such that a language $L$ satisfies the following three properties\emph{:}
    \begin{enumerate}
        \item $L\in\NTIME[2^{\varepsilon n}]\cap\coNTIME[2^{\varepsilon n}]$,
        \item $\forall x\in L$, $\nKt(x)\ge\gamma\cdot|x|$,
        \item $\forall n\in\mathbb{N}$, $|L\cap\{0,1\}^n|\ge2^n/n^b$.
    \end{enumerate}
    For any strings $x,y\in\{0,1\}^*$, let $n_1=|x|$, $n_2=|y|$, and $n=n_1+n_2$. Consider any $m_1,m_2\in\mathbb{N}$ satisfying
    \begin{align*}
        m_1&\le \nKt(x\mid y)-C\cdot(\varepsilon \cdot(n+1/\gamma)/\gamma+\log^3(n^b/\gamma))\\
        m_2&\le \nKt(y)-C\cdot(\varepsilon \cdot(n+1/\gamma)/\gamma+\log^3(n^b/\gamma)).
    \end{align*}
    We define $G_1,G_2$ to be the PRG of \Cref{l:black-box-PRG}, with distinguishing parameter $1/2n^b$\emph{:}
    \begin{align*}
        G_1&:\{0,1\}^{n_1}\times\{0,1\}^{d_1}\to\{0,1\}^{m_1},\\
        G_2&:\{0,1\}^{n_2}\times\{0,1\}^{d_2}\to\{0,1\}^{m_2}.
    \end{align*}
    Then there exists $z_1\in\{0,1\}^{d_1},z_2\in\{0,1\}^{d_2}$ such that $G_1(x,z_1)\circ G_2(y,z_2)\in L$.
\end{lemma}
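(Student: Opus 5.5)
The plan is to combine \Cref{l:MnKtP-reconstruction} with \Cref{l:derand-rK}. \Cref{l:MnKtP-reconstruction} already gives the desired conclusion, but under hypotheses phrased in terms of $\rK^{(n^{C_0 b};C_0 n),L_{\le 3n}}$ with $C_0$ its constant. So the whole task is to show that the $\nKt$-based bounds on $m_1,m_2$ in the present statement imply the $\rK^{L}$-based bounds required there. For this we lower bound $\rK^{L}$ by $\nKt$ using \Cref{l:derand-rK}.

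First, we check that $L$ satisfies the hypotheses of \Cref{l:derand-rK}. Items 1 and 2 are identical. Item 3 there (nonempty at every length) follows from density $|L\cap\{0,1\}^n|\ge 2^n/n^b>0$.

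Next, we apply \Cref{l:derand-rK} to the pair $(x,y)$ with parameters $t=n^{C_0 b}$, $r=C_0 n$, $l=3n$. This gives
\[
\nKt(x\mid y)\le \rK^{(n^{C_0 b};C_0 n),L_{\le 3n}}(x\mid y)+C_2\bigl(\varepsilon(n_1+n_2+3n+C_0n+1/\gamma)/\gamma+\log^3(n\cdot n\cdot 3n\cdot C_0n/\gamma)+C_0 b\log n\bigr).
\]
The error term is $O(\varepsilon(n+1/\gamma)/\gamma+\log^3(n^b/\gamma))$: the terms $\log^3(n^4/\gamma)$ and $b\log n$ are both $O(\log^3(n^b/\gamma))$ for $b\ge1$, with constants depending only on $C_0,C_2$.

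For $y$ we apply the same lemma with empty conditioning string. Here we must take care, since \Cref{l:derand-rK} is stated with $y\in\{0,1\}^m$ for large $m$. We can either condition on a dummy string, or note that the proof of \Cref{l:derand-rK} goes through with $m=0$, since only the bound $n+m+\dots$ enters. This gives the analogous inequality for $\nKt(y)$ versus $\rK^{(n^{C_0b};C_0n),L_{\le 3n}}(y)$.

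Finally, we choose $C$ large compared with $C_0$ and $C_2$. Then
\[
m_1\le\nKt(x\mid y)-C(\cdots)\le \rK^{(n^{C_0b};C_0n),L_{\le 3n}}(x\mid y)-C_0\log^3(n^b),
\]
and similarly for $m_2$. So \Cref{l:MnKtP-reconstruction}, instantiated with distinguishing parameter $1/2n^b$ and the same generators $G_1,G_2$, yields $z_1,z_2$ with $G_1(x,z_1)\circ G_2(y,z_2)\in L$. If $m_1$ or $m_2$ is negative or tiny the statement is vacuous or trivial, so we may assume both are nonnegative.

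The main obstacle is the bookkeeping in the error terms. We must verify that the "large enough $n,m$" hypothesis of \Cref{l:derand-rK} is harmless, for example by absorbing small cases into $C$ or padding. We must also confirm that the parameter $l=3n$ and the polynomial time bound $t=n^{C_0b}$ feed through to an error of the claimed form, and in particular that $\log t=O(b\log n)$ is dominated by $\log^3(n^b/\gamma)$. Everything else is a direct composition of the two lemmas.
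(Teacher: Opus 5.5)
Your proposal is correct and is exactly the paper's argument: the paper proves this lemma in one line by composing \Cref{l:derand-rK}, which lower-bounds $\rK^{(n^{C_0 b};C_0 n),L_{\le 3n}}$ by $\nKt$ up to the stated error, with \Cref{l:MnKtP-reconstruction}. Your choices $t=n^{C_0 b}$, $r=C_0 n$, $l=3n$, the absorption of $\log t$ and $\log^3(n^4/\gamma)$ into $O(\log^3(n^b/\gamma))$, and the small-length and empty-conditioning caveats are the bookkeeping the paper leaves implicit.
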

\begin{proof}
    Follows from \Cref{l:MnKtP-reconstruction} and \Cref{l:derand-rK}.
\end{proof}
\begin{proof}[Proof of \Cref{thm:intro_nKt_lb}]
    We prove the ``moreover'' part of this theorem, of which the main part is an immediate corollary. Fix some $\gamma \in (1/2, 1)$ and $b \in \nat$, and let $\varepsilon$ be a parameter to be fixed below. For the sake of contradiction, assume that there exists some language $L$ satisfying the following three properties:
    \begin{enumerate}
        \item $L\in\NTIME[2^{\varepsilon n}]\cap\coNTIME[2^{\varepsilon n}]$,
        \item $\forall x\in L$, $\nKt(x)\ge\gamma\cdot |x|$,
        \item for all sufficiently large $n\in\mathbb{N}$, $|L\cap\{0,1\}^n|\ge2^n/n^b$.
    \end{enumerate}
    Then we use the following algorithm to explicitly construct a long string $s\in L$ in short deterministic time with oracle access to $L$.
    \begin{algorithm}[H]
        \caption{Iterative construction of long string in $L$}
		\begin{algorithmic}[1]
			\Procedure{$A^L$}{$\gamma, 1^n$} 
                \State $s_0 \vcentcolon= \bot, N_1 \vcentcolon= n, \theta_1 \vcentcolon= \gamma n$.
                \For {$s' \in \{0, 1\}^n$ by lexicographical order}
                    \If {$s' \in L$}
                        \State $s_1 \leftarrow s'$.
                        \State \textbf{break}
                    \EndIf
                \EndFor
                \State $\gamma' \vcentcolon= (\gamma+1/2)/2$.
                \For {$i = 2, 3, \cdots$}
                    \State $m_{i, 1} \vcentcolon= \lceil\gamma'n/\gamma\rceil, m_{i, 2} \vcentcolon= \lceil\gamma'\theta_{i - 1}/\gamma\rceil$.
                    \State $N_i \vcentcolon= m_{i, 1} + m_{i, 2}, \varepsilon_i \vcentcolon= (n+N_{i-1})^{-b}/2$.
                    \State $\theta_i \vcentcolon= \gamma N_i$.
                    \State Instantiate $G_1: \{0, 1\}^n \times \{0, 1\}^{d_1} \to \{0, 1\}^{m_{i, 1}}$ in \Cref{l:black-box-PRG} with security parameter $\varepsilon_i$.
                    \State Instantiate $G_2: \{0, 1\}^{N_{i - 1}} \times \{0, 1\}^{d_2} \to \{0, 1\}^{m_{i, 2}}$ in \Cref{l:black-box-PRG} with security parameter $\varepsilon_i$.
                    \State $s_i \vcentcolon= \bot$.
                    \For {$x \in \{0, 1\}^n, z_1 \in \{0, 1\}^{d_1}, z_2 \in \{0, 1\}^{d_2}$ by lexicographical order}
                        \If {$G_1(x; z_1) \circ G_2(s_{i - 1}; z_2) \in L$}
                            \State $s_i \vcentcolon= G_1(x; z_1) \circ G_2(s_{i - 1}; z_2)$.
                            \State \textbf{break}
                        \EndIf
                    \EndFor
                    \If {$\theta_i > n/(2 - 2\gamma')$}
                        \State Output $s_i$.
                        \State\textbf{return}
                    \EndIf
                \EndFor
                \State Output $\bot$.
            \EndProcedure
		\end{algorithmic}
	\end{algorithm}
    Then we have the following claim for the above procedure $A^L$:
    \begin{claim}\label{c:nKt-iter-cons}
        For a fixed $\gamma$, if $\varepsilon$ is small enough, then for all large enough $n$, there exists some integer $i_0\le\log_{\gamma'}(1-\frac{1}{2\gamma'})+2$ such that
        \begin{enumerate}
            \item For any $i\le i_0$, $s_i\in L$, and
            \item $\theta_{i_0}>n/(2-2\gamma')$.
        \end{enumerate}
    \end{claim}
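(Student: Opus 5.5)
The claim splits into two parts: a membership part ($s_i\in L$ for all $i\le i_0$) and a growth part (the thresholds $\theta_i$ eventually exceed $n/(2-2\gamma')$ within the stated number of rounds). We handle the growth part first, since it is pure arithmetic, and then prove membership by induction on $i$ using \Cref{l:nKt-SoI}.

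\textbf{Growth.} By the definitions in $A^L$, we have $\theta_1=\gamma n$ and
\[
\theta_i=\gamma N_i=\gamma\lceil\gamma' n/\gamma\rceil+\gamma\lceil\gamma'\theta_{i-1}/\gamma\rceil\ \ge\ \gamma' n+\gamma'\theta_{i-1}.
\]
Unrolling the recursion gives
\[
\theta_i\ \ge\ \gamma' n\,(1+\gamma'+\dots+\gamma'^{\,i-2})+\gamma'^{\,i-1}\gamma n\ \ge\ \gamma' n\,\frac{1-\gamma'^{\,i-1}}{1-\gamma'}.
\]
Since $\gamma'>1/2$, the limit $\gamma' n/(1-\gamma')$ strictly exceeds $n/(2-2\gamma')=\tfrac{1}{2}\cdot n/(1-\gamma')$. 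The lower bound crosses the threshold as soon as $1-\gamma'^{\,i-1}>1/(2\gamma')$, that is, once $i-1>\log_{\gamma'}(1-\frac{1}{2\gamma'})$. Hence some $i_0\le\log_{\gamma'}(1-\frac1{2\gamma'})+2$ works, and $i_0$ is a constant depending only on $\gamma$. Because the algorithm stops at the first $i$ with $\theta_i$ above the threshold, we take $i_0$ to be that first index. Every earlier $i$ then has $\theta_{i-1}\le n/(2-2\gamma')$, so $N_i=O(n)$ for all $i\le i_0$, with constants depending only on $\gamma$.

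\textbf{Membership.} For $i=1$, property 3 makes $L\cap\{0,1\}^n$ nonempty, so the first loop finds $s_1\in L$. For the inductive step, assume $s_{i-1}\in L$. Property 2 gives $\nKt(s_{i-1})\ge\gamma N_{i-1}=\theta_{i-1}$.

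We choose $x\in\{0,1\}^n$ with $\nKt(x\mid s_{i-1})\ge n-O(1)$. This exists by the counting argument \Cref{l:incomp-strings} conditioned on $s_{i-1}$, since $\nKt\ge\nK^t$ for the relevant $t$.

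We then apply \Cref{l:nKt-SoI} with $y=s_{i-1}$, total length $n+N_{i-1}=O(n)$, and $m_1=m_{i,1}$, $m_2=m_{i,2}$. To meet its hypotheses we need
\[
\lceil\gamma' n/\gamma\rceil\le n-O(1)-E,\qquad \lceil\gamma'\theta_{i-1}/\gamma\rceil\le\theta_{i-1}-E,
\]
where $E=C(\varepsilon(O(n)+1/\gamma)/\gamma+\log^3(n^{O(b)}/\gamma))$. Since $\gamma'/\gamma<1$, the slack in each inequality is $(1-\gamma'/\gamma)\cdot\Omega(n)$. Here we use $\theta_{i-1}\ge\gamma n$, which follows from monotonicity and $\theta_1=\gamma n$. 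Taking $\varepsilon$ small enough relative to $\gamma$, $(\gamma-\gamma')$ and the $O(1)$ constants makes $E$ at most that slack for large $n$.

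The lemma also needs the distinguishing parameter to be $1/2(n_1+n_2)^b$; the algorithm's choice $\varepsilon_i=(n+N_{i-1})^{-b}/2$ matches this. So \Cref{l:nKt-SoI} yields seeds $z_1,z_2$ with $G_1(x;z_1)\circ G_2(s_{i-1};z_2)\in L$, and hence the lexicographic search finds some $s_i\in L$ (not necessarily built from this particular $x$). This proves part 1 for all $i\le i_0$.

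The main obstacle is this parameter bookkeeping. The error term $E$ must stay a small linear fraction of $n$ uniformly over all $i\le i_0$. This works only because $i_0$ is bounded by a constant and $N_{i-1}=O(n)$, so the $\varepsilon\cdot(\text{length})/\gamma$ term is $O(\varepsilon n)$ with constants depending on $\gamma$ alone. We must also check that the density property 3 applies at every length $N_i$ queried, which holds because each $N_i\ge n$ is large.
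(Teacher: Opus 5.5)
Your proposal is correct and follows essentially the same route as the paper: induction on $i$ via \Cref{l:nKt-SoI} (with a counting-argument choice of $x$ and the bound $N_{i-1}=O(n)$ coming from non-termination at step $i-1$) for membership, and unrolling $\theta_i\ge\gamma' n+\gamma'\theta_{i-1}$ for the bound on $i_0$. Your bookkeeping is slightly more careful than the paper's. In particular, you make explicit that $\theta_{i-1}=\Omega(n)$ is needed for the slack in the second inequality, and you keep the $1/\gamma$ factor in the error term.
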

    \begin{proof}[Proof of \Cref{c:nKt-iter-cons}]\renewcommand{\qedsymbol}{$\diamond$}
        We need to prove two things: first, for every $i$ during the iteration, we can find some $s_i$ s.t. $s_i\ne\bot$, i.e. $s_i\in L$; second, our algorithm terminates at some $i_0$, i.e. $\theta_{i_0}>n/(2-2\gamma')$.
        
        To prove the first argument, we use induction on $i$. For $i=1$, because $L\cap\{0,1\}^n\ne\varnothing$, we can always find $s'\in L$, so $s_1\ne\bot$. Then suppose $s_{i-1}\ne\bot$ for some $i\ge 2$. Since $\gamma>1/2$, by \Cref{l:nKt-SoI}, there exists some universal constant $C$ such that whenever $m_{i,1}$ and $m_{i,2}$ satisfy
        \begin{align*}
            m_{i,1}&\le\nKt(x\mid s_{i-1})-C\cdot(\varepsilon (n+N_{i-1})+\log^3((n+N_{i-1})^b))\\
            m_{i,2}&\le\nKt(s_{i-1})-C\cdot(\varepsilon (n+N_{i-1})+\log^3((n+N_{i-1})^b)),
        \end{align*}
        there exists some $z_1\in\{0,1\}^{d_1},z_2\in\{0,1\}^{d_2}$ such that
        \[
            G_1(x,z_1)\circ G_2(s_{i-1},z_2)\in L.
        \]
        By inductive hypothesis, $s_{i-1}\in L$, hence $\nKt(s_{i-1})\ge\gamma|s_{i-1}|=\theta_{i-1}$. By a counting argument (\Cref{l:incomp-strings}), there also exists some $x\in\{0,1\}^n$ such that $\nKt(x\mid s_{i-1})\ge n$. Therefore, it suffices to show that
        \begin{align*}
            m_{i,1}&=\lceil\gamma'n/\gamma\rceil\le n-C\cdot(\varepsilon (n+N_{i-1})+\log^3((n+N_{i-1})^b))\\
            m_{i,2}&=\lceil\gamma'\theta_{i-1}/\gamma\rceil\le\theta_{i-1}-C\cdot(\varepsilon (n+N_{i-1})+\log^3((n+N_{i-1})^b).
        \end{align*}
        Since we did not terminate at $i-1$, we are guaranteed that $\theta_{i-1}\le n/(2-2\gamma')$, therefore $N_{i-1}\le\frac{n}{\gamma\cdot(2-2\gamma')}\le4n$ (By definition $\gamma\ge1/2$ and $\gamma'\le3/4$). Since $\gamma'/\gamma<1$, if $\varepsilon$ is small enough (i.e. $\varepsilon<\frac{\gamma-\gamma'}{20C}$), then for all large enough $n$, the above two inequalities holds, and we can find $z_1\in\{0,1\}^{d_1},z_2\in\{0,1\}^{d_2}$ such that $G_1(x,z_1)\circ G_2(s_{i-1},z_2)\in L$, in other words, $s_i$ exists.

        To prove the second argument, notice that $|s_1|=N_1=n$, and $N_i\ge\gamma'\cdot(N_{i-1}+n/\gamma)$. Expanding the recursion, we get
        \[
            N_i\ge\left((\gamma')^1+\cdots+(\gamma')^{i-1}\right)\cdot\frac{n}{\gamma}+(\gamma')^{i-1}\cdot n=\left(\frac{\gamma'}{\gamma}\cdot\frac{1-(\gamma')^{i-1}}{1-\gamma'}+(\gamma')^{i-1}\right)\cdot n.
        \]
        Hence we have
        \[
            \theta_i\ge\frac{\gamma'\cdot(1-(\gamma')^{i-1})}{1-\gamma'}\cdot n
        \]
        When $i-1\ge\log_{\gamma'}(1-\frac{1}{2\gamma'})$, we have $\theta_i\ge\frac{n}{2-2\gamma'}$. Hence the algorithm terminates when $i\le\log_{\gamma'}(1-\frac{1}{2\gamma'})+2$, i.e. $i_0\le\log_{\gamma'}(1-\frac{1}{2\gamma'})+2$.
    \end{proof}
    The running time of algorithm $A^L(\gamma,1^n)$ is bounded by $t_0 := 2^n\cdot\poly(n)\cdot2^{O(\log ^3 n)}\cdot i_0$, and we only need to store $n$ and $\gamma$. Without loss of generality, we may assume that $\gamma=1/2+1/2^k$ for some $k\in\mathbb{N}$, and we can store $k$, which takes $\lceil\log\log(\frac{1}{\gamma-1/2})\rceil$ bits. Also since $N_{i_0-1}\le 4n$, we have $N_{i_0}\le 5n$, so $A$ only need to query $L$ on inputs of length at most $5n$. Therefore
    \begin{equation}
        \K^{t_0,L_{\le5n}}(s_{i_0})\le O\left(\log n+\log\log\left(\frac{1}{\gamma-1/2}\right)\right).\label{eq:nKt-iter-cons-3}
    \end{equation}
    
    Let $c$ be the constant of \Cref{l:K-NP-coNP-superscript}. Applying that
    lemma with $T(m) = 2^{\varepsilon m}$ and query length $\ell \vcentcolon= 5n$, and using that
    $\K^{t_0, L_{\le 5n}}(s_{i_0}) \ge \nK^{t_0, L_{\le 5n}}(s_{i_0})$, we obtain
    \[
        \nK^{t_1}(s_{i_0}) \le \K^{t_0, L_{\le 5n}}(s_{i_0}) + O(\log\log t_0),
        \qquad
        t_1 \vcentcolon= t_0\cdot 2^{5\varepsilon n}\cdot \log^{c}\!\left(2^{5\varepsilon n}\cdot t_0\right).
    \]
    Since $\log t_0 = n + O(\log^3 n + \log i_0)$, we have
    $\log t_1 \le n + 5\varepsilon n + O(\log^3 n + \log i_0)$, and therefore
    \begin{align}
        \nKt(s_{i_0}) &\le \nK^{t_1}(s_{i_0}) + \log t_1 \notag\\
        &\le \K^{t_0, L_{\le 5n}}(s_{i_0}) + n + 5\varepsilon n + O(\log^3 n + \log i_0) \notag\\
        &\le n + 5\varepsilon n + O\left(\log^3 n + \log i_0 + \log\log\left(\frac{1}{\gamma-1/2}\right)\right)
        \tag{by \Cref{eq:nKt-iter-cons-3}}\\
        &\le n + 5\varepsilon n + O\left(\log^3 n + \log\log\left(\frac{1}{\gamma-1/2}\right)\right),
        \label{eq:nKt-iter-cons-1}
    \end{align}
    where the last inequality uses $i_0 \le O(\log\frac{1}{\gamma-1/2})$, which follows from
    $i_0\le\log_{\gamma'}(1-\frac{1}{2\gamma'})+2$ (\Cref{c:nKt-iter-cons}), $\gamma'=(\gamma+1/2)/2$
    and $1/2<\gamma<1$.

    On the other hand, since $s_{i_0}\in L$,
    \begin{align}
        \nKt(s_{i_0})\ \ge\ \gamma\cdot|s_{i_0}| = \theta_{i_0}\ \ge\ \frac{n}{2-2\gamma'}.
        \label{eq:nKt-iter-cons-2}
    \end{align}
    Writing $\Delta_\gamma \vcentcolon= \frac{1}{2-2\gamma'} - 1 > 0$, which is positive because
    $\gamma' > 1/2$, \Cref{eq:nKt-iter-cons-2} implies that $\nKt(s_{i_0}) \geq n + \Delta_{\gamma}\cdot n$; thus, the two bounds are contradictory for all sufficiently large $n$ provided
    $5\varepsilon < \Delta_\gamma/2$. Recalling the constraint from \Cref{c:nKt-iter-cons}, it
    therefore suffices to choose
    \[
        \varepsilon \ <\ \min\left\{\frac{\gamma-\gamma'}{20C},\ \frac{\Delta_\gamma}{10}\right\},
    \]
    where $C$ is the constant of \Cref{l:nKt-SoI}. This choice depends only on $\gamma$ and on the
    constants of \Cref{l:nKt-SoI} and \Cref{l:K-NP-coNP-superscript}, which completes the proof.

\end{proof}

\subsection{Limitations of the Iterative Construction for \texorpdfstring{$\gamma \leq 1/2$}{}}\label{ss:limitations_gamma}
We believe the statement is also true for $0<\gamma\le1/2$. However, our technique only works for $\gamma>1/2$. In order to see this, recall that we iteratively construct a string $s$ in time $2^{(1 +o(1))\cdot n}$ such that $\nKt(s)\ge(1+\Omega(1))\cdot n$, resulting in a contradiction. But by the iterative construction, we are only guaranteed that $\nKt(s)\ge(\gamma+\gamma^2+\gamma^3+\cdots)\cdot n-o(n)$, which is upper bounded by $\frac{\gamma}{1-\gamma}\cdot n-o(n)$. Hence when $\gamma\le 1/2$, $\frac{\gamma}{1-\gamma}\le 1$, and we don't have $\nKt(s)\ge(1+\Omega(1))\cdot n$, so the proof does not work here. 

A more intuitive explanation to this barrier arises from the SoI intuition behind this technical iterative construction (see \Cref{s:techniques}). In the setting of a dense language $L \subseteq \mathcal{R}_{\nKt[\gamma \cdot n]}$, we are able to prove that
\begin{equation}\nonumber
    \nKt(x, y) \ge \gamma(\nKt(x \mid y) + \nKt(y)) - C\varepsilon n
\end{equation}
by reconstruction (\Cref{l:MnKtP-reconstruction}) and derandomization (\Cref{l:derand-rK}). This degrades to a trivial inequality when $\gamma \le 1/2$, since
\[
    \nKt(x, y) \ge \max\{\nKt(x), \nKt(y)\} \ge \max\{\nKt(x \mid y), \nKt(y)\} \ge \frac{1}{2}(\nKt(x \mid y) + \nKt(y))
\]
holds unconditionally. Thus the derived SoI inequality does not seem helpful. On the other hand, the statement is non-trivial when $\gamma > 1/2$, and it is strong enough for us to derive a contradiction from it.

\section{Unconditional Hardness of \texorpdfstring{$\gapmrnktp$}{Gap-MrnKtP}: Proof of \Cref{thm:intro_rnKt_lbs}}

In this section, we prove the ``moreover'' parts in \Cref{thm:intro_rnKt_lbs}. These parts are strengthened versions of the plain complexity lower bounds for  $\gapmrnktp$ in \Cref{thm:intro_rnKt_lbs}. To see this, we observe the following simple lemma.

\begin{lemma}\label{l:implication_rnkt}
	Let $\theta: \mathbb{N} \to \mathbb{N}$ be such that $\theta(n) < n - 1$ and $\mathcal{C}$ be a complexity class. Suppose for every dense language $L \subseteq \mathcal{R}_{\rnKt[\theta(n)]}$, it is the case that $L\not\in \mathcal{C}$. Then $\gapmrnktp[\theta(n), n - 1] \not\in \mathsf{co}\textnormal{-}\mathcal{C}$.
\end{lemma}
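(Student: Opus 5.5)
We argue by contrapositive. Suppose $\gapmrnktp[\theta(n), n-1] \in \mathsf{co}\text{-}\mathcal{C}$; by the paper's convention for promise problems, there is a language $L' \in \mathsf{co}\text{-}\mathcal{C}$ with $\mathsf{YES} \subseteq L'$ and $\mathsf{NO} \cap L' = \varnothing$. Here $\mathsf{YES} = \{x : \rnKt(x) \le \theta(|x|)\}$ and $\mathsf{NO} = \{x : \rnKt(x) > n-1\}$. We take the complement $L \eqdef \{0,1\}^* \setminus L'$, which lies in $\mathcal{C}$. We then show that $L$ is a dense subset of $\mathcal{R}_{\rnKt[\theta(n)]}$, contradicting the hypothesis.

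\emph{Containment.} If $x \in L$, then $x \notin L'$, so $x \notin \mathsf{YES}$. Hence $\rnKt(x) > \theta(|x|)$, and in particular $\rnKt(x) \ge \theta(|x|)$, so $L \subseteq \mathcal{R}_{\rnKt[\theta(n)]}$.

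\emph{Density.} Every string in $\mathsf{NO}$ avoids $L'$, so $\mathsf{NO} \cap \{0,1\}^n \subseteq L \cap \{0,1\}^n$. It therefore suffices to show that a constant fraction of $n$-bit strings satisfy $\rnKt(x) \ge n$ (or $\ge n-1$, depending on the exact boundary of $\mathsf{NO}$). We get this by counting, as in \Cref{l:incomp-strings}. Any description witnessing $\rnKt(x) \le k$ is a pair $(\Pi,t)$ with $|\Pi| + \lceil \log t\rceil \le k$, so there are fewer than $2^{k+1}$ such descriptions. Since the success probability is $2/3 > 1/2$, each description determines at most one string. Taking $k = n-2$, at most a $1/2$ fraction of $n$-bit strings have $\rnKt(x) \le n-2$, so at least $2^{n-1}$ strings of length $n$ lie in $\mathsf{NO}$. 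Thus $|L \cap \{0,1\}^n| \ge 2^n/2 \ge 2^n/n^b$ for any $b \ge 1$, so $L$ is dense.

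The only delicate step is matching the boundary conventions: whether $\mathsf{NO}$ uses $> n-1$ or $\ge n-1$, and whether $\mathcal{R}$ uses $\ge$. This is absorbed by the slack in the counting bound, which holds for thresholds $n-O(1)$. The assumption $\theta(n) < n-1$ guarantees that $\mathsf{YES}$ and $\mathsf{NO}$ are disjoint, so the promise problem is well defined.
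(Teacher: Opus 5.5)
Your argument follows the paper's proof: pass to a language $L'\in\mathsf{co}\text{-}\mathcal{C}$ consistent with the promise problem, take its complement $L\in\mathcal{C}$, get $L\subseteq\mathcal{R}_{\rnKt[\theta(n)]}$ from $\mathsf{YES}\subseteq L'$, and get density from $\mathsf{NO}\subseteq L$. The containment step is fine. The density count, however, does not go through as you wrote it.

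\textbf{Counting pairs does not work.} You cannot bound the number of pairs $(\Pi,t)$ with $|\Pi|+\lceil\log t\rceil\le k$ by $2^{k+1}$. A program of length $\ell$ can be paired with up to $2^{k-\ell}$ time bounds, so the natural bound on pairs is about $(k+1)2^k$. For $k=n-2$ this exceeds $2^n$, so ``each pair determines at most one string'' gives nothing.

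What you need is that the string described by a program does not depend on its time bound, so that you count programs $\Pi$ alone. This follows from sequential access to the random and witness tapes. Suppose $(\Pi,t_1)$ describes $x_1$ and $(\Pi,t_2)$ describes $x_2$ with $t_1\le t_2$, and pick $r_2\sim\bool^{t_2}$. With probability at least $2/3+2/3-1>0$, both of the following hold:
\begin{itemize}
\item $r_2$ is good for $(\Pi,t_2,x_2)$;
\item its length-$t_1$ prefix $r_1$ is good for $(\Pi,t_1,x_1)$.
\end{itemize}
Take a witness $w_1$ with $\Pi(w_1;r_1)$ outputting $x_1$ within $t_1$ steps, and pad it to length $t_2$. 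On $r_2$, $\Pi$ then also outputs $x_1$ within $t_1\le t_2$ steps, since only the first $t_1$ cells of each tape are read in that time. Goodness of $r_2$ forces $x_1\in\{x_2,\bot\}$, so $x_1=x_2$. Only after this does ``fewer than $2^{k+1}$ programs of length at most $k$'' bound the number of strings with $\rnKt(x)\le k$. The paper just cites \Cref{l:incomp-strings}, which is stated for fixed-$t$ measures, so this step is implicit there too; but your explicit count is not correct as stated.

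\textbf{The boundary remark needs repair.} Under your convention $\mathsf{NO}=\{x:\rnKt(x)>n-1\}$, taking $k=n-2$ only shows that many strings have $\rnKt(x)\ge n-1$, and these need not lie in $\mathsf{NO}$. Taking $k=n-1$ in your count gives only ``fewer than $2^n$,'' which is no density at all. So the slack you invoke is not in the count as you set it up. You can fix this in either of two ways:
\begin{itemize}
\item adopt the convention in the paper's proof, where strings with $\rnKt(x)\ge n-1$ are rejected by $L_0$;
\item use the real extra slack: writing an $n$-bit output forces $t\ge n$, so $|\Pi|\le k-\lceil\log n\rceil$. Then fewer than $2^{n}/n$ strings have $\rnKt(x)\le n-1$.
\end{itemize}
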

\begin{proof}
We prove the contrapositive. Suppose $\gapmrnktp[\theta(n), n - 1] \in \mathsf{co}\textnormal{-}\mathcal{C}$. Then there exists a language $L_0 \in \mathsf{co}\textnormal{-}\mathcal{C}$ that is consistent with $\gapmrnktp[\theta(n), n - 1]$, in the sense that for every $x \in \{0,1\}^n$: if $\rnKt(x) \leq \theta(n)$, then $x \in L_0$, and if $\rnKt(x) \geq n-1$, then $x \notin L_0$. 

Let $L$ be the complement of $L_0$. Note that $L \in \mathcal{C}$. Moreover, by construction, $L \subseteq \mathcal{R}_{\rnKt[\theta(n)]}$. Finally, note that $L$ is dense, since $L_n$ contains the set $\{x \mid \rnKt(x) \geq n-1\}$, whose size is at least $2^n / 2$ by \Cref{l:incomp-strings}. 
\end{proof}

\subsection{Exponential \texorpdfstring{$\AMTIME \cap \coAMTIME$}{} Lower Bound via Iterative Construction} \label{ss:rnKt_lb_2}

In this subsection, we give proof to \Cref{thm:intro_rnKt_lbs}, \Cref{i:rnkt_lb_2}. The plan is similar to the proof of \Cref{thm:intro_nKt_lb}, where we break reconstructive black-box PRG (See \Cref{l:black-box-PRG}) under easiness assumption of $L$, then iteratively construct a long string in $L$ to derive contradiction. It shares same technical barrier $\gamma > 1/2$ with \Cref{thm:intro_nKt_lb}, as discussed in
\Cref{ss:limitations_gamma}.

\begin{lemma}\label{l:MrnKtP-reconstruction}
    For any $c > 0, 1/2 < \gamma < 1$, there exist constants $C_1, C_2 > 0$ such that for any $L \subseteq \mathcal{R}_{\rnKt[\gamma \cdot n]}$ satisfying $|L_n| \ge 2^n/n^c$, the following holds. 
    
    For all sufficiently large $n_1, n_2 \in \mathbb{N}$ and any string $x \in \{0, 1\}^{n_1}, y \in \{0, 1\}^{n_2}$, we fix any $m_1, m_2 \in \mathbb{Z}$ such that
    \begin{align*}
        m_1 & \le \rK^{(n_1 + n_2)^{C_1}, L_{\le 2n_1 + 2n_2}}(x \mid y) - C_2\log^3 (n_1 + n_2), \\
        m_2 & \le \rK^{(n_1 + n_2)^{C_1}, L_{\le 2n_1 + 2n_2}}(y) - C_2\log^3 (n_1 + n_2).
    \end{align*}
    Let $m \vcentcolon= m_1 + m_2$. If $m_1, m_2 > 0$, for any pseudorandom generators $G_1, G_2$ instantiated from \Cref{l:black-box-PRG} with parameters $\varepsilon \vcentcolon= m^{-c - 1}/2$ and
    \begin{align*}
        G_1&:\{0,1\}^{n_1}\times\{0,1\}^{d_1}\to\{0,1\}^{m_1},\\
        G_2&:\{0,1\}^{n_2}\times\{0,1\}^{d_2}\to\{0,1\}^{m_2},
    \end{align*}
    there exist $z_1 \in \{0, 1\}^{d_1}, z_2 \in \{0, 1\}^{d_2}$ such that
    \[
        G_1(x; z_1) \circ G_2(y; z_2) \in L.
    \]
\end{lemma}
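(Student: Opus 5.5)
We will argue by contradiction, following the template of \Cref{l:MnKtP-reconstruction}, with the dense language now being a subset of $\mathcal{R}_{\rnKt[\gamma\cdot n]}$. Suppose that no pair of seeds works, i.e.
\[
\Prob_{z_1\sim\mathcal{U}_{d_1},\,z_2\sim\mathcal{U}_{d_2}}\bigl[G_1(x;z_1)\circ G_2(y;z_2)\in L\bigr]=0 .
\]
Since $m_1,m_2>0$ and $|L_m|\ge 2^m/m^c$, we have
\[
\Prob_{w_1\sim\mathcal{U}_{m_1},\,w_2\sim\mathcal{U}_{m_2}}\bigl[w_1\circ w_2\in L\bigr]\ge m^{-c}.
\]
We first check that $m\le n_1+n_2$, so that all queries have length at most $2n_1+2n_2$. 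This holds because an $\rK$ upper bound of roughly the string length holds trivially: a program can print the string. Hence $m_1\le n_1+O(\log)$ and $m_2\le n_2+O(\log)$, and the subtracted $C_2\log^3$ absorbs the slack.

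We then run the two-step hybrid, passing through the middle distribution $\mathcal{U}_{m_1}\circ G_2(y;\mathcal{U}_{d_2})$. One of the two adjacent pairs must differ by at least $m^{-c}/2\ge 2\varepsilon$; this is where the choice $\varepsilon=m^{-c-1}/2$ comes in, and it leaves slack for $m\ge1$.

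In the first case, consider the distinguisher $B^{L}(s,y,z_2)$ that accepts iff $s\circ G_2(y;z_2)\in L$. It uses $O(\log^3 n)$ bits of internal randomness, runs in time $\poly(n)$, and queries $L$ only at length $m\le 2n_1+2n_2$. Applying \Cref{c:rK-reconstruction} (or, equivalently, \Cref{cor:rnkt-recon} with oracle $L_{\le 2n_1+2n_2}$) gives
\[
\rK^{\poly(n),L_{\le 2n_1+2n_2}}(x\mid y)\le m_1+|B|+O(\log^3(n/\varepsilon))=m_1+O(\log^3 n),
\]
since $\log(1/\varepsilon)=O(\log n)$. In the second case, the distinguisher $s\mapsto[w_1\circ s\in L]$ with random $w_1$ likewise gives $\rK^{\poly(n),L_{\le 2n_1+2n_2}}(y)\le m_2+O(\log^3 n)$.

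We choose $C_1$ to exceed the exponent of the $\poly(n)$ running time from the reconstruction, including the time to evaluate $G_2$ and the success amplification overhead of \Cref{l:rK-randomness-amplify}; the success-probability $2/3$ version is what the corollary already provides. We choose $C_2$ to exceed the hidden constant in $O(\log^3 n)$. With these choices, the hypotheses $m_1\le \rK^{(n_1+n_2)^{C_1},L_{\le\cdots}}(x\mid y)-C_2\log^3(n_1+n_2)$ and the analogous one for $m_2$ contradict the derived upper bounds, because $\rK$ is monotone in the time bound. The constants depend on $c$, and on $\gamma$ only through nothing essential; $\gamma$ enters only later, when the lemma is combined with a derandomization step.

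The main obstacle is bookkeeping rather than any new idea. Three points need care: the oracle query lengths must stay within $2n_1+2n_2$; the randomness length in the $\rK$ measure (here $t$ with $r\ge t$ allowed, per \Cref{d:rK-randomness}) must accommodate $O(m+d)$ random bits; and the distinguishing advantage must be met after Markov averaging in \Cref{c:rK-reconstruction}. If the degenerate case of a very small $m$ causes trouble in the inequality $m^{-c}/2\ge 2\varepsilon$, we would handle it by noting that $m^{-c}/2 = m\varepsilon\ge 2\varepsilon$ whenever $m\ge2$, and treat $m=1$ directly.
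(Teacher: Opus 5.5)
Your proposal is correct and matches the paper's proof. You use the same contradiction assumption, the same middle hybrid $\mathcal{U}_{m_1}\circ G_2(y;\mathcal{U}_{d_2})$, the same two distinguishers, and the same appeal to \Cref{c:rK-reconstruction} to contradict the assumed bounds on $m_1$ and $m_2$; as you note, the hypothesis $L\subseteq\mathcal{R}_{\rnKt[\gamma\cdot n]}$ is never used. Your bookkeeping is more careful than the paper's, and the fallback for $m=1$ is unnecessary: $m_1,m_2\ge 1$ forces $m\ge 2$, so $m^{-c}/2\ge m^{-c-1}=2\varepsilon$ holds automatically.
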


\begin{proof}
        For the sake of contradiction, assume the contrary, i.e.,
        \[
            \Prob_{\substack{z_1\sim\mathcal{U}_{d_1}\\z_2\sim\mathcal{U}_{d_2}}}\left[G_1(x; z_1) \circ G_2(y;z_2) \in L\right] = 0.
        \]
        By density assumption on $L$, we have
        \[
            \Prob_{\substack{w_1\sim\mathcal{U}_{m_1}\\w_2\sim\mathcal{U}_{m_2}}}\left[w_1\circ w_2 \in L\right]\ge \frac{1}{m^c}.
        \]
        We consider two cases: either
        \[
            \Prob_{\substack{w_1\sim\mathcal{U}_{m_1}\\z_2\sim\mathcal{U}_{d_2}}}\left[w_1\circ G_2(y;z_2) \in L\right]\le \frac{1}{2m^c},
        \]
        or
        \[
            \Prob_{\substack{w_1\sim\mathcal{U}_{m_1}\\z_2\sim\mathcal{U}_{d_2}}}\left[w_1\circ G_2(y;z_2) \in L\right] \ge \frac{1}{2m^c}.
        \]
        
        For the former case, we design $D^{L}$ as follows: Given a string $s\in\{0,1\}^{m_2}$, it samples $w_1\sim\mathcal{U}_{m_1}$ and returns whether $w_1 \circ s \in L$. Note that $m_1 \le 2n_1, m_2 \le 2n_2$, we have query length here bounded by $2n_1 + 2n_2$, as well as $m \le \poly(n_1 + n_2), \varepsilon^{-1} \le \poly(n_1 + n_2)$. 
        
        We can easily see that $D^L$ $\varepsilon$-distinguishes $G_2(y;\mathcal{U}_{d_2})$ from $\mathcal{U}_{m_2}$ in $O(m)$ time. Applying \Cref{c:rK-reconstruction} on $G_2$, we obtain
        \[
            \rK^{\poly(m / \varepsilon), L_{\le 2n_1 + 2n_2}}(y)\le m_2 + O(\log^3(n_2/\varepsilon)) = m_2 + O(\log^3(n_1 + n_2)).
        \]
        But by definition, $m_2 \leq \rK^{(n_1 + n_2)^{C_1}, L_{\le 2n_1 + 2n_2}}(y) - C_2\log^3 (n_1 + n_2)$, hence when $C_1,C_2$ are large enough constants, we have a contradiction.
        
        Similarly, for the latter case, we design $D^L$ which $\varepsilon$-distinguishes $G_1(x; \mathcal{U}_{d_1})$ from $\mathcal{U}_{m_1}$ in polynomial time: On $s\in\{0,1\}^{m_1}$, it samples $z_2\sim\mathcal{U}_{d_2}$, returns whether $s \circ G_2(y; z_2) \in L$. Hence by \Cref{c:rK-reconstruction}, we have
        \[
            \rK^{\poly(m / \varepsilon), L_{\le 2n_1 + 2n_2}}(x\mid y) \le m_1+O(\log^3(n_1 + n_2)).
        \]
        But by definition, $m_1 \leq \rK^{(n_1 + n_2)^{C_1}, L_{\le 2n_1 + 2n_2}}(x\mid y)-C_2\log^3 (n_1 + n_2)$, we then obtain a contradiction when $C_1,C_2$ are large enough.
\end{proof}

We aim to derive contradiction from \Cref{l:MrnKtP-reconstruction}, as in the proof of \Cref{thm:intro_nKt_lb}, which finishes the proof of \Cref{thm:intro_rnKt_lbs}, \Cref{i:rnkt_lb_2}.

\begin{proof}[Proof of \Cref{thm:intro_rnKt_lbs}, \Cref{i:rnkt_lb_2}]

    Fix $c > 0, 1/2 < \gamma < 1$ and $L \subseteq \mathcal{R}_{\rnKt[\gamma \cdot n]}$ such that $|L_n| \ge 2^n/n^c$. For sake of contradiction we assume that 
    \begin{equation*}
        L \in \AMTIME[2^{\varepsilon n}] \cap \coAMTIME[2^{\varepsilon n}], 
    \end{equation*}
    where $\varepsilon > 0$ is chosen below independently of $L$.
    
    Similar to \Cref{l:nKt-SoI}, we use the following algorithm to explicitly construct a long string $s \in L$ in short deterministic time with oracle access to $L$.  
    \begin{breakablealgorithm}
        \caption{Iterative construction of long string in $L$}
		\begin{algorithmic}[1]
			\Procedure{$A^L$}{$\gamma, 1^n$} 
                \State $s_0 \vcentcolon= \bot, N_1 \vcentcolon= n, \theta_1 \vcentcolon= \gamma n$.
                \For {$s' \in \{0, 1\}^n$ by lexicographical order}
                    \If {$s' \in L$}
                        \State $s_1 \leftarrow s'$.
                        \State \textbf{break}
                    \EndIf
                \EndFor
                \State $\gamma' \vcentcolon= (\gamma+1/2)/2$.
                \For {$i = 2, 3, \cdots$}
                    \State $m_{i, 1} \vcentcolon= \lceil\gamma'n/\gamma\rceil, m_{i, 2} \vcentcolon= \lceil\gamma'\theta_{i - 1}/\gamma\rceil$.
                    \State $N_i \vcentcolon= m_{i, 1} + m_{i, 2}, \varepsilon_i \vcentcolon= N_i^{-c - 1}/2$.
                    \State $\theta_i \vcentcolon= \gamma N_i$.
                    \State Instantiate $G_1: \{0, 1\}^n \times \{0, 1\}^{d_1} \to \{0, 1\}^{m_{i, 1}}$ in \Cref{l:black-box-PRG} with security parameter $\varepsilon_i$.
                    \State Instantiate $G_2: \{0, 1\}^{N_{i - 1}} \times \{0, 1\}^{d_2} \to \{0, 1\}^{m_{i, 2}}$ in \Cref{l:black-box-PRG} with security parameter $\varepsilon_i$.
                    \State $s_i \vcentcolon= \bot$.
                    \For {$x \in \{0, 1\}^n, z_1 \in \{0, 1\}^{d_1}, z_2 \in \{0, 1\}^{d_2}$ by lexicographical order}
                        \If {$G_1(x; z_1) \circ G_2(s_{i - 1}; z_2) \in L$}
                            \State $s_i \vcentcolon= G_1(x; z_1) \circ G_2(s_{i - 1}; z_2)$.
                            \State \textbf{break}
                        \EndIf
                    \EndFor
                    \If {$\theta_i > n/(2 - 2\gamma')$}
                        \State Output $s_i$.
                        \State \textbf{return}
                    \EndIf
                \EndFor
                \State Output $\bot$.
            \EndProcedure
		\end{algorithmic}
    \end{breakablealgorithm}
    
    We hereby analyse behaviour of $A^L$.
    \begin{claim} \label{l:rnkt-SoI-algo}
        For any $1/2 < \gamma < 1$, there exists $i^* \in \mathbb{N}$ such that for all sufficiently large $n \in \mathbb{N}$, $A^L(\gamma, 1^n)$ outputs in $O_\gamma(2^{n + O(\log^3 n)})$ time the string $s_{i^*} \neq \bot$ with $\rnKt(s_{i^*}) > n/(2 - 2\gamma')$. Here, the constant factor hidden in big-$O$ notation depends on $\gamma$, but can be fixed when $\gamma$ is fixed. 
    \end{claim}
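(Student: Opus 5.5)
The plan is to mirror the proof of \Cref{c:nKt-iter-cons}, replacing \Cref{l:nKt-SoI} with a combination of \Cref{l:MrnKtP-reconstruction} and a conversion from $\rK^{L}$ to $\rnKt$. That conversion should be much easier than \Cref{l:derand-rK}, because $\rnKt$ already allows randomness. Concretely, for strings $u,v$ of total length $N\le O(n)$ and oracle queries of length at most $O(N)$, \Cref{l:rK-AM-coAM-superscript} (the ``moreover'' form, with $T(\ell)=2^{\varepsilon\ell}$) gives
\[
\rnKt(u\mid v)\le \rK^{N^{C_1},L_{\le 2N}}(u\mid v)+O(\varepsilon N)+O(\log N).
\]
This holds because an $\rK$ program is a special case of an $\rnK$ program, the oracle $L\in\AMTIME\cap\coAMTIME$ can be removed at time cost $2^{O(\varepsilon N)}$, and $\log$ of that time contributes $O(\varepsilon N)$. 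Consequently, any $m_1\le \rnKt(x\mid y)-C'(\varepsilon N+\log^3 N)$ satisfies the hypothesis of \Cref{l:MrnKtP-reconstruction}, and the same holds for $m_2$ with $\rnKt(y)$.

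Next I would prove by induction on $i$ that $s_i\in L$ for every stage $i$ reached before termination. The base case holds since $L_n\neq\varnothing$ by density. For the inductive step, $s_{i-1}\in L\subseteq\mathcal{R}_{\rnKt[\gamma\cdot n]}$ gives $\rnKt(s_{i-1})\ge\theta_{i-1}$. By \Cref{l:incomp-strings} there is some $x\in\{0,1\}^n$ with $\rnKt(x\mid s_{i-1})\ge n-1$. Since the loop has not terminated, $N_{i-1}\le 4n$. Using $\gamma'/\gamma<1$, I would check that for $\varepsilon<(\gamma-\gamma')/(20C')$ and large $n$ we have $m_{i,1}=\lceil\gamma'n/\gamma\rceil\le n-1-C'(\varepsilon\cdot 5n+\log^3 n)$, and similarly $m_{i,2}\le\theta_{i-1}-C'(\cdots)$. \Cref{l:MrnKtP-reconstruction} then supplies seeds $z_1,z_2$ with $G_1(x;z_1)\circ G_2(s_{i-1};z_2)\in L$, and the exhaustive search finds such a concatenation.

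Termination follows from the recursion $N_i\ge\gamma'(N_{i-1}+n/\gamma)$, exactly as in \Cref{c:nKt-iter-cons}. It gives $\theta_i\ge\frac{\gamma'(1-\gamma'^{\,i-1})}{1-\gamma'}n$, so the loop stops at some $i^*\le\log_{\gamma'}(1-\frac1{2\gamma'})+2$, which depends only on $\gamma$. At that point $s_{i^*}\in L$, so $\rnKt(s_{i^*})\ge\theta_{i^*}>n/(2-2\gamma')$.

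For the running time, each stage enumerates $2^n\cdot 2^{d_1+d_2}=2^{n+O(\log^3 n)}$ candidates. Each candidate costs $\poly(n)$ time plus one oracle query of length at most $5n$. Multiplying by $i^*=O_\gamma(1)$ stages gives the claimed bound.

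The main obstacle I expect is the parameter bookkeeping in the conversion step. The additive loss $O(\varepsilon N)$ must be charged against the slack $(1-\gamma'/\gamma)\theta_{i-1}$ in $m_{i,2}$, uniformly over the iterations. This works because $N\le 5n$ and $\theta_{i-1}\ge\gamma n$, and because $\varepsilon$ is chosen depending only on $\gamma$ and the universal constants $C',C_1,C_2$, not on $L$.
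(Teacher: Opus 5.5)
Your proposal is correct and follows essentially the same route as the paper. You convert $\rnKt$ lower bounds into $\rK^{L_{\le \ell}}$ lower bounds via the second part of \Cref{l:rK-AM-coAM-superscript}, apply \Cref{l:MrnKtP-reconstruction} inductively while $N_{i-1}\le 4n$, and then telescope $\theta_i$ and count the enumeration cost. The differences are minor. The paper obtains the new block $x$ by counting directly for $\rK^{(n+N_i)^{C_1},L_{\le 2n+2N_i}}(x\mid s_i)\ge n$, which avoids the small extra loss of going through $\rnKt(x\mid s_{i-1})\ge n-1$. Also, $C_1,C_2$ in \Cref{l:MrnKtP-reconstruction} depend on the density exponent of $L$, so $\varepsilon$ should be tied only to $\gamma$ and the constant $c'$ of \Cref{l:rK-AM-coAM-superscript}, which is the only constant multiplying $\varepsilon$; the $\log^3$ terms are absorbed by taking $n$ large.
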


    To prove \Cref{l:rnkt-SoI-algo}, we prove the following crucial statement.
    \begin{claim} \label{l:rnkt-SoI-algo-induction} 
        Fix any $1/2 < \gamma < 1$, for all sufficiently large $n \in \mathbb{N}$, the following holds for all $i \in \mathbb{N}$: On input $(\gamma, 1^n)$, either $s_i \neq \bot$ is produced successfully, or the algorithm terminates with output $s_j$ for some $j < i$.
    \end{claim}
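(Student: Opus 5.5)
We argue by induction on $i$, mirroring the proof of \Cref{c:nKt-iter-cons}, with \Cref{l:MrnKtP-reconstruction} taking the role that \Cref{l:nKt-SoI} plays there. For $i=1$ the statement holds because $L$ is dense, so $L\cap\{0,1\}^n\neq\varnothing$ and the first loop sets $s_1\in L$. For the inductive step, fix $i\ge 2$ and assume that $s_{i-1}\neq\bot$ and that the algorithm did not stop at step $i-1$. Then $s_{i-1}\in L$ and $\theta_{i-1}\le n/(2-2\gamma')$. This gives $N_{i-1}=\theta_{i-1}/\gamma\le n/(\gamma(2-2\gamma'))\le 4n$, so every string involved has length $O(n)$ and every oracle query has length $O(n)$.

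Next we lower-bound the two oracle-relative $\rK$ terms appearing in \Cref{l:MrnKtP-reconstruction}, applied with $x\in\{0,1\}^n$ and $y=s_{i-1}$. The oracle is $L_{\le \ell}$ with $\ell=2(n+N_{i-1})=O(n)$, and $L\in\AMTIME[2^{\varepsilon n}]\cap\coAMTIME[2^{\varepsilon n}]$. By the query-length-restricted part of \Cref{l:rK-AM-coAM-superscript}, together with $\rnK\le\rK$ and $\rnKt(u\mid v)\le\rnK^{t}(u\mid v)+O(\log t)$, we obtain
\[
\rnKt(u\mid v)\le \rK^{(n+N_{i-1})^{C_1},L_{\le \ell}}(u\mid v)+O(\varepsilon n+\log n).
\]
Here the time overhead is $\poly(2^{\varepsilon \ell})$, which contributes $O(\varepsilon n)$ to $\log t$. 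We apply this with $(u,v)=(s_{i-1},\epsilon)$, using $s_{i-1}\in L\subseteq\mathcal R_{\rnKt[\gamma\cdot n]}$, to get
\[
\rK^{(\cdot),L_{\le\ell}}(s_{i-1})\ge \theta_{i-1}-O(\varepsilon n+\log n).
\]
By \Cref{l:incomp-strings} (conditional, $\rnKt$ version), some $x\in\{0,1\}^n$ has $\rnKt(x\mid s_{i-1})\ge n-1$, so for this $x$
\[
\rK^{(\cdot),L_{\le\ell}}(x\mid s_{i-1})\ge n-O(\varepsilon n+\log n).
\]

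It then suffices to verify
\[
m_{i,1}=\lceil\gamma' n/\gamma\rceil\le n-O(\varepsilon n)-C_2\log^3 n,\qquad m_{i,2}=\lceil \gamma'\theta_{i-1}/\gamma\rceil\le \theta_{i-1}-O(\varepsilon n)-C_2\log^3 n.
\]
The first inequality holds because $\gamma'<\gamma$, leaving a slack of $(1-\gamma'/\gamma)n$, provided $\varepsilon$ is small enough in terms of $\gamma-\gamma'$. For the second we need $\theta_{i-1}=\Omega(n)$. This holds since $\theta_{i-1}\ge\theta_1=\gamma n$, because $N_j\ge\gamma' n/\gamma\ge$ a constant fraction of $n$ and in fact $N_i$ is nondecreasing. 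The slack is then $(1-\gamma'/\gamma)\theta_{i-1}\ge(\gamma-\gamma')n$. Both $m_{i,1},m_{i,2}>0$, so \Cref{l:MrnKtP-reconstruction}, applied with security parameter $\varepsilon_i=N_i^{-c-1}/2$ matching $m=N_i$, yields $z_1,z_2$ with $G_1(x;z_1)\circ G_2(s_{i-1};z_2)\in L$. Since the loop enumerates all $(x,z_1,z_2)$, it finds such a triple and sets $s_i\neq\bot$.

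The main obstacle is the bookkeeping in the middle step. We must check that the time bound $(n_1+n_2)^{C_1}$ and the query length $2n_1+2n_2$ required by \Cref{l:MrnKtP-reconstruction} match what \Cref{l:rK-AM-coAM-superscript} supports with only an $O(\varepsilon n)$ loss. This is where the choice of $\varepsilon$ (depending only on $\gamma$ and universal constants, e.g.\ $\varepsilon<(\gamma-\gamma')/(20C)$) is fixed. We must also handle the success-probability and randomness-length conventions, using \Cref{l:success-amp-rnK} for the amplification.
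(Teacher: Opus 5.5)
Your proposal is correct and follows essentially the same route as the paper: induction with the base case from density; the bound $N_{i-1}\le 4n$ from non-termination; transferring $\rnKt(s_{i-1})\ge\theta_{i-1}$ to a lower bound on $\rK^{L_{\le \ell}}(s_{i-1})$ via the query-length-restricted part of \Cref{l:rK-AM-coAM-superscript} with an $O(\varepsilon n)$ loss; and then applying \Cref{l:MrnKtP-reconstruction}. The only difference is minor: you get the high-complexity $x$ from $\rnKt$-counting and then transfer it, losing another $O(\varepsilon n)$, whereas the paper applies the counting argument directly to $\rK^{(n+N_i)^{C_1},L_{\le 2n+2N_i}}(x\mid s_i)$, which holds relative to any oracle and incurs no loss. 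Both versions go through within the $(1-\gamma'/\gamma)n$ slack.
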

    
    \begin{proof}[Proof of \Cref{l:rnkt-SoI-algo-induction}] \renewcommand{\qedsymbol}{\ensuremath{\diamond}}
        We prove by induction. For $i = 1$, since $L$ is dense, it is guaranteed that $L_n \neq \emptyset$, and $s_1$ is the lexicographically smallest string in $L_n$.

        Now assume for $i \ge 2$, we have $s_i \in L$ and $\theta_i \le n/(2 - 2\gamma')$. We aim to prove that algorithm successfully finds $s_{i + 1} \in L$.

        By induction assumption, $\rnKt(s_i) \ge \gamma N_i = \theta_i$. Let $C_1, C_2 > 0$ be the constants from \Cref{l:MrnKtP-reconstruction}. Let $c'$ be the constant of \Cref{l:rK-AM-coAM-superscript}, and write
        $t_i \vcentcolon= (n + N_i)^{C_1}$ and $\ell_i \vcentcolon= 2n + 2N_i$. Applying the second part of
        \Cref{l:rK-AM-coAM-superscript} with $T(m) = 2^{\varepsilon m}$, and noting that
        $|s_i| = N_i \le \ell_i$, we obtain
        \[
            \rnK^{t_i'}(s_i) \le \rnK^{t_i, L_{\le \ell_i}}(s_i) + O(\log(n + N_i)),
            \qquad
            t_i' \vcentcolon= t_i \cdot \bigl(2^{\varepsilon \ell_i} \cdot |s_i| \cdot \log t_i\bigr)^{c'},
        \]
        so that $\log t_i' \le 2c'\varepsilon(n + N_i) + O(\log (n+N_i))$. Since every deterministic
        oracle program is in particular a randomized nondeterministic one, we have
        $\rnK^{t_i, L_{\le \ell_i}}(s_i) \le \rK^{t_i, L_{\le \ell_i}}(s_i)$, and therefore
        \begin{align*}
            \rnKt(s_i) &\le \rnK^{t_i'}(s_i) + \log t_i' \\
            &\le \rK^{t_i, L_{\le \ell_i}}(s_i) + 2c'\varepsilon(n + N_i) + O(\log(n + N_i)).
        \end{align*}

        Rearranging and using $\rnKt(s_i) \ge \theta_i$, we obtain
        \[
            \rK^{t_i, L_{\le \ell_i}}(s_i) \ \ge\ \theta_i - 2c'\varepsilon(n + N_i) - O(\log(n + N_i)).
        \]
        Note that, since $\gamma > 1/2$, we have $\gamma'/\gamma = 1/2 + 1/(4\gamma) < 1$. Recall also
        that $N_i \le 4n$, so $n + N_i \le 5n \le 5\theta_i/\gamma$. Hence, for $\varepsilon$ chosen
        small enough as a function of $\gamma$ and $c'$ (and independently of $L$), we get
        \begin{align} \label{ineq:rnkt_reconstruction_2}
            m_{i + 1, 2} = \frac{\gamma'}{\gamma} \theta_i \le \rK^{t_i, L_{\le \ell_i}}(s_i) - C_2\log^3(n + N_i).
        \end{align}
        
        By standard counting argument of $\rK$, there exists $x \in \{0, 1\}^n$ such that
        \[
            \rK^{(n + N_i)^{C_1}, L_{\le 2n + 2N_i}}(x \mid s_i) \ge n,
        \]
        so there exists $x \in \{0, 1\}^n$ satisfying
        \begin{align} \label{ineq:rnkt_reconstruction_1}
            m_{i + 1, 1} = \frac{\gamma'}{\gamma} n \le n - C_2\log^3(n + N_i) \le \rK^{(n + N_i)^{C_1}, L_{\le 2n + 2N_i}}(x \mid s_i) - C_2\log^3(n + N_i).
        \end{align}

        Combining \Cref{ineq:rnkt_reconstruction_1} and \Cref{ineq:rnkt_reconstruction_2}, we apply \Cref{l:MrnKtP-reconstruction}, which indicates for any sufficiently large $n$, $A^L(\gamma, 1^n)$ successfully produces $s_{i + 1} \in L$.
    \end{proof}

    \
    \begin{proof}[Proof of \Cref{l:rnkt-SoI-algo}] \renewcommand{\qedsymbol}{\ensuremath{\diamond}}
        Fix any $1/2 < \gamma < 1$. For any $i \ge 2$, observe that if $A^L(\gamma, 1^n)$ produces $s_i \neq \bot$, we will have
        \[
            \theta_i = \gamma'n + \gamma'\theta_{i - 1},
        \]
        and $\theta_1 = \gamma n$ holds. Simple telescoping gives
        \[ 
            \theta_i = \left(\sum_{j = 1}^{i - 1} \gamma'^j + \gamma'^{i - 1}\gamma \right)n.  
        \]
        Applying $\gamma' < \gamma < 1$, we have
        \begin{align} \label{ineq:theta}
            \theta_i > \left(\sum_{j = 1}^i \gamma'^j \right)n = \frac{\gamma'(1 - \gamma'^i)n}{1 - \gamma'}, 
            \quad 
            \theta_i < \left(\sum_{j = 1}^i \gamma^j \right)n < \frac{\gamma n}{1 - \gamma},
        \end{align}
        holding for every $i$. Note that $\gamma' = (1 + 2\gamma)/4 > 1/2$, we always have $\gamma' / (1 - \gamma') > 1 / (2 - 2\gamma')$, indicating that there exists $I_\gamma$ such that whenever $i \ge I_\gamma$, $\theta_i > n/(2 - 2\gamma')$. Combining this with \Cref{l:rnkt-SoI-algo-induction} we can derive that $A^L(\gamma, 1^n)$ terminates with output $s_{i^*} \in L$ for all sufficiently large $n$ where $i^* \le I_\gamma$. Given $L \subseteq \mathcal{R}_{\rnKt[\gamma \cdot n]}$, we have
        \[
            \rnKt(s_{i^*}) \ge \gamma |s_{i^*}| = \theta_{i^*} > n/(2 - 2\gamma').
        \]
        
        The running time is bounded by
        \begin{align*}
            O(2^n) + \sum_{i = 2}^{i^*} O(2^{n + O(\log^3 n) + O(\log^3 N_{i - 1})}) 
            & = O(2^n) + \sum_{i = 2}^{i^*} O(2^{n + O(\log^3 n) + O(\log^3 (\theta_{i - 1} / \gamma))}) \\
            & \le O(2^n) + I_{\gamma} \cdot O(2^{n + O(\log^3 n) + O(\log^3 (n / (1 - \gamma)))})
        \end{align*}
        Note that $I_{\gamma}$ is independent from $n$, the running time is therefore bounded by $O_\gamma(2^{n + O(\log^3 n)})$.
    \end{proof}

    Fix any $1/2 < \gamma < 1$. For all sufficiently large $n$, by \Cref{l:rnkt-SoI-algo}, we compute $s_{i^*} \vcentcolon= A^L(\gamma, 1^n)$, and we have $\rnKt(s_{i^*}) > n/(2 - 2\gamma')$. On the other hand, the running time bound given in \Cref{l:rnkt-SoI-algo} indicates
    \begin{align*}
        \rnK^{t(n), L_{\le \max\{N_1, N_2, \cdots, N_{i^*}\}}}(s_{i^*}) = O(\log n),
    \end{align*}
    where $t(n) = O(2^{n + O(\log^3 n)})$. By \Cref{ineq:theta}, we have
    \[
        \max\{N_1, N_2, \cdots, N_{i^*}\} = \frac{1}{\gamma}\max\{\theta_1, \theta_2, \cdots, \theta_{i^*}\} < \frac{n}{1 - \gamma}.
    \]
    Therefore,
    \[
        \rnK^{t(n), L_{\le n/(1 - \gamma)}}(s_{i^*}) = O(\log n).
    \]

    Let $\ell \vcentcolon= n/(1-\gamma)$. Applying the second part of
    \Cref{l:rK-AM-coAM-superscript} with $T(m) = 2^{\varepsilon m}$, and letting
    $t' \vcentcolon= t(n)\cdot\bigl(2^{\varepsilon \ell}\cdot|s_{i^*}|\cdot\log t(n)\bigr)^{c'}$ be the resulting time-bound from that lemma, we obtain
    \[
        \rnK^{t'}(s_{i^*}) \le \rnK^{t(n), L_{\le \ell}}(s_{i^*}) + O(\log n) = O(\log n).
    \]
    Substituting $t(n) = O(2^{n+O(\log^3 n)})$ and $|s_{i^*}| \le \ell = O(n)$ gives
    $\log t' \le n + \frac{c'\varepsilon n}{1-\gamma} + O(\log^3 n)$, and therefore
    \[
        \rnKt(s_{i^*}) \le \rnK^{t'}(s_{i^*}) + \log t'
        \le n + \frac{c'\varepsilon n}{1-\gamma} + O(\log^3 n).
    \]
    On the other hand, $\rnKt(s_{i^*}) > n/(2-2\gamma')$, and $1/(2-2\gamma') > 1$ since $\gamma' > 1/2$.
    Writing $\Delta_\gamma \vcentcolon= \frac{1}{2-2\gamma'} - 1 > 0$ and choosing $\varepsilon$ such that 
    $\varepsilon < \Delta_\gamma(1-\gamma)/(2c')$ and \Cref{l:rnkt-SoI-algo-induction} holds, the two bounds contradict each other for all
    sufficiently large $n$.

\end{proof}

\subsection{\texorpdfstring{$\AMTIME$}{} Lower Bound via Explicit Constructions} \label{ss:rnKt_lb_1}

We formally state the technical result from \cite{DBLP:conf/stoc/0001LL25}, which is mentioned in \Cref{s:techniques}. We begin with a formal definition of \emph{single-valued Arthur-Merlin protocol}, which is a natural interpretation from \cite{DBLP:conf/stoc/0001LL25}. 

\begin{definition}\label{d:svAM}
    A \emph{non-binary public coin protocol} $(P, V)$ is defined as follows: On any input $x$, $V, P$ interact by constantly many rounds; In the $i$-th round, $V$ (the verifier) sends plain uniform randomness $r_i \in \{0, 1\}^{n_i}$ to $P$; $P$ (the prover) then returns $w_i$ according to $x, r_1, w_1, \cdots, r_{i - 1}, w_{i - 1}, r_i$. After the last round, $V$ outputs some string $y$ as answer, which deterministically depends on the whole interaction transcript (all $w_j, r_j$'s) and input $x$. 

    Non-binary public coin protocol $(P, V)$ is a \emph{single-valued Arthur-Merlin} protocol computing $\{x_n\}_{n \in \mathbb{N}}$, if there exists $l: \mathbb{N} \to \mathbb{N}$ such that the following holds for every $n \in \mathbb{N}$:
    \begin{itemize}
        \item \textbf{Conformity:} There exists $\alpha_n \in \{0, 1\}^{l(n)}$ and an honest prover $P$ such that $(P, V)(1^n ,\alpha_n) = x_n$ holds with probability $1$ over randomness of $V$;
        \item \textbf{Resiliency:} For every $\zeta_n \in \{0, 1\}^{l(n)}$, there exists $y_n \in \{0, 1\}^n$ such that, for every (possibly with unbounded computational power) prover $P'$, $(P', V)(1^n, \zeta_n) \in \{y_n, \bot\}$ with probability at least $2/3$ over randomness of $V$. 
    \end{itemize}
    We refer to $l(n)$ as the \emph{advice complexity} of $(P, V)$.
\end{definition}

\begin{lemma}[Round reduction for single-valued Arthur-Merlin protocol]\label{l:round-reduction}
    For any $k$-round single-valued Arthur-Merlin protocol $(P,V)$ computing $\{x_n\}_{n\in\mathbb{N}}$ with advice complexity $l(n)$ where the verifier $V$ runs in time $T(n)$, there exists a 2-round single-valued Arthur-Merlin protocol $(P',V')$ computing $\{x_n\}_{n\in\mathbb{N}}$ with advice complexity $l(n)$ where the verifier $V$ runs in time $T(n)^{O_k(1)}$.
\end{lemma}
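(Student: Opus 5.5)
The plan is to follow the classical Babai--Moran round-collapse for $\AM$ (the inclusion $\AM[k]\subseteq\AM[2]$). The extra work is to check that the two defining properties of \Cref{d:svAM} survive the transformation. These are conformity (some advice together with an honest prover yields $x_n$ with probability $1$) and resiliency (for every advice there is a single $y_n$ such that every prover's output lies in $\{y_n,\bot\}$ with probability $\ge 2/3$). It suffices to show how to remove one Merlin--Arthur alternation, i.e.\ replace a segment $\ldots M\,A\ldots$ by $\ldots A\,M\ldots$. Since $k=O(1)$, the step is applied $O_k(1)$ times, and each application blows up the verifier's time polynomially, which gives the $T(n)^{O_k(1)}$ bound.

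\textbf{Preprocessing.} First, amplify resiliency so that cheating succeeds with probability $2^{-\Omega(T)}$. Run $q=O(T)$ independent copies of $(P,V)$ in parallel, round by round, so the round count is unchanged. Output the common value $y$ if every copy outputs the same non-$\bot$ string, and output $\bot$ otherwise. Conformity is preserved, because the honest prover makes every copy output $x_n$ with probability $1$, so the combined output is $x_n$.

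Resiliency requires more care, since in parallel repetition the prover's messages in different copies may be correlated. For a fixed advice $\zeta_n$ with associated $y_n$, the output is outside $\{y_n,\bot\}$ only if all copies agree on some $z\ne y_n$. In particular, the first copy must then output something outside $\{y_n,\bot\}$. I will handle this with the standard parallel-repetition argument for public-coin constant-round protocols, where soundness decreases exponentially because the optimal prover value is multiplicative for public-coin games. The relevant game is: the prover wins a copy when its output is $\notin\{y_n,\bot\}$. Each copy's output is a deterministic function of the transcript, so ``output $\ne y_n$ and $\ne\bot$'' is a predicate on the transcript. Resiliency makes the value of this single-copy game at most $1/3$, and parallel repetition of public-coin games makes the value of ``win all copies'' at most $3^{-q}$.

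One subtlety remains: ``all copies output $z$'' for some $z\ne y_n$ is contained in the event ``win all copies''. The game is therefore well defined as a predicate, and the bound applies.

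\textbf{Swap step.} To swap one $M\,A$ pair, Arthur sends $m=\poly(T)$ independent random strings $r^{(1)},\dots,r^{(m)}$ before Merlin's message $w$. Merlin then sends a single $w$ followed by the continuations for each $r^{(j)}$. The verifier computes all $m$ outputs and accepts a value $y$ only if every branch outputs the same non-$\bot$ $y$; otherwise it outputs $\bot$.

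Conformity holds because the honest $w$ works for all $r$ with probability $1$, so every branch outputs $x_n$. For resiliency, a fixed $w$ is bad (i.e.\ leads to an output $\notin\{y_n,\bot\}$ on a given branch) with probability at most $2^{-\Omega(T)}$ over that branch's remaining randomness. A union bound over all $2^{\poly(T)}$ choices of $w$ is affordable once the preprocessing error is $2^{-|w|-\Omega(1)}$, and the preprocessing was chosen to achieve exactly that. Hence the cheating probability stays below $1/3$.

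\textbf{Main obstacle.} The main obstacle is this union bound. Resiliency is not a binary acceptance guarantee but a statement that only $y_n$ can ever be produced. I therefore need the amplified error to be small relative to $2^{|w|}$ on every subsequent swap, and the message lengths grow polynomially at each swap. I would address this by amplifying again before each swap, re-applying the preprocessing to the current protocol. The per-swap cost remains polynomial, and the total number of rounds stays $O(k)$.
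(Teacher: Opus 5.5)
\textbf{The gap is in the swap analysis.} Your swap construction is the right one: Arthur sends $m$ strings, Merlin sends a single $w$, and all branches must agree. The resiliency argument you give for it does not go through. You union-bound over $w$ using a per-$w$ cheating probability equal to the preprocessing error, and you claim the preprocessing makes that error $2^{-|w|-\Omega(1)}$. It cannot. $q$-fold parallel repetition lowers the error to $3^{-q}$, but the swapped Merlin message of the repeated protocol is the concatenation of $q$ copies. So $|w|=q|w_0|$, and $2^{|w|}\cdot 3^{-q}=(2^{|w_0|}/3)^q\ge 1$ as soon as $|w_0|\ge 2$. Re-amplifying before every swap, as in your last paragraph, does not help: repetition scales the error exponent and $|w|$ by the same factor, so the required inequality is never reached.

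There is a second problem. In the $(AM)^k$ format every Merlin message is preceded by Arthur coins $r_1$, and these are shared by all $m$ branches. The small-probability bound for a fixed $w$ holds only on average over $r_1$; for particular $r_1$ some $w$ may cheat with probability $1$. So you also cannot multiply per-branch probabilities across the $m$ branches.

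\textbf{The fix} is the prefix-by-prefix Babai--Moran analysis, in which the $m$ suffix copies, not the preprocessing, pay for the union bound.
Amplify only to a small constant $\epsilon$, say $1/12$. For a prefix $r_1$, let $\mathrm{val}(r_1,w)$ be the optimal probability of a final output outside $\{y_n,\bot\}$ from state $(r_1,w)$, and let $\mu(r_1)=\max_w \mathrm{val}(r_1,w)$.
Resiliency gives $\mathbb{E}_{r_1}[\mu(r_1)]\le\epsilon$, so Markov gives $\Pr[\mu(r_1)>1/2]\le 2\epsilon$.
For the remaining prefixes, apply the parallel-repetition product bound you already use in preprocessing to the $m$-fold repeated suffix. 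It bounds the cheating probability for a fixed $w$ by $\mathrm{val}(r_1,w)^m\le 2^{-m}$, since a bad combined output forces every branch to be bad. A union bound over $w$ then gives $2^{|w|-m}$.
Taking $m=|w|+3$ yields resiliency error at most $2\epsilon+1/8<1/3$ with the same $y_n$, and conformity is untouched by perfect completeness. Re-amplifying to a constant before each later swap keeps the total cost at $T^{O_k(1)}$.

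\textbf{The paper's route} avoids this white-box analysis. It encodes bit $i$ of the output as a binary promise problem on $(1^n,\alpha,i,b)$ with thresholds $3/5$ versus $2/5$. It then invokes standard $\AM$ round reduction with perfect completeness and soundness $1/(3n)$, and recombines the $2n$ verdicts $v_i^b$. Perfect completeness gives conformity, and a union bound over $i$ on the $\mathcal{NO}$ instances $(1^n,\zeta_n,i,1-y_{n,i})$ gives resiliency.
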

\begin{proof}
    We translate the single-valued Arthur-Merlin protocol into a decision problem, using the round reduction results for decision problems with Arthur-Merlin protocol, and finally translate it back into the single-valued Arthur-Merlin protocol.
    
    We define a promise problem $\mathcal{P}=(\mathcal{YES},\mathcal{NO})$, where
    \begin{align*}
        (1^n,\alpha,i,b)\in\mathcal{YES}&\iff\exists\text{ prover $P$},\Pr_V\left[(P,V)(1^n,\alpha)\ne\bot\land i\text{'th bit of }(P,V)(1^n,\alpha)\text{ is }b\right]\ge\frac35,\\
        (1^n,\alpha,i,b)\in\mathcal{NO}&\iff\forall\text{ prover $P'$}, \Pr_V\left[(P',V)(1^n,\alpha)\ne\bot\land i\text{'th bit of }(P',V)(1^n,\alpha)\text{ is }b\right]\le\frac25.
    \end{align*}
    By definition, $\mathcal{P}$ has a $T(n)$-time binary Arthur-Merlin protocol with $k$ rounds. By the standard round reduction and error reduction for binary Arthur-Merlin protocol\cite{DBLP:journals/jcss/BabaiM88,DBLP:conf/stoc/GoldwasserS86}, $\mathcal{P}$ has a $T(n)^{O_k(1)}$-time 2-round binary Arthur-Merlin protocol with perfect completeness and small soundness error, in other words, $\mathcal{P}\in\mathsf{prAMTIME}[T(n)^{O_k(1)}]$. Let $\hat V$ be the verifier in the protocol, then we have,
    \begin{align*}
        (1^n,\alpha,i,b)\in\mathcal{YES}&\implies\Pr_{r}\left[\exists w,\hat V(1^n,\alpha,i,b,r,w)=1\right]=1,\\
        (1^n,\alpha,i,b)\in\mathcal{NO}&\implies\Pr_{r}\left[\exists w,\hat V(1^n,\alpha,i,b,r,w)=1\right]\le\frac1{3n}.
    \end{align*}

    Next, we define a 2-round single-valued Arthur-Merlin protocol that computes $\{x_n\}_{n\in\mathbb{N}}$. Given input $(1^n,\alpha)$, the verifier $V'$ first samples $2n$ random strings $r^0_1,r^0_2,\ldots,r^0_n,r^1_1,r^1_2,\ldots,r^1_n$ and send them to the prover. Then the prover sends back $2n$ strings $w^0_1,w^0_2,\ldots,w^0_n,w^1_1,w^1_2,\ldots,w^1_n$ and a proposed answer $\hat x\in\{0,1\}^n$. The verifier then runs $\hat V(1^n,\alpha,i,b,r^b_i,w^b_i)=:v^b_i$ for each $i\in[n]$ and $b\in\{0,1\}$. If for some $i$, $v^0_i=v^1_i=0$, then $V'$ prints $\bot$. Otherwise, $V'$ outputs a string $x\in\{0,1\}^n$ defined as follows. For each $i\in[n]$, if $v^0_i=v^1_i=1$, then set the $i$'th bit of $x$ as the $i$'th bit of $\hat x$. Otherwise set the $i$'th bit of $x$ as the $b$ such that $v_i^b=1$.

    It is not hard to see that $V'$ runs in time $T(n)^{O_k(1)}$, and has advice complexity of $|\alpha|=l(n)$. It remains to argue about the correctness, i.e. the conformity and resiliency of $V'$.
    \begin{itemize}
        \item \textbf{Conformity:} By the conformity of $V$, there exists $\alpha_n$ and honest prover $P$ such that $(P,V)(1^n,\alpha_n)=x_n$ with probability 1 over randomness of $V$. Let $x_{n,i}$ denote the $i$'th bit of $x_n$. Then by definition of $\mathcal{P}$, we know that $(1^n,\alpha_n,i,x_{n,i})\in\mathcal{YES}$ for each $i\in[n]$. The honest prover set $\hat x=x_n$, select $w_i^{x_{n,i}}$ such that $v_i^{x_{n,i}}=\hat V(1^n,\alpha_n,i,x_{n,i},r^{x_{n,i}}_i,w_i^{x_{n,i}})=1$, and set $w_i^{1-x_{n,i}}$ as the all 0 string. If $v_i^{1-x_{n,i}}=\hat V(1^n,\alpha_n,i,1-x_{n,i},r^{1-x_{n,i}}_i,w_i^{1-x_{n,i}})=1$, then $V'$ would set the $i$'th bit of $x$ as $\hat x$, which is equal to $x_n$ because the prover is honest. Otherwise, $v_i^{1-x_{n,i}}=0$, and $V'$ would set the $i$'th bit of $x$ as $x_{n,i}$, which is also correct. To summarize, an honest prover would make $V'$ print $x_n$ with probability 1.
        \item \textbf{Resiliency:} By the resiliency of $V$, for any $\zeta_n\in\{0,1\}^{l(n)}$, there exists $y_n\in\{0,1\}^n$ such that, for any prover $P'$, $(P',V)(1^n,\zeta_n)\in\{y_n,\bot\}$ with probability at least $2/3$ over randomness of $V$. Let $y_{n,i}$ denote the $i$'th bit of $y_n$. Then by the resiliency of $V$, $(1^n,\zeta_n,i,1-y_{n,i})\in\mathcal{NO}$ for every $i\in[n]$. Thus for each $i\in[n]$, the probability of existing $w_{i}^{1-y_{n,i}}$ such that $\hat V(1^n,\alpha,i,1-y_{n,i},r_i^{1-y_{n,i}},w_i^{1-y_{n,i}})=1$ is at most $\frac1{3n}$. By a union bound, with probability at least $2/3$, for every $i\in[n]$, for every $w_{i}^{1-y_{n,i}}$, we have $v_{i}^{1-y_{n,i}}=\hat V(1^n,\alpha,i,1-y_{n,i},r_i^{1-y_{n,i}},w_i^{1-y_{n,i}})=0$. Conditioning on this, if for some $i$, $v_i^{y_{n,i}}=0$, then $V'$ would output $\bot$. Otherwise, for every $i$, $v_i^{y_{n,i}}=1$, then $V'$ would output $y_n$. Hence with probability at least $2/3$ over the randomness of $V'$, regardless of prover's strategy, $V'$ either prints $y_n$ or $\bot$.

    \end{itemize}
\end{proof}

Then we present below the main result from \cite{DBLP:conf/stoc/0001LL25} regarding explicit algorithms that hits any dense language in $\coAM$.

\begin{lemma}[See \cite{DBLP:conf/stoc/0001LL25}, Theorem 1.3] \label{l:explicit-construction}
    For every dense language $L \in \coAM$ and constant $k \ge 1$, there exists $\{x_n\}_{n \in \mathbb{N}}$ and single-valued Arthur-Merlin protocol $(P, V)$ computing $\{x_n\}_{n \in \mathbb{N}}$ such that
    \begin{itemize}
        \item \textbf{Advice efficiency:} The advice complexity of $(P, V)$ is $l(n) = 2^{\log^{1/k} n}$;
        \item \textbf{Time efficiency:} On input $(1^n, \zeta_n)$ where $|\zeta_n| = l(n)$, $V$ runs in time $2^{\log^{O(k)} n}$;
        \item \textbf{Hitting:} For infinitely many $n \in \mathbb{N}$, $x_n \in L$.
    \end{itemize}
\end{lemma}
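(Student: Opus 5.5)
The plan is to obtain this statement by translating \cite[Theorem 1.3]{DBLP:conf/stoc/0001LL25} into the language of \Cref{d:svAM}, rather than re-deriving the explicit construction. First, I will fix a dense $L\in\coAM$ and the constant $k$, and recall the form of the theorem in \cite{DBLP:conf/stoc/0001LL25}. Informally, it provides an infinitely-often explicit construction of elements of $L$ with these features: a constant-round public-coin Arthur--Merlin procedure with advice of length $2^{\log^{1/k} n}$; verifier running time $2^{\log^{O(k)} n}$; and on good input lengths, output that is a fixed string of $L\cap\{0,1\}^n$ with high probability or else a failure symbol.

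The second step is to define the sequence $\{x_n\}$ and the protocol $(P,V)$. On the infinitely many lengths $n$ where the construction of \cite{DBLP:conf/stoc/0001LL25} succeeds, I will set $x_n$ to be the canonical string it produces and $\alpha_n$ to be the corresponding advice. On every other length, I will let $x_n$ be an arbitrary string (say $0^n$). For those lengths I will use a trivial advice value on which $V$ ignores the prover and deterministically outputs $x_n$; this gives conformity and resiliency vacuously. With this convention the \textbf{Hitting} property is exactly the infinitely-often guarantee of the cited theorem.

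Third, I will check the two clauses of \Cref{d:svAM}. Conformity should follow from the completeness (perfect or amplifiable) of the cited construction under the honest prover and the correct advice. If their completeness is only $1-o(1)$, I will first move to perfect completeness with standard AM techniques, in the spirit of the proof of \Cref{l:round-reduction}. Resiliency is the delicate point, and I expect it to be the main obstacle. It requires that for \emph{every} advice string $\zeta_n$ (not only the good one) there is a single $y_n$ with $(P',V)(1^n,\zeta_n)\in\{y_n,\bot\}$ with probability at least $2/3$ against all provers. I will argue that the pseudodeterministic/single-valued guarantee in \cite{DBLP:conf/stoc/0001LL25} is stated for arbitrary advice. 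If it is not, I will patch it by having $V$ also verify, through the $\coAM$ protocol for $L$ and a canonical-output check, each bit of the claimed output separately. This is exactly the bit-by-bit promise-problem encoding $(1^n,\alpha,i,b)$ used in \Cref{l:round-reduction}, which forces a unique $y_n$ per advice string after error reduction and a union bound over the $n$ bits.

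Finally, the resource bounds: the advice length is inherited unchanged, and the verifier time stays $2^{\log^{O(k)} n}$ because amplification and the bitwise checks cost only polynomial overhead. If a two-round protocol is desired downstream, \Cref{l:round-reduction} can then be applied at cost $T(n)^{O(1)}$, which preserves the $2^{\log^{O(k)} n}$ bound.
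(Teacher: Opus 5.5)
Your translation into \Cref{d:svAM} matches how the paper treats the citation: fix $x_n$ and the advice on good lengths, use a trivial advice value elsewhere, and inherit conformity and resiliency from the cited theorem. But you skip the one step the paper actually argues, namely the density hypothesis. In this paper \emph{dense} means $|L\cap\{0,1\}^n|\ge 2^n/n^b$ for some constant $b$, and the application in \Cref{thm:intro_rnKt_lbs} needs only this inverse-polynomial density. As the paper notes, the theorem of \cite{DBLP:conf/stoc/0001LL25} is for properties with $|L\cap\{0,1\}^n|\ge 2^{n-1}$, so your first step applies the cited result outside its hypotheses.

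The missing idea is density amplification by parallel repetition. For a suitable polynomial $p$ and block length $\ell=\lfloor n/p(n)\rfloor$, let $L'_n$ consist of the length-$n$ strings at least one of whose first $p(n)$ consecutive $\ell$-bit blocks lies in $L$.
\begin{itemize}
\item $L'\in\coAM$: its complement is a polynomial conjunction of $\AM$ predicates, run in parallel after error reduction.
\item $|L'\cap\{0,1\}^n|\ge 2^n\bigl(1-(1-\ell^{-b})^{p(n)}\bigr)\ge 2^{n-1}$ once $p(n)\ge \ell^b$.
\item Apply the cited theorem to $L'$. Add $\log p(n)$ bits of advice naming a block of the constructed string that lies in $L$, plus $O(\log n)$ bits to fix $n$ from $\ell$, and let the new verifier output that block.
\end{itemize}
Conformity and resiliency survive projection onto a fixed block. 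Hitting survives because the block lengths of the infinitely many good $n$ are unbounded. Since $n=\poly(\ell)$, the advice becomes $2^{O(\log^{1/k}\ell)}$ and the time $2^{\log^{O(k)}\ell}$; you absorb the constant by invoking the theorem with a larger constant in place of $k$.

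Separately, your fallback for resiliency does not work as described. The bit-wise promise problem $(1^n,\alpha,i,b)$ from \Cref{l:round-reduction} only \emph{preserves} single-valuedness, since its NO-condition is derived from the resiliency of the original protocol. It therefore cannot create a unique $y_n$ from a protocol that lacks one. Checking that the output lies in $L$ via the $\coAM$ protocol also says nothing about uniqueness. You should rely on the single-valuedness guarantee of the cited construction itself, which is how the paper uses \Cref{d:svAM}. Downstream, in \Cref{l:construction-complexity}, it is needed only for the correct advice.
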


Note that the original result in \cite{DBLP:conf/stoc/0001LL25} works for only languages with half density, i.e., $|L \cap \bool^n| \ge 2^n/2$. However, it is easy to amplify the density by parallel repetition, sketched as follows: For language $L$ and some $p: \mathbb{N} \to \mathbb{N}$, define $L'$ as
\[
    L'_n = \{ x \mid x_{[1\ : \ l]} \in L, \textrm { or } \cdots, \textrm{ or } x_{[(p(n) - 1) \cdot l + 1\ : \ p(n) \cdot l]} \in L, \textrm{ where } l = \lfloor |x| / p(n) \rfloor \}
\]
we can show $L \in \coAM$ implies $L' \in \coAM$. Furthermore, if $L$ only has inverse-polynomial density (i.e., $|L \cap \bool^n| \ge 2^n/n^c$), when $p(n)$ is some appropriately picked polynomial function, we can have $|L' \cap \bool^n| \ge 2^{n - 1}$, so the original explicit construction result can be applied to obtain single-valued Arthur-Merlin protocol $(P', V')$ hitting $L'$. Then we apply $\log p$ bits of additional advice to locate the $L$-hitting subsection of an $L'$-hitting output from $(P', V')$. 

Then we show that this non-binary Arthur-Merlin algorithm hits $\coAM$ properties with strings of small $\rnKt$ complexity.

\begin{lemma}\label{l:construction-complexity}

Let $L \in \coAM$ be a dense language, then for any $k > 1$, there exists infinitely many $n \in \mathbb{N}$ and $x_n \in \{0, 1\}^n$, such that $x_n \in L \cap \{0, 1\}^n$ and $\rnKt(x_n) < O(2^{\log^{1/k}n})$.
    
\end{lemma}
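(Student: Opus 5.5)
The plan is to invoke \Cref{l:explicit-construction} with a slightly larger constant $k' > k$ in place of $k$. Applied to the dense language $L \in \coAM$, it yields a sequence $\{x_n\}$ and a single-valued Arthur-Merlin protocol $(P,V)$ with three properties: advice complexity $l(n) = 2^{\log^{1/k'} n}$, verifier time $2^{\log^{O(k')} n}$, and $x_n \in L$ for infinitely many $n$. By \Cref{l:round-reduction}, I may assume the protocol has two rounds. This costs only a polynomial blow-up in verifier time, so the time stays $2^{\log^{O(k')} n}$, and the advice is unchanged. It then remains to convert the two-round protocol, with the conformity advice $\alpha_n$ hardwired, into an $\rnKt$ description of $x_n$.

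The description I have in mind is a program $\Pi$ that stores $n$, $\alpha_n$ and the code of $V$. On randomness $r$ (Arthur's coins) and nondeterministic guess $w$ (Merlin's reply), $\Pi$ runs $V(1^n,\alpha_n; r, w)$ and prints whatever $V$ outputs, with $\bot$ printed as $\bot$. Correctness splits into the two halves of the $\rnK$ condition, both for a $2/3$ fraction of $r$:
\begin{itemize}
\item[(a)] By conformity, for every $r$ the honest prover supplies a $w$ making $V$ output $x_n$. So the existential condition holds for every $r$.
\item[(b)] For the universal condition, use resiliency with $\zeta_n = \alpha_n$. There is a $y_n$ such that, with probability at least $2/3$ over $r$, every prover's reply leads to an output in $\{y_n,\bot\}$. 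Since the honest prover outputs $x_n \neq \bot$ with probability $1$, we get $y_n = x_n$. Hence for at least a $2/3$ fraction of $r$, every $w$ gives $x_n$ or $\bot$.
\end{itemize}

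Some care is needed in (b). Resiliency is stated for provers as strategies, but a two-round protocol has a single Merlin message after Arthur's coins. So ``for every prover $P'$, with probability $2/3$'' means ``with probability $2/3$ over $r$, for every $w$''. The adversary that answers each $r$ with a bad $w$ whenever one exists is itself a legitimate prover, and this is exactly why the reduction to two rounds matters. Combining (a) and (b), the $\rnK$ condition holds with probability at least $2/3$.

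For the bound, pad the randomness and witness lengths to the time bound $t = 2^{\log^{O(k')} n}$. Then
\[
\rnKt(x_n) \le |\alpha_n| + O(\log n) + O(1) + \log t = 2^{\log^{1/k'} n} + \log^{O(k')} n,
\]
which is $O(2^{\log^{1/k} n})$ for $k' > k$.

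The main obstacle is justifying that the round-reduced protocol still has the quantifier order needed for (b), namely ``over $r$, for all $w$''. \Cref{l:round-reduction} establishes resiliency in exactly this form. Otherwise the argument is routine bookkeeping, including the density amplification already sketched after \Cref{l:explicit-construction}, which costs only $O(\log n)$ extra advice.
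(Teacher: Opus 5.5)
Your proposal is correct and follows essentially the same route as the paper: invoke \Cref{l:explicit-construction}, round-reduce via \Cref{l:round-reduction}, and read off an $\rnK$ description from conformity (a good $w$ exists for every $r$) and resiliency (for at least a $2/3$ fraction of $r$, every $w$ yields $x_n$ or $\bot$). The paper applies the construction with $k$ itself and leaves implicit the two points you spell out, namely $y_n = x_n$ and resiliency in the form ``over $r$, for all $w$''; your switch to $k' > k$ is harmless but unnecessary, since the bound is only claimed up to a constant factor.
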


\begin{proof}
    Fix any constant $k$. By \Cref{l:explicit-construction}, there exists a constant-round single-valued Arthur-Merlin protocol computing $\{x_n\}_{n\in\mathbb{N}}$, such that for infinitely many $n\in\mathbb{N}$, $x_n\in L$. Then by \Cref{l:round-reduction}, there exists a 2-round single-valued Arthur-Merlin protocol $(P,V)$ computing $\{x_n\}_{n\in\mathbb{N}}$ in time $2^{\log^{O(k)}n}$ with advice complexity $2^{\log^{1/k}n}$. Let $\alpha_n\in\{0,1\}^{2^{\log^{1/k}n}}$ be the advice for $x_n$. Then by the conformity of $V$, we know that
    \[
        \Pr_{r}[\exists w,V(1^n,\alpha_n,r,w)=x_n]=1.
    \]
    By the resiliency of $V$, we have
    \[
        \Pr_{r}[\forall w, V(1^n,\alpha_n,r,w)\in\{x_n,\bot\}]\ge\frac23.
    \]
    Thus for at least $2/3$ of $r$, there exists $w$ such that $V(1^n,\alpha_n,r,w)=x_n$, and for every $w$, $V(1^n,\alpha_n,r,w)\in\{x_n,\bot\}$. This exactly matches the definition of $\rnK$, thus we get
    \[
        \rnK^{2^{\log^{O(k)}n}}(x_n)\le|V|+|\alpha_n|+O(\log n)\le O(2^{\log^{1/k}n}).
    \]
    By definition of $\rnKt$, we get
    \[
        \rnKt(x_n)\le \rnK^{2^{\log^{O(k)}n}}(x_n)+\log^{O(k)}n\le O(2^{\log^{1/k}n}).
    \]
\end{proof}

We are then ready to prove the central result of this subsection. 

\begin{proof}[Proof of \Cref{thm:intro_rnKt_lbs}, \Cref{i:rnkt_lb_1}]
    Towards contradiction we assume the contrary, i.e., there exists $\gamma \in (0, 1), d > 1, L \subseteq \mathcal{R}_{\rnKt[2^{\log^\gamma n}]}$ such that some algorithm $A \in \coAM[2^{\log^d n}]$ decides $L$, and $L$ is dense.

    Applying standard padding argument, we define $l(n) = 2^{\log^d n}$ and 
    \begin{equation*}
        L' = \{ x \circ y \mid x \in L, \lfloor l^{-1}(|x| + |y|) \rfloor = |x|\}.
    \end{equation*}
    
    To decide $L'$ on $x \in \{0, 1\}^n$, we might first compute $m = \lfloor l^{-1}(n) \rfloor$, which can be done in $\poly(n)$ time assuming that $l$ is efficiently computable. Then $L'$ can be decided in $\coAM$, just by truncating the input into first $m$ bits and run the $\coAM$ protocol in time $2^{\log^d m} = l(m) \le n$.

    By assumption, $L$ is dense, which immediately gives $L'$ is dense. Applying \Cref{l:construction-complexity}, we can conclude for any constant $k > 1$ that there exist infinitely many $n \in \mathbb{N}$ and $z_n \in L'_n$ such that $\rnKt(z_n) < O(2^{\log^{1/k} n})$. For such $z_n$, we decompose it into $z_n = x_n \circ y_n$ such that $|x_n| = \lfloor l^{-1}(n) \rfloor$, then 
    \begin{itemize}
        \item By definition of $L'$, given that $x_n \circ y_n \in L'_n$, we have $x_n \in L$. Recall that $L \subseteq \mathcal{R}_{\rnKt[2^{\log^\gamma n}]}$, we have $\rnKt(x_n) \ge 2^{\log^{\gamma/d} n}$;
        \item By definition of $\rnKt$, trivially, we have $\rnKt(x_n) \le \rnKt(x_n \circ y_n) + O(1) < O(2^{\log^{1/k} n})$.
    \end{itemize}
    This gives
    \begin{equation*}
        2^{\log^{\gamma/d} n}< O(2^{\log^{1/k} n}).
    \end{equation*}
    Note the arbitrariness of $k > 1$, we take $k = 2d/\gamma$, then
    \begin{equation*}
        O(2^{\log^{1/k} n}) = O(2^{\log^{\gamma/2d} n}) < 2^{\log^{\gamma/d} n}.
    \end{equation*}
    which contradicts the previous inequality.
\end{proof}

\subsection{Non-Uniform Lower Bound via Near-Maximum Circuit Lower Bounds}\label{ss:rnKt_lb_3}

We prove \Cref{thm:intro_rnKt_lbs}, \Cref{i:rnkt_lb_3}, which begins with some technical tools.

\begin{lemma}\label{t:prg-circuit}
    For any language $\mathcal{O} \in \AM \cap \coAM$ and any constant $\varepsilon > 0$, there exists a family of multisets $T = \{T_s \subseteq \{0,1\}^s\}_{s \in \mathbb{N}}$ such that each $T_s$ satisfies $|T_s| \leq \poly(s)$ and the following properties hold\emph{:}
    
    \begin{itemize}[leftmargin=*]
        \item[] \textbf{Efficiency:} For all sufficiently large $s$, we have $\rnKt(T_s) \leq 2^{\log^{\varepsilon} s}$. Here, the multiset $T_s$ with $k$ elements is represented as a binary string of the form $e_1\,01\,e_2\,01\,\dots\,01\,e_k$, where each $e_i$ is a $2s$-bit string encoding an element of $T_s$, with each bit duplicated.

        \item[] \textbf{Pseudorandomness:} For infinitely many $s \in \mathbb{N}$, the following holds: for every $\mathcal{O}$-oracle circuit $C \colon \{0,1\}^s \to \{0,1\}$ of size at most $s$,
        \[
            \left| \Prob_{x \sim \{0,1\}^s}[C(x) = 1] - \Prob_{x \sim T_s}[C(x) = 1] \right| \le \frac{1}{s}.
        \]
    \end{itemize}
\end{lemma}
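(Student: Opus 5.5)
The plan is to instantiate the Impagliazzo--Wigderson generator (\Cref{t:IW97-PRG}) with a hard truth table derived from the explicit-construction protocol of \cite{DBLP:conf/stoc/0001LL25}, as outlined in \Cref{s:techniques}.

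\textbf{Step 1: a dense $\coAM$ property of hard truth tables.} Fix $\mathcal{O} \in \AM \cap \coAM$. Let $c,d$ be the constants of \Cref{t:IW97-PRG} for $\varepsilon_0 = 1/2$. For lengths $N = s^{c}$, let $\Pi_{\mathsf{hard}}$ consist of the strings of length $N$ that, viewed as truth tables on $c\log s$ variables, have no $\mathcal{O}$-oracle circuit of size $s^{c/2}$.
\begin{itemize}
\item By counting, this set contains all but a tiny fraction of $\{0,1\}^N$, so $\Pi_{\mathsf{hard}}$ is dense.
\item Membership is in $\coAM$: a string is not in $\Pi_{\mathsf{hard}}$ iff Merlin can exhibit a small circuit together with $\AM \cap \coAM$ certificates for its oracle answers on all $N$ inputs. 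This is a (constant-round, hence two-round) $\AM$ protocol, using error reduction for the $\mathcal{O}$-queries and a union bound over the $N$ inputs.
\end{itemize}

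\textbf{Step 2: low $\rnKt$ complexity of the hard strings.} Apply \Cref{l:construction-complexity}, or more precisely \Cref{l:explicit-construction} followed by \Cref{l:round-reduction}, to $\Pi_{\mathsf{hard}}$ with a parameter $k$ large enough relative to $1/\varepsilon$. This gives a single-valued Arthur--Merlin protocol producing strings $y_N$ with advice $2^{\log^{1/k}N}$ and verifier time $2^{\log^{O(k)}N}$. For infinitely many $N$, we get $y_N \in \Pi_{\mathsf{hard}}$.

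\textbf{Step 3: define $T_s$.} Set $T_s := \{G_{y_N}(\sigma) : \sigma \in \{0,1\}^{d\log s}\}$ as a multiset, where $G$ is the IW generator. Then $|T_s| = s^d = \poly(s)$.
\begin{itemize}
\item \emph{Pseudorandomness.} When $y_N$ is hard, \Cref{t:IW97-PRG} applies relative to the oracle $\mathcal{O}$, so $T_s$ $(1/s)$-fools every $\mathcal{O}$-oracle circuit of size $s$. This holds for infinitely many $s$.
\item \emph{Efficiency.} Define a randomized nondeterministic program for the encoding of $T_s$. It runs the two-round protocol with stored advice: Arthur's coins serve as the randomness, and Merlin's message serves as the nondeterministic guess. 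It then evaluates $F(y_N,\sigma)$ for all seeds $\sigma$ and writes out the encoding.
\end{itemize}
Conformity and resiliency give exactly the $\rnK$ condition with probability $2/3$. Hence
\[
\rnKt(T_s) \le O\bigl(2^{\log^{1/k}N}\bigr) + \log^{O(k)}N + O(\log s).
\]
Since $\log N = c\log s$, this is at most $2^{\log^{\varepsilon}s}$ for large $s$, once $1/k<\varepsilon$. Note the efficiency bound must hold for \emph{all} large $s$. Resiliency guarantees a well-defined output $y_N$ for every advice string, but the advice is only guaranteed to be good infinitely often. So for every $s$ we fix $\alpha_N$ to the conformity advice of the protocol; conformity holds at every $n$, giving a canonical $y_N$ at every length, and $T_s$ is defined from it.

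\textbf{Main obstacle.} The delicate step is Step 1: showing $\Pi_{\mathsf{hard}} \in \coAM$ when $\mathcal{O}$ is only in $\AM \cap \coAM$. Merlin's claimed oracle answers must be certified with both $\AM$ and $\coAM$ protocols on $N$ inputs at once. I would handle this by amplifying each protocol to error $2^{-2N}$, sharing randomness, and collapsing the rounds via \cite{DBLP:journals/jcss/BabaiM88}. A minor additional point is converting density $1-o(1)$ into the form that \Cref{l:explicit-construction} requires.
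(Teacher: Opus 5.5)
Your proposal is correct, but it obtains the hard truth tables by a different route from the paper. The paper never works with $\Pi_{\mathsf{hard}}$ directly. It black-boxes the circuit lower bound $(\AMEXP\cap\coAMEXP)/2^{n^{\varepsilon}}\not\subseteq\SIZE^{\mathcal{O}}[2^n/n]$ of \cite{DBLP:conf/stoc/0001LL25} (\Cref{l:maximum-circuit}). It then bounds the $\rnKt$ complexity of truth tables of any such language by eliminating oracle access to it inside an $\rnK$ description, with the advice stored (\Cref{l:AM-truth-table}), and combines the two in \Cref{l:rnKt-hard-tt}. At non-good $s$ it sets $T_s=\{0^s\}$, so efficiency there is trivial.

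You instead apply the explicit-construction theorem (\Cref{l:explicit-construction}) directly to the dense $\coAM$ property of truth tables with no small $\mathcal{O}$-oracle circuits. In effect you inline the derivation of that lower bound sketched in \Cref{s:techniques}, and read the $\rnKt$ bound straight off the round-reduced single-valued protocol, as in \Cref{l:construction-complexity}. Using the conformity advice at every length to get efficiency for all large $s$ is also valid.

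Your route skips the detour through advice classes and the oracle-elimination step, and uses only the hardness $N^{1/2}$ that \Cref{t:IW97-PRG} actually needs. In exchange, you must prove $\Pi_{\mathsf{hard}}\in\coAM$ yourself. Your argument for this is fine: Merlin commits to the circuit and all oracle answers, each answer is certified by an amplified protocol, and the resulting $\mathsf{MAM}$ protocol collapses to $\AM$. You also have to do input-length bookkeeping that the black-box lower bound hides.

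That bookkeeping is the one thing missing from your write-up. \Cref{l:explicit-construction} needs a language that is dense at every length, and its hits occur at lengths you do not control. So defining $\Pi_{\mathsf{hard}}$ only at lengths $s^c$ is not enough as written: if you fill the other lengths trivially, every hit may land there. The standard fix has two parts:
\begin{itemize}
\item At every length $n$, impose the hardness condition on a prefix of length $s(n)^c\le n$, where $s(n)\to\infty$ and each $s$ has finitely many preimages.
\item For each $s$, build $T_s$ from that prefix of $x_n$ for some hitting $n$ with $s(n)=s$, whenever such an $n$ exists.
\end{itemize}
Efficiency is unaffected, since every $x_n$ has small $\rnKt$ and extracting a prefix costs only $O(\log n)$. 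Finally, the density ``conversion'' you mention is a non-issue: density $1-o(1)$ already meets the half-density requirement.
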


Next, we show \Cref{t:prg-circuit}. We need the following tools.

\begin{lemma}[{\cite[Theorem 1.2]{DBLP:conf/stoc/0001LL25}}]\label{l:maximum-circuit}
    For any language $\mathcal{O} \in \AM \cap \coAM$ and any $0 < \varepsilon < 1$, $(\AMEXP \cap \coAMEXP)/2^{n^\varepsilon} \nsubseteq \SIZE^\mathcal{O}[2^n/n]$.
\end{lemma}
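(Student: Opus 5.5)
The plan is to derive \Cref{l:maximum-circuit} from the explicit construction result \Cref{l:explicit-construction} (together with the round reduction \Cref{l:round-reduction}), applied to the property of being a hard truth table. Fix $\mathcal{O}\in\AM\cap\coAM$ and $\varepsilon\in(0,1)$. For $N=2^n$, let $\Pi_{\mathsf{hard}}\subseteq\{0,1\}^N$ be the set of truth tables $\tt\in\{0,1\}^{2^n}$ that are not computed by any $\mathcal{O}$-oracle circuit of size at most $2^n/n$. Extend $\Pi_{\mathsf{hard}}$ to input lengths that are not powers of two in the trivial way, e.g.\ by considering only the prefix of length $2^{\lfloor\log N\rfloor}$. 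The proof then has three steps.

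\textbf{Step 1: $\Pi_{\mathsf{hard}}$ is dense.} I would show this by the standard Shannon-style counting of oracle circuits of size $2^n/n$. This step needs care with the size convention (gates versus wires, and the fan-in of oracle gates). If the constants do not work out at exactly $2^n/n$, I would instead count at size $2^n/(cn)$ and absorb the resulting loss by a slight change of input length. I regard this as routine bookkeeping.

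\textbf{Step 2: $\Pi_{\mathsf{hard}}\in\coAM$.} I would show that the complement is in $\AM$ via the following protocol. Merlin sends a circuit $C$ of size $2^n/n$, together with claimed answers to all oracle queries made by $C$ on all $N$ inputs; Arthur then checks $C(i)=\tt_i$ for every $i$. Each claimed oracle answer is certified by running the $\AM$ protocol for $\mathcal{O}$ or for its complement, as appropriate. These are $\poly(N)$ many parallel protocols, each amplified to error $2^{-2N}$, so a union bound makes the combined protocol sound. After collapsing to two rounds, the complement is in $\AM$ with running time $\poly(N)$, and hence $\Pi_{\mathsf{hard}}\in\coAM$.

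\textbf{Step 3: apply the explicit construction and build $L_{\mathsf{hard}}$.} Choose a constant $k$ with $1/k<\varepsilon$, and apply \Cref{l:explicit-construction} (with the density amplification remark if needed) followed by \Cref{l:round-reduction}. This yields a two-round single-valued Arthur--Merlin protocol $(P,V)$ computing some sequence $\{x_N\}$. Its advice has length $2^{\log^{1/k}N}=2^{n^{1/k}}\le 2^{n^\varepsilon}$, the verifier runs in time $2^{\log^{O(k)}N}=2^{n^{O(k)}}$, and $x_N\in\Pi_{\mathsf{hard}}$ for infinitely many $N$. Define $L_{\mathsf{hard}}$ to be the language whose truth table at length $n$ is $x_{2^n}$. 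Given the advice $\alpha_N$ from conformity, I would place $L_{\mathsf{hard}}$ in $\AMEXP/2^{n^\varepsilon}$ as follows. On input $y\in\{0,1\}^n$, Arthur runs $V(1^N,\alpha_N)$ with Merlin, and accepts iff the output is not $\bot$ and its $y$-th bit is $1$.
\begin{itemize}
\item Completeness: conformity guarantees that the honest prover always yields $x_N$.
\item Soundness: resiliency guarantees that with probability at least $2/3$ every prover's output lies in $\{x_N,\bot\}$, so no prover can make Arthur accept when the $y$-th bit of $x_N$ is $0$.
\end{itemize}
The same protocol with the bit flipped places $L_{\mathsf{hard}}$ in $\coAMEXP/2^{n^\varepsilon}$. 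The running time is $2^{\poly(n)}$, and the error can be reduced by standard repetition. At lengths where the hitting property fails, any advice may be used; the language is still well defined, because resiliency gives a canonical output for every advice string. Finally, since $x_{2^n}\in\Pi_{\mathsf{hard}}$ for infinitely many $n$, the slice $L_{\mathsf{hard}}\cap\{0,1\}^n$ has no $\mathcal{O}$-oracle circuit of size $2^n/n$ at infinitely many lengths. Hence $L_{\mathsf{hard}}\notin\SIZE^{\mathcal{O}}[2^n/n]$.

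\textbf{Main obstacle.} I expect the main difficulty to be Step 2, namely showing that $\Pi_{\mathsf{hard}}$ lies in $\coAM$ as a total language rather than only as a promise problem. Handling $\mathcal{O}$-gates inside an $\AM$ protocol requires certifying both yes and no oracle answers, with errors uniformly bounded over the $\poly(N)$ parallel invocations. I would follow the approach of \Cref{l:rK-AM-coAM-superscript}: amplify each certification to perfect completeness and error $2^{-2N}$, use the same public randomness for all queries, and take a union bound over all possible queries.
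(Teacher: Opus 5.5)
The paper does not prove this lemma; it is imported from Chen--Li--Liang. However, the paper's overview sketches exactly your derivation, and your Step~2 is correct: one shared public random string for all oracle certifications, error $2^{-2N}$, and a union bound over all possible queries.

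There is a gap in Step~3. The hitting guarantee of \Cref{l:explicit-construction} only gives infinitely many lengths $N$ with $x_N$ in your extended property. Nothing forces any of these $N$ to be a power of two, so ``$x_{2^n}\in\Pi_{\mathsf{hard}}$ for infinitely many $n$'' does not follow. Your prefix extension is what repairs this, but you have to use it:
\begin{itemize}
\item Each block $[2^n,2^{n+1})$ is finite, so the infinitely many hitting lengths meet infinitely many blocks.
\item For each $n$, put into the advice some $N_n\in[2^n,2^{n+1})$ at which hitting occurs (any $N_n$ in the block if none does), together with the conformity advice $\alpha_{N_n}$.
\item Let the truth table of $L_{\mathsf{hard}}$ at length $n$ be the $2^n$-bit prefix of $x_{N_n}$.
\end{itemize}
This costs $O(n)$ extra advice bits and keeps the running time at $2^{(n+1)^{O(k)}}$, and your bit-extraction protocol goes through unchanged. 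Use conformity advice at every length. For an arbitrary advice string, resiliency gives soundness but not completeness, so your remark that ``any advice may be used'' is not right as stated.

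Your fallback for Step~1 would also fail. A constant-factor loss in the size bound cannot be absorbed by changing the input length. Viewing a function on $m$ bits as one on $n\ge m$ bits keeps its circuit complexity fixed, so this only lowers its hardness relative to $2^n/n$. For a near-maximum bound the constant is the whole point, so you need the sharp count. In the usual convention where size counts wires (total fan-in, oracle gates included), a size-$s$ oracle circuit has a description of $s\log(s+n)+O(s)$ bits. At $s=2^n/n$ this is $2^n-\Omega(2^n\log n/n)$, so $\Pi_{\mathsf{hard}}$ has density $1-o(1)$. By contrast, a crude $2^{O(s\log s)}$ bound already exceeds $2^n$ at this size; for example, counting fan-in-two gates naively gives $2s\log s$ bits.
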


The following lemma gives upper bound on $\rnKt$ complexity of 
truth table decidable in $\AMEXP \cap \coAMEXP$ with short advice.

\begin{lemma}\label{l:AM-truth-table}
    For sufficiently large $n \in \mathbb{N}$ and $0 < \varepsilon < 1$, if $f: \{0, 1\}^* \to \{0, 1\}$ is computable in $(\AMEXP \cap \coAMEXP)/2^{n^\varepsilon}$, then $\rnKt(\tt(f_n)) = O(2^{n^\varepsilon})$.
\end{lemma}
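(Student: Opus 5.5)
The plan is to build a short randomized nondeterministic program that prints $\tt(f_n)$. The program stores the advice string $\alpha_n$ of length $2^{n^\varepsilon}$ and the constant-size description of the two protocols, and it works bit by bit. Its nondeterministic output is defined exactly when every bit has been certified. Fix $L \in \AMEXP \cap \coAMEXP$ with advice $\{\alpha_n\}$ computing $f$. Let $V_1$ be the Arthur--Merlin verifier for $L$ and $V_0$ the one for its complement, both running in time $2^{n^c}$ for some constant $c$ on inputs of length $n$ with advice $\alpha_n$. I would first use standard round reduction, as in \Cref{l:round-reduction} and \cite{DBLP:journals/jcss/BabaiM88,DBLP:conf/stoc/GoldwasserS86}, to make both protocols two-round with perfect completeness. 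Error reduction by parallel repetition would then bring the soundness error below $2^{-2n}$, still in time $2^{n^{c'}}$.

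The program $\Pi$ receives randomness $r = (r^0_x, r^1_x)_{x\in\{0,1\}^n}$ and a nondeterministic string containing, for each $x$, a claimed bit $b_x$ and a witness $w_x$. It checks that $V_{b_x}(x,\alpha_n,r^{b_x}_x,w_x)$ accepts for every $x$. If all checks pass it outputs $(b_x)_x$; otherwise it outputs $\bot$. Existence of a good path follows from perfect completeness: for every $r$, the correct bit $b_x = f(x)$ always has an accepting witness, so some path outputs $\tt(f_n)$. For uniqueness, I would use soundness together with a union bound over the $2^n$ inputs $x$. With probability at least $1 - 2^n\cdot 2^{-2n} \ge 2/3$ over $r$, no accepting witness exists for the wrong bit $1-f(x)$ at any $x$. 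For such $r$, every non-$\bot$ path outputs exactly $\tt(f_n)$, which is what the definition of $\rnKt$ requires.

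It remains to bound the parameters. The description length is $|\alpha_n| + O(\log n) + O(1) = 2^{n^\varepsilon} + O(\log n)$. The running time, counting all $2^n$ verifications, is $t = 2^n \cdot 2^{n^{c'}}\cdot\poly = 2^{\poly(n)}$, and both the randomness and the witnesses have length at most $t$. Hence $\rnKt(\tt(f_n)) \le 2^{n^\varepsilon} + O(\log n) + \poly(n) = O(2^{n^\varepsilon})$.

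I expect the main obstacle to be the soundness amplification in the presence of advice. The advice in $(\AMEXP\cap\coAMEXP)/2^{n^\varepsilon}$ only promises correct behaviour for the correct advice string $\alpha_n$, and that is fine here because we hard-wire exactly $\alpha_n$. A second point to check is that the same randomness must serve all $x$ simultaneously, which is why the error has to be pushed below $2^{-2n}$ rather than left constant. Both steps are handled by repetition at polynomial cost in the exponent, so I expect them to be routine.
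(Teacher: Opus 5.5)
Your proposal is correct and matches the paper's argument. The paper first proves an oracle-elimination claim, an advice-charging extension of \Cref{l:rK-AM-coAM-superscript}: each oracle answer is guessed together with an AM or coAM certificate, perfect completeness gives existence, and soundness amplified to $2^{-2\ell}$ plus a union bound gives single-valuedness. It then applies this claim to the trivial program that queries $f$ on all of $\{0,1\}^n$, whereas you inline the same construction; because your base program is deterministic and you use fresh randomness per input, you get success probability $1-2^{-n}$ directly and do not need the success-amplification step of the paper's general claim.
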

\begin{proof}
    We first state a extended version of \Cref{l:rK-AM-coAM-superscript}, which handles oracle computed by non-uniform Arthur-Merlin protocol.

\begin{claim}\label{l:non-uniform-AM-coAM-zero-error}
    For any time-constructible, non-decreasing $T: \mathbb{N} \to \mathbb{N}$, $T(n) \ge n$, and any
    $\alpha: \mathbb{N} \to \mathbb{N}$, there exists a constant $c \in \nat$ such that for every
    $\mathcal{O} \in (\AMTIME[T(n)] \cap \coAMTIME[T(n)])/\alpha(n)$, all $x, y \in \bool^*$, all
    integers $\ell \ge 1$, and all sufficiently large $t \in \mathbb{N}$ with $t > |x|$, restricting
    query length $\ell$,
    \[
        \rnK^{t \cdot (T(\ell) \cdot |x| \cdot \log t)^{c}}(x \mid y)
        \ \le\ \rnK^{t,\mathcal{O}_{= \ell}}(x \mid y) + O(\log(\ell \cdot |x| \cdot \log t)) + \alpha(\ell).
    \]
\end{claim}
\begin{proof}[Proof Sketch of \Cref{l:non-uniform-AM-coAM-zero-error}]\renewcommand\qedsymbol{$\diamond$}
    The proof is basically identical to that of \Cref{l:rK-AM-coAM-superscript}, except that the non-uniform
    advice needed to simulate the Arthur--Merlin protocols on queries of length at most $\ell$ is
    stored in the description, contributing $\alpha(\ell)$ extra bits. 
\end{proof}
    
    The proof then is a straightforward application of \Cref{l:non-uniform-AM-coAM-zero-error}. Assume there exists $k > 0$ and $0 < \varepsilon < 1$ such that $f \in (\AMTIME[2^{n^k}] \cap \coAMTIME[2^{n^k}])/2^{n^\varepsilon}$. Hence there exists a uniform algorithm $M^{(\text{-})}$ such that, with access to oracle $f$ on input length $n$, it outputs $\tt(f_n)$ in time $O(2^n)$, so
    \[
        \rnK^{2^n, f_{=n}}(\tt(f_n)) = O(1).
    \]
    By \Cref{l:non-uniform-AM-coAM-zero-error}, we have 
    \begin{equation*}
        \rnKt(\tt(f_n)) \le \rnK^{2^{O(n^k)}}(\tt(f_n)) + O(n^k) + 2^{n^\varepsilon} = O(2^{n^\varepsilon}),
    \end{equation*}
    as desired.
\end{proof}

As an application of \Cref{l:maximum-circuit,l:AM-truth-table}, we get the following.

\begin{lemma}\label{l:rnKt-hard-tt}
    For any language $\mathcal{O} \in \AM \cap \coAM$ and any constant $\gamma > 0$, there exists a sequence $\{x_N\}_{N \in \mathbb{N}}$ such that the following hold\emph{:}
    \begin{itemize}[leftmargin=*]
        \item For every large enough $N$, $\rnKt(x_N)\leq 2^{\log^{\gamma} N}$.
        \item For infinitely many $N$ that are powers of $2$, the string $x_N$, interpreted as the truth table of a function $f \colon \{0,1\}^{\log N} \to \{0,1\}$, has $\mathcal{O}$-oracle-circuit complexity at least $N /\log N$.
    \end{itemize}
\end{lemma}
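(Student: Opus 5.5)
The plan is to take the hard language supplied by \Cref{l:maximum-circuit} and to define $x_N$ at every length, including lengths that are not powers of two, by padding, so that the $\rnKt$ bound holds for all large $N$. Fix $\mathcal{O} \in \AM\cap\coAM$ and $\gamma>0$. Choose $\varepsilon := \gamma/2$ and apply \Cref{l:maximum-circuit} with this $\varepsilon$. This yields a language $f \in (\AMEXP\cap\coAMEXP)/2^{n^{\varepsilon}}$ with $f\notin \SIZE^{\mathcal{O}}[2^n/n]$. Consequently there are infinitely many $n$ for which $f_n$ has no $\mathcal{O}$-oracle circuit of size $2^n/n$. Here I use the usual convention: ``$\notin\SIZE[s]$'' means the size bound fails for infinitely many $n$.

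We now define the sequence. For $N$ a power of two, say $N=2^n$, set $x_N := \tt(f_n)$. For an arbitrary $N$, let $n=\lfloor \log N\rfloor$ and set $x_N := \tt(f_n)\,0^{N-2^n}$. The second bullet then follows directly from the definition: whenever $N=2^n$ with $n$ one of the infinitely many hard lengths, $x_N$ is the truth table of $f_n$, whose $\mathcal{O}$-circuit complexity is at least $2^n/n = N/\log N$.

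For the first bullet, apply \Cref{l:AM-truth-table} to $f$, which gives $\rnKt(\tt(f_n)) = O(2^{n^{\varepsilon}})$ for all large $n$. Padding with zeros costs only an additive $O(\log N)$. The reason is that a program for $x_N$ can store $N$ (using $O(\log N)$ bits), run the $\rnKt$-program for $\tt(f_n)$, and then append $0^{N-2^n}$. This adds $O(N)$ to the running time, so it contributes only $O(\log N)$ to the $\log t$ term. Hence
\[
\rnKt(x_N) \le O\bigl(2^{(\log N)^{\gamma/2}}\bigr) + O(\log N) \le 2^{\log^{\gamma} N}
\]
for all sufficiently large $N$, since $2^{(\log N)^{\gamma/2}}$ multiplied by a constant is eventually dominated by $2^{(\log N)^{\gamma}}$.

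The main point to check is that \Cref{l:AM-truth-table} applies to the particular $f$ produced by \Cref{l:maximum-circuit}, with consistent parameters. Two things must be verified. First, the advice length $2^{n^{\varepsilon}}$ enters the description-length bound additively; this is exactly what \Cref{l:non-uniform-AM-coAM-zero-error} accounts for. Second, the exponential-time Arthur--Merlin simulation adds only $\poly(n)$ to the $\log t$ term. If instead the circuit lower bound is only guaranteed for $\varepsilon$ of some other form, the fix is to pick $\varepsilon$ strictly smaller than $\gamma$, which leaves slack for the constant factors. Beyond these checks, the argument is routine bookkeeping.
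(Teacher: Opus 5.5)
Your proposal is correct and follows the same route as the paper: take the hard language from \Cref{l:maximum-circuit} and bound the $\rnKt$ complexity of its truth tables via \Cref{l:AM-truth-table}. Your extra bookkeeping is sound: choosing $\varepsilon=\gamma/2$ absorbs the constant hidden in $O(2^{n^{\varepsilon}})$, which the paper's two-line proof glosses over, and zero-padding defines $x_N$ when $N$ is not a power of two. Just take $\varepsilon=\min\{\gamma/2,1/2\}$, so that the requirement $\varepsilon<1$ of \Cref{l:maximum-circuit} still holds when $\gamma\ge 2$.
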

\begin{proof}
\Cref{l:maximum-circuit} states that there exists a language in $(\AMEXP \cap \coAMEXP)/2^{n^\gamma}$ whose truth tables have nearly maximal $\mathcal{O}$-oracle circuit complexity. \Cref{l:AM-truth-table} then states that each such truth table has $\rnKt$-complexity at most $O\mleft(2^{\log^{\gamma} N}\mright)$, where $N$ denotes the length of the truth table.
 \end{proof}

We now present the proof of \Cref{t:prg-circuit}.

\begin{proof}[Proof of \Cref{t:prg-circuit}]
The proof combines \Cref{l:rnKt-hard-tt} and \Cref{t:IW97-PRG}.

Let $\mathcal{O}\in\AM\cap\coAM$.
We start by applying \Cref{t:IW97-PRG} with $\varepsilon \vcentcolon= 1/2$, and let $c, d$ be the corresponding constants from the lemma.

Consider \Cref{l:rnKt-hard-tt} with $\gamma\vcentcolon=\varepsilon/2$ and let  $\{x_N\}$ be the corresponding sequence of strings.

Let $\mathcal{N} \subseteq \nat$ be the infinite set of powers of $2$ at which the string $x_N$ from
\Cref{l:rnKt-hard-tt} has $\mathcal{O}$-oracle-circuit complexity at least $N/\log N$. For each
$N \in \mathcal{N}$, let $s(N) \vcentcolon= \lfloor N^{1/c} \rfloor$, so that
$s(N)^{c} \le N < (s(N)+1)^{c}$. We call an integer $s$ \emph{good} if $s = s(N)$ for some
$N \in \mathcal{N}$; note that there are infinitely many good $s$.

For a good $s$, with $N \in \mathcal{N}$ the corresponding length, the output length of
$F(x_N,-)$ is at least $N^{1/c} \ge s$, and we define
\[
    T_s \vcentcolon= \left\{ F(x_N, z)_{[1:s]} \mid z \in \{0,1\}^{d \log N} \right\}.
\]
For every other $s$, define $T_s = \{0^s\}$; the efficiency bound below applies to every $s$, while pseudorandomness is claimed only at good $s$.

We first argue that $T_s$ satisfies the pseudorandom condition for good $s$. This follows from the fact that if $s$ is good, then the corresponding $x_N$ has high $\mathcal{O}$-oracle-circuit complexity (as a truth table) and the security guarantee provided by the pseudorandom generator $F$ in \Cref{t:IW97-PRG}.

Next, we address the efficiency condition. Note that for a (sufficiently large) good $s$, given the integer $s$ and the string $x_N$, where $N$ is the integer witnessing the goodness of $s$, we can construct $T_s$ in time $\poly(N)\leq \poly(s)$. By the efficiency condition stated in \Cref{l:rnKt-hard-tt}, we have 
\[
     \rnKt(x_N)\leq 2^{\log^{\gamma} N}.
\]
It follows that
\begin{align*}
        \rnKt(T_s) &\leq 2^{\log^{\gamma} N} + \log \poly(s) \\
        &\leq 2^{O(\log s)^{\gamma}} + O(\log s)\\
        &\leq 2^{\log^{\varepsilon} s},
\end{align*}
as desired. Also note that efficiency holds trivially when $s$ is not good.

\end{proof}

Putting it all together, we are now ready to show \Cref{thm:intro_rnKt_lbs}, \Cref{i:rnkt_lb_3}.
\begin{proof}[Proof of \Cref{thm:intro_rnKt_lbs}, \Cref{i:rnkt_lb_3}.]

    Towards contradiction, we assume the contrary. That is, there exists $\gamma > 0, \mathcal{O} \in \AM \cap \coAM, c > 1, d \ge 1$, $L \subseteq \mathcal{R}_{\rnKt[2^{\log^\gamma n}]}$ satisfying
    \begin{itemize}
        \item $L$ can be decided by $\mathcal{O}$-oracle circuit family $\{C^\mathcal{O}_n\}_{n \in \mathbb{N}}$ of size $2^{\log^d n}$;
        \item There exists $N \in \mathbb{N}$ such that $|L \cap \{0, 1\}^n| \ge 2^n/n^c$ for all $n > N$.
    \end{itemize} 

    We instantiate \Cref{t:prg-circuit} with $\varepsilon \vcentcolon= \gamma / (2(d + 1))$, obtain the corresponding family of multi-sets $T = \{T_s\}_{s \in \mathbb{N}}$. Consider the following algorithm $D: \{0, 1\}^* \to \{0, 1\}$ trying to break the pseudo-randomness:
    \begin{algorithm}[H]
		\caption{A non-uniform algorithm for breaking pseudo-randomness}
		\begin{algorithmic}[1]
			\Procedure{$D$}{$x$} 
    			\State $s \vcentcolon= |x|$.
    			\State $m \vcentcolon= 2^{(2/(c - 1) \cdot \log s)^{1/(d + 1)}}$.
                \If {$m \le N$}
                   \State Output $0$.
                \EndIf
                \State $k := m^{(c - 1)/2}$.
                \If {$km > s$}
                   \State Output $0$.
                \EndIf
                \For {$i \in [k]$}
                    \State Compute $a_i \vcentcolon= C^\mathcal{O}_m(x_{[(i - 1)m + 1 \ \vcentcolon \ im]})$.
                    \If {$a_i = 1$}
                        \State Output $1$.
                    \EndIf
                \EndFor
                \State Output $0$.
            \EndProcedure
		\end{algorithmic}
	\end{algorithm}

    Let $s_0 \in \mathbb{N}$ be large enough that $\rnKt(T_s) \le 2^{(\log s)^{\varepsilon}}$ for every
    $s \ge s_0$. Fix any $s \ge s_0$, any
    $x \in T_s$, and any block $u \vcentcolon= x_{[(i-1)m+1\,:\,im]}$ with $i \in [k]$, where $m$ and $k$ are as
    computed by $D$ on inputs of length $s$. Since $u$ can be recovered from $T_s$ together with the
    indices of $x$ and $i$, we have
    \[
        \rnKt(u) \ \le\ \rnKt(T_s) + O(\log s) \ \le\ 2^{(\log s)^{\varepsilon}} + O(\log s).
    \]
    By the definition of $m$ we have $(\log s)^{1/(d+1)} = \frac{c-1}{2}\log m$, and therefore $(\log s)^{\varepsilon} = O\bigl(\log^{\gamma/2} m\bigr)$. Hence, $\rnKt(u) < 2^{\log^{\gamma} m}$ for all sufficiently large $m$, so $u \notin L$ and
    $C^{\mathcal{O}}_m(u) = 0$. As this holds for every block $i \in [k]$, we get $D(x) = 0$ for every
    $x \in T_s$, and consequently, for all $s \ge s_0$,
    \[
        \Prob_{x \sim T_s}[D(x) = 1] = 0.
    \]

    Pick $s_1 = 2^{c/2 \cdot \log^{d + 1}N}$, we can verify that for any $s \ge s_1$, 
    \[
        m = 2^{(2/(c - 1) \cdot \log s)^{1/(d + 1)}} \ge 2^{(c/(c - 1))^{1/(d + 1)} \cdot \log N} > N.
    \]
    In this case, $|L \cap \{0, 1\}^m| \ge 2^m/m^c$ by our assumption on $L$. Furthermore, we have $km = m^{(c + 1)/2} = 2^{\log^{1/(d + 1)}s \cdot F(c)}$ where $F(c) = (2/(c - 1))^{1/(d + 1)} \cdot (c + 1)/2$ is universally constant, so we pick $s_2 \vcentcolon= \max\{16, 2^{F(c)^4}\}$, which gives
    \begin{align*}
        km = 2^{\log^{1/(d + 1)}s \cdot F(c)} \le 2^{\log^{1/(d + 1)}s \cdot \log^{1/4}s} \le 2^{\log^{3/4}s} < \frac{s}{2}.
    \end{align*}
    for all $s \ge s_2$ (the second last inequality follows from $d \ge 1$, while the last follows from $s \ge 16$). Now for all $s \ge \max\{s_1, s_2\}$ and $x \in \{0, 1\}^s$,
    \[  
        D(x) = \bigvee_{i = 1}^k C^\mathcal{O}_m(x_{[(i - 1)m + 1 \ \vcentcolon \ im]}). 
    \]
    Therefore,
    \begin{align*}
        \Prob_{x \sim \{0, 1\}^s} [D(x) = 0] & = \Prob_{x \sim \{0, 1\}^s} [x_{[1 \ \vcentcolon \ m]} \notin L, \cdots, x_{[(k - 1)m + 1 \ \vcentcolon \ km]} \notin L] \\
        & = \left( \Prob_{y \sim \{0, 1\}^m} [y \notin L] \right)^k \le \left(1 - \frac{1}{m^c} \right)^k \\
        & < \exp\left(-\frac{k}{m^c}\right).
    \end{align*}
    Given that $k = m^{(c - 1)/2}$, we have
    \[
        k = \frac{m^{c - 1}}{k} > m^c\ln\left(1 + \frac{1}{km}\right),
    \]
    immediately giving
    \[
        \exp\left(-\frac{k}{m^c}\right) < \exp\left(-m^c\ln\left(1 + \frac{1}{km}\right) \cdot \frac{1}{m^c}\right) = 1 - \frac{1}{km + 1} \le 1 - \frac{2}{s + 2} < 1 - \frac{1}{s}.
    \]

    In conclusion, for all $s \ge \max\{s_0, s_1, s_2\}$, we have
    \[
        \Prob_{x \sim \{0, 1\}^s} [D(x) = 1] - \Prob_{x \sim T_s} [D(x) = 1] > \frac{1}{s} - 0 = \frac{1}{s}.
    \]
    Finally, $D$ is an $\mathcal{O}$-oracle circuit of size at most $s$: it consists of $k$ copies of $C^{\mathcal{O}}_m$ joined by an OR gate, where $\log k = \frac{c-1}{2}\log m$ and $\log^d m$ are both $o(\log s)$. This contradicts the pseudo-randomness of $T = \{T_s\}_{s \in \mathbb{N}}$ in \Cref{t:prg-circuit}.
    
\end{proof}

\section{Unconditional Failure of Conditional SoI: Proof of \Cref{thm:failure-cond-SoI}}\label{sec:conditional_SoI}

    We say that \emph{conditional SoI holds for $\nKt$ with error $e(n)$}, if for any strings $x,y,z\in\{0,1\}^*$, we have 
\[
    \nKt(x,y\mid w)\ge\nKt(x\mid y,w)+\nKt(y\mid w)-e(|x|+|y|).
\]
Note that in this definition the error bound does not depend on $|w|$. We do not assume random access to $w$.

We need to impose the following mild conditions on the computational model in the definition of Kolmogorov complexity:\footnote{These are standard assumptions in Kolmogorov complexity and meta-complexity, and often appear only implicitly in proofs.}
\begin{itemize}
    \item If a program $M_1(w)$ runs in time $t_1$, and a program $M_2(w)$ runs in time $t_2$, then we can run both programs on input $w$ in time $O(t_1 + t_2)$.
    \item The resulting program has length at most $|M_1| + |M_2| + O(1)$.
\end{itemize}

Under these mild encoding assumptions, we establish unconditionally that SoI fails in the presence of a conditional string (\Cref{thm:failure-cond-SoI}).

To prove \Cref{thm:failure-cond-SoI}, we will need the following lemma.

\begin{lemma}[Efficient Simulation Under Conditional SoI]\label{lem:cond-SoI-imply-short-time}
    If conditional SoI holds for $\nKt$ with error $e(n) \geq \log n$, then for any $x,w\in\{0,1\}^*$, we have
    \[
        \nK^{O(2^{e(|x|)})}(x\mid w)\le\nKt(x\mid w)+O(e(|x|)).
    \]
\end{lemma}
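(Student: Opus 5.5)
The plan is to follow the proof of \Cref{lem:rnk-t-rnkt}, carrying the conditional string $w$ along. Because conditional SoI is assumed with error depending only on $|x|+|y|$, every invocation of SoI below costs $e(|x|)$ no matter how long $w$ is. Let $n=|x|$. For $1\le i\le j\le n$, let $s_{[i,j]}$ and $t_{[i,j]}$ be the size and running time of an optimal $\nKt$ program for $x_{[i,j]}$ given $(x_{[1,i-1]},w)$, choosing the fastest among the optimal ones.

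First, let $i$ be the least index with $t_{[1,i]}\ge 2^{C\log n+1}\cdot 2^{e(n)}$, for a suitable universal constant $C$. If no such $i$ exists, then the optimal program for $x$ given $w$ already runs in time $2^{O(e(n))}$, and we are done. Using $e(n)\ge\log n$ we may assume $i>1$; the case $i=n$ is handled by the same modification as in \Cref{lem:rnk-t-rnkt}.

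Next, compare two bounds on $\nKt(x\mid w)$.

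\emph{Lower bound.} Apply conditional SoI to the pair $(x_{[i+1,n]},x_{[1,i]})$ with conditional string $w$:
\[
\nKt(x\mid w)\ge s_{[1,i]}+\log t_{[1,i]}+s_{[i+1,n]}+\log t_{[i+1,n]}-e(n).
\]

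\emph{Upper bound.} Use the composition assumptions on the model: run the first program on $w$, then the second on $(x_{[1,i]},w)$. Both are single-valued nondeterministic, so the composition is too, and it needs no success amplification since $\nKt$ has no randomness. This gives
\[
\nKt(x\mid w)\le s_{[1,i]}+s_{[i+1,n]}+\log\big(t_{[1,i]}+t_{[i+1,n]}+n\big)+C.
\]

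Comparing the two bounds and exponentiating, the choice of threshold for $t_{[1,i]}$ forces $t_{[i+1,n]}\le 2^{C\log n+2}\cdot 2^{e(n)}$. By minimality of $i$, we also have $t_{[1,i-1]}<2^{C\log n+1}\cdot 2^{e(n)}$. So $x$ is produced in time $2^{O(e(n))}$ as follows: run the program for $x_{[1,i-1]}$ given $w$, append the stored bit $x_i$ (along with $i$, which costs $O(\log n)$ bits), then run the program for $x_{[i+1,n]}$ given $(x_{[1,i]},w)$. This yields
\[
\nK^{O(2^{e(n)})}(x\mid w)\le s_{[1,i-1]}+s_{[i+1,n]}+O(\log n).
\]

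Finally, bound the right-hand side:
\[
s_{[1,i-1]}+s_{[i+1,n]}\le \nKt(x_{[1,i]}\mid w)+\nKt(x_{[i+1,n]}\mid x_{[1,i]},w)\le \nKt(x\mid w)+e(n).
\]
The first inequality uses $\nKt(x_{[1,i-1]}\mid w)\le\nKt(x_{[1,i]}\mid w)+O(\log n)$, since one can truncate, and the second is conditional SoI again.

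The main obstacle is the time-bound bookkeeping. Stage running times are measured with the conditional string $w$ on a tape with sequential access, which is why the paper disclaims random access to $w$. The composed program must reread $w$ in the second stage, and the second stage's conditional $(x_{[1,i]},w)$ must be formed, all without incurring a cost depending on $|w|$ beyond what $t_{[1,i]}$ and $t_{[i+1,n]}$ already account for. I would rely on the stated model assumption that two programs on the same input run in time $O(t_1+t_2)$. I would also arrange that the $x_{[1,i]}$ part of the conditional is produced and consumed internally, so the bound reads $O(2^{e(|x|)})$ with the constant absorbed into the $2^{O(\log n)}\le 2^{O(e(n))}$ slack.
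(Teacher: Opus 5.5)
Your argument is the paper's argument: the threshold trick from \Cref{lem:rnk-t-rnkt} with $w$ carried along, using the stated model assumptions for composition. However, one parameter choice means you do not get the claimed time bound.

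You copied the threshold $t_{[1,i]}\ge 2^{C\log n+1}\cdot 2^{e(n)}$ from the randomized lemma, where the extra $C\log n$ pays for success amplification. In your argument, the only bound on the first-stage program is $t_{[1,i-1]}<2^{C\log n+1}\cdot 2^{e(n)}=2n^{C}\cdot 2^{e(n)}$. So your composed program runs in time $O(n^{C}\cdot 2^{e(n)})$, and what you prove is $\nK^{2^{O(e(n))}}(x\mid w)\le\nKt(x\mid w)+O(e(n))$, not $\nK^{O(2^{e(n)})}$. Your last sentence conflates $O(2^{e(n)})$ with $2^{O(e(n))}$. The factor $n^C=2^{C\log n}$ can only be absorbed into the exponent of $2^{O(e(n))}$, not into a constant factor in front of $2^{e(n)}$.

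The fix is what the paper does. As you yourself note, composing nondeterministic programs costs only an additive $C$, with no amplification, so use the threshold $2^{C+1}\cdot 2^{e(n)}$. Then $t_{[1,i]}\ge 2^{e(n)}\ge n$ gives $1+n/t_{[1,i]}\le 2$; this is where $e(n)\ge\log n$ is actually used. Also $1/t_{[1,i]}\le 2^{-(e(n)+C+1)}$, so the same exponentiation yields $t_{[i+1,n]}\le 2^{C+2}\cdot 2^{e(n)}$. Minimality of $i$ gives $t_{[1,i-1]}<2^{C+1}\cdot 2^{e(n)}$, so the total time is $O(2^{e(n)})$.

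Your weaker $2^{O(e)}$ version would still suffice for \Cref{thm:failure-cond-SoI}, since $2^{O(e(100e(n)))}$ is still $2^{\poly(\log\log n)}$, but it is not the statement. Separately, the first inequality in your final chain is missing the $+O(\log n)$ you mention right after it; this is harmless, since it is absorbed into $O(e(n))$.
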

\begin{proof} 

The proof is essentially the same as that of \Cref{lem:SoI-imply-short-time}, but with the addition of an auxiliary conditional string. We present the details here for completeness.

Let $n=|x|$. Let the $n$ bits of $x$ be $x_1x_2\cdots x_n$. For any integers $1\le i\le j\le n$, we define $x_{[i,j]}=x_i\cdots x_j$. For each $i\le j$, we define $s_{[i,j]}$ and $t_{[i,j]}$ to be the program size and running time of an optimal $\nKt$ program for $x_{[i,j]}$ conditioning on $w$ and $x_{[1,i-1]}$. If there are multiple programs achieving the optimal bound $\nKt(x_{[i,j]} \mid w)$, we take one with the smallest running time.
    
    We let $1 \leq i \leq n$ be the smallest integer such that $t_{[1,i]}\ge2^{C + 1}\cdot 2^{e(n)}$, where $C \geq 1$ is a suitable universal constant defined below. If no such $i$ exists, then the optimal $\nKt$ program for $x$ conditioning on $w$ already runs in time $O(2^{e(n)})$, and the proof is finished. 
    
    Note that we can assume without loss of generality that $1 < i < n$. Indeed, if $C$ is large enough, using that $e(n) \geq \log n$ it is easy to see that $i > 1$. The case $i = n$ can be handled by an easy modification of the argument above. 
    
    Now, by conditional symmetry of information, we have
    \begin{align}
        \nKt(x\mid w)&\ge\nKt(x_{[1,i]}\mid w)+\nKt(x_{[i+1,n]}\mid x_{[1,i]},w)-e(n)\notag\\
        &=s_{[1,i]}+\log(t_{[1,i]})+s_{[i+1,n]}+\log(t_{i+1,n})-e(n).\label{eq:cond-SoI-1}
    \end{align}
    But on the other hand, since we can print $x$ by simply running the programs in sequence, we get\footnote{Here we use the aforementioned properties of the encoding of programs in the definition of Kolmogorov complexity.}
    \begin{align}
        \nKt(x\mid w) & \le  s_{[1,i]}+s_{[i+1,n]}+O(1) + \log(O(t_{[1,i]}+t_{[i+1,n]}+n)) \notag  \\
        & \leq  s_{[1,i]}+s_{[i+1,n]}+\log(t_{[1,i]}+t_{[i+1,n]}+n) + C,  \label{eq:cond-SoI-2}
    \end{align}
    for some universal constant $C \geq 1$ that only depends on the computational model.
    Combining \Cref{eq:cond-SoI-1,eq:cond-SoI-2}, we get
    \[
        \log(t_{[1,i]}+t_{[i+1,n]}+n)\ge\log(t_{[1,i]})+\log(t_{[i+1,n]})-e(n) - C.
    \]
    By taking the exponential of both sides, and using that $e(n) \geq \log n$ and $t_{[1,i]} \geq 2^{C+1} \cdot 2^{e(n)}$,  we get
    \[
        t_{[i+1,n]}\le\frac{1+\frac{n}{t_{[1,i]}}}{\frac{1}{2^{e(n) + C}}-\frac{1}{t_{[1,i]}}}\le\frac{2}{\frac{1}{2^{e(n) + C}}-\frac{1}{t_{[1,i]}}}\le 2^{C + 2}\cdot 2^{e(n)}.
    \]
    Hence by combining the programs for $x_{[1,i-1]}$ and $x_{[i+1,n]}$, and by storing the bit $x_i$ and the position $i$, we can print $x$ in (nondeterministic) time $O(2^{e(n)})$. Therefore, it follows that
    \begin{align*}
        \nK^{O(2^{e(n)})}(x\mid w)&\le s_{[1,i-1]}+s_{[i+1,n]}+ O(\log n)\\
        &\le\nKt(x_{[1,i-1]}\mid w)+\nKt(x_{[i+1,n]}\mid x_{[1,i]}, w)+ O(\log n)\\
        &\le\nKt(x_{[1,i]}\mid w)+\nKt(x_{[i+1,n]}\mid x_{[1,i]}, w)+ O(\log n)\\
        &\le\nKt(x\mid w)+O(e(n)),\tag{By conditional SoI for $\nKt$}
    \end{align*}
    as desired.
\end{proof}

We will use the upper bound on the running time provided by \Cref{lem:cond-SoI-imply-short-time} to significantly speed-up the computation of $\nKt$, leading to a contradiction to an existing complexity lower bound. We need the following result.

\begin{theorem}[Hardness of Approximating $\nKt$; Corollary of   \Cref{thm:intro_nKt_lb}]\label{thm:lb_nkt}
   There exists $\varepsilon > 0$ such that the following result holds:
    $$
\mathsf{Gap}\text{-}\MnKtP[0.6n,0.9n] \notin \DTIME[2^{\varepsilon \cdot n}].
    $$
\end{theorem}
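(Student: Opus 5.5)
The plan is to derive this from \Cref{thm:intro_nKt_lb} by contraposition. Suppose, for contradiction, that for every $\varepsilon>0$ some language $L_\varepsilon\in\DTIME[2^{\varepsilon n}]$ agrees with $\mathsf{Gap}\text{-}\MnKtP[0.6n,0.9n]$. That is, $L_\varepsilon$ accepts every $x$ with $\nKt(x)\le 0.6|x|$ and rejects every $x$ with $\nKt(x)\ge 0.9|x|$. Let $L'$ be the complement of $L_\varepsilon$. Since deterministic time is closed under complement, $L'\in\DTIME[2^{\varepsilon n}]\subseteq\NTIME[2^{\varepsilon n}]\cap\coNTIME[2^{\varepsilon n}]$.

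The next step is to check that $L'$ meets the hypotheses of the ``moreover'' part of \Cref{thm:intro_nKt_lb} with $\gamma:=0.6\in(1/2,1)$. Every $x\in L'$ is rejected by $L_\varepsilon$, so it is not a YES-instance, hence $\nKt(x)>0.6|x|$ and $L'\subseteq\mathcal{R}_{\nKt[0.6n]}$. For density, $L'$ contains every $x$ with $\nKt(x)\ge 0.9|x|$. By the counting argument of \Cref{l:incomp-strings} applied to $\nKt$ (each pair $(\Pi,t)$ with $|\Pi|+\lceil\log t\rceil<0.9n$ determines at most one string, since nondeterministic outputs are single-valued), this set has size at least $2^n-2^{0.9n}\ge 2^{n-1}$. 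So $L'$ is dense with $b=1$.

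Now let $\varepsilon(0.6)>0$ be the constant that \Cref{thm:intro_nKt_lb} provides for $\gamma=0.6$, and take $\varepsilon$ in our assumption to be this value. The theorem then says $L'\notin\NTIME[2^{\varepsilon n}]\cap\coNTIME[2^{\varepsilon n}]$, which contradicts the previous paragraph. So this $\varepsilon$ witnesses the statement.

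I do not expect a real obstacle. The one point to get right is matching the promise thresholds to $\mathcal{R}_{\nKt[\gamma n]}$. YES-instances must lie outside $L'$, which is why we complement $L_\varepsilon$; the gap between $0.6n$ and $0.9n$ costs nothing, because only the lower threshold $0.6>1/2$ is used for $\gamma$ and the upper threshold only feeds the density bound.
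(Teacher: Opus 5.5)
Your proof is correct and takes the same route as the paper: you complement the assumed $\mathsf{DTIME}[2^{\varepsilon n}]$ decider to get a dense language in $\NTIME[2^{\varepsilon n}]\cap\coNTIME[2^{\varepsilon n}]$ contained in $\mathcal{R}_{\nKt[0.6n]}$, which contradicts the ``moreover'' part of \Cref{thm:intro_nKt_lb} with $\gamma=0.6$. The paper states this in two lines; your version also writes out the complementation and the counting-based density bound that the paper leaves implicit.
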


\begin{proof}
    It is enough to note that if $\mathsf{Gap}\text{-}\MnKtP[0.6n,0.9n] \in \DTIME[2^{\varepsilon \cdot n}]$, then there would be a dense language $L$ containing only strings of $\nKt$ complexity at least $0.6n$ such that $L \in \NTIME[2^{\varepsilon \cdot n}] \cap \coNTIME[2^{\varepsilon \cdot n}]$. This directly contradicts the ``moreover'' part of \Cref{thm:intro_nKt_lb}.
\end{proof}

We are now ready to prove the main result of this section.

\begin{proof}[Proof of \Cref{thm:failure-cond-SoI}]
    Suppose that conditional SoI holds for $\nKt$ with error $e(n)\le\polylog(n)$. We can  assume without loss of generality that $e(n) \geq \log n$. We show how to estimate $\nKt(x)$ with additive error  $n/10$  in deterministic time $2^{2^{\poly(\log\log(n))}}$. This contradicts the complexity lower bound of \Cref{thm:lb_nkt}. 
    
    We set $k=\lceil\frac{n}{100e(n)}\rceil$, and split $x$ into $k$ pieces $x=x_1\circ x_2\circ\cdots\circ x_k$, where each $x_i$ is of length at most $100e(n)$.  By \Cref{lem:cond-SoI-imply-short-time}, we have
    \[
        \nK^{O(2^{e(100e(n))})}(x_i\mid x_1\circ\cdots\circ x_{i-1})\le\nKt(x_i\mid x_1\circ\cdots\circ x_{i-1})+O(e(100e(n))).
    \]
    Therefore, for each $i$, instead of computing $\nKt(x_i\mid x_1\circ\cdots\circ x_{i-1})$ using an exhaustive search over small nondeterministic programs and assuming access to an $\NP$ oracle, we can compute $\nK^{O(2^{e(100e(n))})}(x_i\mid x_1\circ\cdots\circ x_{i-1})$ deterministically by brute-forcing over all small nondeterministic programs and over all nondeterministic choices, which takes time $2^{100e(n)} \cdot 2^{O(2^{e(100e(n))})}$. The total running time is at most
    \[
        n\cdot 2^{100e(n)}\cdot 2^{O(2^{e(100e(n))})}\le 2^{2^{\poly(\log\log(n))}}.
    \]
    By conditional SoI with error $e(n)$ (applied $k$ times) and \Cref{lem:cond-SoI-imply-short-time}, the total estimation error is at most
    \[
        k\cdot(2e(n)+O(e(100e(n))))\le \frac{n}{10}.
    \]
    This contradicts \Cref{thm:lb_nkt}, as alluded to above.
\end{proof}

\bibliographystyle{alpha}	
\bibliography{references}

@article{DBLP:journals/jcss/BabaiM88,
  author       = {L{\'{a}}szl{\'{o}} Babai and
                  Shlomo Moran},
  title        = {Arthur-Merlin Games: {A} Randomized Proof System, and a Hierarchy
                  of Complexity Classes},
  journal      = {J. Comput. Syst. Sci.},
  volume       = {36},
  number       = {2},
  pages        = {254--276},
  year         = {1988}
}

@inproceedings{DBLP:conf/random/Rudich97,
  author       = {Steven Rudich},
  title        = {Super-bits, Demi-bits, and {NP}/qpoly-natural Proofs},
  booktitle    = {International Workshop on Randomization and Approximation Techniques \emph{(RANDOM)}},
  pages        = {85--93},
  year         = {1997}
}

@article{ECCC2025TR215,
  author     = {Halley Goldberg and Jinqiao Hu and Zhenjian Lu and Jingyi Lyu and Igor C. Oliveira},
  title      = {Synergies Between Complexity Theory and Nondeterministic {K}olmogorov Complexity},
  journal    = {Electron. Colloquium Comput. Complex.},
  volume     = {TR25-215},
  year       = {2025},
  url        = {https://eccc.weizmann.ac.il/report/2025/215/}
}

@inproceedings{DBLP:conf/approx/Hirahara020,
  author       = {Shuichi Hirahara and
                  Osamu Watanabe},
  title        = {On Nonadaptive Security Reductions of Hitting Set Generators},
  booktitle    = {International Workshop on Randomization and Approximation Techniques \emph{(RANDOM)}},
  pages        = {15:1--15:14},
  year         = {2020}
}

@inproceedings{HMO26,
  author      = {Hu, Jinqiao and Manor, Yahel and Oliveira, Igor C.},
  title       = {Failure of Symmetry of Information for Randomized Computations},
  booktitle = {Symposium on Theory of Computing \emph{(STOC)}},
  year        = {2026}
}

@inproceedings{DBLP:conf/crypto/LiP25,
  author       = {Yanyi Liu and
                  Rafael Pass},
  title        = {Hardness Along the Boundary Towards One-Way Functions from the Worst-Case Hardness of Time-Bounded {K}olmogorov Complexity},
  booktitle    = {Annual Cryptology Conference \emph{(CRYPTO)}},
  year         = {2025}
}

@inproceedings{DBLP:conf/crypto/LiuP23,
  author       = {Yanyi Liu and
                  Rafael Pass},
  title        = {One-Way Functions and the Hardness of (Probabilistic) Time-Bounded
                  {K}olmogorov Complexity w.r.t. Samplable Distributions},
  booktitle    = {Annual Cryptology Conference \emph{(CRYPTO)}},
  pages        = {645--673},
  year         = {2023}
}

@article{DBLP:journals/jc/Sudan97,
  author       = {Madhu Sudan},
  title        = {Decoding of Reed Solomon Codes beyond the Error-Correction Bound},
  journal      = {J. Complex.},
  volume       = {13},
  number       = {1},
  pages        = {180--193},
  year         = {1997},
  url          = {https://doi.org/10.1006/jcom.1997.0439},
  doi          = {10.1006/JCOM.1997.0439},
  bibsource    = {dblp computer science bibliography, https://dblp.org}
}

@article{DBLP:journals/eatcs/LuO22,
  author       = {Zhenjian Lu and
                  Igor C. Oliveira},
  title        = {Theory and Applications of Probabilistic {K}olmogorov Complexity},
  journal      = {Bull. {EATCS}},
  volume       = {137},
  year         = {2022},
  url          = {http://bulletin.eatcs.org/index.php/beatcs/article/view/700},
  bibsource    = {dblp computer science bibliography, https://dblp.org}
}

@inproceedings{DBLP:conf/icalp/LuO21,
  author    = {Zhenjian Lu and
               Igor C. Oliveira},
  title     = {An Efficient Coding Theorem via Probabilistic Representations and
               Its Applications},
  booktitle = {International Colloquium on Automata, Languages, and Programming \emph{(ICALP)}},
  pages     = {94:1--94:20},
  year      = {2021}
}

@inproceedings{DBLP:conf/stoc/Hirahara23,
  author       = {Shuichi Hirahara},
  title        = {Capturing One-Way Functions via {NP}-Hardness of Meta-Complexity},
  booktitle    = {Symposium on Theory of Computing \emph{(STOC)}},
  pages        = {1027--1038},
  year         = {2023}
}

@article{GK22,
  author    = {Halley Goldberg and Valentine Kabanets},
  title     = {A Simpler Proof of the Worst-Case to Average-Case Reduction for
Polynomial Hierarchy via Symmetry of Information},
  journal   = {Electron. Colloquium Comput. Complex.},
  volume    = {7},
  pages     = {1--14},
  year      = {2022},
  url       ={https://eccc.weizmann.ac.il/report/2022/007/}
}

@inproceedings{GKLO22,
title     = {Probabilistic {K}olmogorov Complexity with Applications to Average-Case
               Complexity},
               author    = {Halley Goldberg and
               Valentine Kabanets and
               Zhenjian Lu and
               Igor C. Oliveira},
  booktitle = {Computational Complexity Conference \emph{(CCC)}},
  pages     = {16:1--16:60},
  year      = {2022}
}

@inproceedings{DBLP:conf/focs/Hirahara18,
  author    = {Shuichi Hirahara},
  title     = {Non-Black-Box Worst-Case to Average-Case Reductions within {NP}},
  booktitle = {Symposium on Foundations of Computer Science \emph{(FOCS)}},
  pages     = {247--258},
  year      = {2018}
}

@inproceedings{DBLP:conf/icalp/Oliveira19,
  author    = {Igor C. Oliveira},
  title     = {Randomness and Intractability in {K}olmogorov Complexity},
  booktitle = {International Colloquium on Automata, Languages, and Programming \emph{(ICALP)}},
  pages     = {32:1--32:14},
  year      = {2019}
}

@book{DBLP:series/txcs/LiV19,
	author    = {Ming Li and
	Paul M. B. Vit{\'{a}}nyi},
	title     = {An Introduction to Kolmogorov Complexity and Its Applications, 4th
	Edition},
	series    = {Texts in Computer Science},
	publisher = {Springer},
	year      = {2019},
	url       = {https://doi.org/10.1007/978-3-030-11298-1},
	doi       = {10.1007/978-3-030-11298-1},
	isbn      = {978-3-030-11297-4},
	bibsource = {dblp computer science bibliography, https://dblp.org}
}

@article{DBLP:journals/cc/BuhrmanLM05,
  author    = {Harry Buhrman and
               Troy Lee and
               Dieter van Melkebeek},
  title     = {Language compression and pseudorandom generators},
  journal   = {Comput. Complex.},
  volume    = {14},
  number    = {3},
  pages     = {228--255},
  year      = {2005}
}

@article{zvonkin1970complexity,
  title={The complexity of finite objects and the algorithmic concepts of randomness and information},
  author={Alexander K. Zvonkin and Leonid A. Levin},
  journal={UMN (Russian Math. Surveys)},
  volume={25},
  number={6},
  pages={83--124},
  year={1970}
}

@inproceedings{DBLP:conf/coco/Hirahara22,
  author    = {Shuichi Hirahara},
  title     = {Symmetry of Information from Meta-Complexity},
  booktitle = {Computational Complexity Conference \emph{(CCC)}},
  pages     = {26:1--26:41},
  year      = {2022},
  url       = {https://doi.org/10.4230/LIPIcs.CCC.2022.26},
  doi       = {10.4230/LIPIcs.CCC.2022.26},
  bibsource = {dblp computer science bibliography, https://dblp.org}
}

@article{DBLP:journals/tcs/LeeR05,
  author    = {Troy Lee and
               Andrei E. Romashchenko},
  title     = {Resource bounded symmetry of information revisited},
  journal   = {Theor. Comput. Sci.},
  volume    = {345},
  number    = {2-3},
  pages     = {386--405},
  year      = {2005},
  url       = {https://doi.org/10.1016/j.tcs.2005.07.017},
  doi       = {10.1016/j.tcs.2005.07.017},
  bibsource = {dblp computer science bibliography, https://dblp.org}
}

@article{DBLP:journals/jcss/AllenderKRR11,
  author    = {Eric Allender and
               Michal Kouck{\'{y}} and
               Detlef Ronneburger and
               Sambuddha Roy},
  title     = {The pervasive reach of resource-bounded {K}olmogorov complexity in computational
               complexity theory},
  journal   = {J. Comput. Syst. Sci.},
  volume    = {77},
  number    = {1},
  pages     = {14--40},
  year      = {2011}
}

@inproceedings{DBLP:conf/approx/GoldbergK24,
  author       = {Halley Goldberg and
                  Valentine Kabanets},
  title        = {Consequences of Randomized Reductions from {SAT} to Time-Bounded {K}olmogorov
                  Complexity},
  booktitle    = {Approximation, Randomization, and Combinatorial Optimization: Algorithms
                  and Techniques \emph{(APPROX/RANDOM)}},
  pages        = {51:1--51:19},
  year         = {2024}
}

@inproceedings{DBLP:conf/coco/Santhanam23,
  author       = {Rahul Santhanam},
  title        = {An Algorithmic Approach to Uniform Lower Bounds},
  booktitle    = {Computational Complexity Conference \emph{(CCC)}},
  pages        = {35:1--35:26},
  year         = {2023}
}

@inproceedings{DBLP:conf/focs/LuORS24,
  author       = {Zhenjian Lu and
                  Igor C. Oliveira and
                  Hanlin Ren and
                  Rahul Santhanam},
  title        = {On the Complexity of Avoiding Heavy Elements},
  booktitle    = {Symposium on Foundations of Computer Science \emph{(FOCS)}},
  pages        = {2403--2412},
  year         = {2024}
}

@phdthesis{TroyLeeThesis,
author={Troy Lee},
title={Kolmogorov complexity and formula lower bounds},
school={University of Amsterdam},
year={2006}
}

@inproceedings{DBLP:conf/tcc/HiraharaLO24,
  author       = {Shuichi Hirahara and
                  Zhenjian Lu and
                  Igor C. Oliveira},
  title        = {One-Way Functions and {pKt} Complexity},
  booktitle    = {Theory of Cryptography \emph{(TCC)}},
  pages        = {253--286},
  year         = {2024}
}

@inproceedings{DBLP:conf/stoc/OliveiraS17,
  author    = {Igor C. Oliveira and
               Rahul Santhanam},
  title     = {Pseudodeterministic constructions in subexponential time},
  booktitle = {Symposium on Theory of Computing \emph{(STOC)}},
  pages     = {665--677},
  year      = {2017}
}

@article{DBLP:journals/iandc/Levin84,
  author    = {Leonid A. Levin},
  title     = {Randomness Conservation Inequalities; Information and Independence
               in Mathematical Theories},
  journal   = {Information and Control},
  volume    = {61},
  number    = {1},
  pages     = {15--37},
  year      = {1984}
}

@article{DBLP:journals/cc/BabaiFNW93,
  author    = {L{\'{a}}szl{\'{o}} Babai and
               Lance Fortnow and
               Noam Nisan and
               Avi Wigderson},
  title     = {{BPP} Has Subexponential Time Simulations Unless {EXPTIME} has Publishable
               Proofs},
  journal   = {Computational Complexity},
  volume    = {3},
  pages     = {307--318},
  year      = {1993}
}

@inproceedings{DBLP:conf/focs/LiuP20,
  author    = {Yanyi Liu and
               Rafael Pass},
  title     = {On One-way Functions and {K}olmogorov Complexity},
  booktitle = {Symposium on Foundations of Computer Science \emph{(FOCS)}},
  pages     = {1243--1254},
  year      = {2020}
}

@article{DBLP:journals/iandc/LongpreW95,
  author    = {Luc Longpr{\'{e}} and
               Osamu Watanabe},
  title     = {On Symmetry of Information and Polynomial Time Invertibility},
  journal   = {Inf. Comput.},
  volume    = {121},
  number    = {1},
  pages     = {14--22},
  year      = {1995},
  url       = {https://doi.org/10.1006/inco.1995.1120},
  doi       = {10.1006/inco.1995.1120},
  bibsource = {dblp computer science bibliography, https://dblp.org}
}

@inproceedings{DBLP:conf/coco/HiraharaKLO24,
  author       = {Shuichi Hirahara and
                  Valentine Kabanets and
                  Zhenjian Lu and
                  Igor C. Oliveira},
  title        = {Exact Search-To-Decision Reductions for Time-Bounded {K}olmogorov Complexity},
  booktitle    = {Computational Complexity Conference \emph{(CCC)}},
  pages        = {29:1--29:56},
  year         = {2024}
}

@article{DBLP:journals/siamcomp/KlivansM02,
	author    = {Adam R. Klivans and
	Dieter van Melkebeek},
	title     = {Graph Nonisomorphism Has Subexponential Size Proofs Unless the Polynomial-Time
	Hierarchy Collapses},
	journal   = {{SIAM} J. Comput.},
	volume    = {31},
	number    = {5},
	pages     = {1501--1526},
	year      = {2002}
}

@inproceedings{DBLP:conf/coco/CarmosinoIKK16,
	author    = {Marco L. Carmosino and
	Russell Impagliazzo and
	Valentine Kabanets and
	Antonina Kolokolova},
	title     = {Learning Algorithms from Natural Proofs},
	booktitle = {Conference on Computational Complexity \emph{(CCC)}},
	pages     = {10:1--10:24},
	year      = {2016}
}

@inproceedings{DBLP:conf/stoc/Hirahara21,
  author    = {Shuichi Hirahara},
  title     = {Average-case hardness of {NP} from exponential worst-case hardness
               assumptions},
  booktitle = {Symposium on Theory of Computing \emph{(STOC)}},
  pages     = {292--302},
  year      = {2021}
}

@article{SU06,
  author       = {Ronen Shaltiel and
                  Christopher Umans},
  title        = {Pseudorandomness for Approximate Counting and Sampling},
  journal      = {Comput. Complex.},
  volume       = {15},
  number       = {4},
  pages        = {298--341},
  year         = {2006},
  url          = {https://doi.org/10.1007/s00037-007-0218-9},
  doi          = {10.1007/S00037-007-0218-9},
  bibsource    = {dblp computer science bibliography, https://dblp.org}
}

@article{DBLP:journals/siamcomp/AllenderBKMR06,
	author    = {Eric Allender and
	Harry Buhrman and
	Michal Kouck{\'{y}} and
	Dieter van Melkebeek and
	Detlef Ronneburger},
	title     = {Power from Random Strings},
	journal   = {{SIAM} J. Comput.},
	volume    = {35},
	number    = {6},
	pages     = {1467--1493},
	year      = {2006}
}

@article{DBLP:journals/jcss/ImpagliazzoW01,
	author    = {Russell Impagliazzo and
	Avi Wigderson},
	title     = {Randomness vs Time: Derandomization under a Uniform Assumption},
	journal   = {J. Comput. Syst. Sci.},
	volume    = {63},
	number    = {4},
	pages     = {672--688},
	year      = {2001}
}

@article{DBLP:journals/cc/TrevisanV07,
	author    = {Luca Trevisan and
	Salil P. Vadhan},
	title     = {Pseudorandomness and Average-Case Complexity Via Uniform Reductions},
	journal   = {Computational Complexity},
	volume    = {16},
	number    = {4},
	pages     = {331--364},
	year      = {2007}
}

@article{NW94,
	author    = {Noam Nisan and
	Avi Wigderson},
	title     = {Hardness vs Randomness},
	journal   = {J. Comput. Syst. Sci.},
	volume    = {49},
	number    = {2},
	pages     = {149--167},
	year      = {1994}
}

@inproceedings{IW97,
	author    = {Russell Impagliazzo and
	Avi Wigderson},
	title     = {{P = BPP} if {E} Requires Exponential Circuits: Derandomizing
	the {XOR} Lemma},
	booktitle = {Symposium on Theory of Computing \emph{(STOC)}},
	pages     = {220--229},
	publisher = {{ACM}},
	year      = {1997}
}

@article{DBLP:journals/tcs/Yap83,
  author       = {Chee{-}Keng Yap},
  title        = {Some Consequences of Non-Uniform Conditions on Uniform Classes},
  journal      = {Theor. Comput. Sci.},
  volume       = {26},
  pages        = {287--300},
  year         = {1983}
}

@inproceedings{DBLP:journals/eccc/KK25,
  author       = {Valentine Kabanets and Antonina Kolokolova},
  title        = {Chain Rules for Time-Bounded {K}olmogorov Complexity},
  booktitle = {Symposium on Theory of Computing \emph{(STOC)}},
  year        = {2026}
}

@article{DBLP:journals/eccc/HN25,
  author       = {Shuichi Hirahara and
                  Mikito Nanashima},
  title        = {Complexity-Theoretic Inductive Inference},
  journal      = {Electron. Colloquium Comput. Complex.},
  volume       = {{TR25-92}},
  year         = {2025}
}

@inproceedings{DBLP:conf/coco/GoldbergK23,
  author       = {Halley Goldberg and
                  Valentine Kabanets},
  title        = {Improved Learning from {K}olmogorov Complexity},
  booktitle    = {Computational Complexity Conference \emph{(CCC)}},
  pages        = {12:1--12:29},
  year         = {2023}
}

@phdthesis{RonneburgerThesis,
	author={Detlef Ronneburger},
	title={Kolmogorov Complexity and Derandomization},
	school={Rutgers University},
	year={2004}
}

@article{DBLP:journals/jacm/ShaltielU05,
  author    = {Ronen Shaltiel and
               Christopher Umans},
  title     = {Simple extractors for all min-entropies and a new pseudorandom generator},
  journal   = {J. {ACM}},
  volume    = {52},
  number    = {2},
  pages     = {172--216},
  year      = {2005},
  url       = {https://doi.org/10.1145/1059513.1059516},
  doi       = {10.1145/1059513.1059516},
  bibsource = {dblp computer science bibliography, https://dblp.org}
}

@inproceedings{DBLP:conf/focs/Hirahara20,
  author    = {Shuichi Hirahara},
  title     = {Characterizing Average-Case Complexity of {PH} by Worst-Case Meta-Complexity},
  booktitle = {Symposium on Foundations of Computer Science \emph{(FOCS)}},
  pages     = {50--60},
  year      = {2020}
}

@inproceedings{DBLP:conf/cie/Allender10,
  author       = {Eric Allender},
  title        = {Avoiding Simplicity Is Complex},
  booktitle    = {Programs, Proofs, Processes -- Conference on Computability in Europe \emph{(CiE)}},
  pages        = {1--10},
  year         = {2010}
}

@inproceedings{DBLP:conf/coco/AllenderKR03,
  author       = {Eric Allender and
                  Michal Kouck{\'{y}} and
                  Detlef Ronneburger and
                  Sambuddha Roy},
  title        = {Derandomization and Distinguishing Complexity},
  booktitle    = {Conference on Computational Complexity  \emph{CCC}},
  pages        = {209--220},
  year         = {2003}
}

@inproceedings{DBLP:conf/stoc/GoldwasserS86,
  author       = {Shafi Goldwasser and
                  Michael Sipser},
  title        = {Private Coins versus Public Coins in Interactive Proof Systems},
  booktitle    = {Symposium on Theory of Computing \emph{(STOC)}},
  pages        = {59--68},
  year         = {1986}
}

@inproceedings{DBLP:conf/focs/Ilango23,
  author       = {Rahul Ilango},
  title        = {{SAT} Reduces to the Minimum Circuit Size Problem with a Random Oracle},
  booktitle    = {Symposium on Foundations of Computer Science \emph{(FOCS)}},
  pages        = {733--742},
  year         = {2023}
}

@misc{SimonsMetaComplexity2023,
  title        = {Meta-{C}omplexity},
  howpublished = {Research Program at the Simons Institute for the Theory of Computing, UC Berkeley},
  note         = {\url{https://simons.berkeley.edu/programs/Meta-Complexity2023}},
year = {Spring 2023},
  url          = {https://simons.berkeley.edu/programs/Meta-Complexity2023}
}

@article{10.1145/3744715.3744722,
author = {Mazor, Noam and Pass, Rafael},
title = {Guest Column: On Cryptography and Meta-Complexity},
year = {2025},
issue_date = {June 2025},
publisher = {Association for Computing Machinery},
volume = {56},
number = {2},
issn = {0163-5700},
url = {https://doi.org/10.1145/3744715.3744722},
doi = {10.1145/3744715.3744722},
journal = {SIGACT News},
month = jun,
pages = {63–101},
numpages = {39}
}

@article{DBLP:journals/eatcs/Hirahara22,
  author    = {Shuichi Hirahara},
  title     = {Meta-Computational Average-Case Complexity: {A} New Paradigm Toward
               Excluding {H}euristica},
  journal   = {Bull. {EATCS}},
  volume    = {136},
  year      = {2022},
  url       = {http://bulletin.eatcs.org/index.php/beatcs/article/view/688},
  bibsource = {dblp computer science bibliography, https://dblp.org}
}

@inproceedings{DBLP:conf/icalp/LuS24,
  author       = {Zhenjian Lu and
                  Rahul Santhanam},
  title        = {Impagliazzo's Worlds Through the Lens of Conditional {K}olmogorov Complexity},
  booktitle    = {International Colloquium on Automata, Languages, and Programming \emph{(ICALP)}},
  pages        = {110:1--110:17},
  year         = {2024}
}

@article{RazRV02_jcss_journals,
  author = {Raz, Ran and Reingold, Omer and Vadhan, Salil P.},
  title = {{Extracting all the Randomness and Reducing the Error in {Trevisan's} Extractors}},
  journal = {J. Comput. Syst. Sci.},
  volume = {65},
  number = {1},
  pages = {97--128},
  year = {2002},
  doi = {10.1006/jcss.2002.1824},
  bibsource = {dblp computer science bibliography, https://dblp.org}
}

@inproceedings{DBLP:conf/stoc/HiraharaILNO23,
  author       = {Shuichi Hirahara and
                  Rahul Ilango and
                  Zhenjian Lu and
                  Mikito Nanashima and
                  Igor C. Oliveira},
  title        = {A Duality between One-Way Functions and Average-Case Symmetry of Information},
  booktitle = {Symposium on Theory of Computing \emph{(STOC)}},
  pages        = {1039--1050},
  year         = {2023}
}

@inproceedings{DBLP:conf/focs/HiraharaN23,
  author       = {Shuichi Hirahara and
                  Mikito Nanashima},
  title        = {Learning in {P}essiland via Inductive Inference},
  booktitle    = {Symposium on Foundations of Computer Science, \emph{(FOCS)}},
  pages        = {447--457},
  year         = {2023}
}

@inproceedings{DBLP:conf/focs/HiraharaLN24,
  author       = {Shuichi Hirahara and
                  Zhenjian Lu and
                  Mikito Nanashima},
title        = {Optimal Coding for Randomized {K}olmogorov Complexity and Its Applications},
	booktitle    = {Symposium on Foundations of Computer Science \emph{(FOCS)}},
  pages        = {369--378},
  year         = {2024}
}

@inproceedings{DBLP:conf/coco/Kabanets00,
  author       = {Valentine Kabanets},
  title        = {Easiness Assumptions and Hardness Tests: Trading Time for Zero Error},
  booktitle    = {Conference on Computational
                  Complexity \emph{(CCC)}},
  pages        = {150--157},
  year         = {2000},
  url          = {https://doi.org/10.1109/CCC.2000.856746},
  doi          = {10.1109/CCC.2000.856746},
  bibsource    = {dblp computer science bibliography, https://dblp.org}
}

@inproceedings{DBLP:conf/icalp/LuOZ22,
	author    = {Zhenjian Lu and
	Igor C. Oliveira and
	Marius Zimand},
	title     = {Optimal Coding Theorems in Time-Bounded {K}olmogorov Complexity},
	booktitle = {International Colloquium on Automata, Languages, and Programming \emph{(ICALP)}},
	pages     = {92:1--92:14},
	year      = {2022}
}

@inproceedings{HLO25,
  author    = {Jinqiao Hu and Zhenjian Lu and Igor C. Oliveira},
journal = {Preprint},
booktitle        = {Computational Complexity Conference \emph{(CCC)}},
  title     = {Hardness of Computing 
  Nondeterministic {Kolmogorov} Complexity},
  year      = {2026}
}

@article{DBLP:journals/siamcomp/BakerGS75,
  author       = {Theodore P. Baker and
                  John Gill and
                  Robert Solovay},
  title        = {Relativizations of the {P} =? {NP} Question},
  journal      = {{SIAM} J. Comput.},
  volume       = {4},
  number       = {4},
  pages        = {431--442},
  year         = {1975},
  url          = {https://doi.org/10.1137/0204037},
  doi          = {10.1137/0204037},
  bibsource    = {dblp computer science bibliography, https://dblp.org}
}

@article{DBLP:journals/ipl/Cook88,
  author       = {Stephen A. Cook},
  title        = {Short Propositional Formulas Represent Nondeterministic Computations},
  journal      = {Inf. Process. Lett.},
  volume       = {26},
  number       = {5},
  pages        = {269--270},
  year         = {1988},
  url          = {https://doi.org/10.1016/0020-0190(88)90152-4},
  doi          = {10.1016/0020-0190(88)90152-4},
  bibsource    = {dblp computer science bibliography, https://dblp.org}
}

@article{DBLP:journals/jcss/Fortnow00,
  author       = {Lance Fortnow},
  title        = {Time-Space Tradeoffs for Satisfiability},
  journal      = {J. Comput. Syst. Sci.},
  volume       = {60},
  number       = {2},
  pages        = {337--353},
  year         = {2000},
  url          = {https://doi.org/10.1006/jcss.1999.1671},
  doi          = {10.1006/JCSS.1999.1671},
  bibsource    = {dblp computer science bibliography, https://dblp.org}
}

@techreport{BuhrmanFortnow1995,
  author       = {Harry Buhrman and Lance Fortnow},
  title        = {Distinguishing Complexity and Symmetry of Information},
  institution  = {Department of Computer Science, The University of Chicago},
  type         = {Technical Report},
  number       = {TR-95-11},
  year         = {1995}
}

@article{DBLP:journals/eccc/ECCC-TR97-020,
  author       = {Oded Goldreich},
  title        = {A Sample of Samplers - {A} Computational Perspective on Sampling (survey)},
  journal      = {Electron. Colloquium Comput. Complex.},
  volume       = {{TR97-020}},
  year         = {1997},
  url          = {https://eccc.weizmann.ac.il/eccc-reports/1997/TR97-020/index.html},
  eprinttype    = {ECCC},
  eprint       = {TR97-020},
  bibsource    = {dblp computer science bibliography, https://dblp.org}
}

@inproceedings{DBLP:conf/coco/HiraharaLR23,
	author       = {Shuichi Hirahara and
	Zhenjian Lu and
	Hanlin Ren},
	title        = {Bounded Relativization},
	booktitle = {Computational Complexity Conference \emph{CCC}},
	pages        = {6:1--6:45},
	year         = {2023}
}

@unpublished{Khrushchev-June-2026,
  author = {Viacheslav Khrushchev and Bruno Bauwens},
  note = {Private Communication},
  year = {2026},
  month = {June}
}

@inproceedings{DBLP:conf/stoc/0001LL25,
  author       = {Lijie Chen and
                  Jiatu Li and
                  Jingxun Liang},
  title        = {Maximum Circuit Lower Bounds for Exponential-Time {A}rthur {M}erlin},
  booktitle    = {Symposium on Theory of Computing \emph{(STOC)}},
  pages        = {1348--1358},
  year         = {2025}
}

\appendix

\section{The Role of Adaptivity I: Ronneburger's Technique}\label{s:nKt_SoI}

    \cite{RonneburgerThesis} showed that Symmetry of Information (SoI) does not hold for $\Kt$. Since the argument relativizes, it follows that SoI does not hold for $\Kt^{\SAT}$ either (where the queries are allowed to be adaptive). Recall that $\nKt$ is related, within a constant, to $\Kt$ complexity with \emph{non-adaptive} queries to a $\SAT$ oracle (see \Cref{lem:ktA-leq-nkt} and \Cref{lem:nkt-leq-ktsat}). Nevertheless, the proof technique of \cite{RonneburgerThesis} does not seem to extend to show failure of SoI for $\nKt$.

To clarify why this technique falls short and to highlight the role of adaptivity, we sketch a proof that SoI fails in a hybrid setting: the left-hand side uses $\Kt$ with access to a $\SAT$ oracle with up to two levels of adaptivity, while the right-hand side terms employ $\nKt$. Whether symmetry of information fails for $\nKt$ remains an open problem.

    \begin{theorem}\label{thm:neg-soi-kt2sat-nkt}

        There is a constant $\varepsilon>0$ such that for every large enough $n\in\mathbb{N}$, there exist strings $x, y, \in \bool^n$ such that
        \[\Kt^{\parallel_2 \SAT}(x, y) < \nKt(x) + \nKt(y \mid x) - \varepsilon n.\]
        
    \end{theorem}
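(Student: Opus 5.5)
We follow Ronneburger's iterative construction, with a $\SAT$ oracle replacing exhaustive search, and we choose the strings so that each layer of $\SAT$ queries depends only on the previous layer's answers. This keeps the adaptivity at two levels.

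Fix a small constant $\delta>0$ and let $m=\delta n$.

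\textbf{Choice of $x$.} Let $x$ be the lexicographically first string in $\bool^n$ with $\nKt(x)\ge m$. Such a string exists by \Cref{l:incomp-strings}. To compute $x$ with non-adaptive $\SAT$ queries, we use the single query layer available for this step. For every program $\Pi$ and time $t$ with $|\Pi|+\lceil\log t\rceil<m$, we ask in parallel:
\begin{itemize}
\item whether $\Pi$ has an accepting path (``$\exists w$: $\Pi(w)\neq\bot$'');
\item whether $\Pi$ is not single-valued (``$\exists w,w'$ with distinct non-$\bot$ outputs'');
\item for each bit position $i$, the $i$-th output bit of $\Pi$ (``$\exists w$ with output bit $i$ equal to $1$'').
\end{itemize}
These queries only mention $\Pi$ and $t$, not any candidate string, so all of them are parallel. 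The answers determine the full list of strings with $\nKt<m$. We then take $x$ to be the first $n$-bit string not on the list. This step costs $2^{O(m)}\cdot\poly(n)$ time.

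\textbf{Choice of $y$.} Next let $y$ be the lexicographically first $n$-bit string with $\nKt(y\mid x)\ge n-1$. Again by \Cref{l:incomp-strings}, such a $y$ exists. The queries needed to list all strings with $\nKt(\cdot\mid x)<n-1$ require $x$ as input, which is why they form the second adaptive layer. Their conditional run times are up to $2^{n}$, so the query instances have size $2^{O(n)}$. The second layer therefore runs in time $2^{O(n)}$.

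\textbf{Upper bound.} The whole procedure is described by $n$, $\delta$ and the fixed algorithm, and runs in time $2^{O(n)}$ (dominated by the second layer). Hence
\[
\Kt^{\parallel_2\SAT}(x,y)\le O(\log n)+\log 2^{cn}=cn+O(\log n)
\]
for a universal constant $c$ determined by the universal machine and the size of $\SAT$ queries.

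\textbf{Lower bound and the main obstacle.} The right-hand side of the theorem is at least $m+n-1=(1+\delta)n-1$. For this to exceed the upper bound by $\varepsilon n$ we need $c<1+\delta$. The obstacle is exactly that a naive $\SAT$ simulation of $2^{n}$-time nondeterministic programs blows up the time polynomially, making $c$ larger than $1$.

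To overcome this, we use the trick in Ronneburger's argument. Instead of asking $y$ to have $\nKt(y\mid x)$ near $n$ through programs of time $2^{n}$, we restrict the second layer to conditional descriptions of total complexity below $(1-\delta)n$. This gives running times at most $2^{(1-\delta)n}$, so the second layer costs $2^{c'(1-\delta)n}$ with $\SAT$ queries of that size. We pick $\delta$ small enough that $c'(1-\delta)<1$ (using the time-efficient universal machine, so $c'$ is close to $1$).

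With this choice, $y$ is the first string with $\nKt(y\mid x)\ge(1-\delta)n$. The right-hand side becomes $\delta n+(1-\delta)n=n$, while the left-hand side is at most $c'(1-\delta)n+O(\log n)$. Setting $\varepsilon=(1-c'(1-\delta))/2$ gives the stated inequality.

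The calibration step, which checks that the cost of each $\SAT$ query over time-$t$ nondeterministic programs is only $t\cdot\polylog t$ so that $c'$ is close to $1$, is the step I expect to need the most care.
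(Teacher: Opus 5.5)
Your construction is the paper's Ronneburger-style argument: take lexicographically first strings that avoid all short descriptions, and find each one with a single round of parallel $\SAT$ queries, so two rounds in total. The one change is that you threshold on $\nKt$ directly, whereas the paper thresholds on $\Kt^{\parallel\SAT}$ and transfers to $\nKt$ at the end. That change is where your running-time accounting has a gap.

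\textbf{The cost of enumerating time bounds.} Validity of a pair $(\Pi,t)$ in the definition of $\nKt$ is not monotone in $t$. A larger $t$ allows longer witnesses and longer runs, which can create a second output. A smaller $t$ can remove the accepting path. So, as you wrote, you must query every $t$ with $|\Pi|+\lceil\log t\rceil<k$. Each such pair costs about $n$ queries of size $t\cdot\polylog(t)$. A single program of constant length then already costs $\sum_{t\le 2^{k-O(1)}} t\cdot\polylog(t)\approx 2^{2k-O(1)}$.

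With the $y$-threshold set to $k=(1-\delta)n$, your second layer therefore costs about $2^{2(1-\delta)n}$, and the first layer about $2^{2\delta n}$. Your upper bound on $\Kt^{\parallel_2\SAT}(x,y)$ is then about $\max(2\delta,2-2\delta)\,n\ge n$, which never beats the right-hand side $n$. The per-query $t\cdot\polylog(t)$ calibration you flag does not touch this: the blow-up comes from the number of time bounds, not from the size of each query.

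\textbf{The fix.} Query only power-of-two time bounds, i.e.\ normalize the time bound in the definition of $\nKt$. The paper's ``$c=1$'' variant of \Cref{lem:ktA-leq-nkt} tacitly uses the same convention, since storing $t$ exactly would cost another $\log t$ bits. With this normalization, a program of length $j$ costs $2^{k-j}\poly(n)$ and a whole layer costs $2^{k}\poly(n)$. In \Cref{lem:neg-ktsat-soi} the paper avoids the enumeration problem altogether. A deterministic oracle program that halts within $t$ steps also halts within any larger bound, so only the largest $t$ per program is needed.

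\textbf{The ``main obstacle'' paragraph.} Separately, this paragraph misdiagnoses the difficulty. Everything hinges on the exponent being $1+o(1)$, and once it is, your first choice $\nKt(y\mid x)\ge n-1$ already works. The left-hand side is at most $n+O(\log n)$ and the right-hand side is at least $(1+\delta)n-1$.

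Lowering the threshold to $(1-\delta)n$ buys nothing. It would not rescue a genuine constant-factor blow-up $c'>1$ either. The condition $c'(1-\delta)<1$ needs $\delta>1-1/c'$, so large $\delta$ rather than small. Meanwhile the first layer costs $2^{c'\delta n}$, so you would need $c'<2$ in any case.

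With the power-of-two normalization and quasi-linear Cook--Levin reductions, your argument is correct. It gives a gap of essentially $\delta n$ for any $\delta\le 1/2$. It is also slightly cleaner than the paper's on the lower-bound side: $\nKt(x)+\nKt(y\mid x)\ge n$ holds by construction, and no conversion from $\Kt^{\parallel\SAT}$ is needed.
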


    Here, for an oracle $\mathcal{O}$ and a constant $d \in \N$, $\Kt^{\parallel_d\mathcal{O}}$ refers to the complexity measure $\Kt$ wherein the universal machine may make oracle calls to $\mathcal{O}$ with up to $d$ levels of adaptivity. 
    
    We will need the following lemma.

    \begin{lemma}\label{lem:neg-ktsat-soi}

       For every $n\in\mathbb{N}$, constant $d \in \N$, and language $A \in \NP$, there exist strings $x_1, ..., x_d\in\bool^{n}$ satisfying
        \[\sum_{i \in [d]}\Kt^{\parallel A}(x_i \mid x_1, ..., x_{i - 1}) > d\cdot\frac{n}{2}\]
        and        
        \[\Kt^{\parallel_d A}(x_1, ..., x_d) \leq \frac{n}{2} + O(\log n).\]
        
    \end{lemma}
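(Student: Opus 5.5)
The plan is to follow Ronneburger's exhaustive-search construction, run relative to $A$ and conditioned on the previously built strings. At each step we pick the lexicographically first string that is $\Kt^{\parallel A}$-incompressible given the earlier ones. Then we check that the whole chain can be produced within $d$ adaptive rounds of queries in time roughly $2^{n/2}$.

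\textbf{Construction.} Define $x_1,\dots,x_d \in \bool^n$ inductively. Let $x_i$ be the lexicographically first $x \in \bool^n$ such that
\[
\Kt^{\parallel A}(x \mid x_1,\dots,x_{i-1}) > n/2 .
\]
Such an $x$ exists by the counting argument of \Cref{l:incomp-strings}, which holds relative to an oracle and conditioned on a fixed string. Summing over $i$ gives the first inequality,
\[
\sum_{i\in[d]}\Kt^{\parallel A}(x_i\mid x_1,\dots,x_{i-1}) > d\cdot \frac n2 .
\]

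\textbf{Computing a single step.} Fix the prefix $x_1,\dots,x_{i-1}$. The program enumerates every pair $(\Pi,t)$ with $|\Pi|+\lceil\log t\rceil\le n/2$; there are $2^{O(n)}$ such pairs, but each has $t\le 2^{n/2}$. For each pair it simulates $\Pi^{\parallel A}(x_1,\dots,x_{i-1})$ for $t$ steps. Since $\Pi$ queries $A$ non-adaptively, the program can first collect all queries of all programs and then ask them in a single parallel round. From the answers it obtains the list of outputs, i.e.\ every string of $\Kt^{\parallel A}$ complexity at most $n/2$ given the prefix. It then outputs the lexicographically first $x\in\bool^n$ not in the list. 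Thus one round of non-adaptive queries suffices to compute $x_i$ from $x_1,\dots,x_{i-1}$.

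\textbf{Chaining the steps.} Round $i$ uses $x_1,\dots,x_{i-1}$, which were determined by the answers of earlier rounds. So producing all of $x_1,\dots,x_d$ requires exactly $d$ levels of adaptivity, which is why the joint bound is stated for $\Kt^{\parallel_d A}$. The program needs only $n$ and $d$, i.e.\ $O(\log n)$ bits.

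\textbf{Main obstacle: the running time.} The total time must be about $2^{n/2}\cdot\poly(n)$, so that the $\log t$ term contributes $n/2+O(\log n)$. The naive estimate sums $t$ over all $2^{O(n)}$ pairs and is too large. Instead we use Levin-style accounting: for each program length $\ell=|\Pi|$ we only run $\Pi$ for time $2^{n/2-\ell}$. This gives a total of
\[
\sum_{\ell\le n/2} 2^{\ell}\cdot 2^{n/2-\ell}\cdot \poly(n) = 2^{n/2}\,\poly(n)
\]
per round, including generating the queries. Checking the final list against $\bool^n$ in lexicographic order adds at most $2^{n/2}+1$ candidates, since at most $2^{n/2}$ strings are listed. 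With $d$ constant, the total time is $2^{n/2}\poly(n)$. Therefore
\[
\Kt^{\parallel_d A}(x_1,\dots,x_d)\le O(\log n)+n/2+O(\log n),
\]
which is the second inequality. I expect this time accounting, together with checking that queries within each round are genuinely non-adaptive, to be the only delicate point.
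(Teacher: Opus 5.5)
Your proposal is correct and follows essentially the same route as the paper. The shared ingredients are the lexicographically first $\Kt^{\parallel A}$-incompressible string at each step, one parallel round of queries per step (hence $d$ levels of adaptivity in total), and the Levin-style budget of $2^{n/2-\ell}$ steps for programs of length $\ell$, which gives time $d\cdot 2^{n/2}\poly(n)$. One detail should be made explicit: find the first string not in the output list $S$ by sorting $S$ first (time $O(|S|\log|S|)$, as the paper does), since scanning all of $S$ for each candidate would cost about $2^{n}$ steps.
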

    \begin{proof}

        We employ an idea from  \cite{RonneburgerThesis}. Fix $n, d \in \N$, and for $i \in [d]$, let $x_i$ be the lexicographically first string of length $n$ such that 
        \[\Kt^{\parallel A}(x_i \mid x_1, ..., x_{i - i}) > \frac{n}{2}.\] 
        Note that
        \[\sum_{i \in [d]}\Kt^{\parallel A}(x_i \mid x_1, ..., x_{i - 1}) > d\cdot\frac{n}{2}.\]

        Now, consider the following brute-force algorithm to find $x_i$ given $(x_1, ..., x_{i-1})$.
        \begin{quote}

            Let $S$ be an empty set at first. For every $j \in [n/2]$, for every $w \in \bool^{j}$, letting $t = 2^{n/2 - j}$, let $z_w$ be the output of a universal non-adaptive $A$-oracle TM run on input $(w, x_1, ..., x_{i-1})$ for $t$ steps. (If the input does not encode a program making all of its oracle queries in parallel, do not output anything.) Add the output $z_w$ to $S$. 
            
            Once the above is complete, sort $S$ lexicographically, and let $x_i$ be the lexicographically smallest string not in $S$.

            We note that when simulating the procedure described above, we can collect the set of queries to the SAT oracle requested by each machine and submit all queries simultaneously to the oracle. For this reason, non-adaptive access to the SAT oracle is sufficient.
            
        \end{quote}
        Observe that the algorithm described above runs in time at most
        \[\frac{n}{2}\cdot 2^j \cdot 2^{n/2 - j} + O(|S|\cdot\log|S|) < n^2 \cdot 2^{n/2}\]
        and requires program size at most $O(\log n)$ provided non-adaptive access to an $A$ oracle. 
        
        Moreover, repeating the above $d$ times to produce all strings $x_1, ..., x_d$ takes time at most $d \cdot n^2 \cdot 2^{n/2}$, making queries to a  $A$ oracle with at most $d$ levels of adaptivity. We obtain
        \begin{align*}
            \Kt^{\parallel_d A}(x_1, ..., x_d) &\leq O(\log n) + \log(d \cdot n^2\cdot 2^{n/2})
            \\&\leq \frac{n}{2} + O(\log n).
        \end{align*}
        This completes the proof of the lemma. 

    \end{proof}

    We are now ready to prove Theorem~\ref{thm:neg-soi-kt2sat-nkt}.

    \begin{proof}[Proof of  Theorem~\ref{thm:neg-soi-kt2sat-nkt}]

        Applying Lemma~\ref{lem:neg-ktsat-soi} with $d = 2$ and $A = \SAT$, for every $n \in \N$, we obtain strings $x, y \in \bool^n$ such that 
        \[\Kt^{\parallel_2 \SAT}(x, y) \leq \Kt^{\parallel \SAT}(x) + \Kt^{\parallel \SAT}(y \mid x) - \frac{n}{2} + O(\log n).\]
        Then, by an optimized variant of \Cref{lem:ktA-leq-nkt} with $c=1$,\footnote{This can be obtained by reducing a nondeterministic time-$t$ computation to a SAT instance of size $t \cdot \poly(\log t)$, where the reduction runs in time $t \cdot \poly(\log t)$ (see \citep{DBLP:journals/ipl/Cook88, DBLP:journals/jcss/Fortnow00}).}
        \[\Kt^{\parallel_2 \SAT}(x, y) < \nKt(x) + \nKt(y \mid x) - \frac{n}{4},\]
        as desired.
    \end{proof}

\section{The Role of Adaptivity II: Nonadaptive Reconstruction, Explicit Constructions, and Failure of SoI}\label{sec:nonadaptive}

In order to highlight the relevance of (non)adaptivity and the challenges that it poses, we show in this section that if the pseudorandom generator construction of Trevisan and Vadhan~\cite{DBLP:journals/cc/TrevisanV07} can be modified to admit a \emph{non-adaptive} reconstruction routine, then $\mathsf{SoI}$ fails for $\rnKt$. 

To this end, we first observe that the algorithm of Oliveira and Santhanam~\cite{DBLP:conf/stoc/OliveiraS17}, which builds on~\cite{DBLP:journals/cc/TrevisanV07} to obtain pseudodeterministic explicit constructions for dense $\BPP$-properties\footnote{\cite{DBLP:conf/stoc/OliveiraS17} state their result for dense $\P$-properties; however, the argument extends straightforwardly to dense $\BPP$-properties.}, (a) is $\PSPACE$-relativizing (in the sense of~\cite{DBLP:conf/coco/HiraharaLR23}), and (b) can be extended to work for dense $\pr\BPP$-properties with the aid of small advice.

\begin{theorem}\label{t:OS17-pspace}
	For any $O \in \PSPACE$, any dense $\pr\BPP^{O}$ property $P$, and any $\gamma > 0$, there exists an infinitely-often pseudodeterministic explicit construction for $P$ that runs in time $2^{n^{\gamma}}$, given oracle access to $O$ and $n^{\gamma}$ bits of advice.\footnote{In other words, if $P = (\mathcal{YES}_n,\mathcal{NO}_n)$, then there are infinitely many values of $n$ such that we can pseudodeterministically construct in subexponential time with a small amount of advice a string $x_n \notin \mathcal{NO}_n$.}
\end{theorem}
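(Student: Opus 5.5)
The plan is to rerun the Oliveira--Santhanam argument~\cite{DBLP:conf/stoc/OliveiraS17} with two changes. First, every component is relativized to $O$, and we check that the relativization needs only $O\in\PSPACE$. Second, the $\pr\BPP^{O}$ property is handled by hard-wiring a small amount of advice that pins down a consistent decision rule on the strings we actually test.

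The original argument is a win-win over Trevisan--Vadhan~\cite{DBLP:journals/cc/TrevisanV07}. Fix a $\PSPACE$-complete language $L^\star$ that is downward self-reducible and random self-reducible, as in~\cite{DBLP:journals/cc/TrevisanV07}. Let $G^{L^\star}$ be the TV generator built from its truth table on inputs of length $n^{\gamma'}$, where $\gamma'\ll\gamma$; it has seed length $n^{\gamma'}$ and output length $n$.

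\textbf{Case 1:} the randomized algorithm $A^{O}$ deciding $P$ accepts some output of $G$ with noticeable probability. Then we enumerate the $2^{n^{\gamma'}}$ seeds. Each $G$-output is computed using $L^\star\in\PSPACE$, in time $2^{\poly(n^{\gamma'})}$. We estimate $A^{O}$'s acceptance probability on each output by sampling and return the first output that passes a threshold. This is pseudodeterministic, and it stays so after the threshold is fixed by advice, as explained below.

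\textbf{Case 2:} $A^{O}$ distinguishes $G$ from uniform, because $P$ is dense while $G$'s outputs are rejected. Then TV's uniform reconstruction, which is adaptive but uses the distinguisher only as an oracle, yields a $\BPP^{O}$ algorithm for $L^\star$ at length $n^{\gamma'}$. Downward self-reducibility checks this algorithm, and that check is exactly what makes the reconstruction uniform and lets us avoid advice for the learner. Padding then gives a $\BPTIME^{O}[\poly]$ algorithm for $\PSPACE$, and hence for $\EXP$-hard problems via $\PSPACE\subseteq\BPP^O$. In this regime, we instead use that $\EXP$ has exponential-time computable hard truth tables, and apply the standard~\cite{DBLP:journals/cc/TrevisanV07} infinitely-often trade-off to produce a hitting string by a deterministic search that evaluates $A^O$ on candidates.

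The $\PSPACE$-relativization check is that every oracle use is a use of $L^\star$ or of $O$ inside $\PSPACE$ computations. Since $O\in\PSPACE$, $\PSPACE^{O}=\PSPACE$, so the completeness, self-reducibility and hardness properties of $L^\star$ survive relative to $O$. The conclusion therefore holds in the sense of~\cite{DBLP:conf/coco/HiraharaLR23}.

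For the promise extension, the difficulty is that on strings outside $\mathcal{YES}\cup\mathcal{NO}$, $A^{O}$ may accept with probability near $1/2$. This breaks pseudodeterminism, since estimates near the threshold may fluctuate. The fix is to use advice of $O(\log n)$ bits, which is well below $n^\gamma$, to choose a threshold $\tau\in[1/3,2/3]$ from a grid of $n^{2}$ values. We pick $\tau$ so that no candidate among the enumerated $2^{n^{\gamma'}}$ strings has acceptance probability within $1/n^{2}$ of $\tau$. Such a $\tau$ exists by pigeonhole over the $n^2$ grid intervals only if the number of candidates is at most $n^{2}-1$, so the advice instead fixes a random-shift threshold whose bad event has probability $o(1)$ per candidate, and we take a union bound after amplification. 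If this fails, we fall back to hard-wiring the index of the first good seed ($n^{\gamma'}$ bits). In that case the construction simply outputs $G(\text{seed})$, which is deterministic given advice. That fallback suffices, because Case 1 guarantees that some seed gives a string not in $\mathcal{NO}$.

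I expect the main obstacle to be Case 2 in the promise setting. The TV reconstruction needs a distinguisher that is a genuine function, and $A^{O}$ on promise-violating inputs is not one. My plan is to fix $A$'s coins by a good random string, which behaves as a $\BPP$ oracle used adaptively, and argue that TV's reconstruction tolerates any fixed consistent function agreeing with $P$ on the promise. This mirrors the ``any language $D$ consistent with the promise'' remarks earlier in the paper.
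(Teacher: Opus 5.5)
\textbf{Gap in Case~2.} Your outer structure matches the paper: a win-win around Trevisan--Vadhan, with a good seed hard-wired as advice. Your fallback of hard-wiring the seed is exactly how the paper handles the promise, so the threshold discussion can be dropped. Your ``main obstacle'' also goes away once the amplified algorithm for $P$ is viewed as a samplable distribution over $O$-oracle circuits, which is the form in which \Cref{t:io-PRG} is stated. The genuine gap is Case~2, and it has two problems.

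First, the dichotomy cannot be made length by length. TV's reconstruction climbs through all input lengths via downward self-reducibility, so it yields $\PSPACE\subseteq\BPP^{O}$ only if the distinguisher works at almost all lengths. The split must therefore be global: $\PSPACE\not\subseteq\BPP^{O}$ versus $\PSPACE\subseteq\BPP^{O}$, with \Cref{t:io-PRG} supplying infinitely many good lengths in the first branch.

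Second, and more seriously, in the branch $\PSPACE\subseteq\BPP^{O}$ you appeal to ``$\EXP$ has exponential-time computable hard truth tables'' and a ``deterministic search''. Exponential circuit lower bounds for $\EXP$ are not known. A deterministic construction also cannot be what happens here: this branch is precisely where pseudodeterminism is forced.

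\textbf{Missing idea.} Proceed as follows in that branch.
\begin{enumerate}
\item Pad $\PSPACE\subseteq\BPP^{O}$ to $\mathsf{DSPACE}[2^{O(n)}]\subseteq\BPE^{O}$.
\item Diagonalize inside $\mathsf{DSPACE}[2^{O(n)}]$: compute the lexicographically first truth table not computed by any $O$-oracle circuit of size $2^{0.9n}$. This is where $O\in\PSPACE$ is used, since such circuits must be evaluated in exponential space. It gives a language $L$ without such circuits.
\item Conclude $L\in\BPE^{O}$. Its truth table on $O(\log n)$-bit inputs is then computable pseudodeterministically in polynomial time with oracle $O$, because each bit is a canonical value of an amplified $\BPP^{O}$ algorithm.
\item Plug this truth table into the relativized \cite{IW97} generator (\Cref{t:IW97-PRG}). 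This gives a logarithmic-seed PRG that is pseudodeterministically computable and fools $O$-oracle circuits.
\end{enumerate}

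Either branch now gives a single pseudodeterministic infinitely-often PRG $G_n$ with seed length at most $n^{\gamma/2}$. The construction is simply $G_n(a_n)$ with advice $a_n$. Suppose every seed's canonical output were in $\mathcal{NO}$ at all large $n$. Then the amplified algorithm for $P$ would reject all of them, yet accept a uniform string with probability at least $1/2$ by density, contradicting the PRG. The $\pr\BPP^{O}$ algorithm is never run on candidates.
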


\begin{proof}[Proof Sketch]
	The proof follows closely the argument of~\cite{DBLP:conf/stoc/OliveiraS17}. We begin by constructing a pseudodeterministic PRG.
	
	\begin{claim}\label{c:pdt-PRG}
		Let $O \in \PSPACE$. For every $\varepsilon > 0$ and $b \in \mathbb{N}$, there exists a sequence $\{G_n\}_{n \in \mathbb{N}}$, where
		\[
		G_n : \bool^{n^{\varepsilon}} \to \bool^{n^b},
		\]
		computable in pseudodeterministic time $2^{O(n^{\varepsilon})}$ with oracle access to $O$, such that the following holds. For every $n^b$-time samplable distribution family $\{\mathcal{C}_n\}_{n \in \mathbb{N}}$ of Boolean $O$-oracle circuits, there exist infinitely many $n$ such that, with probability at least $1 - n^{-b}$ over $C \sim \mathcal{C}_n$, the generator $G_n$ $(n^{-b})$-fools $C$.
	\end{claim}
	
	\begin{proof}[Proof of Claim~\ref{c:pdt-PRG}]
		\renewcommand\qedsymbol{$\diamond$}
		We consider two cases.
		
		\paragraph{Case 1: $\PSPACE \subseteq \BPP^{O}$.}
		In this case, by a standard padding argument,
		\[
		\mathsf{DSPACE}\mleft[2^{O(n)}\mright] \subseteq \BPE^{O}.
		\]
		Also, by a diagonalization argument, there exists a language
		\[
		L \in \mathsf{DSPACE}\mleft[2^{O(n)}\mright]
		\]
		such that for all but finitely many input lengths $n$, the language $L$ does not admit $O$-oracle circuits of size $2^{0.9n}$. Indeed, using $2^{O(n)}$ space, one can compute the lexicographically first truth table of length $2^n$ that is not computed by any $O$-oracle circuit of size $2^{0.9n}$. This is possible since $O \in \PSPACE$, allowing such circuits to be simulated within $2^{O(n)}$ space.
		
		By the containment above, we obtain $L \in \BPE^{O}$. Such a hard language yields a pseudorandom generator with logarithmic seed length, computable in pseudodeterministic polynomial time with oracle access to $O$, and secure against non-uniform polynomial-size $O$-oracle circuits~\cite{IW97, DBLP:journals/siamcomp/KlivansM02, DBLP:conf/stoc/OliveiraS17}.
		
		\paragraph{Case 2: $\PSPACE \not\subseteq \BPP^{O}$.}
		In this case, the desired PRG follows directly from the constructions of~\cite{DBLP:journals/jcss/ImpagliazzoW01, DBLP:journals/cc/TrevisanV07} (see Theorem~\ref{t:io-PRG}).
	\end{proof}
	
	Now suppose, toward a contradiction, that the statement of the theorem is false. Then there exist $O \in \PSPACE$, a dense $\pr\BPP^{O}$ property $P = (\mathcal{YES}, \mathcal{NO})$, and some $\gamma > 0$ such that for every algorithm $A$, for all sufficiently large $n$, and for every advice string $a_n \in \bool^{n^{\gamma}}$, if $A(a_n)$ runs in pseudodeterministic time $2^{n^{\varepsilon}}$, then its canonical output lies in $\mathcal{NO}$.
	
	Let $D$ be a probabilistic polynomial-time $O$-oracle algorithm such that, except with exponentially small probability over its internal randomness, $D$ accepts every string in $\mathcal{YES}$ and rejects every string in $\mathcal{NO}$. Such a distinguisher can be obtained via standard error reduction and a union bound. We may equivalently view $D$ as a distribution over $O$-oracle circuits obtained by fixing its internal randomness. Let $A(a_n)$ be the procedure $G(a_n)$ computing the PRG from Claim~\ref{c:pdt-PRG} with $\varepsilon = \gamma/2$ and seed string fixed to $a_n$.
	
	By the assumptions above, with all but exponentially small probability over the choice of $D$, it holds that for all but finitely many $n$,
	\[
	\Pr_{A}\bigl[D(A(a_n)) = 0\bigr] \geq \tfrac{2}{3}
	\]
	for every $a_n \in \bool^{n^{\varepsilon}}$.
	
	On the other hand, since $P$ is dense, we have
	\[
	\Pr_{x \sim \bool^n}[D(x) = 1] \geq \tfrac{1}{2}.
	\]
	Thus, $D$ distinguishes the output of the PRG from the uniform distribution, contradicting Claim~\ref{c:pdt-PRG}.
\end{proof}

As a corollary of Theorem~\ref{t:OS17-pspace}, we obtain the following result.

\begin{corollary}\label{cor:explicit_adaptive_NP}
	For any $O \in \NP$, any dense $\pr\BPP^{O}$ property $P$, and any $\gamma > 0$, there exists an infinitely-often pseudodeterministic explicit construction for $P$ that runs in time $2^{n^{\gamma}}$, given oracle access to $O$ and $n^{\gamma}$ bits of advice.
\end{corollary}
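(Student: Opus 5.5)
Corollary~\ref{cor:explicit_adaptive_NP} follows from Theorem~\ref{t:OS17-pspace} by instantiation; the only work is translating the hypotheses.

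\textbf{Step 1: specialize the oracle.} Given $O \in \NP$, observe that $O \in \PSPACE$, since a polynomial-space machine can enumerate all polynomial-length witnesses and run the polynomial-time verifier on each.

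\textbf{Step 2: reuse the property unchanged.} A dense $\pr\BPP^{O}$ property $P=(\mathcal{YES},\mathcal{NO})$ for this $O$ is, verbatim, a dense $\pr\BPP^{O}$ property for an oracle $O \in \PSPACE$. So the hypotheses of Theorem~\ref{t:OS17-pspace} hold for $P$, and for any $\gamma>0$ it yields an infinitely-often pseudodeterministic explicit construction. This construction runs in time $2^{n^{\gamma}}$, uses oracle access to the same $O$, and takes $n^{\gamma}$ bits of advice.

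\textbf{Step 3: check oracle consistency.} This is the only point to verify. The construction in Theorem~\ref{t:OS17-pspace} queries the given oracle $O$ itself, not some $\PSPACE$-complete oracle. The case analysis in Claim~\ref{c:pdt-PRG} uses $O \in \PSPACE$ only inside the analysis: to simulate $O$-circuits in $2^{O(n)}$ space, and to invoke \cite{DBLP:journals/jcss/ImpagliazzoW01, DBLP:journals/cc/TrevisanV07} relative to $O$. The algorithm's actual queries go to $O$. Hence the resulting construction uses only $O$-oracle access, exactly as the corollary states.

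I expect no real obstacle. The one subtlety to flag is that the oracle access obtained this way is \emph{adaptive}. This is why the corollary does not give the nonadaptive $\pr\BPTIME[\cdot]^{||\NP}$ construction needed to falsify Item~\ref{i:hardness-explicit-construction} of Theorem~\ref{thm:equivalence_SoI}.
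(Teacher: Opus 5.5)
Your proposal is correct and matches the paper, which states the corollary as an immediate consequence of \Cref{t:OS17-pspace} using only $\NP \subseteq \PSPACE$. Your Step~3 is harmless but redundant, since the theorem's statement already gives a construction with oracle access to the same $O$; your remark about adaptivity also agrees with the paper's discussion.
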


It is not difficult to extend the proof of this result to the setting of quasi-polynomial time, and it is natural to ask whether one can further modify the above explicit construction so that the oracle access in both the property and the construction algorithm is \emph{non-adaptive}. By \Cref{thm:equivalence_SoI}, this would allow us to refute SoI for $\rnKt$, as explained below.

\begin{proposition}\label{prop:approach_failure}
    Suppose that for any $O \in \NP$, any dense $\pr\mathsf{BPTIME}[n^{\poly(\log n)}]^{||O}$ property $P$, and any $\gamma > 0$, there exists an infinitely-often pseudodeterministic explicit construction for $P$ that runs in time $2^{n^{\gamma}}$, given nonadaptive oracle access to $O$ and $n^{\gamma}$ bits of advice. Then symmetry of information does not hold for $\rnKt$ with error $e(n)=\poly(\log n)$.
\end{proposition}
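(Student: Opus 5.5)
We argue by contradiction through \Cref{thm:equivalence_SoI}. Suppose symmetry of information holds for $\rnKt$ with error $e(n)=\poly(\log n)$. By the implication \ref{i:SoI-rnKt} $\Rightarrow$ \ref{i:natural-property-rnKt} (via \Cref{l:SoI-rnKt}, and the trivial step \ref{i:easiness-Gap-MrnKtP} $\Rightarrow$ \ref{i:natural-property-rnKt}), we obtain a promise problem $P=(\mathcal{YES},\mathcal{NO})\in\pr\BPTIME[n^{\poly(\log n)}]^{||\NP}$ with $|\mathcal{YES}\cap\{0,1\}^n|\ge 2^{n-1}$ such that every $x$ with $\rnKt(x)<n-\poly(\log n)$ lies in $\mathcal{NO}$. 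Fix an $\NP$ language $O$ (say $\SAT$) witnessing the non-adaptive oracle access. Then $P$ is a dense $\pr\BPTIME[n^{\poly(\log n)}]^{||O}$ property, so the hypothesis of the proposition applies to it.

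Applying the hypothesis with a small constant $\gamma>0$ (say $\gamma=1/2$), we obtain an algorithm $A$ that, for infinitely many $n$ and with advice $a_n\in\{0,1\}^{n^\gamma}$, runs in time $2^{n^\gamma}$ with non-adaptive oracle access to $O$ and pseudodeterministically outputs a canonical string $x_n\notin\mathcal{NO}_n$, with probability at least $2/3$ over its randomness. The main step is to upper bound $\rnKt(x_n)$. Hardwiring $A$ and $a_n$ gives
\[
\rK^{2^{n^\gamma},\parallel O}_{2/3}(x_n)\le n^\gamma+O(1).
\]
By \Cref{l:rKt--parallel-NP}, the non-adaptive $\NP$ queries can be absorbed into nondeterminism at the cost of dropping the success probability to $2^{-n^\gamma}$ and adding $O(n^\gamma)$ to the description length. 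By \Cref{l:success-amp-rnK}, applied with $\delta=2^{-O(n^\gamma)}$ and time $t=2^{O(n^\gamma)}$, the success probability is restored to $2/3$ with only $O(\log(t/\delta))=O(n^\gamma)$ additional description bits, since the loss there is polynomial in $n\log(t/\varepsilon)/\delta$ inside a logarithm. Altogether this yields $\rnK^{2^{O(n^\gamma)}}(x_n)\le O(n^\gamma)$, and hence $\rnKt(x_n)\le O(n^\gamma)<n-\poly(\log n)$ for large $n$.

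This bound forces $x_n\in\mathcal{NO}$ by the natural-property guarantee, contradicting the fact that $x_n\notin\mathcal{NO}$. The contradiction occurs for infinitely many $n$, which is all we need, since \ref{i:natural-property-rnKt} requires the property to hold at every length.

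The step I expect to be most delicate is the conversion from a \emph{pseudodeterministic} oracle algorithm to an $\rnKt$ description. We must check that the canonical output matches the $\rK$ definition, where the program outputs $x_n$ with probability at least $2/3$ and may output anything else otherwise. We must also check that the $1/t$ loss in \Cref{l:rKt--parallel-NP}, which comes from guessing the count of positive answers, stays within what \Cref{l:success-amp-rnK} can amplify back. Both points appear to go through, because the advice length and $\log t$ are only $O(n^\gamma)$.

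Equivalently, the argument shows that statement \ref{i:hardness-explicit-construction} of \Cref{thm:equivalence_SoI} fails, taking $t=2^{O(n^\gamma)}$ and $\ell=O(n^\gamma)$, and the equivalence then rules out SoI for $\rnKt$.
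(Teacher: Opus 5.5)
Your proposal is correct and follows the same route as the paper: via \Cref{thm:equivalence_SoI}, everything reduces to showing that the canonical output of the pseudodeterministic $\parallel\NP$-oracle construction has a short $\rnK$-description, i.e.\ to simulating $\rK^{\parallel\NP}$ by $\rnK$. The paper's sketch only points to the counting technique of \Cref{sec:nonuniform-collapse-rnkt} for this step, whereas you carry it out concretely by chaining \Cref{l:rKt--parallel-NP} with \Cref{l:success-amp-rnK}; your parameter check (advice length and $\log t$ both $O(n^{\gamma})$, hence $\log(1/\delta)=O(n^{\gamma})$) is exactly what makes that chain work.
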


\begin{proof}[Proof Sketch.] By \Cref{thm:equivalence_SoI}, it is enough to show that the assumption of \Cref{prop:approach_failure} implies the following statement:
\begin{quote}
            For any promise problem $P=(\mathcal{YES},\mathcal{NO})\in\pr\BPTIME[n^{\poly(\log n)}]^{||\NP}$ such that $|\mathcal{YES}\cap\{0,1\}^n|\ge2^{n-1}$, there exists $t(\cdot),\ell(\cdot):\N\to\N$ where $\ell(n)+\log t(n)<n-\poly(\log(n))$ for each $n$, and an algorithm $A(1^n, \cdot)$ such that, for infinitely many $n$, there exists advice $\alpha_n\in\{0,1\}^{\ell(n)}$ and a string $x_n\in\{0,1\}^n\setminus\mathcal{NO}$, such that
            \[
            \Prob_{r\sim\bool^{t(n)}}\mleft[
	\begin{aligned}
		&\forall w\in\bool^{t(n)}, A(1^n,\alpha_n,w,r) \textrm{ outputs $x$ or $\bot$ in $t(n)$ steps}; \\
        &\exists w \in \{0, 1\}^{t(n)}, A(1^n,\alpha_n,w,r) \textrm{ outputs $x$ in $t(n)$ steps}
	\end{aligned} \mright] \geq\frac23.
            \]
        \end{quote}
   To see this, let $P = (\mathcal{YES},\mathcal{NO})$ be any such property, and consider the  pseudodeterministic algorithm $B^{|| \NP}(1^n,\beta_n,r)$ for $P$ given by the hypothesis of \Cref{prop:approach_failure}, where $\beta_n$ denotes the correct advice string on a good input length $n$, and $r$ denotes its randomness. To establish the desired statement, it is sufficient to obtain from $B^{|| \NP}(1^n,\beta_n, r)$ an algorithm $A(1^n,\alpha_n,w,r)$ as above, where $A$ also gets a nondeterministic string $w$ and a bounded-length string $\beta_n$ extending $\alpha_n$, but $A$ does not have oracle access to $||\NP$.

   From the perspective of time-bounded Kolmogorov complexity, this is essentially equivalent to a proof that $\rK^{|| \NP}$ can be simulated by $\rnK$ (with some overhead), which is the randomized analogue of the simulation of $\Kt^{||\NP}$ by $\nKt$ obtained by storing as advice the number of positive answers given by the oracle. It is possible to implement the desired simulation in the randomized setting using the techniques described in \Cref{sec:nonuniform-collapse-rnkt}.
\end{proof}

A natural attempt is to modify the proof of \Cref{c:pdt-PRG} by considering two different cases: $\PSPACE \subseteq \BPP^{\| O}$ and $\PSPACE \not\subseteq \BPP^{\| O}$. For the first case, by inspecting the proof and the PRG constructions of~\cite{IW97, DBLP:journals/siamcomp/KlivansM02, DBLP:conf/stoc/OliveiraS17}, one can still obtain a PRG that is secure against polynomial-size $O$-oracle circuits and computable in pseudodeterministic polynomial time using only \emph{non-adaptive} oracle access to $O$. Such a PRG would yield a pseudodeterministic explicit construction for dense $\pr\BPP^{\| O}$ properties, with running time $2^{n^{\gamma}}$ given \emph{non-adaptive} oracle access to $O$ and $n^{\gamma}$ bits of advice.

Therefore, it remains to handle the second case. More specifically:

\begin{problem}
	Show a non-adaptive version of Theorem~\ref{t:io-PRG}. That is, prove the following. Let $O\in\NP$. Assume $\PSPACE \not\subseteq \BPP^{\| O}$. Then for every $\varepsilon > 0$ and $b \in \mathbb{N}$, there exists a sequence $\{G_n\}_{n \in \mathbb{N}}$, where
	\[
	G_n : \bool^{n^{\varepsilon}} \to \bool^{n^b}
	\]
	is computable in time $2^{O(n^{\varepsilon})}$, such that the following holds. For every $n^b$-time samplable distribution family $\{\mathcal{C}_n\}_{n \in \mathbb{N}}$ of Boolean $O$-oracle circuits with non-adaptive oracle queries, there exist infinitely many $n$ such that, with probability at least $1 - n^{-b}$ over $C \sim \mathcal{C}_n$, the generator $G_n$ $(n^{-b})$-fools $C$.
\end{problem}

Recall that the PRG in Theorem~\ref{t:io-PRG} is obtained via the hardness--vs.--randomness paradigm. At a high level, one takes the truth table of a (presumably) hard $\PSPACE$-complete language $L$ with suitable structural properties~\cite{DBLP:journals/cc/TrevisanV07}, and plugs it into a standard PRG construction (e.g.,~\cite{NW94}) that admits a reconstruction procedure. Roughly speaking, the reconstruction property asserts that if there exists an efficient distinguisher (e.g., a $\pr\BPP^{\| O}$ algorithm) that breaks the security of the PRG, then one can use this distinguisher to obtain a circuit for computing $L$, thereby contradicting the assumed hardness of $L$. 

However, to make this reconstruction \emph{uniform}, the reconstruction algorithm must also be able to answer oracle queries to the language $L$ itself. To address this issue, one exploits the \emph{downward self-reducibility} of the special language $L$. Informally, this property allows one to compute $L$ on inputs of a given length by making oracle queries to $L$ on strictly smaller input lengths. Using this, one can iteratively construct a sequence of bounded-size circuits, starting from small input lengths and building up to the desired length.

Now suppose that $L$ is hard against $\pr\BPP^{\| O}$. In order to contradict this assumed hardness, one must ensure that the above reconstruction procedure uses only non-adaptive queries to $O$. However, the oracle queries arising from this iterative construction are inherently sequential: at the $i$-th stage, we build a bounded-size circuit $C_i$ that computes $L$ on inputs of length $i$, where the construction of $C_i$, while making non-adaptive queries to $O$, requires the algorithm to query the previously constructed circuit $C_{i-1}$. Consequently, the overall pattern of oracle access is necessarily adaptive. This presents a fundamental obstacle to obtaining a non-adaptive version of the construction.

Finally, we mention that existing results strongly suggest that if a non-adaptive reconstruction procedure exists,  it must employ non-black-box techniques (unless $\PSPACE \subseteq \AM \cap \coAM$). We refer to 
\citep{DBLP:conf/approx/Hirahara020} and reference therein for more details and related discussions.

\end{document}